\newif\ifabstract
\newif\iffull
\newcommand{\myparskip}{3pt}
\newcommand{\revG}{\overleftarrow{G}}
\newcommand{\revH}{\overleftarrow{H}}
\newcommand{\cCMG}{c_{\mathsf{CMG}}}
\newcommand{\otilde}{\widetilde O}
\newcommand{\ohat}{\widehat O}
\newcommand{\LCDS}{\mbox{\sf{Layered Connectivity Data Structure}}\xspace}
\newcommand{\MBM}{\mbox{\sf{Maximum Bipartite Matching}}\xspace}
\newcommand{\MWU}{\mbox{\sf{MWU}}\xspace}
\newcommand{\SSSP}{\mbox{\sf{SSSP}}\xspace}
\newcommand{\stSP}{\mbox{$s$-$t$-\sf{SP}}\xspace}
\newcommand{\tA}{\tilde A}
\newcommand{\tB}{\tilde B}
\newcommand{\APSP}{\mbox{\sf{APSP}}\xspace}
\newcommand{\reconnect}{\ensuremath{\mathsf{ReconnectLayer}}\xspace}
\newcommand{\alg}{\ensuremath{\mathsf{Alg}}\xspace}
\newcommand{\bad}{\mathsf{bad}}
\newcommand{\spec}{\mathsf{spec}}
\newcommand{\val}{\operatorname{val}}
\newcommand{\ceil}[1]{\ensuremath{\left\lceil#1\right\rceil}}
\newcommand{\floor}[1]{\ensuremath{\left\lfloor#1\right\rfloor}}
\newcommand{\event}{{\cal{E}}}
\newcommand{\opt}{\mathsf{OPT}}
\newcommand{\set}[1]{\left\{ #1 \right\}}
\newcommand{\tset}{{\mathcal T}}
\newcommand{\pset}{{\mathcal{P}}}
\newcommand{\qset}{{\mathcal{Q}}}
\newcommand{\aset}{{\mathcal{A}}}
\newcommand{\cset}{{\mathcal{C}}}
\newcommand{\mset}{{\mathcal M}}
\newcommand{\xset}{{\mathcal{X}}}
\newcommand{\yset}{{\mathcal{Y}}}
\newcommand{\rset}{{\mathcal{R}}}
\newcommand{\sset}{{\mathcal{S}}}
\newcommand{\dset}{{\mathcal{D}}}
\newcommand{\tw}{\tilde w}
\newcommand{\be}{\begin{enumerate}}
\newcommand{\ee}{\end{enumerate}}
\newcommand{\bd}{\begin{description}}
\newcommand{\ed}{\end{description}}
\newcommand{\bi}{\begin{itemize}}
\newcommand{\ei}{\end{itemize}}
\newtheorem{theorem}{Theorem}[section]
\newtheorem{lemma}[theorem]{Lemma}
\newtheorem{observation}[theorem]{Observation}
\newtheorem{corollary}[theorem]{Corollary}
\newtheorem{claim}[theorem]{Claim}
\newtheorem{definition}[theorem]{Definition}
\newenvironment{proof}{\par \smallskip{\bf Proof:}}{\hfill\stopproof}
\def\stopproof{\square}
\def\square{\vbox{\hrule height.2pt\hbox{\vrule width.2pt height5pt \kern5pt
\vrule width.2pt} \hrule height.2pt}}
\newenvironment{proofof}[1]{\noindent{\bf Proof of #1.}}%
        {\hfill\stopproof}
\newcommand{\algmwu}{{\sc ALG-MWU}\xspace}
\newenvironment{prog}[1]{
\begin{minipage}{5.8 in}
\begin{center}
{\sc #1}
\end{center}
}
{
\end{minipage}
}
\newcommand{\program}[3]{\begin{figure} \fbox{\vspace{2mm}\begin{prog}{#1} #3 \end{prog}\vspace{2mm}} 
			\caption{#1 \label{#2}} \end{figure}}
\renewcommand{\phi}{\varphi}
\newcommand{\eps}{\epsilon}
\newcommand{\half}{\ensuremath{\frac{1}{2}}}
\newcommand{\poly}{\operatorname{poly}}
\newcommand{\dist}{\mbox{\sf dist}}
\newcommand{\expect}[2][]{\text{\bf E}_{#1}\left [#2\right]}
\newcommand{\prob}[2][]{\text{\bf Pr}_{#1}\left [#2\right]}
\newenvironment{properties}[2][0]
{
\begin{enumerate} \setcounter{enumi}{#1}}{\end{enumerate}}
\newcommand{\mynote}[2][red]{}
\newcommand{\sknote}[2][red]{}
\newcommand{\vol}{\operatorname{Vol}}
\newcommand{\notZ}{\overline{Z}}
\newcommand{\notT}{\overline{T}}
\newcommand{\attime}[1][\tau]{^{(#1)}}
\newcommand{\skipp}{\operatorname{skip}}
\newcommand{\spann}{\operatorname{span}}
\newcommand{\dmax}{\delta_{\mbox{\textup{\footnotesize{max}}}}}
\newcommand{\DLSSSP}{{\sf {DAG-like SSSP}}\xspace}
\newcommand{\ATO}{\ensuremath{\mathcal{ATO}}\xspace}
\newcommand{\DSin}{\mbox{\sf DS}^{\operatorname{in}}}
\newcommand{\DSout}{\mbox{\sf DS}^{\operatorname{out}}}
\newcommand{\OUT}{\operatorname{OUT}}
\newcommand{\INN}{\operatorname{IN}}
\newcommand{\out}{\operatorname{out}}
\newcommand{\inn}{\operatorname{in}}
\newcommand{\shortpath}{\mbox{\sf{short-path}}\xspace}
\newcommand{\pquery}{\mbox{\sf{path-query}}\xspace}
\newcommand{\processcut}{\ensuremath{\mathsf{ProcessCut}}\xspace}
\newcommand{\processcutt}{\ensuremath{\mathsf{ProcessCut}'}\xspace}
\newcommand{\constructexpander}{\ensuremath{\mathsf{ConstructExpander}}\xspace}
\newcommand{\routeandcut}{\ensuremath{\mathsf{RouteAndCut}}\xspace}
\newcommand{\connecttocenters}{\ensuremath{\mathsf{ConnectToCenters}}\xspace}
\newcommand{\maintaincluster}{\ensuremath{\mathsf{MaintainCluster}}\xspace}
\newcommand{\maintainspeccluster}{\ensuremath{\mathsf{MaintainSpecialCluster}}\xspace}
\newcommand{\maintainspecialcluster}{\ensuremath{\mathsf{MaintainSpecialCluster}}\xspace}
\newcommand{\EST}{\mbox{\sf{ES-Tree}}\xspace}
\newcommand{\tc}{\tilde c}
\newcommand{\hn}{\hat n}
\begin{document}

\begin{titlepage}

%\title{An $n^{2+o(1)}$-Time Combinatorial Algorithm for Maximum Bipartite Matching}
\title{Maximum Bipartite Matching in $n^{2+o(1)}$ Time via a Combinatorial Algorithm}
\author{Julia Chuzhoy\thanks{Toyota Technological Institute at Chicago. Email: {\tt cjulia@ttic.edu}. Supported in part by NSF grant CCF-2006464 and NSF HDR TRIPODS award 2216899.}\and Sanjeev Khanna \thanks{University of Pennsylvania, Philadelphia, PA. Email: {\tt sanjeev@cis.upenn.edu}. Supported in part by NSF awards CCF-1934876 and CCF-2008305.}}
	\maketitle
	
\pagenumbering{gobble}
	
	\thispagestyle{empty}
	\begin{abstract}
	
Maximum bipartite matching (MBM) is a fundamental problem in combinatorial optimization with a long and rich history. A classic result of Hopcroft and Karp~\cite{HK73} provides an $O(m \sqrt{n})$-time algorithm for the problem, where $n$ and $m$ are the number of vertices and edges in the input graph, respectively. For dense graphs, an approach based on fast matrix multiplication achieves a running time of $O(n^{2.371})$~\cite{IM81, MS04}. For several decades, these results represented state-of-the-art algorithms, until, in 2013, Madry introduced a powerful new approach for solving MBM using continuous optimization techniques. This line of research, that builds on continuous techniques based on interior-point methods,  led to several spectacular results, culminating in a breakthrough $m^{1+o(1)}$-time algorithm for min-cost flow, that implies an $m^{1+o(1)}$-time algorithm for MBM as well. 

These striking advances naturally raise the question of whether combinatorial algorithms can match the performance of the algorithms that are based on continuous techniques for MBM. 
One reason to explore combinatorial algorithms is that they are often more transparent than their continuous counterparts, and that the tools and techniques developed for such algorithms may be useful in other settings, including, for example, developing faster algorithms for maximum matching in general graphs.  
A recent work of Chuzhoy and Khanna~\cite{CK24} made progress on this question by giving a combinatorial $\tilde{O}(m^{1/3}n^{5/3})$-time algorithm for MBM, thus outperforming both the Hopcroft-Karp algorithm and matrix multiplication based approaches, on sufficiently dense graphs. Still, a large gap remains between the running time of their algorithm and the almost linear-time achievable by algorithms based on continuous techniques. In this work, we take another step towards narrowing this gap, and present a randomized $n^{2+o(1)}$-time combinatorial algorithm for MBM. Thus in dense graphs, our algorithm essentially matches the performance of algorithms that are based on continuous methods.

Similar to the classical algorithms for MBM  and the approach of~\cite{CK24}, our algorithm is based on iterative augmentation of  a current matching using augmenting paths in the corresponding (directed) residual flow network.  Our main contribution is a recursive algorithm that exploits the special structure of the resulting flow problem to recover an $\Omega(1/\log^2 n)$-fraction of the remaining augmentations in $n^{2+o(1)}$ time. 

Finally, we obtain a randomized $n^{2+o(1)}$-time algorithm for maximum vertex-capacitated $s$-$t$ flow in directed graphs when all vertex capacities are identical, using a standard reduction from this problem to MBM. 
\end{abstract}

\newpage

\tableofcontents{}
\end{titlepage}

\pagenumbering{arabic}

\section{Introduction}

We consider the classical \MBM problem, where the goal is to compute a maximum-size matching in the given input bipartite graph $G$. Throughout, we denote the number of vertices and the number of edges in $G$ by $n$ and $m$, respectively. \MBM 
is one of the most central and extensively studied problems in computer science and related disciplines, with connections to many other fundamental graph optimization problems. % (see e.g.\cite{Sch03}). 

It is well known that \MBM can be reduced to computing a maximum $s$-$t$ flow in a directed flow network with unit edge capacities. The  Ford-Fulkerson algorithm~\cite{FF56} for maximum $s$-$t$ flow then immediately implies an $O(mn)$-time algorithm for \MBM. The algorithm is conceptually simple, and maintains a matching $M$, starting with $M=\emptyset$. As long as $M$ is not optimal, we can augment it by computing an $s$-$t$ path in the resulting residual flow network. A celebrated work of Hopcroft and Karp~\cite{HK73} provides a significantly more efficient $O(m \sqrt{n})$-time implementation of this idea by iteratively computing a maximal collection of internally disjoint augmenting  $s$-$t$ paths of shortest possible length in the residual flow network. This result remained the fastest known algorithm for several decades, except for the special case of very dense graphs, where fast matrix multiplication techniques were shown to yield an $O(n^{\omega})$-time algorithm~\cite{IM81, MS04}. 
Starting in 2008, a new paradigm emerged, namely, the use of continuous techniques as a method for obtaining fast algorithms for various flow problems, that ultimately revolutionized the field. As a first illustration of this paradigm,
the work of Daitch and Spielman~\cite{DS08}, building on the breakthrough result of Spielman and Teng~\cite{ST04} for efficiently solving Laplacian systems, gave an $\otilde(m^{3/2})$-time algorithm for directed maximum $s$-$t$ flow. 
Later, Madry~\cite{Madry13}  used this paradigm to design an algorithm for directed maximum $s$-$t$ flow with $\tilde{O}(m^{10/7})$ running time, obtaining the first substantial improvement over the algorithm of Hopcroft and Karp for \MBM in sparse graphs. A sequence of remarkable developments~\cite{Madry16, LS19, CMSV17, LS20_stoc, AMV20,BrandLNPSS0W20} recently culminated in a deterministic  $m^{1+o(1)}$-time algorithm for directed maximum $s$-$t$ flow  \cite{ChenKLPGS22,brand2023deterministic}, thereby providing an almost linear-time algorithm for \MBM. In all these recent algorithms, the directed flow problem is cast as a linear program, which is then solved via interior-point methods (IPM). In every iteration of the IPM, one needs to either solve a Laplacian system, or another efficiently solvable problem on undirected graphs, such as min-ratio cycle in~\cite{ChenKLPGS22}. This approach is further combined with dynamic graph data structures to make it even more efficient. 

In view of this recent history, it is natural to ask whether combinatorial techniques can be used to design algorithms for \MBM (and also other flow-like problems), whose performance matches that of algorithms that are based on continuous methods. There are several reasons to focus on combinatorial techniques. First, they tend to be more transparent than their continuous counterparts. Second, it is likely that tools and techniques that are developed in order to design a combinatorial algorithm for as fundamental a problem as \MBM will prove useful in other applications. Lastly, while continuous techniques led to an $m^{1+o(1)}$-time algorithm for  \MBM, the landscape of fast algorithms for the Maximum Matching problem in general graphs did not benefit from these developments. In dense graphs, a fast-matrix multiplication based approach gives $O(n^{2.371})$-time algorithm for Maximum Matching in general graphs~\cite{IM81, MS04}. More interestingly, in sparse to moderately dense graphs, the best known runtime still stands on $\tilde{O}(m \sqrt{n})$~\cite{MicaliV80, Vazirani94, GoldbergK04, Gabow17} and utilizes an augmenting-paths based approach, similar to that used in combinatorial algorithms for \MBM.

In a very recent work, Chuzhoy and Khanna~\cite{CK24} made progress on narrowing the striking gap between the performance of combinatorial and IPM-based approaches for \MBM, by providing a combinatorial deterministic $\tilde{O}(m^{1/3}n^{5/3})$-time algorithm, thus outperforming both the Hopcroft-Karp algorithm, and the matrix multiplication based approaches on sufficiently dense graphs.  Still, a large gap remains between the performance of the best combinatorial algorithms and the almost linear-time achievable by algorithms based on continuous techniques. In particular, on dense graphs, the performance gap incurred by the current best combinatorial algorithm is $\Omega(n^{1/3})$. In this work, we take another step towards narrowing this performance gap, and essentially eliminate it in  dense graphs. 
Our main result is summarized in the following theorem.

\begin{theorem}\label{thm:main}
There is a randomized combinatorial algorithm for the \MBM problem, that, given an $n$-vertex bipartite graph $G$, outputs a maximum matching $M$ in $G$ with probability at least $1 - 1/\poly(n)$. The running time of the algorithm is $O\left(n^2\cdot 2^{O(\sqrt{\log n}\cdot \log \log n)}\right )$.
\end{theorem}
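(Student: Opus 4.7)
The plan is to employ the standard augmenting-paths framework: starting from $M = \emptyset$, we repeatedly augment the current matching via vertex-disjoint augmenting paths in the directed residual flow network $H_M$ associated with $G$ and $M$, together with the usual super-source/super-sink reduction. I would deduce the theorem from a main subroutine that, given a current matching $M$ whose gap to a maximum matching is $k^\ast$, returns in $n^{2+o(1)}$ time a collection of at least $\Omega(k^\ast/\log^2 n)$ vertex-disjoint augmenting paths in $H_M$. Iterating this subroutine reduces the residual gap by a $(1-\Omega(1/\log^2 n))$ factor each time; since the initial gap is at most $n$, roughly $\log^3 n$ iterations suffice to drive the gap to zero, for a total running time of $n^{2+o(1)}$.

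For the subroutine, I would let $\ell$ denote the length of a shortest augmenting path in $H_M$, computed via a single BFS from the super-source, and work on the layered residual DAG induced by the BFS levels. In the ``short'' regime where $\ell \le n^{o(1)}$, a Hopcroft--Karp-style blocking-flow computation on this layered subgraph finds a maximal collection of vertex-disjoint shortest augmenting paths in $O(m\ell) = n^{2+o(1)}$ time; the standard potential argument then shows that such a collection accounts for $\Omega(k^\ast/\ell) = \Omega(k^\ast/\log^2 n)$ augmentations, which meets the target. The genuinely hard regime is $\ell = n^{\Omega(1)}$: the number of remaining augmentations is smaller (at most $O(n/\ell)$), but naively searching for paths in $H_M$ after each deletion is far too expensive to repeat.

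To handle the long-path regime, I would build a hierarchical decomposition of the layered residual DAG. Fix a parameter $t = \Theta(\sqrt{\log n})$ and partition the $\ell$ layers into roughly $\ell^{1/t}$ consecutive groups of geometrically decreasing depth, then recurse on each group. At each level of the hierarchy, maintain an approximate short-path data structure in the spirit of the \recdynNC dynamic neighborhood cover suggested by the macro definitions that, given a source vertex, quickly returns a short path to any reachable vertex in the relevant subgraph. An augmenting path is then assembled by chaining together short paths produced by the group-level data structures; when a path is committed to the matching, the corresponding vertex deletions are propagated to all levels of the hierarchy so that subsequent queries reflect the updated residual graph. If the recursion has depth $t$ and each level incurs only a $\polylog n$ multiplicative overhead per query and update, the total overhead is $(\polylog n)^t = 2^{O(\sqrt{\log n}\cdot \log \log n)} = n^{o(1)}$, matching the runtime claimed in the theorem.

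The principal obstacle, as I see it, is the interaction between vertex-disjointness and the hierarchical data structure. When an augmenting path $P$ is committed, every vertex of $P$ must be deleted at every level of the hierarchy, and each level must continue to return short paths in the updated graph. Two issues arise. First, one must argue that enough of the $k^\ast$ augmenting paths survive the approximation: the paths returned by the data structure may be a polylogarithmic factor longer than optimal, and vertices used by a committed path may prematurely block others, so a careful adaptation of the Hopcroft--Karp potential argument to approximate shortest paths is required. Second, achieving an amortized $n^{o(1)}$ cost per vertex deletion in the dynamic neighborhood cover is what ultimately drives the recursion parameter $t$ and produces the $2^{O(\sqrt{\log n}\cdot \log\log n)}$ factor. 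Provided both ingredients can be realized, the subroutine meets its specification and the theorem follows.
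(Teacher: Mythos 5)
Your outer loop matches the paper: both reduce \Cref{thm:main} to a subroutine that, given a matching with gap $k^\ast$, recovers $\Omega(k^\ast/\log^2 n)$ disjoint augmentations in $n^{2+o(1)}$ time, and both iterate it $O(\log^3 n)$ times. But the subroutine itself has two genuine gaps. First, your short-regime claim is wrong as stated: a maximal collection of vertex-disjoint \emph{shortest} augmenting paths need not have size $\Omega(k^\ast/\ell)$ (take one augmenting path of length $\ell$ and $k^\ast-1$ disjoint ones of length $\ell+2$; the maximal collection can have size $1$ while $k^\ast/\ell$ is large), and even the correct Hopcroft--Karp-style bound, $\Omega(k^{\ast 2}/n)$ for a maximal set of paths of length $O(n/k^\ast)$, only equals $\Omega(k^\ast/\log^2 n)$ when $\ell=O(\log^2 n)$ --- not throughout your ``short'' regime $\ell\le n^{o(1)}$. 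So the per-call guarantee feeding your outer recursion is not established (the accounting can be repaired in this regime by running $O(\ell)$ blocking-flow phases at $O(m)$ each, but that is a different argument from the one you give).

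Second, and more importantly, the long-path regime --- which is the entire technical content of the paper --- is asserted rather than proved. ``Build a hierarchical decomposition of the layered residual DAG,'' ``maintain an approximate short-path data structure in the spirit of \recdynNC,'' and ``each level incurs only a $\polylog n$ overhead per update'' is exactly the problem, not its solution: you are asking for a decremental reachability/shortest-path structure in a directed graph under \emph{adaptive} vertex deletions (each committed path determines the next deletions), which is precisely where known decremental SSSP results break down, and you give no construction, no argument that short paths at different levels of your layer-grouping can be chained after deletions, and no disjointness/congestion accounting for the assembled paths. The paper's route is different and concrete: a modified \MWU scheme reduces a routing primitive (\routeandcut) to a restricted decremental \stSP problem driven by well-behaved edge-length increases (not deletions); that problem is solved via the \ATO framework over expander-like clusters maintained by a cut-matching game (\maintaincluster, \maintainspeccluster) together with a layered \connecttocenters data structure; and the $2^{O(\sqrt{\log n}\log\log n)}$ factor comes from a mutual recursion between \routeandcut and \maintaincluster over $O(\sqrt{\log n})$ distance scales $d\le 2^{r\sqrt{\log N}}$, with the routing algorithm required to run in time $n^{1+o(1)}(n-|B|)$, below $|E(G)|$. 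Your guess about where the $n^{o(1)}$ factor comes from (depth $\sqrt{\log n}$, polylog per level) is morally right, but none of the machinery that makes it work appears in your proposal, so the proof is incomplete at its core.
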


Our algorithm outperforms the Hopcroft-Karp algorithm on graphs with $\omega(n^{1.5})$ edges, and in dense graphs, it essentially matches the performance of algorithms based on continuous techniques. Furthermore, in almost all edge density regimes, this algorithm outperforms the runtime achieved in~\cite{CK24}.

Using a standard reduction from vertex-capacitated flow in directed graphs to \MBM (see Theorem 16.12 in~\cite{Sch03}, for instance), we also obtain a combinatorial algorithm with similar running time for maximum vertex-capacitated flow when all vertex capacities are identical. 

\begin{corollary}
\label{cor:main}
There is a randomized combinatorial algorithm for the directed maximum $s$-$t$ flow problem with uniform vertex capacities, that given an $n$-vertex directed graph $G$, outputs a maximum $s$-$t$ flow with probability at least $1 - 1/\poly(n)$. The running time of the algorithm is $O\left(n^2\cdot 2^{O(\sqrt{\log n}\cdot \log \log n)}\right )$.
\end{corollary}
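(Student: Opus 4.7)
The plan is to invoke the standard reduction from directed maximum $s$-$t$ flow with uniform vertex capacities to \MBM referenced in the statement, and then apply Theorem~\ref{thm:main} as a black box.

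Concretely, I would proceed as follows. Given an input directed graph $G=(V,E)$ on $n$ vertices with source $s$, sink $t$, and an identical integer capacity on every vertex, I would first apply the construction from Theorem~16.12 of~\cite{Sch03}. This construction produces, in linear time, a bipartite graph $H$ on $O(n)$ vertices (and at most $O(n^2)$ edges), together with a simple combinatorial decoding map $\phi$ such that: (i)~from any maximum matching $M^{*}$ of $H$, the map $\phi$ recovers an integer maximum $s$-$t$ flow $f^{*}$ of $G$ in additional linear time; and (ii)~the size of $M^{*}$ is a fixed affine function of $\val(f^{*})$, determined entirely by the gadgets of the reduction. The reduction is purely combinatorial: roughly speaking, each vertex $v$ is split into a left copy $v^{L}$ and a right copy $v^{R}$, one bipartite edge $\{u^{L},v^{R}\}$ is placed for each directed arc $(u,v)$ of $G$, and appropriate source/sink and ``bypass'' edges are added so that a matching in $H$ encodes exactly the choice of which arcs carry flow.

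Next, I would invoke Theorem~\ref{thm:main} on $H$. Since $H$ has $O(n)$ vertices, the algorithm returns a maximum matching of $H$ in time $O\paren{n^{2}\cdot 2^{O(\sqrt{\log n}\cdot\log\log n)}}$ with probability at least $1-1/\poly(n)$. I would then apply the decoding $\phi$ to obtain the maximum $s$-$t$ flow $f^{*}$, with only additional linear overhead. The total running time is dominated by the matching computation and therefore matches the bound stated in Corollary~\ref{cor:main}. Success probability and combinatoriality are both inherited directly from Theorem~\ref{thm:main}, since the reduction and the decoding are deterministic linear-time graph manipulations.

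I do not expect a genuine technical obstacle here: the reduction is black-box, it preserves vertex count up to a constant factor, and it preserves the combinatorial nature of the algorithm. The entire difficulty of Corollary~\ref{cor:main} is absorbed by Theorem~\ref{thm:main}, whose proof is the main content of the paper; the corollary is essentially a one-paragraph application of that theorem to a standard reduction.
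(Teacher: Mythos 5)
Your proposal matches the paper's own treatment: the paper proves Corollary~\ref{cor:main} exactly by citing the standard vertex-splitting reduction of Theorem~16.12 in~\cite{Sch03} (which preserves the vertex count up to a constant factor and is deterministic and linear-time) and then invoking Theorem~\ref{thm:main} as a black box. Your argument is correct and essentially identical, so there is nothing to add beyond noting that the uniform-capacity case reduces to the unit-capacity case by scaling, a point both you and the paper leave implicit.
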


Similarly to the classical algorithms for \MBM, our approach for proving Theorem~\ref{thm:main} is based on iteratively augmenting a current matching using augmenting paths in the residual flow network. We employ the multiplicative weights update (\MWU) framework, that effectively reduces the underlying flow problem to decremental single-source shortest paths (\SSSP) in directed graphs, a connection first observed by Madry~\cite{madry2010faster} and also used in~\cite{CK24}. As observed in~\cite{CK24}, this reduction results in a {\em special case} of decremental \SSSP that appears significantly easier than general decremental directed \SSSP. Our main contribution is a recursive algorithm that exploits the special structure of the resulting flow problem to recover an $\Omega(1/\log^2 n)$-fraction of the remaining augmentations in $n^{2+o(1)}$ time. We abstract this task as a problem called \routeandcut, where the input is a directed graph $G$, two disjoint sets $A,B$ of its vertices with $|A|\leq |B|$, and two additional parameters $1\leq \eta\leq \Delta$. The goal is to either compute a collection $\pset$ of at least $\Omega(\Delta/\poly\log n)$ paths that connects distinct vertices of $A$ to distinct vertices of $B$ with vertex-congestion at most $\eta$; or to output a cut that approximately certifies infeasibility of the desired routing. Our main result is a randomized algorithm for \routeandcut, whose running time is bounded by $n^{1+o(1)}\cdot (n-|B|)$. It is worth highlighting that when $|B|$ is sufficiently large, this running time may be much smaller than $|E(G)|$. This performance gain for large sets $B$ serves as a crucial building block for our $n^{2+o(1)}$-time algorithm.

As in the work of~\cite{CK24}, the task of efficiently solving \routeandcut  in turn relies on an efficient algorithm for maintaining an expander in a dynamically changing graph, a problem that we refer to as \maintaincluster. One key contribution of our work is the introduction of a {\em parameterized version} of both these problems that allows us to use a bootstrapping approach in the design of our algorithm, where we exploit efficient algorithms for one problem to obtain efficient algorithms for the other problem and vice versa. Another key technical contribution is an efficient algorithm for a new problem that we introduce, called \connecttocenters, that plays a central role in efficiently maintaining short paths from all vertices of a given graph $G$ to a pre-specified collection of ``center'' vertices, even as $G$ undergoes online updates. This problem may be viewed as a representative abstraction of a common paradigm used in many graph algorithms, in which an expander is embedded into the input graph, and all graph vertices are then routed to the vertices of the expander.  As such, our algorithm for this problem may prove useful in other applications. 
Finally, another insight utilized by our algorithm is an explicit recognition of the fact that each iteration of the \MWU algorithm leads to very specific kind of updates in the underlying \SSSP instance, namely, well-behaved increases in the lengths of some edges. While these length increases can easily be simulated as edge deletions, a black-box simulation as an instance of decremental \SSSP gives away some of the inherent algorithmic advantages offered by these special kind of updates that our algorithm exploits. We give a detailed overview of our algorithm and its comparison to the algorithm of~\cite{CK24} in the next subsection.

We conclude by noting that in addition to the conceptual simplicity of a combinatorial augmenting path based approach to solve \MBM, the techniques developed here for speeding up augmentations may also prove useful in obtaining faster algorithms for Maximum Matching in general graphs. We also believe that some of the technical tools that we introduce, such as an efficient algorithm for the \connecttocenters problem that we describe in more detail below, are of independent interest.

\subsection{Our Techniques}

Our algorithm builds on and extends the techniques of \cite{CK24}, which, in turn, build on the algorithm of  \cite{SCC} for the directed decremental Single-Source Shortest Path (\SSSP) problem. We start with a high-level overview of the algorithm of \cite{CK24}, and then provide the description of our improved algorithm.

It is well known that the \MBM problem in a graph $G$ can be equivalently cast as the problem of computing an integral maximum $s$-$t$ flow in a corresponding directed flow network $G'$ with unit edge capacities. 
We can view any given matching $M$ in $G$ as defining an $s$-$t$ flow $f$ in $G'$ of value $|M|$. We let $H=G'_f$ be the corresponding residual flow network, that we refer to as the \emph{residual flow network corresponding to matching $M$}. 
We note that the residual flow network $H$ has a special structure: namely, each vertex in $H$ has in-degree $1$ or out-degree $1$. Therefore, if $\pset$ is a collection of paths in $H$ causing edge-congestion at most $\eta$, then the paths in $\pset$ cause vertex-congestion at most $\eta$ and vice versa. For all problems that we define below, we assume that their input graph also has this special structure. For convenience, we will focus on edge-congestion, and on edge-based cuts in such graphs. We also note that any directed graph can be converted into a graph with this special structure by replacing every vertex $v$ with a pair $v^+,v^-$ of new vertices, such that all edges that enter $v$ become incident to $v^-$, and all edges leaving $v$ become incident to $v^+$, and inserting the edge $(v^-,v^+)$ into the graph.

The residual network $H$ corresponding to a matching $M$ contains $\Delta=\opt-|M|$ edge-disjoint $s$-$t$ paths, where $\opt$ is the value of the maximum bipartite matching. 
Suppose now that we can design an algorithm that computes a collection $\pset$ of  $\Omega(\Delta/\poly\log n)$ $s$-$t$ paths in $H$, that cause $O(\poly\log n)$ edge-congestion. Using standard methods, we can then efficiently recover a collection $\pset'$ of $\Omega(\Delta/\poly\log n)$ edge-disjoint $s$-$t$ paths in $H$, which can in turn be used in order to augment the current matching $M$, thereby obtaining a new matching $M'$ of cardinality $|M|+\Omega(\Delta/\poly\log n)$. In other words, $\opt-|M'|\leq (\opt-|M|)\cdot (1-1/\poly\log n)$, so the gap between the optimal solution value and the size of the matching the algorithm maintains reduces by at least factor $(1-1/\poly\log n)$. It is then easy to verify that, after $O(\poly\log n)$ such iterations, the algorithm obtains an optimal matching. This is precisely the high-level approach that was used by \cite{CK24}, and we follow it in this work as well.  In order to obtain an algorithm for \MBM, it is now sufficient to design a procedure that, given a residual flow network $H$ corresponding to the current matching $M$, efficiently computes the set $\pset$ of $s$-$t$ paths in $H$ with the above properties.

For technical reasons that will become clear later, we define a slightly more general problem, that we call \routeandcut. In this problem, the input is a directed graph $H$, two disjoint sets $A,B$ of its vertices with $|A|\leq |B|$, and two parameters $1\leq \Delta\leq\min\set{|A|,|B|}$ and $1\leq \eta\leq \Delta$. The goal is to either compute a collection $\qset$ of $\Omega(\Delta/\poly\log n)$ paths, each of which connects a distinct vertex of $A$ to a distinct vertex of $B$, such that the paths in $\qset$ cause congestion $\tilde O(\eta)$; or to compute a cut $(X,Y)$ in $H$ with $|E_H(X,Y)|\ll \Delta/\eta$, with $X$ containing a large fraction of vertices of $A$, and $Y$ containing a large fraction of the vertices of $B$. While \cite{CK24} do not explicitly define and solve this problem, their algorithm can be adapted to solve a special case of \routeandcut, where $\eta\leq O(\poly\log n)$. So for brevity, we will say that the algorithm of \cite{CK24} solves this special case of \routeandcut. Clearly, an algorithm for the \routeandcut problem can be used in order to compute a collection $\pset$ of paths in the residual flow network $H$ corresponding to the current matching $M$, with the desired properties that we described above.

The \routeandcut problem falls into the broader and the extensively studied class of graph routing and flow problems. 
One standard approach for obtaining fast algorithms for such problems, due to \cite{GK98, Fleischer00,AAP93}, is via the Multiplicative Weight Update (\MWU) method. It was further observed by Madry \cite{madry2010faster}  that this approach can be viewed as reducing a given flow problem to a variant of decremental \SSSP or \APSP.  In our case, the reduction is to decremental \SSSP in directed graphs. While strong lower bounds are known for exact algorithms for decremental \SSSP and \APSP (see, e.g. \cite{DorHZ00,RodittyZ11,HenzingerKNS15,abboud2022hardness,DBLP:conf/stoc/AbboudBF23}), we can exploit the special properties of the \SSSP instances that arise from the \routeandcut problem in order to obtain faster algorithms, an approach that was also used by \cite{CK24}.

We note that \cite{AlmostDAG2} provided a  $(1+\eps)$-approximation algorithm for directed decremental \SSSP with total update time  $\otilde(n^2/\eps)$, assuming all edge lengths are poly-bounded. Unfortunately, their algorithm can only withstand an \emph{oblivious adversary}; in other words, the sequence of edge-deletions from the input graph $G$ must be fixed in advance and may not depend on the algorithm's responses to queries. Instances of decremental \SSSP arising from the \MWU framework crucially require algorithms that can withstand an \emph{adaptive adversary}, where the choice of the edge to be deleted in every update may depend on the algorithm's past behavior.
A recent work of \cite{SCC} provided a $(1+\eps)$-approximation algorithm for the directed decremental \SSSP problem with an adaptive adversary, that achieves total update time $O\left (\frac{n^{8/3+o(1)}}{\eps}\right )$ (assuming that all edge lengths are poly-bounded).  While this total update time  is too high for speeding up algorithms for \MBM, their approach was adapted by \cite{CK24} to handle the specific instances of \SSSP that they obtain, leading to faster algorithms for \MBM. Specifically, one of the key observations of \cite{CK24} is that the \SSSP instances that arise from applying the \MWU method to the \MBM problem have the property that all queries are between a fixed pair $(s,t)$ of vertices, and a rather large approximation factor is acceptable. Moreover, by slightly modifying the standard \MWU framework, they ensure that it is sufficient that the algorithm for the \SSSP problem only responds to shortest-path queries as long as the current graph $H$ contains a collection of least $\Omega(\Delta/\poly\log n)$ disjoint and  short $s$-$t$ paths, where $\Delta$ is the target number of augmenting paths that the algorithm aims to produce. We also follow their approach, and reduce the \routeandcut problem, via a slightly modified \MWU method, to a special case of directed decremental \SSSP, that we refer to as decremental  \stSP, that has all of the above properties. We note that \stSP is somewhat more general than the special case of the \SSSP problem that was considered in \cite{CK24}, since they only provide an algorithm for the special case of \routeandcut where $\eta\leq O(\poly\log n)$, while we need an algorithm that works for a wider range of values of parameter $\eta$. For now we focus on the description of their algorithm, and we assume for simplicity that $\eta=1$ in this discussion.

The algorithm of \cite{CK24} for a special case of the decremental \stSP problem follows the high-level approach of \cite{SCC}, that consists of two parts. First, they maintain a partition $\xset$ of graph $H\setminus \set{s,t}$ into a collection of \emph{expander-like graphs}; we abstract the problem of maintaining each such graph, that we call \maintaincluster problem, below. 
Intuitively, the \maintaincluster subroutine is given as input a vertex-induced subgraph $H'$ of $H$, and a distance parameter $d$, with $H'$ undergoing an online sequence of edge deletions. It needs to efficiently support \shortpath queries in $H'$: given a pair $x,y\in V(H')$ of vertices, return an $x$-$y$ path of length at most $d$ in $H'$. However, it may, at any time, produce a cut $(A,B)$ in $H'$ of sparsity at most $O\left(\frac{\poly\log n}{d}\right )$, after which the vertices on one side of the cut are deleted from $H'$, and the algorithm needs to continue with the resulting graph.
The second main ingredient in the algorithm of \cite{CK24} is the Approximate Topological Order (\ATO) framework of \cite{AlmostDAG2} (which is in turn based on the works of \cite{gutenberg2020decremental} and \cite{Bernstein}), combined with the algorithm of \cite{AlmostDAG2} for decremental \SSSP on ``almost'' DAG's. The latter algorithm is applied to the graph $\hat H$, that is obtained from $H$ by contracting every almost-expander $X\in \xset$ into a single vertex. We now discuss each of these components in turn, starting with the \ATO framework.

\paragraph{The \ATO framework.}
The Approximate Topological Order (\ATO) framework of  \cite{AlmostDAG2}\footnote{According to \cite{AlmostDAG2}, while the \ATO technique was first explicitly defined in  \cite{AlmostDAG2}, the work of \cite{gutenberg2020decremental} can be viewed as (implicitly) using this technique.} is a central component in the algorithms of \cite{SCC,CK24}, as well as our algorithm. The framework was first introduced as the means of reducing the decremental \SSSP problem on general directed graphs to decremental \SSSP on ``almost DAG's''. 

An \ATO data structure in a dynamic graph $H$ must maintain a {\em partition} $\xset$ of the vertices of $H$ into subsets. We  refer to the sets $X\in \xset$ as \emph{clusters}, and to $\xset$ as a \emph{clustering}. The only allowed changes to the clustering $\xset$ are \emph{cluster splittings}: given an existing cluster $X\in \xset$ and a subset $X'\subseteq X$ of its vertices, delete the vertices of $X'$ from $X$, and add $X'$ as a new cluster to $\xset$. We assume further that the input graph $H$ contains two special vertices $s$ and $t$, and that clusters $S=\set{s}$ and $T=\set{t}$ always lie in $\xset$. In addition to maintaining the clustering $\xset$, the \ATO must maintain an \emph{ordering} $\sigma$ of its clusters. Assume that $\xset=\set{X_1,\ldots,X_r}$, and that the clusters are indexed according to the ordering $\sigma$. Assume further that a cluster $X_i$ undergoes splitting, with the new cluster $X'_i\subseteq X_i$  inserted into $\xset$. Then the ordering $\sigma$ must evolve in a \emph{consistent} manner, that is, the new ordering must be either $(X_1,\ldots,X_{i-1}, X'_i,X_i\setminus X'_i,X_{i+1},\ldots,X_r)$, or $(X_1,\ldots,X_{i-1},X_i\setminus X'_i, X'_i,X_{i+1},\ldots,X_r)$. Consider now some edge $e=(x,y)$ of $H$, and assume that $x\in X_i$ and $y\in X_j$. If $X_i$ appears before $X_j$ in the ordering $\sigma$, then we say that $e$ is a \emph{left-to-right} edge; if $i=j$, we say that it is a \emph{neutral} edge; and otherwise we say that it is a \emph{right-to-left} edge. If $e$ is a right-to-left edge, then we define its \emph{span}: $\spann(e)=\sum_{i'=j}^i|X_{i'}|$ (we assume here that the sets in $\xset$ are indexed according to the ordering $\sigma$). 
Let $\hat H$ be the contracted graph corresponding to $H$: that is, $\hat H$ is obtained from $H$ by contracting each of the clusters $X\in \xset$ into a vertex $v_X$. For simplicity, we will refer to the vertices $v_S$ and $v_T$ as $s$ and $t$, respectively.
Intuitively, if we could maintain the \ATO without introducing any right-to-left edges, then the corresponding contracted graph $\hat H$ is a DAG, and the ordering $\sigma$ associated with the \ATO naturally defines a topological ordering of the vertices of $\hat H$.
We could then use the algorithm of \cite{AlmostDAG2}  for decremental \SSSP in DAG's, that builds on the work of   \cite{Bernstein,gutenberg2020decremental}, in order to support approximate shortest path queries in $\hat H$ between $s$ and other vertices of $\hat H$, with total update time $\otilde(n^2)$. But in order to be able to support shortest $s$-$t$ path queries in the original graph $H$, we need to ensure that the diameters of the subgraphs $H[X]$ corresponding to the clusters $X\in \xset$ are sufficiently small, and that we can support approximate shortest-path queries between {\em arbitrary pairs} of vertices within each such graph efficiently.

Towards this end, it was observed by \cite{AlmostDAG2} 
that the algorithm for decremental \SSSP in DAG's can be further extended to ``almost DAG's'': suppose $G$ is a directed graph, and let $\rho$ be a fixed ordering of its vertices. Assume that $V(G)=\set{v_1,\ldots,v_n}$, where the vertices are indexed according to the ordering $\rho$. If $e=(v_i,v_j)$ is an edge with $i>j$, then we say that $e$ is a \emph{right-to-left} edge of $G$, and we say that its \emph{width} is $i-j$. It was shown in \cite{AlmostDAG2} that the algorithm for decremental \SSSP on DAG's could be extended to such graphs $G$, provided that the total width of all right-to-left edges is relatively small. Specifically, the running time of their algorithm becomes roughly $\tilde O(n^2+\Gamma\cdot n)$, where $\Gamma$ is the total width of the right-to-left edges.

Assume now that the algorithm for \SSSP on the almost-DAG graph $G$ only needs to respond to approximate shortest-path queries between a specific fixed pair $s,t$ of vertices, and assume, moreover, that it is only needed to support such queries as long as $G$ contains $\Omega(\Delta)$ short edge-disjoint $s$-$t$ paths. It was observed in \cite{CK24} that, in such a case, the running time of the algorithm of \cite{AlmostDAG2} improves to roughly 
$\tilde O(n^2+\Gamma\cdot n/\Delta)$. This observation was one of the key insights that allowed them to obtain a faster running time for the special case of the \stSP problem, and for \MBM.

We now provide additional relevant details of the algorithm of \cite{CK24}. 
Like in \cite{SCC}, the \maintaincluster problem is exploited in order to maintain an \ATO of the input graph $H$. 
Initially, the clustering $\xset$ contains three clusters: $S=\set{s}$, $T=\set{t}$, and $U=V(H)\setminus\set{s,t}$. The algorithm for the \maintaincluster problem is then initialized on graph $H[U]$, with an appropriately chosen distance parameter $d_U$. In general, whenever a new cluster $X$ joins $\xset$,  the algorithm for the \maintaincluster problem is initialized on $H[X]$. Whenever that algorithm produces a sparse cut $(A,B)$ in $X$, we select a subset $Z\in \set{A,B}$ of vertices to be deleted from $X$, update $X$ by deleting these vertices, and add $Z$ as a new cluster to $\xset$, after which the algorithm for the \maintaincluster problem is initialized on $H[Z]$. The key idea is that, since the cuts produced by the algorithm for the \maintaincluster problem are sparse, we can ensure that the total span of all right-to-left edges is sufficiently small. If we then consider the contracted graph $\hat H$, this, in turn, ensures that the total width of all right-to-left edges in $\hat H$ is low. We can now apply the algorithm of \cite{AlmostDAG2} for decremental \SSSP on almost-DAG's to support approximate shortest $s$-$t$ path queries in $\hat H$, while exploiting the fact that such queries only need to be supported as long as $\hat H$ contains a large number of short edge-disjoint $s$-$t$ paths, in order to speed it up. For every cluster $X\in \xset$, we can then use the algorithm for the \maintaincluster problem on $H[X]$, in order to respond to approximate shortest-path queries between pairs of vertices in $X$. Combining these data structures together, we can support approximate shortest $s$-$t$ path queries in the original graph $H$, as long as $H$ contains many short edge-disjoint  $s$-$t$ paths.
This high-level approach allows one to obtain algorithms for decremental \stSP, and for the \routeandcut problem, from algorithms for the \maintaincluster problem, that we now discuss in more detail.

\paragraph{The \maintaincluster problem.}
To recap, in the \maintaincluster problem, the input is a graph $G$ that undergoes an online sequence of edge deletions, and a distance parameter $d$. The goal is to efficiently support short-path queries: given a pair $x,y$ of vertices of $G$, return a path of length at most $d$ connecting them in $G$. The algorithm may, however, at any time, produce a cut $(A,B)$ in $G$ of sparsity at most $O\left (\frac{\poly\log n}{d}\right )$, following which, the vertices of one side of the cut are deleted from $G$. 
The algorithm is used in order to maintain individual clusters of the \ATO. A similar problem was considered by \cite{SCC}, who provide an algorithm with total update time $\ohat(|E(G)|\cdot d^2)$ for it,  where the time to respond to each query is roughly proportional to the number of edges on the path that the algorithm returns. In \cite{CK24} this problem was considered in a more relaxed setting, where the number of queries that the algorithm must support is bounded by a given parameter $\Delta$, and the goal is to minimize the total running time of the algorithm, that is, the sum of the total update time, and the time required to respond to all queries. 
The algorithm of \cite{CK24} for this setting has running time $\otilde(|E(G)|\cdot d+|V(G)|^2)\cdot  \max\set{1,\frac{\Delta\cdot d^2}{|V(G)|}}$, which, for the specific parameters that they employ, becomes $\otilde(|E(G)|\cdot d+|V(G)|^2)$. In order to obtain our improved algorithm for \MBM, we need to generalize this result so that it works for a wider range of parameters, achieving running time $|V(G)|^{2+o(1)}$.

The algorithm of \cite{CK24} follows a rather standard approach. 
First, they use the Cut-Matching game in order to compute a large expander graph $\hat G$, and to embed it into $G$ via short paths that cause low congestion.
The algorithm for the Cut Player is implemented in a rather straightforward manner, since they can afford a running time that is as high as $\Theta(|V(\hat G)|^2)$. The algorithm for the Matching Player essentially needs to solve an instance of the \routeandcut problem. Using the \MWU approach as before, it can be reduced to solving an instance of directed decremental \SSSP. The algorithm of \cite{CK24} then uses the standard Even-Shiloach tree data structure in order to solve the latter problem. In addition to the expander $\hat G$ and its embedding into $G$, the algorithm of \cite{CK24} maintains two additional Even-Shilach trees in $G$. Both trees are rooted in the vertices of $\hat G$, and have depth roughly $d$. One of the trees has all its edges directed away from the root, and the other has all of its edges directed towards the root. In order to respond to a query between a pair $x,y$ of vertices of $G$, the two Even-Shiloach trees are used to compute a short path connecting $x$ to some vertex $x'\in V(\hat G)$, and a short path connecting some vertex $y'\in V(\hat G)$ to $y$. A simple BFS search in the expander $\hat G$ then yields a short path connecting $x'$ to $y'$, which can be turned into an $x'$-$y'$ path in $G$ by exploiting the embedding of $\hat G$ into $G$.

We now describe several sources of inefficiency of the algorithm of \cite{CK24}, and then describe our approach to overcoming them. First, both the algorithms of \cite{CK24} and \cite{SCC} for the \maintaincluster problem are only designed for graphs with unit edge-length. However, both of these works solve (a variant of) the \SSSP problem in graphs with arbitrary edge lengths. To overcome this difficulty, \cite{CK24} use the same approach as \cite{SCC}: namely, they choose a threshold $\tau$, and initially delete all edges whose length is greater than $\tau$ (called \emph{long edges}) from the input graph $H$. The lengths of the remaining edges (called \emph{short edges}) are set to $1$ for the sake of maintaining the \ATO and solving the \maintaincluster problem on the resulting instances. The long edges however are reinserted into the contracted graph $\hat H$, and the actual lengths of the short edges are used in it as well. This approach unfortunately results in a rather large number of right-to-left edges with a large width in $\hat H$, as it may potentially include all long edges. In order to overcome this difficulty, we design an algorithm for the \maintaincluster problem that can handle arbitrary edge lengths, which adds an additional dimension of technical challenges.

The second main source of inefficiency in the algorithm of \cite{CK24} is their use of Even-Shiloach trees in their algorithm for \maintaincluster, both in implementing the Matching Player in the Cut-Matching game, and in order to maintain short paths connecting all vertices of $G$ to the vertices of the expander $\hat G$. It is immediate to see that the problem that the Matching Player needs to solve is essentially an instance of the \routeandcut problem. We also observe that an algorithm for a variant of the \routeandcut problem can be exploited in order to maintain the paths connecting all vertices of $G$ to the vertices of $V(\hat G)$. We abstract this as a new problem, that we call \connecttocenters, and discuss it below. We believe that this problem and our algorithm for solving it are of independent interest. 
We remark that this reduction from the \connecttocenters problem to the \routeandcut problem requires that the algorithm for the \routeandcut problem works for arbitrary congestion parameter $\eta\leq \Delta$, and this is the reason for our more general definition of the \routeandcut problem.

To summarize, as already shown in previous work, in order to obtain an efficient algorithm for the \routeandcut problem, it is enough to obtain an efficient algorithm for the \maintaincluster problem, and we observe that the opposite is also true: an efficient algorithm for the \routeandcut problem implies an efficient algorithm for the \maintaincluster problem. This, however, creates a chicken-and-egg issue, where  in order to  solve one of the two problems efficiently,  we need to design an efficient algorithm for the other. We overcome this barrier by using a recursive approach, that we describe next.

\paragraph{A recursive approach.} We parameterize both the \routeandcut and the \maintaincluster problem using a parameter $r>0$. We say that an instance of the \maintaincluster problem  on an $n$-vertex graph $G$ with a distance parameter $d$ is {\em $r$-restricted}, if $d\leq 2^{r\cdot \sqrt{\log n}}$. Consider now an instance of the \routeandcut problem on an $n$-vertex graph $H$, with two subsets $A,B$ of its vertices, and parameters $\Delta$ and $\eta$. It is not hard to see that, if $\pset$ is any collection of $\Omega(\Delta/\poly\log n)$ paths connecting vertices of $A$ to vertices of $B$, that cause vertex-congestion at most $\eta$, then a large fraction of the paths in $\pset$ have length $\tilde O(n\eta/\Delta)$. We say that an instance of the \routeandcut problem is {\em $r$-restricted} if $\frac{n\eta}{\Delta}\leq 2^{r\cdot \sqrt{\log n}}$. We show a straightforward algorithm for the $1$-restricted \routeandcut problem. Then for all $r\geq 1$, we show that an efficient algorithm for the $r$-restricted \routeandcut problem implies an efficient algorithm for the $r$-restricted \maintaincluster problem. We also show that an efficient algorithm for the $r$-restricted \maintaincluster problem implies an efficient algorithm for the $(r+1)$-restricted \routeandcut problem. Using induction on $r$, we then simultaneously obtain efficient algorithms for the \routeandcut and the \maintaincluster problems for the entire range of values for the parameter $r$.

\paragraph{\connecttocenters Problem.} The \connecttocenters problem is employed as a subroutine in the algorithm for the \maintaincluster problem, but we feel that it is interesting in its own right, as it seems to arise in many different settings. Suppose we are given a dynamic $n$-vertex graph $G$; for now we will assume that $G$ undergoes an online sequence of edge deletions, but in fact our algorithm considers other updates, as described later. In addition to graph $G$, assume that we are given a parameter $d$, and a subset $T\subseteq V(G)$ of vertices that we call \emph{centers}. The goal is to maintain, for every vertex $v\in V(G)$, a path $P(v)$ of length at most $d$, connecting $v$ to some vertex of $T$. As the time progresses, some vertices may be deleted from $T$, but we are guaranteed that $T$ always contains a large enough fraction of the vertices of $G$, e.g. $|T|\geq \Omega(|V(G)|/(d\poly\log n)$. In order to ensure that the deletion of edges from $G$ does not impact too many paths in $\pset^*=\set{P(v)\mid v\in V(G)}$, it is desirable that the paths cause a small edge-congestion (say at most $\tilde O(d)$), and for similar reasons it is desirable that every vertex $x\in T$ serves as an endpoint of relatively few such paths (say at most $\tilde O(d)$). At any time, the algorithm may compute a cut $(A,B)$ of sparsity at most $O\left(\frac{\poly\log n}{d}\right )$, with $|A\cap T|\ll |A|$, after which the vertices of $A$ are deleted from $G$ and the algorithm continues. We note that whenever the by now standard paradigm of embedding an expander into the input graph $G$ and then maintaining short paths connecting all vertices of $G$ to the vertices of the expander is used, one essentially needs to solve a variant of the \connecttocenters problem.  So far this was typically done by employing Even-Shiloach trees, but this data structure becomes inefficient once the depth parameter $d$ is sufficiently large. It is for this reason that we believe that our algorithm for the \connecttocenters problem is of independent interest.

It is immediate to see that the initial collection $\pset^*=\set{P(v)\mid v\in V(G)}$ of paths of length $O(d)$ each, connecting every vertex of $G$ to some vertex of $T$, that cause  edge-congestion $\tilde O(d)$, can be computed by employing an algorithm for the \routeandcut problem. As edges are deleted from $G$, and vertices are deleted from $T$, some of the paths in $\pset^*$ may be destroyed. Whenever a path $P(v)\in \pset^*$ is destroyed (whether because its last endpoint is deleted from $T$, or because one of its edges is deleted from $G$), we say that vertex $v$ becomes \emph{disconnected}. We then need to \emph{reconnect} all disconnected vertices to $T$. This, again, can be done by employing an algorithm for the \routeandcut problem, but doing so directly may be very inefficient. Assume, for example, that we are given an algorithm $\aset$ for the \routeandcut problem, that, on an $n$-vertex graph $G$, has running time $O(n^{2+o(1)})$. Every time a subset of vertices of $G$ becomes disconnected, we would need to employ this algorithm in order to reconnect them to $T$. However, it is possible that only a small number of vertices become disconnected at a time, and spending $\Theta(n^{2+o(1)})$ time to reconnect them each time is prohibitively expensive. A better approach seems to be to consider the set $U$ of vertices that are currently connected, and the set  $U'$ of vertices that are currently disconnected. We could then attempt to route the vertices of $U'$ to the vertices of $U$ by constructing a new collection $\qset=\set{Q(u)\mid u\in U'}$ of paths, where each path $Q(u)$ connects $u$ to some vertex of $U$; and then exploit the existing paths in $\set{P(v)\mid v\in U}$ in order to compute paths connecting every vertex of $U'$ to the vertices of $T$. 
However, we cannot afford to spend $\Theta(n^2)$ time in order to compute the set $\qset$ of paths. On the other hand, 
intuitively, if $|U'|\ll U$, then we may not need to explore the entire graph $G$ in order to compute the set $\qset$ of paths. In order to overcome this difficulty, we require that the algorithm for the \routeandcut problem has running time roughly $n^{1+o(1)}\cdot (n-|B|)$, instead of $n^{2+o(1)}$. In particular, if the graph $G$ is sufficiently dense and $|B|$ is sufficiently large, then this running time may be much smaller than $|E(G)|$. Our reduction from the $(r+1)$-restricted \routeandcut problem to the $r$-restricted \maintaincluster problem follows the high-level approach of \cite{SCC} and \cite{CK24}. However, this additional strict requirement on the efficiency of the algorithm for the \routeandcut problem, in addition to the requirement that the algorithm should work for arbitrary values of the congestion parameter $\eta$, make the reduction more challenging and technical.

Assume now that we are given an algorithm $\aset$ for the \routeandcut problem, that, on an instance $(G,A,B,\Delta,\eta)$ has running time $n^{1+o(1)}\cdot (n-|B|)$, where $n=|V(G)|$. As described above, in order to implement our algorithm for the \connecttocenters problem, whenever we are given a collection $U'$ of vertices that are currently disconnected from $T$, we can now employ Algorithm $\aset$ to construct a collection $\qset$ of paths connecting them to the vertices of $U$ (the set of currently connected vertices), and then compose $\qset$ with the collection $\set{P(v)\mid v\in U}$ of paths to obtain the desired collection $\set{P(v)\mid v\in U'}$ of paths that reconnects the vertices of $U'$ to $T$. Unfortunately, if we follow this approach, and keep appending paths to each other iteration after iteration, we may obtain paths whose lengths are prohibitively large. Instead, we follow a \emph{layered} approach. For a parameter $\lambda=O(\log n)$, we maintain at all times a partition $(U_0,\ldots,U_{\lambda})$ of the vertices of $G$ into \emph{layers}, where $U_0=T$, and $U_{\lambda}$ contains all vertices that are currently disconnected. For all $1\leq i<\lambda$, we also maintain a collection $\qset_i=\set{K(v)\mid v\in U_i}$ of paths, where each path $K(v)$ connects a vertex $v\in U_i$ to a vertex in $\bigcup_{i'=0}^{i-1}U_{i'}$. The paths in each set $\qset_i$ have length at most $\frac d{4\lambda}$, and cause congestion $\tilde O(d)$ in $G$. By composing the paths from different sets $\qset_i$, we can obtain, for every vertex $v\in V(G)$, a path that connects it to some vertex of $T$. At a high level, for all $1\leq i<\lambda$, we reconstruct layer $U_i$ and the set $\qset_i$ of paths from scratch every time that roughly $2^{\lambda-i}$ new vertices become disconnected, and we also ensure that $|U_i|\leq 2^{\lambda-i}$ holds. 
Therefore, when we employ the algorithm for the \routeandcut problem in order to reconnect the vertices of $U_i$, in the resulting instance of $\routeandcut$, $n-|B|\leq 2^{\lambda-i}$ holds, and the running time of the algorithm is roughly $n^{1+o(1)}\cdot 2^{\lambda-i}$. Therefore, as index $i$ becomes smaller, the running time of the \routeandcut algorithm that is used to reconnect the vertices of $U_i$ increases. However, for smaller values of index $i$, we need to reconstruct the set $U_i$ of vertices and the set $\qset_i$ of paths less often. This eventually leads to the desired $n^{2+o(1)}$ running time.

\paragraph{Edge-deletion versus edge-length-increase updates.}
We would like to highlight here what we feel is a somewhat surprising  insight from our work, that may appear minor at first sight, but we believe that it may be useful in other problems. 
Consider the following two settings for dynamic graphs: the first one is the standard decremental setting, where edges are deleted from the input graph $G$ as the time progresses. The second setting is a somewhat more unusual one, where the only updates that are allowed in the input graph $G$ is the doubling of the lengths of its edges; we refer to this type of updates as edge-length-increases. Generally, it is not hard to see that both models are roughly equivalent. Indeed, in order to simulate edge-length-increases in the standard decremental setting, we can simply create a large number of copies of each edge $e$ of various lengths, and then, as the length of $e$ increases, some of these copies are deleted. The reduction in the other direction is also immediate: we can simulate the deletion of an edge $e$ from $G$ by repeatedly doubling its length, until it becomes very high. 

The dynamic graphs that arise from using the \MWU framework typically only undergo edge-length-increase updates, which are then typically implemented as edge-deletions, in order to reduce the problem to the more standard decremental \SSSP, as described above. However, the edge-length-increases that the input graph $G$ undergoes under this implementation of the \MWU method are rather well-behaved: specifically, the lengths of the edges are only increased moderately, and all length increases occur on the edges that participate in the paths that the algorithm returns in response to queries. We observe that the resulting \SSSP problem appears to be easier if we work with edge-length-increase updates directly, instead of the more traditional approach of transforming them into edge-deletion updates.

In order to illustrate this, consider the following simple scenario: we are given a graph $G$, and initially the length of every edge in $G$ is $1$. Assume also that we have computed another graph $X$ (it may be convenient to think of $X$ as an expander), and embedded $X$ into $G$ via paths of length at most $d$, that cause edge-congestion at most $\eta$. If some edge $e$ is deleted from $G$, then every edge $\hat e\in E(X)$, whose embedding path $Q(\hat e)$ uses $e$, must be deleted from $X$ as well. Therefore, the deletion of a single edge from $G$ may lead to the deletion of $\eta$ edges from $X$. Assume now that, instead, the edges of $G$ only undergo increases in their lengths, where the length of each edge may be iteratively doubled, but the total increase in the lengths of all edges is moderate. If the length of an edge $e\in E(G)$ is doubled, then for every edge $\hat e\in E(X)$ whose embedding path $Q(\hat e)$ uses $e$, the length of  $Q(\hat e)$  increases only slightly, and so there is no need to delete $\hat e$ from $E(X)$. We can usually wait until the length of the path $Q(\hat e)$ increases significantly before edge $\hat e$ needs to be deleted from $X$. As mentioned already, in instances arising from the \MWU framework, the total increase in the lengths of all edges in $G$ over the course of the entire algorithm is usually not very large, and in particular most edges whose lengths are doubled are short. This allows us to maintain the expander $X$ and its embedding into $G$ over the course of a much longer sequence of updates to $G$. This is one of the insights that allowed us to obtain a more efficient algorithm for the \maintaincluster problem.

To summarize, our algorithm departs from the algorithm of \cite{CK24} in the following key aspects. First, we use the \MWU method to reduce the \routeandcut problem to \stSP in dynamic graphs that undergo edge-length-increases instead of edge-deletion updates. Second, our algorithm for the \routeandcut problem has running time that significantly decreases when the set $B$ of vertices contains almost all vertices of $G$; in some cases the running time may even be lower than $|E(G)|$. We extend the \maintaincluster problem to handle arbitrary edge lengths, but unlike \cite{CK24} we only allow edge-length-increases, instead of edge-deletion updates. We design an algorithm for the \connecttocenters problem, that we believe is of independent interest, and that can be viewed as reducing the \maintaincluster problem to \routeandcut. Lastly, we use a recursive approach, in which algorithms for \routeandcut rely on algorithms for \maintaincluster and vice versa, by parametrizing both problems with an auxiliary parameter $r$, and then inductively develop algorithms for both problems for the entire range of $r$.

\subsection{Organization}
We start with preliminaries in \Cref{sec: prelims}. We then provide a high-level overview of our algorithm in \Cref{sec: high level}, where we also define the \routeandcut and the \stSP problems, state our main results for them, and describe the modification of the \MWU method that we use. In \Cref{sec: expander tools} we define the \maintaincluster problem and state our main result for it. We also describe our recursive approach, state the theorems describing the reductions between the \maintaincluster and the \routeandcut problems, and complete the algorithm for the \MBM problem using them. Then in \Cref{sec: SSSP alg} we describe our reduction from the  $(r+1)$-restricted \routeandcut problem to the $r$-restricted \maintaincluster problem. In \Cref{sec: layered connectivity DS} we describe our algorithm for the \connecttocenters problem, and in \Cref{sec: alg for maintaincluster from routeandcut} we complete our reduction from the $r$-restricted \maintaincluster problem to the  $r$-restricted \routeandcut problem.

\section{Preliminaries}
\label{sec: prelims}

In this paper we work with both directed and undirected graphs.
By default graphs are not allowed to contain parallel edges or self-loops. Graphs with parallel edges are explicitly referred to as \emph{multigraphs}.

\paragraph{Congestion of Paths and Flows.}
Let $G$ be a graph with capacities $c(e)\geq 0$ on edges $e\in E(G)$, and let $\pset$ be a collection of simple paths in $G$. We say that the paths in $\pset$ cause \emph{edge-congestion} $\eta$ if every edge $e\in E(G)$ participates in at most $\eta\cdot c(e)$ paths in $\pset$. When edge capacities are not explicitly given, we assume that they are unit. 
If every edge of $G$ belongs to at most one path in $\pset$, then we say that the paths in $\pset$ are \emph{edge-disjoint}.

Similarly, if we are given a flow value $f(e)\geq 0$ for every edge $e\in E(G)$, we say that flow $f$ causes {\em edge-congestion} $\eta$ if, for every edge $e\in E(G)$, $f(e)\leq \eta\cdot c(e)$ holds. If $f(e)\leq c(e)$ holds for every edge $e\in E(G)$, we may say that $f$ causes \emph{no edge-congestion}.

Assume now that we are given a graph $G$, a pair $s,t$ of its vertices, and a collection $\pset$ of simple paths connecting $s$ to $t$. We say that the paths in $\pset$ cause \emph{vertex-congestion} $\eta$ if every vertex $v\in V(G)\setminus\set{s,t}$ participates in at most $\eta$ paths. If the paths in $\pset$ cause vertex-congestion $1$, then we say that they are \emph{internally vertex-disjoint}.

Throughout the paper, by default, when we talk about congestion caused by a set of paths or by a flow, we mean {\bf edge-congestion}.

\paragraph{Embeddings of Graphs.} Let $H,G$ be two graphs with $V(H)\subseteq V(G)$. An \emph{embedding} of $H$ into $G$ is a collection $\pset=\set{P(e)\mid e\in E(H)}$ of simple paths in $G$, where for every edge $e=(u,v)\in E(H)$, path $P(e)$ connects $u$ to $v$ in $G$. The \emph{congestion} of the embedding is the maximum, over all edges $e'\in E(G)$, of the number of paths in $\pset$ containing $e'$; equivalently, it is the edge-congestion that the set $\pset$ of paths causes in $G$.

\paragraph{Cuts, Sparsity and Expanders for Directed Graphs.}
Let $G$ be a directed graph. A \emph{cut} in $G$ is a an ordered pair $(A,B)$ of subsets of vertices of $G$, such that $A\cap B=\emptyset$, $A\cup B=V(G)$, and $A,B\neq\emptyset$. Note that we do not require that $|A|\leq |B|$ in this definition. The \emph{sparsity} of the cut is:

\[ \Phi(A,B)=\frac{|E(A,B)|}{\min{\set{|A|,|B|}}}.\]

We say that a directed graph $G$ is a $\phi$-expander, for a given value $0<\phi<1$, if the sparsity of every cut in $G$ is at least $\phi$. Notice that, if $G$ is a $\phi$-expander, then for every partition $(A,B)$ of its vertices, both $|E(A,B)|\geq \phi\cdot \min\set{|A|,|B|}$ and $|E(B,A)|\geq \phi\cdot \min{\set{|A|,|B|}}$ hold.

\iffalse
We will sometimes informally say that graph $G$ is an \emph{expander} if $\Phi(G)$ is a constant independent of $|V(G)|$. 
We use the following immediate observation, that was also used in previous works, (see e.g. Observation 2.3 in \cite{detbalanced}). 

\begin{observation}\label{obs: exp plus matching is exp}
	Let $G=(V,E)$ be an $n$-vertex graph that is a $\phi$-expander, and let $G'$ be another graph that is obtained from $G$ by adding to it a new set $V'$ of at most $n$ vertices, and a matching $M$, connecting every vertex of $V'$ to a distinct vertex of $G$. Then $G'$ is a $\phi/2$-expander.
\end{observation}
\fi

We also use the following theorem that provides a fast algorithm for an explicit construction of an expander, that is based on the results of Margulis \cite{Margulis} and Gabber and Galil \cite{GabberG81}. The proof for undirected graphs was shown in \cite{detbalanced}.
\begin{theorem}[Theorem 2.4 in \cite{detbalanced}]
	%[Fast explicit expander construction]
	\label{thm:explicit expander}
	There is a constant $\alpha_0 > 0$ and a deterministic algorithm, that we call \constructexpander, that, given an integer $n>1$, constructs a directed graph $H_n$ with $|V(H_n)|=n$, such that $H_n$ is an $\alpha_0$-expander, and every vertex in $H_n$ has at most $9$ incoming and at most $9$ outgoing edges. The running time of the algorithm is $O(n)$.
\end{theorem}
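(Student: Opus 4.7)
The plan is to reduce the directed construction to the undirected version of the theorem, which is provided as a black box by \cite{detbalanced}. First, I would invoke the undirected version to obtain, in time $O(n)$, an undirected simple $n$-vertex graph $G_n$ of maximum degree at most $9$, whose undirected sparsity is at least some absolute constant $\alpha_0>0$; that is, for every partition $(A,B)$ of $V(G_n)$ the number of undirected edges of $G_n$ crossing the partition is at least $\alpha_0\cdot \min\set{|A|,|B|}$. This construction already handles arbitrary $n>1$ (not just perfect squares, as in the original Margulis--Gabber--Galil construction), so we may take it for granted.

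Next, I would construct $H_n$ by orienting each undirected edge $\{u,v\}$ of $G_n$ as a pair of anti-parallel directed edges $(u,v)$ and $(v,u)$. Clearly $|V(H_n)|=n$, and since each vertex of $G_n$ has degree at most $9$, each vertex of $H_n$ has in-degree at most $9$ and out-degree at most $9$ (both equal to its undirected degree in $G_n$). Because $G_n$ is simple, $H_n$ contains no self-loops or parallel edges. Since $|E(G_n)|=O(n)$, the orientation step takes $O(n)$ time, so the total running time is $O(n)$.

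It then remains to verify that $H_n$ is an $\alpha_0$-expander in the directed sense of \Cref{sec: prelims}. Fix any cut $(A,B)$ of $H_n$. By the symmetric construction, the set of directed edges of $H_n$ from $A$ to $B$ is in one-to-one correspondence with the set of undirected edges of $G_n$ that have exactly one endpoint in each of $A$ and $B$. Hence $|E_{H_n}(A,B)|$ equals the size of the undirected edge cut $(A,B)$ in $G_n$, which is at least $\alpha_0\cdot \min\set{|A|,|B|}$ by the expansion guarantee for $G_n$. This yields $\Phi(A,B)\ge \alpha_0$, so $H_n$ is an $\alpha_0$-expander, with the same constant $\alpha_0$ as in the undirected case.

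There is essentially no conceptual obstacle: the only non-trivial ingredient is the explicit undirected construction from \cite{detbalanced}, and the passage to the directed setting is immediate because the symmetric orientation makes each directed cut have the same number of edges in both directions as the corresponding undirected cut. The only points that merit a brief check are that the undirected construction yields a simple graph (so that $H_n$ has no parallel edges) and that its running time and degree bound of at most $9$ are preserved under the symmetric orientation, both of which are straightforward.
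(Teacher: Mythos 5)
Your proposal is correct and follows the same route as the paper: cite the undirected construction of \cite{detbalanced} (maximum degree $9$, linear time), bidirect every edge, and observe that each directed cut then has exactly as many crossing edges as the corresponding undirected cut, preserving the expansion constant and the degree bounds. This is precisely the argument the paper sketches immediately after the theorem statement.
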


The above theorem was proved in \cite{detbalanced} for undirected expanders, with maximum vertex degree in the resulting expander bounded by $9$. It is easy to adapt this result to directed expanders: if a graph $G$ is an undirected $\phi$-expander, and a directed graph $G'$ is obtained from $G$ by replacing every edge with a pair of bi-directed edges, then it is easy to verify that $G'$ is a directed $\phi$-expander.

We will also use the following simple observation, that shows that a slight modification of an expander graph still results in a (somewhat weaker) expander. Similar observations were shown in previous work for undirected graphs (see e.g. Observation 2.3 in \cite{detbalanced}).

\begin{observation}\label{obs: expander plus matching}
	Let $H$ be a directed $n$-vertex graph, that is a $\phi$-expander, for some parameter $0<\phi<1$. Let $H'$ be a graph that is obtained from $H$ as follows. We select an arbitrary set $X$ of $n' \leq n$ vertices of $H$, add a set $Y$ of $n'$ new vertices to $H$, and add two matchings of cardinality $n'$ to $H$: a matching $M'\subseteq X\times Y$ and a matching $M''\subseteq Y\times X$. Then the resulting graph $H'$, defined over $(n + n')$ vertices, is a $\phi/3$-expander. 
\end{observation}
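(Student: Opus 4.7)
The plan is to fix an arbitrary ordered cut $(A,B)$ of $H'$ with $|A|\le|B|$ and establish $|E_{H'}(A,B)|\ge(\phi/3)|A|$; applying the same argument to $(B,A)$ with the roles of $M'$ and $M''$ swapped then shows that both $|E_{H'}(A,B)|$ and $|E_{H'}(B,A)|$ are at least $(\phi/3)\min(|A|,|B|)$, which is exactly the required $(\phi/3)$-expansion. Introduce the notation $a=|A\cap V(H)|$, $a'=|A\cap Y|$, $b=|B\cap V(H)|$, $b'=|B\cap Y|$, and let $f'\colon X\to Y$ and $f''\colon Y\to X$ be the bijections describing $M'$ and $M''$. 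Then $a+b=n$, $a'+b'=n'$, and $a+a'\le b+b'$.

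Next I would dispose of the boundary cases. If $a=0$, then every $y\in A_Y=A$ has $f''(y)\in X\subseteq V(H)\subseteq B$, so $M''$ alone contributes $a'=|A|$ crossing edges. If $b=0$, then $V(H)\subseteq A$, which combined with $|A|\le|B|\le n'\le n$ forces $A=V(H)$ and $B=Y$, and $M'$ then contributes $n'=|A|$ crossing edges. In both degenerate cases the sparsity is at least $1$.

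Now assume $a,b\ge 1$. The $\phi$-expansion of $H$ gives $|E_H(A_H,B_H)|\ge\phi\min(a,b)$. For the matching edges, the $M'$-edges out of $A_H\cap X$ have tails in $A$ and heads in $Y$; at most $a'$ of those heads can land in $A_Y$, so at least $\max(0,|A_H\cap X|-a')$ of the $M'$-edges cross from $A$ to $B$. Symmetrically, the $M''$-edges out of $A_Y$ have heads in $X$, of which at most $|A_H\cap X|$ can lie in $A_H$, so at least $\max(0,a'-|A_H\cap X|)$ of them cross from $A$ to $B$. Together the two matchings contribute at least $\bigl||A_H\cap X|-a'\bigr|$ edges across the cut. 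The two one-sided estimates $|A_H\cap X|\le a$ and $|A_H\cap X|\ge n'-b$ (the latter because $|B_H\cap X|\le|B_H|=b$) are what I would feed into the case analysis.

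The final case split is on whether $a\le b$. If $a\le b$ and $a'\le 2a$, then $|A|\le 3a$ and the expansion bound $\phi a$ already beats $(\phi/3)|A|$. If $a\le b$ and $a'>2a$, then $|A_H\cap X|\le a<a'$ supplies at least $a'-a$ matching edges, and a short algebraic check verifies $\phi a+(a'-a)\ge(\phi/3)(a+a')$. If $b<a$, then $|A|\le|B|$ forces $b'>a'$. When $b'\ge b$, the bound $|A_H\cap X|\ge n'-b\ge a'$ yields at least $b'-b$ matching edges, which together with $\phi b$ and the estimate $|A|\le 2b'$ suffice. The remaining subcase $b'<b$ turns out to be self-handling: combining $b'<b$ with $a+a'\le b+b'$ and $a>b$ forces $a'<b$ and $a<2b$, so $|A|<3b$, and the raw expansion bound $\phi b$ already dominates $(\phi/3)|A|$. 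The main obstacle is essentially bookkeeping: one has to pick the correct one-sided bound on $|A_H\cap X|$ in each subcase, but once the right bound is chosen, every subcase reduces to an elementary linear inequality in $a,a',b,b'$.
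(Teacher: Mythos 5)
Your proposal is correct: I checked the partner-counting bounds (both matchings are perfect between $X$ and $Y$, so every vertex of $A\cap X$ has an $M'$ out-edge and every vertex of $A\cap Y$ an $M''$ out-edge, giving at least $\bigl||A\cap X|-|A\cap Y|\bigr|$ crossing edges disjoint from $E_H(A_H,B_H)$), the two boundary cases, each of the four linear-inequality subcases, and the reduction of the $(B,A)$ orientation to the same argument with $M'$ and $M''$ swapped (equivalently, edge reversal, which preserves the hypothesis that $H$ is a $\phi$-expander since the cut definition is ordered).

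The underlying idea is the same as the paper's -- matching partners forced across the cut when one side is $Y$-heavy, and the expansion of $H$ on the induced cut otherwise -- but the organization is genuinely different. The paper argues directly about an arbitrary ordered cut $(A,B)$ with a two-threshold split: if $|A\cap Y|\ge\tfrac23|A|$ it uses only $M''$, if $|B\cap Y|\ge\tfrac23|B|$ only $M'$ (each giving sparsity at least $\tfrac13$ outright), and otherwise only the $H$-expansion, since then $|A\setminus Y|\ge|A|/3$ and $|B\setminus Y|\ge|B|/3$; no orientation reduction is needed because the same three cases cover $E(A,B)$ for every ordered cut. Your version instead establishes the combined additive bound $\phi\min(a,b)+\bigl||A\cap X|-a'\bigr|$ and closes with four subcases of elementary algebra, plus the extra WLOG-and-reversal step. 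What the paper's route buys is brevity and the fact that each case needs only one edge family; what yours buys is a more mechanical, uniformly checkable argument and the explicit observation that expander and matching contributions can be summed (they are disjoint edge sets), which yields slightly stronger bounds in the mixed regimes at the cost of more bookkeeping.
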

\begin{proof}
	Consider any cut $(A,B)$ in $H'$. Assume first that $|A\cap Y|\geq \frac{2|A|}{3}$. Recall that every vertex $y\in Y$ has one edge $e_y=(y,y')\in M''$ with $y'\in X$. 
	Let $Y_1\subseteq A\cap Y$ denote the set of all vertices $y\in A\cap Y$ whose corresponding vertex $y'$ lies in $A$, and let $Y_2$ denote the set of all remaining vertices of $A\cap Y$.
	Since $|A\setminus Y|\leq \frac{|A\cap Y|} 2$, it must be the case that $|Y_2|\geq |Y_1|\geq \frac{|A|}{3}$. But then the edges of $M''$ that are incident to the vertices of $Y_2$ belong to $E_{H'}(A,B)$, and so $|E_{H'}(A,B)|\geq |Y_2|\geq \frac{|A|}{3}$.

	Similarly, if $|B\cap Y|\geq \frac{2|B|}{3}$, then, by using a similar argument with the edges of the matching $M'$ instead of $M''$, we get that $|E_{H'}(A,B)|\geq \frac{|B|}{3}$.
	Therefore, if either $|A\cap Y|\geq \frac{2|A|}{3}$, or $|B\cap Y|\geq \frac{2|B|}{3}$ hold, then the sparsity of the cut $(A,B)$ is at least $\frac{1}{3}$. 
	
	It now remains to consider the case where both $|A\cap Y|< \frac{2|A|}{3}$ and $|B\cap Y|< \frac{2|B|}{3}$ hold. Let $A'=A\setminus Y$ and $B'=B\setminus Y$. Notice that $|A'|\geq |A|/3$ and $|B'|\geq |B|/3$ must hold. Then $(A',B')$ is a cut in graph $H$, and, since $H$ is a $\phi$-expander, $|E_H(A',B')|\geq \phi\cdot \min\set {|A'|,|B'|}\geq \frac{\phi}{3}\cdot \min\set{|A|,|B|}$. Since $E_H(A',B')\subseteq E_{H'}(A,B)$, we get that the sparsity of the cut $(A,B)$ in $H'$ is
	 at least $\frac{\phi}{3}$.
\end{proof}

The following simple observation follows from the standard ball-growing technique, and similar observations were used in many previous works, including in~\cite{CK24}, where a proof is provided for completeness.

\begin{observation}[Observation 2.2 in full version of \cite{CK24}]
\label{obs: ball growing}
	There is a deterministic algorithm, that, given  a directed $n$-vertex graph $G=(V,E)$ with unit edge lengths and maximum vertex degree at most $\dmax$, together with two vertices $x,y\in V$ with $\dist_G(x,y)\geq d$ for some parameter $d\geq 64\dmax\log n$, computes a cut $(A,B)$ of sparsity at most $\phi=\frac{32\dmax\log n}{d}$ in $G$. The running time of the algorithm is $O(\dmax\cdot\min\set{|A|,|B|})$.
\end{observation}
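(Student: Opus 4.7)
The plan is to use the classical two-sided ball-growing argument, adapted to the directed setting. I will grow two BFS balls in parallel: a forward BFS ball $B_i = \{v : \dist_G(x,v) \leq i\}$ from $x$, and a backward BFS ball $\tilde B_j = \{v : \dist_G(v,y) \leq j\}$ from $y$ (i.e., BFS in the reverse graph $\overleftarrow G$). At each iteration the algorithm advances by one layer whichever of the two balls is currently smaller, and immediately tests whether the newly formed ball induces a cut of sparsity at most $\phi = 32\dmax\log n/d$; if so, it outputs that cut and halts.

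The core quantitative step is the standard ball-growing estimate. If the cut $(B_i, V\setminus B_i)$ has sparsity greater than $\phi$ and $|B_i|\leq n/2$, then $|E(B_i,V\setminus B_i)|>\phi |B_i|$, and since every vertex of $B_{i+1}\setminus B_i$ receives at most $\dmax$ incoming edges, $|B_{i+1}\setminus B_i|\geq |E(B_i,V\setminus B_i)|/\dmax > (\phi/\dmax)|B_i|$. Hence $|B_{i+1}|\geq (1+\phi/\dmax)|B_i|$, and the same holds for $\tilde B_j$ by the symmetric argument in $\overleftarrow G$. Iterating, after $k$ advancements the corresponding ball has size at least $(1+\phi/\dmax)^k$. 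Using $\log_2(1+\phi/\dmax)\geq \phi/(2\dmax \ln 2)$ (valid since $d\geq 64\dmax\log n$ ensures $\phi/\dmax\leq 1/2$), one checks that $k\geq d/16$ suffices to force this lower bound above $n$.

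Disjointness of the two balls is the second ingredient. Any $v\in B_i\cap\tilde B_j$ would give an $x$-to-$y$ path of length at most $i+j$ in $G$; since $\dist_G(x,y)\geq d$, whenever $i+j<d$ the balls are disjoint and hence $|B_i|+|\tilde B_j|\leq n$. Combining this with the exponential growth: if no sparse cut were ever found, then after at most $d/16$ advances of each ball we would have $|B_i|, |\tilde B_j| > n/2$ with $i+j\leq d/8 < d$, contradicting disjointness. So a sparse cut $(A,B)$ is produced with $\min\{|A|,|B|\}\leq n/2$, and it is the ball (rather than its complement) that attains the minimum.

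For the running time, since we always advance the smaller ball, at the iteration where the sparse cut is discovered we have $\min\{|B_i|,|\tilde B_j|\}\geq |B_i|/(1+\dmax)$ (using the crude bound $|B_i|\leq (1+\dmax)|B_{i-1}|$ for a single BFS step), so $|B_i|+|\tilde B_j|=O(\dmax)\cdot\min\{|B_i|,|\tilde B_j|\} = O(\dmax)\cdot\min\{|A|,|B|\}$. Standard BFS then costs $O(\dmax)$ per explored vertex, giving total time $O(\dmax\cdot\min\{|A|,|B|\})$ as required. The main subtlety to be careful about in the write-up is the interaction of the ``advance the smaller ball'' rule with the possibility that a single layer expansion overshoots the other ball's size; this is controlled by the one-step growth bound $|B_i|\leq (1+\dmax)|B_{i-1}|$ and does not affect the asymptotic runtime.
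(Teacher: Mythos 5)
Your overall approach is the standard bidirectional ball-growing argument (the same route as the cited proof in~\cite{CK24}), and the correctness core is sound: the per-layer expansion bound $|B_{i+1}|\geq (1+\phi/\dmax)|B_i|$ whenever the boundary test fails and $|B_i|\leq n/2$, the disjointness of the two balls while $i+j<d$, and the constant-checking with $d\geq 64\dmax\log n$ all work out. One point you should make explicit rather than defer: the sparsity certificate requires the tested ball to be the \emph{smaller} side, i.e.\ $|B_i|\leq n/2$; the clean way to handle this is the static observation that the balls of radius $d/16$ around $x$ (forward) and $y$ (backward) are disjoint, hence at least one of them has size at most $n/2$, and for \emph{that} search every prefix ball is at most $n/2$, so a radius with $|E(B_i,V\setminus B_i)|\leq \phi|B_i|$ must occur by radius $d/16$. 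This also disposes of your ``overshoot'' worry, since a ball that exceeds $n/2$ is simply never the one that certifies the cut.

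The genuine gap is in the running time. Your own chain of bounds does not give $O(\dmax\cdot\min\{|A|,|B|\})$: you bound the total number of explored vertices by $O(\dmax)\cdot\min\{|A|,|B|\}$ and then charge $O(\dmax)$ per explored vertex, which yields $O(\dmax^2\cdot\min\{|A|,|B|\})$, not the claimed bound. Moreover this is not just an accounting slip in the write-up: with layer-granularity alternation (``advance the currently smaller ball by one BFS layer'') the non-output ball can genuinely end up a $\Theta(\dmax)$ factor larger than the output cut's smaller side (e.g., the other ball undergoes one bushy layer expansion, its test fails, and the search then finishes on the thin ball), so the time spent exploring it is $\Theta(\dmax^2)$ times the output's smaller side. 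To obtain the stated $O(\dmax\cdot\min\{|A|,|B|\})$ you must interleave the two searches at the granularity of individual units of work (say, one edge scanned by each search alternately), halting the moment either search certifies $|E(B_i,V\setminus B_i)|\leq\phi|B_i|$ with $|B_i|\leq n/2$ (symmetrically in $\revG$ for the backward search, outputting the cut with the backward ball as the second side). Then the total work is at most twice the work of the search that halts, which is $O(\dmax\cdot|B_i|)=O(\dmax\cdot\min\{|A|,|B|\})$, and termination within radius $d/16$ is guaranteed by the search whose ball stays below $n/2$.
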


We will also use the following simple observation a number of times. Intuitively, the observation allows us to convert a sequence of small sparse cuts into a single balanced sparse cut; a proof of this observation appears in~\cite{CK24}.

\begin{observation}[Observation 2.3 in full version of \cite{CK24}]
\label{obs: from many sparse to one balanced}
	There is a deterministic algorithm, whose input consists of a directed $n$-vertex graph $G$, parameters $0<\alpha<1$ and $0\leq \phi\leq 1$, and a sequence $\xset=(X_1,X_2,\ldots,X_k)$ of subsets of vertices of $G$, such that the following hold:
	\begin{itemize}
		\item Every vertex of $V(G)$ lies in exactly one set in $\set{X_1,\ldots,X_k}$; 
		\item $\max_i\set{|X_i|}= \alpha n$; and
		
		\item for all $1\leq i<k$, if we denote by $X'_i=\bigcup_{j=i+1}^kX_j$, then either $|E(X_i,X'_i)|\leq \phi\cdot |X_i|$, or $|E(X'_i,X_i)|\leq \phi\cdot |X_i|$.
	\end{itemize}
	The algorithm computes a cut $(A,B)$ in $G$ with $|A|,|B|\geq \min \set{\frac{1-\alpha}{2}\cdot n, \frac n 4}$, and $|E_G(A,B)|\leq \phi\cdot \sum_{i=1}^{k-1}|X_i|$. The running time of the algorithm is $O(|E(G)|)$.  
\end{observation}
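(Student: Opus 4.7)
The plan is to construct $(A,B)$ so that its sides reflect, for each $X_i$, the direction in which $X_i$'s interface with its suffix $X'_i$ is guaranteed to be sparse. For each $i<k$, assign a label $\sigma_i=+$ if $|E(X_i,X'_i)|\leq\phi|X_i|$, and $\sigma_i=-$ otherwise (so that $|E(X'_i,X_i)|\leq\phi|X_i|$). Set $S_+=\bigcup_{i<k,\,\sigma_i=+}X_i$, $S_-=\bigcup_{i<k,\,\sigma_i=-}X_i$, and $t=\min\{(1-\alpha)n/2,\, n/4\}$. The first step is to observe that $\max\{|S_+|,|S_-|\}\geq t$: otherwise $|X_k|=n-|S_+|-|S_-|>n-2t$, and a short case analysis on whether $t=(1-\alpha)n/2$ or $t=n/4$ yields $|X_k|>\alpha n$, contradicting $\max_i|X_i|\leq\alpha n$.

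Assume without loss of generality $|S_+|\geq t$ (the other case is handled symmetrically, swapping the roles of $A,B$ and of $+,-$). Scanning clusters in order, let $i^{**}$ be the smallest index such that $\sum_{i\leq i^{**},\,\sigma_i=+}|X_i|\geq t$, and return $A=\bigcup_{i\leq i^{**},\,\sigma_i=+}X_i$ together with $B=V(G)\setminus A$. By minimality of $i^{**}$ and $|X_{i^{**}}|\leq\alpha n$, we have $|A|<t+\alpha n$; since $t\leq(1-\alpha)n/2$ implies $t+\alpha n\leq n-t$, this yields $|A|,|B|\geq t$, establishing balance.

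The core step is bounding $|E_G(A,B)|$. Take any $e=(u,v)\in E_G(A,B)$ with $u\in X_a$, $v\in X_b$ (so $a\neq b$); the membership $u\in A$ forces $\sigma_a=+$ and $a\leq i^{**}$. If $a<b$, then $e$ contributes to $F_a:=|E(X_a,X'_a)|\leq\phi|X_a|$ by $\sigma_a=+$. If $a>b$, then $e$ contributes to $B_b:=|E(X'_b,X_b)|$, and one must rule out $\sigma_b=+$: if $\sigma_b=+$, then $X_b\in B$ requires $b>i^{**}$, but $b<a\leq i^{**}$ forces $b<i^{**}$, a contradiction; the case $b=k$ is excluded since $a>b$ and $a<k$. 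Hence $\sigma_b=-$ and $B_b\leq\phi|X_b|$. Since each cut edge contributes to a unique $F_a$ (when $a<b$) or $B_b$ (when $a>b$), summing gives $|E_G(A,B)|\leq\sum_{\sigma_a=+,\,a\leq i^{**}}F_a+\sum_{\sigma_b=-}B_b\leq\phi(|A|+|S_-|)\leq\phi(|S_+|+|S_-|)=\phi\sum_{i<k}|X_i|$. The total time is $O(|E(G)|)$: a single edge-pass computes all $F_i,B_i$ given a vertex-to-cluster lookup table, after which the labels $\sigma_i$, the cumulative sums determining $i^{**}$, and the cut take an additional $O(n+k)$ time. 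The delicate point I expect to need care with is the forced $\sigma_b=-$ in the backward case, which crucially depends on restricting $A$ to clusters with index $\leq i^{**}$: a more naive assignment (e.g.\ placing every $+$-cluster in $A$ irrespective of index) yields Cut~2 of the excerpt's earlier discussion and can fail to be balanced precisely when $|S_+|+|X_k|<t$.
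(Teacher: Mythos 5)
Your proof is correct, and since the paper itself defers this statement to Observation 2.3 of the full version of \cite{CK24} rather than proving it, there is no in-paper argument to compare against; your route (labeling each $X_i$ by its sparse direction, taking a minimal prefix of same-labeled clusters as one side once one label class has total size at least $\min\{\frac{1-\alpha}{2}n,\frac n4\}$, and charging forward cut edges to their tails' clusters and backward cut edges to their heads' clusters) is the standard argument for this fact and all the key verifications — balance via $|A|<t+\alpha n\le n-t$, the forced label $\sigma_b=-$ in the backward case, and the final bound $\phi(|A|+|S_-|)\le\phi\sum_{i<k}|X_i|$ — check out, including the symmetric case. The only cosmetic blemish is the closing reference to ``Cut~2 of the excerpt's earlier discussion,'' which points to material not present in this paper and should be dropped.
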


\paragraph{Chernoff Bound.}

We use the following standard version of Chernoff Bound (see. e.g., \cite{dubhashi2009concentration}).

\begin{lemma}[Chernoff Bound]
	\label{lem: Chernoff}
	Let $X_1,\ldots,X_n$ be independent randon variables taking values in $\set{0,1}$. Let $X=\sum_{1\le i\le n}X_i$, and let $\mu=\expect{X}$. Then for any $t>2e\mu$,
	\[\Pr\Big[X>t\Big]\le 2^{-t}.\]
	Additionally, for any $0\le \delta \le 1$,
	\[\Pr\Big[X<(1-\delta)\cdot\mu\Big]\le e^{-\frac{\delta^2\cdot\mu}{2}}.\]
\end{lemma}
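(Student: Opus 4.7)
The plan is to prove both bounds via the standard Chernoff/Bernstein moment generating function technique. Write $p_i = \mathbb{E}[X_i]$, so $\mu = \sum_i p_i$. The key tool is the inequality $\mathbb{E}[e^{\lambda X_i}] = 1 + p_i(e^{\lambda}-1) \leq \exp\bigl(p_i(e^\lambda - 1)\bigr)$, valid for any real $\lambda$, which, combined with independence, yields
\[
  \mathbb{E}[e^{\lambda X}] \;=\; \prod_i \mathbb{E}[e^{\lambda X_i}] \;\leq\; \exp\bigl(\mu(e^\lambda-1)\bigr).
\]

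For the upper tail, I would apply Markov's inequality to $e^{\lambda X}$ for $\lambda > 0$ to obtain $\Pr[X > t] \leq e^{-\lambda t}\cdot \exp(\mu(e^\lambda - 1))$, and then optimize over $\lambda$ by setting $\lambda = \ln(t/\mu)$ (which is positive since $t > 2e\mu > \mu$). This gives the familiar bound
\[
  \Pr[X > t] \;\leq\; e^{-\mu}\left(\frac{e\mu}{t}\right)^t \;\leq\; \left(\frac{e\mu}{t}\right)^t.
\]
The hypothesis $t > 2e\mu$ then gives $e\mu/t < 1/2$, so the right-hand side is bounded by $2^{-t}$, which yields the first inequality.

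For the lower tail, I would run the analogous argument with $\lambda < 0$: Markov's inequality applied to $e^{-|\lambda| X}$ together with the MGF bound above gives, after optimizing $|\lambda| = -\ln(1-\delta)$, the standard estimate
\[
  \Pr[X < (1-\delta)\mu] \;\leq\; \left(\frac{e^{-\delta}}{(1-\delta)^{1-\delta}}\right)^{\!\mu}.
\]
It then suffices to verify, for $0 \leq \delta \leq 1$, the elementary analytic inequality $-\delta - (1-\delta)\ln(1-\delta) \leq -\delta^2/2$, which I would establish by considering $f(\delta) = (1-\delta)\ln(1-\delta) + \delta - \delta^2/2$, noting $f(0)=0$, and checking $f'(\delta) = -\ln(1-\delta) - \delta \geq 0$ on $[0,1)$ via the Taylor expansion of $-\ln(1-\delta)$.

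The main obstacle is purely routine: managing the two optimizations of $\lambda$ and the elementary inequality in the lower-tail calculation. There is no conceptual difficulty, and the entire proof fits on one page; both parts follow the same Chernoff template with different choices of $\lambda$ and slightly different final numeric simplifications.
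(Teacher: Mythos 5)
Your proof is correct: both tail bounds follow from the standard moment-generating-function argument exactly as you outline, and the two optimizations of $\lambda$ and the elementary inequality $(1-\delta)\ln(1-\delta)+\delta-\delta^2/2\ge 0$ are handled properly (the degenerate cases $\mu=0$ and $\delta=1$ are trivial). The paper itself gives no proof — it cites this as a standard Chernoff bound from the literature — and your argument is precisely the standard one.
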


\subsection{Even-Shiloach Tree for Weighted Directed Graphs}

We use the generalization of the Even-Shiloach data structure of \cite{EvenS,Dinitz} to weighted directed graphs, due to \cite{ES-tree-directed}, who extended similar results of \cite{HenzingerKing} for unweighted directed graphs.

\begin{theorem}[\cite{ES-tree-directed}, see Section 2.1]\label{thm: directed weighted ESTree}
	There is a deterministic algorithm, whose input consists of a directed graph $H$ with integral edge lengths, a source vertex $s\in V(H)$, and a distance parameter $d\geq 1$. The algorithm maintains a shortest-path tree $\tau\subseteq H$ rooted in $s$, up to depth $d$. In other words, for every vertex $v\in V(H)$ with $\dist_H(s,v)\leq d$, vertex $v$ lies in the tree $\tau$, and the length of the shortest $s$-$v$ path 
	in $\tau$ is equal to that in $H$. The data structure also maintains, for all $v\in V(\tau)$, the current value $\dist_H(s,v)$. The total update time of the algorithm is $O(|E(H)|\cdot d)$.
\end{theorem}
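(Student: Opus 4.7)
The plan is to maintain, for every vertex $v \in V(H)$, a level $\ell(v) \in \{0, 1, \ldots, d\} \cup \{\infty\}$ that equals $\dist_H(s,v)$ whenever this distance is at most $d$ and is set to $\infty$ otherwise, together with a parent pointer $p(v) \in V(H)$ for every $v$ with $\ell(v) \leq d$, such that $(p(v), v) \in E(H)$ and $\ell(p(v)) + w(p(v), v) = \ell(v)$. The edges $\{(p(v), v) : v \neq s, \ell(v) \leq d\}$ then form the desired tree $\tau$. For the initialization, I would run a bounded Dijkstra from $s$ on $H$, stopping at depth $d$; since weights are positive integers and levels lie in $[0,d]$, the priority queue can be implemented as a bucket/array indexed by level, giving $O(|E(H)|)$ initialization time. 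For each vertex $v$ I also keep an in-adjacency list, which will be scanned whenever $v$'s level needs to be recomputed.

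When an edge $e = (u, v)$ is deleted (or has its weight increased), remove $e$ from the in-adjacency list of $v$. If $e$ was not the current tree edge into $v$, nothing else changes. Otherwise, $v$'s level may have to grow: insert $v$ into a global priority queue $Q$ with key $\ell(v)$. The main loop repeatedly extracts the vertex $v \in Q$ with the smallest key and recomputes $\ell^{\text{new}}(v) = \min_{u: (u,v) \in E(H)} \ell(u) + w(u,v)$ (capped at $\infty$ if this exceeds $d$) by scanning all current in-neighbors of $v$. If $\ell^{\text{new}}(v) = \ell(v)$ we simply reset $p(v)$ to the minimizer. If $\ell^{\text{new}}(v) > \ell(v)$, we set $\ell(v) \gets \ell^{\text{new}}(v)$, update $p(v)$ accordingly (or remove $v$ from $\tau$ if $\ell(v) = \infty$), and for every out-edge $(v, w) \in E(H)$ with $p(w) = v$, insert $w$ into $Q$ with key $\ell(w)$. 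Processing vertices in increasing order of their tentative level guarantees that when $v$ is examined, all in-neighbors $u$ with $\ell(u) < \ell^{\text{new}}(v)$ already hold their correct values, which by induction on levels gives $\ell(v) = \dist_H(s,v)$ whenever this distance is $\leq d$.

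The running-time analysis rests on the monotonicity of levels under edge deletions/weight increases: $\ell(v)$ is nondecreasing throughout, and takes integer values in $\{0, 1, \ldots, d, \infty\}$, so it changes at most $d + 1$ times per vertex. Each change triggers one full scan of $v$'s in-adjacency list, contributing $O(\deg^{\text{in}}(v))$ work. Summed over all vertices and all level changes this yields $\sum_v (d+1)\cdot \deg^{\text{in}}(v) = O(|E(H)| \cdot d)$. Queue operations can be charged to these same events, and because keys are integers in $[0, d]$ a bucket-based priority queue supports insert/extract-min in $O(1)$ amortized time, so no logarithmic factor is incurred. The initialization and the handling of edge removals from adjacency lists absorb into the same bound, giving the claimed total update time of $O(|E(H)|\cdot d)$.

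The only subtle point, and the place where the argument has to be made carefully, is the proof that when the main loop extracts $v$ from $Q$ with key $k$, then every in-neighbor $u$ with $\ell(u) < k$ has already reached its final current value (so that the recomputation $\min_u \ell(u) + w(u, v)$ is correct). This follows because any further increase of $\ell(u)$ would have inserted $u$ into $Q$ with key $<k$, contradicting that $v$ was the min-key element. Combined with the fact that $v$'s final level is at least $k$ (it has only been enqueued because its previous certificate failed), this yields the invariant $\ell(v) = \dist_H(s,v)$ whenever $\dist_H(s,v) \leq d$, and hence the correctness of $\tau$.
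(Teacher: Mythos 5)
The paper does not actually prove this statement: it is imported verbatim from the cited work \cite{ES-tree-directed} (a generalization of the Even--Shiloach structure, via Henzinger--King, to weighted directed graphs), so there is no in-paper proof to compare against; your attempt has to stand on its own.

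Your correctness argument is the standard one and is essentially fine, but the running-time analysis has a genuine gap. You charge work only to level increases: ``each change triggers one full scan of $v$'s in-adjacency list,'' summing to $O(|E(H)|\cdot d)$. However, your algorithm also performs a full scan of the in-adjacency list on every extraction in which the level does \emph{not} change --- namely when the current tree edge into $v$ is deleted, or when $p(v)$'s level increases, but $v$ finds another parent realizing the same distance. These scans are charged to nothing, and they can dominate: take a vertex $v$ with $k$ in-neighbors $u_1,\dots,u_k$, all with $\ell(u_i)+w(u_i,v)$ equal to $\ell(v)$, and let the adversary raise the levels of $u_1,\dots,u_k$ one at a time. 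Each increase invalidates $v$'s current parent, $v$ is extracted, rescans all $k$ in-edges, and adopts the next $u_i$ --- $\Theta(k)$ extractions of cost $\Theta(k)$ each, i.e.\ $\Theta(k^2)$ work, while $\ell(v)$ never moves; repeating this at each of the $d$ levels gives $\Theta(k^2 d)$ against the $O(k\,d)$ budget that vertex is allowed in an $O(|E(H)|\cdot d)$ bound. (Your observation that each in-neighbor is adopted at most once per level bounds the \emph{number} of such no-change extractions by $\deg^{\mathrm{in}}(v)$, but each one still costs a full scan, which is exactly the quadratic loss.) The standard repair is to avoid rescanning: maintain for each vertex $v$ a bucket structure (or heap) over its in-edges keyed by the value $\ell(u)+w(u,v)\in\{0,\dots,d,\infty\}$; when $\ell(u)$ increases, move each out-edge $(u,w)$ of $u$ to its new bucket at cost $O(1)$, charged to that level increase of $u$ (total $O(\sum_u \deg^{\mathrm{out}}(u)\cdot d)=O(|E(H)|\cdot d)$), and recompute $\ell(v)$ by advancing a pointer over $v$'s buckets, which only moves upward and hence costs $O(d)$ per vertex overall. (In the unweighted case the same effect is achieved by the round-robin scan pointer into the adjacency list.) With that modification --- and with the buckets also replacing your global integer-keyed queue argument --- the $O(|E(H)|\cdot d)$ bound goes through; as written, the analysis does not.
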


We will sometimes refer to the data structure from \Cref{thm: directed weighted ESTree} as $\EST(H,s,d)$.

\subsection{The Adjacency-List Representation of Graphs}
\label{subsec: adj-list representation}

In some of the problems that we consider, the algorithm may not need to explore the entire input graph $G$, and in some cases, it is important that the running time of the algorithm is strictly smaller than $|E(G)|$. As an example, consider the following routing problem: we are given as input a directed graph $G$, and two disjoint sets $A$,$B$ of its vertices. The goal is to compute a largest-cardinality collection $\pset$ of paths, where each path $P\in \pset$ connects a vertex of $A$ to a vertex of $B$, the paths in $\pset$ cause low edge-congestion, and every vertex of $A\cup B$ serves as an endpoint of at most one path in $\pset$. We emphasize that vertices of $A\cup B$ may serve as inner vertices on paths in $\pset$, and it is possible that some vertex $v\in A\cup B$ serves both as an endpoint of some path in $\pset$, and as an inner vertex on several other paths. Consider now a scenario where almost all vertices of $G$ lie in $B$. In such a case, one can show that there is an optimal solution $\pset$ to the problem, such that only a small number of vertices of $B$ participate in the paths in $\pset$. In other words, in order to compute such a collection of paths, it is not necessary to explore the entire  graph $G$. In such cases, we will often be ineterested in algorithms with running time roughly $O(n^{1+o(1)}\cdot (n-|B|))$, that  may be significantly smaller than $|E(G)|$, if graph $G$ is sufficiently dense. Therefore, we will need to provide the input graph $G$ to the algorithm in a way that allows it to achieve this low running time, and does not require spending $\Omega(|E(G)|)$ time in order to read the entire input.

In such cases, we will often require that the (directed) input graph $G$ is provided to the algorithm in the adjacency-list model: 
we are given two arrays $\OUT$ and $\INN$ of length $n$ each. For all $1\leq i\leq n$, entry $\OUT[i]$ contains a pointer to a linked-list, that contains all edges in set $\delta^+(v_i)$, that is, all edges that leave the $i$th vertex of $G$. Similarly, $\INN[i]$ contains a pointer to a linked-list of all edges in $\delta^-(v_i)$ -- all edges that enter the $i$th vertex of $G$. If the edges of $G$ have lengths or weights, then these lengths or weights also appear in the corresponding lists $\delta^-(v_i)$ and $\delta^+(v_i)$, where these edges are listed. 

Typically, if $\alg$ is an algorithm for the routing problem mentioned above, then it will only access the entries $\OUT[i],\INN[i]$ of a small subset of the vertices $v_i\in V(G)$, allowing us to achieve a running time that is significantly lower than $\Theta(|E(G)|)$.  Specifically, it is sufficient that the algorithm reads the entries corresponding to the vertices in $V(G)\setminus B$, and of a small subset of vertices in $B$.  This will allow us to obtain running times for such routing problems roughly of the form $O(|V(G)|^{1+o(1)}\cdot (|V(G)|-|B|))$, which may be significantly lower than $\Theta(|E(G)|)$ if set $B$ contains almost all vertices of $G$, and $G$ is sufficiently dense.

We note that providing the graph $G$ as input to such an algorithm $\alg$ may require time $O(|E(G)|+|V(G)|)$, in order to construct the adjacency-list representation of the input graph $G$. However, in a typical application, we will construct the adjacency-list representation of the input graph $G$ once, and then apply Algorithm $\alg$ for solving the routing problem on $G$ multiple times, with different vertex sets $A$ and $B$, possibly with small alterations to the graph $G$ between such executions of $\alg$. Therefore, we will not need to construct the adjacency-list representation of the input graph $G$ from scratch every time Algorithm $\alg$ is called. 

In some cases, given as input a directed graph $G$ in the adjacency-list representation, our routing algorithm will need to work with the graph $\revG$, which is obtained from $G$ by reversing the directions of all its edges. In such cases, we can modify the adjacency-list representation of $G$ to obtain an adjacency-list representation of $\revG$, by simply switching the arrays $\OUT$ and $\INN$.

Finally, given a directed graph $G$ and a vertex $v$ in $G$, we will denote by $\delta^+_G(v)$, the set of edges leaving $v$, and by $\delta^-_G(v)$, the set of edges entering $v$.

\subsection{Overview of the \ATO framework}
\label{sec: ATO}

We now provide an overview of the Approximate Topological Order (\ATO) framework that was developed in \cite{AlmostDAG2} and later used in \cite{SCC}. 
A very slight modification of this framework was also used in~\cite{CK24}. We follow here the presentation of~\cite{CK24}, though we further adapt it to allow for presence of a special cluster, that contains most of the vertices of the underlying graph.

Let $G=(V,E)$ be a directed $n$-vertex graph  with two special vertices $s$ and $t$ and non-negative lengths $\ell(e)$ on edges $e\in E$, that undergoes edge-length-increase updates during a time interval $\tset$, that we refer to as the \emph{time horizon}. 
Throughout, we refer to a collection $I\subseteq \set{1,\ldots,n}$ of consecutive integers as an \emph{interval}. The \emph{size} of the interval $I$, denoted by $|I|$, is the number of distinct integers in it.
An \ATO consists of the following components:

\begin{itemize}
\item a partition $\xset$ of the set $V$ of vertices of $G$ into non-empty subsets, such that there is a set $S=\set{s}$ and $T=\set{t}$ in $\xset$; 

\item (optional) a single set $U\in \xset$ that we refer to as a \emph{special cluster}, together with an integer $1\leq \gamma\leq |U|$;

If special cluster $U$ is defined, then all other sets of $\xset$ are called \emph{regular clusters}. If the special cluster is undefined, then all sets of $\xset$ are called regular clusters, and we set $\gamma=1$; and

\item a map $\rho$, that maps every set $X\in \xset$ to an interval $I_X\subseteq \set{1,\ldots,n-\gamma+1}$. The size of the interval $I_X$ must be $|X|$ if $X$ is a regular cluster, and $|X|-\gamma+1$, if it is a special cluster. We require that all intervals in $\set{I_X\mid X\in \xset}$ are mutually disjoint. 
\end{itemize}

For convenience, we will denote an \ATO by $(\xset,\rho, U, \gamma)$, where the special cluster $U$ may be undefined. 
Both the partition $\xset$ and the map $\rho$ evolve over time. We denote by $\xset\attime$ the partition $\xset$ at time $\tau$, and by $\rho\attime$ the map $\rho$ at time $\tau$. For every set $X\in \xset\attime$, we denote by $I^{\attime}_X$ the interval $I_X$ associated with $X$ at time $\tau$. If the special cluster $U$ is defined in the initial \ATO, then we require that $U\in \xset$ holds at all times, though some vertices may be deleted from $U$ as the time progresses, and, additionally, $|U|\geq \gamma$ must hold at all times.

The only type of changes that are allowed to the \ATO over time is the {\bf splitting} of the clusters $X\in \xset$. In order to split a set $X\in \xset$, we select a subset $Y\subseteq X$ of vertices (if $X$ is a special cluster, then $|X\setminus Y|\geq \gamma$ must hold, while otherwise we require that $X\setminus Y\neq \emptyset$). We add $Y$ to $\xset$ as a new cluster, and we delete the vertices of $Y$ from $X$. We also partition interval $I_X$ into two disjoint subintervals: one of these subintervals, of length $|Y|$, becomes the interval $I_Y$ that is associated with the new cluster $Y$. The other subinterval, whose length is $|X|$ if $X$ is a regular cluster, and $|X|-\gamma+1$ if it is a special cluster, becomes the new interval $I_X$ associated with the cluster $X$. Note that the total length of the two new intervals is equal to the length of the original interval $I_X$.

Notice that an \ATO $(\xset,\rho,U,\gamma)$ naturally defines a left-to-right ordering of the intervals in $\set{I_{X}\mid X\in \xset}$: we will say that interval $I_X$ lies to the left of interval $I_{X'}$, and denote $I_X \prec I_{X'}$ iff every integer in interval $I_X$ is smaller than every integer in $I_{X'}$. We will also sometimes denote $X\prec X'$ in such a case. Notice also that, if we are given an ordering $\sigma=(X_1,X_2,\ldots,X_k)$ of the sets in $\xset$, then this ordering naturally defines the map $\rho$, where for all $1\leq i\leq k$, the first integer in $I_{X_i}$ is $\sum_{j=1}^{i-1}|X_j|+1$ if none of the clusters in $\set{X_1,\ldots,X_{i-1}}$ is special, and it is $\sum_{j=1}^{i-1}|X_j|-\gamma+2$ otherwise. For convenience, we will denote the map $\rho$ associated with the ordering $\sigma$ by $\rho(\sigma)$.

Given a vertex $v\in V$, we will sometimes denote by $X^v$ the set of $\xset$ that contains $v$. Consider now some edge $e=(u,v)$ in graph $G$, and assume that $X^u\neq X^v$. We say that $e$ is a \emph{left-to-right} edge if $X^u\prec X^v$, and we say that it is a \emph{right-to-left} edge otherwise.

Lastly, we denote by $H=G_{|\xset}$ the \emph{contracted graph} associated with $G$ and $\xset$ -- the graph that is obtained from $G$ by contracting every set $X\in \xset$ into a vertex $v_X$, that we may call a \emph{supernode}, to distinguish it from other vertices of $G$. We keep parallel edges but discard any self-loops.

\paragraph{Intuition.}
For intuition, we could let $\xset$ be the set of all strongly connected components of the graph $G$. The corresponding contracted graph $H=G_{|\xset}$ is then a DAG, so we can define a topological order of its vertices. This order, in turn, defines an ordering $\sigma$ of the sets in $\xset$, and we can then let the corresponding \ATO be $(\xset,\rho(\sigma), U,\gamma)$, where $U$ is undefined and $\gamma=1$. 
 Notice that in this setting, all edges of $G$ that connect different sets in $\xset$ are left-to-right edges. However, we will not follow this scheme. Instead, we will ensure that every set $X\in \xset$ corresponds to some expander-like graph, and we will allow both left-to-right and right-to-left edges with respect to the \ATO .  It would still be useful to ensure that the corresponding contracted graph $H$ is close to being a DAG, since there are good algorithms for \SSSP on DAG's and DAG-like graphs. Therefore, we will try to limit both the number of right-to-left edges, and their ``span'', that we define below.

\paragraph{Skip and Span of Edges.}
For every edge $e=(u,v)\in E(G)$ we define two values: the \emph{skip} of edge $e$, denoted by $\skipp(e)$, and the \emph{span} of $e$, denoted by $\spann(e)$. Intuitively, the skip value of an edge $(u,v)$ will represent the ``distance'' between the intervals $I_{X^u}$ and $I_{X^v}$. Non-zero span values are only assigned to right-to-left edges. The span value also represents the distance between the corresponding intervals, but this distance is defined differently. We now formally define the span and skip values for each edge, and provide an intuition for them below.

Consider some edge $e=(x,y)\in E(G)$. Let $X,Y\in \xset$ be the sets containing $x$ and $y$, respectively. If $X=Y$, then $\skipp(e)=\spann(e)=0$. 

Assume now that $X\prec Y$, that is, $e$ is a left-to-right edge. In this case, we set $\spann(e)=0$, and we let $\skipp(e)$ be the distance between the last endpoint of $I_X$ and the first endpoint of $Y$. In other words, if the largest integer in $I_X$ is $Z_X$, and the smallest integer in $Y$ is $A_Y$, then $\skipp(e)=A_Y-Z_X$. Note that, as the algorithm progresses, sets $X$ and $Y$ may be partitioned into smaller subsets. It is then possible that $\skipp(e)$ may grow over time, but it may never decrease.

Lastly, assume that $Y\prec X$, so $e$ is a right-to-left edge. In this case, we set $\skipp(e)$ to be the distance between the right endpoint of $I_Y$ and the left endpoint of $I_X$, and we let $\spann(e)$ be the distance between the left endpoint of $I_Y$ and the right endpoint of $I_X$. In other words, if $A_X$ and $Z_X$ are the smallest and the largest integers in $X$ respectively, and similarly $A_Y$ and $Z_Y$ are the smallest and the largest integers in $Y$ respectively, then $\skipp(e)=A_X-Z_Y$ and $\spann(e)=Z_X-A_Y$ (see \Cref{fig: skipspan}). Notice that, as the algorithm progresses, and sets $X$ and $Y$ are partitioned into smaller subsets, the value $\spann(e)$ may only decrease, while the value $\skipp(e)$ may only grow. Moreover, $\skipp(e)\leq \spann(e)$ always holds.

\begin{figure}[h]
	\begin{center}
		\scalebox{0.4}{\includegraphics{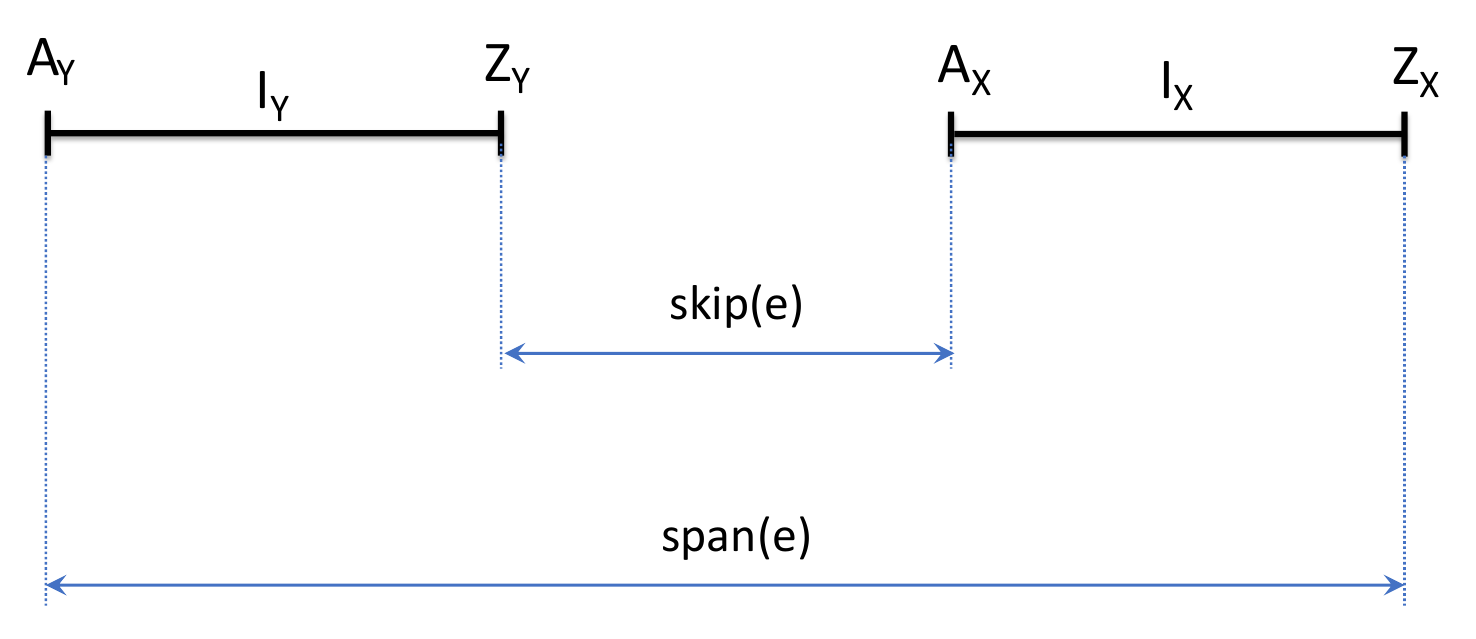}}\caption{Skip and span values of a right-to-left edge. \label{fig: skipspan}}
\end{center}
\end{figure}

For intuition, consider again the setting where $\xset$ represents the  strongly connected components of $G$, and $\rho=\rho(\sigma)$, where $\sigma$ is an ordering of the sets in $\xset$ corresponding to  a topological ordering of the vertices of $H=G_{|\xset}$. Then for any path $P$ in $G$, $\sum_{e\in E(P)}\skipp(e)\leq n$ must hold. This is because the path must traverse the sets of $\xset$ in their left-to-right order. Consider now inserting a single right-to-left edge $e=(x,y)$ into $G$, so  $x\in X$, $y\in Y$, and $Y\prec X$. In this case, path $P$ will generally traverse the sets in $\xset$  from left to right, but it may use the edge $e$ in order to ``skip'' back. It is not hard to see that $\sum_{e'\in E(P)}\skipp(e')\leq n+\skipp(e)+\spann(e)$
now holds, since, by traversing the edge $e$, path $P$ is `set back' by at most $\spann(e)$ units. In other words, path $P$ traverses the sets in $\xset$ in their left-to-right order, except when it uses the edge $e$ to ``skip'' to the left. The length of the interval it ``skips'' over is bounded by $\spann(e)$, so this skip adds at most $\skipp(e)+\spann(e)\leq 2\spann(e)$ to $\sum_{e'\in E(P)}\skipp(e')$.

Using the same intuitive reasoning, if we allow graph $G$ to contain an arbitrary number of right-to-left edges, then for any simple path $P$, $\sum_{e\in E(P)}\skipp(e)\leq n+\sum_{e\in E(G)}2\spann(e)$ holds. However, if we are given any collection $\pset$ of edge-disjoint paths in $G$, then there must be some path $P\in \pset$ with $\sum_{e\in E(P)}\skipp(e)\leq n+\frac{\sum_{e\in E(G)}2\spann(e)}{|\pset|}$, since every edge $e\in E(G)$ may contribute the value $2\spann(e)$ to at most one path in $\pset$. We will use this fact extensively.

\paragraph{From \ATO to \SSSP in DAG-like Graphs.}
Like the algorithm of \cite{SCC} for decremental \SSSP,
our algorithm relies on two main technical tools. The first tool allows us to maintain the partition $\xset$ of $V(G)$ over time. Intuitively, we will ensure that, for every set $X\in \xset$, the corresponding induced subgraph $G[X]$ of $G$ is in some sense expander-like. In particular, we will be able to support short-path queries between vertices of $X$ efficiently. When $G[X]$ is no longer an expander-like graph, we will compute a sparse cut $(A,B)$ in this graph, and we will replace $X$ with $A$ and $B$ in $\xset$ (though formally we will think about this process as splitting the cluster $X$: we add a new cluster, $A$ or $B$, to $\xset$, and then delete the vertices of this new cluster from $X$). Some of the edges connecting the sets $A$ and $B$ may become right-to-left edges with respect to the current \ATO $(\xset,\rho)$ that we maintain. We will control the number of such edges and their $\spann$ values by ensuring that the cut is $\phi$-sparse, for an appropriately chosen parameter $\phi$. 

Given a dynamically evolving \ATO $(\xset,\rho, U,\gamma)$ for graph $G$, we can now construct the corresponding contracted graph $H=G_{|\xset}$. 
For convenience, the supernodes of $H$ representing the sets $S=\set{s}$ and $T=\set{t}$ of $\xset$ are denoted by $s$ and $t$, respectively.
Notice that, as the time progresses, graph $H$ undergoes, in addition to edge-length-increase updates (that we will implement as edge-deletions using standard methods) another type of updates, that we call \emph{vertex-splitting}: whenever some set $X\in \xset$ undergoes a splitting, and a new cluster $Y\subseteq X$ is inserted into $\xset$, we need to insert a supernode corresponding to $Y$ into $H$, and then possibly delete some edges that are incident to $v_X$ from $H$.  We will say that the new supernode  $v_Y$  was \emph{split off} from the supernode $v_X$. Consider now any edge $e=(x,y)\in E(G)$, and assume that $x\in Z$ and $y\in Z'$ for some $Z,Z'\in \xset$ with $Z\neq Z'$. The length of the edge $e$ in graph $H$ remains the same: $\ell(e)$. We will also associate, with every edge $e\in E(H)$, a \emph{weight} $w(e)$, that is set to be the smallest integral power of $2$ that is at least $\skipp(e)$. As observed already, as the algorithm progresses, the value $\skipp(e)$ may grow. In order to avoid inserting edges into $H$ overtime, we will create a number of parallel edges with different $w(e)$ values corresponding to every edge of $G$ whose endpoints lie in different clusters of $\xset$. Specifically, let $e=(x,y)$ be any  edge, with $x\in X$, $y\in Y$, and $X\neq  Y$. For every integer $0\leq i\leq \ceil{\log n}$, such that $2^i\geq \skipp(e)$, we add an edge $e_i=(v_X,v_Y)$ to $H$, with length $\ell(e)$ and $w(e_i)=2^i$. It is easy to verify that this construction ensures that, for every vertex $v\in V(H)\setminus\set{t}$ and integer $0\leq i\leq \ceil{\log n}$, the number of vertices in the set $\set{u\in V(H)\mid \exists e=(v,u)\in E(H)\mbox{ and } w(e)= 2^i}$ is bounded by $2^{i+2}$. Additionally, by carefully controlling the total span of all right-to-left edges in graph $G$, we will ensure that there is always a short $s$-$t$ path $P$ in $H$, such that $\sum_{e\in E(P)}w(e)$ is relatively small. These two properties are crucial in obtaining a fast algorithm for computing short $s$-$t$ paths in $H$, as it undergoes edge-deletion and supernode-splitting updates.

The \ATO framework, combined with an algorithm for maintaining expander-like graphs, gives rise to the problem of repeately computing short $s$-$t$ paths in the contracted graph $H$, as it undergoes the updates that we have outlined above. 
In \cite{CK24} a problem call \DLSSSP was defined, that abstracts the above problem. They also provided an algorithm for it, which is essentially identical to the algorithm of \cite{AlmostDAG2}; a similar algorithm was used in \cite{SCC}.
We now provide the definition of the \DLSSSP problem from \cite{CK24}, and summarize their algorithm for it.
 
\subsection{SSSP in DAG-like Graphs}
\label{sec: SSSP in almost DAGS}

The presentation in this section is almost identical to the definition of the \DLSSSP problem from \cite{CK24}.
The input to the \DLSSSP problem is a directed multigraph $G=(V,E)$, with two designated vertices $s$ and $t$, a distance parameter $d>0$, a precision parameter $0<\eps\leq 1/8$, such that $1/\eps$ is an integer, and another parameter $\Gamma>0$. Graph $G$ may have parallel edges, but no self-loops. Every edge $e\in E$ of $G$ is associated with an integral \emph{length} $\ell(e)>0$, and an additional \emph{weight} $w(e)$, that must be an integral power of $2$ between $1$ and $2^{\ceil{\log n}}$. The graph undergoes the following two types of updates:

\begin{itemize}
	\item {\bf Edge deletion:} given an edge $e\in E$, delete $e$ from $G$; and
	
	\item {\bf Vertex splitting:} in this operation, we are given as input a vertex $v\not\in\set{s,t}$ in the current graph, and new vertices $u_1,\ldots,u_k$ that need to be inserted into $G$; we may sometimes say that vertices $u_1,\ldots,u_k$ are \emph{split-off from $v$}. In addition to vertex $v$ and vertices $u_1,\ldots,u_k$, we are given a new collection $E'$ of edges that must be inserted into $G$, and for every edge $e\in E'$, we are given an integral length $\ell(e)>0$ and a weight $w(e)$ that must be an integral power of $2$ between $1$ and $2^{\ceil{\log n}}$. For each such edge $e\in E'$, either both endpoints of $e$ lie in $\set{u_1,\ldots,u_k, v}$; or one endpoint lies in $\set{u_1,\ldots,u_k}$, and the other in $V(G)\setminus\set{v,u_1,\ldots,u_k}$. In the latter case, if $e=(u_i,x)$, then edge $e'=(v,x)$ must currently lie in $G$, and both $w(e')=w(e)$ and $\ell(e')=\ell(e)$ must hold. Similarly, if $e=(x,u_i)$, then edge $e'=(x,v)$ must currently lie in $G$, and both $w(e')=w(e)$ and $\ell(e')=\ell(e)$ must hold.
\end{itemize}

We let $\tset$ be the \emph{time horizon} of the algorithm -- that is, the time interval during which the graph $G$ undergoes all updates. 
We denote by $n$ and by $m$ the total number of vertices and edges, respectively, that were ever present in $G$ during $\tset$. We assume w.l.o.g. that for every vertex $v$ that was ever present in $G$, some edge was incident to $v$ at some time during the update sequence, so $n\leq O(m)$.

For every vertex $v\in V(G)\setminus\set{t}$ and integer $0\leq i\leq \ceil{\log n}$, we define a set:
 $$U_i(v)=\set{u\in V(G)\mid \exists e=(v,u)\in E(G)\mbox{ with } w(e)=2^i}.$$ 
 Note that as the algorithm progresses, each such set $U_i(v)$ may change, with vertices both leaving and joining the set. We require that
the following property must hold at all times $\tau\in \tset$:

\begin{properties}{S}
	\item For every vertex $v\in V(G)\setminus\set{t}$, for every integer $0\leq i\leq \ceil{\log n}$, $|U_i(v)|\leq 2^{i+2}$.  \label{prop: few close neighbors}	
\end{properties}

As graph $G$ undergoes edge-deletion and vertex-splitting updates, at any time the algorithm may be asked a query $\pquery$. It is then required to return a {\bf simple} path $P$ connecting $s$ to $t$, whose length $\sum_{e\in E(P)}\ell(e)$ is bounded by $(1+10\eps)d$, in time $O(|E(P)|)$. 

Lastly, the algorithm may return FAIL at any time and halt. It may only do so if the current graph $G$ does not contain any $s$-$t$ path $P$  with $\sum_{e\in E(P)}\ell(e)\leq d$, and $\sum_{e\in E(P)}w(e)\leq \Gamma$.

The following theorem provides an algorithm for the \DLSSSP problem. Its formal proof, presented in \cite{CK24}, follows almost immediately from the results of \cite{AlmostDAG2}. Specifically, Theorem 5.1 of \cite{AlmostDAG2} provides a similar result with a slightly different setting of parameters, but their arguments can be easily extended to prove the theorem below, as was shown in \cite{CK24}. 

\begin{theorem}[Theorem 5.1 in the full version of \cite{CK24}]\label{thm: almost DAG routing}
	There is a deterministic algorithm for the \DLSSSP problem with total update time $\tilde O\left(\frac{n^2+m+\Gamma\cdot n}{\eps^2}\right )$; here, $m$ and $n$ are the total number of edges and vertices, respectively, that were ever present in $G$.
\end{theorem}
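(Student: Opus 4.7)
The plan is to adapt the framework of \cite{AlmostDAG2} (building on \cite{Bernstein, gutenberg2020decremental}) for approximate decremental \SSSP on ``almost DAG'' graphs to our slightly different parameter setting. The core data structure maintains, for every vertex $v$, an approximate distance estimate $\tilde d(s,v)$ that only increases over time, together with parent pointers supporting path reconstruction in time $O(|E(P)|)$. Distances are discretized into $O(\eps^{-1}\log (n/\eps))$ buckets whose widths grow geometrically; a vertex's state is recomputed only when it crosses a bucket boundary, which bounds the total recomputation work per vertex over the entire time horizon.

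The key algorithmic idea is a weight-stratified monotone ES-tree that processes each weight class $i\in\{0,1,\ldots,\ceil{\log n}\}$ separately. Property \ref{prop: few close neighbors} guarantees that every vertex has at most $2^{i+2}$ outgoing edges at weight level $i$, so a Dijkstra-style relaxation along class-$i$ edges costs $O(2^i)$ per rescan of a vertex. Summed over weight classes this gives $O(n)$ per rescan, and summed over the $\tilde O(1/\eps)$ bucket crossings each vertex can undergo (monotonicity of distances) we obtain the $\tilde O(n^2/\eps^2)$ term. The additive $m/\eps^2$ term covers initial ingestion of the edge set and per-deletion housekeeping. Vertex splittings are handled essentially for free: each vertex $u_j$ split-off from $v$ inherits $\tilde d(s,v)$ together with all edges to $V(G)\setminus\{v,u_1,\ldots,u_k\}$, so the only genuinely new work concerns edges internal to $\{v,u_1,\ldots,u_k\}$, which is absorbed by the per-vertex budget.

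The $\Gamma\cdot n/\eps^2$ term is obtained by a charging argument against the weight budget. Heavy-weight edges behave like ``long backward jumps'' in the implicit approximate topological order and can force additional distance propagation not already paid for by the $n^2$ term. The required guarantee is that the algorithm may return FAIL whenever no $s$-$t$ path of length $\leq d$ and weight $\leq \Gamma$ exists, so while the algorithm is running successfully there always exists a certifying $s$-$t$ path $P$ with $\sum_{e\in E(P)}w(e)\leq \Gamma$. A standard amortization (as in Section~5 of \cite{AlmostDAG2}) shows that every unit of weight along such a path can be charged $\tilde O(n/\eps^2)$ work across the entire update sequence, producing the $\Gamma\cdot n/\eps^2$ term. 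Query answering is immediate from the maintained approximate shortest-path tree: follow parent pointers from $t$ back to $s$, in time $O(|E(P)|)$, with the length guarantee $(1+10\eps)d$ following from a telescoping argument on the bucket widths and the stretch incurred at each weight class.

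The main obstacle is verifying that the analysis of \cite{AlmostDAG2}, phrased in terms of a ``backward width'' parameter for an almost-DAG, carries through in our weighted formulation, where the width analog is replaced by the path-weight bound $\Gamma$ together with the degree cap \ref{prop: few close neighbors}. The crucial observation that makes this go through transparently is that \ref{prop: few close neighbors} is \emph{precisely} the structural assumption used in \cite{AlmostDAG2}: heavier weight classes contain exponentially fewer edges per vertex, so the per-vertex work at weight $2^i$ is offset by $|U_i(v)|\leq 2^{i+2}$, yielding the same termwise bounds after only constant-factor adjustments to the approximation parameter $\eps$. The only real bookkeeping needed is to verify that vertex splittings interact correctly with the bucket structure and weight classes; since a split preserves both $\ell$ and $w$ on inherited edges by the problem's update contract, no invariants are broken and the proof of \cite{AlmostDAG2} applies with essentially no change.
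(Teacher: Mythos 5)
The paper never proves this theorem itself: it is imported from the full version of \cite{CK24}, whose proof is precisely the adaptation of Theorem 5.1 of \cite{AlmostDAG2} that you sketch — a monotone, rounded ES-tree stratified by weight classes, with Property~\ref{prop: few close neighbors} capping the per-class out-degree and the weight budget $\Gamma$ of the certifying path controlling both the accumulated rounding error and the number of per-class rescans. So your proposal takes essentially the same route as the paper's (cited) proof; the one detail you defer rather than re-derive — coupling the rounding granularity to each edge's weight class so that the $(1+10\eps)d$ stretch and the $\Gamma\cdot n/\eps^2$ term come out simultaneously (a uniform geometric bucketing alone would not give the stretch over long-hop paths) — is exactly the content supplied by the cited works.
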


%-------------------------
%-------------------------
%-------------------------
%-------------------------
%-------------------------

\section{High-Level Overview of the Algorithm}
\label{sec: high level}

In this section we provide the definitions of well-structured graphs and well-structured cuts, that are used throughout the paper. We also define a new problem, called \routeandcut, which is one of the main building blocks of our algorithm. We then state the main theorem for this section, that provides an efficient algorithm for the \routeandcut problem, and we show that our main result -- the proof of \Cref{thm:main} follows from it. The proof of the theorem that provides an algorithm for the \routeandcut problem is by induction, and is provided in the remainder of this paper. In this section we also provide a simple and somewhat inefficient algorithm for the \routeandcut problem, that will be used as the induction base. We also define an intermediate problem, called $r$-Restricted \stSP problem, and show that an algorithm for this problem implies an algorithm for the \routeandcut problem. We start by defining well-structured graphs and well-structured cuts.

\subsection{Well-Structured Graphs and Well-Structured Cuts}

Throughout the paper we will work with special kinds of directed graphs, that for convenience we call \emph{well-structured graphs}, and define below.

\begin{definition}[Well-Structured Graphs]\label{def: well structured graphs} Let $G=(L,R,E)$ be a bipartite directed graph. We call the edges of $E_G(L,R)$ \emph{regular} and the edges of $E_G(R,L)$ \emph{special}. We say that $G$ is a \emph{well-structured graph}, if every vertex of $G$ is incident to at most one special edge. If every vertex of $G$ is incident to {\bf exactly} one special edge, then we say that $G$ is a \emph{perfect well-structured graph}.
\end{definition}

If $G$ is a well-structured graph, then we assume that the partition $(L,R)$ of its vertices is given as part of the description of $G$.
We will sometimes consider well-structured graphs with lengths on their edges, and we will sometimes restrict our attention to assignments of lengths to edges that have special properties, that we define below. We call such assignments of lengths to edges \emph{proper}.

\begin{definition}[Proper Assignment of Edge Lengths]
	Let $G=(L,R,E)$ be a well-structured directed graph, and let $\ell(e)\geq 0$ be an assignment of lengths to the edges $e\in E$. We say that $\set{\ell(e)}_{e\in E}$ is a \emph{proper} assignment of edge lengths, if, for every regular edge $e\in E$, $\ell(e)=0$, and for every special edge $e\in E$, $\ell(e)$ is an integral power of $2$ with $\ell(e)\geq 1$. If, for every special edge $e\in E$, $\ell(e)=1$ holds, then we say that $\set{\ell(e)}_{e\in E}$ is a \emph{proper uniform assignment of edge lengths}.
\end{definition}

Recall that a \emph{cut} in a graph $G$ is an ordered pair $(A,B)$ of non-empty subsets of $V(G)$ with $A\cap B=\emptyset$ and $A\cup B=V$.
We will often use the notion of a well-structured cut, that is defined next.

\begin{definition}[Well-Structured Cut]
	Let $G=(L,R,E)$ be a well-structured graph, and let $(A,B)$ be a cut in $G$. We say that $(A,B)$ is a \emph{weakly well-structured cut}, or just a \emph{well-structured cut}, if all edges of $E_G(A,B)$ are special edges. If, additionally, all edges in $E_G(B,A)$ are regular edges, then we say that $(A,B)$ is a \emph{strongly} well-structured cut.
\end{definition}

\subsection{The RouteAndCut problem}

In this subsection we define the \routeandcut problem, which is one of the main building blocks of our algorithm. Before we do so, we need to define the notion of \emph{routing}.

\begin{definition}[Routing]
	Let $G=(V,E)$ be a directed graph, and let $A,B$ be two disjoint subsets of its vertices. A \emph{partial routing}, or just a \emph{routing}, from $A$ to $B$ is a collection $\qset$ of paths in $G$, such that every path in $\qset$ connects a vertex of $A$ to a vertex of $B$, and the endpoints of the paths in $\qset$ are all disjoint. The \emph{congestion} of the routing is the edge-congestion that the set $\qset$ of paths causes in graph $G$. 
Vertices of $A\cup B$ may serve both as endpoints and as inner vertices of the paths in $\qset$ simultaneously. A partial routing of cardinality $|A|$ is called a \emph{complete} routing. 
	\end{definition}

We are now ready to define the \routeandcut problem, and its special case, called  $r$-restricted \routeandcut. Intuitively, we use the notion of the $r$-restricted \routeandcut problem in order to discretize the problem instances: our algorithm for \routeandcut  will consider, by induction, $r$-restricted instances of the problem, from smaller to larger values of $r$.

\begin{definition}[\routeandcut problem]\label{def: rounte and cut}
The input to the \routeandcut problem is a well-structured $n$-vertex graph $G=(L,R,E)$, that is given in the adjacency-list representation, 
parameters $N\geq n$, $\Delta\geq 1$, and $1\leq \eta \leq \Delta$, and two disjoint subsets $A,B$ of vertices of $G$, with $|A|,|B|\geq \Delta$. 

The output of the problem is a (possibly partial) routing $\qset$ from $A$ to $B$, whose congestion is bounded by  $4\eta\log N$. Additionally, if $|\qset|<\Delta$, the output must contain a cut  $(X,Y)$ in $G$ with $|E_G(X,Y)|\leq \frac{64\Delta}{ \eta\log^4n}+\frac{256|\qset|}{\eta}$, such that, if 
$A'\subseteq A, B'\subseteq B$ denote the subsets of vertices that do not serve as endpoints of the paths in $\qset$, then $A'\subseteq X$ and $B'\subseteq Y$ hold. The algorithm for the \routeandcut problem is also allowed to return ``FAIL'' without producing any output, but the probability of the algorithm doing so must be bounded by $1/2$.

We say that an instance $(G,A,B,\Delta,\eta,N)$ of the \routeandcut problem is  \emph{$r$-restricted}, for an integer $1\leq r\leq \ceil{\sqrt{\log N}}$, iff $\frac{(n-|B|)\cdot \eta}{\Delta}\leq 2^{r\cdot \sqrt{\log N}}$ holds.
\end{definition}

We provide a brief intuition for the parameters $\Delta$ and $N$ in the definition of the \routeandcut problem, and for the definition of $r$-restricted instances. Consider an arbitrary well-structured $n$-vertex graph $G$, two disjoint sets $A,B$ of its vertices and a parameter $\eta$, and assume that we need to compute a large routing from $A$ to $B$ in $G$, with congestion at most $\eta$. Typically, we will set $\Delta$ to be the cardinality of the desired routing (and if it is not known, we can set $\Delta=\min\set{|A|,|B|}$). Initially, we will also set $N=n$, and consider the resulting instance $(G,A,B,\Delta,\eta,N)$ of the \routeandcut problem. 
Now suppose we compute a routing from $A$ to $B$ in graph $G$ of cardinality $\Delta$ that causes edge-congestion at most $\eta$. Assume w.l.o.g. that, if $b\in B$ is an inner vertex on some path of the routing, then it serves as an endpoint of some other path of the routing.  Then it is not hard to show that most of the paths in the routing must have length at most $d=O\left(\frac{(n-|B|)\cdot \eta}{\Delta}\right )$.
The definition or $r$-restricted instances requires that this parameter $d$ is roughly bounded by $2^{r\sqrt{\log N}}$.
It is easy to see that our starting instance is $r^*$-restricted, for $r^*=\ceil{\sqrt{\log N}}$. In order to solve this problem instance, we will need to solve the problem recursively on smaller subgraphs $G'$ of $G$, but it is important for us to ensure that the resulting instances are $r'$-restricted, for $r'<r^*$. In order to do so, we will let the parameter $N$ in the resulting subinstances correspond to the number of vertices in the original graph $G$ (we may need to slightly adjust it for technical reasons but the adjustments are minor). By appropriately setting the parameters $\Delta$ and $\eta$ for the resulting subinstances, we can then ensure that they are indeed $r'$-restricted, for some $r'<r^*$. Overall, the parameter $N$ can be thought of as roughly the number of vertices in the original instance of the \routeandcut problem that we try to solve, and it is used mostly to define the notion of $r$-restricted instances. The notion of $r$-restricted instances allows us to construct algorithms for the \routeandcut problem inductively, by going from smaller to larger values of $r$. 

The following theorem provides one of our main technical results, namely an efficient algorithm for the \routeandcut problem.

\begin{theorem}\label{thm: main for route and cut}
	There is a randomized algorithm for the \routeandcut problem, that, on an input $(G,A,B,\Delta,\eta, N)$, %either produces a valid output to the problem, or returns ``FAIL''. The probability that the algorithm returns ``FAIL'' is at most $1/2$, and its running time is: 
	has running time at most 	$O\left(n\cdot (n-|B|)\cdot 2^{O(\sqrt{\log N}\cdot \log \log N)}\right )$, where $n=|V(G)|$.
\end{theorem}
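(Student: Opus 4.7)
The plan is to prove Theorem~\ref{thm: main for route and cut} by induction on the restriction parameter $r$. Concretely, for each $1 \le r \le \lceil\sqrt{\log N}\rceil$, I would establish a randomized algorithm $\alg_r$ for the $r$-restricted \routeandcut problem with running time $n \cdot (n-|B|) \cdot 2^{O(r \cdot \log\log N)}$. Since an arbitrary instance $(G,A,B,\Delta,\eta,N)$ is $r^*$-restricted for $r^* = \lceil\sqrt{\log N}\rceil$ (as the path-length parameter $(n-|B|)\eta/\Delta$ is trivially at most $n \le N \le 2^{\sqrt{\log N} \cdot \sqrt{\log N}}$), plugging $r = r^*$ into the inductive bound immediately yields the claimed $n(n-|B|) \cdot 2^{O(\sqrt{\log N} \cdot \log\log N)}$ running time.

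For the base case $r=1$, the $1$-restricted assumption forces $(n-|B|)\eta/\Delta \le 2^{\sqrt{\log N}} = N^{o(1)}$, so a routing of cardinality $\Delta$ with congestion $\eta$ must consist mostly of paths of length $N^{o(1)}$. I would solve this case via a direct application of \MWU, reducing to a short-horizon decremental \stSP problem that I can maintain using a small number of weighted Even-Shiloach trees from Theorem~\ref{thm: directed weighted ESTree}, each of depth $N^{o(1)}$. Running the trees only over the subgraph induced by $V(G) \setminus B'$ plus the relevant boundary vertices (where $B'$ is a large subset of $B$), together with the ball-growing procedure of Observation~\ref{obs: ball growing} to extract the required cut $(X,Y)$ when the routing falls short, gives the desired running time bound $n(n-|B|) \cdot N^{o(1)}$.

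For the inductive step ($r \to r+1$), I would follow the two-sided reduction scheme outlined in the paper's overview: first, use $\alg_r$ (the $r$-restricted \routeandcut algorithm) to solve the $r$-restricted \maintaincluster problem (Section~\ref{sec: expander tools}, later formalized in Section~\ref{sec: alg for maintaincluster from routeandcut}), by using the Cut-Matching game to embed an expander and then employing the \connecttocenters subroutine (Section~\ref{sec: layered connectivity DS}) to maintain short root-paths inside each cluster; the layered reconnection scheme described in the overview is what allows the running time of \connecttocenters to inherit the $n \cdot (n-|B|)$ bound of \routeandcut. Second, I combine this \maintaincluster algorithm with the \ATO framework of Section~\ref{sec: ATO} and the \DLSSSP algorithm of Theorem~\ref{thm: almost DAG routing} to obtain an algorithm for $(r+1)$-restricted decremental \stSP, and then apply \MWU (with the modification that it need only operate while $\Omega(\Delta/\polylog n)$ disjoint short $s$-$t$ paths still exist) to recover $\alg_{r+1}$. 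The well-structured graph structure guarantees that edge-congestion equals vertex-congestion, and sparse \maintaincluster cuts bound the total $\spann$ of right-to-left edges in the \ATO so that the $\Gamma \cdot n$ term in Theorem~\ref{thm: almost DAG routing} stays small.

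The main obstacle is ensuring that each inductive step blows up the running time by only a factor of $2^{O(\log\log N)}$, rather than by a constant or polylogarithmic factor per step. This requires careful accounting: the congestion parameter $\eta$ grows by a $\log N$ factor with each \MWU-style reduction, the distance parameter $d$ grows by a $2^{\sqrt{\log N}}$ factor as we move from $r$- to $(r+1)$-restricted instances, and the \connecttocenters data structure introduces $O(\log n)$ layers whose reconstruction costs must telescope properly. The technically hardest component is the $n \cdot (n-|B|)$ running-time guarantee for \connecttocenters, which forces us to feed the graph in the adjacency-list representation of Section~\ref{subsec: adj-list representation} and to ensure that the algorithm for \routeandcut used inside \connecttocenters only touches vertices outside the current ``centers'' set $B$; verifying this invariant through the recursion, and showing that the congestion parameter $\eta$ can indeed be an arbitrary value in $[1,\Delta]$ (not just $\polylog n$), is where most of the technical work in Sections~\ref{sec: SSSP alg}--\ref{sec: alg for maintaincluster from routeandcut} will concentrate.
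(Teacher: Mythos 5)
Your proposal follows essentially the same route as the paper: parameterize by the restriction level $r$, solve the base case via the modified \MWU reduction to \stSP maintained with weighted Even-Shiloach trees, and induct via the two-sided reductions between $r$-restricted \routeandcut and $r$-restricted \maintaincluster (through \connecttocenters, the \ATO framework, and the \DLSSSP algorithm), exactly as in Theorems~\ref{thm: main for i-restricted route and cut}, \ref{thm: from maintaincluster to roundandcut} and \ref{thm: from routeandcut to maintaincluster}. The only cosmetic discrepancy is your stated per-level bound $2^{O(r\log\log N)}$, which should be of the form $2^{O(\sqrt{\log N})}\cdot(\log N)^{O(r)}$ (the base case already costs a $2^{\sqrt{\log N}}$ factor from the ES-tree depth $\Lambda$), but this does not affect the final bound since at $r^*=\lceil\sqrt{\log N}\rceil$ both expressions are $2^{O(\sqrt{\log N}\cdot\log\log N)}$.
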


The remainder of this paper is largely dedicated to the proof of \Cref{thm: main for route and cut}, but first we show that the proof of \Cref{thm:main} follows from it.

\subsection{Completing the Proof of \Cref{thm:main}}
\label{subsec: completing matching theorem}

Recall that in the \MBM problem, the input is a simple undirected $n$-vertex bipartite graph $G=(L,R,E)$, and the goal is to compute a matching of maximum cardinality in $G$.

We can assume w.l.o.g. that we are given a target integral value $C^*>0$. If $C^*\leq \opt$, then our algorithm must either produce a matching of cardinality $C^*$, or terminate with a ``FAIL'', but we require that the algorithm terminates with a ``FAIL'' with probability at most $1/\poly(n)$ in this case. If $C^*>\opt$, then our algorithm may return an arbitrary matching, or terminate with a ``FAIL''.

We can make this assumption w.l.o.g. since we can use a standard technique of performing a binary search over the value $\opt$ of the optimal solution. 
We use bounds $W<W'$ on the value of the optimal solution, initializing $W=1$ and $W'=n$, and we start with $C^*=\floor{n/2}$.
 Given the current guess $C^*$ on $\opt$, we run the algorithm with this target value $C^*$. If the algorithm successfully computes a matching of cardinality at least $C^*$, then we set $W=C^*$, and
 otherwise, we set $W'=C^*$. In either case, we
  select the next value $C^*=\floor{\frac{W'-W}{2}}$ and continue. Once $W'- W\leq 1$ holds, we run the algorithm once with $C^*=W'$ and once with $C^*=W$, and then output the better of the two solutions.
We say that an algorithm \emph{errs} in an iteration, if it returns ``FAIL'', but $C^*\leq \opt$ holds in that iterations. Notice that, if the algorithm never errs, then after $O(\log n)$ iterations it is guaranteed to terminate with an optimal matching. Since the probability that the algorithm errs in a single iteration is bounded by $1/\poly(n)$, and the number of iterations is bounded by $O(\log n)$, the algorithm is guaranteed to return an optimal matching with probability at least $1-1/\poly(n)$.

From now on, we assume that we are given a target value $C^*$, and that $G$ contains a matching of cardinality $C^*$. Our algorithm must either compute a matching of cardinality $C^*$, or to return ``FAIL'', but the probability for returning ``FAIL'' must be bounded by $1/\poly(n)$. For convenience, we denote $C^*$ by $\opt$. 
 We can assume that $n$ is greater than a large enough constant, since otherwise the problem can be solved in $O(1)$ time.

It is well known that the \MBM problem can be reduced to computing maximum $s$-$t$ flow in a directed flow network with unit edge capacities. In order to do so, we start with the graph $G$, and direct all its edges from the vertices of $L$ towards the vertices of $R$. We then add a source vertex $s$, that connects with an edge to every vertex of $L$, and a destination vertex $t$, to which every vertex of $R$ is connected. All edge capacities are set to $1$. Let $G'$ denote the resulting directed flow network. It is easy to verify that the value of the maximum $s$-$t$ flow in $G'$ is equal to the cardinality of maximum matching in $G$. Moreover, given an integral flow $f$ in $G'$, we can compute a matching $M$ of cardinality $\operatorname{val}(f)$ in $G$, in time $O(n)$: we simply include in $M$ all edges of $G$ that carry $1$ flow unit.

Our algorithm consists of $O(\log^3 n)$ phases. Throughout the algorithm, we maintain a matching $M$ in $G$, starting with $M=\emptyset$, and we denote $\Delta^*=\opt-|M|$. If $M$ is the matching at the beginning of a phase, and $M'$ is the matching obtained at the end of the phase, then we require that $|M'|\geq |M|+\Omega(\Delta^*/\log^2 n)$, where $\Delta^*=\opt-|M|$. We show a combinatorial algorithm, that, given an initial matching $M$, either returns ``FAIL'', or computes such a matching $M'$, in time $O\left (n^{2}\cdot 2^{O(\sqrt{\log n}\log\log n)}\right )$, where the probability that the algorithm returns ``FAIL'' is bounded by $1/\poly(n)$. This will ensure that the number of the phases is bounded by $O(\log^3 n)$, and the total running time of the algorithm is $O\left (n^{2}\cdot 2^{O(\sqrt{\log n}\log\log n)}\right )$.
From now on we focus on the description of a single phase.

\paragraph{Implementation of a Single Phase.}
 We assume that we are given some matching $M$ in the input graph $G$, and we denote $\Delta^*=\opt-|M|$. Our goal is to compute a matching $M'$ of cardinality at least $|M|+\Omega(\Delta^*/\log^2 n)$.
As observed already, matching $M$ defines a flow $f$ of value $|M|$ in the directed flow network $G'$ with unit edge capacities. We denote by $H=G'_f$ the corresponding residual flow network, and we say that $H$ is the \emph{residual flow network of $G$ with respect to matching $M$}. Observe that $H$ is a directed flow network with unit edge capacities, and that the value of the maximum $s$-$t$ flow in $H$ is at least $\opt-|M|=\Delta^*$. 

Next, we will define an instance of the \routeandcut problem. Consider first the graph $H'=H\setminus\set{s,t}$. Notice that, for every edge $e=(u,v)$ of $G$ with $u\in L$ and $v\in R$, if $e\in M$, then edge $(v,u)$ is present in $H'$, and otherwise edge $(u,v)$ is present in $H'$. Therefore, all edges of $H'$ that are directed from vertices of $R$ to vertices of $L$ correspond to the edges of the current matching $M$. Clearly, every vertex of $H'$ may be incident to at most one edge of $E_{H'}(R,L)$, and so graph $H'$ is a well-structured graph.

We let $A\subseteq L$ be the set of vertices $v$, such that edge $(s,v)$ is present in graph $H$, and we let $B\subseteq R$ be the set of vertices $u$, such that edge $(u,t)$ is present in $H$. We also define parameters $\eta=1$, $N=n$, and $\Delta=\Delta^*$. It is easy to verify that $(H',A,B,\Delta,\eta,N)$ is a valid input to the \routeandcut problem, and we can compute an adjacency-list representation of $H'$ in time $O(|E(G)|)$. 
Recall that there is an $s$-$t$ flow $f$ of value $\Delta^*$ in graph $H$, and, from the integrality of maximum flow in integer-capacity networks, we can assume that this flow is integral. This flow naturally defines a routing $\qset^*$ from $A$ to $B$, that causes congestion $\eta=1$, with $|\qset^*|=\Delta^*$. 

We apply the algorithm from \Cref{thm: main for route and cut} to instance $(H',A,B,\Delta,\eta,N)$ of the \routeandcut problem. 
As long as the algorithm returns ``FAIL'', we keep executing it, for up to $\ceil{100\log n}$ iterations. If the algorithm from \Cref{thm: main for route and cut}  returned ``FAIL'' in all $\ceil{100\log n}$ consecutive iterations, we terminate our algorithm and return ``FAIL''. It is easy to verify that this may happen with probability at most $1/n^{100}$. Otherwise, the algorithm from  \Cref{thm: main for route and cut}  must return a routing $\qset$ from $A$ to $B$ with congestion at most $4\log n$. We show that $|\qset|\ge \frac{\Delta^*}{\log n}$ must hold in the next observation.

\begin{observation}\label{obs: arge flow}
	If $\qset$ is the routing returned by the algorithm from \Cref{thm: main for route and cut}, then $|\qset|\geq \frac{\Delta^*}{\log n}$.
\end{observation}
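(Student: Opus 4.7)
The plan is to case-split on the algorithm's output. If the algorithm from \Cref{thm: main for route and cut} returns a routing with $|\qset|\ge \Delta=\Delta^*$, we are immediately done. So I would assume $|\qset|<\Delta^*$, in which case the output also contains a cut $(X,Y)$ in $H'$ with $|E_{H'}(X,Y)|\le \frac{64\Delta^*}{\log^4 n}+256|\qset|$ (using $\eta=1$, $\Delta=\Delta^*$), and with $A'\subseteq X$, $B'\subseteq Y$, where $A'\subseteq A$ and $B'\subseteq B$ are the subsets of vertices not used as endpoints of paths in $\qset$.

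The key step is to lower-bound $|E_{H'}(X,Y)|$ by exploiting the existence of a large $A$-to-$B$ routing in $H'$. By construction, $H$ admits an integral $s$-$t$ flow of value $\Delta^*$, which decomposes into $\Delta^*$ edge-disjoint $s$-$t$ paths in $H$. Since every edge $(s,v)$ and every edge $(u,t)$ of $H$ has unit capacity, these paths use $\Delta^*$ pairwise distinct edges out of $s$ and $\Delta^*$ pairwise distinct edges into $t$. Stripping off the $s$ and $t$ endpoints yields $\Delta^*$ edge-disjoint paths in $H'$ whose starting vertices in $A$ are distinct and whose ending vertices in $B$ are distinct.

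Now at most $|A\setminus A'|=|\qset|$ of these paths start outside $A'$, and at most $|B\setminus B'|=|\qset|$ of them end outside $B'$. Hence at least $\Delta^*-2|\qset|$ of the paths start in $A'\subseteq X$ and end in $B'\subseteq Y$; each such path must use at least one edge of $E_{H'}(X,Y)$, and since the paths are edge-disjoint, these contributions are distinct, yielding $|E_{H'}(X,Y)|\ge \Delta^*-2|\qset|$. Combining with the cut upper bound and rearranging gives $\Delta^*\!\left(1-\frac{64}{\log^4 n}\right)\le 258\,|\qset|$, and therefore $|\qset|\ge \Delta^*/\log n$ for $n$ larger than an absolute constant, which is an assumption we are free to make.

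The main thing to be careful about is the claim that the $\Delta^*$ edge-disjoint $s$-$t$ paths in $H$ induce $\Delta^*$ paths in $H'$ whose starts in $A$ and ends in $B$ are all distinct, because the cut-crossing argument needs this distinctness in order to charge each ``missed'' path to a distinct endpoint in $A\setminus A'$ or $B\setminus B'$. This distinctness rests squarely on the unit edge capacities of the original flow network $G'$ and on the fact that $A$ and $B$ are exactly the sets of out-neighbors of $s$ and in-neighbors of $t$ in $H$, so that any flow decomposition into edge-disjoint paths automatically uses each such auxiliary edge at most once.
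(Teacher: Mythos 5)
Your proposal is correct and follows essentially the same route as the paper: both invoke the routing $\qset^*$ of cardinality $\Delta^*$ with congestion $1$ obtained from the integral maximum flow, observe that all but at most $2|\qset|$ of its paths go from $A'\subseteq X$ to $B'\subseteq Y$ and hence each contributes a distinct edge to $E_{H'}(X,Y)$, and compare this lower bound with the cut guarantee $\frac{64\Delta^*}{\log^4 n}+256|\qset|$. The paper phrases it as a contradiction with the assumption $|\qset|<\frac{\Delta^*}{\log n}$, while you solve the inequality directly (using that $n$ exceeds a large constant, which the paper also assumes), but this is only a cosmetic difference.
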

\begin{proof}
	Let 	$A'\subseteq A, B'\subseteq B$ denote the subsets of vertices that do not serve as endpoints of the paths in $\qset$.
	Assume for contradiction that $|\qset|<\frac{\Delta^*}{\log n}$.
	
	Recall that the algorithm from \Cref{thm: main for route and cut} must also return a cut  $(X,Y)$ in $H'$ with $|E_{H'}(X,Y)|\leq \frac{64\Delta^*}{ \log^4n}+256|\qset|\leq \frac{512\Delta^*}{\log n}$, such that $A'\subseteq X$ and $B'\subseteq Y$. Also, recall that we have established that there is a routing $\qset^*$ from $A$ to $B$ in $H'$ of cardinality at least $\Delta^*$, that causes edge-congestion $1$. Since $|\qset|<\frac{\Delta^*}{\log n}$, at least $\frac{\Delta^*}{2}$ of the paths in $\qset^*$ must originate at vertices of $A'$ and terminate at vertices of $B'$. Each such path must contain an edge of $E_{H'}(X,Y)$, so $|E_{H'}(X,Y)|\geq \frac{\Delta^*}{2}$ must hold, a contradiction.
\end{proof}

  Notice that the paths in $\qset$ naturally define a collection $\qset'$ of at least $\frac{\Delta^*}{\log n}$ $s$-$t$ paths in graph $H$ (the  residual flow network with respect to $G$ and the current matching $M$), and they cause congestion at most $4\log n$ (since the endpoints of the paths in $\qset$ are disjoint).
 Next, we show an algorithm that computes a collection $\qset''$ of $\Omega\left(\frac{\Delta^*}{\log^2 n}\right )$ edge-disjoint $s$-$t$ paths in graph $H$. We will then use the paths in $\qset''$ in order to augment the current flow $f$ in graph $G'$, which, in turn, will allow us to compute the new augmented matching $M$.
 
In order to compute the collection $\qset''$ of paths,   
we construct a directed graph $H''\subseteq H$, that consists of all vertices and edges that participate in the paths of $\qset'$.
The capacity of every edge in $H''$ remains unit -- the same as its capacity in $H$. Next, we show that $|E(H'')|\leq O(n\log n)$, and that there is an $s$-$t$ flow of value at least $\frac{\Delta^*}{4\log^2 n}$ in $H''$.

\begin{observation}\label{obs: inner flow}
	 $|E(H'')|\leq O(n\log n)$, and there is an $s$-$t$ flow of value at least $\frac{\Delta^*}{4\log^2 n}$ in $H''$.
\end{observation}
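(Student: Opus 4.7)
The plan is to prove the two claims separately via elementary counting arguments that exploit the rigid degree structure of the residual network $H$. The main conceptual step is identifying the correct structural observation; once it is in place, both bounds drop out immediately, so there is no significant technical obstacle.

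For the edge bound, I would first invoke the structure of $H$ recalled earlier: every vertex $l\in L$ has in-degree exactly $1$ in $H$ (either from $s$, if $l$ is unmatched, or from the matching partner of $l$ in $R$), and symmetrically every vertex $r\in R$ has out-degree exactly $1$ in $H$ (either to $t$, or to its matching partner in $L$). Now consider any $l\in L\cap V(H'')$. Since all paths of $\qset'$ visiting $l$ must enter $l$ along its unique incoming edge, and the paths in $\qset'$ cause edge-congestion at most $4\log n$, at most $4\log n$ paths of $\qset'$ pass through $l$, so $l$ can contribute at most $4\log n$ outgoing edges to $H''$. By the symmetric argument, each $r\in R\cap V(H'')$ contributes at most $1$ outgoing edge to $H''$. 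Next, $s$ contributes at most $|\qset'|\le n$ outgoing edges to $H''$, because the starting vertices of the paths of $\qset$ in $A$ are pairwise distinct by the definition of a routing; the vertex $t$ has no outgoing edges. Summing the outgoing edges over all vertices yields
\[
|E(H'')|\;\le\;|\qset'|\;+\;4|L|\log n\;+\;|R|\;=\;O(n\log n),
\]
as desired.

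For the flow bound, I would simply scale the path collection $\qset'$ into a fractional $s$-$t$ flow: route $\tfrac{1}{4\log n}$ units of flow along each path in $\qset'$. Because the congestion of $\qset'$ in $H$ (and hence in $H''$) is at most $4\log n$, every edge carries at most $1$ unit of flow, so the unit capacities in $H''$ are respected. The value of this flow is $|\qset'|/(4\log n)\ge (\Delta^*/\log n)/(4\log n)=\Delta^*/(4\log^2 n)$. Since the edge capacities of $H''$ are integral, by the integrality theorem for maximum flow in unit-capacity networks, $H''$ admits an integer $s$-$t$ flow of value at least $\Delta^*/(4\log^2 n)$, completing the proof.
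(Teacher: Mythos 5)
Your proof is correct and follows essentially the same route as the paper: both arguments bound the number of paths of $\qset'$ passing through each vertex by combining the $4\log n$ congestion bound with the unit in-/out-degree structure of the residual network to get $|E(H'')|\leq O(n\log n)$, and both obtain the flow by routing $\frac{1}{4\log n}$ units along every path of $\qset'$, giving value $|\qset'|/(4\log n)\geq \Delta^*/(4\log^2 n)$. The only cosmetic differences are that the paper uses the slightly looser per-vertex bound of $8\log n$ paths (accounting for either the edge from $s$ or the incident special edge entering a vertex of $L$), and your closing integrality remark is not needed, since the observation only asserts existence of a (possibly fractional) flow.
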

\begin{proof}
	We start by showing that every vertex $v\in V(H'')\setminus\set{s,t}$ participates in at most $8\log n$ paths in $\qset'$.  Indeed, consider any vertex $v\in V(H'')\setminus\set{s,t}$, and let $Q\in \qset'$ be a path in which vertex $v$ participates. Assume first that $v\in L$. Then either edge $(s,v)$ belongs to $H$ and lies on path $Q$, or there is a special edge $(u,v)$ incident to $v$ in $H$, that lies on path $Q$. Since the paths in $\qset'$ cause congestion at most $4\log n$ in $H$, there may be at most $8\log n$ paths in $\qset'$ that contain either of the edges $(s,v)$ or $(u,v)$, and so $v$ participates in at most $8\log n$ paths in $\qset'$. Using a similar argument, every vertex $u\in R$ participates in at most $8\log n$ paths in $\qset'$. Therefore, $|E(H'')|\leq O(n\log n)$.
	
	Consider now a flow in $H''$, where we send $\frac{1}{4\log n}$ flow units along every path in $\qset'$. Since the paths in $\qset'$ cause congestion at most $4\log n$, this is a valid $s$-$t$ flow in $H''$, and the value of this flow is $\frac{|\qset'|}{4\log n}\geq \frac{\Delta^*}{4\log^2 n}$.
\end{proof}

Next, we compute an integral maximum $s$-$t$ flow in $H''$ that obeys the edge capacities in $H''$, using the standard Ford-Fulkerson algorithm. We start by setting the flow $f'(e)$ on every edge $e\in E(H')$ to $0$, and then perform iterations. In every iteration, we use the current flow $f'$ in $H''$ in order to compute a residual flow network $H''_{f'}$. If there is no $s$-$t$ path in $H''_{f'}$, then the algorithm terminates. We are then guaranteed that the value of the current flow $f'$ is optimal, so $\val(f')\geq \frac{\Delta^*}{4\log^2 n}$. Otherwise, we compute an arbitrary $s$-$t$ path $P$ in $H''_{f'}$, and we augment the flow $f'$ along the path $P$. Notice that every iteration of the algorithm can be implemented in time $O(|E(H''_{f'})|)=\otilde(n)$, and the number of iterations is bounded by $n$. Therefore, the total running time of the algorithm that computes the set $\qset''$ of paths in $H''$ is $\otilde(n^2)$. The final integral flow $f'$ can be decomposed, in time $O(|E(H'')|)=\otilde(n)$, into a collection $\qset''$ of at least $\Omega\left(\frac{\Delta^*}{\log^2 n}\right ) $ edge-disjoint paths connecting $s$ to $t$ in $H''$. Since $H''\subseteq H$, we have now obtained the desired collection $\qset''$ of at least $\Omega\left(\frac{\Delta^*}{\log^2 n}\right ) $ edge-disjoint $s$-$t$ paths  in $H$.

We note that we could also directly round the initial fractional $s$-$t$ flow in $H''$, in expected time $\otilde(|E(H'')|)=\otilde(n)$, e.g. by using the algorithm from Theorem 5 in~\cite{LRS13}, that builds on the results of \cite{GKK10}. But since their running time is only bounded in expectation, and since the bottlenecks in the running time of our algorithm lie elsewhere, we instead use the above simple deterministic algorithm.

We can now augment the current flow in $G'$ via the collection $\qset''$ of  augmenting paths, obtaining a new integral flow in graph $G'$ of value $|M|+|\qset''|$, which, in turn, defines a new matching $M'$ with $|M'|\geq |M|+|\qset''|\geq |M|+\Omega\left(\frac{\Delta^*}{\log^2 n}\right )$.

We now bound the running time of a single phase.
Recall that we may execute the  algorithm from \Cref{thm: main for route and cut} at most $O(\log n)$ times per phase. 
The running time of a single execution of the algorithm from \Cref{thm: main for route and cut} is bounded by  $O\left(n^2\cdot 2^{O(\sqrt{\log N}\cdot \log \log N)}\right )\leq O\left(n^2\cdot 2^{O(\sqrt{\log n}\cdot \log \log n)}\right )$.
Additionally, the time required to compute the graph $H''$, and to compute the maximum flow in it is bounded by $\otilde(n^2)$.

Overall, the running time of a single phase is bounded by $O\left(n^2\cdot 2^{O(\sqrt{\log n}\cdot \log \log n)}\right )$. Since the number of phases is bounded by $O(\log^3n)$, the total running time of the algorithm is bounded by $O\left(n^2\cdot 2^{O(\sqrt{\log n}\cdot \log \log n)}\right )$, and the probability that the algorithm ever returns ``FAIL'' is bounded by $1/\poly(n)$.

\subsection{Proof of \Cref{thm: main for route and cut} -- High Level Overview}

In order to prove \Cref{thm: main for route and cut}, we use two large enough constants $c_1\gg c_2$, whose values we set later, and we employ the following theorem.

\begin{theorem}\label{thm: main for i-restricted route and cut}
	For all $r\geq 1$, there is a randomized algorithm for the $r$-restricted \routeandcut problem, that, on an input $(G,A,B,\Delta,\eta,N)$ with $|V(G)|=n$, %either produces a valid output to the problem, or returns ``FAIL''. The probability that the algorithm returns ``FAIL'' is at most $1/2$, and its running time is at most: 
	has running time at most:
	
\[c_1\cdot n\cdot (n-|B|)\cdot  2^{c_2\sqrt{\log N}}\cdot (\log N)^{16c_2(r-1)+8c_2}.\]
\end{theorem}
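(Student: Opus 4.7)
The plan is to prove the theorem by induction on $r$, following the bootstrapping strategy outlined in the introduction: the base case $r=1$ is handled by a direct and relatively simple algorithm, and the inductive step uses two reductions, each of which adds a polylogarithmic overhead to the running time.

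For the base case $r=1$, I would use the straightforward algorithm for $1$-restricted \routeandcut promised in the overview. Since an instance is $1$-restricted, we have $(n-|B|)\eta/\Delta \le 2^{\sqrt{\log N}}$, so any good routing consists mostly of short paths. This allows a direct approach based on running the \MWU framework combined with Even--Shiloach trees of depth $\tilde O((n-|B|)\eta/\Delta) \le 2^{O(\sqrt{\log N})}$, whose total update time fits into the budget $c_1 \cdot n \cdot (n-|B|) \cdot 2^{c_2\sqrt{\log N}} \cdot (\log N)^{8c_2}$ once $c_1, c_2$ are chosen large enough. The key point for the base case is that the ES-tree depth is short enough that the standard $O(|E|\cdot d)$ update time is affordable, and the $|E|$ cost can be replaced by $n\cdot(n-|B|)$ by exploiting the adjacency-list representation and the fact that most edges are incident to $B$.

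For the inductive step, assume the theorem holds for some $r \ge 1$. I would then invoke the reduction from Section~\ref{sec: alg for maintaincluster from routeandcut}, which converts an algorithm for the $r$-restricted \routeandcut problem into an algorithm for the $r$-restricted \maintaincluster problem; this reduction uses the \connecttocenters data structure as a key building block to maintain an expander embedded in the evolving cluster. The overhead of this reduction multiplies the running time by a $\poly\log N$ factor (accounting for the layered structure of \connecttocenters and the Cut-Matching game used to build the expander). Next, I would feed the resulting algorithm for $r$-restricted \maintaincluster into the reduction from Section~\ref{sec: SSSP alg}, which produces an algorithm for $(r+1)$-restricted \stSP (and hence $(r+1)$-restricted \routeandcut via the \MWU framework described in Section~\ref{sec: high level}). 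This second reduction again incurs a $\poly\log N$ overhead from the \ATO framework, the \DLSSSP data structure of Theorem~\ref{thm: almost DAG routing}, and the standard \MWU loss. By choosing $c_2$ large enough, the combined overhead of both reductions is bounded by $(\log N)^{16c_2}$, so the running time becomes
\[ c_1 \cdot n \cdot (n-|B|) \cdot 2^{c_2\sqrt{\log N}} \cdot (\log N)^{16c_2(r-1)+8c_2} \cdot (\log N)^{16c_2} = c_1 \cdot n \cdot (n-|B|) \cdot 2^{c_2\sqrt{\log N}} \cdot (\log N)^{16c_2 r + 8c_2}, \]
which matches the desired bound for $(r+1)$-restricted instances.

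Two subtleties require care. First, I must make sure that every subinstance of \routeandcut or \maintaincluster produced by the recursive calls is indeed $r'$-restricted for some $r' \le r$; this is where the parameter $N$, which is preserved (or only slightly inflated) across recursive calls, plays its role. Concretely, when the $(r+1)$-restricted algorithm calls the $r$-restricted \maintaincluster subroutine, the distance parameter $d$ given to that subroutine should be at most $2^{r\sqrt{\log N}}$, which follows from the fact that in an $(r+1)$-restricted \routeandcut instance the typical path length is $\tilde O((n-|B|)\eta/\Delta) \le 2^{(r+1)\sqrt{\log N}}$, and these paths are decomposed into segments of length $2^{r\sqrt{\log N}}$ in the standard \SCC-style construction. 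Second, the running time of the \routeandcut algorithm must scale with $n\cdot(n-|B|)$ rather than $n^2$, and so the same strict scaling must be maintained by the intermediate \connecttocenters and \maintaincluster subroutines invoked in the reduction; this is the main technical obstacle, and it is what forces us to work with the adjacency-list representation and to design \connecttocenters with layered reconstructions whose amortized cost per layer $i$ balances $n^{1+o(1)} \cdot 2^{\lambda - i}$ calls against $2^i$ reconstruction intervals. Once both reductions are shown to preserve this strict running-time dependency and the polylogarithmic overhead per level is controlled, the induction closes and Theorem~\ref{thm: main for i-restricted route and cut} follows, with Theorem~\ref{thm: main for route and cut} obtained by setting $r = \lceil \sqrt{\log N}\rceil$ so that $(\log N)^{O(r)} = 2^{O(\sqrt{\log N}\cdot \log\log N)}$.
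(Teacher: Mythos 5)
Your proposal matches the paper's proof: it is exactly the induction on $r$ with the base case given by the simple ES-tree/\MWU algorithm (\Cref{cor: ind base}) and the inductive step obtained by composing the two reductions of \Cref{thm: from routeandcut to maintaincluster} and \Cref{thm: from maintaincluster to roundandcut}, with the per-level $\poly\log N$ overhead absorbed into the exponent $16c_2(r-1)+8c_2$. The only cosmetic difference is that the paper carries the \maintainspeccluster problem alongside \maintaincluster in the joint induction (needed for the special cluster of the \ATO that enables the $n\cdot(n-|B|)$ scaling), but you correctly identify that scaling issue as the key technical constraint, so your outline is essentially the paper's argument.
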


Note that, if $(G,A,B,\Delta,\eta,N)$ is an instance of the \routeandcut problem, and $|V(G)|=n$, then, from the problem definition, $\eta\leq \Delta$ and $\frac{(n-|B|)\cdot \eta}{\Delta}\leq n$ holds. Therefore, any instance of the problem is also an instance of $r^*$-restricted \routeandcut problem, for $r^*=\ceil{\sqrt{\log n}}$. By using \Cref{thm: main for i-restricted route and cut} with parameter $r^*=\ceil{\sqrt{\log n}}$, we obtain an algorithm for the general \routeandcut problem whose running time is:

\[O\left(n\cdot (n-|B|)\cdot 2^{O(\sqrt{\log N}\cdot \log \log N)}\right ).\]

Therefore, in order to prove \Cref{thm: main for route and cut}, it is enough to prove \Cref{thm: main for i-restricted route and cut}. In the remainder of this paper we focus on the proof of \Cref{thm: main for i-restricted route and cut}. The proof is by induction on $r$, and it relies on a slight modification of the Multiplicative Weight Update (\MWU) framework of  \cite{GK98, Fleischer00}, that essentially reduces the $r$-restricted \routeandcut problem to a special case of the directed \SSSP problem, that we call $r$-Restricted \stSP. Before we describe this approach, we describe an algorithm that, given an instance $(G,A,B,\Delta,\eta,N)$ of the \routeandcut problem, constructs a subgraph $G'\subseteq G$ with $V(G')=V(G)$ and $|E(G')|\leq O((n-|B|)\cdot  n)$, such that a solution to the \routeandcut problem instance  $(G',A,B,\Delta,\eta,N)$ immediately provides a solution to the \routeandcut problem instance $(G,A,B,\Delta,\eta,N)$.

\subsubsection{An Auxiliary Graph $G'$ and its Properties}

Consider an instance $(G=(L,R,E),A,B,\Delta,\eta,N)$ of the $r$-restricted \routeandcut problem, for an integer  $1\leq r\leq \ceil{\sqrt{\log N}}$. Let $G'\subseteq G$ be a graph, that is constructed as follows. We let $V(G')=V(G)$, and we include in $G'$ all special edges of $G$. Additionally, for every vertex $v\in L$, if $|\delta^+_G(v)| \leq 512(n-|B|)$, then we add all regular edges that leave $v$ in $G$ to graph $G'$, and otherwise, we select an arbitrary subset $E'(v)\subseteq \delta^+_G(v)$ of $512(n-|B|)$ such edges, and include these edges in $G'$.

Note that $G'$ is a well-structured $n$-vertex graph, and $(G',A,B,\Delta,\eta,N)$ is a valid instance of the $r$-restricted \routeandcut problem. Moreover, the out-degree of every vertex $v\in L$ is bounded by $512(n-|B|)$, and so $|E(G')|\leq  O((n-|B|)\cdot n)$. Given the adjacency-list representation 
of graph $G$, graph $G'$ can be constructed explicitly in time $ O((n-|B|)\cdot n)$.

Since $G'\subseteq G$, any routing $\qset$ from $A$ to $B$ with congestion at most $4\eta\log N$ in $G'$ defines a routing from $A$ to $B$ with congestion at most $4\eta\log N$ in $G$. Assume now that we are given a routing $\qset$ from $A$ to $B$ in $G'$ with congestion at most $4\eta\log N$, such that $|\qset|<\Delta$. Let $A'\subseteq A, B'\subseteq B$ denote the subsets of vertices that do not serve as endpoints of the paths in $\qset$. Assume, additionally, that we are given  a cut  $(X,Y)$ in $G'$ with $|E_{G'}(X,Y)|\leq \frac{64\Delta}{ \eta\log^4n}+\frac{256|\qset|}{\eta}$, such that $A'\subseteq X$ and $B'\subseteq Y$. Clearly, $(X,Y)$ is also a cut in graph $G$. We claim that $E_G(X,Y)=E_{G'}(X,Y)$. Indeed, since $|\qset|< \Delta$, we get that $|B'|> |B|-\Delta$, and so $|X|\leq n-|B'|< n-|B|+\Delta\leq 2(n-|B|)$ (here, we have used the fact that $\Delta\leq |A|\leq (n-|B|)$). Assume for contradiction that $E_{G}(X,Y)\neq E_{G'}(X,Y)$. Since $G'\subseteq G$, there must be an edge $e=(x,y)\in E_G(X,Y)$, such that $e\not \in E(G')$. This can only happen if $x\in L$, and graph $G'$ contains a subset $E'(x)\subseteq \delta^+_G(x)$ of $512(n-|B|)$ edges. Since $|X|\leq 2(n-|B|)$, at least $510(n-|B|)$ edges of $E'(x)$ must lie in $E_{G'}(X,Y)$. Therefore, $|E_{G'}(X,Y)|\geq 510(n-|B|)$ must hold. However, $|E_{G'}(X,Y)|\leq \frac{64\Delta}{ \eta\log^4n}+\frac{256|\qset|}{\eta}\leq 509\Delta\leq 509(n-|B|)$, a contradiction. From the above discussion, we obtain the following claim.

\begin{claim}\label{claim: construct auxiliary}
	There is a deterministic algorithm, that, given an instance $(G,A,B,\Delta,\eta,N)$ of the $r$-restricted \routeandcut problem (where graph $G$ is given as an adjacency list), constructs another instance $(G',A,B,\Delta,\eta,N)$ of the $r$-restricted \routeandcut problem with $G'\subseteq G$, such that $|E(G')|\leq O(n\cdot (n-|B|))$, and moreover, if $(\qset,(X,Y))$ is a valid solution to instance $(G',A,B,\Delta,\eta,N)$, then it is also a valid solution to instance $(G,A,B,\Delta,\eta,N)$. The running time of the algorithm is $O(n\cdot(n-|B|))$.
\end{claim}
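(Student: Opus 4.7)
The construction is dictated by the discussion preceding the claim: keep $V(G')=V(G)$, include every special edge of $G$ in $G'$, and for each $v\in L$, include all regular edges leaving $v$ if $|\delta^+_G(v)|\leq 512(n-|B|)$, and otherwise include an arbitrary subset of exactly $512(n-|B|)$ edges from $\delta^+_G(v)$. First I would verify that $G'$ is well-structured (it inherits the $L/R$ bipartition of $G$, and every vertex has at most one incident special edge), so that $(G',A,B,\Delta,\eta,N)$ is a valid instance of the $r$-restricted \routeandcut problem. The edge count bound is immediate: there are at most $n$ special edges, and each vertex of $L$ has out-degree at most $512(n-|B|)$ in $G'$, giving $|E(G')|\leq O(n\cdot(n-|B|))$. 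Using the adjacency-list representation of $G$, this construction can be performed in time $O(n\cdot(n-|B|))$ by scanning $\OUT[v]$ for each $v\in L$ and stopping after $512(n-|B|)$ edges.

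Next I would verify the solution-preservation property. Since $G'\subseteq G$, any routing $\qset$ from $A$ to $B$ in $G'$ is automatically a routing in $G$ with the same congestion, so the routing half of the output is trivially preserved. The nontrivial part is the cut $(X,Y)$: I must show that $E_G(X,Y)=E_{G'}(X,Y)$, so that the bound $|E_{G'}(X,Y)|\leq \frac{64\Delta}{\eta\log^4 n}+\frac{256|\qset|}{\eta}$ carries over to $G$.

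The argument for the cut, which is the main (and only real) obstacle, proceeds by contradiction. Suppose $|\qset|<\Delta$ and that the cut $(X,Y)$ in $G'$ satisfies $A'\subseteq X$, $B'\subseteq Y$. Then $|B'|\geq |B|-|\qset|> |B|-\Delta$, so $|X|\leq n-|B'|< n-|B|+\Delta$. Using $\Delta\leq |A|\leq n-|B|$ (the latter because $A$ and $B$ are disjoint and $|A|+|B|\leq n$), this gives $|X|\leq 2(n-|B|)$. Now suppose for contradiction that some edge $e=(x,y)\in E_G(X,Y)$ satisfies $e\notin E(G')$. Since $G'$ contains every special edge of $G$, $e$ must be regular, so $x\in L$, and by construction of $G'$ the set $E'(x)\subseteq \delta^+_G(x)$ has size exactly $512(n-|B|)$. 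Because $|X|\leq 2(n-|B|)$, at most $2(n-|B|)$ edges of $E'(x)$ can enter $X$, so at least $510(n-|B|)$ edges of $E'(x)\subseteq E(G')$ enter $Y$, giving $|E_{G'}(X,Y)|\geq 510(n-|B|)$. On the other hand, the hypothesized bound yields $|E_{G'}(X,Y)|\leq \frac{64\Delta}{\eta\log^4 n}+\frac{256|\qset|}{\eta}\leq \frac{64\Delta}{\eta}+\frac{256\Delta}{\eta}\leq 320\Delta\leq 320(n-|B|)$, a contradiction. Hence $E_G(X,Y)=E_{G'}(X,Y)$, and the solution for $(G',A,B,\Delta,\eta,N)$ is a valid solution for $(G,A,B,\Delta,\eta,N)$.

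Finally I would collect these pieces: the construction satisfies all listed properties, and its running time is dominated by the time to scan adjacency lists, which is $O(n\cdot (n-|B|))$. The only subtle step is the contradiction argument on $|X|$ above, which critically uses both the $r$-restricted instance bound $\Delta\leq |A|\leq n-|B|$ and the generous factor $512$ in the out-degree cap, chosen precisely so that $510(n-|B|)$ safely exceeds the permitted cut size $\frac{64\Delta}{\eta\log^4 n}+\frac{256|\qset|}{\eta}$.
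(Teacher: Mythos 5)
Your proposal is correct and follows essentially the same route as the paper: the identical construction (keep all special edges, cap each $L$-vertex's out-degree at $512(n-|B|)$), and the same contradiction argument showing $E_G(X,Y)=E_{G'}(X,Y)$ via $|X|\leq 2(n-|B|)$ forcing at least $510(n-|B|)$ edges of $E'(x)$ into the cut, which exceeds the permitted cut size. The only cosmetic difference is your tighter bound of $320\Delta$ on the cut size where the paper uses $509\Delta$; both suffice.
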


In our algorithms for the \routeandcut problem, we will use \Cref{claim: construct auxiliary} in order to ensure that the number of edges in the input graph is bounded by $O(n\cdot (n-|B|))$.
 We now turn to describe the \MWU-based approach for solving $r$-restricted \routeandcut.

\subsection{Solving $r$-Restricted \routeandcut via the Modified MWU Framework and a Reduction to $r$-Restricted \stSP}
\label{subsec: modified MWU}

In this subsection we provide a high-level description of our algorithm for the $r$-restricted \routeandcut problem, including the modified \MWU framework that we use, and a reduction to a problem that we call $r$-restricted \stSP, that can be viewed as a special case of the \SSSP problem. We  show that an algorithm for the $r$-restricted \stSP problem implies an algorithm for the $r$-restricted \routeandcut problem, via the modified \MWU framework. Lastly, we provide a simple though not very efficient algorithm for $r$-restricted \stSP, that will be used as the induction base in the proof of \Cref{thm: main for i-restricted route and cut}.

Let $(G,A,B,\Delta,\eta,N)$ be the input to the $r$-restricted \routeandcut problem instance, where $G=(L,R,E)$ is a well-structured graph with $|V(G)|=n$, that is given as an adjancency list, $A$ and $B$ are disjoint subsets of $V(G)$, and  $N\geq n$, $\Delta\geq 1$, and $1\leq \eta \leq \Delta$ are parameters, with $|A|,|B|\geq \Delta$. Since the instance is $r$-restricted, $\frac{(n-|B|)\cdot \eta}{\Delta}\leq 2^{r\cdot \sqrt{\log N}}$ holds. By using the algorithm from \Cref{claim: construct auxiliary}, we convert this instance into another instance  $(G',A,B,\Delta,\eta,N)$ of $r$-restricted \routeandcut with $|E(G')|\leq O(n\cdot (n-|B|))$, in time $O(n\cdot (n-|B|))$. From now on we focus on solving instance $(G',A,B,\Delta,\eta,N)$, and, for convenience, we denote $G'$ by $G$.

Recall that our goal is to compute a routing $\qset$ from $A$ to $B$, whose congestion is bounded by  $4\eta\log N$. Additionally, if $|\qset|<\Delta$, we need to compute a cut  $(X,Y)$ in $G$ with $|E_G(X,Y)|\leq \frac{64\Delta}{ \eta\log^4n}+\frac{256|\qset|}{\eta}$, such that, if 
$A'\subseteq A, B'\subseteq B$ denote the subsets of vertices that do not serve as endpoints of the paths in $\qset$, then $A'\subseteq X$ and $B'\subseteq Y$ hold.
Recall also that the edges of $E_G(R,L)$ are called special edges, and the remaining edges of $G$ are called regular edges.
From the definition of well-structured graphs, every vertex of $G$ is incident to at most one special edge.
We start with a simple preprocessing step.

\paragraph{A Preprocessing Step.}
As a preprocessing step, we construct a maximal collection $\qset_0$ of disjoint paths, where every path connects a distinct vertex of $A$ to a distinct vertex of $B$, and consists of a single regular edge. We do so by employing a simple greedy algorithm: Start with $\qset_0=\emptyset$. While there is a regular edge $e=(a,b)$ with $a\in A$ and $b\in B$, add a path $Q=(e)$ to $\qset_0 $, and delete $a$ and $b$ from $A$ and $B$, respectively. Clearly, this preprocessing step can be executed in time $O(|E(G)|)\leq O(n\cdot (n-|B|))$. If $|\qset_0|\ge \Delta$, then we terminate the algorithm and return $\qset_0$. We assume from now on that $|\qset_0|<\Delta$.
Let $A_1\subseteq A$, $B_1\subseteq B$ denote the sets of vertices that do not serve as endpoints of the paths in $\qset_0$. Since $|\qset_0|<\Delta$, we get that $|B_1|\geq |B|-\Delta$, and so:

\begin{equation}\label{eq new marked}
n-|B_1|\leq n-|B|+\Delta\leq 2(n-|B|),
\end{equation}

since $n-|B|\geq |A|\geq \Delta$ from the problem definition. The preprocessing step ensures that there is no regular edge in $G$ connecting a vertex of $A_1$ to a vertex of $B_1$, so every path connecting a vertex of $A_1$ to a vertex of $B_1$ must contain at least one special edge.

Next, we describe a modified \MWU framework that will allow us to compute a solution to the $r$-restricted \routeandcut problem instance. %Intuitively, the basic \MWU framework that was used in the proof of  \Cref{lem: main for route and cut basic} reduced the \routeandcut problem to a special case of \SSSP (the problem that was solved by the oracle). However, since the algorithm for solving this instance of \SSSP was not very efficient, the resulting algorithm for the \routeandcut had high running time. 
The modified \MWU framework is designed in such a way that it reduces the $r$-restricted \routeandcut problem to a special case of \SSSP, that we call $r$-restricted \stSP, which seems more tractable than the general decremental \SSSP problem in directed graphs. We will then design an algorithm for the  $r$-restricted \stSP problem, that will allow us to obtain the desired algorithm for $r$-restricted \routeandcut. The algorithm for $r$-restricted \stSP will in turn rely on an algorithm for the $(r-1)$-restricted \routeandcut problem.

\subsubsection{The Modified \MWU Framework}
\label{subsubsec: modified MWU}

We now describe an algorithm that is based on the modified \MWU framework, but ignore the issue of the efficient implementation of the algorithm. We address this issue later, by reducing the problem to the $r$-restricted \stSP problem.

Throughout the algorithm, we use a parameter ${	\Lambda=(n-|B_1|)\cdot \frac{ \eta  \log^5n}{\Delta}}$. While the set $B_1$ of vertices may change over the course of the algorithm, the value of the parameter $\Lambda$ is set at the beginning of the algorithm and remains unchanged throughout the algorithm. 
Our algorithm maintains an assignment $\ell(e)\geq 0$ of \emph{lengths} to the edges of $G$. At the beginning of the algorithm, we assign an initial length $\ell(e)$ to every edge $e\in E(G)$, as follows. If $e$ is a regular edge, we set $\ell(e)=0$, and if it is a special edge, we set $\ell(e)=\frac{1}{\Lambda}$. 

As the algorithm progresses, the lengths of the special edges may grow, but the lengths of the regular edges remain unchanged. 
Whenever we talk about distances between vertices and lengths of paths, it is always with respect to the current lengths $\ell(e)$ of edges $e\in E(G)$. 

\iffalse
We start by defining the following property, that we refert to as a \emph{stopping condition}. If the algorithm ever establishes that this property does not hold, we will terminate it.

\begin{properties}{C}
	\item Let $\pset$ be the largest-cardinality collection of $s$-$t$ paths in the current graph $G$ that obeys edge capacities, such that every path has length at most  $\frac 1 {64}$. Then $|\pset|\geq \frac{\Delta}{\log^{2\hat c}n}$. \label{stopping condition2}
\end{properties} 
\fi

The algorithm gradually constructs a routing $\qset$ from $A_1$ to $B_1$ in $G$, starting with $\qset=\emptyset$. It then performs at most $\Delta$ iterations, and in every iteration, a single path $P$, connecting some vertex $a \in A_1$ to some vertex $b \in B_1$, is added to $\qset$.
We add $P$ to $\qset$, and delete $a$ from $A_1$ and $b$ from $B_1$. Additionally, we may double the lengths of {\em some} special edges on the path $P$. Specifically, our algorithm maintains, for every special edge $e$ of $G$, a counter $n(e)$, which, intuitively, counts the number of paths that were added to $\qset$ that contained $e$, since $\ell(e)$ was last doubled (or since the beginning of the algorithm, if $\ell(e)$ was never doubled yet). At the beginning of the algorithm, we set $n(e)=0$ for every special edge $e$. Whenever a path $P$ is added to $\qset$, we increase the counter $n(e)$ of every special edge $e\in E(P)$. If, for any such edge $e$, the counter $n(e)$ reaches $\eta$, then we double the length of edge $e$, and reset the counter $n(e)$ to be $0$. 

\subsubsection*{The Oracle}

At the heart of our algorithm is an \emph{oracle} -- an algorithm that, in every iteration, either computes a path $P$ of length at most $1$ in the current graph $G$ connecting a vertex of $A_1$ to a vertex of $B_1$, or produces a new assignment $\ell'(e)$ for edges $e\in E(G)$, that we call a \emph{cut-witness}, and define next. Intuitively, the cut-witness is designed in such a way that it can be easily transformed into the desired  cut  $(X,Y)$ in $G$ with $|E_G(X,Y)|\leq \frac{64\Delta}{ \eta\log^4n}+\frac{256|\qset|}{\eta}$, such that, if 
$A'$ and $B'$ denote the current sets $A_1$ and $B_1$ respectively, then $A'\subseteq X$ and $B'\subseteq Y$ hold. Once the oracle produces a cut-witness, the algorithm terminates. We now define the cut-witness formally.

\begin{definition}[Cut-witness]\label{def: cut-witness}
A \emph{cut-witness} is an assignment of lengths $\ell'(e)\geq 0$ to every edge $e\in E(G)$, such that, if  we denote by $A'$ and $B'$ the current sets $A_1$ and $B_1$ of vertices respectively, and by $E^*$ the set of all special edges with both endpoints in $B'$, then:		

\begin{itemize}
%		\item the length $\ell'(e)$ of every regular edge is $0$;
		\item $\sum_{e\in E(G)}\ell'(e)\leq \frac{\Delta}{2\eta\log^4n}+\sum_{e\in E(G)\setminus E^*}\ell(e)$; and
		
		\item the distance in graph $G$, with respect to edge lengths $\ell'(\cdot)$, from $A'$ to $B'$ is at least $\frac{1}{64}$. %, and moreover, the length $\ell'(e)$ of every edge $e$ connecting a pair of vertices in $B'$ is $0$.
	\end{itemize}
\end{definition}

\sknote{I am not sure but I think we did not define distance between 2 sets. If so, I can just add a definition here.}

Next, we define the notion of an \emph{acceptable path} in $G$, and we will require that the oracle, in every iteration, either returns an acceptable path, or returns a cut-witness.

\begin{definition}[Acceptable Path]\label{def: acceptable path}
	A path $P$ in the current graph $G$ is called \emph{acceptable} if $P$ is a simple path, connecting a vertex of $A_1$ to a vertex of $B_1$, the length of $P$ with respect to the current edge lengths $\ell(\cdot)$ is at most $1$, and no inner vertices of $P$ belong to $B_1$.
\end{definition}

We are now ready to define the oracle.

\begin{definition}[The Oracle]\label{def: oracle}
An oracle for the \MWU framework is an algorithm that, in every iteration, either returns an acceptable path $P$, or returns a cut-witness, or returns ``FAIL''. The probability that the oracle ever returns ``FAIL'' must be bounded by $1/2$.
\end{definition}

\subsubsection*{The \MWU-Based Algorithm.}
We describe the modified \MWU-based algorithm, denoted by \algmwu  in Figure \ref{alg: modified mwu}; the description excludes the implementation of the oracle, that will be discussed later.

\program{Alg-MWU}{alg: modified mwu}{
	\begin{itemize}
		\item Let $A_1 \subseteq A$ and $B_1 \subseteq B$ be the sets of vertices that do not serve as endpoints of any path in the set $\qset_0$ constructed in the preprocessing step. 
		\item Initialize the data structures:
		\begin{itemize}
			\item Set $\qset=\emptyset$;
			\item For every edge $e\in E(G)$, if $e$ is a regular edge, set $\ell(e)=0$, otherwise set $\ell(e)=\frac{1}{\Lambda}$.
			\item For every special edge $e$, set $n(e)=0$. 
		\end{itemize}
		\item Perform at most $\Delta-|\qset_0|$ iterations, where in each iteration we apply the oracle.
		\begin{itemize}
				\item If the oracle returned ``FAIL'', then return ``FAIL'' and terminate the algorithm;
				\item If the oracle returned a cut-witness, return the cut-witness and terminate the algorithm;				
			\item Otherwise, the oracle must have returned an acceptable path $P$ connecting a vertex $a\in A_1$ to a vertex $b\in B_1$.
			\begin{itemize}
				\item remove $a$ from $A_1$ and $b$ from $B_1$;
				\item add $P$ to $\qset$;
				\item  for each special edge $e\in E(P)$, increase $n(e)$ by $1$, and, if $n(e)$ reaches $\eta$, double $\ell(e)$ and set $n(e)=0$.
		\end{itemize}
		\end{itemize}
	\end{itemize}
}

Before we discuss the implementation of the oracle, and an efficient implementation of Algorithm \algmwu, we first show that we can use the algorithm in order to solve the $r$-restricted \routeandcut problem. 
Observe first that, if the total running time of the oracle, over the course of all iterations, is bounded by $T$, then the total running time of Algorithm \algmwu is bounded by $O(|T|)+O(|E(G)|)\leq O(|T|)+O(n\cdot (n-|B|))$. Moreover, since the probability that the oracle ever returns ``FAIL'' is at most $1/2$, the probability that algorithm \algmwu terminates with a ``FAIL'' is bounded by $1/2$.

Let $\qset$ be the set of paths obtained when Algorithm \algmwu terminates, and let $\qset'=\qset\cup \qset_0$ be the final set of paths that we obtain. Clearly, every path in $\qset'$ connects a vertex of $A$ to a vertex of $B$. We start by showing that the paths in $\qset'$ cause congestion at most $4\eta\log n$, and that the endpoints of all paths in $\qset'$ are disjoint, in the following simple observation.

\begin{observation}\label{obs: congestion and disjointness}
The endpoints of the paths in $\qset'$ are disjoint, and the congestion caused by the paths in $\qset'$ in $G$ is bounded by $4\eta\log n$.
\end{observation}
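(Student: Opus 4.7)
The plan is to separate the statement into (i) endpoint disjointness, which follows directly from the construction, and (ii) the congestion bound. For congestion, the approach is to first bound the number of paths of $\qset$ through any special edge using the length-doubling rule in \algmwu, and then to reduce regular-edge congestion to special-edge congestion via the well-structured property.

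For endpoint disjointness, I would observe that the greedy preprocessing adds each single-edge path of $\qset_0$ only after removing its two endpoints from $A$ and $B$, so $\qset_0$-paths have pairwise disjoint endpoints. Inside \algmwu, each accepted path has its endpoint in $A_1$ removed from $A_1$ and its endpoint in $B_1$ removed from $B_1$ before the next iteration, so paths in $\qset$ also have pairwise disjoint endpoints. Finally, endpoints of $\qset_0$ lie in $A \setminus A_1$ and $B \setminus B_1$, while endpoints of $\qset$ lie in $A_1$ and $B_1$, so the two endpoint sets are disjoint from one another.

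For the congestion bound on a special edge $e$, I would track the evolution of $\ell(e)$. Starting from $\ell(e) = 1/\Lambda$, each time a path of $\qset$ traverses $e$ the counter $n(e)$ increments, and after every $\eta$ such traversals the length is doubled. Since every accepted path has total length at most $1$, no path containing $e$ can be added once $\ell(e) > 1$, so $\ell(e)$ can be doubled at most $\lfloor \log_2 \Lambda \rfloor + 1$ times, each doubling being preceded by at most $\eta$ paths through $e$. Using $\eta \leq \Delta$ and $n - |B_1| \leq 2(n-|B|) \leq 2n$, this yields $\Lambda \leq 2n \log^5 n$, and hence at most $\eta (\log_2 \Lambda + 1) \leq 2\eta \log n$ paths of $\qset$ through $e$, for $n$ larger than some constant.

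For a regular edge $e=(u,v)$, I would use the well-structured property of $G$. Since $v \in R$, every path of $\qset$ through $e$ must leave $v$ via a special edge, and $v$ has at most one outgoing special edge $e'_v$. If $v \in B_1$, then at most one path of $\qset$ ends at $v$, so $e$ carries at most one path of $\qset$; otherwise every path through $e$ also uses $e'_v$, so the congestion of $e$ in $\qset$ is at most that of $e'_v$, which is at most $2\eta \log n$ by the previous paragraph. Finally, the greedy construction makes $\qset_0$ edge-disjoint, so it contributes at most $1$ to the congestion of any edge. Combining, the total congestion on any edge is at most $2\eta \log n + 1 \leq 4\eta \log n$ since $\eta \geq 1$ and $n$ is large enough, as required. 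The only non-routine step is the length-doubling calculation that bounds special-edge congestion; the reduction for regular edges and the disjointness claim are both immediate from the definitions.
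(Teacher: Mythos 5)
Your proof is correct and its overall structure matches the paper's: endpoint disjointness via the removal of matched endpoints, the $2\eta\log n$ bound on special edges via the length-doubling/counter argument, and a reduction of regular-edge congestion to special-edge congestion. The only real difference is in the regular-edge step: the paper uses the fact that no regular edge joins $A_1$ to $B_1$ to argue that every path of $\qset$ through $e=(u,v)$ contains the special edge incident to $u$ or the one incident to $v$, and sums the two bounds to get $4\eta\log n$, whereas you charge everything to the unique special edge leaving $v$, noting that at most one path of $\qset$ can terminate at $v$, which gives the marginally tighter $2\eta\log n+O(1)$. One small repair to your write-up: the dichotomy ``if $v\in B_1$ \dots\ otherwise'' should be a partition of the paths rather than a static case split on $v$, since $v$ may leave $B_1$ mid-algorithm; a single path of $\qset$ may end at $v$ (while $v\in B_1$) and later paths may still pass through $v$ via $e'_v$, so the correct count is at most $1+2\eta\log n$ from $\qset$ plus $1$ from $\qset_0$, i.e.\ $2\eta\log n+2$, which is still at most $4\eta\log n$.
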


\begin{proof}
	It is easy to verify that the endpoints of all paths in $\qset'$ are disjoint. First, the preprocessing step, in every iteration selects a regular edge $e=(a,b)$ with $a\in A$ and $b\in B$, adds the path $Q=(e)$ to $Q_1$, and then deletes $a$ from $A$ and $b$ from $B$. This ensures that vertices $a$ and $b$ cannot serve as endpoints of any paths that are subsequently added to $\qset_0$, or to $\qset$. Additionally, whenever Algorithm \algmwu adds a new path $P$ connecting a vertex $a'\in A_1$ to a vertex $b'\in B_1$ to $\qset$, we delete $a'$ from $A_1$ and $b'$ from $B_1$. Therefore, $a'$ and $b'$ may not serve as endpoints of paths that are added subsequently to $\qset$. Hence the endpoints of all paths in $\qset'$ are disjoint.
	
Next, we show that every {\em special edge} of $G$ belongs to at most $2\eta\log n$ paths of $\qset$. At the beginning of the algorithm, we set $\ell(e)=\frac{1}{\Lambda}=\frac{\Delta}{(n-|B_1|)\cdot \eta \cdot \log^5n}\geq \frac{\Delta}{n\cdot \eta \cdot \log^5n}$. Let $i_1,i_2,\ldots,i_q$ denote the indices of the iterations when $\ell(e)$ was doubled. Notice that, once $\ell(e)$ grows above $1$, it may no longer be used by any paths that are added to $\pset$, so $q\leq \ceil{\log \Lambda}+1 \leq 2\log n$ must hold.  Moreover, for all $1\leq j<q$, the number of paths that were added to $\qset$ after iteration $i_j$ but until iteration $i_{j+1}$ (inclusive), that contained $e$, is bounded by $\eta$. 
Similarly, at  most $\eta$ paths containing $e$ were added to $\qset$ until iteration $i_1$ (inclusive).
Altogether, $e$ may belong to at most  $2\eta\log n$ paths of $\qset$. 
Since paths of $\qset_0$ do not contain special edges, {\em every special edge belongs to at most $2\eta\log n$ paths in $\qset'$}.

Consider now some {\em regular edge} $e=(u,v)\in E(G)$. Recall that no regular edge in $G$ may connect a vertex of $A_1$ to a vertex of $B_1$. Therefore, if a path $P\in \qset$ contains $e$, then it must contain either the unique special edge that is incident to $u$ (if it exists), or the unique special edge that is incident to $v$ (if it exists). Moreover, if a path $Q=(e)$ belongs to $\qset_0$, then every path $Q'\in \qset$ that contains $e$ must contain both the unique special edge incident to $u$ and the unique special edge incident to $v$ (since $u\not\in A_1$ and $v\not \in B_1$ must hold). Therefore, every regular edge $e$ belongs to at most $4\eta\log n$ paths in $\qset'$.
\end{proof}

Let $A'$ and $B'$ denote the sets $A_1$ and $B_1$, respectively, at the end of Algorithm \algmwu. Notice that $A'$ is a set of all vertices $a\in A$ that  do not serve as endpoints of the paths in $\qset'$, and similarly, $B'$ contains all vertices $b\in B$ that  do not serve as endpoints of the paths in $\qset'$.
We now show an algorithm, that, given a cut-witness $\set{\ell'(e)}_{e\in E(G)}$, computes a cut  $(X,Y)$ in $G$ with $|E_G(X,Y)|\leq \frac{64\Delta}{ \eta\log^4n}+\frac{256|\qset|}{\eta}$, such that $A'\subseteq X$ and $B'\subseteq Y$ hold.

\begin{claim}\label{claim: computing the cut}
	There is a deterministic algorithm, that, given  a cut-witness $\set{\ell'(e)}_{e\in E(G)}$ for $G$, obtained at the end of Algorithm \algmwu, computes a cut  $(X,Y)$ in $G$ with $|E_G(X,Y)|\leq \frac{64\Delta}{ \eta\log^4n}+\frac{256|\qset'|}{\eta}$, such that $A'\subseteq X$ and $B'\subseteq Y$ hold. The running time of the algorithm is $O(|E(G)|)\leq O(n\cdot (n-|B|))$.
\end{claim}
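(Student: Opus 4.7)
The plan is to extract the cut from the cut-witness via a standard ball-growing argument, after establishing that
\[\sum_{e\in E(G)}\ell'(e)\le \frac{\Delta}{\eta\log^4 n}+\frac{|\qset|}{\eta}.\]
By the cut-witness inequality from \Cref{def: cut-witness} this reduces to bounding $\sum_{e\in E(G)\setminus E^*}\ell(e)$, where $\ell(\cdot)$ is the final length function produced by Algorithm \algmwu.

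First, I would observe that every edge $e\in E^*$ keeps its initial length $\ell(e)=1/\Lambda$ throughout Algorithm \algmwu. This is because an acceptable path $P$ has all its inner vertices outside $B_1$, and its endpoint in $B_1$ is removed from $B_1$ the moment $P$ is appended to $\qset$; hence no path in $\qset$ can contain an edge whose both endpoints lie in the final set $B'\subseteq B_1$, so the counter $n(e)$ of any edge $e\in E^*$ is never incremented. Second, I would use a charging argument to bound the total growth of $\sum_e\ell(e)$ over the entire run of Algorithm \algmwu by $|\qset|/\eta$. When iteration $t$ processes path $P_t$, charge every special edge $e\in P_t$ the amount $\ell(e)_t/\eta$, where $\ell(e)_t$ is the length of $e$ just before the iteration. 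The acceptability of $P_t$ gives a per-iteration charge of at most $1/\eta$, hence a total charge over the run of at most $|\qset|/\eta$. Conversely, if an edge $e$ is doubled $d_e$ times, the $d_e\eta$ increments of $n(e)$ that trigger these doublings contribute a charge of exactly $(2^{d_e}-1)/\Lambda$ to $e$, which is precisely the total growth of $\ell(e)$. Summing over all special edges bounds the total growth of $\sum_e\ell(e)$ by $|\qset|/\eta$.

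Combining these two facts with $\sum_{e\in E^*}\ell(e)=|E^*|/\Lambda$ and the fact that regular edges have length $0$, I obtain
\[\sum_{e\in E(G)\setminus E^*}\ell(e)\le \frac{n-|B'|}{\Lambda}+\frac{|\qset|}{\eta},\]
using that every vertex is incident to at most one special edge and that every special edge outside $E^*$ has at least one endpoint outside $B'$. Now $|B'|\ge |B_1|-|\qset|$ and $\Lambda=(n-|B_1|)\eta\log^5 n/\Delta$, together with $n-|B_1|\ge |A|\ge \Delta\ge |\qset|$, yield $(n-|B'|)/\Lambda\le 2\Delta/(\eta\log^5 n)$, which is at most $\Delta/(2\eta\log^4 n)$ for $n$ large enough. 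Plugging into the cut-witness inequality gives the desired bound on $\sum_e\ell'(e)$.

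Finally, to extract the cut, I would compute $d^*(v)=\dist_{G,\ell'}(A',v)$ for every $v\in V(G)$ using Dijkstra's algorithm on $G$ with lengths $\ell'$. The cut-witness property guarantees $d^*(b)\ge 1/64$ for all $b\in B'$, so for every $\theta\in(0,1/64]$ the partition $X_\theta=\{v:d^*(v)<\theta\}$, $Y_\theta=V(G)\setminus X_\theta$ satisfies $A'\subseteq X_\theta$ and $B'\subseteq Y_\theta$. A standard ball-growing estimate shows that when $\theta$ is drawn uniformly from $(0,1/64]$, the expected number of cut edges is at most $64\sum_e\ell'(e)\le \frac{64\Delta}{\eta\log^4 n}+\frac{64|\qset|}{\eta}\le \frac{64\Delta}{\eta\log^4 n}+\frac{256|\qset'|}{\eta}$. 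Deterministically sweeping the vertices in order of $d^*$ and choosing the best candidate threshold yields such a cut within $O(|E(G)|)=O(n(n-|B|))$ time. The main technical point is the charging argument of the second paragraph: the naive per-iteration bound of $1$ on the growth of $\sum_e\ell(e)$ would lose a factor of $\eta$ and give a cut bound off by that factor.
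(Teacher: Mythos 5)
Your proof is correct and follows essentially the same route as the paper: you bound $\sum_{e\in E(G)}\ell'(e)$ by combining the cut-witness inequality with the two facts that at most $n-|B_1|+|\qset|\le 2(n-|B_1|)$ special edges lie outside $E^*$ (each contributing initial length $1/\Lambda$) and that every doubling of an edge is preceded by $\eta$ appearances on acceptable paths of length at most $1$, and you then extract the cut via a random-threshold/ball-growing sweep on the $\ell'$-distances from $A'$, exactly as in the paper. Your charging scheme, which bounds the total \emph{growth} of $\sum_e\ell(e)$ by $|\qset|/\eta$, is a mild repackaging of the paper's bound $\sum_{e\in \hat E}\ell^*(e)\le 2|\qset|/\eta$ (the paper only uses the last doubling of each grown edge), and both comfortably fit within the claimed bound $\frac{64\Delta}{\eta\log^4 n}+\frac{256|\qset'|}{\eta}$.
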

\begin{proof}
	Consider the cut-witness $\set{\ell'(e)}_{e\in E(G)}$. We start by bounding $\sum_{e\in E(G)}\ell'(e)$ in the following simple observation.
	
\begin{observation}\label{obs: bound total edge length}
	$\sum_{e\in E(G)}\ell'(e)\leq \frac{\Delta}{\eta\log^4n}+ \frac{2|\qset'|}{\eta}$.
\end{observation}
\begin{proof}
	For every special edge $e\in E(G)$, let $\ell^0(e)$ denote the length of $e$ at the beginning of Algorithm \algmwu, and let $\ell^*(e)$ denote the length of $e$ at the end of the algorithm. 
	We denote by $E'$ the set of all special edges $e\in E(G)$, such that, at the beginning of Algorithm \algmwu, both endpoints of $E'$ lied in $B_1$, but $e\not \in E^*$. For every edge $e\in E'$, at least one endpoint of $e$ was deleted from $B_1$ over the course of the algorithm, and $e$ is the only special edge incident to that endpoint, so $|E'|\leq \Delta$ must hold. Recall that, at the beginning of the algorithm, every special edge was assigned the length $\frac{1}{\Lambda}$. Let $E^{\spec}$ denote the set of all special edges in $G$. For each edge $e\in E^{\spec}$, if $e\not\in E'\cup E^*$, then, at the beginning of Algorithm \algmwu, $e$ was incident to some vertex $x\in V(G)\setminus B_1$, and moreover, $e$ is the only special edge incident to $x$. Therefore, $|E(G)\setminus (E'\cup E^*)|\leq n-|B_1|$, and $|E(G)\setminus E^*|\leq n-B_1+\Delta\le 2(n-|B_1|)$,  since $\Delta\leq |A|\leq n-|B_1|$. Overall, we get that:

\[\sum_{e\in E(G)\setminus E^*}\ell^0(e)\leq \frac{2(n-|B_1|)}{\Lambda}\leq \frac{2\Delta}{ \eta \cdot \log^5n},\]

since $\Lambda=(n-|B_1|)\cdot \frac{\eta  \log^5n}{\Delta}$.

Let $\hat E$ denote the subset of special edge $e\in E^{\spec}\setminus E^*$ with $\ell^*(e)>\ell^0(e)$. Clearly, $\sum_{e\in E(G)\setminus E^*}\ell^*(e)\leq 	\sum_{e\in E(G)\setminus E^*}\ell^0(e)+\sum_{e\in \hat E}\ell^*(e)$.
	
	Let $q$ be the total number of iterations Algorithm \algmwu, so $|\qset|=q$ holds, and denote $\qset=\set{P_1,\ldots,P_q}$, where the paths are indexed in the order in which they were added to $\qset$. For all $1\leq i\le q$, let $C_i$ denote the length of path $P_i$ when it was added to $\qset$. Clearly, $C_i\leq 1$, and $\sum_{i=1}^qC_i\leq q$.

	Consider now some edge $e\in \hat E$, and recall that $\ell(e)$ was doubled at least once during the algorithm. Let $\tau$ be the last time when the length of $e$ was doubled. Then, prior to time $\tau$, there were at least $\eta$ iterations $i$, during which the path $P_i$ that was added to $\qset$ contained $e$, and the length of $e$ during the corresponding iteration was $\frac{\ell^*(e)}{2}$. In other words, edge $e$ contributes at least $\frac{\ell^*(e)\eta}{2}$ to $\sum_{i=1}^qC_i$.
		Therefore, we get that:
	
	\[\sum_{e\in \hat E}\ell^*(e)\leq \frac{2\sum_{i=1}^qC_i}{\eta}\leq \frac{2q}{\eta}\leq \frac{2|\qset|}{\eta}.\]
	
	Altogether, we get that:
	
	\[\sum_{e\in E(G)\setminus E^*}\ell^*(e)\leq \sum_{e\in E(G)\setminus E^*}\ell^0(e)+\sum_{e\in \hat E}\ell^*(e)\leq 
	\frac{2\Delta}{\eta\cdot \log^5n}+\frac{2|\qset|}{\eta}.\]
	
Lastly, from the definition of the cut-witness, we get that:

\[\sum_{e\in E(G)}\ell'(e)\leq \frac{\Delta}{2\eta\log^4n}+\sum_{e\in E(G)\setminus E^*}\ell(e)\leq \frac{\Delta}{\eta\log^4n}+ \frac{2|\qset|}{\eta}\leq \frac{\Delta}{\eta\log^4n}+ \frac{2|\qset'|}{\eta}.\]	
\end{proof}

Consider the graph $G$ with lengths $\ell'(e)$ on its edges $e\in E(G)$. From the definition of a cut-witness, the length of a shortest-path (with respect to edge-lengths $\ell'(\cdot)$) connecting a vertex of $A'$ to a vertex of $B'$ in $G'$ is at least $\frac{1}{64}$.

Consider a thought experiment, where we select a threshold $\rho\in(0,\frac{1}{64})$ uniformly at random, and then let $X$ be the set of all vertices $v\in V(G)$, such that the distance from $A'$ to $v$, with respect to the length $\ell'(\cdot)$ of edges, is at most $\rho$. Let $Y=V(G)\setminus X$. It is easy to verify that, since the distance from $A'$ to $B'$ in $G$ is at least $\frac{1}{64}$, $A'\subseteq X$ and $B'\subseteq Y$ must hold. Moreover, the probability that an edge $e\in E(G)$ lies in $E_{G}(X,Y)$ is bounded by $64\ell'(e)$. Therefore, the expectation of $|E_{G}(X,Y)|$ is bounded by $64\sum_{e\in E(G)}\ell'(e)\leq \frac{64\Delta}{\eta\log^4n}+ \frac{128|\qset'|}{\eta}$.

By performing a standard ball-growing procedure from the set $A'$ of vertices in graph $G$, we can compute, in time $O(|E(G)|)$, a cut $(X,Y)$ in $G$ with $|E_{G}(X,Y)|\leq \frac{64\Delta}{\eta\log^4n}+ \frac{256|\qset'|}{\eta}$, such that $A'\subseteq X$ and $B'\subseteq Y$. 
The total running time of the algorithm is bounded by  $O(|E(G)|)\leq O(n\cdot (n-|B|))$.
\end{proof}

To summarize, we provided an algorithm for the $r$-restricted \routeandcut problem, that assumes the existence of an oracle.

We now discuss an efficient implementation of our algorithm for the \routeandcut problem, given an efficient implementation of the oracle. Recall that the time required for the preprocessing step, and for computing the final cut $(X,Y)$ by the algorithm from \Cref{claim: computing the cut}, is bounded by $O(n\cdot (n-|B|))$. If the running time of the oracle is bounded by $T$, then the running time of Algorithm \algmwu is bounded by $O(T+n\cdot(n-B))$. Therefore, the total running time  of the algorithm for the \routeandcut problem is bounded by $O(T+n\cdot (n-|B|))$. We now focus on implementing the oracle.

\subsubsection*{Implementing the Oracle: Intuition.}

Below we define a problem called $r$-restricted \stSP. The problem is designed to implement the oracle from \Cref{def: oracle}. However, for convenience, we  
multiply all edge lengths by factor $\Lambda$, and then round them up to the next integer. We denote the new edge lengths by $\hat \ell(e)$ for $e\in E(G)$. Notice that, throughout the algorithm, the length $\hat \ell(e)$ of every edge $e\in E(G)$ is either $0$, or it is an integer between $1$ and $\ceil{\Lambda}$. We now define the analogues of an acceptable path and a cut-witness with respect to the new edge lengths, that we refer to as \emph{modified acceptable path} and \emph{modified cut-witness}, respectively.

\begin{definition}[Modified Acceptable Path]\label{def: mod acceptable path}
	A path $P$ in the current graph $G$ is called a \emph{modified acceptable path for $G$ with respect to $A_1$ and $B_1$}, if $P$ is a simple path, connecting a vertex of $A_1$ to a vertex of $B_1$, the length of $P$ with respect to the current edge lengths $\hat \ell(\cdot)$ is at most $\Lambda$, and no inner vertices of $P$ belong to $B_1$.
\end{definition}

\begin{definition}[Modified Cut-witness]\label{def: mod cut-witness}
	A \emph{modified cut-witness} for graph $G$ with respect to $A_1$ and $B_1$ is an assignment of lengths $\ell''(e)\geq 0$ to every edge $e\in E(G)$, such that if  we denote by $A',B'$ the current sets $A_1,B_1$ of vertices, and by $E^*$ the set of all special edges with both endpoints in $B'$, then:		
	
	\begin{itemize}
		\item $\sum_{e\in E(G)}\ell''(e)\leq \frac{\Lambda\cdot \Delta}{\eta\log^4n}+\sum_{e\in E(G)\setminus E^*}\hat \ell(e)$; and
		
		\item the distance in graph $G$, with respect to edge lengths $\ell''(\cdot)$, from $A'$ to $B'$ is at least $\frac{\Lambda}{32}$. %, and moreover, the length $\ell''(e)$ of every edge $e$ connecting a pair of vertices in $B'$ is $0$.
	\end{itemize}
\end{definition}

It is immediate to verify that, if $\set{\ell''(e)}_{e\in E(G)}$ is a modified cut-witness for $G$, then, by setting the length $\ell'(e)=\frac{\ell''(e)}{2\Lambda}$ for every edge $e\in E(G)$, we obtain a valid cut-witness for $G$. Additionally, if $P$ is a valid modified acceptable path for $G$, then it is also an acceptable path for $G$.

%Recall that, given the input adjacency-list representation of graph $G$, we can compute an adjacency-list representaton of $\revG$ in time $O(1)$.
In order to implement the oracle, it is now enough to design an algorithm that, in every iteration, either produces a modified acceptable path in $G$, or produces a modified cut-witness for $G$, or returns ``FAIL'', such that the probability that the algorithm ever returns ``FAIL'' is bounded by $1/2$.
Recall that $n-|B_1|\leq 2(n-|B|)$ from Inequality \ref{eq new marked}, and so, from the definition of $r$-restricted \routeandcut problem,
$\frac{(n-|B_1|)\cdot \eta}{\Delta}\leq  \frac{2(n-|B|)\cdot \eta}{\Delta}\leq 2\cdot 2^{r\cdot \sqrt{\log N}}$ holds.

 We now provide a formal definition of the $r$-restricted \stSP problem, that is designed specifically to provide such an algorithm. For convenience, in the definition of the problem, we will denote the sets $A_1,B_1$ of vertices by $A$ and $B$, respectively, and we will denote the edge lengths $\hat \ell(e)$ by $\ell(e)$.

\iffalse
Consider now the following stopping condition:

\begin{properties}[2]{C}
	\item Let $\qset$ be the largest-cardinality collection of $s$-$t$ paths in the current graph $H$ that obeys edge capacities, such that every path has length at most  $\frac{\Lambda}{64}$. Then $|\qset|\geq \frac{\Delta}{\log^{2\hat c}n}$. \label{stopping condition3}
\end{properties} 

Clearly, if Condition \ref{stopping condition3} is violated in graph $H$, then Condition \ref{stopping condition2} is violated in graph $G$.
Additionally, at any time, if $P$ is an $s$-$t$ path in $H$ of length at most $\Lambda$, then the length of $P$ in $G$ is at most $1$. We are now ready to formally describe the $r$-restricted \stSP problem, that will be applied to graph $H$ in order to implement the oracle.
\fi
\subsubsection{The $r$-Restricted \stSP Problem}
\label{subsec: stSP problem}
The input to the $r$-Restricted \stSP problem is a well-structured $n$-vertex graph $G=(L,R,E)$, parameters $N\geq n$, $\Delta\ge 1$, and $1\leq \eta \leq \Delta$,  and two disjoint subsets $A,B$ of vertices of $G$, such that no regular edge of $G$ connects a vertex of $A$ to a vertex of $B$ in $G$, $\Delta\leq n-|B|$, and  $\frac{(n-|B|)\cdot \eta}{\Delta}\leq 2\cdot 2^{r\cdot \sqrt{\log N}}$ holds.
We also require that $|E|\leq O(n\cdot (n-|B|))$ and that $1\leq r\leq \ceil{\sqrt{\log N}}$.

Initially, every edge $e\in E(G)$ is assigned a length $\ell(e)$ as follows: if $e$ is a regular edge then $\ell(e)=0$, and otherwise $\ell(e)=1$.
We use a parameter $\Lambda=\frac{(n-|B|) \eta \cdot \log^5n}{\Delta}$, and we initialize $\qset=\emptyset$. While the set $B$ of vertices may change over the course of the algorithm, parameter $\Lambda$ remains unchanged.

The execution of the algorithm for the $r$-restricted \stSP problem consists of at most $\Delta$ iterations. In every iteration $i$, the algorithm must do one of the following:

\begin{itemize}
	\item either return FAIL;
	\item or return a modified cut-witness for $G$ with respect to $A$ and $B$ (see \Cref{def: mod cut-witness});
	\item or return a modified acceptable path $P$ for $G$ with respect to $A$ and $B$ (see \Cref{def: mod acceptable path}).
\end{itemize}

In the former two cases, the algorithm terminates. In the last case, let $a\in A$, $b\in B$ denote the endpoints of path $P$. We remove $a$ from $A$ and $b$ from $B$ and add $P$ to $\qset$. Additionally, the lengths of some of the special edges on $P$ will be doubled; we assume that the algorithm receives a list of all such edges. 

For every special edge $e\in E(G)$, the length of $e$ is doubled once $e$ participates in exactly $\eta$ paths that the algorithm returned, since the last time the length of $e$ was doubled, or since the beginning of the algorithm -- whatever happened last.

Lastly, we require that the probability that the algorithm ever returns ``FAIL'' is bounded by $\frac{1}{2}$.
This concludes the definition of the $r$-restricted \stSP problem.

When an algorithm for the problem computes a modified acceptable path $P$, we may say, for convenience, that it responds to a \shortpath query.
We note that the running time of an algorithm for the \stSP problem includes both the time required to maintain its data structures (total update time), and time required to compute and output the modified acceptable paths in every iteration (sometimes referred to as query time). 

The following theorem is immediate from the discussion in Section \ref{subsubsec: modified MWU}, the definition of the $r$-restricted \stSP problem, and Inequality \ref{eq new marked}.

\begin{theorem}\label{thm: oracle to routing} Suppose there is an algorithm for the $r$-restricted \stSP problem, that, given an input $(G,A,B,\Delta,\eta, N)$ with $|V(G)|=n$, has running time at most $n\cdot (n-|B|)\cdot T(n,\Delta,\eta,N)$. Then there is an algorithm for the $r$-restricted \routeandcut problem, that, on input $(G,A,B,\Delta,\eta, N)$ with $|V(G)|=n$, has running time at most $O\left(n\cdot (n-|B|)\cdot T(n,\Delta,\eta,N)\right)+O(n\cdot (n-|B|))$.
\end{theorem}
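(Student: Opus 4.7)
The plan is to assemble the algorithm for $r$-restricted \routeandcut by chaining together the ingredients already prepared in Section~3.3, using the given algorithm for $r$-restricted \stSP to implement the oracle inside Algorithm~\algmwu. First, I would invoke \Cref{claim: construct auxiliary} to reduce, in time $O(n\cdot(n-|B|))$, the given instance $(G,A,B,\Delta,\eta,N)$ to an equivalent instance $(G',A,B,\Delta,\eta,N)$ with $|E(G')|\le O(n\cdot(n-|B|))$; any valid solution on $G'$ lifts back to $G$. Next I would perform the greedy preprocessing step of length~1 augmentations, producing $\qset_0$ in time $O(|E(G')|)=O(n\cdot(n-|B|))$. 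If $|\qset_0|\ge \Delta$ we return $\qset_0$ immediately; otherwise we let $A_1\subseteq A$, $B_1\subseteq B$ be the unused endpoints, and note that (i)~no regular edge of $G'$ connects $A_1$ to $B_1$, and (ii)~by Inequality~\eqref{eq new marked}, $n-|B_1|\le 2(n-|B|)$, which together with $r$-restrictedness of the original instance yields $\frac{(n-|B_1|)\eta}{\Delta}\le 2\cdot 2^{r\sqrt{\log N}}$. Hence $(G',A_1,B_1,\Delta,\eta,N)$ is a valid input to the $r$-restricted \stSP problem.

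Next I would execute Algorithm~\algmwu, implementing its oracle in each iteration by invoking the assumed $r$-restricted \stSP algorithm. This step is where one must verify that the \stSP interface matches the oracle interface of \Cref{def: oracle}: the edge-length scaling in the \stSP problem is exactly $\hat\ell(\cdot)=\Lambda\cdot\ell(\cdot)$ rounded up, and both the doubling rule (doubling a special edge's length after it appears in $\eta$ returned paths) and the initial length assignment ($0$ on regular edges, $1$ on special edges, matching $\ell(e)=1/\Lambda$ after rescaling) coincide with what \algmwu maintains. A modified acceptable path of $\hat\ell$-length at most $\Lambda$ is exactly an acceptable path of $\ell$-length at most $1$ in the sense of \Cref{def: acceptable path}; a modified cut-witness $\ell''$ produces a valid cut-witness (\Cref{def: cut-witness}) by setting $\ell'(e)=\ell''(e)/(2\Lambda)$, since then $\sum_e\ell'(e)\le \Delta/(2\eta\log^4 n)+\sum_{e\notin E^*}\ell(e)$ and the $A_1$-to-$B_1$ distance is at least $1/64$. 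Thus the oracle requirement is satisfied and the \stSP algorithm fails with probability at most $1/2$, which is exactly what the oracle may do.

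Once \algmwu terminates, I would assemble the output as follows. If it returned a set $\qset$ of acceptable paths (possibly alongside a cut-witness), let $\qset'=\qset_0\cup\qset$; \Cref{obs: congestion and disjointness} gives that $\qset'$ has disjoint endpoints and congestion at most $4\eta\log n\le 4\eta\log N$. If $|\qset'|<\Delta$, then \algmwu must have returned a cut-witness, and I would feed it to \Cref{claim: computing the cut} to obtain in time $O(n\cdot(n-|B|))$ a cut $(X,Y)$ in $G'$ with $|E_{G'}(X,Y)|\le \frac{64\Delta}{\eta\log^4 n}+\frac{256|\qset'|}{\eta}$, satisfying $A'\subseteq X$, $B'\subseteq Y$, where $A',B'$ are the vertices of $A,B$ unused by $\qset'$. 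By \Cref{claim: construct auxiliary}, the pair $(\qset',(X,Y))$ is also a valid solution for the original input $(G,A,B,\Delta,\eta,N)$. The overall failure probability is at most the failure probability of the \stSP oracle, i.e.\ at most $1/2$.

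For the running time, \Cref{claim: construct auxiliary}, the preprocessing step, \Cref{claim: computing the cut}, and the bookkeeping inside \algmwu each cost $O(n\cdot(n-|B|))$, summing to $O(n\cdot(n-|B|))$. The only other cost is the single execution of the \stSP algorithm on $(G',A_1,B_1,\Delta,\eta,N)$, which by hypothesis takes at most $n\cdot(n-|B_1|)\cdot T(n,\Delta,\eta,N)\le 2n\cdot(n-|B|)\cdot T(n,\Delta,\eta,N)$ time by Inequality~\eqref{eq new marked}. Adding these gives the claimed bound $O\!\left(n\cdot(n-|B|)\cdot T(n,\Delta,\eta,N)\right)+O(n\cdot(n-|B|))$. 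There is no real technical obstacle: the main point of care is the routine but important verification that the \stSP input/output specification plugs cleanly into the oracle specification of \algmwu, which follows directly from the $\Lambda$-rescaling convention introduced just before the \stSP definition.
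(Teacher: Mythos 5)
Your proposal is correct and follows essentially the same route as the paper, which proves this theorem simply by observing it is immediate from the modified \MWU discussion (\Cref{claim: construct auxiliary}, the preprocessing step, Algorithm \algmwu with the oracle implemented by the \stSP algorithm, \Cref{obs: congestion and disjointness}, \Cref{claim: computing the cut}, the $\Lambda$-rescaling correspondence between modified acceptable paths/cut-witnesses and their unscaled counterparts, and Inequality~\ref{eq new marked} for the running-time bound). Your assembly and accounting match the paper's intended argument.
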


Lastly, we show a simple but not very efficient algorithm for \stSP, that, combined with \Cref{thm: oracle to routing}, will be used as the induction base in the proof of \Cref{thm: main for i-restricted route and cut}.

\subsubsection{A Simple Algorithm for \stSP}
\label{subsubsec: induction base}

In this subsection we provide the following simple algorithm for the $r$-restricted \stSP problem.

\begin{theorem}\label{thm: simple stSP}
	For all $r\geq 1$, there is a deterministic algorithm for the $r$-restricted \stSP problem, that,  on input $(G,A,B,\Delta,\eta,N)$ with $|V(G)|=n$, has running time:
	$O\left(n\cdot(n-|B|)\cdot \frac{(n-|B|)\cdot \eta}{\Delta}\cdot \log^6 N\right )$.
\end{theorem}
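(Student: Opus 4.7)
My plan is to implement the $r$-restricted \stSP algorithm by maintaining a single weighted directed Even-Shiloach tree, via \Cref{thm: directed weighted ESTree}, throughout the entire execution. I would first construct an auxiliary graph $\hat G$ by adjoining a super-source $s^*$ with zero-length edges $(s^*,a)$ for every $a\in A$, a super-sink $t^*$ with zero-length edges $(b,t^*)$ for every $b\in B$, and by replacing each special edge $e$ of $G$ with $O(\log n)$ parallel copies of lengths $1,2,4,\ldots$ up to just above $\Lambda$. This way, the length-doubling updates mandated by the problem can be simulated purely by \emph{edge deletions}: when $\hat\ell(e)$ is to be doubled, we simply delete the currently-shortest surviving copy of $e$ from $\hat G$, and the next copy (of doubled length) automatically becomes the edge used by shortest paths in the ES-tree. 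Since \Cref{thm: directed weighted ESTree} handles edge deletions within its stated total update time, this sidesteps the need to support length increases natively. The total edge count is $|E(\hat G)|=O(n(n-|B|)\log n)$, and I would initialize the ES-tree on $\hat G$ with source $s^*$ and depth parameter $\Lambda$.

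In each iteration, I would query $\dist_{\hat G}(s^*,t^*)$. If this distance is at most $\Lambda$, I extract the tree path $s^*\to a\to\cdots\to b\to t^*$, strip off the endpoint edges, and return the resulting $a$-to-$b$ path as the modified acceptable path for $G$. The no-inner-$B$-vertex property follows from optimality of the path: if some intermediate vertex $b'\in B$ were visited, then the prefix $s^*\to a\to\cdots\to b'$ together with edge $(b',t^*)$ would give a strictly shorter $s^*$-to-$t^*$ path (as all edge lengths are non-negative). After returning the path, I delete the edges $(s^*,a)$ and $(b,t^*)$ from the ES-tree, together with the currently-active copy of each special edge on $P$ whose length is doubled in this iteration. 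If instead $\dist_{\hat G}(s^*,t^*)>\Lambda$, I construct a modified cut-witness $\ell''$ by setting $\ell''(e)=\hat\ell(e)$ for $e\in E(G)\setminus E^*$ and $\ell''(e)=0$ for $e\in E^*$, and terminate. The budget condition of \Cref{def: mod cut-witness} is immediate since $\sum_e\ell''(e)=\sum_{e\notin E^*}\hat\ell(e)$. For the distance condition, any $A$-to-$B$ path $Q$ in $G$ has a prefix $Q'$ reaching the first $B$-vertex; since $E^*$-edges have \emph{both} endpoints in $B$, $Q'$ uses no $E^*$-edges, giving $|Q|_{\ell''}\ge |Q'|_{\ell''}=|Q'|_{\hat\ell}\ge \dist_{\hat G}(s^*,t^*)>\Lambda\ge \Lambda/32$, as required. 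The algorithm is deterministic and never returns FAIL.

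For the running time, \Cref{thm: directed weighted ESTree} gives total update time $O(|E(\hat G)|\cdot\Lambda)=O(n(n-|B|)\log n\cdot\Lambda)$. Substituting $\Lambda=(n-|B|)\eta\log^5 n/\Delta$ yields $O\bigl(n(n-|B|)\cdot\frac{(n-|B|)\eta}{\Delta}\cdot\log^6 n\bigr)$, which matches the claimed bound since $n\le N$. Per-iteration query work is proportional to the returned path length, which is at most $O(\Lambda)$ edges; summed over $\Delta$ iterations this contributes $O(\Delta\cdot\Lambda)=O((n-|B|)\eta\log^5 n)$, dominated by the update time. The main obstacle I foresee is the bookkeeping required for the length-doubling simulation, namely verifying that at every point in time the lowest-length surviving copy of each special edge $e$ in $\hat G$ coincides with the intended current length $\hat\ell(e)$, and checking that the zero-length regular edges cause no issue within \Cref{thm: directed weighted ESTree}'s guarantees; both are essentially routine once the multi-copy reformulation is in place.
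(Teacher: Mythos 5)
Your construction is essentially the paper's proof: the paper also augments $G$ with a source/sink, replaces each special edge by $O(\log\Lambda)$ parallel copies with lengths that are powers of $2$ so that a length-doubling becomes the deletion of the cheapest surviving copy, runs the weighted directed \EST of \Cref{thm: directed weighted ESTree} with depth $\Lambda$, answers queries from the tree while $\dist(s,t)\le\Lambda$, and extracts the modified cut-witness when the distance exceeds $\Lambda$; the running-time accounting is the same.

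The one substantive deviation is your choice of length $0$ for regular edges and for the edges incident to $s^*,t^*$, and it causes two small problems. First, \Cref{thm: directed weighted ESTree} is invoked in the paper only with positive integral lengths; the paper assigns length $1$ to every regular edge and to every edge $(s,a),(b,t)$ precisely to stay in that regime, and then recovers the cut-witness distance bound (with the $\Lambda/32$ slack) via the observation that regular and special edges alternate on any $A$--$B$ path, so at least a quarter of the length is carried by special edges. Whether the cited \EST statement tolerates zero-length edges is exactly the point you wave off as routine. Second, your argument that the returned shortest path has no inner $B$-vertex is wrong as stated: with zero-length regular edges and zero-length edges $(b,t^*)$, rerouting through an intermediate $b'\in B$ only gives a path that is \emph{at most} as long, not strictly shorter, so the \EST may return a shortest path that does visit $B$ internally. (In the paper's unit-length construction the suffix from $b'$ to $t$ has length at least $2$, so the shortcut is strictly shorter and the claim holds.) The fix is easy -- truncate the returned path at its first $B$-vertex and delete that vertex's edge to $t^*$ instead, or simply adopt the paper's unit lengths -- but as written this step fails.
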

\begin{proof}
We define a dynamic graph $H$, and we will eventually apply the algorithm for maintaining the \EST data structure from \Cref{thm: directed weighted ESTree} to graph $H$, with the distance parameter $d=\Lambda$.

We initialize graph $H$ as follows. We let $V(H)=V(G)\cup \set{s,t}$. For every regular edge $e=(u,v)\in E(G)$, we include edge $e$ with length $\tilde \ell(e)=1$ in $H$. For every special edge $e'=(u',v')\in E(G)$, we include $\ceil{\log \Lambda}+1$ parallel edges $e_1,\ldots,e_{\ceil{\log \Lambda}+1}$, where for all $1\leq i\leq \ceil{\log \Lambda}+1$, the length $\tilde \ell(e_i)=2^{i-1}$. Lastly, for every vertex $a\in A$, we add an edge $(s,a)$ of length $1$, and for every vertex $b\in B$, we add an edge $(b,t)$ of length $1$ to $H$. 
Note that graph $H$ can be initialized in time $O(|E(G)|\cdot \log n)\leq O(n\cdot(n-|B|)\cdot \log n)$.
We then initialize the algorithm for maintaining the \EST data structure from \Cref{thm: directed weighted ESTree} on graph $H$, with the distance threshold $d=\Lambda$.

Our algorithm consists of at most $\Delta$ iterations. In every iteration, we check whether $\dist_H(s,t)\leq \Lambda$ holds, in time $O(1)$, using the \EST data structure. Assume first that this is the case. Then we use the \EST data structure to compute the shortest $s$-$t$ path $P$ in the current graph $H$, of length at most $\Lambda$. From our construction, for every vertex $b\in B$, there is an edge $(b,t)$ of length $1$ in $H$. This ensures that path $P$ may not contain vertices of $B$ as inner vertices. Let $P'$ be the path obtained from $P$ by deleting its first and last vertex. It is then easy to verify that $P'$ is a modified acceptable path, that we return as an output of the current iteration. Let $a\in A$, $b\in B$ be the endpoints of path $P'$. Recall that following the current iteration, $a$ is deleted from $A$ and $b$ is deleted from $B$. Additionally, the lengths of some special edges of $E(P')$ may be doubled.
We modify graph $H$, in order to reflect these changes as follows. We delete the edges $(s,a)$ and $(b,t)$ from $H$. Additionally, for every special edge $e'\in E(P)$ whose length was doubled in $G$, we delete the shortest-length copy of $e'$ from $H$. 

Assume now that, at the beginning of some iteration, we discover that $\dist_H(s,t)>\Lambda$ holds. Then, in the current graph $H$, the length of the shortest path connecting a vertex of $A$ to a vertex of $B$ is at least $\Lambda-2\geq \frac{\Lambda}{2}$. We now define the lengths $\ell'(e)$ of edges $e\in E(G)$ as follows. If $e$ is a regular edge that lies in $H$, then we set $\ell'(e)=0$. If $e=(x,y)$ is a special edge,  then we let $\ell'(e)$ be the length of the shortest parallel edge $(x,y)$ in $H$. 

Let $A'$ and $B'$ denote the sets $A$ and $B$ of vertices at the end of the algorithm, respectively.
Clearly, $\sum_{e\in E(G)}\ell'(e)\leq \sum_{e\in E(G)}\ell(e)$ holds. Next, we claim that, for every path $P$ connecting a vertex $a\in A'$ to a vertex $b\in B'$ in $G$, the length $\sum_{e\in E(P)}\ell'(e)\geq \frac{\Lambda}{32}$. Indeed, recall that we have established that the length of such a path $P$ in graph $H$ is at least $\frac{\Lambda}{2}$. The only difference between the edge lengths $\set{\tilde \ell(e)}_{e\in E(P)}$ and $\set{\ell'(e)}_{e\in E(P)}$ is that the length $\tilde \ell(e)$ of every regular edge $e\in E(H)$ is $1$, and $\ell'(e)=0$. However, since no regular edge may connect a vertex of $A$ to a vertex of $B$ in $G$, regular and special edges must alternate on path $P$, and, since the length of every special edge in $H$ is at least $1$, the total length $\tilde \ell(e)$ of all special edges $e\in E(P)$ is at least $\frac{\sum_{e'\in E(P)}\hat \ell(e')}{4}\geq \frac{\Lambda}{32}$. 

Let $E^*$ be the set of all special edges with both endpoints in $B'$. In order to turn the current assignment $\set{\ell'(e)}_{e\in E(G)}$  of edge lengths to a valid modified cut-witness, we set the length $\ell'(e)$ of every edge $e\in E^*$ to $0$. It is easy to verify that, for every path $P$ connecting a vertex of $A$ to a vertex of $B$ in $G$, its length remains at least $\frac{\Lambda}{32}$, and  $\sum_{e\in E(G)}\ell'(e)\leq \sum_{e\in E(G)\setminus E^*}\ell(e)$ now holds. Therefore, we obtain a valid modified cut-witness for $G$.

It now remains to analyze the running time of the algorithm. We start by analyzing the \emph{total update time} -- the time required to maintain all data structures, which excludes the \emph{query time} -- the time required to compute the modified acceptable paths in every iteration.

Let $m$ denote the total number of edges that are ever present in $H$, and recall that $m\leq O(n\cdot (n-|B|)\cdot \log N)$. Once graph $H$ is initialized, the additional update of the algorithm is bounded by $O(m\cdot \Lambda)$. Therefore, the total update time that is required in order to maintain the data structures (excluding the time required to return the paths in every iteration) is bounded by:

\[  O(n\cdot (n-|B|)\cdot \Lambda\cdot \log N)\leq O\left(n\cdot(n-|B|)\cdot \frac{(n-|B|)\cdot \eta}{\Delta}\cdot \log^6 N\right ).\]

Assume now that the number of iterations in which the algorithm returned a modified acceptable path is $q$, and let $P_1,\ldots,P_q$ denote the paths that the algorithm returned. Then the time that the algorithm spent on returning these paths is bounded by $\sum_{i=1}^qO(|E(P_i)|)$. Recall that no regular edge connects a vertex of $A$ to a vertex of $B$ in $G$, and so regular and special edges alternate on each such path $P_i$. Therefore, for all $1\leq i\leq q$, $|E(P_i)|$ is asymptotically bounded by the number of special edges on $P_i$.

Consider now some special edge $e$ of $G$. When $e$ is first added to $G$, we include at most $O(\log n)$ copies of $e$ in $H$. Every time edge $e$ participates in $\eta$ of the paths that the algorithm returns, one of the copies of $e$ is deleted from $H$. Therefore, copies of a special edge $e$ of $G$ may participate  in at most $O(\eta\log n)$ paths that the algorithm returns, and, overall, $\sum_{i=1}^q|E(P_i)|\leq O(n\eta\log n)$ (since there are at most $n$ special edges in $G$). We conclude that the total running time of the algorithm is bounded by:

\[O\left(n\cdot(n-|B|)\cdot \frac{(n-|B|)\cdot \eta}{\Delta}\cdot \log^6 N\right )+O(n\eta\log N)\leq O\left(n\cdot(n-|B|)\cdot \frac{(n-|B|)\cdot \eta}{\Delta}\cdot \log^6 N\right ),\]

since $\eta\leq \Delta\leq n-|B|$ from the problem definition.
\end{proof}

\iffalse
\paragraph{Remark.}
The proof of \Cref{thm: simple stSP} demostrates why we chose to implement the oracle on graph $\revG$ instead of $G$: our algorithm essentially repeatedy consructs $s$-$t$ paths in graph $H$, as it undergoes edge-deletions (and a limited number of edge insertions). Specifically, we need to be able to insert edges whose both endpoints lie in $B$ into graph $H$. While the role of the vertices $s$ and $t$ in the problem that we are trying to solve is symmertic, the \EST data structure does not treat these two vertices in a symmetric manner, in that it maintains a shortest-path tree that is rooted at $s$. If, for example, we implemented the oracle on graph $G$ instead of $\revG$, then, in graph $H$, we would need to connect $s$ to every vertex $a\in A$, and to connect every vertex $b\in B$ to $t$. Since we maintain an \EST that is rooted at the vertex $s$, it is now not clear how to implement the insertion of edges with both endpoints in $B$ into this data structure efficiently. 
\fi

By combining \Cref{thm: simple stSP} with \Cref{thm: oracle to routing},
and recalling that, in the $r$-restricted \routeandcut problem, $\frac{(n-|B|)\cdot \eta}{\Delta}\leq 2\cdot 2^{r\cdot \sqrt{\log N}}$ holds, we obtain the following immediate corollary.

\begin{corollary}\label{cor: ind base}
	There is a deterministic algorithm for the $r$-restricted \routeandcut problem, that, on input $(G,A,B,\Delta,\eta, N)$ with $|V(G)|=n$, has running time at most $O\left(n\cdot(n-|B|)\cdot 2^{r\cdot \sqrt{\log N}}\cdot \log^6 N\right )$.
\end{corollary}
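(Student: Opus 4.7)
The proof is by direct composition of the two theorems cited just above the corollary, so the plan is essentially a bookkeeping exercise. First, I would read off from \Cref{thm: simple stSP} the running time of its deterministic algorithm for $r$-restricted \stSP on an instance $(G,A,B,\Delta,\eta,N)$ with $|V(G)|=n$, namely
\[O\!\left(n\cdot(n-|B|)\cdot \frac{(n-|B|)\cdot \eta}{\Delta}\cdot \log^6 N\right),\]
and view it in the form $n\cdot (n-|B|)\cdot T(n,\Delta,\eta,N)$ required by \Cref{thm: oracle to routing}, with $T(n,\Delta,\eta,N)=O\!\left(\frac{(n-|B|)\cdot \eta}{\Delta}\cdot \log^6 N\right)$. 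Plugging this $T$ into \Cref{thm: oracle to routing} yields an algorithm for $r$-restricted \routeandcut whose running time is
\[O\!\left(n\cdot (n-|B|)\cdot T(n,\Delta,\eta,N)\right)+O(n\cdot (n-|B|)) = O\!\left(n\cdot(n-|B|)\cdot \frac{(n-|B|)\cdot \eta}{\Delta}\cdot \log^6 N\right),\]
since the additive $O(n\cdot(n-|B|))$ term is absorbed into the first (because $\frac{(n-|B|)\eta}{\Delta}\cdot\log^6 N\geq 1$).

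Second, I would invoke the defining property of $r$-restricted \routeandcut instances from \Cref{def: rounte and cut}, which guarantees $\frac{(n-|B|)\cdot \eta}{\Delta}\leq 2^{r\sqrt{\log N}}$ (with at most an absorbable factor of $2$ coming from the preprocessing in the \MWU reduction, as noted in the \stSP problem specification). Substituting this bound into the displayed running time gives
\[O\!\left(n\cdot(n-|B|)\cdot 2^{r\sqrt{\log N}}\cdot \log^6 N\right),\]
which is exactly the bound claimed by the corollary. Since \Cref{thm: simple stSP} is deterministic and \Cref{thm: oracle to routing} preserves determinism (the randomness budget in the MWU framework comes entirely from the oracle), the composite algorithm is also deterministic. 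There is no genuine obstacle here — the only thing to double-check is that the constraint on $\frac{(n-|B|)\eta}{\Delta}$ that comes out of the \stSP problem (namely $2\cdot 2^{r\sqrt{\log N}}$) matches, up to constants, the $r$-restricted \routeandcut bound, so that the substitution is valid; this is immediate from the factor-of-$2$ slack built into the \stSP specification.
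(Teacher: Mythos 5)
Your proposal is correct and follows exactly the paper's own argument: the corollary is obtained by plugging the \stSP running time of \Cref{thm: simple stSP} into the reduction of \Cref{thm: oracle to routing} and then bounding $\frac{(n-|B|)\eta}{\Delta}$ by (a constant times) $2^{r\sqrt{\log N}}$ via the $r$-restricted condition, with the factor-of-$2$ slack handled just as you note. Nothing is missing.
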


We will use this corollary in our inductive proof of \Cref{thm: main for i-restricted route and cut}, for the base case when $r=1$, and also for general values of $r$, when the value of the parameter $N$ is sufficiently small.

\section{Tools for Expander-Like Graphs and the \maintaincluster Problem}
\label{sec: expander tools}

In this section we introduce some tools for expander-like graphs, and define the second main problem that serves as a subroutine in our algorithm for \MBM -- the \maintaincluster problem. We start by definining subdivided graphs, and providing a new definition of cut sparsity, that extends the standard definition to graphs with edge lengths in a natural way.

\subsection{A Subdivided Graph and a New Definition of Cut Sparsity}

We first recall a standard definition of cut sparsity in directed graphs. Let $G$ be an arbitrary directed graph, and assume that the graph is unweighted, so all edge lengths and capacities are unit. Under the standard definition, the sparsity of the cut $(A,B)$ in $G$ is defined as $\frac{|E_G(A,B)|}{\min\set{|A|,|B|}}$.
	
Suppose now that we are given a well-structured graph $G=(L,R,E)$, with a proper {\bf uniform} assignment of lengths to its edges, so the lengths of all special edges in $G$ are unit. In this case, for a well-structured cut $(A,B)$ in $G$, it would be natural to define its sparsity as $\frac{|E_G(A,B)|}{\min\set{|A|,|B|}}$, to be consistent with the standard definition of sparsity.

Assume now that we are given a well-structured graph $G=(L,R,E)$, with a proper assignment of lengths to its edges, but now the lengths of the special edges may no longer be unit. Several of the expander-based tools that we define below work best when the underlying graph is unweighted; in the context of well-structured graphs, this corresponds to having a proper uniform assignment of edge lengths. A natural way to handle graphs with arbitrary  lengths of special edges in this setting is to suitably \emph{subdivide} its edges, so that the lengths of the special edges become unit, but the lengths of all paths are preserved. We now define a subdivided graph of a well-structured graph, a notion that we will use later, that will also motivate our modified definition of cut sparsity.

\begin{definition}[Subdivided Graph]\label{def: subdivided graph}
	Let $G=(L,R,E)$ be a perfect well-structured graph with a proper assignment of lengths $\ell(e)$ to its edges $e\in E$, and let $G'=(L',R',E')$ be another graph. We say that $G'$ is the \emph{subdivided graph of $G$}, and denote $G'=G^+$, if $G'$ can be obtained from $G$ via the following process. We start with $G'=G$, $L'=L,R'=R$ and $E'=E$, and then consider every special edge $e\in E$ one by one. Let $e=(u,v)$ be any such edge, and denote its length $\ell(e)$ by $q$. We replace edge $e$ with a path $P(e)=(u=x_1,y_1,x_2,y_2,\ldots,x_q,y_q=v)$. Note that $u\in R'$ and $v\in L'$ must hold. We add vertices $x_2,\ldots,x_q$ to $R'$, and vertices $y_1,\ldots,y_{q-1}$ to $L'$. Edges $(x_1,y_1),\ldots,(x_q,y_q)$ become special edges, and edges $(y_1,x_2),\ldots,(y_{q-1},x_q)$ become regular edges  of $G'$. The lengths of the resulting special edges are set to $1$, while the lengths of the resulting regular edges are set to $0$, so the length of the path $P(e)$ is $q=\ell(e)$.
\end{definition}

Note that, if $G$ is a perfect well-structured graph with proper edge lengths, then its subdivided graph $G^+$ is a perfect well-structured graph with proper uniform edge lengths. Therefore, we can use the definition of cut sparsity for well-structured graphs with proper uniform edge lengths from above in graph $G^+$. Every cut $(A',B')$ in graph $G^+$ naturally also defines a cut $(A,B)$ in $G$, where we set $A=V(G)\cap A'$ and $B=V(G)\cap B'$. We will define the sparsity  of cuts in perfect well-structured graphs with proper edge lengths in a natural way, so that, if a well-structured cut $(A',B')$ in $G^+$ is $\phi$-sparse according to the standard definition, then the sparsity of the corresponding cut $(A,B)$ in $G$ is at most $O(\phi)$. Notice that, since graph $G^+$ is obtained from $G$ by subdividing some of its edges, every edge of $G$ may contribute a number of vertices to $A'$ and/or to $B'$. We will assign a \emph{weight} to every vertex of $G$ in order to reflect this contribution. We use the following definition of vertex weights throughout the paper.

\begin{definition}[Vertex Weights]\label{def: vertex weights}
	Let $G=(L,R,E)$ be a perfect well-structured graph with a proper assignment of lengths to its edges. For every vertex $v\in L\cup R$, the \emph{weight} $w(v)$ is defined to be the length of the unique special edge that is incident to $v$. For a subset $S$ of vertices of $G$, we define $w(S)=\sum_{v\in S}w(v)$, and for a subgraph $G'$ of $G$ we define $w(G')=w(V(G'))$.
\end{definition}

We are now ready to define the sparsity of cuts in perfect well-structured graphs.

\begin{definition}[Cut Sparsity]\label{def: cut sparsity}
	Let $G=(L,R,E)$ be a perfect well-structured graph with a  proper assignment of lengths on its edges, and let $(A,B)$ be a well-structured cut in $G$. The \emph{sparsity} of cut $(A,B)$ is:
	
	\[\Phi_G(A,B)=\frac{|E_G(A,B)|}{\min\set{w(A),w(B)}}.\]
\end{definition}

%In the above definition, the cut $(A,B)$ may be either weakly or strongly well-linked.
We will use the following simple observation whose proof can be found in Section \ref{subsec: proof of subdivided cut sparsity} of the Appendix.

\begin{observation}\label{obs: subdivided cut sparsity}
		Let $G=(L,R,E)$ be a perfect well-structured graph with proper lengths $\ell(e)$ on its edges $e\in E$, and let $G^+$ be its subdivided graph. Assume further that we are given a parameter $0<\phi<1$, and that the length of every edge in $G$ is at most $\frac{1}{4\phi}$. Let $(A',B')$ be a well-structured cut in $G^+$ of sparsity at most $\phi$, and let $(A,B)$ be the corresponding cut in $G$, where $A=A'\cap V(G)$ and $B=B'\cap V(G)$. Then $(A,B)$ is a well-structured cut in $G$, $w(A)\geq |A'|/8$, $w(B)\geq |B'|/8$, and $\Phi_G(A,B)\leq 8\phi$.
\end{observation}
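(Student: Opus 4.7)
The plan is to first verify that $(A,B)$ is a well-structured cut in $G$, then bound the weights $w(A),w(B)$ via an ``excursion'' analysis on the subdivided special edges, and finally combine these to get the sparsity bound. For the well-structured property, observe that any regular edge $(u,v)\in E_G(A,B)$ survives unchanged as a regular edge of $G^+$, and lies in $E_{G^+}(A',B')$, contradicting that $(A',B')$ is well-structured in $G^+$. Hence every edge of $E_G(A,B)$ is special. The sets $A,B$ are disjoint and cover $V(G)$ by construction, and their nonemptiness will follow from the weight lower bounds below.

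The core of the argument is the lower bound $w(A)\geq |A'|/8$ (and symmetrically $w(B)\geq |B'|/8$). For each special edge $e=(u,v)$ of $G$ of length $\ell(e)=q$, let $a(e)=|V(P(e))\cap A'|$, let $a_{\mathrm{end}}(e)=|\{u,v\}\cap A|$, and let $c(e)$ denote the number of $A'\to B'$ crossings among the $q$ special subdivision edges of $P(e)$. Since every vertex of $G^+$ belongs to exactly one path $P(e)$, we have $\sum_e a(e)=|A'|$, and from $G$ being perfect well-structured, $w(A)=\sum_e \ell(e)\,a_{\mathrm{end}}(e)$. I will bound $a(e)$ in two regimes. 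If $a_{\mathrm{end}}(e)\geq 1$, then trivially $a(e)\leq 2q\leq 2\,\ell(e)\,a_{\mathrm{end}}(e)$. If $a_{\mathrm{end}}(e)=0$, then both endpoints $x_1,y_q$ of $P(e)$ lie in $B'$, so every $A'$-vertex of $P(e)$ lies in a maximal $A'$-\emph{excursion}. The key observation is that any such excursion must exit back to $B'$, and the only regular edges in $P(e)$ are the $y_i\to x_{i+1}$ edges. Since a regular edge $A'\to B'$ is forbidden in a well-structured cut, every excursion must terminate at some $x_j$ and exit via the special edge $(x_j,y_j)$. Hence the number of excursions is at most $c(e)$, and since each excursion has length at most $2q$, we get $a(e)\leq 2q\cdot c(e)$; using $q\leq 1/(4\phi)$ this becomes $a(e)\leq c(e)/(2\phi)$.

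Summing over all special edges of $G$ and using $\sum_e c(e)=|E_{G^+}(A',B')|\leq \phi\,\min(|A'|,|B'|)\leq \phi|A'|$, I obtain
\[
|A'|\;=\;\sum_e a(e)\;\leq\; 2\,w(A)+\frac{|E_{G^+}(A',B')|}{2\phi}\;\leq\; 2\,w(A)+\tfrac{1}{2}|A'|,
\]
which yields $w(A)\geq |A'|/4\geq |A'|/8$. The symmetric argument, swapping the roles of $A'$ and $B'$ and noting that entering $B'$ from $A'$ also requires a special edge (since regular $A'\to B'$ crossings are forbidden), gives $w(B)\geq |B'|/4\geq |B'|/8$. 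For the sparsity bound, any special edge $e=(u,v)\in E_G(A,B)$ has $x_1\in A'$ and $y_q\in B'$ in $G^+$, so $P(e)$ contains at least one $A'\to B'$ crossing; distinct $e$'s give disjoint $P(e)$'s, so $|E_G(A,B)|\leq |E_{G^+}(A',B')|\leq \phi\min(|A'|,|B'|)\leq 4\phi\,\min(w(A),w(B))$, giving $\Phi_G(A,B)\leq 4\phi\leq 8\phi$. The only subtle step is the excursion analysis: specifically, turning the ``regular $A'\to B'$ forbidden'' structural constraint into the per-edge bound $a(e)\leq 2q\,c(e)$, which then combines with the global sparsity hypothesis to compare $|A'|$ against $w(A)$.
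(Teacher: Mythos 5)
Your proof is correct and takes essentially the same route as the paper's: both arguments split the $A'$-vertices according to whether the subdivided path $P(e)$ containing them has an endpoint in $A$, bound the first part by $2w(A)$ via the ``at most $2\ell(e)$ vertices per endpoint'' observation, and use the fact that an $A'$-vertex on a path with both endpoints in $B'$ forces a special cut edge on that path, combined with the length cap $\ell(e)\le \frac{1}{4\phi}$, to absorb the remaining vertices into $\frac{|A'|}{2}$; the sparsity step is identical. Your per-edge crossing/excursion bookkeeping (and the resulting constant $4$ in place of $8$) is only a cosmetic refinement of the paper's partition of $A'$ into the three sets $A'_1,A'_2,A'_3$.
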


The next observation allows us to transform a weakly well-structured cut into a strongly well-structured cut in a perfect well-structured graph, while roughly preserving its sparsity. The proof appears in Section \ref{subsec: proof of weakly to strongly well str} of the Appendix.

\begin{observation}\label{obs: weakly to strongly well str}
	Let $G=(L,R,E)$ be a perfect well-structured graph with a proper assignment of lengths $\ell(e)$ to its edges $e\in E$, and let $0<\phi\leq 1$ be a parameter. Let $W=\sum_{e\in E}\ell(e)$, and assume that, for all $e\in E$, $\ell(e)\leq \frac{W}{2}$. Let $(A,B)$ be a $\frac{\phi}2$-sparse weakly well-structured cut in $G$. Then there is a $\phi$-sparse strongly well-structured cut $(A',B')$ in $G$ with $w(A')\geq w(A)/4$ and $w(B')\geq w(B)/4$. Moreover, there is a deterministic algorithm, that, given $G$, $\phi$, and cut $(A,B)$ in $G$, computes cut $(A',B')$ in $G$ with the above properties in time $O(\min\set{\vol_G(A),\vol_G(B)})$.
\end{observation}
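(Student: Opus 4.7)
My plan is to construct $(A',B')$ from $(A,B)$ by iteratively fixing the offending edges. Call a special edge $(u,v) \in E_G(B,A)$ (so $u \in B\cap R$ and $v \in A\cap L$) a \emph{bad pair}: these are exactly the edges that obstruct strong well-structuredness of $(A,B)$, since perfect well-structuredness means the special edges form a perfect matching between $R$ and $L$. I enumerate all bad pairs and, for each pair $p = (u,v)$, choose one of two local moves: either shift $u$ from $B$ to $A$, or shift $v$ from $A$ to $B$. Either choice eliminates $p$ from the cut. Using the bipartite orientation of $G$ (a vertex in $L$ has only incoming special and outgoing regular edges, and symmetrically for $R$) together with the hypothesis that $(A,B)$ is weakly well-structured (so no regular $A\to B$ edges exist), I will verify that both moves can only convert incident edges into internal edges or into new $B\to A$ regular edges, both of which are permissible in a strongly well-structured cut. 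In particular $|E_G(A',B')|$ never exceeds $|E_G(A,B)|$, and no bad pair other than the one currently being fixed is disturbed.

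The main point is to choose the two options at each step so that the side-weights remain balanced. Let $W_A=w(A)$, $W_B=w(B)$ (so $W_A+W_B=2W$), and let $T$ denote the running net weight transferred from $A$ to $B$, with each move altering $T$ by $\pm\ell(p)$. I will process the bad pairs in arbitrary order, at each step choosing the option whose resulting $T$ remains in the target interval $I = [-W_B/2,\, W_A/2]$. This choice is always available: if both options pushed $T$ outside $I$, subtracting the two resulting inequalities would give $2\ell(p) > W_A/2 + W_B/2 = W$, contradicting the hypothesis $\ell(e)\leq W/2$. Since $T=0\in I$ initially, a simple induction maintains $T\in I$ throughout. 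This balancing argument is the only place where the hypothesis on edge lengths is used, and I expect it to be the main (though brief) technical step in the proof.

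At the end of the procedure every bad pair has been eliminated, so $(A',B')$ is strongly well-structured, and the invariant yields $w(A')=W_A-T\geq W_A/2\geq w(A)/4$ and analogously $w(B')\geq w(B)/4$. Combining $|E_G(A',B')|\leq |E_G(A,B)|\leq (\phi/2)\cdot \min\set{W_A,W_B}$ with $\min\set{w(A'),w(B')}\geq \min\set{W_A,W_B}/2$, the sparsity of the new cut is at most $\phi$. For the running time, bad pairs are enumerated by scanning either $A\cap L$ (following each vertex's unique incoming special edge to check whether the matched $R$-endpoint lies in $B$) or by the symmetric scan over $B\cap R$; interleaving the two scans identifies every bad pair in $O(\min\set{|A|,|B|})\leq O(\min\set{\vol_G(A),\vol_G(B)})$ time. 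Each move updates $O(1)$ pieces of set-membership data, so the total cost stays within $O(\min\set{\vol_G(A),\vol_G(B)})$, as required.
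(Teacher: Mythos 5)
Your proof is correct, and at its core it is the same transformation as the paper's: for each offending special edge of $E_G(B,A)$ you move exactly one of its two endpoints across the cut, and your structural justification (a vertex of $L$ has only its special in-edge and regular out-edges, and symmetrically for $R$, so neither move can create a new $A\to B$ edge or a new special $B\to A$ edge, and since the special edges form a perfect matching the bad pairs are vertex-disjoint) is precisely the argument in the paper. The only genuine difference is the balancing rule and its analysis: the paper decides greedily by comparing the current side weights ($w(A')\le w(B')$ moves the $R$-endpoint into $A'$, otherwise the $L$-endpoint into $B'$) and then bounds the weight via the last removal event, using $\ell(e)\le W/2$ to bound the weight of that last vertex; you instead maintain the invariant that the net transferred weight $T$ stays in $[-w(B)/2,\,w(A)/2]$, using $\ell(e)\le W/2$ through the subtraction argument (and implicitly the fact that, given $T$ in the interval, each option can only exit on its far side, which is immediate since $\ell(p)>0$). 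Your invariant even gives the slightly stronger bounds $w(A')\ge w(A)/2$ and $w(B')\ge w(B)/2$; both analyses yield the claimed sparsity and the $O(\min\{\vol_G(A),\vol_G(B)\})$ running time in the same way.
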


\subsection{The \maintaincluster Problem}
We abstract the central technical tool that we use in order to maintain expander-like graphs by the \maintaincluster problem, that we define next. 

As an input to the \maintaincluster problem, we are given a {\bf perfect} well-structured graph $G$ with a proper assignment of lengths $\ell(e)$ to its edges $e\in E(G)$, that subsequently undergoes some updates, though $G$ remains a perfect well-structured graph throughout the algorithm. 
Recall that we have defined vertex weights for perfect well-structured graphs, where the weight of a vertex $v$ is the length of the unique special edge incident to $v$. For every vertex $v\in V(G)$, we denote by $w^0(v)$ its initial weight in the initial input graph $G$, and we denote $W^0(G)=\sum_{v\in V(G)}w^0(v)$. 
We are now ready to define the \maintaincluster problem.

\begin{definition}\label{def: maintaincluster problem}
The input to the \maintaincluster problem consists of a perfect well-structured graph $G=(L,R,E)$ with $|V(G)|=n$, together with a proper assignment of lengths $\ell(e)$ to its edges $e\in E$, and parameters $N\geq W^0(G)$, $1\leq \Delta\leq N$, $1\leq \eta \leq \Delta$, and $d\geq (\log N)^{64}$, such that $\frac{\Delta\cdot d}{\eta}\leq n$, and $N$ is greater than a sufficiently large constant, so that, for example, $N>2^{2^{c_2}}$ holds, where $c_2$ is the constant from \Cref{thm: main for i-restricted route and cut}. 

The algorithm consists of at most $\Delta$ iterations.
At the beginning of every iteration $i$, the algorithm is given two vertices $x,y\in V(G)$, and it must return a simple path $P_i$ in $G$ connecting $x$ to $y$ of length at most $d$. (We may sometimes say that the algorithm is given a \shortpath query between the pair $(x,y)$ of vertices, and it returns the path $P_i$ in response to the query). After that the length of some special edges on path $P_i$ may be doubled. We are guaranteed that the following property holds:

\begin{properties}{P}
	\item If $\tau'>\tau$ are two times at which the length of some edge $e\in E(G)$ is doubled, then the algorithm returned at least $\eta$ paths in response to queries during the time interval $(\tau,\tau']$ that contained $e$.
	\label{maintaincluster-prop: edge congestion}
\end{properties}

The algorithm may, at any time, produce a strongly well-structured cut $(X,Y)$ in $G$ of sparsity $\Phi_G(X,Y)\leq \frac{(\log N)^{64}}{d}$.
If $w(X)\leq w(Y)$, then let $Z=X$ and $Z'=Y$; otherwise, let $Z=Y$ and $Z'=X$. We also let $J'\subseteq Z'$ denote the set of vertices that serve as endpoints of the edges of $E_G(X,Y)$. The vertices of $Z\cup J'$ are then deleted from $G$, and the algorithm continues with the resulting graph $G=G[Z'\setminus J']$. Note that we are guaranteed that $G$ remains a perfect well-structured graph  after this modification. Once $|V(G)|\leq n/2$ holds, the algorithm terminates, even if fewer than $\Delta$ iterations have passed.

The algorithm may also, at any time, return ``FAIL''. The probability that the algorithm ever returns ``FAIL'' must be bounded by $1/N^4$. 

For an integer $1\leq r\leq \ceil{\sqrt{\log N}}$, we say that an instance $(G,\Delta,\eta,d,N)$ of the \maintaincluster problem is $r$-restricted, if $d\leq 2^{r\cdot\sqrt{\log N}}$.
\end{definition}

We prove the following theorem that provides an algorithm for the \maintaincluster problem. Though we do not use the theorem directly, it easily follows from our results and we believe that it is interesting in its own right.

\begin{theorem}\label{thm: alg for maintain cluster}
	There is a randomized algorithm for the \maintaincluster problem, that, on an instance $(G,\Delta,\eta,d,N)$ of the problem, has running time at most $O\left(W^0(G)\cdot (W^0(G)+\Delta)\cdot 2^{O(\sqrt{\log N}\cdot \log\log N)}\right)$.
\end{theorem}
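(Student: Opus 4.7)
The plan is to derive \Cref{thm: alg for maintain cluster} directly from two building blocks already at hand: the algorithm for \routeandcut from \Cref{thm: main for route and cut}, and the reduction from $r$-restricted \maintaincluster to $r$-restricted \routeandcut promised in \Cref{sec: alg for maintaincluster from routeandcut}.

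First, I would observe that any \maintaincluster instance is automatically $r$-restricted for some $r \leq \ceil{\sqrt{\log N}}$: the constraint $\Delta \cdot d / \eta \leq n$ together with $\eta \leq \Delta$ and $n \leq W^0(G) \leq N$ gives $d \leq n \leq N$, so choosing $r = \ceil{\log d / \sqrt{\log N}}$ works. Therefore, the algorithm produced by the reduction for $r$-restricted \maintaincluster applies to the given instance, and by \Cref{thm: main for route and cut} every internal \routeandcut call made by that reduction on a subinstance with $n'$ vertices and target set $B'$ runs in time $O\bigl(n' \cdot (n' - |B'|) \cdot 2^{O(\sqrt{\log N}\log\log N)}\bigr)$.

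The reduction itself will operate on the subdivided graph $G^+$, which has $\Theta(W^0(G))$ vertices; this is the natural setting because $G^+$ carries a proper uniform edge-length assignment, so the expander-based machinery can be run off the shelf. Its structure will follow the paradigm used in \cite{SCC, CK24}: play the Cut-Matching game on $G^+$ using \routeandcut to implement the matching player, thereby embedding a large expander $\hat G$ into $G^+$; maintain, via the \connecttocenters subroutine, short paths from every vertex of $G^+$ to $V(\hat G)$; answer a \shortpath query between $x$ and $y$ by composing the connecting path from $x$, a short BFS path inside $\hat G$, and the reverse connecting path to $y$; and, whenever the matching player fails to route, extract a sparse cut in $G^+$ and convert it to a strongly well-structured sparse cut of the required sparsity in $G$ via \Cref{obs: subdivided cut sparsity} and \Cref{obs: weakly to strongly well str}, which the algorithm outputs as the cluster split.

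The main obstacle will be the amortized running-time analysis. To obtain the claimed $W^0(G)\cdot (W^0(G)+\Delta)\cdot 2^{O(\sqrt{\log N}\log\log N)}$ bound I would need to account separately for (i) the initial Cut-Matching embedding cost, which I expect to be $O(W^0(G)^2 \cdot 2^{O(\sqrt{\log N}\log\log N)})$ since each of the $O(\log N)$ cut-matching rounds invokes \routeandcut on the full subdivided graph; (ii) the cost of answering up to $\Delta$ \shortpath queries, totaling $O(\Delta \cdot W^0(G) \cdot 2^{O(\sqrt{\log N}\log\log N)})$, whose bound relies on the layered design of \connecttocenters so that reconstruction of layer $i$ is triggered at most once per $2^{\lambda-i}$ disconnection events and each such reconstruction calls \routeandcut on a set of size at most $2^{\lambda-i}$; and (iii) all cluster-splitting work, bounded by $O(W^0(G)^2 \cdot 2^{O(\sqrt{\log N}\log\log N)})$ via a potential argument exploiting that at each split of sparsity at most $(\log N)^{64}/d$ the smaller side (together with the cut's endpoints) is deleted, so the total weight ever removed is $O(W^0(G))$, and each removed unit of weight charges at most $O(W^0(G) \cdot 2^{O(\sqrt{\log N}\log\log N)})$ work to reinitialize its replacement cluster. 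Summing (i), (ii), (iii) yields the claimed bound, with the polylogarithmic overhead from the $(\log N)^{64}$ sparsity threshold and the $O(\log N)$ Cut-Matching rounds absorbed into the $2^{O(\sqrt{\log N}\log\log N)}$ slack.
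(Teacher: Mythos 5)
Your derivation is correct and follows essentially the paper's own route: the paper likewise observes that $d\le n\eta/\Delta\le n\le N$ makes every instance $r$-restricted for $r=\ceil{\sqrt{\log N}}$, and then simply instantiates the already-proved $r$-restricted result (\Cref{thm: main for r-restricted maintaincluster}), absorbing the resulting $(\log N)^{O(\sqrt{\log N})}$ factor into $2^{O(\sqrt{\log N}\cdot \log\log N)}$. The only difference is cosmetic: rather than citing that packaged theorem, you re-run the reduction of \Cref{thm: from routeandcut to maintaincluster} against the unrestricted \routeandcut bound of \Cref{thm: main for route and cut}, which is harmless since the extra $2^{O(\sqrt{\log N}\cdot\log\log N)}$ overhead is absorbed by the claimed running time.
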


\subsection{The \maintainspeccluster Problem}

We will sometimes need to use a slight variation of the \maintaincluster problem, that we refer to as the \maintainspeccluster problem. In this problem, as before, we are given  a  perfect well-structured graph $G$ with proper lengths $\ell(e)$ on its edges $e\in E(G)$, that subsequently undergoes some updates, though $G$ remains a perfect well-structured graph throughout the algorithm. 
Graph $G$ is given in the adjacency-list representation, and initially all special edges have length $1$.
For every vertex $v\in V(G)$, we define its initial weight $w^0(v)$ exactly as before, and we denote $W^0(G)=\sum_{v\in V(G)}w^0(v)$ as before. 
In addition to the perfect well-structured graph $G$, the algorithm is given as input a subset $\beta$ of vertices of $G$. Over the course of the algorithm, some vertices may be deleted from the set $\beta$, but we will ensure that, throughout the algorithm, every special edge incident to a vertex in $\beta$ has length $1$. For intuition, the vast majority of the vertices of the input graph $G$ lie in $\beta$, and this property is preserved throughout the entire algorithm. Unlike the \maintaincluster problem, in a \shortpath query, we are now given a single vertex $x\in V(G)$, and we need to return a path $P$ in the current graph $G$ that connects $x$ to any vertex of $\beta$, such that no inner vertices of $P$ lie in $\beta$. The last endpoint of the path $P$ is then deleted from $\beta$, and the lengths of some edges of the path $P$ may be doubled, like in the \maintaincluster problem. Like in the \maintaincluster problem, the algorithm may, at any time, return a sparse strongly well-structured cut $(X,Y)$ in $G$, but now we require that $w(X)\geq 1.8w(\beta \cap X) = 1.8|\beta\cap X|$, that is, much of the weight of the vertices in $X$ must come from vertices that do not lie in $\beta$. The vertices of $X$, and all endpoints of the edges in $E_G(X,Y)$ are then deleted from $G$, and the algorithm continues. 
We are now ready to define the \maintainspeccluster problem formally.

\begin{definition}\label{def: maintainspeccluster problem}
The input to the \maintainspeccluster problem consists of a perfect well-structured graph $G=(L,R,E)$ with a proper assignment of lengths $\ell(e)$ to its edges $e\in E$, that are given in the adjacency-list representation, such that the length of every special edge is $1$. Additionally, we are given a subset $\beta\subseteq V(G)$ of  vertices with $|\beta|\geq \frac{99|V(G)|} {100}$, and parameters $N\geq W^0(G)$, $1\leq \Delta\leq \frac{|\beta|}{(\log N)^{128}}$, $1\leq \eta \leq \Delta$, and $d\geq (\log N)^{64}$, such that $\frac{\Delta\cdot d}{\eta}\leq |V(G)|-|\beta|+\Delta\cdot (\log N)^{64}$, and $N$ is greater than a sufficiently large constant. Over the course of the algorithm, vertices may be deleted from set $\beta$, and the lengths of some special edges may be increased, but the length of every special edge incident to a vertex that currently lies in $\beta$ is always $1$.

The algorithm consists of at most $\Delta$ iterations.
At the beginning of every iteration $i$, the algorithm is given a vertex $x\in V(G)$, and it must return a simple path $P_i$ in $G$ connecting $x$ to any vertex  $y\in \beta$, such that the length of the path is at most $d$, and no inner vertices on the path lie in $\beta$ (if $x\in \beta$, that we require that path $P_i$ only consists of the vertex $x$). After that, the length of some special edges on path $P_i$ may be doubled, and vertex $y$ is deleted from $\beta$. We are guaranteed that Property \ref{maintaincluster-prop: edge congestion} from the definition of the \maintaincluster problem holds.

The algorithm may, at any time, produce a strongly well-structured cut $(X,Y)$ in $G$ of sparsity $\Phi_G(X,Y)\leq \frac{(\log N)^{64}}{d}$, such that $w(X)\geq 1.8|\beta\cap X|$ holds. Let $J'\subseteq Y$ denote the set of vertices that serve as endpoints of the edges of $E_G(X,Y)$. The vertices of $X\cup J'$ are then deleted from $G$, and the algorithm continues with the resulting graph $G=G[Y\setminus J']$. Note that we are guaranteed that $G$ remains a perfect well-structured graph after this modification.

The algorithm may also, at any time, return ``FAIL''. The probability that the algorithm ever returns ``FAIL'' must be bounded by $1/4$. Lastly, if the number of vertices in $\beta$ decreases by a factor $2$, the algorithm terminates.

For an integer $1\leq r\leq \ceil{\sqrt{\log N}}$, we say that an instance $(G,\beta,\Delta,\eta,d,N)$ of the \maintainspeccluster problem is $r$-restricted, if $d\leq 2^{r\cdot\sqrt{\log N}}$.
\end{definition}

We prove the following theorem that provides an algorithm for the \maintainspeccluster problem. %Though we do not use the theorem directly, it easily follows from our results and we believe that it is interesting in its own right.

\begin{theorem}\label{thm: alg for maintainspec cluster}
	There is a randomized algorithm for the \maintainspeccluster problem, that, on an instance $(G,\beta,\Delta,\eta,d,N)$ of the problem, has running time at most: 
	
	\[O\left(W^0(G)\cdot(W^0(G)-|\beta|+\Delta)\cdot 2^{O(\sqrt{\log N}\cdot \log\log N)}\right).\] 
%	
%	where $\beta'$ is the set $\beta$ at the end of the algorithm.
\end{theorem}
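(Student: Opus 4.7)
The plan is to build an algorithm for the \maintainspeccluster problem by adapting the algorithm underlying \Cref{thm: alg for maintain cluster} to exploit the additional structure present here. The crucial difference from \maintaincluster is that a query now asks only for a short path from $x$ to \emph{any} vertex in the large set $\beta$, rather than between two arbitrary vertices. This matches exactly the specification of the \connecttocenters subroutine described in the overview, with $\beta$ playing the role of the ``center'' set $T$. The improved running time bound of $W^0(G)\cdot(W^0(G)-|\beta|+\Delta)$ (instead of $W^0(G)\cdot(W^0(G)+\Delta)$) reflects that we need to route efficiently only to the reservoir $\beta$, and that most work is driven by the $W^0(G)-|\beta|$ vertices initially outside $\beta$ together with the $\Delta$ new disconnections induced by queries.

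The algorithm maintains a \connecttocenters data structure on $G$ with center set $\beta$ and a per-layer distance parameter $d/\lambda$ for $\lambda=O(\log N)$. This data structure maintains, for every vertex $v\in V(G)$, a path $P(v)$ of length at most $d$ from $v$ to some vertex of $\beta$, with edge-congestion $\tilde O(d)$. To answer a query for $x$, we return the currently maintained $P(x)$ and then delete its endpoint from $\beta$; this realizes the center-deletions that \connecttocenters is designed to handle. For the edge-length-doubling updates imposed by Property \ref{maintaincluster-prop: edge congestion}, we exploit the insight from the overview that length increases are much gentler than edge deletions: a path $P(v)$ through an edge whose length doubles need not be rebuilt immediately, but only once its total length exceeds the threshold $d$. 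This lets $P(v)$ survive across many edge-length doublings before requiring reconnection. The cuts produced by \connecttocenters are sparse, weakly well-structured, and satisfy $|X\cap\beta|\ll |X|$; we then apply \Cref{obs: subdivided cut sparsity} and \Cref{obs: weakly to strongly well str} to obtain a strongly well-structured sparse cut with $w(X)\ge 1.8|X\cap\beta|$, matching the output specification. The termination condition of \Cref{def: maintainspeccluster problem} (that the algorithm halts once $|\beta|$ drops by a factor of $2$) ensures we never leave the regime in which \connecttocenters is guaranteed to work efficiently.

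The main obstacle is to bound the running time to $W^0(G)\cdot (W^0(G)-|\beta|+\Delta)\cdot 2^{O(\sqrt{\log N}\log\log N)}$. The layered strategy of \connecttocenters partitions the currently disconnected vertices into $O(\log N)$ layers $U_1,\ldots,U_\lambda$ satisfying $|U_i|\le 2^{\lambda-i}$, and rebuilds layer $U_i$ from scratch whenever $\Theta(2^{\lambda-i})$ new disconnections have accumulated since its last rebuild. Each rebuild invokes the $r$-restricted \routeandcut algorithm, which by the inductive hypothesis underlying \Cref{thm: main for i-restricted route and cut} has running time $n\cdot(n-|B|)\cdot 2^{O(\sqrt{\log N}\log\log N)}$; we arrange each such call so that $n-|B|=O(2^{\lambda-i})$, giving a single layer-$i$ rebuild cost of $W^0(G)^{1+o(1)}\cdot 2^{\lambda-i}$. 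Summing over the number of times layer $i$ is rebuilt, the total work devoted to layer $i$ is $W^0(G)^{1+o(1)}$ times the total number of new disconnections observed at that layer. The total number of disconnections across the entire execution is bounded, up to polylogarithmic factors, by the initial number of non-$\beta$ vertices $W^0(G)-|\beta|$ plus the number of queries $\Delta$, since every disconnection is either inherent (a vertex that was never in $\beta$) or triggered by a query removing an endpoint from $\beta$. Summing across all $O(\log N)$ layers and absorbing logarithmic overheads into the $2^{O(\sqrt{\log N}\log\log N)}$ factor then yields the claimed bound.
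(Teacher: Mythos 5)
Your proposal is correct and follows essentially the same route as the paper: the paper obtains this theorem by specializing its $r$-restricted version to $r=\ceil{\sqrt{\log N}}$, and the algorithmic content of that version is exactly your plan — run a \connecttocenters/layered-connectivity data structure with $\beta$ as the center set, rebuild layer $i$ via calls to the (inductively available) $r$-restricted \routeandcut algorithm arranged so that $n-|B|=\tilde O(2^{\lambda-i})$, answer each query from the maintained paths and delete the reached center from $\beta$, treat edge-length doublings lazily, and convert the resulting sparse cuts into strongly well-structured cuts satisfying $w(X)\geq 1.8|\beta\cap X|$ via \Cref{obs: subdivided cut sparsity} and \Cref{obs: weakly to strongly well str}. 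The only place your sketch is looser than the paper is the disconnection count: the paper does not rely on the ``inherent or query-triggered'' dichotomy alone, but bounds the disconnections caused by path-length growth through a bound on the total edge-length increase (using Property \ref{maintaincluster-prop: edge congestion} together with the guarantee $\frac{\Delta\cdot d}{\eta}\leq |V(G)|-|\beta|+\Delta\cdot(\log N)^{64}$) and charges disconnections caused by the output cuts to the deleted vertex weight, which the guarantee $w(X)\geq 1.8|\beta\cap X|$ keeps within $\tilde O(W^0(G)-|\beta|+\Delta)$ — all of which fits inside your framework.
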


In order to prove Theorems \ref{thm: alg for maintain cluster} and \ref{thm: alg for maintainspec cluster}, we use induction. Specifically, we will consider $r$-restricted instances of both problems problem, from smaller to larger values of $r$. The following theorem summarizes our main result for the $r$-restricted \maintaincluster and \maintainspeccluster problems. It uses the large enough constants $c_1\gg c_2$ that also appear in the statment of \Cref{thm: main for i-restricted route and cut}.

\begin{theorem}\label{thm: main for r-restricted maintaincluster}
	For all $r\geq 1$, there is a randomized algorithm for the $r$-restricted \maintaincluster problem, that, on an instance $(G,\Delta,\eta,d,N)$ of the problem, has running time at most: 
	
	\[c_1\cdot (W^0(G))^2\cdot 2^{c_2\sqrt{\log N}}\cdot (\log N)^{16c_2r}+c_1W^0(G)\cdot \Delta\cdot \log^4 N,\]
	
	 and there is a randomized algorithm for the the $r$-restricted \maintainspeccluster problem, that, on an instance $(G,\beta,\Delta,\eta,d,N)$ of the problem, has running time at most: 
	
	\[c_1\cdot W^0(G)\cdot (W^0(G)-|\beta|+\Delta)\cdot 2^{c_2\sqrt{\log N}}\cdot (\log N)^{16c_2r}.\]
%	
%	where $\beta'$ is the set $\beta$ of vertices at the end of the algorithm.
\end{theorem}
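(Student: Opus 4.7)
The plan is to prove both parts of \Cref{thm: main for r-restricted maintaincluster} together by a reduction from the $r$-restricted \maintaincluster and \maintainspeccluster problems to the $r$-restricted \routeandcut problem, invoking \Cref{thm: main for i-restricted route and cut} as a black box. This fits the bootstrapping pattern described in the introduction: \routeandcut at restriction level $r$ feeds the \maintaincluster algorithm at the same level $r$, which is then used to climb to level $r+1$ of \routeandcut in a separate argument. We may also use \Cref{cor: ind base} to handle degenerate parameter regimes (for example when $N$ is a constant, or $\Delta$ is $1$), so the induction is really on the structural complexity rather than on $r$ itself.

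At a high level I would follow the standard expander-based paradigm used by \cite{SCC} and \cite{CK24}, but with two adaptations tailored to our setting. First, I would work inside the subdivided graph $G^+$ of \Cref{def: subdivided graph}, where all special edges have length $1$, so that expander and cut-matching notions apply directly; sparse well-structured cuts in $G^+$ are then pulled back to strongly well-structured cuts of the required sparsity in $G$ through \Cref{obs: subdivided cut sparsity} followed by \Cref{obs: weakly to strongly well str}. Second, I would run a cut-matching game to embed a bounded-degree expander $\hat G$ (constructed via \Cref{thm:explicit expander}, extended by \Cref{obs: expander plus matching}) into $G^+$ via short low-congestion paths. The Cut Player's rounds are performed explicitly in $O(|V(\hat G)|^2)$ time, while every Matching Player round is implemented by a single call to the $r$-restricted \routeandcut algorithm of \Cref{thm: main for i-restricted route and cut}. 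If the Matching Player fails to route enough flow, we harvest a sparse cut of $G^+$ from the returned cut-witness and convert it to a strongly well-structured cut of $G$ of sparsity $\tilde O(1/d)$, which is precisely the cut-production interface of the \maintaincluster problem.

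With the expander embedded, I would maintain, for every $v \in V(G)$, a short path $P(v)$ from $v$ to some ``center'' vertex in $V(\hat G)$, together with the embedding paths of $\hat G$, using the algorithm for the \connecttocenters problem stated in the techniques overview (\Cref{sec: layered connectivity DS}). A \shortpath query between $x$ and $y$ is answered by concatenating $P(x)$, a BFS path inside $\hat G$, and the reversal of $P(y)$ composed with the embedding; for the \maintainspeccluster variant, queries terminate at the first vertex of $\beta$ encountered along the layered structure. The critical point is that \connecttocenters is itself solved by repeated calls to \routeandcut on instances where the ``destination'' set $B$ contains nearly all vertices, so the $n\cdot(n-|B|)$ form of the running time in \Cref{thm: main for i-restricted route and cut} directly yields the required $W^0(G)\cdot (W^0(G)-|\beta|+\Delta)$ dependence. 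Edge-length increases are handled natively (not via edge deletions), so an increase of $\ell(e)$ only triggers re-embedding of an $\hat G$-edge once the embedded path's length grows by a constant factor, keeping total work over the time horizon bounded by the sum of length increments, which in turn is controlled by the \MWU-style property \ref{maintaincluster-prop: edge congestion}.

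The running time then decomposes as the cost of the cut-matching game and embedding (a few \routeandcut calls on $G^+$), the amortized cost of maintaining \connecttocenters under length increases, and the per-query cost of $\tilde O(d)$ paid up to $\Delta$ times, yielding the two stated bounds once \Cref{thm: main for i-restricted route and cut} is plugged in. The main obstacle I expect is achieving the $(W^0(G)-|\beta|+\Delta)$ factor rather than $(W^0(G))^2$ in the \maintainspeccluster bound: this forbids ever running \routeandcut on an instance with $n-|B|$ comparable to $W^0(G)$. I would overcome this by choosing the embedded expander to have size $\Theta(W^0(G)-|\beta|+\Delta)$ (using the vertices outside $\beta$ plus buffers for future query targets as its ground set), and by ensuring that every \connecttocenters layer reconstruction operates on a graph where the already-connected layer is dominated by $\beta$-vertices, so that each \routeandcut invocation inside \connecttocenters sees $n-|B| = \tilde O(W^0(G)-|\beta|+\Delta)$. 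Threading this size constraint through the cut-matching analysis, the layered reconstruction schedule, and the cut-production interface, while maintaining the sparsity bound $\Phi_G(X,Y) \le (\log N)^{64}/d$ through the compositions of Observations \ref{obs: subdivided cut sparsity}--\ref{obs: weakly to strongly well str}, is the principal technical work of the proof.
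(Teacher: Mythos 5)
Your overall architecture is the paper's: you take the $r$-restricted \routeandcut algorithm of \Cref{thm: main for i-restricted route and cut} as given (which is legitimate, since in the mutual induction the level-$(r+1)$ \routeandcut only consumes level-$r$ \maintaincluster, so there is no circularity), and you prove the statement via the reduction that is the paper's \Cref{thm: from routeandcut to maintaincluster} -- subdivided graph, cut-matching game with \routeandcut as the Matching Player, \connecttocenters to route all vertices to the expander, native handling of length increases, and cut pull-back through \Cref{obs: subdivided cut sparsity} and \Cref{obs: weakly to strongly well str}. At that level the \maintaincluster half is the same approach as the paper's. However, there is a genuine gap in how you make the cut-production interface work. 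You never say what the expander's vertex set is or why cuts certified against it are acceptable outputs. In the paper the expander is built on a \emph{random} terminal sample $T$ (each vertex of the subdivided graph independently with probability $1/d'$), and essentially all of the difficulty of the reduction is in the probabilistic analysis (bad events $\event_1$--$\event_4$, canonical cuts, finalizing cuts) showing that, with high probability, any small special-edge cut that separates an $\Omega(1/\polylog N)$ fraction of $T$ -- which is all a failed Matching-Player round or a degraded expander gives you -- can be converted into a strongly well-structured cut of $G$ of sparsity at most $(\log N)^{64}/d$ whose \emph{both} sides carry weight $\Omega(\hat W/\polylog N)$, so that outputting it makes enough progress to end the phase. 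A cut returned by \routeandcut separates terminals but may put almost all non-terminal vertices of $H$ on one side; a deterministic explicit expander over an unspecified ground set, as in your sketch, gives you no handle on this, and without the sampling argument (and the accompanying phase/restart and shadow-expander bookkeeping) the step ``harvest a sparse cut and output it'' does not meet the problem's requirements.

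For \maintainspeccluster your plan diverges from the paper and, as stated, is misdirected. You propose embedding an expander of size $\Theta(W^0(G)-|\beta|+\Delta)$ whose ground set is the vertices outside $\beta$ (plus ``buffers''), but queries in \maintainspeccluster must terminate at vertices of $\beta$, which is almost all of the graph; an expander over the complement of $\beta$ is not a routing hub for such queries, and in fact no expander or cut-matching game is needed here at all: the paper simply runs \connecttocenters with $\beta$ itself as the center set and answers a query by walking down the layered structure until a $\beta$-vertex is reached. More importantly, the factor $(W^0(G)-|\beta|+\Delta)$ in the running time does not come from the size of any embedded object; it comes from amortization bounds (the paper's \Cref{claim: bound lambda and deleted from beta} and \Cref{cor: small weight}) showing that the total edge-length increase $\Lambda$ and the total number of vertices ever removed from $\beta$ are $O\bigl((W^0(G)-|\beta|+\Delta\cdot(\log N)^{64})\cdot\polylog N\bigr)$, which in turn uses Property \ref{maintaincluster-prop: edge congestion}, the input constraint $\Delta d/\eta\le |V(G)|-|\beta|+\Delta(\log N)^{64}$, and crucially the requirement $w(X)\ge 1.8|\beta\cap X|$ on every produced cut (so that each cut charges the $\beta$-vertices it deletes against non-$\beta$ weight); one also needs these bounds to rule out the ``large one-sided cut in $H$'' outcome of \connecttocenters. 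Your sketch neither identifies these amortization arguments nor explains how your complement-of-$\beta$ expander would let queries reach $\beta$, so this half of the proposal needs to be reworked along the lines above before it constitutes a proof.
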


Note that the proof of \Cref{thm: alg for maintain cluster} easily follows from \Cref{thm: main for r-restricted maintaincluster}. Indeed, consider any instance  $(G,\Delta,\eta,d,N)$ of the \maintaincluster problem. From the definition of the problem, $d\leq \frac{n\eta}{\Delta}\leq n\leq N$ must hold, and so in particular $d\leq 2^{r\cdot \sqrt{\log N}}$, for $r=\ceil{\sqrt{\log N}}$. Therefore, any instance of the problem is $r$-restricted for $r=\ceil{\sqrt{\log N}}$. By applying \Cref{thm: main for r-restricted maintaincluster} with $r=\ceil{\sqrt{\log N}}$, we obtain an algorithm for the given instance of the \maintaincluster problem, whose running time is bounded by: 

\[\begin{split}
&O\left( W^0(G)\cdot (W^0(G)+\Delta)\cdot 2^{O(\sqrt{\log N})}\cdot (\log N)^{O(\sqrt{\log N})}\right )\\
&\quad\quad\quad\quad\leq O\left(W^0(G)\cdot(W^0(G)+\Delta)\cdot 2^{O(\sqrt{\log N}\cdot \log \log N)}\right ),
\end{split}\] 

as required.

Similarly, the proof of \Cref{thm: alg for maintainspec cluster} easily follows from \Cref{thm: main for r-restricted maintaincluster}. Indeed, consider any instance  $(G,\beta,\Delta,\eta,d,N)$ of the \maintainspeccluster problem, and denote $n=|V(G)|$. As before, from the definition of the problem, $d\leq \frac{n\eta}{\Delta}\leq n\leq N$ must hold, and so in particular $d\leq 2^{r\cdot \sqrt{\log N}}$, for $r=\ceil{\sqrt{\log N}}$. Therefore, any instance of the problem is $r$-restricted for $r=\ceil{\sqrt{\log N}}$. By applying \Cref{thm: main for r-restricted maintaincluster} with $r=\ceil{\sqrt{\log N}}$, we obtain an algorithm for the given instance of the \maintainspeccluster problem, whose running time is bounded by: 

\[\begin{split}
O\left(W^0(G)\cdot (W^0(G)-|\beta|+\Delta)\cdot 2^{O(\sqrt{\log N})}\cdot (\log N)^{O(\sqrt{\log N})}\right )&\\
\quad\quad\quad\quad\quad\quad\quad\quad\quad\quad\quad\quad\quad\quad\quad\quad\leq O\left(W^0(G)\cdot (W^0(G)-|\beta|+\Delta)\cdot 2^{O(\sqrt{\log N}\cdot \log \log N)}\right ).
\end{split}\] 

%where $\beta'$ is the set $\beta$ at the end of the algorithm.

It now remains to complete the proofs of Theorems \ref{thm: main for i-restricted route and cut} and \ref{thm: main for r-restricted maintaincluster}.

\subsection{Completing the Proofs of \Cref{thm: main for i-restricted route and cut} and \Cref{thm: main for r-restricted maintaincluster}}

We prove the two theorems together, by induction on $r$. Specifically, we first show that, given an efficient algorithms for the $r$-restricted \maintaincluster and \maintainspeccluster problems, we can obtain an efficient algorithm for the $(r+1)$-restricted \routeandcut problem. This result is summarized in the following theorem, whose proof is deferred to \Cref{sec: SSSP alg}.

\begin{theorem}\label{thm: from maintaincluster to roundandcut}
	Suppose that, for some parameter $r\geq 1$,
	there exists a randomized algorithm for the $r$-restricted \maintaincluster problem,  that, given as input  an instance
	$(G,\Delta,\eta,d,N)$ of the problem, has running time at most: 
	$c_1\cdot(W^0(G))^2\cdot 2^{c_2\sqrt{\log N}}\cdot (\log N)^{16c_2r}+c_1\cdot W^0(G)\cdot \Delta\cdot \log^4 N$.
	Assume further that there also exists  a randomized algorithm for the $r$-restricted \maintainspeccluster problem,  that, on  an instance
	$(G,\beta,\Delta,\eta,d,N)$ of the problem, has running time at most: 
	$c_1\cdot W^0(G)\cdot (W^0(G)-|\beta|+\Delta)\cdot 2^{c_2\sqrt{\log N}}\cdot (\log N)^{16c_2r}$. %, where $\beta'$ is the set $\beta$ of vertices at the end of the algorithm.
	Then there is a randomized algorithm for the $(r+1)$-restricted \routeandcut problem, that, given as input
	an instance	$(G,A,B,\Delta,\eta,N)$ of the problem with $|V(G)|=n$,
	has running time at most $c_1\cdot n\cdot (n-|B|)\cdot  2^{c_2\sqrt{\log N}}\cdot (\log N)^{16c_2r+8c_2}$. 
\end{theorem}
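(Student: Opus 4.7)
The high-level plan is to invoke \Cref{thm: oracle to routing}, which reduces $(r+1)$-restricted \routeandcut to $(r+1)$-restricted \stSP at an additive cost of $O(n\cdot (n-|B|))$. It therefore suffices to design an algorithm for $(r+1)$-restricted \stSP with running time at most roughly $n\cdot(n-|B|)\cdot 2^{c_2\sqrt{\log N}}\cdot(\log N)^{16c_2 r + 8c_2 - O(1)}$. I would build this algorithm following the \ATO plus \DLSSSP approach sketched in Sections~\ref{sec: ATO} and~\ref{sec: SSSP in almost DAGS}, with one crucial structural addition: a single \emph{special cluster} that absorbs almost all of $V(G)\setminus\{s,t\}$ and whose maintenance is delegated to the $r$-restricted \maintainspeccluster algorithm, while clusters split off from it are handed to the $r$-restricted \maintaincluster algorithm.

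Concretely, set the intra-cluster distance threshold $d := 2^{r\sqrt{\log N}}\cdot(\log N)^{O(1)}$, which is exactly within the $r$-restricted regime. Initialize the \ATO with $\xset=\{\{s\},\{t\},U\}$, where the special cluster is $U = V(G)\setminus\{s,t\}$, and initialize \maintainspeccluster on $G[U]$ with $\beta$ equal to the set of vertices whose incident special edge currently has unit length (so $|\beta|\ge 0.99|U|$ at the outset since all special edges start at length $1$). Whenever the cluster algorithm returns a sparse strongly well-structured cut, split off the light side as a new cluster $X$, consistently update $\rho$, and spin up a fresh $r$-restricted \maintaincluster instance on $G[X]$ at the same distance threshold $d$; once $|\beta|$ drops by a factor of two inside $U$, restart \maintainspeccluster from scratch on the residual $U$. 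In parallel, I would maintain the \DLSSSP data structure of \Cref{thm: almost DAG routing} on the contracted multigraph $\hat H = G_{|\xset}$, inserting $O(\log N)$ parallel copies of each cross-cluster edge with geometrically increasing weights to enforce Property~(S1). Cluster splits translate into vertex-splitting operations in $\hat H$ in the standard manner, and edge-length-increases in $G$ translate into edge-deletions on the corresponding short copy in $\hat H$ and into length-doubling updates inside the affected cluster.

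To answer a \shortpath query, I would query \DLSSSP on $\hat H$ for an approximate shortest $s$-$t$ path of length at most $(1+O(\eps))\Lambda$ and total weight at most $\Gamma := n\cdot 2^{O(\sqrt{\log N})}$, propagating FAIL whenever \DLSSSP fails, and then use the cluster algorithms to stitch in intra-cluster paths of length at most $d$ inside each cluster traversed. The key quantitative point, inherited from \cite{AlmostDAG2,CK24}, is that because the \stSP oracle only needs to respond while $G$ still admits $\Omega(\Delta)$ near-disjoint short $s$-$t$ paths, the $\Gamma\cdot n$ term in \Cref{thm: almost DAG routing} is effectively replaced by $\Gamma\cdot n/\Delta$, giving a $\tilde O(n^2 + \Gamma n/\Delta)$ contribution. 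The choice of $d$ ensures that every short $s$-$t$ path in $G$ crosses at most $\Lambda/d \le 2^{\sqrt{\log N}}\cdot(\log N)^{O(1)}$ clusters, and the sparsity guarantee $\Phi_G(X,Y)\le (\log N)^{64}/d$ from the cluster algorithms together with \Cref{obs: from many sparse to one balanced} bounds the total span $\Gamma$ of right-to-left edges in $\hat H$ as required.

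Three obstacles stand out. The main one is controlling the $W^0(G)-|\beta|$ term that the \maintainspeccluster runtime is charged against: over the entire \stSP run, the total number of edge-length doublings is bounded by $\Lambda\Delta/\eta \le (n-|B|)\cdot(\log N)^{O(1)}$, so $|V(G)|-|\beta|$ stays $O((n-|B|)\cdot(\log N)^{O(1)})$ between restarts, and a geometric-doubling argument shows that the total cost summed across restarts is still $O(n\cdot(n-|B|)\cdot 2^{c_2\sqrt{\log N}}\cdot(\log N)^{16c_2 r + O(1)})$. The second obstacle is reconciling conventions: \maintaincluster/\maintainspeccluster require proper edge lengths (powers of two on special edges, zero on regular edges), whereas \stSP maintains arbitrary integer lengths; I would handle this by passing to the subdivided graph via \Cref{def: subdivided graph} and translating cuts back to $G$ using \Cref{obs: subdivided cut sparsity} and \Cref{obs: weakly to strongly well str}, whose losses fit within the polylogarithmic slack. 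The third is the careful translation of an edge-length-increase on a cross-cluster edge into a matched pair of updates (deletion in $\hat H$, length-doubling inside the cluster) that respects the congestion-budget of $\eta$ required by Property~\ref{maintaincluster-prop: edge congestion}; this is essentially bookkeeping but the accounting must be tight in order not to lose a factor beyond the $(\log N)^{8c_2}$ slack we are allowed when passing from $r$ to $r+1$.
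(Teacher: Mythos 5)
Your overall architecture is the same as the paper's (reduce to \stSP via \Cref{thm: oracle to routing}, maintain an \ATO whose clusters are handled by the $r$-restricted \maintaincluster/\maintainspeccluster algorithms, and run \DLSSSP on the contracted graph), but as written the quantitative core has genuine gaps. First, you never contract the clusters that contain current $B$-vertices (the ``distinguished'' clusters) into the sink supernode $t$. Without that, the contracted graph can have $\Theta(n)$ supernodes, and the \DLSSSP cost from \Cref{thm: almost DAG routing} is $\tilde O(n^2+\Gamma n)$, which already exceeds the target $n\cdot(n-|B|)\cdot 2^{c_2\sqrt{\log N}}\cdot\poly\log N$ when $|B|$ is close to $n$; your appeal to a ``$\Gamma n/\Delta$'' improvement is not justified by anything in this reduction and is not how the paper closes this gap (the paper instead proves that the contracted graph ever contains only $O(n-|B^0|)$ vertices). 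Second, a single global intra-cluster threshold $d=2^{r\sqrt{\log N}}\cdot(\log N)^{O(1)}$ both violates $r$-restrictedness of the cluster subinstances (the slack gained by passing to $N'$ is far smaller than $\poly\log N$ for small $r$) and breaks the length accounting for stitched query paths: a path of length $\Lambda$ in $H$ can cross $\Theta(\Lambda)$ clusters (cross-cluster special edges have length as small as $1$), not $\Lambda/d$ of them, so charging each crossed cluster up to $d$ gives paths of length $\Omega(\Lambda d)\gg\Lambda$. The paper avoids this with per-cluster thresholds $d_X=\min\{|X^0|\eta/\Delta,\,d\}$, the leaf/non-leaf cluster distinction (tiny clusters are dissolved into singletons rather than handed to \maintaincluster, whose validity condition $\Delta d/\eta\le n$ would otherwise fail), and a specially tailored $d_U$ for the special cluster, so that the thresholds of the clusters met by a query path telescope to at most $\Lambda/4$.

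Third, ``propagating FAIL whenever \DLSSSP fails'' violates the \stSP specification: \DLSSSP returns FAIL deterministically exactly when no short, low-weight $s$-$t$ path remains, which is precisely the moment the \stSP algorithm must output a modified cut-witness (FAIL is only allowed with probability at most $1/2$). Constructing that cut-witness is a substantive part of the proof: one perturbs the lengths of the right-to-left (``bad'') edges by an amount proportional to their $\spann$ values, and one needs the bound $\sum_{e}\spann(e)\le O\bigl((n-|B^0|)\Delta\,2^{O(\sqrt{\log N})}/\eta\bigr)$, which in turn relies on the per-cluster sparsity charging (and on the separate accounting for leaf clusters and the special cluster) that your setup does not provide. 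Relatedly, your ``second obstacle'' is a misreading: \maintaincluster here is defined directly on perfect well-structured graphs with power-of-two special-edge lengths, which is exactly what the \stSP lengths are, so no subdivision argument is needed in this direction. Until the contraction of distinguished clusters, the size-dependent cluster thresholds, and the cut-witness construction are supplied, the claimed running time and correctness do not follow.
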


Next, we show that an efficient algorithm for the $r$-restricted \routeandcut problem implies an efficient algorithm for the $r$-restricted \maintaincluster and \maintainspeccluster problems, in the following theorem, whose proof is deferred to Sections \ref{sec: layered connectivity DS} and \ref{sec: alg for maintaincluster from routeandcut}.

\begin{theorem}\label{thm: from routeandcut to maintaincluster}
	Suppose that, for some parameter $r\geq 1$, there exists a randomized algorithm for the $r$-restricted \routeandcut problem, that, given as an input an instance $(G,A,B,\Delta,\eta,N)$ of the problem with $|V(G)|=n$,
	has running time at most $c_1\cdot n\cdot (n-|B|)\cdot 2^{c_2\sqrt{\log N}}\cdot (\log N)^{16c_2(r-1)+8c_2}$. Then there exists a randomized algorithm for the $r$-restricted \maintaincluster problem, that, on input $(G,\Delta,\eta,d,N)$ has running time at most: $c_1\cdot  (W^0(G))^2\cdot 2^{c_2\sqrt{\log N}}\cdot (\log N)^{16c_2r}+c_1\cdot W^0(G)\cdot \Delta\cdot \log^4 N$, and there exists a randomized algorithm for the $r$-restricted \maintainspeccluster problem, that, on input $(G,\beta,\Delta,\eta,d,N)$ has running time at most: $c_1\cdot W^0(G)\cdot (W^0(G)-|\beta|+\Delta)\cdot 2^{c_2\sqrt{\log N}}\cdot (\log N)^{16c_2r}$. 
\end{theorem}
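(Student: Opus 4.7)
The plan is to construct the algorithms for the $r$-restricted \maintaincluster and \maintainspeccluster problems using the given $r$-restricted \routeandcut algorithm as the sole ``nontrivial'' subroutine, following (and extending) the scaffold of \cite{SCC} and \cite{CK24}. First I would build an expander-like graph $\hat G$ of size roughly $|V(G)|/(d \cdot \polylog N)$ and embed it into $G$ via the Cut-Matching game. The Cut Player can be implemented in time polynomial in $|V(\hat G)|$, which is negligible compared to the target bound. The Matching Player is exactly asked to solve an instance of \routeandcut on $G$ with an appropriately chosen vertex partition and congestion target; the key check is that any such instance has $\frac{(n-|B|)\eta}{\Delta} \le O(d \polylog N) \le 2^{r\sqrt{\log N}} \cdot \polylog N$, and so after mild parameter adjustment (absorbing polylog factors into $N$) is $r$-restricted in the sense required to apply the hypothesis. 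Whenever the Matching Player fails to produce a large routing, its sparse cut is combined, via \Cref{obs: from many sparse to one balanced} and upgraded to strongly well-structured via \Cref{obs: weakly to strongly well str}, into the output cut required by \maintaincluster.

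Next I would implement the \connecttocenters subroutine needed to maintain, for every $v \in V(G)$, a short low-congestion path $P(v)$ from $v$ to some vertex of $\hat G$ (the ``centers''). Following the overview, I would maintain a layered decomposition $U_0,U_1,\ldots,U_\lambda$ with $\lambda = O(\log N)$, where for each $i$ every vertex of $U_i$ holds a path of length $\tilde O(d/\lambda)$ and congestion $\tilde O(d)$ to some vertex of $\bigcup_{i'<i} U_{i'}$. Layer $U_i$ is rebuilt from scratch only after $\Theta(2^{\lambda-i})$ vertices become disconnected, and each rebuild is performed by one call to \routeandcut in which $n-|B| \le 2^{\lambda-i}$, costing $n \cdot 2^{\lambda-i} \cdot 2^{c_2\sqrt{\log N}}\cdot (\log N)^{16 c_2(r-1) + 8c_2}$ by hypothesis. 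Summed over all $2^i$ rebuilds of layer $U_i$, and over all $O(\log N)$ layers, the total cost is $(W^0(G))^2 \cdot 2^{c_2 \sqrt{\log N}} \cdot (\log N)^{16 c_2 r}$, matching the target. Composing paths up through the layers, each $P(v)$ has length $\tilde O(d)$ and the aggregate congestion remains $\tilde O(d)$. A \shortpath query on $(x,y)$ is then answered by concatenating $P(x)$, a BFS-path in $\hat G$ between the centers, and the reverse of $P(y)$, each expander edge being replaced by its embedded path; the total length is $\tilde O(d)$ as required. For the \maintainspeccluster variant the construction is essentially the same, but the centers are taken to be (a subset of) $\beta$ and only single-ended queries are answered; the extra cut condition $w(X) \ge 1.8|\beta \cap X|$ is enforced by only outputting a sparse cut when the ``disconnected'' side contains enough heavy (i.e., non-$\beta$) vertices, which is automatic from the layered accounting since only vertices whose paths die end up in the shrinking region.

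The main technical obstacle I anticipate is a tight amortized analysis of the interaction between edge-length-increase updates and the two maintained structures (the expander embedding and the layered path data structure). Specifically, when a special edge $e$ has its length doubled, every embedding path and every layered path through $e$ gets lengthened, and some of these may need to be rebuilt. The saving grace, which must be carefully exploited, is Property \ref{maintaincluster-prop: edge congestion}: the length of each edge is doubled only after $\eta$ returned paths have traversed it, so the amortized cost of the length-doubling can be charged to queries. Combined with the fact that each edge's length can double only $O(\log N)$ times, this should yield an amortized charge of $2^{O(\sqrt{\log N})} \cdot (\log N)^{O(r)}$ per query, and a total of $O(W^0(G) \cdot \Delta \cdot \log^4 N)$ work charged to queries (matching the second summand in the target bound), while the initialization and rebuild costs contribute the first summand $c_1 (W^0(G))^2 \cdot 2^{c_2\sqrt{\log N}} \cdot (\log N)^{16c_2 r}$. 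Careful bookkeeping is also required to guarantee that every recursive \routeandcut call produced by either the Cut-Matching game or the layered rebuild is genuinely $r$-restricted with the correct value of $N$, so that the inductive hypothesis of the theorem can be applied.
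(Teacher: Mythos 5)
Your proposal follows essentially the same route as the paper: an expander over a sparsified set of centers embedded via the Cut-Matching game with the Matching Player realized by $r$-restricted \routeandcut calls (whose $r$-restrictedness is obtained exactly by leaving polylogarithmic slack in the scale, as in the paper's choice of $d'$ and $N'$), a layered connect-to-centers structure whose layer $U_i$ is rebuilt by a \routeandcut call with $n-|B|\le 2^{\lambda-i}$ after roughly $2^{\lambda-i}$ disconnections, queries answered by composing layered paths with embedded expander paths, and the \maintainspeccluster case handled by taking $\beta$ itself as the center set. The paper additionally needs robustness machinery you only gesture at (random terminal sampling with bad events, finalizing cuts and the shadow expander to survive sparse-cut deletions, the subdivided graph for general edge lengths, and a potential-function bound on layer rebuilds), but these are implementations of the same plan rather than a different argument.
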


The majority of the remainder of this paper is dedicated to the proofs of 
\Cref{thm: from maintaincluster to roundandcut} and \Cref{thm: from routeandcut to maintaincluster}. The proofs of \Cref{thm: main for i-restricted route and cut} and \Cref{thm: main for r-restricted maintaincluster} easily follow from these theorems, as we show next.

 Recall that, from \Cref{cor: ind base}, there is an algorithm for the $1$-restricted \routeandcut problem, that, given as input an instance $(G,A,B,\Delta,\eta,N)$ of the problem with $|V(G)|=n$, has running time at most $O\left(n\cdot(n-|B|)\cdot 2^{r\cdot \sqrt{\log N}}\cdot \log^6 N\right )\leq c_1\cdot n\cdot (n-|B|)\cdot 2^{\sqrt{\log N}}\cdot (\log N)^{8c_2}$, since we can assume that $c_1$ is a large enough constant. This proves the induction base for the \routeandcut problem. From \Cref{thm: from routeandcut to maintaincluster} we then conclude that there exists a randomized algorithm for the $1$-restricted \maintaincluster problem, that, on input $(G,\Delta,\eta,d,N)$ has running time at most: 
$c_1\cdot (W^0(G))^2\cdot 2^{c_2\sqrt{\log N}}\cdot (\log N)^{16c_2}+c_1\cdot W^0(G)\cdot \Delta\cdot \log^4 N$, and that there is the $1$-restricted \maintainspeccluster problem, that, on input $(G,\beta,\Delta,\eta,d,N)$ has running time at most: 
$c_1\cdot W^0(G)\cdot (W^0(G)-|\beta|+\Delta)\cdot 2^{c_2\sqrt{\log N}}\cdot (\log N)^{16c_2}$. 
This proves the induction base for the \maintaincluster and the \maintainspeccluster problems.

For the induction step, consider an integer $r\geq 1$, and assume that both \Cref{thm: main for i-restricted route and cut} and \Cref{thm: main for r-restricted maintaincluster} hold for $r$-restricted instances. Then from  
\Cref{thm: from maintaincluster to roundandcut}, there is a randomized algorithm for the $(r+1)$-restricted \routeandcut problem, that, given as input
an instance	$(G,A,B,\Delta,\eta,N)$ of the problem with $|V(G)|=n$,
has running time at most $c_1\cdot n\cdot (n-|B|)\cdot  2^{c_2\sqrt{\log N}}\cdot (\log N)^{16c_2r+8c_2}$, proving \Cref{thm: main for i-restricted route and cut} for $(r+1)$-restricted instances. By applying 
\Cref{thm: from routeandcut to maintaincluster} with integer $(r+1)$, we then conclude that there exists a randomized algorithm for the $(r+1)$-restricted \maintaincluster problem, that, on input $(G,\Delta,\eta,d,N)$ has running time at most: 
$c_1\cdot (W^0(G))^2\cdot 2^{c_2\sqrt{\log N}}\cdot (\log N)^{16c_2(r+1)}+c_1\cdot W^0(G)\cdot \Delta\cdot \log^4 N$,
and that there exists a randomized algorithm for the $(r+1)$-restricted \maintainspeccluster problem, that, on input $(G,\beta,\Delta,\eta,d,N)$ has running time at most: 
$c_1\cdot W^0(G)\cdot (W^0(G)-|\beta|+\Delta)\cdot 2^{c_2\sqrt{\log N}}\cdot (\log N)^{16c_2(r+1)}$.  
This proves
 proving \Cref{thm: main for r-restricted maintaincluster} for $(r+1)$-restricted instances of \maintaincluster and \maintainspeccluster.
This completes the proofs of Theorems \ref{thm: main for i-restricted route and cut} and \ref{thm: main for r-restricted maintaincluster} from Theorems 
\ref{thm: from maintaincluster to roundandcut} and \ref{thm: from routeandcut to maintaincluster}. In the next three sections we prove 
\Cref{thm: from maintaincluster to roundandcut} and \Cref{thm: from routeandcut to maintaincluster}.

\section{From  $r$-Restricted \maintaincluster and \maintainspeccluster to $(r+1)$-Restricted \routeandcut: Proof of \Cref{thm: from maintaincluster to roundandcut}}
\label{sec: SSSP alg}

The proof of \Cref{thm: from maintaincluster to roundandcut} follows from the following theorem:

\begin{theorem}\label{thm: from maintaincluster to stSP}
	Suppose that, for some parameter $r\geq 1$, there exists a randomized algorithm for the $r$-restricted \maintaincluster problem, that, given as input  an instance
	$(G,\Delta,\eta,d,N)$ of the problem, has running time at most: 
	$c_1\cdot  (W^0(G))^2\cdot 2^{c_2\sqrt{\log N}}\cdot (\log N)^{16c_2r}+c_1\cdot W^0\cdot \Delta\cdot \log^4 N$.
	Suppose further that  there exists a randomized algorithm for the $r$-restricted \maintainspeccluster problem, that, given as input  an instance
	$(G,\beta,\Delta,\eta,d,N)$ of the problem, has running time at most: 
	$c_1\cdot W^0(G)\cdot (W^0(G)-|\beta|+\Delta)\cdot 2^{c_2\sqrt{\log N}}\cdot (\log N)^{16c_2r}$. %, where $\beta'$ is  the set $\beta$ of vertices at the end of the algorithm.  
	Then there is a randomized algorithm for the $(r+1)$-restricted \stSP problem, that, given an instance
	$(G,A,B,\Delta,\eta,N)$ of the problem with $|V(G)|=n$, has running running time at most $c_1\cdot n\cdot (n-|B|)\cdot  2^{c_2\sqrt{\log N}}\cdot (\log N)^{16c_2r+7c_2}$. 
\end{theorem}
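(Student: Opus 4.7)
The plan is to reduce the $(r+1)$-restricted \stSP problem to a dynamic graph decomposition built on the \ATO framework from Section \ref{sec: ATO}, where each cluster is maintained by the given $r$-restricted \maintaincluster or \maintainspeccluster algorithm, and approximate $A$-to-$B$ paths are extracted via the \DLSSSP algorithm (\Cref{thm: almost DAG routing}) on the contracted graph. The crucial design choice is to exploit the special-cluster feature of the \ATO to absorb nearly all of $B$: I would place $B$, together with any non-$B$ vertices that are strongly connected to $B$, into one special cluster $U$ of size $\Theta(|B|)$ but with effective \ATO interval size only $O(n-|B|+\Delta)$ via an appropriate choice of the offset parameter $\gamma$, and invoke \maintainspeccluster on $(G[U],\beta)$ with $\beta := B \cap U$. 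Since the hypothesized \maintainspeccluster runtime is $c_1 W^0 \cdot (W^0-|\beta|+\Delta) \cdot 2^{c_2\sqrt{\log N}} (\log N)^{16c_2r}$, this is precisely what enables the target bound scaling with $n \cdot (n-|B|)$ rather than $n^2$, which one would get from treating $U$ by \maintaincluster.

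Concretely, I would first augment $G$ with an artificial source $s$ whose unit-length edges go to each $a \in A$, and an artificial sink $t$ with unit-length incoming edges from every $b \in B$, so that modified acceptable $A$-to-$B$ paths correspond to short $s$-$t$ paths (as in the simple construction of \Cref{thm: simple stSP}). I then initialize the \ATO partition $\xset = \{S, T, U, X_1, \ldots, X_k\}$ with $S=\{s\}$, $T=\{t\}$, $U$ the special cluster, and the $O(n-|B|)$ remaining vertices split into regular clusters $X_i$ each maintained by \maintaincluster. The distance parameter handed to the cluster routines is $d' := d / \Theta(\log N)$, where $d := \Lambda$ is the \stSP distance threshold; the $(r+1)$-restriction forces $\Lambda \leq 2^{(r+1)\sqrt{\log N}} \cdot \poly\log N$, so the $\Theta(\log N)$ quotient places $d'$ inside the $r$-restricted range required by the subroutines. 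The contracted graph $\hat{H}$ is built by collapsing clusters, inserting parallel copies of each inter-cluster edge at weights $w(e)$ equal to the next power of $2$ above $\skipp(e)$, and \DLSSSP is initialized on $\hat{H}$ with right-to-left span budget $\Gamma = \tilde O(n-|B|)$.

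To answer a \shortpath query, I would obtain an approximate $s$-$t$ path $\hat{P}$ of length at most $(1+10\eps)d$ in $\hat{H}$ via \DLSSSP, then splice in internal short paths through each visited cluster using \maintaincluster for regular clusters and \maintainspeccluster for $U$ (querying from the entry vertex into $U$ to any current $\beta$-vertex, then using the unit-length edge to $t$). Concatenation yields a modified acceptable path in $G$, and length-doublings on its special edges are propagated to $\hat{H}$ by deleting the corresponding lower-weight copies. Whenever a cluster routine reports a strongly well-structured $\phi$-sparse cut with $\phi \leq (\log N)^{64}/d'$, I would split that cluster in the \ATO, reinitialize the routine on the smaller side, and insert the new boundary edges into $\hat{H}$; the sparsity bound ensures that the cumulative span stays within $\Gamma$ via the standard potential argument. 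If \DLSSSP ever returns FAIL, the distance potentials it maintains directly yield a modified cut-witness, with the total-weight bound established following the accounting in \Cref{claim: computing the cut}.

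The main obstacle is the running-time analysis. The \maintainspeccluster cost on $U$ is $\tilde O(n \cdot (n-|B|))$ times the required $2^{c_2\sqrt{\log N}}(\log N)^{16c_2r}$ factor, and the \maintaincluster cost summed over regular clusters contributes at most $\tilde O((n-|B|)^2)$ times the same factor; \DLSSSP contributes $\tilde O((n')^2 + \Gamma \cdot n')$ where $n'$ counts only edges ever crossing clusters in $\hat H$, which in turn is $\tilde O(n-|B|)$ plus the total boundary produced by splittings. Proving these bounds requires a careful bookkeeping of edge-length increases (implemented inside $\hat H$ via multi-copy edge deletions, so that the edge-deletion primitive of \DLSSSP suffices), absorbing the $\poly\log N$ overheads into the slack $(\log N)^{7c_2}$ between the allotted output bound $(\log N)^{16c_2 r+7c_2}$ and the input subroutine bound $(\log N)^{16c_2 r}$, and controlling the number of cluster splits so that $\Gamma = \tilde O(n-|B|)$ holds throughout. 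Correctly setting $\gamma$ for the special cluster, handling the interaction between \maintainspeccluster's contract on $\beta$ and the \ATO splittings (so that deletions on the boundary of $U$ neither violate $|\beta| \geq 99|V(U)|/100$ nor destroy the invariant that queried exits land in $\beta$), and ensuring that the $r$-restriction is preserved for all subinstances, will be the most delicate technical points.
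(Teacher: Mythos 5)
Your overall architecture matches the paper's: add $s,t$, maintain an \ATO of the graph in which a special cluster absorbs (almost all of) $B$ and is handled by \maintainspeccluster while the $O(n-|B|)$ remaining vertices live in regular clusters handled by \maintaincluster, run \DLSSSP on the contracted graph, splice cluster-internal paths into the \DLSSSP path to answer queries, and extract a modified cut-witness when \DLSSSP fails. However, there is a genuine quantitative gap at the heart of your reduction: you claim that handing the cluster routines the distance parameter $d' = \Lambda/\Theta(\log N)$ ``places $d'$ inside the $r$-restricted range.'' It does not. The $(r+1)$-restriction gives $\Lambda \approx \frac{(n-|B|)\eta}{\Delta}\cdot\poly\log N \leq 2^{(r+1)\sqrt{\log N}}\cdot\poly\log N$, while the $r$-restricted \maintaincluster/\maintainspeccluster problems require a distance parameter at most $2^{r\sqrt{\log N}}$; the gap between consecutive levels is a factor $2^{\sqrt{\log N}}$, which is superpolylogarithmic, so dividing by $\Theta(\log N)$ leaves your subinstances outside the $r$-restricted class and the induction collapses. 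The paper instead sets the cluster distance scale to $d=\frac{(n-|B|)\eta}{4\Delta\cdot 2^{\sqrt{\log N}}}$ (and analogously $d_U$ for the special cluster), i.e.\ it divides by $2^{\sqrt{\log N}}$, which is exactly what makes $d\le 2^{r\sqrt{\log N}-1}$.

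Fixing this exposes a second problem your write-up glosses over: with a single uniform per-cluster diameter, a spliced path crossing $k$ clusters has length up to $\frac{\Lambda}{2}+k\cdot d'$, and nothing bounds $k$ by $\Theta(\log N)$ (let alone by $2^{\sqrt{\log N}}$), so the concatenated path can exceed the budget $\Lambda$. The paper's resolution is to make each regular cluster's diameter proportional to its size, $d_X=\min\set{\frac{|X^0|\eta}{\Delta},d}$; since the non-distinguished clusters met by a query path are disjoint and contain no $B$-vertices, the total cluster-internal length is at most $\sum_j \frac{|X_j^0|\eta}{\Delta}\le O\left(\frac{(n-|B|)\eta}{\Delta}\right)\le \frac{\Lambda}{4}$, simultaneously keeping every subinstance $r$-restricted and respecting the length budget. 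Relatedly, your choice $\Gamma=\tilde O(n-|B|)$ is too small: the skip of a path already contributes up to $\Theta(n)$ (or $\Theta((n-|B|)\poly\log N)$ with the special-cluster offset), and the bad right-to-left edges created by sparse cuts contribute total span up to $\frac{(n-|B|)\Delta\cdot 2^{\Theta(\sqrt{\log N})}}{\eta}$; the cut-witness argument only goes through with $\Gamma=\Theta\left(n\cdot 2^{8\sqrt{\log N}}\right)$, and the $\Gamma\cdot n'$ term still fits the target runtime because $n'\le O(n-|B|)$ after all distinguished clusters are contracted into $t$. As written, your proof does not establish the theorem; repairing it requires replacing the $\Theta(\log N)$ rescaling by the $2^{\sqrt{\log N}}$ rescaling together with the size-proportional cluster diameters and the larger span budget.
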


Before we provide the proof of \Cref{thm: from maintaincluster to stSP}, we show that the proof of \Cref{thm: from maintaincluster to roundandcut} follows from it.

\noindent
{\bf Proof of \Cref{thm: from maintaincluster to roundandcut}:}
Consider a parameter $r\geq 1$, and assume that there exists  a randomized algorithm for the $r$-restricted \maintaincluster problem, that, on input $(G,\Delta,\eta,d,N)$, has running time at most: 
$c_1\cdot  (W^0(G))^2\cdot 2^{c_2\sqrt{\log N}}\cdot (\log N)^{16c_2r}+c_1\cdot W^0\cdot \Delta\cdot \log^4 N$. 
Assume further that there exists  a randomized algorithm for the $r$-restricted \maintainspeccluster problem, that, on input $(G,\beta,\Delta,\eta,d,N)$, has running time at most: 
$c_1\cdot  W^0(G)\cdot (W^0(G)-|\beta|+\Delta)\cdot 2^{c_2\sqrt{\log N}}\cdot (\log N)^{16c_2r}$. %, where $\beta'$ is  the set $\beta$ of vertices at the end of the algorithm. 
 From \Cref{thm: from maintaincluster to stSP}, there exists  a randomized algorithm for the $(r+1)$-restricted \stSP problem, that, on input $(G,A,B,\Delta,\eta,N)$ with $|V(G)|=n$, has running running time at most $c_1\cdot n\cdot (n-|B|)\cdot  2^{c_2\sqrt{\log N}}\cdot (\log N)^{16c_2r+7c_2}$.  From \Cref{thm: oracle to routing}, there exists a randomized algorithm for the $(r+1)$-restricted \routeandcut problem, that, on input $(G,A,B,\Delta,\eta)$ with $|V(G)|=n$, has running time is at most: 

\[\begin{split}
&O\left(c_1\cdot n\cdot (n-|B|)\cdot  2^{c_2\sqrt{\log N}}\cdot (\log N)^{16c_2r+7c_2}\right )+O(n\cdot(n-|B|))\\
&\quad\quad\quad\quad\quad\quad\quad\quad\quad\quad\quad\quad \leq c_1\cdot n\cdot (n-|B|)\cdot  2^{c_2\sqrt{\log N}}\cdot (\log N)^{16c_2r+8c_2},
\end{split}\]

since $c_2$ is a sufficiently large constant. \hfill\stopproof

From now on we focus on the proof of \Cref{thm: from maintaincluster to stSP}. We denote the randomized algorithms for the $r$-restricted \maintaincluster and \maintainspeccluster problems from the theorem statement by $\aset$ and $\aset'$, respectively.

We denote the input to the $(r+1)$-restricted \stSP problem by $(G,A,B,\Delta,\eta,N)$, where $G=(L,R,E)$ is an $n$-vertex  well-structured graph, $A$ and $B$ are two disjoint subsets of vertices of $G$, and $N\geq n$, $\Delta \geq 1$, and $1\leq \eta \leq \Delta$ are parameters. Recall that no regular edge of $G$ connects a vertex of $A$ to a vertex of $B$, $\Delta\leq n-|B|$, $|E|\leq O(n\cdot (n-|B|))$,  and $\frac{(n-|B|)\cdot \eta}{\Delta}\leq 2\cdot 2^{(r+1)\sqrt{\log N}}$ hold.

Recall that, as the algorithm progresses, we maintain lengths $\ell(e)$ for edges $e\in E(G)$, and, additionally, some vertices may leave the sets $A$ and $B$. Since the number of iterations is bounded by $\Delta$, at most $\Delta$ vertices may be deleted from $A$ and from $B$.
For convenience, we denote by $A^0$ and $B^0$ the initial sets $A$ and $B$, respectively, to distinguish them from the sets that are maintained during the algorithm.

Note that we can assume that $N$ is greater than a sufficiently large constant, so that, for example, $N>2^{2^{c_2}}$ holds. Indeed, if this is not the case, then, from \Cref{thm: simple stSP}, we obtain a determinisic algorithm that solves the input instance of \stSP in time:

\[
\begin{split}
O\left(n\cdot(n-|B^0|)\cdot \frac{(n-|B^0|)\cdot \eta}{\Delta}\cdot \log^6 N\right )&\leq O\left(n\cdot(n-|B^0|)\cdot N\cdot \log^6 N\right )
\\&\le c_1\cdot n\cdot (n-|B^0|)\cdot  2^{c_2\sqrt{\log N}}\cdot (\log N)^{16c_2r+7c_2},
\end{split}\]

since $n\leq N\ll c_1$, and $\eta\leq \Delta$ (recall that, by definition, $c_1\gg c_2$ holds). Therefore, we assume from now on that $N$ is sufficiently large.

Additionally, we can assume that $\frac{(n-|B^0|)\cdot \eta}{\Delta}> 2^{4\sqrt{\log N}}$, since otherwise, from \Cref{thm: simple stSP}, we obtain a determinisic algorithm that solves the input instance of \stSP in time:

\[
\begin{split}
O\left(n\cdot(n-|B^0|)\cdot \frac{(n-|B^0|)\cdot \eta}{\Delta}\cdot \log^6 N\right )&\leq O\left(n\cdot(n-|B^0|)\cdot 2^{4\sqrt{\log N}}\cdot \log^6 N\right )
\\&\le c_1\cdot n\cdot (n-|B^0|)\cdot  2^{c_2\sqrt{\log N}}\cdot (\log N)^{16c_2r+7c_2}.
\end{split}\]

Therefore, we assume from now on that:

\begin{equation}\label{eq: bound on d2}
\frac{(n-|B^0|)\cdot \eta}{\Delta}> 2^{4\sqrt{\log N}}
\end{equation}

Throughout the algorithm, we use the parameter  $\Lambda=(n-|B^0|)\cdot \frac{ \eta \cdot \log^5n}{\Delta}$.
We let $H$ be the graph that is obtained from $G$, by adding a source vertex $s$, that connects to every vertex $a\in A$ with a length-$0$ edge, and a destination vertex $t$, to which every vertex $b\in B$ connects with a length-$0$ edge. If an edge $e\in E(G)$ is a special edge for graph $G$, then $e$ remains a special edge for $H$. All other edges of $H$, including edges incident to $s$ and $t$, are called regular edges.
For every edge $e\in E(G)$, its length in $H$ remains the same as in $G$, so for convenience, we denote the length of every edge $e\in E(H)$ by $\ell(e)$.

At the beginning  of the algorithm, for each edge $e\in E(H)$, its length $\ell(e)$ is set as follows: if $e$ is a regular edge, then $\ell(e)=0$, and if $e$ is a special edge, then $\ell(e)=1$. 
The algorithm proceeds as follows. We start by setting $\qset=\emptyset$, and then perform at most $\Delta$ iterations. In every iteration, the algorithm must do one of the following:

\begin{properties}{R}
	\item either return FAIL;\label{outcome: fail}

	\item or return a simple $s$-$t$ path $P$ of length at most $\Lambda$ in the current graph $H$, such that $P$ contains exactly one vertex of $B$ -- the penultimate vertex on the path. \label{outcome:path}
	
		\item or compute an assignment $\ell'(e)\geq 0$ of lengths to all edges $e\in E(H)$, such that, if we denote by $\tilde E$ the set of all special edges of $G$ with both endpoints in $B$, then $\sum_{e\in E(H)}\ell'(e)\leq \frac{\Lambda\cdot \Delta}{\eta \log^4n}+\sum_{e\in E(H)\setminus \tilde E}\ell(e)$; the length $\ell'(e)$ of every edge $e$ that is incident to $s$ or to $t$ is $0$; and the distance from $s$ to $t$ in $H$, with respect to edge lengths $\ell'(\cdot)$, is at least $\frac{\Lambda}{32}$;  \label{outcome: cut}
\end{properties}

Note that, if the algorithm returns assignments of lengths $\ell'(e)$ to edges $e\in E(H)$ as in Outcome \ref{outcome: cut}, then these assignments of edge lengths define a modified cut-witness for $G$ (see \Cref{def: mod cut-witness}). In this case, the algorithm terminates, and  returns the  lengths $\ell'(e)$ of all edges $e\in E(H\setminus\set{s,t})$. Slightly abusing the notation, for the sake of simplicity, we will call an assignment of lengths $\ell'(e)$ to the edges $e\in E(H)$ as  described in Outcome \ref{outcome: cut} a \emph{cut-witness}. 

If the algorithm returns an $s$-$t$ path $P$ as in Outcome \ref{outcome:path}, then $P\setminus\set{s,t}$ is a modified acceptable path for $G$ (see \Cref{def: mod acceptable path}). To simplify the notation, we say that a path $P$ described in  Outcome \ref{outcome:path} is an \emph{acceptable path}. If the algorithm returns an acceptable path $P$, then we let $P'=P\setminus\set{s,t}$, and we let $a\in A$, $b\in B$ be the endpoints of path $P'$. We remove $a$ from $A$ and $b$ from $B$, add $P'$ to $\qset$, and delete the edges $(s,a)$ and $(b,t)$ from $H$. Additionally, the lengths of some of the special edges on $P$ may be doubled; we assume that the algorithm receives the list of all such edges. 

From the problem definition, for every special edge $e\in E(H)$, the length of $e$ is doubled once $e$ participates in exactly $\eta$ paths that the algorithm returned since the last time the length of $e$ was doubled, or since the beginning of the algorithm -- whatever happened last.

Lastly, in the case of Outcome \ref{outcome: fail}, we terminate the algorithm and return ``FAIL'', but we require that the probability that the algorithm ever returns ``FAIL'' is bounded by $\frac{1}{2}$.

\subsection*{Some Useful Bounds}

We denote by $E^{\spec}$ the set of all special edges in graph $G$, and we denote by $E^*\subseteq E^{\spec}$ the set of all special edges $e=(u,v)$, such that $u,v\in B$ holds at the end of the algorithm. Recall that the length of a special edge $e$ may only be increased once it participates in some path that the algorithm returns, and, if a vertex $x\in B$ lies on any path that the algorithm returns, then $x$ is deleted from $B$. Therefore, every edge $e\in E^*$ has length $1$ throughout the algorithm.
We start by bounding $|E^{\spec}\setminus E^*|$ in the following simple observation.

\begin{observation}\label{obs: bound special nonimportant}
	 $|E^{\spec}\setminus E^*|\leq 2(n-|B^0|)$.
\end{observation}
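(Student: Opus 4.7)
The plan is to bound $|E^{\spec}\setminus E^*|$ by the number of vertices that are not in the final set $B$, exploiting the structure of a well-structured graph where every vertex is incident to at most one special edge.

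First, I would fix some notation: let $B^{\text{fin}}$ denote the set $B$ at the end of the algorithm, and let $U = V(G)\setminus B^{\text{fin}}$. By definition of $E^*$, every edge $e=(u,v)\in E^{\spec}\setminus E^*$ has at least one endpoint in $U$. Since $G$ is well-structured, every vertex of $G$ is incident to at most one special edge, so the map that assigns each edge $e\in E^{\spec}\setminus E^*$ to one of its endpoints in $U$ is injective. Hence $|E^{\spec}\setminus E^*|\le |U|$.

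Next, I would bound $|U|$. The set $U$ is the disjoint union of $V(G)\setminus B^0$ (vertices that were never in $B$) and $B^0\setminus B^{\text{fin}}$ (vertices removed from $B$ during the algorithm). The first set has size exactly $n-|B^0|$. For the second set, recall that the algorithm performs at most $\Delta$ iterations, and the only time a vertex is removed from $B$ is when the algorithm returns an acceptable path (which removes exactly one vertex from $B$ per iteration). Thus $|B^0\setminus B^{\text{fin}}|\le \Delta$, giving $|U|\le (n-|B^0|)+\Delta$.

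Finally, I would invoke the problem-definition constraint $\Delta\le n-|B^0|$ (stated in the definition of the $r$-restricted \stSP problem) to conclude $|U|\le 2(n-|B^0|)$ and therefore $|E^{\spec}\setminus E^*|\le 2(n-|B^0|)$, as desired. No step here seems to present a genuine obstacle; the whole argument is a short counting exercise that just combines the well-structured property with the iteration count.
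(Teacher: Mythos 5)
Your proof is correct and follows essentially the same argument as the paper: the paper partitions $E^{\spec}\setminus E^*$ into edges touching $V(G)\setminus B^0$ and edges with an endpoint deleted from $B$, bounding each by $n-|B^0|$ (the latter via $\Delta\le n-|B^0|$), which is exactly your count of $|V(G)\setminus B^{\mathrm{fin}}|\le (n-|B^0|)+\Delta$ combined with the one-special-edge-per-vertex injectivity. The reorganization (counting vertices outside the final $B$ rather than partitioning the edge set) is only cosmetic.
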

\begin{proof}
We partition the set $E^{\spec}\setminus E^*$ if edges into two subsets: set $E_1$ contains all edges $e$, such that at least one endpoint of $e$ lies in $V(G)\setminus B^0$, and set $E_2$ contains all remaining edges of $E^{\spec}\setminus E^*$. Since every edge of $E_1$ is incident to a vertex of $V(G)\setminus B^0$, and every vertex of $G$ may serve as an endpoint of at most one special edge, $|E_1|\leq n-|B^0|$ must hold. Consider now some edge $e\in E_2$. From the definition, over the course of the algorithm, at least one endpoint of $e$ was deleted from $B$. Since at most $\Delta$ vertices are deleted from $B$ over the course of the algorithm, and since each such vertex may serve as an endpoint of at most one special edge, we get that $|E_2|\leq \Delta\leq n-|B^0|$, from the definition of the \stSP problem. Altogether, we get that 
$|E^{\spec}\setminus E^*|\leq |E_1|+|E_2|\leq 2(n-|B^0|)$.
\end{proof}

For each edge $e\in E^{\spec}$, let $\ell^*(e)$ be the length of $e$ at the end of the algorithm. In the next simple observation we bound $\sum_{e\in E^{\spec}}\ell^*(e)$.

\begin{observation}\label{obs: bound edge lengths}
	$\sum_{e\in E^{\spec}}\ell^*(e)\leq |E^{\spec}|+2(n-|B^0|)\log^5n$, and 	$\sum_{e\in E^{\spec}\setminus E^*}\ell^*(e)\leq 4(n-|B^0|)\log^5n$.
\end{observation}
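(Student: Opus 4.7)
The plan is a standard ``amortized length'' accounting, exploiting the facts that every returned path has length at most $\Lambda$, that at most $\Delta$ paths are ever returned, and that the length of a special edge $e$ doubles only after exactly $\eta$ of the returned paths have used $e$ at its current length.

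First I would set up the global bound on the total ``length consumed'' by returned paths. Let $\qset$ be the final set of paths returned by the algorithm; since at most $\Delta$ iterations occur and each path $P\in\qset$ has length at most $\Lambda$ (with respect to the edge lengths at the moment $P$ was returned), summing $\ell(P)$ at those moments yields a contribution of at most $\Delta\cdot \Lambda$. Since regular edges have length $0$ throughout, this contribution comes entirely from special edges.

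Next I would lower bound, for each special edge $e$, its contribution to the above sum in terms of $\ell^*(e)$. Since initially $\ell(e)=1$ and lengths are only doubled, we can write $\ell^*(e)=2^{k_e}$ for some integer $k_e\geq 0$. By the doubling rule, between each consecutive doubling (and between the start of the algorithm and the first doubling) exactly $\eta$ returned paths traversed $e$; these contribute lengths $1,2,\dots,2^{k_e-1}$ respectively, for a total contribution to $\sum_{P\in\qset}\ell(P)$ of at least $\eta\cdot(2^{k_e}-1)=\eta\cdot(\ell^*(e)-1)$. Summing over all $e\in E^{\spec}$ and combining with the global upper bound $\Delta\cdot \Lambda$, I would obtain
\[
\sum_{e\in E^{\spec}}(\ell^*(e)-1)\;\leq\;\frac{\Delta\cdot \Lambda}{\eta}\;=\;(n-|B^0|)\log^5 n,
\]
using the definition $\Lambda=(n-|B^0|)\cdot \eta\log^5 n/\Delta$. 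Rearranging gives the first claimed inequality, with room to spare.

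For the second inequality I would simply use that every edge $e\in E^*$ has length $1$ throughout the algorithm (as noted in the paragraph preceding the observation, since an endpoint of any $e\in E^*$ can never appear on a returned path without being deleted from $B$). Therefore $\sum_{e\in E^*}\ell^*(e)=|E^*|$, and subtracting this from the first inequality gives
\[
\sum_{e\in E^{\spec}\setminus E^*}\ell^*(e)\;\leq\;|E^{\spec}\setminus E^*|+(n-|B^0|)\log^5 n.
\]
Invoking Observation~\ref{obs: bound special nonimportant} to bound $|E^{\spec}\setminus E^*|\leq 2(n-|B^0|)$ and absorbing this lower-order term into the $\log^5 n$ factor yields the stated bound $4(n-|B^0|)\log^5 n$. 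I do not anticipate a real obstacle here; the only subtlety is making sure the amortization is applied to the edge length at the time each path is returned (not the final length), which is exactly what the doubling rule is designed to enable.
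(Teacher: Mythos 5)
Your proposal is correct and uses essentially the same amortization as the paper: charge doublings of special edges against $\sum_i C_i\le\Delta\Lambda$ and use $\Delta\Lambda/\eta=(n-|B^0|)\log^5 n$, then handle $E^{\spec}\setminus E^*$ via Observation~\ref{obs: bound special nonimportant}. The only (harmless) difference is that you charge every doubling, relying on the "exactly $\eta$ paths since the last doubling or the start" rule (which does hold for \stSP), whereas the paper charges only the last doubling of each edge at length $\ell^*(e)/2$, which needs only the weaker "at least $\eta$" guarantee and already gives the stated bounds.
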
 
\begin{proof}
Recall that, at the beginning of the algorithm, for every edge $e\in E^{\spec}$, we set $\ell(e)=1$. Let $E'\subseteq E^{\spec}$ be the set of all edges $e\in E^{\spec}$ with $\ell^*(e)>1$. Clearly, $\sum_{e\in E^{\spec}}\ell^*(e)\leq |E^{\spec}|+\sum_{e\in E'}\ell^*(e)$.

Let $q$ be the total number of iterations of the algorithm, so that, at the end of the algorithm, $|\qset|=q$ holds, and let $\qset=\set{P_1,\ldots,P_q}$, where the paths are indexed in the order in which they were added to $\qset$. For all $1\leq i\le q$, let $C_i$ denote the length of path $P_i$ when it was added to $\qset$. Clearly, $C_i\leq \Lambda$, and $\sum_{i=1}^qC_i\leq q\cdot \Lambda\leq \Delta\cdot \Lambda$.

Consider now some edge $e\in E'$, and recall that $\ell(e)$ was doubled at least once during the algorithm. Let $\tau$ be the last time when the length of $e$ was doubled. Then, prior to time $\tau$, there were at least $\eta$ iterations $i$, during which the path $P_i$ that was added to $\qset$ contained $e$, and the length of $e$ during the corresponding iteration was $\frac{\ell^*(e)}{2}$. In other words, edge $e$ contributes at least $\frac{\ell^*(e)\eta}{2}$ to $\sum_{i=1}^qC_i$. Altogether, we get that $\sum_{i=1}^qC_i\geq \sum_{e\in E'}\frac{\ell^*(e)\eta}{2}$. Therefore:

\[\sum_{e\in E'}\ell^*(e)\leq \frac{2\sum_{i=1}^qC_i}{\eta}\leq \frac{2\Delta\Lambda}{\eta}\leq 2(n-|B^0|)\cdot \log^5n,\]

since $\Lambda=(n-|B^0|)\cdot \frac{ \eta \cdot \log^5n}{\Delta}$.
Altogether, we get that:

\[\sum_{e\in E^{\spec}}\ell^*(e)\leq |E^{\spec}|+\sum_{e\in E'}\ell^*(e)\leq |E^{\spec}|+2(n-|B^0|)\log^5n.\]

Lastly, from \Cref{obs: bound special nonimportant}, 
	$|E^{\spec}| -|E^*|\leq 2(n-|B^0|)$, and so:

\[\sum_{e\in E^{\spec}\setminus E^*}\ell^*(e)\leq \sum_{e\in E^{\spec}}\ell^*(e)-|E^*|\leq  |E^{\spec}|+2(n-|B^0|)\log^5n-|E^*|\leq 4(n-|B^0|)\log^5n.\]
\end{proof}

\iffalse
	Recall that, at the beginning of the algorithm, the length $\ell(e)$ of every special edge is set to $1$, and so $\sum_{e\in E^{\spec}}\ell(e)\leq n$ holds at the beginning of the algorithm. From Requirement \ref{oracle-prop: small length growth}, the total increase in $\sum_{e\in E^{\spec}}\ell(e)$ over the course of the process described above is bounded by:
	
	\[\frac{\Delta \cdot \Lambda\cdot \log^{\hat c}n}{\eta}\leq n\cdot (\log n)^{\hat c+5},\]
	
	since $\Lambda=\frac{n\eta \cdot \log^5n}{\Delta}$. Altogether, we get that $\sum_{e\in E^{\spec}}\ell^*(e)\leq n+n\cdot (\log n)^{\hat c+5}\leq 2n\cdot (\log n)^{\hat c+5}$.
\end{proof}
\fi

Throughout the algorithm, we use a parameter $N'=\max\set{N,8n\cdot \log^5n}$. Let $\tau$ be any time in the algorithm's execution, and let $H'$ be any subgraph of $H\setminus\set{s,t}$, that is a perfect well-structured graph. Recall that the weight of every vertex $v\in V(H')$ is the length of the unique special edge incident to $v$. From \Cref{obs: bound edge lengths}, and since $|E^{\spec}|\leq n$, the total weight of all vertices of $H'$ must be bounded by $N'$. Notice also that, since $N\geq n$, we get that $N'\leq 8N\log^5n\leq N\log^6N$. It is then easy to see that: 

\begin{equation}\label{eq: bound on log n'}
\log N'\leq \log N+6\log\log N\leq \left(1+\frac{1}{64c_2r}\right)\log N,
\end{equation}

since $r\leq \ceil{\sqrt{\log N}}$ and $N$ is large enough. Additionally, $\sqrt{\log N'}\leq \sqrt{\log N+6\log\log N}\leq \sqrt{\log N}+\sqrt{6\log\log N}$. Therefore:

\begin{equation}
2^{\sqrt{\log N'}}\leq 2^{\sqrt{\log N}}\cdot \log^6 N\leq 2^{2\sqrt{\log N}}.\label{eq: bound on N'}
\end{equation}

We will use these facts later.

\noindent
{\bf Algorithm for the $(r+1)$-restricted \stSP problem:}
The description of our algorithm for the $(r+1)$-restricted \stSP problem consists of two parts. In the first part, we provide an algorithm for maintaining an \ATO for graph $H$. This algorithm in turn exploits algorithm $\aset$ for the $r$-restricted \maintaincluster problem and Algorithm $\aset'$ for the $r$-restricted \maintainspeccluster problem. In the second part, we use the algorithm from \Cref{thm: almost DAG routing} for the \DLSSSP problem on the corresponding contracted graph.

\subsection{Part 1 - Maintaining the \ATO}

In this part we describe our algorithm for maintaining the \ATO $(\xset,\rho, U, \gamma)$ for graph $H$. Notice that $|V(H)|=|V(G)|+2=n+2$. 
Throughout, we use a parameter ${d= \frac{(n-|B^0|)\cdot\eta}{4\Delta\cdot 2^{\sqrt{\log N}}}}$.
Notice that:

\begin{equation}\label{eq: two bounds on d}
2^{2\sqrt{\log N}}\leq d\leq  2^{r\sqrt{\log N}-1}.
\end{equation}

Here, the first inequality follows from Inequality \ref{eq: bound on d2}.
The second inequality follows since the input instance of \stSP is $(r+1)$-restricted, so $\frac{(n-|B^0|)\cdot\eta}{\Delta}\leq 2\cdot 2^{(r+1)\sqrt{\log N}}$ holds.

Let $U\subseteq V(H)\setminus\set{s,t}$ be the collection of all vertices $v$, such that some special edge of $E^{\spec}$ is incident to $v$, and let $U'=V(H)\setminus U$. Notice that $H[U]$ is a perfect well-structured graph  (see \Cref{def: well structured graphs}).
We let the initial collection $\xset$ of vertex subsets contain the set $U$, and, for every vertex $v\in U'$, a set $X_v=\set{v}$. For convenience, we denote $S=\set{s}$ and $T=\set{t}$. 
We distinguish between two cases. Case 1 happens when $|U|>2(n-|B^0|)\cdot (\log N)^{128}$. In this case, we let $U$ be the special cluster of the \ATO, and we set $\gamma=|U|-256(n-|B^0|)(\log N)^{64}$. 
%(n-|B^0|)\cdot (\log N)^{128}$. 
Otherwise, Case 2 happens, the special cluster of the \ATO is undefined, and we set $\gamma=1$.

Next, we assign, to every resulting set $X\in \xset$, an interval $I_X\subseteq \set{1,\ldots,|V(H)|-\gamma+1}$, so that, for each regular cluster $X\in \xset$, $|I_X|=|X|$, and, if the special cluster is defined, then $|I_U|=|U|-\gamma+1$. In order to do so, it is sufficient to define an ordering $\sigma$ of the sets in $\xset$. We let $S$ appear first in the ordering, followed by the sets in $\set{X_v\mid v\in U'\cap L}$ in an arbitrary order, followed by set $U$, which is followed by the sets  in $\set{X_v\mid v\in U'\cap R}$ in an arbitrary order, followed by $T$. Since vertices of $U'$ are not incident to special edges, it is easy to verify that  no edge of $H$ is a right-to-left edge with respect to this initial \ATO.
Throughout the algorithm, we refer to each set $X\in \xset$ of vertices as a \emph{cluster}. If $|X|=1$, then we say that $X$ is a \emph{singleton} cluster. 
As the algorithm progresses, cluster $U$ may be partitioned into smaller and smaller clusters, with each such cluster assigned an interval that is contained in the initial interval $I_{U}$, but the clusters in $\set{X_v\mid v\in U'}$, together with their corresponding intervals $I_{X_v}$, remain unchanged. For convenience, we denote the initial set $U$ by $U^0$.
We will ensure that, throughout the algorithm, the following invariants hold:

\begin{properties}{I}

	\item If $X\in \xset$ is a non-singleton cluster, then $H[X]$ is a perfect well-structured graph; \label{inv: perfect well structured} 
	
	\item If the special cluster is defined, then $U\in \xset$ holds throughout the algorithm, and moreover, $|U|\geq \gamma$ always holds; and \label{inv: special cluster}
	
	\item  If $e=(x,y)$ is a right-to-left edge with respect to the current \ATO, and $X,Y\in \xset$ are the clusters that contain $x$ and $y$, respectively, so that $Y$ appears before $X$ in the ordering $\sigma$ of clusters of $\xset$ defined by the \ATO, then $e$ must be a special edge, and moreover, $X=\set{x}$ and $Y=\set{y}$ must hold; in other words, $X$ and $Y$ are both singleton clusters. \label{inv: backward edges}
\end{properties}

It is easy to verify that the invariant holds for the initial \ATO.
%Whenever a new cluster $X$ is added to $\xset$, we set $\beta_X=B\cap X$. As the algorithm progresses, vertices may be deleted from set $B$, and vertices may be deleted from $X$. We will always set $\beta_X=B\cap X$, with respect to the current sets $B$ and $X$. Therefore, the set $\beta_X$ is decremental, so vertices may leave it, but they may not join it. 

For each cluster $X$ that is ever added to $\xset$, we denote by $X^0$ the set $X$ when it was just added to $\xset$. If $|X^0|>1$ (that is, $X$ is not a singleton cluster when it is added to $\xset$), then, from Invariant \ref{inv: perfect well structured}, $H[X]$ is a perfect well-structured graph, so weights of vertices of $X$ are well defined. In such a case, we denote by $W^0(X)$ the  total weight of all vertices of $X$ when $X$ is added to $\xset$. Whenever a non-singleton regular cluster $X$ is added to set $\xset$, we define the parameter ${d_X=\min\set{\frac{|X^0|\cdot \eta}{\Delta},d}}$. We may sometimes refer to $d_X$ as $d_{X^0}$.
Next, we partition all clusters that are ever added to $\xset$ into \emph{leaf} and \emph{non-leaf} clusters.

\begin{definition}[Leaf and Non-Leaf Clusters]\label{def: leaf}
	Let $X$ be a cluster that was just added to $\xset$. We say that $X$ is a \emph{leaf cluster}, if $|X^0| \leq \frac{\Delta\cdot (\log N)^{128}}{\eta}$.
If $X$ is not a leaf cluster, then we call it a \emph{non-leaf} cluster.
\end{definition}

Recall that, if $U$ is a special cluster, then $|U^0|>2(n-|B^0|)\cdot (\log N)^{128}$. Since $\Delta\leq n-|B^0|$ from the definition of the \stSP problem, we get that a special cluster may not be a leaf cluster. 

Recall that in general, once a cluster $X$ is added to the collection $\xset$ of clusters of an \ATO, it must remain in $\xset$ throughout the algorithm, though we may delete some vertices from $X$. A new cluster $Y$ may only be added to the set $\xset$ if it is a subset of some cluster $X$ that currently lies in $\xset$. When we describe our algorithm for maintaining the \ATO, it will be convenient for us to sometimes remove a cluster $X$ from the set $\xset$, and then  immediately reinsert it into $\xset$. This is only done for clarity of exposition and convenience of analysis, and we do not need to do so in order to maintain the \ATO. Additionally, we will sometimes replace a cluster $X\in \xset$ with a new collection of clusters, each of which contains a single vertex of $X$. In such a case we could simply identify $X$ with one of the new clusters and implement this update to $\xset$ via a sequence of cluster splitting operations, though for clarity of exposition we will treat all newly inserted clusters as new clusters, and we will think of $X$ as being removed from $\xset$.

Next, we describe an algorithm for maintaining each cluster $X$ that is added to $\xset$. We employ different algorithms for leaf clusters, non-leaf regular clusters, and the special cluster, if it is defined. We now describe each of the three algorithms in turn.

\subsubsection{Leaf Clusters}

Let $X$ be a leaf cluster that was just added to $\xset$. If $X$ is a singleton cluster, then no further processing of this cluster is needed. Assume now that $X$ is a non-singleton cluster.
We immediately remove $X$ from $\xset$, and instead add, for every vertex $v\in X$, a new singleton cluster $\set{v}$ to $\xset$. We assign, to each such new cluster $\set{v}$, an interval $I_{\set{v}}\subseteq I_{X}$ of length $1$, so that all intervals corresponding to the vertices of $X\cap L$ appear before all intervals corresponding to the vertices of $X\cap R$. Notice that the only new right-to-left edges that are introduced by this transformation are the edges of $E_H(X\cap R,X\cap L)$, which are all special edges. The number of such edges is $|X\cap R|=|X\cap L|\leq |X|$. We say that these new right-to-left edges are \emph{bad edges}, and that they are \emph{owned} by the leaf cluster $X$. We denote the set of these edges by $E^{\bad}(X)$. From the above discussion, $|E^{\bad}(X)|\leq |X|$, and each edge $e\in E^{\bad}(X)$ has $\spann(e)\leq |X|$.

\subsubsection{Non-Leaf Regular Clusters}

Whenever a new non-leaf regular cluster $X$ is added to $\xset$, we  initialize Algorithm $\aset$ for the $r$-restricted \maintaincluster problem on graph $H[X^0]$, with parameters $\Delta$ and $\eta$ that remain unchanged, parameter $d$ replaced with $d_X$, and parameter $N$ replaced with $N'$. We denote this instantiation of Algorithm $\aset$ by $\aset(X)$. We now verify that $(H[X^0],\Delta,\eta,d_X,N')$ is indeed a valid input instance for the $r$-restricted \maintaincluster problem, in the following simple but technical observation, whose proof is deferred to Section \ref{subsec: appx: proof valid instance} of Appendix.

\begin{observation}\label{obs: valid instance}
	$(H[X^0],\Delta,\eta,d_X,N')$ is a valid input to the $r$-restricted \maintaincluster problem.
\end{observation}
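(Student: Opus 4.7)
The plan is to verify each requirement in Definition \ref{def: maintaincluster problem} for the tuple $(H[X^0],\Delta,\eta,d_X,N')$, and also verify the $r$-restriction $d_X \leq 2^{r\sqrt{\log N'}}$. The structural requirement that $H[X^0]$ be a perfect well-structured graph is exactly Invariant \ref{inv: perfect well structured}, applied at the moment the non-leaf cluster $X$ is added to $\xset$. The length assignment on $E(H[X^0])$ is proper (regular edges have length $0$, special edges have lengths that are integral powers of $2$ at least $1$), which is preserved from the assignment maintained on $H$ throughout the algorithm. The bound $N' \geq W^0(H[X^0])$ follows directly from the choice of $N' = \max\{N, 8n \log^5 n\}$ together with \Cref{obs: bound edge lengths}: any perfect well-structured subgraph of $H\setminus\{s,t\}$ has total vertex weight equal to the sum of the lengths of its special edges, which is at most $|E^{\spec}| + 2(n-|B^0|)\log^5 n \leq n + 2n \log^5 n \leq N'$. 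The conditions $1 \leq \eta \leq \Delta$ carry over from the input, and $\Delta \leq N' $ holds since $\Delta \leq n-|B^0| \leq n \leq N'$.

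The bound $\frac{\Delta \cdot d_X}{\eta} \leq |V(H[X^0])| = |X^0|$ follows immediately from the definition $d_X = \min\{|X^0|\eta/\Delta, d\} \leq |X^0|\eta/\Delta$. The condition that $N'$ exceeds a sufficiently large constant follows from $N' \geq N$ and the assumption, established earlier in the proof, that $N$ is greater than $2^{2^{c_2}}$.

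The two quantitative conditions that demand care are $d_X \geq (\log N')^{64}$ and the $r$-restriction $d_X \leq 2^{r\sqrt{\log N'}}$. For the lower bound, I will combine the two terms defining $d_X = \min\{|X^0|\eta/\Delta, d\}$. Since $X$ is a non-leaf cluster, Definition \ref{def: leaf} gives $|X^0| > \Delta (\log N)^{128}/\eta$, hence $|X^0|\eta/\Delta > (\log N)^{128}$. From Inequality \ref{eq: two bounds on d}, $d \geq 2^{2\sqrt{\log N}} \geq (\log N)^{128}$ once $N$ is sufficiently large. Using Inequality \ref{eq: bound on log n'}, which gives $\log N' \leq (1+1/(64 c_2 r))\log N \leq 2\log N$, we obtain $(\log N')^{64} \leq 2^{64}(\log N)^{64} \leq (\log N)^{128}$, so both terms defining $d_X$ are at least $(\log N')^{64}$.

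For the $r$-restriction, $d_X \leq d \leq 2^{r\sqrt{\log N}-1}$ by Inequality \ref{eq: two bounds on d}. Since $N' \geq N$, we have $\sqrt{\log N'} \geq \sqrt{\log N}$, and so $2^{r\sqrt{\log N}-1} \leq 2^{r\sqrt{\log N'}}$, completing the verification. The main subtlety throughout is bookkeeping between $N$ and $N'$: the \ATO maintains subproblems against the inflated parameter $N'$ because total edge lengths may grow during the algorithm, and I need both directions of comparison (upper bound to handle the $r$-restriction, lower bound to handle $(\log N')^{64}$). All of these are handled cleanly by Inequalities \ref{eq: bound on log n'} and \ref{eq: bound on N'}, so I do not expect any real obstacle.
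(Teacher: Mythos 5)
Your proof is correct and follows essentially the same route as the paper's: Invariant \ref{inv: perfect well structured} for the structural condition, \Cref{obs: bound edge lengths} together with the choice $N'=\max\{N,8n\log^5n\}$ for the bound $W^0(X)\leq N'$, the non-leaf condition combined with $d\geq 2^{2\sqrt{\log N}}$ for $d_X\geq (\log N')^{64}$, and $d_X\leq d\leq 2^{r\sqrt{\log N}-1}$ for the $r$-restriction. Two cosmetic points: the total vertex weight of a perfect well-structured subgraph is \emph{twice} (not equal to) the sum of its special-edge lengths, though the bound $\leq 8n\log^5 n\leq N'$ still absorbs this factor, and the definition of an $r$-restricted instance also requires $r\leq\lceil\sqrt{\log N'}\rceil$, which is immediate since $r+1\leq\lceil\sqrt{\log N}\rceil\leq\lceil\sqrt{\log N'}\rceil$.
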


Recall that Algorithm $\aset(X)$ may, at any time, 
produce a strongly well-structured cut $(Y,Z)$ in $H[X]$ of sparsity $\Phi_{H[X]}(Y,Z)\leq \frac{(\log N')^{64}}{d_X}$. 
Below, we provide an algorithm for processing such a cut.
But before we do so, we provide the remaining details for processing cluster $X$.

Recall that the algorithm for the \stSP problem consists of at most $\Delta$ iterations. At the beginning of every iteration $i$, we will compute, for every non-leaf regular cluster $X\in \xset$, a pair $x,y\in V(X)$ of vertices (where possibly $x=y$). The corresponding algorithm $\aset(X)$ is then required to compute a simple path $P_X$ connecting $x$ to $y$ in graph $H[X]$, whose length is at most $d_X$. We will then process the resulting paths $P_X$ for all non-leaf clusters $X\in \xset$, in order to obtain a short $s$-$t$ path $P_i$ in $H$, as required. For each non-leaf cluster $X\in \xset$, we will ensure that $E(P_i)\cap E(H[X])\subseteq E(P_X)$. Following that, the lengths of some special edges on path $P_i$ may be doubled; we assume that our algorithm is given the list of all such edges. In particular, for each non-leaf regular cluster $X\in \xset$, we are given a subset of the special edges on path $P_X$, whose lengths must be doubled. 
From the definition of the \stSP problem, we are guaranteed that, for every non-leaf regular cluster $X\in \xset$, for every special edge $e\in E(H[X])$, if  $\tau'>\tau$ are two times at which the length of $e$ is doubled, then the number of paths $P_X$ that Algorithm $\aset(X)$ returned  in response to queries during the time interval $(\tau,\tau']$, which contain $e$, is at least $\eta$, so Property \ref{maintaincluster-prop: edge congestion} holds for the instance of the \maintaincluster problem associated with $X$. If Algorithm $\aset(X)$ ever returns ``FAIL'', then we terminate our algorithm for the \stSP problem, and return ``FAIL'' as well.

Lastly, whenever, for a non-leaf regular cluster $X\in \xset$, $|X|$ falls below $\frac{|X^0|}{2}$, we terminate Algorithm $\aset(X)$, and we say that cluster $X$ is \emph{destroyed}. We then think of the resulting set $X$ of vertices as a new cluster that was just added to $\xset$, and treat it as such (so we think that, once cluster $X$ is destroyed, it is deleted from $\xset$, and instead a new cluster, that is identical to $X$, is inserted into $\xset$).
It now remains to describe the algorithm to process sparse cuts produced by Algorithm $\aset(X)$ for non-leaf regular clusters $X\in \xset$.

\subsubsection*{Processing a Cut $(Y,Z)$ in Cluster $X$}

We assume that we are given a non-leaf regular cluster $X\in \xset$, and a strongly well-structured cut $(Y,Z)$ in $H[X]$ of sparsity $\Phi_{H[X]}(Y,Z)\leq \frac{(\log N')^{64}}{d_X}$. 
We denote by $J\subseteq Y$ and $J'\subseteq Z$ the sets of vertices that serve as endpoints of the edges of $E_H(Y,Z)$. Since the edges of $E_H(Y,Z)$ are special edges, it must be the case that $J\subseteq R$ and $J'\subseteq L$. Denote $Y'=Y\setminus J$ and $Z'=Z\setminus J'$.
We start with the following simple observation, whose proof is deferred to Section \ref{subsec: appx: proof of both graphs are perfect well-str} of Appendix.

\begin{observation}\label{obs: both graphs are perfect well-structured}
Each of the graphs $H[Y'],H[Z']$ is a perfect well-structured graph.
\end{observation}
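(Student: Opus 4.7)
The plan is to verify the two defining properties of a perfect well-structured graph (bipartiteness into $L\cap V$ and $R\cap V$, and that every surviving vertex is incident to exactly one special edge) for each of $H[Y']$ and $H[Z']$. Bipartiteness is inherited trivially from $H[X]$, so the real content is the second property. I will argue it by fixing an arbitrary vertex $v\in Y'$ (respectively $v\in Z'$), taking its \emph{unique} special edge $e_v$ guaranteed by the fact that $H[X]$ is itself perfect well-structured (Invariant \ref{inv: perfect well structured}), and showing that both endpoints of $e_v$ lie in $Y'$ (respectively in $Z'$).

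For $H[Y']$, I would split into the cases $v\in Y'\cap L$ and $v\in Y'\cap R$. Write $e_v=(u,v)$ or $e_v=(v,w)$ accordingly. To show the other endpoint stays on the $Y$-side, use the hypothesis that $(Y,Z)$ is \emph{strongly} well-structured: any edge in $E_H(Z,Y)$ must be regular, so a special edge incident to a vertex of $Y$ cannot have its other endpoint in $Z$. This forces $u\in Y$ or $w\in Y$. To then show the other endpoint avoids $J$, use the uniqueness of the special edge: since $J\subseteq R$ consists of $R$-endpoints of the special edges in $E_H(Y,Z)$, a vertex in $J$ already has its unique special edge crossing into $Z$; it cannot also be incident to $e_v$ which stays inside $Y$. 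The argument for $H[Z']$ is the symmetric mirror, using $J'\subseteq L$ and again the strongly-well-structured property plus uniqueness.

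The main subtle point, and what I expect to be the only place requiring care, is the bookkeeping of the two roles that $J$ and $J'$ play: $J$ lives on the $R$-side of $Y$ while $J'$ lives on the $L$-side of $Z$ (because the edges of $E_H(Y,Z)$ are special, hence $R\to L$). This asymmetry means the two cases ($v\in L$ vs.\ $v\in R$) in each of $Y'$ and $Z'$ are handled slightly differently: in one case the ``strongly well-structured'' property does the work, in the other the ``unique special edge'' property does. Once one sets up a small $2\times 2$ case table indexed by (which side of the cut, which side of the bipartition), each case is a one-line argument.

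Finally, I would note the inclusion $E(H[Y'])\subseteq E(H[X])$ and similarly for $Z'$, so no new edges appear, and therefore no vertex of $Y'$ can acquire a \emph{second} special edge in $H[Y']$; combined with the above, each surviving vertex is incident to \emph{exactly} one special edge, completing the verification that both $H[Y']$ and $H[Z']$ are perfect well-structured graphs.
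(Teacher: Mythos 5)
Your overall plan is the paper's own: for each $v\in Y'$ (resp.\ $Z'$), take its unique special edge in $H[X]$ and show the other endpoint also lies in $Y'$ (resp.\ $Z'$), using the definition of $J,J'$ and the strongly well-structured property of the cut. The one place where your argument, as written, would fail is the pivotal claim of your first paragraph: ``any edge in $E_H(Z,Y)$ must be regular, so a special edge incident to a vertex of $Y$ cannot have its other endpoint in $Z$.'' That implication does not follow from what you cite, and the statement itself is false for vertices of $Y$ in general: the edges of $E_H(Y,Z)$ are special, are incident to vertices of $Y$ (namely to $J$), and do have their other endpoint in $Z$. The strongly well-structured hypothesis only controls the $Z\to Y$ direction, so it settles the case $v\in Y'\cap L$ (whose special edge enters $v$), but not the case $v\in Y'\cap R$, whose special edge $(v,w)$ leaves $v$ and could a priori be a cut edge. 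What rules that out is not the regularity of $E_H(Z,Y)$ but the fact that $v\notin J$: if $w\in Z$ then $(v,w)\in E_H(Y,Z)$, so $v$ would be an endpoint of a cut edge, i.e.\ $v\in J$, contradicting $v\in Y'$. This is precisely the step ``since $v\notin J$, the edge must lie in $E_H(Z,Y)$'' in the paper's proof, and it is the ingredient your sketch never actually states: your second paragraph correctly senses that the two bipartition cases need different tools, but the tool you name there (uniqueness) is assigned to the separate task of avoiding $J$, so the crossing argument for the $v\in R$ case is left without a justification.

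Once that one line is added, everything else you propose goes through: with both endpoints in $Y$, the other endpoint cannot lie in $J$ either (for $v\in L$ by exactly the uniqueness argument you give, and for $v\in R$ trivially, since the other endpoint lies in $L$ while $J\subseteq R$), and the mirror argument for $Z'$ works with $J'\subseteq L$, with the roles of the two directions of the cut exchanged.
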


Assume first that $w(Y)\leq w(Z)$.
We update the set $X$ of vertices, by deleting the vertices of $Y\cup J'$ from it; in other words, we set $X=Z'$. We then add $Y'$ as a new cluster to $\xset$, and additionally, for every vertex $v\in J\cup J'$, we add a new singleton cluster $\set{v}$ to $\xset$.  We define the intervals associated with the new clusters, as follows. We assign, to every vertex $v\in J'\cup J$ an interval $I_{\set{v}}\subseteq I_X$ of length $1$, and additionally we define two intervals: $I_{Y'}\subseteq I_X$ of length $|Y'|$ and $I_{Z'}\subseteq I_X$ of length $|Z'|$, so that all resulting intervals $I_{Y'},I_{Z'},\set{I_{\set{v}}}_{v\in J\cup J'}$ are mutually disjoint, and the intervals are ordered as follows: we place the intervals corresponding to the vertices of $J'$ first, followed by interval $I_{Z'}$, then $I_{Y'}$, and lastly the intervals corresponding to the vertices of $J$. Since we have set $X=Z'$, we let $I_X=I_{Z'}$.  From \Cref{obs: both graphs are perfect well-structured}, Invariant \ref{inv: perfect well structured} continues to hold.
We now show that Invariant \ref{inv: backward edges} continues to hold, in the following observation, whose proof is deferred to Section \ref{subsec: appx: proof of obs new right to left} of Appendix. 

\begin{observation}\label{obs: new right to left}
	If $e=(u,v)$ is a right-to-left edge that was added due to the above transformation of \ATO, then $u\in J$ and $v\in J'$ holds, and so $e\in E_H(Y,Z)$, and $e$ is a special edge.
\end{observation}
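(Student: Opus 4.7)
The plan is to begin with the structural observation that the new intervals $I_{Y'},I_{Z'}$ and the singleton intervals $\{I_{\{v\}}\}_{v\in J\cup J'}$ are all contained in the old interval $I_X$, and together they partition $I_X$. Consequently, for any vertex $z\notin X$ and any vertex $x\in X$, the relative order of $z$ and $x$ under the new \ATO agrees with the relative order of $z$ and $I_X$ under the old \ATO. Hence the transformation cannot introduce any new right-to-left edge with only one endpoint in $X$, and it suffices to analyze edges $e=(u,v)$ with both endpoints in $X$.

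Next, I would use the new ordering inside $I_X$ (namely: singletons of $J'$, then $I_{Z'}$, then $I_{Y'}$, then singletons of $J$) to enumerate the configurations in which $(u,v)$ can become right-to-left. Writing the block of $u$ strictly to the right of the block of $v$, the possibilities are $(u,v)\in\{J\times J',\,J\times Z',\,J\times Y',\,J\times J,\,Y'\times J',\,Y'\times Z',\,Z'\times J'\}$ (the $J\times J$ case arises only if the two vertices sit in different singletons and are ordered that way). The goal is to rule out every case except $J\times J'$.

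The rule-outs combine two inputs: $(Y,Z)$ is a strongly well-structured cut, so every edge of $E_H(Y,Z)$ is special and has its $Y$-endpoint in $J$ and its $Z$-endpoint in $J'$; and $H$ is well-structured, so every vertex is incident to at most one special edge, edges go only between $L$ and $R$, with specials from $R$ to $L$ and regulars from $L$ to $R$. For instance, any edge crossing from $Y$ to $Z$ (the cases $Y'\times J'$, $Y'\times Z'$, $J\times Z'$) must lie in $E_H(Y,Z)$ and therefore have both endpoints in $J\cup J'$, contradicting the presence of $Y'$ or $Z'$. Any edge ending in $J'\subseteq L$ must be special, and since the unique special edge at a vertex of $J'$ already goes to its partner in $J\subseteq Y$, no such edge can originate in $Z'$, which eliminates $Z'\times J'$. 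The case $J\times Y'$ is excluded similarly: $u\in J\subseteq R$ forces the edge to be special with $v\in L$, yet $u$'s unique special edge goes to $J'$, not to $Y'$. The degenerate case $J\times J$ is impossible because $J\subseteq R$ and $H$ has no edge within $R$.

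What remains is exactly the case $u\in J$, $v\in J'$, and any such edge lies in $E_H(Y,Z)$ and is special by the strongly well-structured hypothesis, which is the claim. The main (minor) obstacle is just keeping this case analysis disciplined; once the interval-containment observation pins the analysis down to edges inside $X$, the well-structured cut definition and the at-most-one-special-edge-per-vertex property do essentially all of the remaining work.
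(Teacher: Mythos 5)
Your proof is correct and takes essentially the same route as the paper's: a case analysis over which blocks of the new ordering ($J'$, $Z'$, $Y'$, $J$) contain the endpoints, with the cases ruled out using the strongly well-structured cut (every edge of $E_H(Y,Z)$ is special, with tail in $J$ and head in $J'$) and the fact that each vertex is incident to at most one special edge; the paper merely organizes the cases by the block containing the head $v$ rather than by ordered pairs of blocks, and leaves the ``both endpoints lie in $X$'' reduction implicit, which you state explicitly. The only nitpick is that your enumeration omits the (vacuous) $J'\times J'$ configuration, which is excluded exactly as your $J\times J$ case, since $J'\subseteq L$ and a well-structured graph has no edges with both endpoints in $L$.
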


Since we have defined a separate singleton cluster for every vertex of $J\cup J'$, from \Cref{obs: new right to left}, Invariant  \ref{inv: backward edges} continues to hold. Moreover, the only right-to-left edges that we have introduced during this update are the edges of $E_{H}(Y,Z)$. Notice that the $\spann(e)$ value of each such edge is at most $|X|$. We call all edges of $E_{H}(Y,Z)$ \emph{bad edges}, and we say that they are \emph{owned} by cluster $X$. 

Recall that so far we have assumed that $w(Y)\leq w(Z)$. If $w(Y)>w(Z)$, then the algorithm remains the same, except that at the end we update cluster $X$ by deleting the vertices of $Z\cup J$ from it; or equivalently, we set $X=Y'$. We also set $I_X=I_{Y'}$. This concludes the algorithm for processing the cut $(Y,Z)$ of a non-leaf regular cluster $X\in \xset$.

\subsubsection{The Special Cluster}

Recall that Case 1 happens if
$|U^0|>2(n-|B^0|)\cdot (\log N)^{128}$ holds. In this case, we let $U$ be the special cluster of the \ATO, and we set $\gamma=|U^0|-256(n-|B^0|)(\log N)^{64}$. 
We now show an algorithm for processing the special cluster $U$ if Case 1 happened. We initialize Algorithm $\aset'$ for the $r$-restricted \maintainspeccluster problem on graph $H[U^0]$, with parameters $\Delta$ and $\eta$ that remain unchanged, parameter $d$ replaced by parameter $d_U= \max\set{(\log N')^{64},\left (|U^0\setminus B^0|+\Delta\right )\frac{\eta}{4\Delta\cdot 2^{\sqrt{\log N}}}}$, and parameter $N$ replaced with $N'$. We also initially set $\beta=B^0\cap U^0$, and we ensure that, throughout the algorithm, $\beta=B\cap U$ holds. In other words, whenever a vertex of $\beta$ is deleted from $B$ or from $U$, it is also deleted from $\beta$. 

We denote this instantiation of Algorithm $\aset'$ by $\aset'(U)$. We now verify that $(H[U^0],\beta,\Delta,\eta,d,N')$ is indeed a valid input instance for the $r$-restricted \maintainspeccluster problem, in the following simple observation, whose proof is deferred to Section \ref{subsec: appx: proof valid spec instance} of Appendix.

\begin{observation}\label{obs: valid spec instance}
	$(H[U^0],\beta,\Delta,\eta,d_U,N')$ is a valid input to the $r$-restricted \maintainspeccluster problem.
\end{observation}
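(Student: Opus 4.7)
The plan is to verify, one by one, each of the conditions required by Definition~\ref{def: maintainspeccluster problem} of the \maintainspeccluster problem for the tuple $(H[U^0],\beta,\Delta,\eta,d_U,N')$. Most of these verifications will be short and follow either from the way $U^0$, $\beta$, $d_U$, and $N'$ are defined, from the standing assumption that Case~1 holds (so $|U^0| > 2(n-|B^0|)(\log N)^{128}$), from the definition of the input instance of $r$-restricted \stSP, or from Inequalities~\eqref{eq: bound on log n'}, \eqref{eq: bound on N'}, and \eqref{eq: two bounds on d}.

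First, $H[U^0]$ is a perfect well-structured graph by construction: $U$ was defined as the set of vertices of $V(H)\setminus\{s,t\}$ incident to a special edge of $E^{\spec}$, and every such vertex is incident to exactly one special edge, whose other endpoint is also in $U$. The initial length of every special edge is $1$, giving a proper uniform assignment. For the weight bound, $W^0(H[U^0])=|U^0|\le n\le N\le N'$. Next, I would bound $|\beta|=|B^0\cap U^0|\ge |U^0|-(n-|B^0|)$; combining with the Case~1 assumption yields $|\beta|\ge (1-1/(2(\log N)^{128}))|U^0|\ge 99|U^0|/100$, as required.

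Then I would verify the two conditions involving $\Delta$, $\eta$, and $d_U$. For $\Delta\le |\beta|/(\log N')^{128}$: from $|\beta|\ge |U^0|/2$ and the Case~1 lower bound on $|U^0|$, we get $|\beta|\ge (n-|B^0|)(\log N)^{128}$. Using $\Delta\le n-|B^0|$ from the \stSP definition and the bound $\log N'\le (1+1/(64c_2 r))\log N$ from Inequality~\eqref{eq: bound on log n'}, the factor $(\log N')^{128}/(\log N)^{128}$ is bounded by a small constant (say $\le 2$, by choosing $c_2$ large enough), so the inequality holds with plenty of slack. The bound $1\le \eta\le \Delta$ is inherited from the \stSP input. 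For $d_U\ge (\log N')^{64}$, this is immediate from the $\max$ in the definition of $d_U$. For $\Delta\cdot d_U/\eta\le |U^0|-|\beta|+\Delta\cdot(\log N')^{64}$: note $|U^0|-|\beta|=|U^0\setminus B^0|$, and the inequality splits into the two cases of which term in the $\max$ attains $d_U$. When $d_U=(\log N')^{64}$, the left-hand side is at most $\Delta(\log N')^{64}$. When $d_U=(|U^0\setminus B^0|+\Delta)\eta/(4\Delta\cdot 2^{\sqrt{\log N}})$, the left-hand side simplifies to $(|U^0\setminus B^0|+\Delta)/(4\cdot 2^{\sqrt{\log N}})$, which is trivially dominated by $|U^0\setminus B^0|+\Delta\cdot(\log N')^{64}$.

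Finally, I would verify the $r$-restriction $d_U\le 2^{r\sqrt{\log N'}}$. For the $(\log N')^{64}$ branch this is immediate since $N'$ is large and $r\ge 1$. For the other branch, use $|U^0\setminus B^0|+\Delta\le 2(n-|B^0|)$ together with the $(r+1)$-restricted hypothesis $(n-|B^0|)\eta/\Delta\le 2\cdot 2^{(r+1)\sqrt{\log N}}$ to conclude that the term is bounded by $2^{r\sqrt{\log N}}\le 2^{r\sqrt{\log N'}}$. The assumption that $N$ is greater than a sufficiently large constant, required by the problem definition, is inherited from our standing assumption $N>2^{2^{c_2}}$ and the bound $N'\ge N$. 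The only place that requires real attention is the inequality $\Delta\le |\beta|/(\log N')^{128}$, because the mild inflation of $\log N$ to $\log N'$ interacts multiplicatively with the $128$th power; I would make sure to track the constant carefully, but choosing $c_2$ large enough absorbs the loss comfortably.
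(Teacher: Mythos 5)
Your proposal is correct and follows essentially the same route as the paper: a condition-by-condition verification using the Case~1 lower bound on $|U^0|$, the constraints inherited from the $(r+1)$-restricted \stSP instance, and the inequalities relating $\log N'$ to $\log N$. If anything, you are slightly more careful than the paper on the $\Delta\le|\beta|/(\log N')^{128}$ condition (the paper only checks it against $(\log N)^{128}$), and the factor-of-two slack you identify indeed absorbs the inflation.
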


Recall that Algorithm $\aset'(U)$ may, at any time, 
produce a strongly well-structured cut $(Y,Z)$ in $H[U]$ of sparsity $\Phi_{H[U]}(Y,Z)\leq \frac{(\log N')^{64}}{d_U}$, such that $w(Y)\geq 1.8|\beta\cap Y|$ holds. 
Below, we provide an algorithm for processing such a cut.
But before we do so, we provide the remaining details for processing cluster $U$.
We will show later that, throughout the algorithm, $|\beta|\geq \frac{|B^0\cap U^0|}{2}$ holds.

Recall that the algorithm for the \stSP problem consists of at most $\Delta$ iterations. At the beginning of every iteration $i$, we may compute a single vertex $x\in U$, that is sent as a query to Algorithm $\aset'(U)$. Algorithm $\aset'(U)$ is then required to compute a simple path $P(x)$ connecting $x$ to some vertex $y\in \beta$ in graph $H[U]$, whose length is at most $d_U$, such that no inner vertex of $P(x)$ lies in $\beta$ (if $x\in \beta$, then we require that path $P(x)$ only consists of the vertex $x$). 
Our algorithm will then compute a short $s$-$t$ path $P_i$ in $H$, as required, so that $E(P_i)\cap E(H[U])\subseteq E(P(x))$. Following that, the lengths of some special edges on path $P_i$ may be doubled; we assume that our algorithm is given the list of all such edges. In particular, we are given a subset of the special edges on path $P(x)$, whose lengths must be doubled. Additionally, vertex $y$ that serves as an endpoint of path $P(x)$ is deleted from set $B$ and from set $\beta$. This ensures that the lengths of the special edges that are incident to vertices of $\beta$ always remain $1$. We note that in some of the iterations the algorithm for the \stSP problem may not ask any queries to algorithm $\aset'(U)$, and in such iterations the lengths of the special edges of $H[U]$ are not doubled. 
From the definition of the \stSP problem, we are guaranteed that, for every special edge $e\in E(H[U])$, if  $\tau'>\tau$ are two times at which the length of $e$ is doubled, then the number of paths $P(x)$ that Algorithm $\aset'(U)$ returned  in response to queries during the time interval $(\tau,\tau']$, which contain $e$, is at least $\eta$, so Property \ref{maintaincluster-prop: edge congestion} holds for the instance of the \maintainspeccluster problem associated with $U$. If Algorithm $\aset'(U)$ ever returns ``FAIL'', then we terminate our algorithm for the \stSP problem, and return ``FAIL'' as well.

Finally, we provide an algorithm for processing the sparse cuts that Algorithm $\aset'(U)$ may produce. Suppose Algorithm $\aset'(U)$ returns  a strongly well-structured cut $(Y,Z)$ in $H[U]$ of sparsity $\Phi_{H[U]}(Y,Z)\leq \frac{(\log N')^{64}}{d_U}$, such that $w(Y)\geq 1.8|\beta\cap Y|$ holds.  
We denote by $J\subseteq Y$ and $J'\subseteq Z$ the sets of vertices that serve as endpoints of the edges of $E_H(Y,Z)$, and we let $Y'=Y\setminus J$ and $Z'=Z\setminus J'$.
 Using the same reasoning as in the proof of \Cref{obs: both graphs are perfect well-structured}, 
each of the graphs $H[Y'],H[Z']$ is a perfect well-structured graph.
We will show below that $|Z'|\geq \gamma$ must hold. We delete the vertices of $Y\cup J'$ from $U$, so that $U=Z'$ now holds, and we add $Y'$ as a new cluster to $\xset$. Additionally, for every vertex $v\in J\cup J'$, we add a new singleton cluster $\set{v}$ to $\xset$.  We define the intervals associated with the new clusters as follows. We assign, to every vertex $v\in J'\cup J$ an interval $I_{\set{v}}\subseteq I_U$ of length $1$, and additionally we define two intervals: $I_{Y'}\subseteq I_U$ of length $|Y'|$ and $I_{Z'}\subseteq I_U$ of length $|Z'|-\gamma+1$, so that all resulting intervals $I_{Y'},I_{Z'},\set{I_{\set{v}}}_{v\in J\cup J'}$ are mutually disjoint, and the intervals are ordered as follows: we place the intervals corresponding to the vertices of $J'$ first, followed by interval $I_{Z'}$, then $I_{Y'}$, and lastly the intervals corresponding to the vertices of $J$. Since we have set $U=Z'$, we let $I_U=I_{Z'}$.  As before, Invariant \ref{inv: perfect well structured} continues to hold. Using the same reasoning as in the proof of \Cref{obs: new right to left}, 	if $e=(u,v)$ is a right-to-left edge that was added due to the above transformation of \ATO, then $u\in J$ and $v\in J'$ holds, and so $e\in E_H(Y,Z)$, and $e$ is a special edge. In particular, Invariant \ref{inv: backward edges} continues to hold.
 Moreover, the only right-to-left edges that we have introduced during this update are the edges of $E_{H}(Y,Z)$. Notice that the $\spann(e)$ value of each such edge is at most $|U^0|-\gamma+1$. We call all edges of $E_{H}(Y,Z)$ \emph{bad edges}, and we say that they are \emph{owned} by cluster $U$. 
We denote by $E^{\bad}(U)$ the set of all bad edges that are owned by cluster $U$ over the course of the algorithm. In the following observation we show that $|U|\geq \gamma$ holds throughout the algorithm, and we bound $|E^{\bad}(U)|$.

\begin{observation}\label{obs: large size of U}
	Over the course of the algorithm, at most $ 64|U^0\setminus B^0|+128\Delta\cdot(\log N)^{64}\leq 256(n-|B^0|)(\log N)^{64}$ vertices are deleted from $U$, and so throughout  the algorithm, $|U|\geq \gamma$ and $|\beta|\geq \frac{|U^0\cap B^0|}{2}$ hold. Moreover, $|E^{\bad}(U)|\leq \frac{\Delta\cdot 2^{6\sqrt{\log N}}}{\eta}$.
\end{observation}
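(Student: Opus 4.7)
The plan is to establish each claim separately: (a) bound the total number of vertices deleted from $U$ over the algorithm's run; (b) derive $|U|\geq\gamma$ from (a); (c) invoke the termination condition of the \maintainspeccluster problem for $|\beta|\geq|U^0\cap B^0|/2$; and (d) bound $|E^{\bad}(U)|$. Let $(Y_i, Z_i)$ index the cuts returned by $\aset'(U)$, and let $V_i = Y_i \cup J_i'$ denote the set of vertices removed from $U$ at the $i$-th cut. Write $\phi = (\log N')^{64}/d_U$ for the sparsity upper bound. Since $d_U\geq(\log N')^{64}$ by definition, $\phi\leq 1$, so $|V_i|\leq |Y_i|+|J_i'|\leq w(Y_i) + \phi\cdot w(Y_i)\leq 2w(Y_i)$. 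The condition $w(Y_i)\geq 1.8|Y_i\cap\beta|$ implies $w(Y_i\setminus\beta)\geq (4/9)w(Y_i)$, hence $|V_i|\leq (9/2)w(Y_i\setminus\beta)$.

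The key step is a potential argument to bound $\sum_i w(Y_i\setminus\beta)$. Define the potential $\Psi_t = \sum_{v\in U_t\setminus\beta_t} w_t(v)$, the total weight currently in $U\setminus\beta$. Initially $\Psi_0 = |U^0\setminus B^0|$ (all special edges start with length $1$); each cut event $i$ decreases $\Psi$ by exactly $w(Y_i\setminus\beta)$; $\Psi$ increases by $1$ per query to $\aset'(U)$ (at most $\Delta$ such queries) when a vertex leaves $\beta$ into $U\setminus\beta$; and $\Psi$ increases by at most $2\sum_{e\in E(H[U])}(\ell^*(e)-1)$ in total due to length doublings of special edges with endpoints in $U\setminus\beta$. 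I then apply the doubling-counter argument of \Cref{obs: bound edge lengths} \emph{locally} to $\aset'(U)$: each doubling of an edge $e\in E(H[U])$ is preceded by $\eta$ paths returned by $\aset'(U)$ containing $e$, there are at most $\Delta$ such paths, and each has length at most $d_U$, giving $\sum_{e\in E(H[U])}(\ell^*(e)-1)\leq 2\Delta d_U/\eta$. Substituting $d_U = \max\{(\log N')^{64}, (|U^0\setminus B^0|+\Delta)\eta/(4\Delta\cdot 2^{\sqrt{\log N}})\}$ and combining all contributions to $\Psi$ yields $\sum_i w(Y_i\setminus\beta) = O\bigl(|U^0\setminus B^0| + \Delta(\log N)^{64}/\eta\bigr)$. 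With appropriate constants this gives $\sum_i |V_i|\leq 64|U^0\setminus B^0| + 128\Delta(\log N)^{64}$, and since $|U^0\setminus B^0|, \Delta\leq n-|B^0|$, the bound is at most $256(n-|B^0|)(\log N)^{64}$. Consequently $|U|\geq |U^0| - 256(n-|B^0|)(\log N)^{64} = \gamma$ at all times, and $|\beta|\geq |U^0\cap B^0|/2$ is immediate from the termination rule of the \maintainspeccluster problem.

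For $|E^{\bad}(U)|=\sum_i|E_H(Y_i,Z_i)|\leq\phi\sum_i w(Y_i)\leq (9/4)\phi\sum_i w(Y_i\setminus\beta)$, I split the product according to the two additive terms in the bound on $\sum_iw(Y_i\setminus\beta)$. Using $\phi\leq 4\Delta\cdot 2^{\sqrt{\log N}}(\log N')^{64}/((|U^0\setminus B^0|+\Delta)\eta)$ (from the second lower bound on $d_U$) applied to the $|U^0\setminus B^0|$ term gives $O(\Delta\cdot 2^{\sqrt{\log N}}(\log N')^{64}/\eta)$; while $\phi\leq 1$ applied to the $\Delta(\log N)^{64}/\eta$ term gives $O(\Delta(\log N)^{64}/\eta)$. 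Invoking Inequality~\eqref{eq: bound on N'} to absorb $(\log N')^{64}$ and $(\log N)^{64}$ into $2^{O(\sqrt{\log N})}$ (valid since $N$ is sufficiently large) bounds both contributions by $\Delta\cdot 2^{6\sqrt{\log N}}/\eta$.

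The main obstacle is the doubling contribution to the potential. The global bound from \Cref{obs: bound edge lengths}, namely $\sum_{e\in E^{\spec}}(\ell^*(e)-1) = O((n-|B^0|)\log^5 n)$, is far too weak whenever $n-|B^0|$ dominates $|U^0\setminus B^0|+\Delta(\log N)^{64}/\eta$, since it would introduce a spurious $(n-|B^0|)$ factor into the vertex-deletion count. Localizing the doubling-counter argument to $E(H[U])$, and exploiting the fact that paths returned by $\aset'(U)$ have length at most $d_U$ rather than the global $\Lambda$, is essential to drive the bound down to $|U^0\setminus B^0|+\Delta(\log N)^{64}$ and in turn to $|E^{\bad}(U)|\leq \Delta\cdot 2^{6\sqrt{\log N}}/\eta$.
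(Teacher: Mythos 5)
Your treatment of the vertex-deletion bound and of $|E^{\bad}(U)|$ follows essentially the paper's own route: the paper also localizes the doubling-counter argument to $\aset'(U)$ (this is exactly \Cref{obs: bound final weights in U}, proved via ``each doubling is preceded by $\eta$ returned paths of length at most $d_U$''), uses the condition $w(Y)\geq 1.8|\beta\cap Y|$ to pass from the number of deleted vertices to the weight outside $\beta$ at the cost of a constant factor, and exploits both branches of the definition of $d_U$ when bounding $|E^{\bad}(U)|$; your potential $\Psi$ is a repackaging of that charging scheme. One bookkeeping slip: the intermediate bound $\sum_i w(Y_i\setminus\beta)=O\bigl(|U^0\setminus B^0|+\Delta(\log N)^{64}/\eta\bigr)$ omits the additive $\Delta$ contributed by the (up to $\Delta$) unit-weight vertices that move from $\beta$ into $U\setminus\beta$ after queries; when $\eta$ is close to $\Delta$ and $|U^0\setminus B^0|$ is small, that term is not dominated by the others, so the bound as you state it is false. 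It is harmless downstream: for the deletion count $\Delta\leq\Delta(\log N)^{64}$ suffices, and for $|E^{\bad}(U)|$ the extra term gets multiplied by $\phi\leq 4\Delta\cdot 2^{\sqrt{\log N}}(\log N')^{64}/\bigl((|U^0\setminus B^0|+\Delta)\eta\bigr)$, which yields $O\bigl(\Delta\cdot 2^{\sqrt{\log N}}(\log N')^{64}/\eta\bigr)$ and is absorbed into $\Delta\cdot 2^{6\sqrt{\log N}}/\eta$ --- but you should carry the $+\Delta$ explicitly.

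The genuine gap is the claim that $|\beta|\geq|U^0\cap B^0|/2$ ``is immediate from the termination rule of the \maintainspeccluster problem.'' That rule only says that $\aset'(U)$ stops once $|\beta|$ has halved, so quoting it proves the vacuous statement that the inequality holds as long as $\aset'(U)$ is still running. The point of this part of the observation is precisely to show that the halving never occurs during the (up to $\Delta$ iterations of the) \stSP algorithm, so that $\aset'(U)$ remains available in every iteration and Invariant \ref{inv: special cluster} can be maintained; with your argument the outer algorithm could in principle lose its data structure for the special cluster in mid-run. The correct derivation is a counting argument using what you already proved: vertices leave $\beta$ only because they are deleted from $B$ by the outer algorithm (at most $\Delta\leq n-|B^0|$ of them) or because they are deleted from $U$ (at most $256(n-|B^0|)(\log N)^{64}$, by the first part of the observation); since Case 1 guarantees $|U^0|>2(n-|B^0|)(\log N)^{128}$ while $|U^0\setminus B^0|\leq n-|B^0|$, we have $|U^0\cap B^0|\geq (n-|B^0|)(\log N)^{128}$, and hence $|\beta|\geq |U^0\cap B^0|-257(n-|B^0|)(\log N)^{64}\geq |U^0\cap B^0|/2$ throughout.
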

\begin{proof}
	Recall that we denoted by $E^*\subseteq E^{\spec}$ the set of all special edges $e=(u,v)$, such that $u,v\in B$ holds at the end of the algorithm. Let $B^*\subseteq B$ the set of vertices containing all endpoints of the edges in $E^*$, and let $\tilde U=U^0\setminus B^*$. For every vertex $v\in U^0$, we define a weight $\tilde w(v)$ as follows: if, at the end of the algorithm, $v\in U$, then $\tilde w(v)$ is the weight of $v$ at the end of the algorithm; otherwise, $\tilde w(v)$ is the weight of $v$ just before it is deleted from $U$. We use the following observation, whose proof is very similar to the proof of  \Cref{obs: bound edge lengths}, and deferred to Section \ref{subsec: appx: proof of bound on final weights in U} of Appendix.

	\begin{observation}\label{obs: bound final weights in U}
	$\sum_{v\in \tilde U}\tilde w(v)\leq  8|U^0\setminus B^0|+8\Delta\cdot (\log N')^{64}$.
	\end{observation}

	Consider now some cut $(Y,Z)$ of $U$ that Algorithm $\aset'(U)$ produced. Recall that $|E_H(Y,Z)|\leq \frac{(\log N')^{64}}{d_U}\cdot w(Y)$, and that $w(Y)\geq 1.8|\beta\cap Y|$ holds. Since $B^*\cap Y\subseteq \beta\cap Y$, we get that $w(Y)\geq 1.8|B^*\cap Y|$. Since the weight of every vertex in $B^*$ remains $1$ throughout the algorithm, we get that $w(Y\setminus B^*)= w(Y)-|B^*\cap Y|\geq w(Y)-\frac{w(Y)}{1.8}\geq 0.44 w(Y)$. In particular, $|Y\cap B^*|\leq \frac{w(Y)}{1.8}\leq \frac{w(Y\setminus B^*)}{1.8\cdot 0.44}\leq 1.3w(Y\setminus B^*)$.

	Moreover, $|Y|\leq w(Y)=w(Y\setminus B^*)+|Y\cap B^*|\leq 2.3w(Y\setminus B^*)$ holds. Recall that all vertices of $Y$, and also the endpoints of the special edges in $E_H(Y,Z)$ are deleted from cluster $U$ following the cut $(Y,Z)$. Therefore, the total number of vertices deleted from $U$ following the cut $(Y,Z)$ is at most $2|Y|\le 4.6w(Y\setminus B^*)\leq 4.6\sum_{v\in Y\cap \tilde U}\tilde w(v)$. Lastly:

	\[|E_H(Y,Z)|\leq \frac{(\log N')^{64}}{d_U}\cdot w(Y)\le\frac{(\log N')^{64}}{d_U}\cdot 2.3 w(Y\setminus B^*)\leq \frac{2.3\cdot (\log N')^{64}}{d_U}\cdot \sum_{v\in Y\cap \tilde U}\tilde w(v).\]

	Consider now all cuts $(Y_1,Z_1),\ldots,(Y_r,Z_r)$ that Algorithm $\aset'(U)$ ever produced, in the order in which they were produced. For all $1\leq i\leq r$, following the cut $(Y_i,Z_i)$, the vertices of $Y_i$, along with some additional vertices, were deleted from $U$. From the above discussion, the total number of vertices deleted from $U$ following the cut is bounded by $4.6\sum_{v\in Y_i\cap \tilde U}\tilde w(v)$, and $|E_H(Y_i,Z_i)|\leq \frac{2.3\cdot (\log N')^{64}}{d_U}\cdot \sum_{v\in Y\cap \tilde U}\cdot \sum_{v\in Y_i\cap \tilde U}\tilde w(v)$. We conclude that the total number of vertices deleted from $U$ over the course of the algorithm is bounded by:

	\[\begin{split}
	4.6\sum_{i=1}^r\sum_{v\in Y_i\cap \tilde U}\tilde w(v)&\leq 4.6\sum_{v\in \tilde U}\tilde w(v)\\
	&\leq  64|U^0\setminus B^0|+64\Delta\cdot(\log N')^{64}\\
	&\leq 64|U^0\setminus B^0|+128\Delta\cdot(\log N)^{64}\\
	&\leq 256(n-|B^0|)(\log N)^{64},\end{split} \]
	
	from Inequality \ref{eq: bound on log n'}.
	Since, at the beginning of the algorithm, $|U|>2(n-|B^0|)\cdot (\log N)^{128}$ holds, and since
	$\gamma=|U^0|-256(n-|B^0|)(\log N)^{64}$,  we get that $|U|\geq \gamma$ holds throughout the algorithm.
	
Next, we show that, throughout the algorithm, $|\beta|\geq \frac{|U^0\cap B^0|}{2}$ holds. Recall again that $|U^0| >2(n-|B^0|)\cdot (\log N)^{128}$, and that $|U^0\setminus B^0|\leq n-|B^0|$. Therefore, $|U^0\cap B^0|\geq |U^0|-|U^0\setminus B^0|\geq  (n-|B^0|)\cdot (\log N)^{128}$. Set $\beta$ contains all vertices of $U^0\cap B^0$, except for the vertices that were deleted from $B$ (whose number is bounded by $\Delta\leq n-|B^0|$), and the vertices that were deleted from $U$ (whose number is bounded by $256(n-|B^0|)(\log N)^{64}$). Therefore, throughout the algorithm, $|\beta|\geq |U^0\cap B^0|-257(n-|B^0|)(\log N)^{64}$ must hold. Since $|U^0\cap B^0|\geq (n-|B^0|)\cdot (\log N)^{128}$, we get that $|\beta|\geq \frac{|U^0\cap B^0|}{2}$ holds throughout the algorithm.

	Lastly, we get that:
	
	\[
	\begin{split}
	|E^{\bad}(U)|&\leq \sum_{i=1}^r|E_H(Y_i,Z_i)|\\
	&\leq \frac{2.3\cdot (\log N')^{64}}{d_U}\cdot \sum_{i=1}^r\sum_{v\in Y\cap \tilde U}\tilde w(v)\\
&\leq  \frac{10\Delta\cdot 2^{2\sqrt{\log N'}}}{\left (|U^0\setminus B^0|+\Delta \right )\cdot\eta}\cdot \sum_{v\in \tilde U}\tilde w(v)\\
&\leq \frac{10\Delta\cdot 2^{2\sqrt{\log N'}}}{\left (|U^0\setminus B^0|+\Delta \right )\cdot\eta}\cdot \left( 8|U^0\setminus B^0|+8\Delta (\log N')^{64}\right )\\
&\leq  \frac{\Delta\cdot 2^{6\sqrt{\log N}}}{\eta},
	\end{split}\]
	
	since $d_U= \max\set{(\log N')^{64},\left (|U^0\setminus B^0|+\Delta\right )\frac{\eta}{4\Delta\cdot 2^{\sqrt{\log N}}}}\geq \left (|U^0\setminus B^0|+\Delta\right )\frac{\eta}{4\Delta\cdot 2^{\sqrt{\log N}}}$; we have also used Inequality \ref{eq: bound on N'}.
\end{proof}

\subsubsection{Child Clusters, Descendant Clusters, and a Useful Bound}

%Recall that $E^*\subseteq E^{\spec}$ denote the set of all special edges $e=(u,v)$, such that, at the end of the algorithm, $u,v\in B$ holds.

Let $\tilde \xset$ denote the set of all clusters that ever belonged to $\xset$ over the course of the algorithm. We partition $\tilde \xset$ into three subsets: set  $\tilde \xset_0$ contains all singleton clusters, set $\tilde \xset_1$ contains all non-singleton leaf clusters, and set $\tilde \xset_2$ contains all non-leaf clusters. If the special cluster is defined, then we set $\tilde \xset_2'=\tilde \xset_2\setminus\set{U}$, and otherwise we let $\tilde \xset_2'=\tilde \xset_2$.

Let $E^{\bad}$ be the set of all edges of $G$ that ever served as right-to-left edges of the \ATO. For every cluster $X\in \tilde\xset_1\cup \tilde \xset_2$, we denote by $E^{\bad}(X)$ the set of all bad edges that cluster $X$ owns. Recall that, once an edge $e=(x,y)$ is added to the set $E^{\bad}(X)$ of edges, its initial value $\spann(e)$ is bounded by $|X|$. From Invariant \ref{inv: backward edges}, 
once edge $e$ was added to set $E^{\bad}$, if $Y,Y'$ were the clusters of $\xset$ containing $x$ and $y$ respectively, then $Y=\set{x}$ and $Y'=\set{y}$ holds. In other words, when $e$ was added to $E^{\bad}$, $Y$ and $Y'$ were singleton clusters. Therefore, these clusters, and their corresponding intervals $I_Y$ and $I_{Y'}$, remain unchanged for the remainder of the algorithm. In particular, the value $\spann(e)$ also remains unchanged for the remainder of the algorithm.

Recall that, from Invariant \ref{inv: perfect well structured}, for every cluster $X\in \tilde \xset_2'$, graph $H[X]$ remains a perfect well-structured graph for as long as $X$ lies in $\xset$. Recall also that we have defined weights of vertices for such graphs: for every vertex $v\in X$, its weight $w(v)$ is the length of the unique special edge incident to $v$ in $H[X]$. As the algorithm progresses, the lengths of some special edges grow, and so do the weights of the vertices in $H[X]$. 
For every vertex $v\in X^0$, we denote by $w^0_X(v)$ the weight of $v$ when cluster $X$ was first added to set $\xset$, and algorithm $\aset(X)$ was initialized. If $v$ belongs to set $X$ when algorithm $\aset(X)$ is terminated (e.g. because $|X|$ fell below $|X^0|/2$), then we let $w^*_X(v)$ denote the weight of $v$ in graph $H[X]$ when algorithm $\aset(X)$ terminated. Otherwise, we denote by $w^*_X(v)$ the weight of $v$ just before it was deleted from $X$. 
We also denote $W^0(X)=\sum_{v\in X^0}w^0_X(v)$, and $W^*(X)=\sum_{v\in X^0}w^*_X(v)$. 
From \Cref{obs: bound edge lengths}, the total length of all edges in $E^{\spec}$ at the end of the algorithm is bounded by $|E^{\spec}|+2(n-|B^0|)\log^5n\leq 4n\log^5n$, and so, for every cluster $X\in \tilde \xset_2$: 

\begin{equation}\label{eq: final weight of a cluster}
W^0(X)\leq W^*(X)\leq 8n\log^5n. 
\end{equation}

Recall that, at the beginning of the algorithm, $\xset$ may contain at most one non-leaf cluster -- the cluster $U$. A new non-leaf cluster $X'$ may only be inserted into $\xset$ in one of the following two cases: (i) there is some existing non-leaf cluster $X\in \xset$, for which $|X|$ fell below $|X^0|/2$, and $X'=X$; or (ii) there is some existing non-leaf cluster $X\in \xset$, for which Algorithm $\aset(X)$ or $\aset'(X)$ just computed a sparse well-structured cut $(Y,Z)$, and $X'$ is contained in one of the sets $Y$ or $Z$. %Recall that, in the latter case, we are guaranteed that $\sum_{v\in X'}w(v)\leq \frac{\sum_{v\in X}w(v)}2$ holds at the time when $X'$ is added to $\xset$, and moreover, 
In the latter case, the vertices of $X'$ (along, possibly, with some additional vertices) are deleted from $X$. In either one of the above two cases, we say that $X$ is the \emph{parent-cluster} of $X'$, and $X'$ is the \emph{child-cluster} of $X$. We can now naturally define a descendant-ancestor relationship between clusters: Cluster $X'$ is a descendant-cluster of another cluster $X$, iff there is a sequence $X'=X_1,X_2,\ldots,X_r=X$ of $r\geq 1$ clusters, such that, for all $1\leq i<r$, $X_i$ is a child-cluster of $X_{i+1}$. If $X'$ is a descendant-cluster of $X$, then $X$ is the ancestor-cluster of $X'$. It is easy to verify that, for every pair $X_1,X_2\in \tilde \xset_2$ of clusters, either $X^0_1\cap X^0_2=\emptyset$, or $X_1^0\subseteq X^0_2$ (in which case $X_1$ is a descendant cluster of $X_2$), or $X_2^0\subseteq X^0_1$ (in which case $X_2$ is a descendant cluster of $X_1$).
From the above discussion, if $X_1$ is a child-cluster of some cluster $X_2\in \xset'_2$, then either $|X_1^0|<|X_2^0|/2$; or, at the time when $X_1$ was created, $\sum_{v\in X_1}w(v)\leq \sum_{v\in X_2}w(v)/2$ held, following which the vertices of $X_1$ were deleted from $X_2$.

Consider now a pair $X_1,X_2\in \tilde \xset_2'$ of clusters, where $X_2$ is a descendant cluster of $X_1$. We say that $X_2$ is a \emph{close descendant of $X_1$} iff $|X_2^0|>|X_1^0|/2$. In the following claim, we bound the number of close descendants of every cluster in $\tilde X_2'$. The proof is somewhat technical and deferred to Section \ref{subsec: proof of close descendant} of Appendix.

\begin{claim}\label{claim: close descendant}
	Every cluster $X\in \tilde X_2'$ has at most $64\ceil{\log n}$ close descendants.
\end{claim}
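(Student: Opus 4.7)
My plan proceeds in two steps. First, I will show that the close descendants of $X$ form a chain rather than a branching tree. For every descendant $Y$ of $X$, $Y^0$ is contained in its parent's current vertex set at creation time, and hence in the parent's initial vertex set, so the sizes $|Y^0|$ are non-increasing along every descendant path. Consequently, close descendants form a connected subtree rooted at $X$. I then rule out branching: a case-(i) rebirth child always satisfies $|Y_c^0|<|Y_p^0|/2\le|X^0|/2$ and is never close; and in case (ii), if $Y_p$'s cut $(A,B)$ (with $w(A)\le w(B)$) produces a close child $Y_c=A'$, then $|A|\ge|A'|>|X^0|/2$, so $Y_p$'s current vertex set shrinks by more than $|X^0|/2$. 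Since $Y_p$ had at most $|X^0|$ current vertices before the cut, afterwards it has fewer than $|X^0|/2$, precluding any further close children. The close descendants therefore form a chain $X=X_0,X_1,\ldots,X_r$.

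The second step bounds $r$ by a weight-halving argument. Each transition $X_i\to X_{i+1}$ comes from a case-(ii) cut $(A_{i+1},B_{i+1})$ of $X_i$ at some time $t_{i+1}$ with $w(A_{i+1})\le w(B_{i+1})$, which yields $W^0(X_{i+1})\le w_{t_{i+1}}(A_{i+1})\le w_{t_{i+1}}(X_i)/2\le W^*(X_i)/2$. Writing $W^*(X_i)=W^0(X_i)+\sigma_i$ for the weight growth $\sigma_i$ during $X_i$'s lifetime, iterating gives $W^0(X_r)\le W^0(X)/2^r+\sum_{i=0}^{r-1}\sigma_i/2^{r-i}$. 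Combined with $W^0(X_r)\ge |X_r^0|>|X^0|/2$ (from the close condition), this produces an inequality that constrains $r$.

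The crucial technical observation is that the $\sigma_i$ do not double-count weight-growth events: once $X_{i+1}$ has been split off from $X_i$, their current vertex sets are disjoint, and inductively all chain clusters have pairwise disjoint current vertex sets once they coexist. Hence each special edge lies in at most one chain cluster at any moment, and each doubling contributes to at most one $\sigma_i$. Thus $\sum_i\sigma_i$ is at most the total weight growth of special edges in $X^0$, which by \Cref{obs: bound edge lengths} is $O(n\log^5 n)$. Combining with $W^0(X)\le W^*(X)\le O(n\log^5 n)$ and the lower bound $|X^0|>\Delta(\log N)^{128}/\eta$ coming from $X\in\tilde\xset_2'$ being non-leaf, straightforward algebra should yield $r\le 64\ceil{\log n}$.

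The main obstacle is sharpening the bound on $\sum_i\sigma_i$ enough that it does not dwarf $|X^0|$. In some parameter regimes the global bound $O(n\log^5 n)$ exceeds $|X^0|$, and pure halving becomes insufficient. Closing the argument will likely require a finer accounting that leverages both the chain-cluster disjointness and the $\eta$-threshold structure of the \maintaincluster problem, bounding $\sum_i\sigma_i$ in terms of $|X^0|$ or $\Delta$ rather than $n$.
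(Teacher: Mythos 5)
Your skeleton (close descendants form a chain; each close transition comes from a cut whose split-off side has at most half the parent's current weight; the leftover pieces are disjoint) matches the paper's setup, but the proof is genuinely incomplete at exactly the point you flag, and the global bound you fall back on cannot be repaired by "straightforward algebra". With only $\sum_i\sigma_i=O(n\log^5 n)$ from \Cref{obs: bound edge lengths}, your inequality $|X^0|/2<W^0(X_r)\le W^0(X)/2^r+\sum_{i}\sigma_i/2^{r-i}$ is vacuous for every $r$ as soon as the growth term dominates $|X^0|$ (already $\sigma_{r-1}/2$ can exceed $|X^0|/2$). The missing idea is the \emph{local, scale-matched} growth bound coming from the cluster-specific distance parameter: every cluster $Y$ in the chain has $Y^0\subseteq X^0$, hence $d_Y\le d_X\le \frac{|X^0|\cdot\eta}{\Delta}$; the algorithm answers at most $\Delta$ queries, each along a path of length at most $d_Y\le d_X$; and by Property~\ref{maintaincluster-prop: edge congestion} an edge is doubled only after lying on $\eta$ returned paths since its last doubling. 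Consequently the total final length of edges of $X^0$ whose length more than doubles during the relevant stretch is at most $\frac{2}{\eta}\cdot\Delta\cdot d_X\le 2|X^0|\le 2W^0(X)$, i.e.\ the growth is $O(W^0(X))$, not $O(n\log^5 n)$. This is precisely the ``finer accounting via the $\eta$-threshold'' you anticipate but do not supply, and it is the heart of the paper's argument (its auxiliary claim bounding short subsequences).

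Even with that repair, your one-shot iterated halving against the vertex-count threshold $|X_r^0|>|X^0|/2$ does not produce a logarithmic bound: the halving is in weight while closeness is measured in vertices, and the local growth bound above is only valid while the chain stays within a constant factor of $W^0(X)$ in weight. The paper instead argues in two layers. First, a packing step: along any stretch of the chain where the initial weights remain at least $W^0(X_1)/2$, each close split leaves behind a disjoint remainder of weight at least $W^0(X_1)/2$ (the child is the lighter side), while the total weight available -- initial plus the $O(W^0(X_1))$ growth above -- is at most $8W^0(X_1)$; hence such a stretch has length at most $16$. Second, an induction over the $O(\log n)$ dyadic scales of $W^0(\cdot)$ (using $W^0\le 8n\log^5 n$): once the chain's initial weight drops below $W^0(X)/2$, all later clusters are close descendants of the first such cluster, so the inductive hypothesis applies to it, and the scales compose to give $64\ceil{\log n}$. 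Your sketch has the right raw materials, but without the $d_X$-based growth bound and this scale-by-scale decomposition the argument does not close.
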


For the sake of analysis, we need to bound $\sum_{X\in \tilde \xset_2'}W^*(X)$. This bound will later be used both in order to bound the running time of the algorithm, and in order to bound $\sum_{e\in E^{\bad}}\spann(e)$.

\begin{observation}\label{obs: weights of clusters}
%	For all $0\leq i<\ceil{\log n}$  and $0\leq j\leq 4\ceil{\log n}$, 
%	$\sum_{X\in \xset_{i,j}}W^*(X)\leq 128(n-|B^0|)\cdot (\log N)^{130}$.
	$\sum_{X\in \tilde \xset_2'}W^*(X)\leq (n-|B^0|)\cdot (\log N)^{133}$.
\end{observation}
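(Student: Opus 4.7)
The plan is to write $\sum_{X \in \tilde \xset_2'} W^*(X) = \sum_v \sum_{X \in \tilde \xset_2':\, v \in X^0} w^*_X(v)$ and bound the inner sum per vertex by (a) limiting the number of clusters of $\tilde \xset_2'$ that contain $v$, and (b) using the cheap bound $w^*_X(v) \leq w^*(v)$, which is valid because special-edge lengths (hence vertex weights) only grow over time. The factor saving compared to $n$ will come from bounding the number of vertices that are ever touched by a cluster of $\tilde \xset_2'$.

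First I would argue the multiplicity bound: the clusters of $\tilde \xset_2'$ that contain a fixed vertex $v$ form a descending chain in the parent–child tree of clusters (each transition of $v$'s cluster is either a cut-split that moves $v$ into a new child $Y'$, or a destruction that renames the current cluster once its cardinality has dropped below $|X^0|/2$). Along such a chain $C_0, C_1, \ldots, C_r$, I identify the ``halving indices'' where $|C_{i+1}^0| \leq |C_i^0|/2$. Between consecutive halving indices, every intermediate cluster is a close descendant of the cluster at the start of the sub-chain, so \Cref{claim: close descendant} caps the number of clusters per phase by $O(\log N)$, while the halving forces at most $O(\log N)$ phases. This yields $r = O(\log^2 N)$ clusters of $\tilde \xset_2'$ per vertex.

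Next I would bound the set $V^*$ of vertices ever contained in a cluster of $\tilde \xset_2'$. In Case~2 no special cluster exists, so every vertex of $V^*$ must have been incident to a special edge and hence initially belonged to $U^0$; the Case~2 hypothesis bounds $|U^0| \leq 2(n-|B^0|)(\log N)^{128}$. In Case~1 a vertex enters $V^*$ only once it is removed from the special cluster $U$, and \Cref{obs: large size of U} caps the total number of such removals by $256(n-|B^0|)(\log N)^{64}$. Thus in both cases $|V^*| \leq O((n-|B^0|)(\log N)^{128})$.

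Finally I would split $V^*$ by weight type. Vertices that remain in $B$ throughout the algorithm (i.e.\ those in $B^*$) have $w^*(v) = 1$ because their unique incident special edge lies in $E^*$ and is never doubled; their contribution is therefore at most $|V^*| \cdot 1$. For the remaining vertices, I use $\sum_{v \notin B^*} w^*(v) = 2 \sum_{e \in E^{\spec}\setminus E^*} \ell^*(e) \leq 8(n-|B^0|)\log^5 n$ from \Cref{obs: bound edge lengths}. Multiplying the resulting bound $\sum_{v \in V^*} w^*(v) \leq O((n-|B^0|)(\log N)^{128})$ by the $O(\log^2 N)$ multiplicity gives $\sum_X W^*(X) \leq (n-|B^0|)(\log N)^{133}$, with plenty of slack to absorb constants and the mild $N$ versus $n$ gap via \eqref{eq: bound on log n'}. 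The main obstacle will be Step 1: I must argue cleanly that both creation modes (sparse cut split and destruction) yield a genuine descending chain of clusters and that the ``close-descendant'' classification applies at each phase boundary, so that \Cref{claim: close descendant} can be invoked phase by phase.
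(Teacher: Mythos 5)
Your route is genuinely different from the paper's, and it does go through. The paper buckets the clusters of $\tilde \xset_2'$ into $O(\log^2 n)$ dyadic classes according to the pair $(|X^0|,W^0(X))$, uses \Cref{claim: close descendant} to split each class into $O(\log n)$ families in which no cluster is a descendant of another, and then bounds each family using the disjointness of the sets $X^0$ together with the global bound on final special-edge lengths (\Cref{obs: bound edge lengths}), subtracting $W^*(U)$ in Case~1. You instead exchange the order of summation: you bound the per-vertex multiplicity (how many clusters of $\tilde \xset_2'$ ever contain a fixed vertex) by $O(\log^2 N)$, bound the set $V^*$ of touched vertices by $O\left((n-|B^0|)(\log N)^{128}\right)$ via the Case-2 bound on $|U^0|$ or via \Cref{obs: large size of U} in Case~1, and bound $\sum_{v\in V^*}w^*(v)$ by splitting off the weight-one vertices that stay in $B$ and invoking \Cref{obs: bound edge lengths} for the rest. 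The bookkeeping is sound: $w^*_X(v)\leq w^*(v)$ because special-edge lengths only grow and the special edge of $v$ in $H[X]$ is the same edge as in $G$ (Invariant~\ref{inv: perfect well structured}), and $O(\log^2 N)\cdot O\left((n-|B^0|)(\log N)^{128}\right)$ sits comfortably under $(n-|B^0|)(\log N)^{133}$ since $N$ is large. Your chain-structure worry is not a real obstacle: for any two clusters of $\tilde\xset_2$ the sets $X^0$ are nested or disjoint, so the clusters of $\tilde\xset_2'$ containing $v$ are totally ordered by the descendant relation; transitions that move $v$ into a singleton or a leaf cluster simply terminate the chain. Indeed, the paper itself uses the same $O(\log^2 n)$-clusters-per-vertex fact later, when bounding the maintenance cost of $\hat H$.

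The one step that is wrong as written is your phase definition. You declare a halving index wherever $|C_{i+1}^0|\leq |C_i^0|/2$ and claim that between two consecutive halving indices every cluster is a close descendant of the sub-chain's first cluster. That fails: the sizes can shrink by a factor of, say, $0.6$ at every step, so no single step is a halving step, yet after a few steps $|C_i^0|\leq |C_a^0|/2$ and $C_i$ is no longer a close descendant of the phase start $C_a$, so \Cref{claim: close descendant} does not count it and your per-phase bound collapses. The repair is immediate: define phases greedily relative to the phase start, i.e.\ the phase beginning at $C_a$ ends at the first index $b>a$ with $|C_b^0|\leq |C_a^0|/2$. Then every cluster strictly between $a$ and $b$ is a close descendant of $C_a$, hence at most $64\ceil{\log n}$ of them by \Cref{claim: close descendant}, and since the phase-start cardinality at least halves from one phase to the next there are at most $\ceil{\log n}+1$ phases, which gives the intended $O(\log^2 N)$ multiplicity. (Note also that a destruction transition automatically closes a phase, since the renamed cluster satisfies $|X'^0|<|X^0|/2$ by definition.) With that correction your argument is complete.
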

\begin{proof}
	Recall that for every cluster $X\in \tilde \xset'_2$, $W^0(X)\leq 8n\log^5n$ holds (see Inequality \ref{eq: final weight of a cluster}). 
	For all $0\leq i<\ceil{\log n}$ and $0\leq j\leq 4\ceil{\log n}$, we denote by $\xset_{i,j}$ the set of all clusters $X\in \tilde \xset'_2$ with $2^i\leq |X^0|<2^{i+1}$ and $2^j\leq W^0(X)<2^{j+1}$. 
	
	Fix a pair  $0\leq i<\ceil{\log n}$   and $0\leq j\leq 4\ceil{\log n}$ of indices, and consider the collection $\xset_{i,j}$ of clusters. Notice that, if $X,X'\in \xset_{i,j}$, and $X'$ is a descendant-cluster of $X$, then $X'$ is a close descendant of $X$. From \Cref{claim: close descendant}, for every cluster $X\in \xset_{i,j}$, at most $64\ceil{\log n}$ descendant clusters of $X$ lie in $\xset_{i,j}$. Using a simple greedy algorithm, we can then paratition the clusters in $\xset_{i,j}$ into $z=64\ceil{\log n}+1$ subsets $\yset_1,\ldots,\yset_z$, such that, for all $1\leq z'\leq z$, for every pair $X,X'\in \yset_{z'}$ of clusters, neither of the two clusters is a descendant of the other. In order to do so, we consider the clusters $X\in \xset_{i,j}$ one by one, in the non-decreasing order of $|X^0|$, breaking ties arbitrarily. When cluster $X$ is considered, there must be some index $1\leq z'\leq z$, such that no descendant of $X$ lies in $\yset_{z'}$, and we add $X$ to such a set $\yset_{z'}$.
	For all $1\leq z'\leq z$, we now bound $\sum_{X\in \yset_{z'}}W^*(X)$ separately.
	
	Fix any index $1\leq z'\leq z$, and denote $\yset'=\set{X^0\mid X\in \yset_{z'}}$. Since, for every pair $X,X'$ of clusters in $\yset_{z'}$, neither is the descendant of the other, all sets in $\yset'$ are mutually disjoint. 
	
	Assume first that Case 2 happened, so the special cluster is undefined, and $|U^0|\leq 2(n-|B^0|)\cdot  (\log N)^{128}$ holds. From the construction of the set $U$ of vertices at the beginning of the algorithm, for every special edge $e\in E^{\spec}$, both endpoints of $e$ lie in $U^0$. Therefore, $|E^{\spec}|\leq \frac{|U^0|}{2}\leq (n-|B^0|)\cdot  (\log N)^{128}$. From \Cref{obs: bound edge lengths}, the total length of all edges in $E^{\spec}$ at the end of the algorithm is bounded by $|E^{\spec}|+2(n-|B^0|)\log^5n\leq (n-|B^0|)\cdot  (\log N)^{129}$. Since the clusters in $\yset'$ are mutually disjoint, it is easy to verify that $\sum_{X\in \yset_{z'}}W^*(X)\leq 2(n-|B^0|)\cdot  (\log N)^{129}$.
	
	Next, we consider Case 1, where the special cluster is defined. Let $U^*$ denote the set $U$ of vertices at the end of the algorithm. Notice that $U^*$ is disjoint from all sets of vertices in $\yset'$. For every vertex $v\in U^*$, let $w^*(v)$ be the weight of $v$ at the end of the algorithm, and let $W^*(U)=\sum_{v\in U^*}w^*(v)$. From \Cref{obs: bound edge lengths}, the total length of all edges in $E^{\spec}$ at the end of the algorithm is bounded by $|E^{\spec}|+2(n-|B^0|)\log^5n$, and so: 
	
	\begin{equation}\label{eq: total weight at the end}
	\sum_{X\in \yset_{z'}}W^*(X)+W^*(U)\leq 2|E^{\spec}|+4(n-|B^0|)\log^5n.
	\end{equation}

	Recall that all vertices that serve as endpoints of the edges in $E^{\spec}$ lie in $U^0$. From \Cref{obs: large size of U}, $|U^0\setminus U^*|\leq 256(n-|B^0|)\cdot(\log N)^{64}$. Therefore, $|U^*|\ge 2|E^{\spec}|-256(n-|B^0|)\cdot(\log N)^{64}$, and $W^*(U)\geq |U^*|\geq 2|E^{\spec}|-256(n-|B^0|)\cdot(\log N)^{64}$.
Combining this with Inequality \ref{eq: total weight at the end}, we get that:
	
\[\sum_{X\in \yset_{z'}}W^*(X)\leq 2|E^{\spec}|+4(n-|B^0|)\log^5n-W^*(U) \leq 260(n-|B^0|)\cdot (\log N)^{64}\leq 2(n-|B^0|)\cdot (\log N)^{129}.\]
To conclude, we have shown that, for every index $1\leq z'\leq z$:

\[
\sum_{X\in \yset_{z'}}W^*(X)\leq 2(n-|B^0|)\cdot (\log N)^{129}.
\]

Summing this up over all indices $1\leq z'\leq z$, and recalling that $z=64\ceil{\log n}+1\leq 128\log n$, we get that:

\[
\sum_{X\in \xset_{i,j}}W^*(X)\leq 256(n-|B^0|)\cdot (\log N)^{130}.
\]

Lastly, by summing up the above inequality over all indices $0\leq i<\ceil{\log n}$   and $0\leq j\leq 4\ceil{\log n}$, we get that: 

\[\sum_{X\in \tilde \xset'_2}W^*(X)\leq 2^{12}(n-|B^0|)\cdot (\log N)^{132}\leq (n-|B^0|)\cdot (\log N)^{133}.\]
\end{proof}

We obtain the following corollary of \Cref{obs: weights of clusters}, that allows us to bound the total span of all bad edges owned by the clusters in $\tilde \xset_2'$.

\begin{corollary}\label{cor: span of bad edges type 1}
	$\sum_{X\in \tilde \xset_2'}\sum_{e\in E^{\bad}(X)}\spann(e)\leq \frac{(n-|B^0|)\cdot \Delta\cdot 2^{4\sqrt{\log N}}}{\eta}$.
\end{corollary}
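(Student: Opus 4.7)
The plan is to express each bad edge's span in terms of the size of its owning cluster, and then show that both the number of bad edges owned by a cluster and the cluster sizes themselves are sufficiently small. Specifically, for every $X \in \tilde \xset'_2$, a bad edge $e \in E^{\bad}(X)$ is introduced when Algorithm $\aset(X)$ produces a strongly well-structured cut $(Y,Z)$ in $H[X]$. Its two endpoints are then placed into singleton clusters inside the current interval $I_X$, so $\spann(e) \leq |X^0|$. Summing over all cuts of $X$, the sparsity bound gives $|E^{\bad}(X)| \leq \sum_i |E_H(Y_i, Z_i)| \leq \frac{(\log N')^{64}}{d_X}\sum_i \min\{w(Y_i), w(Z_i)\}$; and since each vertex is deleted from $X$ at most once, contributing its weight-at-deletion to $W^*(X)$, we obtain $\sum_i \min\{w(Y_i), w(Z_i)\} \leq W^*(X)$. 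Therefore $\sum_{e \in E^{\bad}(X)} \spann(e) \leq \frac{|X^0|\cdot(\log N')^{64}\cdot W^*(X)}{d_X}$.

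Next I would use $d_X = \min\{|X^0|\eta/\Delta,\, d\}$ to split the bound as $|X^0|/d_X \leq \Delta/\eta + |X^0|/d$, yielding two pieces to control. For the first piece, Observation \ref{obs: weights of clusters} gives $\sum_X W^*(X) \leq (n-|B^0|)(\log N)^{133}$, and using $(\log N')^{64} \leq 2(\log N)^{64}$ (via Inequality \ref{eq: bound on log n'}) the total contribution is $\frac{2(\log N)^{197}\cdot \Delta (n-|B^0|)}{\eta}$, which is at most $\frac{\Delta(n-|B^0|)\cdot 2^{4\sqrt{\log N}}}{2\eta}$ for $N$ sufficiently large. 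For the second piece, which is $\frac{(\log N')^{64}}{d}\sum_X |X^0| W^*(X)$, the main work is to bound $\max_{X\in \tilde\xset'_2}|X^0|$.

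The key observation here, which I expect to be the most delicate step, is that $|X^0| \leq 2(n-|B^0|)(\log N)^{128}$ for every $X \in \tilde \xset'_2$. In Case 2 the only non-leaf cluster initially in $\xset$ is $U^0$, with $|U^0| \leq 2(n-|B^0|)(\log N)^{128}$ by the case definition, and every cluster in $\tilde \xset'_2$ is a descendant of $U^0$ hence of no larger size. In Case 1, every $X \in \tilde \xset'_2$ traces back to a cut $(Y,Z)$ produced by $\aset'(U)$, after which $|Y|$ vertices (plus the endpoints $J'$) are removed from $U$; by Observation \ref{obs: large size of U} the total number of vertices ever removed from $U$ is at most $256(n-|B^0|)(\log N)^{64}$, so any single $|Y| \leq 256(n-|B^0|)(\log N)^{64}$, and descendants of such $Y$ can only be smaller. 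With this bound in hand, $\sum_X |X^0| W^*(X) \leq 2(n-|B^0|)(\log N)^{128} \cdot (n-|B^0|)(\log N)^{133}$, and substituting $1/d = \frac{4\Delta\cdot 2^{\sqrt{\log N}}}{(n-|B^0|)\eta}$ from the definition of $d$, the second piece is at most $\frac{O(\Delta(n-|B^0|)\cdot 2^{\sqrt{\log N}}(\log N)^{325})}{\eta}$, which is again bounded by $\frac{\Delta(n-|B^0|)\cdot 2^{4\sqrt{\log N}}}{2\eta}$ for $N$ large enough, since $2^{3\sqrt{\log N}}$ absorbs any fixed polylogarithmic factor. Adding the two pieces yields the claimed bound.
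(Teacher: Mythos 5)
Your proof is correct and takes essentially the same route as the paper: the same sparsity-plus-charging argument giving $|E^{\bad}(X)|\le \frac{(\log N')^{64}}{d_X}\cdot W^*(X)$, the bound $\spann(e)\le |X^0|$, the same Case~1/Case~2 analysis showing $|X^0|\le 2(n-|B^0|)(\log N)^{128}$, and the same appeal to \Cref{obs: weights of clusters}. The only cosmetic difference is that the paper folds your two-term split $|X^0|/d_X\le \Delta/\eta+|X^0|/d$ into a single uniform lower bound $d_X\ge \frac{|X^0|\cdot\eta}{\Delta\cdot 2^{2\sqrt{\log N}}}$ before summing.
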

\begin{proof}
	Consider a cluster $X\in \xset_2'$, and let $(Y,Z)$ be a cut that Algorithm $\aset(X)$ returned. Recall that $|E_{H}(Y,Z)|\leq \frac{(\log N')^{64}}{d_X}\cdot \min\set{w(Y),w(Z)}$. If $w(Y)\leq w(Z)$, then let $Z'=Y$, and otherwise let $Z'=Z$. Then $|E_{H}(Y,Z)|\leq \frac{(\log N')^{64}}{d_X}\cdot w(Z')\leq \frac{(\log N')^{64}}{d_X}\cdot\sum_{v\in Z'}w^*_X(v)$ holds. Moreover, all vertices of $Z'$ are deleted from $X$ following the cut $(Y,Z)$, and the edges of $E_H(Y,Z)$ are added to the set $E^{\bad}(X)$ of bad edges that $X$ owns. We assign to every vertex $v\in Z'$ a \emph{charge} of $\frac{(\log N')^{64}}{d_X}\cdot w^*_X(v)$ units, so that the total charge assigned to the vertices of $Z'$ is at least $|E_H(Y,Z)|$. It is then easy to see that, at the end of Algorithm $\aset(X)$, the total number of edges in $E^{\bad}(X)$ is bounded by the total charge to all vertices of $X^0$, which, in turn, is bounded by:
	\[ \frac{(\log N')^{64}}{d_X}\cdot \sum_{v\in X^0}w^*_X(v)=\frac{(\log N')^{64}}{d_X}\cdot W^*(X).\]

	Next, we show that $d_X\geq \frac{|X^0|\cdot \eta}{\Delta\cdot 2^{2\sqrt{\log N}}}$. Indeed, recall that $d= \frac{(n-|B^0|)\cdot\eta}{4\Delta\cdot 2^{\sqrt{\log N}}}$ and so:
	
	\begin{equation}\label{eq: bound on dx}
	d_X=\min\set{\frac{|X^0|\cdot \eta}{\Delta},d}=\min\set{\frac{|X^0|\cdot \eta}{\Delta}, \frac{(n-|B^0|)\cdot\eta}{4\Delta\cdot 2^{\sqrt{\log N}}}}.
	\end{equation}
	
	If Case 2 happened, then $|U^0|\leq 2(n-|B^0|)\cdot (\log N)^{128}$, and, since $X^0\subseteq U^0$, we get that $|X^0|\leq 2(n-|B^0|)\cdot (\log N)^{128}$. If Case 1 happened, then, from \Cref{obs: large size of U}, $|X^0|\leq 256(n-|B^0|)(\log N)^{64}$. Therefore, in any case, $|X^0|\leq 2(n-|B^0|)\cdot (\log N)^{128}$ must hold. Substituting the bound $n-|B^0|\geq \frac{|X^0|}{2(\log N)^{128}}$ into Inequality \ref{eq: bound on dx}, we get that $d_X\geq \frac{|X^0|\cdot \eta}{\Delta\cdot 2^{2\sqrt{\log N}}}$. Altogether, we get that:
	
	\[|E^{\bad}(X)|\leq \frac{(\log N')^{64}}{d_X}\cdot W^*(X)\leq \frac{\Delta\cdot 2^{2\sqrt{\log N}}\cdot  (\log N')^{64}}{|X^0|\cdot \eta} \cdot W^*(X)\leq \frac{\Delta\cdot 2^{3\sqrt{\log N}}}{|X^0|\cdot \eta} \cdot W^*(X).\]
	
	Since for every edge $e\in E^{\bad}(X)$, $\spann(e)\leq |X^0|$, we get that:
	
	\[\sum_{e\in E^{\bad}(X)}\spann(e)\leq |X^0|\cdot |E^{\bad}(X)|\leq \frac{\Delta\cdot 2^{3\sqrt{\log N}}}{\eta} \cdot W^*(X).\]

Finally, from \Cref{obs: weights of clusters}:

\[
\begin{split}
 \sum_{X\in \tilde \xset_2'}\sum_{e\in E^{\bad}(X)}\spann(e)& \leq \frac{\Delta\cdot 2^{3\sqrt{\log N}}}{\eta} \cdot\sum_{X\in \tilde \xset_2'}W^*(X)\\
 &\leq \frac{(n-|B^0|)\cdot \Delta\cdot 2^{3\sqrt{\log N}}\cdot (\log N)^{133}}{\eta}\\
 &\leq \frac{(n-|B^0|)\cdot \Delta\cdot 2^{4\sqrt{\log N}}}{\eta}. 
 \end{split}
 \]
\end{proof}

\subsubsection{Bounding the Running Time of Part 1 and Probability of Failure}
\label{subsec: bound runtime of part 1}

Recall that, for every cluster $X\in \tilde \xset'_2$, the total running time of Algorithm $\aset(X)$ is bounded by:

\[
\begin{split}
&c_1\cdot (W^0(X))^2\cdot 2^{c_2\sqrt{\log N'}}\cdot (\log N')^{16c_2r}+c_1\cdot W^0(X)\cdot \Delta\cdot \log^4 N'\\
&\quad\quad\quad\quad\quad\quad
\leq c_1\cdot  W^0(X)\cdot (W^0(X)+\Delta)\cdot 2^{c_2\sqrt{\log N}}\cdot (\log N)^{16c_2r+6}\cdot \left(1+\frac{1}{64c_2r}\right )^{16c_2r}\\
&\quad\quad\quad\quad\quad\quad
\leq c_1\cdot  W^0(X)\cdot (W^0(X)+\Delta)\cdot 2^{c_2\sqrt{\log N}}\cdot (\log N)^{16c_2r+7}.
\end{split}
\]

(We have used the fact that, from Inequality \ref{eq: bound on N'}, 
$2^{\sqrt{\log N'}}\leq 2^{\sqrt{\log N}}\cdot \log^6 N$, and from Inequality \ref{eq: bound on log n'}, $\log N'\leq \left(1+\frac{1}{64c_2r}\right )\cdot \log N$. Additionally, we used the assumption that $N$ is sufficiently large.).

\iffalse
Consider now a pair $0\leq i<\ceil{\log n}$   and $0\leq j\leq \ceil{2\log n}$ of indices, and recall that, for every cluster $X\in \xset_{i,j}$, $2^j\leq W^0(X)< 2^{j+1}$. Since, for every cluster $X\in \tilde \xset_2$, $W^0(X)\leq W^*(X)\leq 4n\log^5n$ (see Inequality \ref{eq: final weight of a cluster}), we get that, if $\xset_{i,j}\neq \emptyset$, then $2^j\leq 4n\log^5n$ holds. Additionally, since  $\sum_{X\in \xset_{i,j}}\left(W^*(X)-|\beta'_X|\right)\leq 512(n-|B^0|)\log^6n$ from \Cref{obs: weights of clusters}, we conclude that the total running time of Algorithms $\aset(X)$ for all clusters $X\in \xset_{i,j}$ is bounded by:

\[
\begin{split}
& c_1\cdot  2^{j+1}\cdot\left (\sum_{X\in \xset_{i,j}} \left (W^*(X)-|\beta'_X|\right )\right )\cdot 2^{c_2\sqrt{\log N}}\cdot (\log N)^{16c_2r+7}\\
&\quad\quad\quad\quad\quad\quad
\leq  c_1\cdot n\cdot (n-|B^0|)\cdot 2^{c_2\sqrt{\log N}}\cdot (\log N)^{16c_2r+20}.
\end{split}
\]
\fi

Summing up over all clusters $X\in \tilde \xset_2'$, from \Cref{obs: weights of clusters}, we get that  the total running time of all  Algorithm $\aset(X)$ for clusters $X\in \tilde \xset'_2$ is  bounded by:

\[ 
\begin{split}
&\sum_{X\in \tilde \xset_2'}c_1\cdot  W^0(X)\cdot (W^0(X)+\Delta)\cdot 2^{c_2\sqrt{\log N}}\cdot (\log N)^{16c_2r+7}\\
&\quad\quad\quad\quad\quad\quad\quad\quad\leq \left(\Delta\cdot \sum_{X\in \tilde \xset'_2}W^0(X)+ \left(\sum_{X\in \tilde \xset'_2}W^0(X)\right )^2\right )\cdot c_1\cdot 2^{c_2\sqrt{\log N}}\cdot (\log N)^{16c_2r+7}\\
&\quad\quad\quad\quad\quad\quad\quad\quad\leq c_1\cdot (n-|B^0|)^2 \cdot 2^{c_2\sqrt{\log N}}\cdot (\log N)^{16c_2r+273}\\
&\quad\quad\quad\quad\quad\quad\quad\quad\leq c_1\cdot n\cdot (n-|B^0|)\cdot  2^{c_2\sqrt{\log N}}\cdot (\log N)^{16c_2r+273}.
\end{split}
\]

(We have used the fact that $\Delta\leq n-|B^0|$ from the problem definition).

Assume now that Case 1 happened. In this case, the running time of Algorithm $\aset'(U)$ is bounded by:

\[
\begin{split}
&c_1\cdot  W^0(U)\cdot (W^0(U)-|\beta|+\Delta)\cdot 2^{c_2\sqrt{\log N'}}\cdot (\log N')^{16c_2r}
\\&\quad\quad\quad\quad\quad\quad\quad\quad\leq c_1\cdot  W^0(U)\cdot (W^0(U)-|\beta|+\Delta)\cdot 2^{c_2\sqrt{\log N}}\cdot (\log N)^{16c_2r+6}\cdot \left(1+\frac{1}{64c_2r}\right )^{16c_2r}\\
&\quad\quad\quad\quad\quad\quad\quad\quad
\leq c_1\cdot W^0(U)\cdot (W^0(U)-|\beta|+\Delta) \cdot 2^{c_2\sqrt{\log N}}\cdot (\log N)^{16c_2r+7}\\
&\quad\quad\quad\quad\quad\quad\quad\quad
\leq 2c_1\cdot n\cdot (n-|B^0|) \cdot 2^{c_2\sqrt{\log N}}\cdot (\log N)^{16c_2r+7}.
\end{split}
\]

(As before, we have used the fact that, from Inequality \ref{eq: bound on N'}, 
$2^{\sqrt{\log N'}}\leq 2^{\sqrt{\log N}}\cdot \log^6 N$, and from Inequality \ref{eq: bound on log n'}, $\log N'\leq \left(1+\frac{1}{64c_2r}\right )\cdot \log N$. Additionally, since, at the beginning of the algorithm, the length of every special edge is $1$, $W^0(U)=|U^0|\leq n$. From the problem definition, $\Delta\leq n-|B^0|$, and it is easy to see that $W^0(U)-|\beta|=|U^0|-|\beta|\leq n-|B^0|$ must hold).

Additional time that is required in order to maintain the \ATO is asymptotically bounded by the total number of edges that are ever present in $H$, which, from the problem definition, is bounded by $O(|E(H)|)\leq O(n\cdot (n-|B^0|))$. Overall, since we can assume that $c_2$ is a large enough constant, the running time of Part 1 of the algorithm is bounded by:

\[c_1\cdot n\cdot (n-|B^0|)\cdot 2^{c_2\sqrt{\log N}}\cdot (\log N)^{16c_2r+c_2}.   \]

Notice that, since $N\geq n$ is large enough, the above running time is also bounded by $N^3$. Therefore, the total number of clusters that ever lie in $\xset$ is also bounded by $N^3$. For each regular non-leaf cluster $X\in \tilde \xset$, the probability that Algorithm $\aset(X)$ returns ``FAIL'' is bounded by $\frac{1}{(N')^4}\leq \frac{1}{N^4}$. If Case 1 happens, then the probability that Algorithm $\aset'(U)$ returns ``FAIL'' is bounded by $\frac{1}{4}$. From the union bound, the total probability that our algorithm ever returns ``FAIL'' is bounded by $\frac{1}{2}$, as required.

\subsubsection{Bounding the Total Span of Bad Edges}
\label{subsec: bound span of bad edges}

Recall that, from \Cref{cor: span of bad edges type 1},
	$\sum_{X\in \tilde \xset_2'}\sum_{e\in E^{\bad}(X)}\spann(e)\leq \frac{(n-|B^0|)\cdot \Delta\cdot 2^{4\sqrt{\log N}}}{\eta}$.

If Case 1 happened, then, from \Cref{obs: large size of U}, $|E^{\bad}(U)|\leq \frac{\Delta\cdot 2^{6\sqrt{\log N}}}{\eta}$. It is easy to verify that, for every edge $e\in E^{\bad}(U)$, $\spann(e)\leq |U^0|-\gamma\leq 256(n-|B^0|)(\log N)^{64}$, since $\gamma=|U^0|-256(n-|B^0|)(\log N)^{64}$. Therefore, $\sum_{e\in E^{\bad}(U)}\spann(e)\leq \frac{\Delta\cdot 2^{6\sqrt{\log N}}}{\eta}\cdot 256(n-|B^0|)(\log N)^{64}\leq \frac{(n-|B^0|)\cdot \Delta\cdot 2^{7\sqrt{\log N}}}{\eta}$.

Lastly, consider a leaf cluster $X\in \tilde \xset_1$, and recall that $|X^0|\leq  \frac{\Delta\cdot (\log N)^{128}}{\eta}$.
Recall that $|E^{\bad}(X)|\leq |X^0|$, and so $\sum_{e\in E^{\bad}(X)}\spann(e)\leq |X^0|^2\leq |X^0|\cdot \frac{\Delta\cdot (\log N)^{128}}{\eta}$.

Note that all clusters in $\tilde \xset_1$ are mutually disjoint, and, for each such cluster $X$, $X^0\subseteq U^0$ holds. If Case 2 happened, then $|U^0|\leq 2(n-|B^0|)\cdot (\log N)^{128}$, and so $\sum_{X\in \tilde\xset_1}|X^0|\leq |U^0|\leq 2(n-|B^0|)\cdot (\log N)^{128}$. Otherwise, from \Cref{obs: large size of U}, $\sum_{X\in \tilde\xset_1}|X^0|\leq 256(n-|B^0|)(\log N)^{64}$. Therefore, in either case, $\sum_{X\in \tilde\xset_1}|X^0|\leq 2(n-|B^0|)\cdot (\log N)^{128}$ holds, and so overall:

\[\sum_{X\in \tilde \xset_1}\sum_{e\in E^{\bad}(X)}\spann(e)\leq  \frac{\Delta\cdot (\log N)^{128}}{\eta}\cdot \sum_{X\in \tilde \xset_1}|X^0|\leq   \frac{2(n-|B^0|)\cdot \Delta\cdot (\log N)^{256}}{\eta}.\]

Summing up over the three types of clusters, we get that:

\begin{equation}\label{eq: span of bad edges}
\sum_{e\in E^{\bad}}\spann(e)\leq \frac{4(n-|B^0|)\cdot \Delta\cdot 2^{7\sqrt{\log N}}}{\eta}.
\end{equation}

\subsection{Part 2: Solving \DLSSSP on the Contracted Graph}

The second part of our algorithm, we will define a dynamic graph $\hat H$ that our algorithm will maintain, which in turn will be used to define an instance of the  \DLSSSP problem (see \Cref{sec: SSSP in almost DAGS}). We will then apply the algorithm for \DLSSSP from \Cref{thm: almost DAG routing} to this instance of the problem.
We start by defining the dynamc graph $\hat H$, and by providing an algorithm for maintaining it.

\subsubsection*{Definition of the Dynamic Graph $\hat H$}

Consider any time $\tau$ during the time horizon $\tset$, the graph $H$ and the collection $\xset$ of clusters at time $\tau$. We say that a cluster $X\in \xset$ is \emph{distinguished}, if it contains some vertex that currently lies in $B$.

Graph $\hat H$ at time $\tau$ is defined as follows. We start with graph $H$, and, for every cluster $X\in \xset$ that is not a distinguished cluster, we contract all vertices in $X$ into a single vertex $v_X$ (that we refer to as a \emph{supernode}). Additionally, all vertices that lie in  distinguished clusters, together with the vertex $t$, are contracted into a single supernode, that for convenience we continue to denote by $t$. We delete all loops but keep parallel edges. We also denote the supernode $v_S$ corresponding to cluster $S=\set{s}$ by $s$. Consider now any edge $e$ in the resulting graph. This edge corresponds to some edge in  $H$ whose endpoints lie in different clusters; we do not distinguish between these two edges.
We replace every edge $e=(v_X,v_Y)$ in the resulting graph by a collection $J(e)$ of $O(\log^2n)$ parallel edges $(v_X,v_Y)$, as follows. Let $2^j$ be the smallest power of $2$ that is greater than $\skipp(e)$. If $e$ is a regular edge in $H$, then set $J(e)$ contains, for all $j\leq j'\leq \ceil{\log n}$, a copy $e_{j'}$ of $e$, whose length is $\hat \ell(e_{j'})=1$, and weight $w(e_{j'})=2^{j'}$.
Assume now that $e$ is a special edge of $H$, and  denote its length in $H$ by $\ell(e)=2^i$. For all $i\leq i'\leq \ceil{\log \Lambda}$ and $j\leq j'\leq \ceil{\log n}$, set $J(e)$ contains a copy $e_{i',j'}$ of $e$ with length $\hat \ell(e_{i',j'})=2^{i'}$ and weight $w(e_{i',j'})=2^{j'}$. This completes the definition of graph $\hat H$.

Consider now some vertex $v\in V(\hat H)\setminus\set{t}$, and some integer $0\leq i\leq \ceil{\log n}$. Consider the set of vertices  $U_i(v)=\set{u\in V(\hat H)\mid \exists e=(v,u)\in E(\hat H)\mbox{ with } w(e)=2^i}$. 
Notice that set $U_i(v)$ may only contain the nodes $s,t$, and all nodes $v_Y\in V(\hat H)$, such that some edge $e$ of $H$ connects a vertex of $X$ to a vertex of $Y$, and has value $\skipp(e)\leq 2^{i}$. In the latter case, the distance between intervals $I_X$ and $I_Y$ in the $\ATO$ is at most $2^i$, and so at most $2^i$ other intervals of the $\ATO$ may lie between them. Therefore, it is immediate to verify that $|U_i(v)|\leq 2^{i+2}$ always holds, establishing Property \ref{prop: few close neighbors} from the definition of the \DLSSSP problem.
Let $n'$ be the total number of vertices that are ever present in the dynamic graph $\hat H$. We bound $n'$ in the following simple observation.

\begin{observation}\label{obs: few vertices in contracted}	
$n'\leq 2(n-|B^0|)+2$.
\end{observation}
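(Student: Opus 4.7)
The plan is to first observe that the set of supernodes in $\hat H$ grows monotonically over time, so that $n'$ equals the number of supernodes at the end of the algorithm, and then to bound the final supernode count by $2(n-|B^0|)+2$.

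For monotonicity, I would verify this update by update. When a cluster $X\in\xset$ is split into $X\setminus Y$ and $Y$ along with singletons for boundary vertices in $J\cup J'$, the supernode $v_X$ is retained (now representing $X\setminus Y$), while new supernodes are introduced via vertex-splitting operations for each new non-distinguished sub-cluster; distinguished new clusters are absorbed into the single $t$-supernode without introducing any new vertex. The critical point is that no supernode is ever removed. This requires that a cluster cannot transition from non-distinguished back to distinguished, which is immediate because the set $B$ only shrinks over time (at each iteration at most one vertex is removed from $B$, and no vertex is ever added), so once $X\cap B=\emptyset$ holds, it holds forever. Conversely, a cluster transitioning from distinguished to non-distinguished simply causes $v_X$ to split off from the $t$-supernode, which is itself a vertex-splitting update. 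Leaf cluster processing and the handling of the special cluster $U$ admit analogous treatment.

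For the bound at a fixed time $\tau$, the supernodes of $\hat H$ consist of $s$, $t$, and $v_X$ for each non-distinguished cluster $X\in\xset$ with $X\neq\{s\},\{t\}$. Every distinguished cluster contains at least one vertex of $B$, so $B\subseteq\bigcup_{X\text{ distinguished}}X$; hence $\bigcup_{X\text{ non-distinguished}}X\subseteq V(H)\setminus B$. Since both $\{s\}$ and $\{t\}$ are non-distinguished (as $s,t\notin B$), the non-distinguished clusters other than $\{s\}$ and $\{t\}$ partition a subset of $V(H)\setminus(B\cup\{s,t\})$, whose cardinality equals $|V(H)|-|B|-2 = n-|B|$. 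Since each such cluster is non-empty, their number is at most $n-|B^\tau|$.

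Combining, the number of supernodes at time $\tau$ is at most $2+n-|B^\tau|$. At the end of the algorithm, $|B^{\text{end}}|\geq |B^0|-\Delta$, since at most one vertex leaves $B$ per iteration and the algorithm performs at most $\Delta$ iterations. Using $\Delta\leq n-|B^0|$, which holds by the definition of the \stSP problem, we obtain $n' \leq 2 + n - |B^0| + \Delta \leq 2(n-|B^0|) + 2$. No substantial obstacle is anticipated; the main point requiring care is verifying the monotonicity claim uniformly across all types of \ATO updates (cluster splits for regular non-leaf clusters, leaf cluster processing, and the special cluster $U$), which in each case reduces to a straightforward vertex-splitting operation on $\hat H$.
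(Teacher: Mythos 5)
Your proof is correct and follows essentially the same route as the paper: the paper also notes that supernodes are never deleted (so $n'$ is the final supernode count, bounded by $n-|B^*|+2$ where $B^*$ is the final set $B$), and then uses $|B^*|\geq |B^0|-\Delta$ together with $\Delta\leq n-|B^0|$ to conclude. You simply spell out the monotonicity and counting steps that the paper dismisses with ``clearly.''
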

\begin{proof}
	Let $B^*$ be the set of vertices that lie in $B$ at the end of the algorithm. Clearly, $n'\leq n-|B^*|+2$. On the other hand, since the algorithm consists of $\Delta\leq n-|B^0|$ iterations, and in every iteration at most one vertex is deleted from $B$, we get that $|B^*|\geq |B^0|-\Delta\geq 2|B^0|-n$. Therefore, $n'\leq n-|B^*|+2\leq 2(n-|B^0|)+2$.
\end{proof}

\subsubsection*{Maintaining the Graph $\hat H$}
At the beginning of the algorithm, we construct the initial graph $\hat H$ from graph $H$ as described above. It is easy to verify that this can be done in time $O(|E(H)|\cdot \log^2n)\leq O(n(n-|B^0|)\log^2n)$.

Next, we show an algorithm that maintains the graph $\hat H$, as graph $H$ and the collection $\xset$ of clusters evolve over time. We will ensure that the only updates that $\hat H$ undergoes are edge deletions and vertex splitting, as is required from the input to the \DLSSSP problem.

Recall that the only updates that graph $H$ undergoes are (i)  doubling of the lengths of its edges; and (ii) deletion of edges incident to $s$ and $t$. Additionally, some vertices may be deleted from set $B$.

If an edge $e$ that is incident to $s$ or $t$ is deleted from $H$, then we delete all copies of $e$ from $\hat H$. If the length of some edge $e$ is doubled, and the initial length of edge $e$ was $2^i$, then we delete all copies of $e$ whose length is $2^i$ from graph $\hat H$.

Assume now that some vertex $v$ is deleted from set $B$. Let $X\in \xset$ be the cluster that contains $v$. If $X$ remains a distinguished cluster, then no additional updates to graph $\hat H$ are required. Otherwise, we perform a \emph{vertex splitting} operation, that we apply to supernode $t$, to construct a new supernode $v_X$. The corresponding collection $E'$ of edges is defined as follows. Let $e=(a,b)$ be an edge of $H$, with exactly one endpoint in $X$; assume w.l.o.g. that $a\in X$ and $b\in Y$, where $Y\in \xset$ is some cluster.
Let $2^j$ be the smallest power of $2$ that is greater than $\skipp(e)$. 
Assume first that $e$ is a regular edge in $H$. 
Notice that, if $Y$ is not a distinguished cluster, then, for all $j\leq j'\leq \ceil{\log n}$, 
graph $\hat H$ already contains an edge $(t,v_Y)$, that is a copy of $e$, whose length is $1$, and weight is $2^{j'}$. We add to $E'$ a copy $(v_X,v_Y)$ of 
$e$, whose length is $1$, and weight is $2^{j'}$.  Once  the vertex splitting operation is completed, we will delete from $\hat H$ all copies of $e$ that are incident to $t$. If $Y$ is a distinguished cluster, then, for all $j\leq j'\leq \ceil{\log n}$, we insert a copy $(v_X,t)$ of the edge $e$ into $E'$, whose length is $1$ and weight is $2^{j'}$.
Assume now that $e$ is a special edge of $H$, and  denote its length in $H$ by $\ell(e)=2^i$. Consider a pair $i\leq i'\leq \ceil{\log n}$ and $j\leq j'\leq \ceil{\log n}$ of indices. As before, if $Y$ is not a distinguished cluster, then graph $\hat H$ already contains an edge $(t,v_Y)$, that is a copy of $e$, whose length is $2^{i'}$, and weight is $2^{j'}$. We add to $E'$ a copy $(v_X,v_Y)$ of 
$e$, whose length is $2^{i'}$, and weight is $2^{j'}$.  Once  the vertex splitting operation is completed, we will delete from $\hat H$ all copies of $e$ that are incident to $t$. If $Y$ is a distinguished cluster, then  we insert a copy $(v_X,t)$ of the edge $e$ into $E'$, whose length is $2^{i'}$ and weight is $2^{j'}$. The case where $b\in X$ and $a\not\in X$ is treated similarly.

Lastly, it remains to consider updates to the clusters in $\xset$. All such updates can be implemented by the following two operations: (i) given a cluster $X\in \xset$, and a subset $X'\subseteq X$ of vertices, insert $X'$ into $\xset$ (we call this update \emph{cluster splitting}); and (ii) given 
a cluster $X\in \xset$ and a vertex $x\in X$, delete $x$ from $X$ (we call this update \emph{vertex deletion}).

Consider first a cluster splitting operation, that is applied to cluster $X\in \xset$, through which a new cluster $X'\subseteq X$ is inserted into $\xset$. Assume first that $X$ is not a distinguished cluster. We apply a vertex splitting operation to supernode $v_X$, inserting a new supernode $v_{X'}$ into $\hat H$. The corresponding set $E'$ of edges is defined as follows. 
Let $e=(a,b)$ be an edge of $H$, with exactly one endpoint in $X'$; assume w.l.o.g. that $a\in X'$ and $b\in Y$, where $Y\in \xset$ is some cluster.
Let $2^j$ be the smallest power of $2$ that is greater than $\skipp(e)$. 
Assume first that $e$ is a regular edge in $H$. 
Notice that, if $Y\neq X$, then, for all $j\leq j'\leq \ceil{\log n}$, 
graph $\hat H$ already contains an edge $(v_X,v_Y)$, that is a copy of $e$, whose length is $1$, and weight is $2^{j'}$. We add to $E'$ a copy $(v_{X'},v_Y)$ of 
$e$, whose length is $1$, and weight is $2^{j'}$.  If $Y=X$, then, for all $j\leq j'\leq \ceil{\log n}$, we insert a copy $(v_{X'},v_X)$ of the edge $e$ into $E'$, whose length is $1$ and weight is $2^{j'}$.
Assume now that $e$ is a special edge of $H$, and  denote its length in $H$ by $\ell(e)=2^i$. Consider a pair $i\leq i'\leq \ceil{\log n}$ and $j\leq j'\leq \ceil{\log n}$ of indices. As before, if $Y\neq X$, then graph $\hat H$ already contains an edge $(v_X,v_Y)$, that is a copy of $e$, whose length is $2^{i'}$, and weight is $2^{j'}$. We add to $E'$ a copy $(v_{X'},v_Y)$ of 
$e$, whose length is $2^{i'}$, and weight is $2^{j'}$. If $Y=X$, then  we insert a copy $(v_{X'},v_X)$ of the edge $e$ into $E'$, whose length is $2^{i'}$ and weight is $2^{j'}$. The case where $b\in X'$ and $a\not\in X'$ is treated similarly.

If $X'$ is a distinguished cluster, then so is $X$, and no updates to $\hat H$ are required. If $X$ is a distinguished cluster but $X'$ is not a distinguished cluster, then the update procedure to graph $\hat H$ is almost identical to that when a distinguished cluster becomes a non-distinguished cluster. Specifically, we apply a vertex-splitting operation to $t$, following which a supernode $v_{X'}$ representing the cluster $X'$ is inserted into $\hat H$. The remaining details a straightforward and are omitted here.

Lastly, assume that a vertex $x$ is deleted from some cluster $X\in \xset$. In this case, we consider every edge $e\in E(H)$ that is incident to $x$ one by one. Let $y$ denote the other endpoint of $e$. If $X$ is not a distinguished cluster, then we delete every copy of $e$ that is incident to $v_X$ from $\hat H$. Otherwise, we delete every copy of $e$ that is incident to $t$ from $\hat H$.

To summarize, we view a cluster-splitting operation that is applied to a cluster $X$ as consisting of two steps. In the first step, a new cluster $X'\subseteq X$ is inserted into $\xset$, and in the second step, the vertices of $X'$ are deleted from $X$. Recall that, following such cluster-splitting updates, it is possible that for some edge $e\in E(H)$ whose endpoints lie in different clusters, the value $\skipp(e)$ increases. In such a case, we delete from $\hat H$ all copies of $e$, whose weight is below the new value $\skipp(e)$. This concludes the description of the algorithm for maintaining the graph $\hat H$.

It is easy to verify that the total time that is required in order to compute and maintain graph $\hat H$ is asymptotically bounded by the total time required to maintain graph $H$ and the collection $\xset$ of clusters times $O(\log^6n)$.
Indeed, from \Cref{claim: close descendant}, every cluster in $\tilde \xset'_2$ has $O(\log n)$ close descendants, and so it may have at most $O(\log^2n)$ descendant clusters overall. Therefore, every vertex of $H$ may belong to $O(\log^2n)$ clusters over the course of the algorithm. For each edge $e=(x,y)$, for every pair $X,X'$ of clusters that contain $x$ and $y$ respectively, at any time during the algorithm, $O(\log^2n)$ copies of $e$ connecting the supernodes corresponding to $X$ and $X'$ may be present in $\hat H$. Overall, at most $O(\log^5n)$ copies of an edge of $H$ may ever be present in $\hat H$, and so the total time that is required in order to compute and maintain graph $\hat H$ is asymptotically bounded by the total time required to maintain graph $H$ and the collection $\xset$ of clusters times $O(\log^6n)$.

\subsubsection*{Solving the  \DLSSSP Problem on $\hat H$}

So far we have defined a dynamic graph $\hat H$ with lengths and weights on edges and two special edges $s$ and $t$, that undergoes an online sequence of edge-deletion and vertex-splitting updates, such that Property \ref{prop: few close neighbors} holds.
We consider an instance of the \DLSSSP problem defined on graph $\hat H$
with parameter $\Gamma=64n\cdot 2^{8\sqrt{\log N}}$, parameter $d$ replaced by $\frac{\Lambda}{4}$, and $\eps=\frac 1 {10}$.
We apply the algorithm from \Cref{thm: almost DAG routing} to the resulting instance of the \DLSSSP problem.
Recall that the total update time of this algorithm is bounded by:

\[\tilde O\left((n')^2+|E(\hat H)|+\Gamma\cdot n'\right )\leq \tilde O\left(n\cdot (n-|B^0|)\cdot  2^{8\sqrt{\log N}}\right )\]

Overall, the total time that is required for Part 1 of the algorithm, and also for maintaining graph $\hat H$ and for maintaining the data structures for the \DLSSSP problem from \Cref{thm: almost DAG routing} is bounded by:

\[c_1\cdot n\cdot (n-|B^0|)\cdot 2^{c_2\sqrt{\log N}}\cdot (\log N)^{16c_2r+2c_2}.   \]

\subsubsection*{Computing the Cut-Witness}

Once the algorithm for the \DLSSSP problem terminates, we are guaranteed that graph $\hat H$ contains no $s$-$t$ path $Q$ with $\sum_{e\in E(Q)}\hat \ell(e)\leq \frac{\Lambda}{4}$ and $\sum_{e\in E(Q)}w(e)\leq \Gamma$.

We now define new lengths $\ell'(e)$ for edges $e\in E(H)$, as follows.
For every regular edge $e\in E(H)$, we set $\ell'(e)=0$.
Consider now a special edge $e\in E^{\spec}$. If both endpoints of $e$ lie in $B$ at the end of the algorithm, then we set $\ell(e)=0$. Otherwise, if $e\not\in E^{\bad}$, then we set $\ell'(e)=\ell(e)$. Lastly, if $e\in E^{\bad}$, and not both endpoints of $e$ lie in $B$, we set $\ell'(e)=\ell(e)+\spann(e)\cdot \frac{\eta\cdot\log n}{8\Delta\cdot 2^{7\sqrt{\log N}}}$. 
We now show that $\set{\ell'(e)}_{e'\in E(H)}$ is a valid cut-witness.

\begin{claim}\label{claim: valid cut-witness}
	$\set{\ell'(e)}_{e'\in E(H)}$ is a valid cut-witness.
\end{claim}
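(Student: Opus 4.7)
\begin{proofof}{\Cref{claim: valid cut-witness}}
The plan is to verify the three defining properties of a cut-witness in turn, with the third being by far the most delicate.

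First I will dispatch the two easy properties. Every edge incident to $s$ or $t$ in $H$ is regular by construction, so $\ell'(e)=0$ for such edges, giving the boundary condition. For the bound on $\sum_{e\in E(H)}\ell'(e)$, note that $\ell'(e)=0$ on all regular edges and on all edges of $\tilde E$, while for the remaining special edges $\ell'(e)=\ell(e)$ or $\ell'(e)=\ell(e)+\spann(e)\cdot\frac{\eta\log n}{8\Delta\cdot 2^{7\sqrt{\log N}}}$. Thus
\[
\sum_{e\in E(H)}\ell'(e)\leq \sum_{e\in E(H)\setminus \tilde E}\ell(e)+\frac{\eta\log n}{8\Delta\cdot 2^{7\sqrt{\log N}}}\cdot\sum_{e\in E^{\bad}}\spann(e),
\]
and plugging the bound $\sum_{e\in E^{\bad}}\spann(e)\leq \frac{4(n-|B^0|)\Delta\cdot 2^{7\sqrt{\log N}}}{\eta}$ from Inequality \ref{eq: span of bad edges} shows the extra term is at most $\tfrac{1}{2}(n-|B^0|)\log n\leq \frac{\Lambda\Delta}{\eta\log^4n}$ (recall $\Lambda=(n-|B^0|)\eta\log^5 n/\Delta$).

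The main work is property (iii): the $s$-$t$ distance in $H$ under $\ell'$ is at least $\Lambda/32$. I will argue by contradiction: suppose there is a simple $s$-$t$ path $P$ in $H$ with $\sum_{e\in E(P)}\ell'(e)<\Lambda/32$. I will build from $P$ a simple $s$-$t$ path $Q$ in the contracted graph $\hat H$ with $\sum_{e\in E(Q)}\hat\ell(e)\leq \Lambda/4$ and $\sum_{e\in E(Q)}w(e)\leq \Gamma=64n\cdot 2^{8\sqrt{\log N}}$, contradicting the fact that the \DLSSSP algorithm of \Cref{thm: almost DAG routing} terminated successfully. The path $Q$ is constructed by collapsing every maximal sub-walk of $P$ inside one cluster of $\xset$ into the corresponding supernode, and representing each remaining cross-cluster edge $e=(x,y)$ of $P$ by the copy in $\hat H$ whose length $\hat\ell$ equals $\ell(e)$ (equal to $1$ if $e$ is regular) and whose weight $w$ is the smallest power of $2$ at least $\skipp(e)$. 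The resulting walk is then shortcut to a simple path $Q$, which only decreases length and weight.

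For the length of $Q$, I will split $E(Q)$ into its regular and special edges. The special edges contribute at most $\sum_{e\in E(P)\cap E^{\spec}}\ell(e)\leq \sum_{e\in E(P)}\ell'(e)+\sum_{e\in E(P)\cap \tilde E}\ell(e)+\sum_{e\in E(P)\cap E^{\bad}\setminus\tilde E}(\ell(e)-\ell'(e))$; the first of these is $<\Lambda/32$ by assumption, and the other two will be absorbed after a short calculation using Inequality~\ref{eq: span of bad edges} and the fact that both endpoints of every $e\in\tilde E$ lie in a single distinguished cluster (so $e$ can contribute at most $O(1)$ edges to $Q$ in total across the path, because $P$ is simple and every special edge has its two endpoints in $B$ with only one $(b,t)$ edge used). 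The regular-edge contribution is one per cross-cluster regular edge of $P$, and since $G$ is bipartite the regular/special edges alternate on $P$, which bounds this by $O$(number of special edges of $P$)$+O(1)$. Together these yield $\sum_{e\in E(Q)}\hat\ell(e)\leq \Lambda/4$, with room to spare from the factor-$8$ gap.

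For the weight of $Q$, this is where the bookkeeping of right-to-left edges pays off. A standard \ATO-style telescoping (as sketched in \Cref{sec: ATO}) shows that for any simple path in $H$, the total $\skipp$ value of its cross-cluster edges is at most $n+2\sum_{e\in E(P)\cap E^{\bad}}\spann(e)$, because traversing left-to-right edges contributes at most the total width $n$ of the \ATO, while each right-to-left edge $e$ ``sets the path back'' by at most $\spann(e)$. Since each copy chosen has $w(e)\leq 2\skipp(e)$, I will bound $\sum_{e\in E(Q)}w(e)\leq 2n+4\sum_{e\in E(P)\cap E^{\bad}}\spann(e)$. The key obstacle here is that, unlike in previous applications, the path $P$ could use bad edges of very large span (owned by the special cluster $U$); so one has to use the bound on $\sum_{e\in E(P)\cap E^{\bad}}\spann(e)$ afforded by the small length $\sum\ell'(e)<\Lambda/32$ together with the definition $\ell'(e)\geq \spann(e)\cdot\frac{\eta\log n}{8\Delta\cdot 2^{7\sqrt{\log N}}}$ on such bad edges. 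This gives $\sum_{e\in E(P)\cap E^{\bad}}\spann(e)\leq \frac{8\Delta\cdot 2^{7\sqrt{\log N}}}{\eta\log n}\cdot \frac{\Lambda}{32}\leq \frac{(n-|B^0|)\cdot 2^{7\sqrt{\log N}}\log^4 n}{4}$, which, together with the trivial $2n$ term, is comfortably below $\Gamma=64n\cdot 2^{8\sqrt{\log N}}$. Plugging these two bounds into the hypothesis of \Cref{thm: almost DAG routing} produces the contradiction with the termination condition, completing the proof.
\end{proofof}
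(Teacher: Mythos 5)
Your overall route mirrors the paper's proof: check the two easy conditions directly, then derive a contradiction with the termination guarantee of the \DLSSSP data structure (\Cref{thm: almost DAG routing}) by projecting a hypothetical short $s$-$t$ path onto $\hat H$, bounding its length via the alternation of regular and special edges and its weight via the skip/span telescoping, the surcharge $\spann(e)\cdot\frac{\eta\log n}{8\Delta\cdot 2^{7\sqrt{\log N}}}$ built into $\ell'$ on bad edges, and Inequality \ref{eq: span of bad edges}. The total-length bound and the weight calculation are carried out exactly as in the paper.

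The genuine gap is your treatment of the edges of $\tilde E$ (special edges with both endpoints in the current set $B$). These edges have $\ell'(e)=0$, so the hypothesis $\sum_{e\in E(P)}\ell'(e)<\Lambda/32$ gives no control on how many of them $P$ contains; consequently it controls neither $\sum_{e\in E(P)\cap \tilde E}\ell(e)$, nor the number of special edges of $P$ (and hence, via alternation, the number of regular cross-cluster edges of the contracted path, each of length $1$ in $\hat H$), nor the spans of bad edges of $\tilde E$ on $P$. Your stated justification --- that $P$ is simple and uses only one $(b,t)$ edge, so $\tilde E$ contributes $O(1)$ edges --- is not valid: a simple path may pass through many vertices of $B$ as internal vertices without using their edges to $t$; and Inequality \ref{eq: span of bad edges} is the wrong tool for this term, since it bounds the total span of bad edges, not the number or length of $\tilde E$ edges on $P$. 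The correct observation, which is what the paper uses, is that every vertex of $B$ lies in a distinguished cluster and \emph{all} distinguished clusters are contracted into the single supernode $t$ in $\hat H$; hence no edge of $\tilde E$ ever appears in $\hat H$, and equivalently one may assume the path terminates at the first vertex of $B$ it meets (that vertex has a zero-length edge to $t$), so that $P$ contains no $\tilde E$ edge and no internal vertex of $B$ at all. After this reduction every special edge on $P$ satisfies $\ell'(e)\geq\ell(e)\geq 1$, and your length and weight estimates go through as written. (A small related inaccuracy: the two endpoints of an edge of $\tilde E$ need not lie in the same distinguished cluster; what matters is that all distinguished clusters are merged into $t$.)
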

\begin{proof}
	Let $\tilde E$ denote all edges of $G$ with both endpoints in $B$ at the end of the algorithm. It is easy to verify that, for each edge $e\in \tilde E$, we have set $\ell'(e)=0$. Additionally, for every edge	$e$ that is incident to $s$ or to $t$, we have set $\ell'(e)=0$. Next, we bound $\sum_{e\in E(H)}\ell'(e)$, as follows:

\[\begin{split} 
\sum_{e\in E(H)}\ell'(e)&\leq  \sum_{e\in E^{\spec}\setminus \tilde E}\ell(e)+\sum_{e\in E^{\bad}\setminus \tilde E} \spann(e)\cdot \frac{\eta\cdot\log n}{8\Delta\cdot 2^{7\sqrt{\log N}}} \\
&\leq \sum_{e\in E^{\spec}\setminus \tilde E}\ell(e)+\frac{\eta\cdot\log n}{8\Delta\cdot 2^{7\sqrt{\log N}}}\cdot \sum_{e\in E^{\bad}}\spann(e)\\
&\leq \sum_{e\in E^{\spec}\setminus \tilde E}\ell(e)+\frac{(n-|B^0|)\log n}{2}\\
&\leq \sum_{e\in E^{\spec}\setminus \tilde E}\ell(e)+\frac{\Lambda\cdot \Delta}{2\eta\log^4n}.
\end{split} 
\]

(we have used Inequality \ref{eq: span of bad edges} and the fact that
$\Lambda=(n-|B^0|)\cdot \frac{ \eta \cdot \log^5n}{\Delta}$).

Lastly, we prove that the distance from $s$ to $t$ in $H$, with respect to edge lengths $\ell'(\cdot)$, is at least $\frac{\Lambda}{32}$. Assume for contradiction that there is an $s$-$t$ path $Q$ in the current graph $H$, such that $\sum_{e\in E(Q)}\ell'(e)< \frac{\Lambda}{32}$. 
Recall that graph $G$ contains no regular edge connecting a vertex of $A$ to a vertex of $B$. Therefore, path $Q$ contains at least one special edge. Moreover, if we discard the first and the last regular edges on $Q$, and denote by $Q'$ the resulting path, then regular and special edges alternate on $Q'$. Since the length of every special edge in $H$ is at least $1$, we get that the number of regular edges on $Q$, that we denote by $m_r$, is bounded by $2\sum_{e\in E(Q)}\ell'(e)+3<\frac{\Lambda}{8}$.

Let $\hat E(Q)\subseteq E(Q)$ be the set of edges that also lie in the contracted graph $\hat H$, and denote $\hat E(Q)=(\hat e_1,\ldots,\hat e_z)$, where the edges are indexed in the order of their appearence on $Q$. 
For all $1\leq i\leq z$, let $\hat e_i'$ be the copy of edge $e_i$ in $\hat H$ that has the shortest length and the smallest weight.
Then the sequence $(\hat e_1',\ldots,\hat e_z')$ of edges defines an $s$-$t$ path in graph $\hat H$, that we denote by $\hat Q$. Notice that $\hat E(Q)$ may not contain an edge with both endpoints in the current set $B$. Therefore, for every special edge $e'_i\in \hat E(Q)$, $\hat \ell(\hat e'_i)=\ell(\hat e_i)\leq \ell'(\hat e_i)$, while for every regular edge $e\in \hat E(Q)$, $\hat\ell(e)=1$.
We conclude that the length of path $\hat Q$ in graph $\hat H$ is bounded by $\sum_{e\in E(Q)}\ell'(e)+m_r\leq \frac{\Lambda}{4}$.
To reach a contradiction, it is now enough to show that $\sum_{\hat e'_i\in \hat E(Q)}w(\hat e'_i)\leq \Gamma$.

Indeed, recall that, for every edge $e\in E^{\bad}$, we have set $\ell'(e)=\ell(e)+\spann(e)\cdot \frac{\eta\cdot\log n}{8\Delta\cdot 2^{7\sqrt{\log N}}}$.
Since $\sum_{e\in E(Q)}\ell'(e)\leq \frac{\Lambda}{32}$, we get that:

\[\sum_{e\in E(Q)\cap E^{\bad}}\spann(e)\cdot \frac{\eta\cdot\log n}{8\Delta\cdot 2^{7\sqrt{\log N}}}\leq \frac{\Lambda}{32}.  \]

Therefore:

\[ \sum_{e\in E(Q)\cap E^{\bad}}\spann(e)\leq \frac{\Lambda\cdot \Delta\cdot 2^{7\sqrt{\log N}}}{4\eta\log n}= \frac{(n-|B^0|) \cdot 2^{7\sqrt{\log N}}\cdot \log^4n}{4},\]

since $\Lambda=(n-|B^0|)\cdot \frac{ \eta \cdot \log^5n}{\Delta}$.
But then: 

\[\sum_{e\in E(Q)}\skipp(e)\leq n+2\sum_{e\in E(Q)\cap E^{\bad}}\spann(e)\leq n+\frac{(n-|B^0|) \cdot 2^{7\sqrt{\log N}}\cdot \log^4n}{2}\leq \frac{\Gamma}{4},\]

since $\Gamma=64n\cdot 2^{8\sqrt{\log N}}$.	
Since, for all $1\leq i<z$, $w(\hat e'_i)\leq 2\skipp(\hat e_i)$, we get that:

\[\sum_{\hat e'_i\in E(\hat Q)}w(\hat e'_i)\leq 2\sum_{e\in E(Q)}\skipp(e)\leq \frac{\Gamma}{2}, \]	
	
a contradiction.		
\end{proof}

\subsection{Completing the Algorithm}
Throughout the algorithm, for every special edge $e=(x,y)$ in graph $H$, we maintain a counter $n(e)$, that counts the number of $s$-$t$ paths containing $e$ that the algorithm reported in response to queries that contain $e$, since $\ell(e)$ was doubled last, or since the beginning of the algorithm -- whatever happened later. 

The algorithm contains at most $\Delta$ iterations, and terminates when either (i)  all iterations are completed; or (ii) the algorithm for the  \DLSSSP problem returns ``FAIL''; or (iii) one of the algorithms for the \maintaincluster problem or for the \maintainspeccluster problem that we use to maintain the clusters of $\xset$ returns ``FAIL''. In the latter case, we terminate our algorithm with a ``FAIL''; we have shown that this may only happen with probability at most $1/2$. In the second case, we return the cut-witness as described above.

We now proceed to describe the $i$th iteration. We use the algorithm for the  \DLSSSP problem from \Cref{thm: almost DAG routing} to compute a simple $s$-$t$ path $P_i$ in the current graph $\hat H$, with $\sum_{e\in E(\hat H)}\hat \ell(e)\leq (1+10\eps)\cdot \frac{\Lambda}{4}\leq \frac{\Lambda}2$, in time $O(|E(P_i)|)$. We denote by $\hat e_1,\hat e_2,\ldots,\hat e_z$ the sequence of the edges on this path. For all $1\leq j\leq z$, let $e_j=(x_j,y_j)$ be the edge corresponding to $\hat e_j$ in graph $H$. Notice that $x_1=s$, and, for all $1\leq j<z$, there is a cluster of $\xset$ that contains both $y_j$ and $x_{j+1}$; we denote this cluster by $X_j$.
Cluster $X_j$ cannot be a distinguished cluster, since path $P_i$ is simple.
 Lastly, there must be a distinguished cluster $X_z\in \xset$, that contains $y_z$. 

Recall that, if Case 1 happened, then $|U^0|\geq 2(n-|B^0|)\cdot (\log N)^{128}$. Moreover, $|U|\geq \gamma$ holds throughout the algorithm, where $\gamma=|U^0|-256(n-|B^0|)(\log N)^{64}\geq (n-|B^0|)\cdot (\log N)^{128}$. At the end of the algorithm, $|V(G)\setminus B|\leq (n-|B^0|)+\Delta\leq 2(n-|B^0|)$ holds. Therefore, if Case 1 happens, then $U$ remains a distinguished cluster throughout the algorithm.

For all $1\leq j<z$, we query Algorithm $\aset(X_j)$ for the \maintaincluster problem with vertices $y_j$ and $x_{j+1}$. Let $Q_j\subseteq H[X_j]$ be the simple path connecting $y_j$ to $x_{j+1}$ that the algorithm returns, whose length is at most $d_{X_j}$. Assume first that either Case 1 did not happen; or Case 1 happened and $X_z\neq U$. In this case, we select an arbitrary vertex $x_{z+1}\in B\cap X_z$, and query Algorithm $\aset(X_j)$ for the \maintaincluster problem with vertices $y_z$ and $x_{z+1}$. Let $Q_z\subseteq H[X_z]$ be the simple path connecting $y_z$ to $x_{z+1}$ that the algorithm returns, whose length is at most $d_{X_z}$. If $Q_z$ contains any vertex of $B$ as an inner vertex, then we truncate it, so it connects $y_z$ to a vertex of $B$, and does not contain any vertex of $B$ as an inner vertex. 
We then append an edge connecting the last vertex of $Q_z$ to $t$ to the path $Q_z$. (For all $1\leq j\leq z$, if $X_j$ is the singleton cluster, then we simply let $Q_j$ contain a single vertex $x_{j}=y_{j+1}$).
If Case 1 happened and $X_z=U$, then we query Algorithm $\aset'(U)$ with vertex $y_z$, to obtain a simple path $Q_z$ connecting $y_z$ to some vertex $b\in B$, such that $Q_z$ has length at most $d_U$ and no inner vertices of $Q_z$ belong to $B$. We then append the edge $(b,t)$ at the end of path $Q_z$.

 By concatenating $e_1,Q_1,e_2\ldots,e_z,Q_z$, we obtain an $s$-$t$ path in the current graph $H$, that we denote by $P'_i$. Since the initial path $P_i$ was simple, all custers in $\xset$ are mutually disjoint, and each of the paths $Q_1,\ldots,Q_z$ is simple, we get that $P'_i$ is a simple path as well. We now bound the length of $P'_i$. Recall that $\sum_{j=1}^z\ell(e_j)\leq \sum_{j=1}^z\hat \ell(\hat e_j)\leq \frac{\Lambda}{2}$.

Recall that, for every cluster $X\in \tilde \xset'_2$, $d_X=\min\set{\frac{|X^0|\cdot \eta}{\Delta},d}$.
Notice that clusters $X_1,\ldots,X_{z-1}$ may not contain vertices of $B$. Since, at the end of the algorithm, $|B|\geq |B^0|-\Delta$, we get that:

\[\sum_{j=1}^{z-1}|X_j|\leq n-|B|\leq n-|B^0|+\Delta\leq 2(n-|B^0|). \]

Moreover, for every cluster $X\in \tilde \xset_2'$, algorithm $\aset(X)$ terminates once $|X|$ falls below $|X^0|/2$, and so:

\[\sum_{j=1}^{z-1}|X^0_j|\le 2\sum_{j=1}^{z-1}|X_j| \leq 4(n-|B^0|). \]

Altogether, we get that:

\[\sum_{j=1}^{z-1} d_{X_j}\leq \sum_{j=1}^{z-1}\frac{|X^0_j|\cdot \eta}{\Delta}\leq \frac{4(n-|B^0|)\cdot \eta}{\Delta}\leq \frac{\Lambda}{4},   \]

since $\Lambda=(n-|B^0|)\cdot \frac{ \eta \cdot \log^5n}{\Delta}$.

Note also that either $d_{X_z}\leq d_U$ (if $X_z$ is the special cluster), or $d_{X_z}\leq d$ (otherwise) must hold.
Recall that $d_U=\max\set{(\log N')^{64},\left (|U^0\setminus B^0|+\Delta\right )\frac{\eta}{4\Delta\cdot 2^{\sqrt{\log N}}}}$. Since $\Lambda=(n-|B^0|)\cdot \frac{ \eta \cdot \log^5n}{\Delta}$, while $|U^0\setminus B^0|+\Delta\leq 2(n-|B^0|)$ and $\frac{(n-|B^0|)\cdot \eta}{\Delta}> 2^{4\sqrt{\log N}}$ from Inequality \ref{eq: bound on d2}, we get that $d_U\leq \frac{\Lambda}{4}$.
Recall also that $d=\frac{(n-|B^0|)\cdot\eta}{4\Delta\cdot 2^{\sqrt{\log N}}}\leq \frac{\Lambda}{4}$.
Therefore, $d_z\leq \frac{\Lambda}{4}$ must hold. Overall, we get that:

\[\sum_{e\in E(P')}\ell(e)\leq \sum_{j=1}^z\ell(e_j)+\sum_{j=1}^{z-1}d_{X_j}\leq  d_{X_z}\leq \Lambda, \]

as required.
We increase the counter $n(e)$ for every edge $e\in E(P')$. If, for any such edge $e$, the counter $n(e)$ reaches $\eta$, then we double the length of $e$, and we reset $n(e)=0$. 

Notice that the time required to respond to the queries posed to algorithms $\aset(X_1),\ldots,\aset(X_{z-1})$, and Algorithm $\aset(X_z)$ (if $X_z$ is a regular cluster) or Algorithm $\aset'(X_z)$ (if it is a special cluster) is already accounted for in the analysis of the running times of these algorithms. The additional time that is required in order to respond to the query is bounded by $O(|E(P_i)|)$.

Overall, the additional time that is required in order to respond to all queries is bounded by $\sum_{i=1}^{\Delta}O(|E(P_i)|)\leq \sum_{i=1}^{\Delta}O(|E(P'_i)|)$.
Recall that, for every special edge $e\in E(H)$, the length of $e$ in $H$ is doubled once it participates in $\eta$ paths that the algorithm returns. Once the length of $e$ reaches $\Lambda$, it may no longer participate in any path that the algorithm returns. Therefore, every special edge may participate in at most $O(\eta\cdot \log \Lambda)\leq O(\eta\log N)$ paths $P'_1,\ldots,P'_{\Delta}$. Since there are at most $n$ special edges in $H$, and since regular and special edges alternate on the paths $P'_1,\ldots,P'_{\Delta}$, we get that 
$\sum_{i=1}^{\Delta}|E(P'_i)|\leq O(n\eta\log N)\leq O(n\Delta\log N)\leq O(n\cdot (n-|B^0|)\cdot \log N)$, since $\eta\leq \Delta$ and $\Delta\leq n-|B^0|$ from the problem definition.
The total additional time that is required in order to respond to all queries is then bounded by $ O(n\cdot (n-|B^0|)\cdot \log N)$, and the total running time of the algorithm is bounded by:

\[O\left(c_1\cdot n\cdot (n-|B^0|)\cdot 2^{c_2\sqrt{\log N}}\cdot (\log N)^{16c_2r+2c_2}\right )\leq c_1\cdot n\cdot (n-|B^0|)\cdot  2^{c_2\sqrt{\log N}}\cdot (\log N)^{16c_2r+7c_2}.   \]

\section{Layered Connectivity Data Structure and an Algorithm for the  \maintainspeccluster problem}
\label{sec: layered connectivity DS}

The goal of this section is to prove the second part of \Cref{thm: from routeandcut to maintaincluster}, by showing that an algorithm for the $r$-restricted \routeandcut problem implies an algorithm for the $r$-restricted \maintainspeccluster problem with the required running time. In order to prove this theorem, we define a new data structure, that we call a \LCDS, and provide an algorithm for maintaining this data structure in a graph that undergoes online increases in the lengths of its edges. The same data structure, and the same algorithm for maintaining it, will also be used in \Cref{sec: alg for maintaincluster from routeandcut} in order to prove the first part of \Cref{thm: from routeandcut to maintaincluster}, namely that  an algorithm for the $r$-restricted \routeandcut problem implies an algorithm for the $r$-restricted \maintaincluster problem with the required running time. Because of this, we define both the \LCDS, and the algorithm for maintaining this data structure slightly more generally than what is needed for the \maintainspeccluster problem. But first we start with intuition.

Recall the definition of the \maintainspeccluster problem. At a high level, the input is a perfect well-structured graph $G=(L,R,E)$  with a proper assignment of lengths $\ell(e)$ on its edges $e\in E$ that are given in the adjacency list representation, together with a subset $\beta\subseteq V(G)$ of vertices with $|\beta|\geq \frac{99|V(G)|} {100}$, and parameters $N\geq W^0(G)$, $1\leq \Delta\leq \frac{|\beta|}{(\log N)^{128}}$, $1\leq \eta \leq \Delta$, and $d\geq (\log N)^{64}$, such that $\frac{\Delta\cdot d}{\eta}\leq |V(G)|-|\beta|+\Delta\cdot (\log N)^{64}$, and every special edge has length $1$. Over the course of the algorithm, vertices may be deleted from set $\beta$, and the lengths of some edges may be increased, but the length of every special edge incident to a vertex that currently lies in $\beta$ remains $1$.

The algorithm consists of at most $\Delta$ iterations.
At the beginning of every iteration $i$, the algorithm is given a vertex $x\in V(G)$, and it must return a simple path $P_i$ in $G$ connecting $x$ to any vertex  $y\in \beta$, such that the length of the path is at most $d$, and no inner vertices on the path lie in $\beta$. After that, the length of some special edges on path $P_i$ may be doubled, and vertex $y$ is deleted from $\beta$. Recall that, for any edge $e\in E(G)$, if $\ell(e)$ is doubled at times $\tau'$ and $\tau$ with $\tau'<\tau$, then at least $\eta$ paths that the algorithm returns during the time interval $[\tau',\tau)$ must contain $e$.
The algorithm may, at any time, produce a strongly well-structured cut $(X,Y)$ in $G$ of sparsity $\Phi_G(X,Y)\leq \frac{(\log N)^{64}}{d}$, such that $w(X)\geq 1.8|\beta\cap X|$ holds. Let $J'\subseteq Y$ denote the set of vertices that serve as endpoints of the edges of $E_G(X,Y)$. The vertices of $X\cup J'$ are then deleted from $G$, and the algorithm continues with the resulting graph $G=G[Y\setminus J']$.

A natural way to solve this problem is to simply maintain a collection 
$\qset=\set{K(v)\mid v\in V(G)}$ of paths in  $G$, where for each vertex $v\in V(G)$, path $K(v)$ connects $v$ to some vertex of $\beta$, such that the length of every path in $\qset$ is at most $d$, and every vertex of $\beta$ serves as endpoint of a small number of such paths. 
Then, whenever a query vertex $x$ arrives, we could simply return the path $K(x)$ in response to the query, possibly truncating it to ensure that no inner vertices on the path belong to $\beta$.
As the time progresses, some of the paths in $\qset$ may however be destroyed. For example, as the lengths of some special edges are doubled, the lengths of some paths in $\qset$ may become too large, so we can no longer use them in order to respond to queries. Additionally, some vertices may be deleted from $\beta$, and so paths connecting vertices of $G$ to such vertices can no longer be used in order to respond to queries. Lastly, when our algorithm returns a  sparse cut $(A,B)$, and some of the vertices are deleted from $G$ following this, it is possible that, for some of the paths $K(v)\in \qset$, some inner vertices on $K(v)$ are deleted from $G$, while the first endpoint of the path remain in $G$. Whenever, for some vertex $v\in V(G)$, the corresponding path $K(v)\in \qset$ is destroyed, we say that vertex $v$ became \emph{disconnected}. We need to ensure that, before we respond to each query, every vertex of $G$ has a valid path $K(v)\in \qset$. Therefore, whenever a vertex $v$ becomes disconnected, we need to compute a new path $K(v)$ connecting it to some vertex that currently lies in $\beta$. The difficulty is that, even if the number of vertices that are currently disconnected is quite small, if we attempt to connect them to the vertices of $\beta$ directly, the amount of time that we need to invest to compute such a routing may be potentially as large as $\Omega(|V(G)|\cdot (|V(G)|-|\beta|)$. We note that we can ensure that the total number of vertices that ever become disconnected over the course of the algorithm (counting every time a vertex becomes disconnected separately, even if the same vertex is disconnected several times), is relatively small. However, there may be a large number of iterations, in each of which a small number of vertices become disconnected. Therefore, computing a new collection of paths connecting these disconnected vertices to $\beta$ from scratch in every iteration is prohibitively expensive in terms of the running time. A different approach that could address this issue is the following: let $U$ be the set of vertices that are currently disconnected, and let $U'=V(G)\setminus U$. Instead of connecting the vertices of $U$ to the vertices of $\beta$ directly, we can instead try to connect them to the vertices of $U'$, and then exploit the paths in $\set{K(v)\mid v\in U'}$ in order to route the vertices of $U$ to the vertices of $\beta$. In doing so we will only need to explore a relatively small subgraph of $G$, and we can use the algorithm for the $r$-restricted \routeandcut problem, whose existence is assumed in the statement of \Cref{thm: from routeandcut to maintaincluster}, in order to do so in time roughly $O(W(G)^{1+o(1)}\cdot |U|)$. The difficulty with this approach is that, as we compose the routing paths with each other, iteration after iteration, we may obtain routing paths that are very long. To conclude, we need to balance between the need to ensure that all paths in $\qset$ are sufficiently short, with the need to ensure that the total running time invested in reconnecting the disconnected vertices to $\beta$ is not too high. We use a natural \emph{layering} approach, by defining a \LCDS, which we describe below. 

So far we provided intuition for the \LCDS in the context of obtaining an algorithm for the $r$-restricted \maintainspeccluster problem. We will also exploit this data structure in our algorithm for the $r$-restricted \maintaincluster problem. Intuitively, instead of the set $\beta$ of vertices that is given as part of input to the \maintainspeccluster problem, in order to solve the \maintaincluster problem, we will select a random set $\beta$ of such vertices of an appropriate cardinality, and then embed an expander that is defined over the vertices of $\beta$ into the input graph $G$. The \LCDS data structure will then be used to maintain paths connecting every vertex of $G$ to some vertex of $\beta$. We will also maintain a similar data structure in graph $\revG$, that is obtained by reversing the direction of every edge in $G$, obtaining a collection of paths that, for every vertex $v\in V(G)$, connects some vertex of $\beta$ to $v$. 
We now formally define the \LCDS data structure, followed by an algorithm for maintaining such a data structure in dynamic graphs that arise from the \maintaincluster and \maintainspeccluster problems.
Lastly, we design an algorithm for the \maintainspeccluster problem that relies on this data structure.

\subsection{The Layered Connectivity Data Structure}
\label{subsubsec: layered connectivity DS}

We assume that we are given a perfect well-structured graph $G=(L,R,E)$ with $|V(G)|=n$, and a proper assignment $\ell(e)$ of lengths to the edges $e\in E$, that are given in the adjacency-list representation. Recall that for every vertex $v\in V(G)$, its weight $w(v)$ is the length of the unique special edge that is incident to $v$ in $G$. We denote $W=\sum_{v\in V(G)}w(v)$, and we let $H$ be the subdivided graph corresponding to $G$ (see \Cref{def: subdivided graph}). We assume further that we are given a parameter $N\geq \max\set{n,W/32}$ that is greater than a large enough constant, together with additional parameters $d'\geq 1$, $\rho\geq \log^2N$, and $1\leq r\leq \ceil{\sqrt{\log N}}$, such that $d'\leq\min\set{\frac{2^{r\sqrt{\log N}}}{\log^8N},\frac{n}{8}}$. Lastly, we are given a collection $\beta\subseteq V(H)$ of vertices, that we refer to as \emph{centers}, with $|\beta|\geq \frac{W}{d'\cdot \log^2N}$.

Let $\lambda=\ceil{\log(64N)}+1$.
A \LCDS for graph $G$ consists of the following ingredients:

\begin{itemize}
	\item The first ingredient is a partition $(U_0,U_1,\ldots,U_{\lambda})$ of $V(H)$ into disjoint subsets, such that $U_0=\beta$. For all $0\leq i\leq \lambda$, we refer to the vertices in set $U_i$ as \emph{layer-$i$ vertices}, and we sometimes refer to the vertices of $U_{\lambda}$ as \emph{disconnected vertices}. We also denote by $U^{\geq i}=\bigcup_{i'=i}^{\lambda}U_{i'}$ and by $U^{\leq i}=\bigcup_{i'=0}^{i}U_{i'}$. Similarly, we denote by  $U^{> i}=\bigcup_{i'=i+1}^{\lambda}U_{i'}$ and by $U^{< i}=\bigcup_{i'=0}^{i-1}U_{i'}$.
	
	\item The second ingredient is collections $\qset_1,\ldots,\qset_{\lambda-1}$ of paths in graph $H$. For all $1\leq i\leq \lambda-1$, $\qset_i=\set{K(v)\mid v\in U_i}$, where, for every vertex $v\in U_i$, path $K(v)$ connects vertex $v$ to some vertex $v'\in U^{<i}$. We say that vertex $v'$ is the \emph{parent-vertex} of $v$, and vertex $v$ is a \emph{child-vertex} of $v'$. We also say that $v$ is a \emph{descendant} of $v'$. The descendant relation is transitive: if $v$ is a descendant of $v'$, and $v'$ is a descendant of $v''$, then $v$ is a descendant of $v''$.  Additionally, every vertex is a descendant of itself. If vertex $x$ is a descendant of vertex $y$, then we say that $y$ is an \emph{ancestor} of $x$. Note that ancestor-descendant relation is only determined by the endpoints of the paths in $\bigcup_{i=1}^{\lambda-1}\qset_i$, and it is independent of the inner vertices on these paths.
	For every vertex $x\in \beta$ and index $0\leq i<\lambda$, we denote by $\dset_i(x)$ the set of all vertices $v\in U_i$ that are descendants of $x$, and we denote by $\dset(x)=\bigcup_{i=0}^{\lambda-1}\dset_i(x)$ the set of all descendant-vertices of $x$.
	
	Lastly, we require that the following properties hold for a \LCDS:
	
	\begin{properties}{L}
		\item For all $1\leq i< \lambda$, the paths in $\qset_i$ have length at most $4d'\log^{10}N$ each, and they cause edge-congestion at most $4d'\log^{9}N$ in $H$;  \label{prop: routing paths}
		
		\item For every vertex $x\in \beta$, for every layer $1\leq i\leq \lambda-1$,  $|\dset_i(x)|\leq \rho$; and \label{prop: descendants}
		\item For every layer $1\leq i\leq \lambda-1$,  every vertex $v\in U^{<i}\setminus U^0$, may serve as an endpoint of at most one path in $\qset_i$. \label{prop: descendants one to one}
	\end{properties}
\end{itemize}

The \LCDS data structure must also maintain, for every vertex $x\in \beta$ and layer $1\leq i< \lambda$, the collection $\dset_i(x)\subseteq U_i$ of layer-$i$ descendants of $x$. Additionally, for every  layer $1\leq i\leq \lambda-1$ and  vertex $v\in U^{<i}$, if $v$ serves as an endpoint of a path $K(u)\in \qset_i$, then we maintain a pointer from $v$ to $u$.

In order to construct and maintain a \LCDS, as graph $G$ undergoes updates, we will employ a procedure that we call \reconnect, and describe next.

\subsection{Procedure \reconnect}
\label{subsubsection proc reconnect layer}
As before, we assume that we are given a perfect well-structured graph $G=(L,R,E)$ with $|V(G)|=n$, and a proper assignment $\ell(e)$ of lengths to the edges $e\in E$, that are given in the adjacency-list representation, together with a parameter $N\geq \max\set{n,W/32}$ that is greater than a large enough constant, and parameters $d'\geq 1$, $\rho\geq \log^2N$, and $1\leq r\leq \ceil{\sqrt{\log N}}$, where $W=\sum_{v\in V(G)}w(v)$, and $d'\leq\min\set{\frac{2^{r\sqrt{\log N}}}{\log^8N},\frac{n}{8}}$.
We additionally assume that $\frac{W\cdot \log^3N}{|\beta|}\leq \frac{\rho}{16}$, and that the length of every edge in $G$ is at most $d'$.
Lastly, we assume that we are given a subdivided graph $H$ corresponding to $G$, in the adjacency-list representation, and a subset $\beta\subseteq V(H)$ of its vertices, with $|\beta|\geq \frac{W}{d'\cdot \log^2N}$.

A valid input to Procedure \reconnect consists of an index $1\leq i<\lambda$ of a layer, and a valid \LCDS for $G$, such that, for all $i\leq  i'<\lambda$, $U_{i'}=\emptyset$, and $|U_{\lambda}|\leq 2^{\lambda-i}$.
We denote $k=| U_{\lambda}|$.
A valid output of Procedure \reconnect is one of the following:

\begin{itemize}
	\item either a valid \LCDS, in which, for all $i<i'<\lambda$, $U_{i'}=\emptyset$ and $|U_{\lambda}|\leq \frac{k}{4\lambda}$ holds; additionally, layers $U_1,\ldots,U_{i-1}$, together with the corresponding collections of paths $\qset_1,\ldots,\qset_{i-1}$ must remain the same as in the input \LCDS; or

	\item a strongly well-structured $\frac{1}{4d'}$-sparse cut $(A,B)$ in $G$ with $w(A),w(B)\geq \frac{k}{2^{14}\log N}$ and $w(A)\geq 1.8|\beta\cap A|$; or
	
		\item A weakly well-structured cut $(A',B')$ in $H$ with $|A'|\geq \frac{511|V(H)|}{512}$,  $|A'\cap \beta|<\frac{2|\beta|}{\log^3N}$, and $|E_H(A',B')|\leq  \frac{|V(H)|}{1024d'\cdot \log^6N}$.
\end{itemize}

Next, we show an algorithm, that, given a valid input to Procedure \reconnect,  produces a valid output for the procedure. The algorithm relies on Algorithm $\aset$ for the $r$-restricted \routeandcut problem, whose existence was assumed in the statement of \Cref{thm: from routeandcut to maintaincluster}.

\begin{claim}\label{claim: proc reconnect}
Suppose that, for some parameter $r\geq 1$, there exists a randomized algorithm for the $r$-restricted \routeandcut problem, that, given as an input an instance $(G,A,B,\Delta,\eta,N)$ of the problem with $|V(G)|=n$,
has running time at most $c_1\cdot n\cdot (n-|B|)\cdot 2^{c_2\sqrt{\log N}}\cdot (\log N)^{16c_2(r-1)+8c_2}$.
Then there is a randomized algorithm, that,  given a valid input to Procedure \reconnect, where graph $G$ and its corresponding subdivided graph $H$ are given in the adjacency-list representation, with the same parameter $r$, either produces a valid output for the procedure, or returns ``FAIL''. The running time of the algorithm is at most
$ O\left(c_1\cdot W\cdot k\cdot 2^{c_2\sqrt{\log N}}\cdot (\log N)^{16c_2(r-1)+8c_2+14}\right )$, and the probability that it returns ``FAIL'' is at most $\frac{1}{N^{10}}$. 
\end{claim}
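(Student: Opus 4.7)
The plan is to reduce Procedure \reconnect to one or more invocations of Algorithm $\aset$ for $r$-restricted \routeandcut, applied to (a subgraph of) the subdivided graph $H$. The natural choice of parameters is $A=U_\lambda$, $B=U^{<\lambda}=U^{<i}$ (so that $|V(H)|-|B|=k$), together with $\Delta$ slightly less than $k$ (for instance $\Delta=\lceil k(1-1/(8\lambda))\rceil$) and congestion parameter $\eta=\Theta(d'\log^8 N)$. With these choices, the $r$-restricted condition $\frac{(n-|B|)\eta}{\Delta}\le 2^{r\sqrt{\log N}}$ reduces to $\eta\le O(2^{r\sqrt{\log N}})$, which holds by our assumption $d'\le 2^{r\sqrt{\log N}}/\log^8 N$. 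The \routeandcut call then runs in time $O(c_1\cdot |V(H)|\cdot k\cdot 2^{c_2\sqrt{\log N}}(\log N)^{16c_2(r-1)+8c_2})$, matching the claimed running time since $|V(H)|\le O(W)$.

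In the favorable case, $\aset$ returns a routing $\qset$ of size at least $\Delta$ with congestion at most $4\eta\log N\le 4d'\log^9 N$, establishing the congestion half of Property~\ref{prop: routing paths}. For each returned path from $v\in U_\lambda$ to $v'\in U^{<i}$, we set $v'$ to be the parent of $v$ and place $v$ into a new layer $U_i$, taking $\qset_i$ to be the collection of these paths. The path-length bound in Property~\ref{prop: routing paths} is obtained by a Markov/averaging argument: the total edge-uses across $\qset$ is at most $4\eta\log N\cdot |E(H)|$, so after discarding the at most $k/(8\lambda)$ longest paths, every remaining path has length at most $4d'\log^{10} N$; the remaining routing still has size $\ge k(1-1/(4\lambda))$. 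Property~\ref{prop: descendants one to one} follows from \routeandcut's guarantee that endpoints of the paths in $\qset$ are disjoint. Property~\ref{prop: descendants} requires a more careful argument: we additionally restrict $B$ to $U^{<i}\setminus U^0$ first (routing to non-center vertices), and only if that instance is infeasible do we allow $\beta$-endpoints; combined with the inductive bounds $|\dset_{i'}(x)|\le \rho$ for $i'<i$ that hold for the input \LCDS and the condition $\rho\ge \log^2 N$ with $\lambda=O(\log N)$, one shows that at most $\rho$ of the newly added layer-$i$ vertices can be descendants of any fixed $x\in\beta$ (discarding a few excess paths if necessary, charging the loss to the already-granted $1/(8\lambda)$ slack).

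In the unfavorable case, $\aset$ returns a smaller routing together with a cut $(X,Y)$ in $H$ with $|E_H(X,Y)|$ at most $\frac{64\Delta}{\eta\log^4 N}+\frac{256|\qset|}{\eta}$, separating the unrouted vertices of $U_\lambda$ in $X$ from a majority of $B$ in $Y$. This cut must be transformed into one of the two required output forms. If the $\beta$-side of the cut is large, i.e.\ $|Y\cap\beta|\ge |\beta|(1-2/\log^3 N)$ and $|Y|\ge \frac{511|V(H)|}{512}$, then $(Y,X)$ (after possibly making it weakly well-structured by absorbing endpoints of regular edges across the cut) directly yields the weakly well-structured cut required by the third output, using $|E_H(X,Y)|\le \frac{|V(H)|}{1024 d'\log^6 N}$, which is straightforward to verify from the parameter choices. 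Otherwise, we can map the cut back to a cut in $G$ via Observation~\ref{obs: subdivided cut sparsity} (which bounds the sparsity loss under subdivision by a factor of $8$), and then apply Observation~\ref{obs: weakly to strongly well str} to convert from weakly to strongly well-structured, losing at most constant factors in weight on each side; this yields a strongly well-structured $\frac{1}{4d'}$-sparse cut. To ensure the balance requirement $w(A),w(B)\ge k/(2^{14}\log N)$ and $w(A)\ge 1.8|\beta\cap A|$ for this second output, we may need to iterate: if a single \routeandcut call yields a cut that is too imbalanced, we remove its small side from the instance and re-invoke \routeandcut, accumulating a sequence of small sparse cuts; Observation~\ref{obs: from many sparse to one balanced} then combines these into a single balanced cut of the required sparsity.

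The main obstacle is the simultaneous verification of all structural properties in the two cases, and in particular property~\ref{prop: descendants}: naive counting yields a bound of $\rho\lambda$ rather than $\rho$ on $|\dset_i(x)|$, so ensuring $\rho$ requires either pre-selecting, for each $x\in\beta$, a pool of at most $\rho$ admissible endpoints among $\dset(x)\cap U^{<i}$ and feeding only those into the $B$-set of the \routeandcut instance, or discarding enough paths after the fact to enforce the bound and showing that the discarded fraction is small enough (at most $1/(8\lambda)$). A secondary obstacle is parameter bookkeeping: the random-cut output of \routeandcut lives in $H$ and has a sparsity bound stated in terms of $\eta$ and $\Delta$, whereas the two allowable cut outputs live in $G$ (or in $H$ with $H$-specific parameters) and must satisfy different balance constraints. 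Handling these reductions cleanly, and bounding the failure probability by $1/N^{10}$ via a suitable amplification of \routeandcut's constant failure probability, completes the proof.
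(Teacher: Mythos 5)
Your high-level reduction to $r$-restricted \routeandcut is the right starting point, and your length-pruning argument for Property~\ref{prop: routing paths} and the failure-probability amplification match the paper. But there is a genuine gap at exactly the point you flag as ``the main obstacle,'' and neither of your proposed fixes resolves it. First, you route into $B=U^{<i}$ with the disjoint-endpoint guarantee of \routeandcut, so every center $x\in\beta$ can absorb only one path; since $k$ can greatly exceed $|\beta|$ (and exceed the number of non-center vertices of $U^{<i}$ with spare capacity), both routability and the cut analysis break. In the cut case, with single-use centers, as many as $k\geq|\beta|$ paths can terminate at centers, so essentially all of $\beta$ can end up on the $X$-side of the returned cut, and you can no longer certify $|A'\cap\beta|<2|\beta|/\log^3N$ for the third output nor $w(A)\geq 1.8|\beta\cap A|$ for the second. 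The paper handles this by building the modified graph $H'$ in which every center is replaced by $M=\frac{W\log^3N}{|\beta|}$ copies, so a center is ``used up'' only after absorbing $M$ paths; this is what makes the counting in Observation~\ref{obs: lots of terminals on one side} go through. Second, for Property~\ref{prop: descendants}, your post-hoc discarding fails quantitatively: if all $k$ paths terminate at descendants of roughly $k/(\lambda\rho)$ centers, the number of paths violating the per-center budget $\rho$ is $\Omega(k)$, far above the $k/(8\lambda)$ slack you have; and your pre-restriction of $B$ to $\rho$ admissible descendants per center destroys the meaning of a failed routing (the resulting cut only separates the sources from an artificially restricted endpoint set, and need not satisfy either allowable cut output), while still not addressing the single-use-center problem.

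The paper's resolution is a different and essential mechanism that your proposal lacks: it runs $4\lambda$ rounds of \routeandcut on $H'$ (excluding already-used non-center endpoints from $B$ in later rounds), so that all but a $\frac{1}{4\lambda}$-fraction of the vertices of $U_\lambda$ acquire at least $2\lambda$ candidate paths each; sending $\frac{1}{2\lambda}$ units of flow along every candidate path yields a fractional flow in an auxiliary network in which each center has capacity $\rho$, and an integral maximum flow (Ford--Fulkerson) then selects one path per good vertex while enforcing $|\dset_i(x)|\leq\rho$ exactly, with the per-center load $4\lambda M\leq\frac{\lambda\rho}{4}$ plus the at most $\lambda\rho$ pre-existing descendants fitting under the capacity after the $\frac{1}{2\lambda}$ scaling. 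Without this multi-round-plus-flow selection (or an equivalent idea), Property~\ref{prop: descendants} cannot be guaranteed. Two smaller omissions: you need $\eta'\leq\Delta'$, which fails when $k$ is small, and the paper treats the regime $k<2^{20}d'\log^6N$ by a separate deterministic DFS-based routine; and your plan to iterate \routeandcut and merge cuts via Observation~\ref{obs: from many sparse to one balanced} is unnecessary in the paper, where the balance $w(A),w(B)\geq\frac{k}{2^{14}\log N}$ follows directly from the sizes of the unrouted sets $A'_j,B'_j$ in a single failed round.
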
 

The remainder of this subsection is dedicated to the proof of \Cref{claim: proc reconnect}. We denote by $\aset$ the algorithm for the $r$-restricted \routeandcut problem, whose existence is assumed in the statement of the claim.

 The most natural way to design an algorithm for Procedure \reconnect is to simply cast it as a special case of the $r$-restricted \routeandcut problem, in the underlying graph $H$, with the set $A=U_{\lambda}$ of source vertices, and the set $B=\bigcup_{i'=0}^{i-1}U_{i'}$ of destination vertices. For technical reasons, we must allow every vertex $x\in \beta $ to serve as an endpoint of at least $\Theta\left(\frac{W\cdot \poly\log N}{|\beta|}\right )$ of the resulting routing paths, while each of the remaining vertices of $B$ may serve as an endpoint of at most one path. To accommodate this requirement, we slightly modify graph $H$. The modification essentially replaces every vertex $x\in \beta$ with $\Theta\left(\frac{W\cdot \poly\log N}{|\beta|}\right )$ distinct vertices, while ensuring that the resulting graph remains a uniform well-structured graph. The main difficulty with this approach, however, is that, for every vertex $x\in \beta$, and every layer $1\leq i'<i$, there may be up to $\rho$ vertices of $U_{i'}$ that are descendants of $x$. It is possible that we will connect a distinct vertex of $U_{\lambda}$ to each such descendant vertex of $x$ via the routing computed by the algorithm for the \routeandcut problem. The vertices of $U_{\lambda}$ from which these routing paths originate are then added to layer $U_i$, and, as the result, the number of descendants of $x$ in $U_i$ may become too high. In order to overcome this difficulty, we will iteratively solve ${4\lambda}$ instances of the \routeandcut problem, with the goal of constructing, for every vertex $v\in U_{\lambda}$, a collection $\rset(v)$ of at most ${4\lambda}$  paths that originate at $v$.
 Let $U'\subseteq U_{\lambda}$ be the subset of vertices $v$ with $|\rset(v)|\geq 2\lambda$, and let $\rset=\bigcup_{v\in U'}\rset(v)$. Our goal is to ensure that set $U'$ contains almost all vertices of $U_{\lambda}$, the paths in $\rset$ cause low edge-congestion, and they terminate at the vertices of $\bigcup_{i'=0}^{i-1}U_{i'}$. We also need to ensure that every vertex of $\beta$ serves as an endpoint to at most  $\Theta\left(\frac{W\cdot \poly\log N}{|\beta|}\right )$ paths in $\rset$, while every vertex of $\bigcup_{i'=1}^{i-1}U_{i'}$ serves as an endpoint of at most one such path. 
 Intuitively, by sending $\frac{1}{2\lambda}$ flow units on every path in $\rset$, we obtain a fractional flow where every vertex in $U'$ sends at least $1$ flow unit to the vertices of $\bigcup_{i'=0}^{i-1}U_{i'}$, and, for every vertex $x\in \beta$, the descendants of $x$, including $x$ itself, receive at most $\rho$ flow units. We can then compute an integral flow, which defines, for every vertex $v\in U'$, a path $K(v)$ connecting $v$ to some vertex in $\bigcup_{i'=0}^{i-1}U_{i'}$, such that, for every vertex $x\in \beta$, at most $\rho$ such paths terminate at the descendant vertices of $x$. %Since the paths in the resulting set $\set{K(v)\mid v\in U'}$ cause a relatively low edge-congestion, we will show that most of these paths are sufficiently short. The vertices $v\in U'$ for which $K(v)$ is sufficiently short are then added to set $U_i$, and their corresponding paths $K(v)$ are added to $\qset_i$.

We now provide a formal description of the procedure.
Recall that we have denoted $k=|U_{\lambda}|$. We assume first that $k\geq 2^{20}\cdot d'\cdot \log^6N$. We show an algoritm for the special case where $k<2^{20}\cdot d'\cdot \log^6N$ later.
  We start by describing the modified graph $H'$, to which the algorithm for the \routeandcut problem will be applied iteratively. 
  Throughout, we use a parameter $M=\frac{W\cdot \log^3N}{|\beta|}$.  Since  $\frac{W\cdot \log^3N}{|\beta|}\leq \frac{\rho}{16}$ from the problem definition, we get that $M\leq \frac{\rho}{16}$.

 \subsubsection{Modified Graph $H'$.}
 Recall that we are given as input a perfect well-structured graph $G$, and its corresponding subdivided graph $H$. Both graphs are given in the adjacency-list representation. We first define the graph $H'$, and then describe an algorithm to compute its adjacency-list representation efficiently.
 
Graph $H'$ is defined as follows.
We start with $H'=H$, and the bipartition $(L'',R'')$ of its vertices, that is identical to the bipartition $(L',R')$ of $V(H)$. We then inspect every vertex $x\in \beta$ one by one. If $x\in L''$, then we add the collection $\set{x_1,\ldots,x_{M-1}}$ of new vertices to graph $H'$, that we call \emph{copies} of $x$. The vertices in $\set{x_1,\ldots,x_{M-1}}$ are also added to $R''$, and, for all $1\leq j\leq M-1$, we add a regular edge $(x,x_j)$ to graph $H'$. If $x\in R''$, then we also add a collection  $\set{x_1,\ldots,x_{M-1}}$ of new vertices to $H'$, called copies of $x$, but now we add $x_1$ to $L''$, and $x_2,\ldots,x_{M'-1}$ to $R''$. We add a special edge $(x,x_1)$, and, for all $2\leq j\leq M-1$, we add a regular edge $(x_1,x_j)$ to $H'$. This concludes the definition of graph $H'$.

Note that $H'$ is a uniform well-structured graph. Since $M=\frac{W\cdot \log^3N}{|\beta|}$, and since $|V(H)|=W$, we get that $|V(H')|\leq O(W\log^3N)$, and moreover, $|E(H')\setminus E(H)|\leq O(W\log^3N)$. It is easy to modify the adjacency-list representation of $H$ to obtain the adjacency-list representation of $H'$, by inserting all vertices of $V(H')\setminus V(H)$ and all edges $E(H')\setminus E(H)$ into the former, in time $O(W\log^3N)$.
%, and, since $|V(X)|\leq \frac{8\hat W}{d'}$, we get that $|V(H')|\leq 2^{11}\cdot \hat W+|V(H)|\leq 2^{12}\cdot \hat W$. Recall that graph $H$ is given as input to Procedure \reconnect as an adjacency-list. We can augment this adjacency-list to create an adjacency-list representation of graph $H'$ in time $O(\hat W)$.  
 
\subsubsection{Parameters for the \routeandcut Problem Instances}
Our algorithm will perform $4\lambda$ iterations. In every iteration, we  will apply Algorithm $\aset$ for the $r$-restricted \routeandcut problem to graph $H'$, with suitably chosen sets $A'$ and $B'$ of vertices of $H'$. We now define the parameters that will be used in each such instance of the $\routeandcut$ problem, and verify that this setting of the parameter indeed defines a valid instance of the problem.

We start by letting $N'=64N\log^3N$. 
It is easy to see that: 

\begin{equation}\label{eq: bound on log n' new}
\log N'\leq \log N+6\log\log N\leq \left(1+\frac{1}{64c_2r}\right)\log N,
\end{equation}

since $r\leq \ceil{\sqrt{\log N}}$ and $N$ is large enough. Additionally, $\sqrt{\log N'}\leq \sqrt{\log N+6\log\log N}\leq \sqrt{\log N}+\sqrt{6\log\log N}$. Therefore:

\begin{equation}
2^{\sqrt{\log N'}}\leq 2^{\sqrt{\log N}}\cdot \log^6 N\leq 2^{2\sqrt{\log N}}.\label{eq: bound on N' new}
\end{equation}

We will use these facts later.

We also use parameters  $\Delta'=k$, and ${ \eta'=2^{20}\cdot d'\cdot \log^6N}$. In all instances of the \routeandcut problem that we construct, we will ensure that $U_{\lambda}\subseteq A'$, while set $B'$ contains all vertices of $U_0$ and their copies, and additionally, $|V(H')\setminus B'|\leq 4\lambda k$ holds. We show that all such resulting instances are valid instances of the $r$-restricted \routeandcut problem in the following simple claim, whose proof is deferred to Section \ref{subsec: proof of valid routeandcut} of Appendix.

\begin{claim}\label{claim: routeandcut instance}
Let $A',B'$ be any pair of disjoint subsets of $V(H')$ with $U_{\lambda}\subseteq A'$, and with all vertices of $U_0$ and their copies lying in $B'$, such that $|V(H')\setminus B'|\leq 4\lambda k$. Then  $(H',A',B',\Delta', \eta',N')$ is a valid instance of the $r$-restricted \routeandcut problem.
\end{claim}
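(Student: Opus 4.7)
The plan is to verify that the tuple $(H', A', B', \Delta', \eta', N')$ meets every condition in Definition~\ref{def: rounte and cut} of the $r$-restricted \routeandcut problem, one bullet at a time. The fact that $H'$ is a well-structured graph supplied in adjacency-list form is already part of its construction in Section~\ref{subsubsection proc reconnect layer}, and $A',B'$ being disjoint is part of the hypothesis, so these items require no work.

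First I would bound $|V(H')|$ in order to check $N' \geq |V(H')|$. Since $H$ is the subdivided graph of $G$ we have $|V(H)|=W$, and since we attach at most $M-1\leq M=\frac{W\log^3 N}{|\beta|}$ new vertices to every center $x\in\beta$, the total number of new vertices is at most $|\beta|\cdot M = W\log^3 N$. Combined with $N\geq W/32$ this gives $|V(H')|\le 2W\log^3 N \le 64N\log^3 N = N'$. The cardinality conditions $|A'|,|B'|\ge \Delta'=k$ are then immediate: $|A'|\ge |U_\lambda|=k$ by hypothesis, while $|B'|\ge M|\beta|=W\log^3 N \ge W\ge |V(H)|\ge k$. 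For the congestion/flow parameters, $\Delta'=k\ge 1$ is trivial, and the case assumption $k\ge 2^{20}d'\log^6 N$ of the proof of \Cref{claim: proc reconnect} gives exactly $\eta'\le \Delta'$.

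The main (and only slightly nontrivial) point is the $r$-restrictedness inequality $\frac{(|V(H')|-|B'|)\cdot \eta'}{\Delta'}\le 2^{r\sqrt{\log N'}}$. Using the hypothesis $|V(H')|-|B'|\le 4\lambda k$ together with $\lambda\le 2\log N$, $\eta'=2^{20}d'\log^6 N$, and $\Delta'=k$, the left-hand side is at most $8\log N\cdot 2^{20}d'\log^6 N = 2^{23}d'\log^7 N$. Plugging in the standing bound $d'\le 2^{r\sqrt{\log N}}/\log^8 N$ yields $\frac{2^{23}}{\log N}\cdot 2^{r\sqrt{\log N}}$, which is at most $2^{r\sqrt{\log N}}$ for $N$ larger than a sufficiently large constant. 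Since $\sqrt{\log N'}\ge \sqrt{\log N}$ by Inequality~\eqref{eq: bound on N' new}, this is in turn $\le 2^{r\sqrt{\log N'}}$, completing the verification.

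I do not anticipate any real obstacle here: every condition reduces to a direct arithmetic check given the standing parameter bounds $N\geq W/32$, $|\beta|\geq W/(d'\log^2 N)$, $\rho\geq \log^2 N$, $d'\le 2^{r\sqrt{\log N}}/\log^8 N$, and the case assumption $k\ge 2^{20}d'\log^6 N$. The tightest step is the $r$-restrictedness check, and the slack of roughly $\log N$ that emerges there is exactly what justifies the specific choice $\eta'=2^{20}d'\log^6 N$ -- it is picked just large enough to later absorb the congestion contributions of $4\lambda$ iterations, while still small enough to keep the instance $r$-restricted.
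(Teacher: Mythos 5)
Your proposal is correct and follows essentially the same route as the paper's own proof: bound $|V(H')|\le 2W\log^3N\le N'$, get $\eta'\le\Delta'$ from the case assumption $k\ge 2^{20}d'\log^6N$, verify $|A'|,|B'|\ge k$, and reduce the $r$-restrictedness check to $4\lambda\,\eta'\le d'\log^8 N\cdot(\text{const})\le 2^{r\sqrt{\log N}}\le 2^{r\sqrt{\log N'}}$ via $d'\le 2^{r\sqrt{\log N}}/\log^8N$. The only trivial omissions are the explicit check that $1\le r\le\ceil{\sqrt{\log N'}}$ (immediate since $r\le\ceil{\sqrt{\log N}}$ and $N'\ge N$) and that $\sqrt{\log N'}\ge\sqrt{\log N}$ follows simply from $N'\ge N$ rather than from Inequality~\ref{eq: bound on N' new}.
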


We are now ready to describe our algorithm, that consists of at most $4\lambda$ iterations. The algorithm starts with a collection $\rset=\emptyset$ of paths, and in every iteration, at most $k$ paths are added to $\rset$. We now describe a single iteration.

\subsubsection{Description of an Iteration}
We describe the $j$th iteration, for $1\leq j\leq 4\lambda$.
We construct a set $B_j$ of vertices of $H'$ as follows. We add to $B_j$ all vertices of $\beta$ and their copies. Additionally, we add to $B_j$ all vertices of $U^{<i}\setminus U_0$, except for those that serve as endpoints of the paths that currently lie in $\rset$. Notice that set $V(H')\setminus B_j$ contains all vertices of $U_{\lambda}$, and, additionally, every vertex $v\in U^{<i}\setminus U_0$ that serves as an endpoint of some path that currently belongs to $\rset$. Since at most $k$ paths are added to $\rset$  in every iteration, and since the number of iterations is bounded by $4\lambda$, we get that $|V(H')\setminus B_j|\leq 4\lambda\cdot k$.
We also set $A_j=U_{\lambda}$, and we consider the resulting instance $(H',A_j,B_j,\Delta', \eta',N')$ of the \routeandcut problem. From \Cref{claim: routeandcut instance}, this is a valid $r$-restricted instance of the problem, and therefore we can apply Algorithm $\aset$ to this instance. 
Recall that Algorithm $\aset$ may return ``FAIL'', with probability at most $1/2$. In this case, we simply repeat Algorithm $\aset$, until either it does not return ``FAIL'', or until it returns ``FAIL'' in at least $\log^3N$ consecutive iterations. In the latter case, we terminate our algorithm and return ``FAIL'', and we say that iteration $j$ of the algorithm terminated with a `` FAIL''. Notice that the probability that iteration $j$ terminates with a ``FAIL'' is at most $2^{-\log^3N}$. We now assume that iteration $j$ did not terminate with a ``FAIL'', and we consider the output of the last execution of Algorithm $\aset$. Let $\rset'_j$ be the routing that the algorithm returned. We consider two cases, depending on whether $|\rset'_j|\geq \left(1-\frac{1}{64\lambda}\right )k$ or not.

\subsubsection*{Case 1: $|\rset'_j|\geq  \left(1-\frac{1}{64\lambda}\right )k$}

Recall that the paths in $\rset'_j$ cause congestion at most $4 \eta'\cdot \log N'$.
We can assume w.l.o.g. that, for every path $R\in \rset'_j$, if $x\in B_j$ is an inner vertex on path $R$, then $x$ also serves as an endpoint of another path in $\rset'_j$ (as otherwise we can truncate path $R$, so that it terminates at $x$). 

We say that a path $Q\in \rset'_j$ is \emph{long} if $Q$ contains at least $d'\cdot \log^{10}N$ special edges, and we say that it is \emph{short} otherwise. We use the next simple claim in order to show that $\rset'_j$ contains many short paths.

\begin{claim}\label{claim: few long paths}
	The number of long paths in $\rset'_j$ is at most $\frac{k}{64\lambda}$.
\end{claim}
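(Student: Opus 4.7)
The plan is a direct edge-congestion counting argument on the special edges of $H'$.

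First, I would invoke the edge-congestion guarantee of $4\eta'\log N'$ for the routing $\rset'_j$ produced by $\aset$: since each edge of $H'$ is used by at most $4\eta'\log N'$ paths of $\rset'_j$, summing over all special edges of $H'$ gives
\[ \sum_{P\in\rset'_j} |E(P) \cap E^{\spec}(H')| \;\le\; 4\eta'\log N' \cdot |E^{\spec}(H')|. \]

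Second, I would bound $|E^{\spec}(H')|$ from above. Because $G$ is perfect well-structured, $W = 2\sum_{e \in E^{\spec}(G)} \ell(e)$, so the subdivided graph $H$ has exactly $|E^{\spec}(H)| = W/2$ special edges. The construction of $H'$ then adds at most one new special edge per vertex $x \in \beta \cap R''$ (the edge $(x, x_1)$), for an additional contribution of at most $|\beta \cap R''| \le |\beta|$. Since every vertex of $G$ has weight at least $1$, the subdivision identity gives $|V(H)| = W$ and hence $|\beta| \le W$, so $|E^{\spec}(H')| \le 3W/2$.

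Finally, I would combine these with the definition of a long path. Each long path contributes at least $d'\log^{10}N$ to the left-hand side of the displayed inequality, so the number $L$ of long paths satisfies $L \cdot d'\log^{10}N \le 6\eta'\log N' \cdot W$. Substituting $\eta' = 2^{20} d'\log^6 N$ cancels $d'$, and using $\log N' \le 2\log N$ (from $N' = 64N\log^3N$ and $N$ sufficiently large) reduces the bound to $L = O(W/\log^3 N)$. It then remains to conclude $L \le k/(64\lambda)$, which, with $\lambda = O(\log N)$, amounts to $k = \Omega(W/\log^2 N)$. I would obtain this by combining the Case~1 hypothesis $k \ge 2^{20} d'\log^6 N$ with the input bound $|\beta| \ge W/(d'\log^2 N)$. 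The main technical point (and the step where I would be most careful) is the bookkeeping in this final reconciliation: all of the logarithmic factors and the factors of $\eta'$, $\log N'$, $\lambda$ must line up so that the $W/\log^3 N$ upper bound on $L$ is indeed swallowed by $k/(64\lambda)$.
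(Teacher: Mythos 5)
Your argument breaks at the final reconciliation step, and the gap is not just bookkeeping. Bounding the congestion against \emph{all} special edges of $H'$ gives, after substituting $\eta'=2^{20}d'\log^6N$, a bound of the form $L=O(W/\log^3N)$ on the number $L$ of long paths. But the claim requires $L\le \frac{k}{64\lambda}$, and $k$ need not be anywhere near $\Omega(W/\log^2N)$. The two hypotheses you invoke only combine to give $k\ \ge\ 2^{20}d'\log^6N\ \ge\ 2^{20}\log^4N\cdot \frac{W}{|\beta|}$ (using $d'\ge \frac{W}{|\beta|\log^2N}$), and $|\beta|$ can be as large as $\Theta(W)$ — this is exactly the regime of the \maintainspeccluster application, where $\beta$ contains a constant fraction of the vertices of $H$ and $d'$ can be as small as a constant. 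In that regime $k$ may be only $\operatorname{poly}\log N$ while $W/\log^3N$ is polynomially large, so your upper bound on $L$ is far too weak to be swallowed by $k/(64\lambda)$; there is no valid way to "line up" the factors.

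The missing idea is to count only the special edges that actually appear on paths of $\rset'_j$, rather than all special edges of $H'$. Let $E'$ be that set. Since the paths originate at $A_j=U_{\lambda}$ and one may assume (by truncation) that any vertex of $B_j$ serving as an inner vertex of some path also serves as an endpoint of another path, every edge of $E'$ either has an endpoint in $V(H')\setminus B_j$ — and, each vertex being incident to at most one special edge, there are at most $|V(H')\setminus B_j|\le 4\lambda k$ such edges — or has an endpoint that is an endpoint of a path of $\rset'_j$, contributing at most $|\rset'_j|\le k$ further edges. Hence $|E'|\le 6\lambda k$, and the congestion bound $4\eta'\log N'$ yields $\sum_{Q\in\rset'_j}|E(Q)\cap E'|\le O(k\,\eta'\log^2N)$. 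Now both sides of the comparison carry a factor of $k$: each long path contributes at least $d'\log^{10}N$ special edges from $E'$, so more than $\frac{k}{64\lambda}$ long paths would force $\frac{k}{64\lambda}\cdot d'\log^{10}N\le O(k\,\eta'\log^2N)$, which contradicts $\eta'=2^{20}d'\log^6N$ for $N$ large — with no relation between $k$ and $W$ required. This localization of the edge count to the routed paths is precisely what your proposal lacks.
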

\begin{proof}
Let $E'$ be the set of all special edges of $H'$ that participate in the paths in $\rset'_j$. We claim that $|E'|\leq 6\lambda k$. Indeed, we can partition $E'$ into two subsets: set $E'_1$ containing all special edges $e$, such that at least one endpoint of $e$ lies in $V(H')\setminus B_j$, and set $E'_2$ containing all remaining edges. Since $|V(H')\setminus B_j|\leq 4\lambda k$, we get that $|E'_1|\leq 4\lambda k$. From the above discussion, for every edge $e\in E'_2$, at least one endpoint of $e$ must serve as an endpoint of a path in $\rset'_j$, and so $|E'_2|\leq |\rset'_j|\leq k$. Therefore, overall, $|E'|\leq |E'_1|+|E'_2|\leq 6\lambda k$. Since the paths in $\rset'_j$ cause congestion at most $4\eta'\log N'$, we get that:
$\sum_{Q\in \rset'_j}|E(Q)\cap E'|\leq |E'|\cdot 4 \eta'\cdot \log N'\leq 24\eta'\lambda k\log N'\leq 512 k \eta'\log^2 N$, since $\lambda=\ceil{\log(64N)}+1\leq 8\log N$ and $\log N'\leq 2\log N$.

Assume for contradiction that there are more than  $\frac{k}{64\lambda}$ long paths in $\rset'_j$. Then:

\[\sum_{Q\in \rset'_j}|E(Q)\cap E'|\geq \frac{k}{64\lambda}\cdot d'\cdot \log^{10}N\geq \frac{k}{2^{9}}\cdot d'\cdot \log^9N> 512k\eta'\log^2 N \]

since $\lambda=\ceil{\log(64N)}+1\leq 8\log N$, $\eta'=2^{20}\cdot d'\cdot \log^6N$, and $N$ is sufficiently large. This is a contradiction since we have shown that $\sum_{Q\in \rset'_j}|E(Q)\cap E'|\leq  512 k \eta'\log^2 N$.
\end{proof}

We let $\rset_j\subseteq \rset'_j$ be the set of all short paths, so  $|\rset_j|\geq \left(1-\frac{1}{32\lambda}\right )k$. Since the paths in $\rset_j$ are short, and since regular and special edges alterante on each path, the length 
of each such path is at most $4d'\log^{10}N$. The paths in $\rset_j$ cause congestion at most $4 \eta'\cdot \log N'\leq d'\log^8N$, since 
$\log N'\leq 2\log N$ from Inequality \ref{eq: bound on log n' new}, $\eta'=2^{20}\cdot d'\cdot \log^6N$, and we have assumed that $N$ is large enough. 
We add the paths of $\rset_j$ to set $\rset$, and continue to the next iteration.

\subsubsection*{Case 2: $|\rset'_j|<\left(1-\frac{1}{64\lambda}\right )k$}

We now consider the second case, where $|\rset'_j|<\left(1-\frac{1}{64\lambda}\right )k$. Let $A'_j\subseteq A_j$ and $B'_j\subseteq B_j$ be the sets of vertices that do not serve as endpoints of the paths in $\rset'_j$, so $|A'_j|,|B'_j|\geq \frac{k}{64\lambda}$. Consider the cut $(\tilde Y,\tilde Y')$ in graph $H'$ that the algorithm for the \routeandcut problem returns, and recall that 
$A'_j\subseteq \tilde Y$, $B'_j\subseteq \tilde Y'$, and additionally:

\[|E_{H'}(\tilde Y,\tilde Y')|\leq \frac{64\Delta'}{ \eta'\log^4(|V(H')|)}+\frac{256|\rset'_j|}{\eta'}\leq \frac{512k}{ \eta'}\leq \frac{k}{1024d'\log^6N},\]

since $\eta'=2^{20}\cdot d'\cdot \log^6N$.

Since $A'_j\subseteq \tilde Y$, we get that $|\tilde Y\cap U_{\lambda}|\geq \frac{k}{64\lambda}$. Next, we show that $|\tilde Y|\leq O(k\log N)$, 
$|A'_j|\geq |\tilde Y\cap \beta|\cdot \log N$,
and that $\tilde Y'$ must contain at least $|\beta|\cdot\left(1-\frac{2}{\log^3N}\right )$ vertices of $\beta$, in the following simple observation, whose proof is deferred to Section \ref{subsec: proof of lots of terminals on one side} of Appendix.

\begin{observation}\label{obs: lots of terminals on one side}
$|\tilde Y|\leq O(k\log N)$, 	$|\tilde Y'\cap \beta|\geq \max\set{|\beta|\cdot\left(1-\frac{2}{\log^3N}\right ),|\beta|-\frac{2k}{\log^3N}}$, and $|A'_j|\geq |\tilde Y\cap \beta|\cdot \log N$.
\end{observation}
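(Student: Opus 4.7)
The first bound is immediate from the set containments imposed by the \routeandcut output: since $B'_j\subseteq \tilde Y'$, we have $\tilde Y\subseteq V(H')\setminus B'_j$, and $V(H')\setminus B'_j=(V(H')\setminus B_j)\cup(B_j\setminus B'_j)$. The first term has size at most $4\lambda k$ by construction of $B_j$, while $|B_j\setminus B'_j|\leq|\rset'_j|\leq k$ since each path uses a distinct endpoint in $B_j$. Hence $|\tilde Y|\leq 4\lambda k+k\leq O(k\log N)$.

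The second bound is the technical heart of the claim and exploits the ``copies'' structure of $H'$. For each $x\in\beta$ let $S(x)$ denote $x$ together with its $M-1$ copies, so the sets $\{S(x)\}_{x\in\beta}$ are pairwise vertex-disjoint and all contained in $B_j$. Since $\tilde Y\cap B_j\subseteq B_j\setminus B'_j$ and every vertex of $B_j\setminus B'_j$ is a distinct path endpoint of $\rset'_j$, we have $\sum_{x\in\beta}|S(x)\cap \tilde Y|\leq|\rset'_j|\leq k$. Writing $Y_\beta=\tilde Y\cap\beta$, the plan is to partition $Y_\beta$ into three groups according to the local configuration around each $x$: a ``heavy'' group $Y_1$ of vertices with $|S(x)\cap \tilde Y|\geq M/2$; a ``normal sparse'' group $Y_2$ consisting of remaining vertices with either $x\in L''$, or $x\in R''$ with $x_1\in \tilde Y$; and an ``exceptional'' group $Y_3$ consisting of remaining $x\in R''$ with $x_1\in \tilde Y'$. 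A short case analysis of the directed edges inside $S(x)$ shows that every $x\in Y_2$ contributes strictly more than $M/2$ edges to $E_{H'}(\tilde Y,\tilde Y')$ from within $S(x)$, while every $x\in Y_3$ contributes the special edge $(x,x_1)\in E_{H'}(\tilde Y,\tilde Y')$, and all these contributions are disjoint across different $x$ by disjointness of the $S(x)$'s. Combining with the cut bound $|E_{H'}(\tilde Y,\tilde Y')|\leq k/(1024d'\log^6N)$ yields
\[|Y_1|\leq \tfrac{2k}{M},\quad |Y_2|\leq \tfrac{2|E_{H'}(\tilde Y,\tilde Y')|}{M},\quad |Y_3|\leq|E_{H'}(\tilde Y,\tilde Y')|.\]
Since $M=W\log^3N/|\beta|\geq \log^3N$ (using $|\beta|\leq W$) and $k\leq|V(H)|=W$, the dominant term $2k/M$ is already at most $\min(2k/\log^3N,\,2|\beta|/\log^3N)$, while the other two terms are of strictly smaller order, so both required upper bounds on $|Y_\beta|$ follow for sufficiently large $N$.

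The third bound follows by combining part~(2) with the Case~2 lower bound $|A'_j|\geq k/(64\lambda)\geq k/(512\log N)$ (using $\lambda\leq 8\log N$). Part~(2) yields $|\tilde Y\cap\beta|\cdot\log N\leq 2k/\log^2 N\leq k/(512\log N)\leq |A'_j|$ whenever $N$ is sufficiently large.

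The main obstacle is the ``exceptional'' configuration of $Y_3$: when $x\in R''$ and $x_1\in \tilde Y'$, the copies no longer force many cut edges from $S(x)$, and only the single special edge $(x,x_1)$ is guaranteed to be cut. This is why the three-way split is needed, and why the bound $|Y_3|\leq|E_{H'}(\tilde Y,\tilde Y')|$ must be extracted from the distinctness of these special edges; fortunately it is tight enough because the cut itself is so sparse.
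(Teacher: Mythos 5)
Your first bound and your derivation of the third bound from the second are correct (and essentially the paper's). The gap is in the concluding step of the second bound. Each of your three estimates $|Y_1|\le 2k/M$, $|Y_2|\le 2|E_{H'}(\tilde Y,\tilde Y')|/M$, $|Y_3|\le |E_{H'}(\tilde Y,\tilde Y')|$ is valid, but the inference ``the dominant term $2k/M$ is already at most $\min(2k/\log^3N,\,2|\beta|/\log^3N)$, and the other terms are of strictly smaller order, so both bounds follow'' does not establish the inequality with the constant $2$ that the statement asserts. The term $2k/M=\frac{2k|\beta|}{W\log^3N}$ can essentially saturate the whole budget $\frac{2k}{\log^3N}$: take $\beta=V(H)\setminus U_{\lambda}$, so $|\beta|=W-k$, with $k$ at its smallest admissible value $\Theta(d'\log^6N)$ and $W$ much larger (a legal input to \reconnect). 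Then the slack $\frac{2k}{\log^3N}-\frac{2k|\beta|}{W\log^3N}=\frac{2k^2}{W\log^3N}$ is far smaller than your bound $|Y_3|\le\frac{k}{1024d'\log^6N}$ whenever $k<\frac{W}{2048d'\log^3N}$, so your upper bound on $|Y_1|+|Y_2|+|Y_3|$ exceeds $\frac{2k}{\log^3N}$; a symmetric problem occurs for the $\frac{2|\beta|}{\log^3N}$ target when $k$ is close to $W$ and $|\beta|$ is near its minimum $\frac{W}{d'\log^2N}$ with $d'$ large. ``Lower-order than the target'' only yields the bound with constant $2+o(1)$, not $2$, and the constant $2$ is what the observation states.

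The factor you lose comes from the heavy threshold $M/2$. The paper's partition takes, as the analogue of your $Y_1$, only those $x\in\tilde Y\cap\beta$ \emph{all} of whose $M$ copies lie in $\tilde Y$ (hence all $M$ are distinct path endpoints), giving a bound of $k/M\le\min\bigl(\frac{k}{\log^3N},\frac{|\beta|}{\log^3N}\bigr)$ -- only half the budget -- and lumps every other vertex of $\tilde Y\cap\beta$ into one class bounded by the cut size, which fits in the remaining half: $|E_{H'}(\tilde Y,\tilde Y')|\le\frac{k}{1024d'\log^6N}\le\frac{k}{\log^3N}$, and, using the hypothesis $|\beta|\ge\frac{W}{d'\log^2N}\ge\frac{k}{d'\log^2N}$ (which your accounting never invokes), also $\le\frac{|\beta|}{\log^4N}\le\frac{|\beta|}{\log^3N}$. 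With the ``all copies'' criterion your finer $Y_2$/$Y_3$ case analysis is unnecessary; if you keep the $M/2$ threshold you can only prove the statement with a larger constant, which would then have to be propagated to the places where this observation is used.
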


If Case 2 happens, then we transform the cut $(\tilde Y,\tilde Y')$ in $H'$ into either a weakly well-structured cut $(A',B')$ in $H$ with $|A'|\geq \frac{511|V(H)|}{512}$,  $|A'\cap \beta|<\frac{2|\beta|}{\log^3N}$, and $|E_H(A',B')|\leq  \frac{W}{1024d'\cdot \log^6N}$; or into a strongly well-structured $\frac{1}{4d'}$-sparse cut $(A,B)$ in $G$ with $w(A),w(B)\geq \frac{k}{2^{14}\log N}$, and $w(A)\geq 1.8|\beta\cap A|$. We do so in three steps. In the first step, we obtain a weakly well-structured cut $(X_1,X'_1)$ in $H$. If this cut has the  required properties,
namely that $|X_1|\geq \frac{511|V(H)|}{512}$,  $|X_1\cap \beta|<\frac{2|\beta|}{\log^3N}$, and $|E_H(X_1,X_1')|\leq  \frac{W}{1024d'\cdot \log^6N}$,
 then we return this cut and terminate the algorithm. Otherwise, in the second step, we compute a weakly well-structured cut $(X_2,X'_2)$ in $G$. Finally, in the third step, we compute the required strongly well-structured cut $(A,B)$ in $G$.

\paragraph{Step 1: a Weakly Well-Structured Cut in $H$.}

We start by defining a cut $(X_1,X'_1)$ in $H$ with  $X_1=\tilde Y\cap V(H)$ and $X'_1=\tilde Y'\cap V(H)$. Clearly, $|E_H(X_1,X'_1)|\leq |E_{H'}(\tilde Y,\tilde Y')|\leq \frac{k}{1024d'\log^6N}$. Moreover, $|X_1|\geq |A'_j| \geq \frac{k}{64\lambda}$. 
Furthermore, from \Cref{obs: lots of terminals on one side}, we get that $|X_1|\leq |\tilde Y|\leq O(k\log N)$; $|X_1|\ge |A'_j| 
\geq |\tilde Y\cap \beta|\cdot \log N=|X_1\cap \beta|\cdot \log N$; and $|X_1'\cap \beta|\geq |\tilde Y'\cap \beta|\geq \max\set{|\beta|\cdot\left(1-\frac{2}{\log^3N}\right ),|\beta|-\frac{2k}{\log^3N}}$.

Next, we turn the cut $(X_1,X_1')$ into a weakly well-structured one. In order to do so,  we consider every vertex $v\in X_1$ one by one. When vertex $v\in X_1$ is considered, we check whether there is a regular edge $(v,u)$ that is incident to $v$ and belongs to $E_H(X_1,X_1')$. If so, then $v\in L'$ must hold. We then we move $v$ from $X_1$ to $X'_1$. Notice that this move may introduce at most one new edge to the cut $E_H(X_1,X_1')$ -- the special edge that is incident to $v$, while at least one edge that is incident to $v$ is removed from $E_H(X_1,X_1')$. 
Therefore, throughout this transformation, $|E_H(X_1,X_1')|$ cannot grow, and $|X_1|$ decreases by at most $|E_H(X_1,X_1')|\leq \frac{k}{d'\cdot \log^6N}$. Since, at the beginning of Step 1, $|X_1|\geq \frac{k}{64\lambda}\geq \frac{k}{256\log N}$ held, at the end of this step, $|X_1|\geq \frac{k}{512\log N}$ holds, and 
moreover $|X_1|\geq \frac{|X_1\cap \beta|\cdot \log N}{2}$. Since we did not remove any vertices from $X'_1$, it is still the case that 
$|X_1'\cap \beta|\geq \max\set{|\beta|\cdot\left(1-\frac{2}{\log^3N}\right ),|\beta|-\frac{2k}{\log^3N}}$.
It is easy to verify that
every edge in $E_H(X_1,X_1')$ is a special edge, and so $(X_1,X_1')$ is a weakly well-structured in $H$.
Notice that the time required for this step is bounded by $O(\vol_H(X_1))\leq O(|X_1|\cdot W)\leq O(k\cdot W\cdot \log N)$.

We now consider two cases. The first case happens if $|X_1'|<\frac{k}{512}$. Then
we are guaranteed that: 

\[|E_H(X_1,X_1')|\leq  \frac{k}{1024d'\cdot \log^6N}\leq \frac{|V(H)|}{1024d'\cdot \log^6N}.\] %\leq \frac{W}{d'\cdot \log^6N}.\]

%\leq \frac{|T|}{32\log^6N}.\]

Additionally, we get that $|X_1'|\leq \frac{k}{512}\leq \frac{|V(H)|}{512}$, and
$|X_1'\cap \beta|\geq |\tilde Y'\cap \beta| \geq |\beta|\cdot\left(1-\frac{2}{\log^3N}\right )$. Therefore, $|X_1|\geq \frac{511|V(H)|}{512}$, and $|X_1\cap \beta|<\frac{2|\beta|}{\log^3N}$. We return cut $(X_1,X_1')$ in $H$ and terminate the algorithm.

From now on, we assume that $|X_1'|\geq \frac{k}{512}$. We denote $\phi=\frac{1}{d'}$.
Since $|E_H(X_1,X_1')|\leq \frac{k}{d' \log^6N}$, while $|X_1|\geq \frac{k}{512\log N}$, the sparsity of the cut $(X_1,X_1')$ in $H$ is at most $\frac{1}{d'\log^4N}=\frac{\phi}{\log^4N}$. Moreover, $(X_1,X_1')$ is a weakly well-structured cut in $H$ and $|X_1\cap \beta|<\frac{2k}{\log^3N}$.

\paragraph{Step 2: a Weakly Well-Structured Cut in $G$.}
Next, we turn the cut $(X_1,X_1')$ in $H$ into a 
cut $(X_2,X_2')$ in $G$, in a natural way: we let $X_2=X_1\cap V(G)$, and $X_2'=X_2\cap V(G)$.
Recall that the length of every edge in $G$ is at most $d'\leq \frac{1}{\phi}$. From \Cref{obs: subdivided cut sparsity}, $(X_2,X_2')$ is a weakly well-structured cut in $G$ of sparsity at most $\frac{8\phi}{\log^4N}$, with $w(X_2)\geq \frac{|X_1|}{8}\geq \frac{k}{2^{12}\log N}$ and $w(X_2')\geq \frac{|X_1'|}8\geq \frac{k}{2^{12}\log N}$. The time required to compute the cut $(X_2,X_2')$ in $G$ given the cut $(X_1,X_1')$ in $H$ is bounded by $O(|V(H)|)\leq O( W)$. Clearly, $|X_2\cap \beta|\leq |X_1\cap \beta|<\frac{2k}{\log^3N}$.

\paragraph{Step 3: a Strongly Well-Structured Cut in $G$.}
Recall that for every special edge $e\in E(G)$, $\ell(e)\leq d'\leq \frac{n}{8}$ (from the definition of the input to Procedure \reconnect), while $\sum_{e\in E(G)}\ell(e)\geq  \frac{n}{2}$. 
 We can now apply the algorithm from \Cref{obs: weakly to strongly well str} to cut $(X_2,X_2')$ compute a strongly well-structured cut $(A,B)$ in $G$ with $w(A)\geq \frac{w(X_2)}{4}\geq \frac{k}{2^{14}\log N}$ and $w(B)\geq \frac{w(X_2')}{4}\geq \frac{k}{2^{14}\log N}$, whose sparsity is at most $\frac{16\phi}{\log^4N}\leq \frac{1}{4d'}$. 
 
 We now show that $w(A)\geq 1.8|\beta\cap A|$. 
 Recall that the algorithm from \Cref{obs: weakly to strongly well str} starts with the cut $(A,B)=(X_2,X_2')$, and considers every special edge $e=(u,v)\in E_G(B,A)$ in turn. For each such edge, it either moves $u$ from $B$ to $A$, or it moves $v$ from $A$ to $B$.
 We partition the set $A\cap \beta$ of vertices into three subsets: set $\beta_1$ contains all vertices $v\in A\cap \beta$ that lied in $X_2$; recall that $|\beta_1|\leq |X_2\cap \beta|\leq \frac{2k}{\log^3N}$. 
 We now define the remaining two sets $\beta_2$ and $\beta_3$. 
 Each vertex $v\in (\beta\cap A)\setminus \beta_1$ was added to set $A$ by the algorithm from 
  \Cref{obs: weakly to strongly well str}. Therefore, for each such vertex $v$, there is a special edge $(v,u)\in E_G(X_2',X_2)$. If $u\in \beta$ holds (in which case $u\in \beta_1$ holds), we add $v$ to $\beta_2$, and otherwise we add it to $\beta_3$. It is easy to verify that $|\beta_2|\leq |\beta_1|\leq \frac{2k}{\log^3N}$. Consider now a vertex $v\in \beta_3$, and the corresponding special edge $e=(v,u)$. Since $w(v),w(u)\geq 1$, it is easy to see that $|\beta_3|\leq w(A)/2$. Overall, we get that:
  
  \[|\beta\cap A|=|\beta_1|+|\beta_2|+|\beta_3|\leq \frac{4k}{\log^3N}+\frac{w(A)}2\leq  0.51w(A). \]
  
  Therefore, $w(A)\geq 1.8|\beta\cap A|$, as required.

 The running time of the algorithm from \Cref{obs: weakly to strongly well str}  is bounded by $O(\vol_G(X_2))\leq O(\vol_H(X_1))\leq O(|X_1|\cdot |V(H)|)\leq O(k\cdot W\cdot \log N)$.
 %Recall that the algorithm from \Cref{obs: weakly to strongly well str} starts with the cut $(A,B)=(X_2,x_2')$, and then considers every special edge  
We return the cut $(A,B)$ and terminate the algorithm.

Overall, the total time we have spent on iteration $j$ when Case 2 happens, excluding the time required to execute Algorithm $\aset$, is bounded by $O(k\cdot  W\cdot \log N)$. 
This concludes the description and the analysis of a single iteration.

\subsubsection{Constructing a New \LCDS.}

 If any iteration terminated with a ``FAIL'', then the algorithm terminates with a ``FAIL''. Since the number of iterations is bounded by $4\lambda\leq 32\log N$, and the probability that a single iteration terminates with a ``FAIL'' is bounded by $2^{-\log^3 N}$, we get that the overall probability that our algorithm implementing procedure \reconnect terminates with a ``FAIL'' is bounded by $\frac{1}{N^{10}}$. We assume from now on that no iteration terminated with a `` FAIL''.
If, in any iteration, Case 2 occurred, then the algorithm terminated with a cut.

We assume from now on that neither of the $4\lambda$ iterations terminated with a ``FAIL'', and that, in each iteration, Case 1 occurred. Therefore, for all $1\leq j\leq 4\lambda$, we have computed a collection
$\rset_j$ of paths, with  $|\rset_j|\geq \left(1-\frac{1}{32\lambda}\right )k$, such that the length of every path in $\rset_j$ is at most $4d'\log^{10}N$. 
Let $\rset=\bigcup_{j=1}^{4\lambda}\rset_j$. Since, for all $1\leq j\leq 4\lambda$, the paths in $\rset_j$ cause congestion at most $d'\log^8N$, and $\lambda=\ceil{\log(64N)}+1\leq 4\log N$, we get that the paths in $\rset$ cause congestion at most $4d'\log^9N$.
For every vertex $v\in U_{\lambda}$, let $\rset(v)\subseteq \rset$ be the set of paths that originate at $v$. Clearly, $|\rset(v)|\leq 4\lambda$ holds.

We say that a vertex $v\in U_{\lambda}$ is \emph{good} if $|\rset(v)|\geq 2\lambda$, and we say that it is bad otherwise. We claim that at most $\frac{k}{4\lambda}$ vertices of $U_{\lambda}$ may be bad. Indeed, let $\Pi$ be the collection of all pairs $(v,j)$, where $v\in U_{\lambda}$ and $1\leq j\leq 4\lambda$, such that no path of $\rset_j$ originates at $v$. Since, for all $1\leq j\leq 4\lambda$, $|\rset_j|\geq \left(1-\frac{1}{32\lambda}\right )k$, we get that $|\Pi|\leq \frac{k}{32\lambda}\cdot 4\lambda=\frac{k}{8}$. However, if more than $\frac{k}{4\lambda}$ vertices of $U_{\lambda}$ are bad, then $|\Pi|>\frac{k}{4\lambda}\cdot 2\lambda\geq \frac{k}{2}$, a contradiction.
We let $U'\subseteq U_{\lambda}$ be the set of all good vertices, so $|U'|\geq \left(1-\frac{1}{4\lambda}\right )k$.

Let $\tilde \rset\subseteq \rset$ be a set of path that contains, for every vertex $v\in U'$, exactly $2\lambda$ paths that originate at $v$. From our construction, the paths in $\tilde \rset$ have length at most $4d'\log^{10}N$ each, and they cause congestion at most $4d'\log^9N$. Additionally, every vertex $x\in \beta$, the number of paths in $\rset$ that terminate at $x$ is bounded by: 

\[(4\lambda)\cdot M\leq \frac{4\lambda\cdot  W\log^3N}{|\beta|}\leq \frac{\lambda\rho}{4}, \]

since $M=\frac{W\cdot \log^3N}{|\beta|}$ and $\frac{W\cdot \log^3N}{|\beta|}\leq \frac{\rho}{16}$ from the problem definition.
Our algorithm also ensures that vertex $y\in U^{<i}\setminus \beta$ may serve as an endpoint of at most one such path. Consider now some vertex $x\in \beta$, and let $\dset^{<i}(x)$ be the set of all descendant vertices of $x$ that belong to layers $U_1,\ldots,U_{i-1}$, including vertex $x$ itself. From Property \ref{prop: descendants} of \LCDS, $|\dset^{<i}(x)|\leq i\cdot \rho$. 
Let $\tilde \rset^x\subseteq \tilde \rset$ denote the collection of all paths $Q\in \tilde \rset$, such that $Q$ terminates either at one of the vertices in $\dset^{<i}(x)$. Then $|\tilde \rset^x|\leq \frac{\lambda\rho}{4}+\lambda\cdot \rho\leq 1.5\lambda\cdot\rho$.

Notice that for every path $Q\in \tilde \rset$, if $v\in U^{<i}$ is the last endpoint of $Q$, then we can compute a vertex $x\in \beta$ with $v\in \dset^{<i}(x)$ in time $O(1)$, using the current \LCDS. We compute the collection $\beta'\subseteq \beta$ of vertices, where a vertex $x\in \beta'$ if and only if some path in $\tilde R$ terminates at a vertex of $\dset^{<i}(x)$. This set of vertices can be computed in time $O(|\tilde R|)\leq O(\lambda k)\leq O(k \log N)$.

Let $V'\subseteq U^{<i}$ be the set of vertices that serve as endpoints of the paths in $\tilde \rset$. From our construction, every vertex $v\in V'\setminus \beta$ serves as an endpoint of exactly one such path.
Next, we construct an auxiliary directed flow network $\tilde H$. We start by letting $V(\tilde H)=U'\cup V'\cup \beta'$. For every path $Q\in \tilde R$, if $v$ is the first endpoint of $Q$, and $v'$ is the last endpoint of $Q$ then we add a directed edge $e=(v,x)$ to the flow network, whose capacity is $1$. We say that edge $e$ \emph{corresponds to the path $Q$}.
For every vertex $u\in V'$, if $x\in \beta'$ is the ancestor of $v$, then we add an edge $(u,x)$ of capacity $1$ to the graph.
 Additionally, we add a source vertex $s$, and connect it to every vertex in $U'$ with an edge of capacity $1$, and a destination vertex $t$. Every vertex $x\in \beta'$ connects to $t$ with an edge of capacity $\rho$. This concludes the description of the directed flow network $\tilde H$. We claim that the value of the maximum $s$-$t$ flow in this network is $|U'|$. Indeed, since the total capacity of all edges incident to $s$ is $|U'|$, the value of the maximum $s$-$t$ flow cannot be greater than $|U'|$. Consider now the following fractional flow in $\tilde H$: we send one flow unit on every edge incident to $s$, and, for every path $Q\in \tilde \rset$, we send $\frac{1}{2\lambda}$ flow units on its corresponding edge $(v,u)\in E(\tilde H)$, and the same amount  of flow on edge $(u,x)$, where $x\in \beta'$ is the ancestor of $u$. Since, for every vertex $v\in U'$, set $\tilde \rset$ contains exactly $2\lambda$ paths originating at $v$, the total amount of flow leaving $v$ is exactly $1$. 
 Since every vertex $u\in V'$ is an endpoint of at most one path in $\tilde \rset$, if $x\in \beta'$ is the ancestor of $u$, then edge $(u,x)$ carries at most $1$ flow unit.
 
 Consider now some vertex $x\in \beta'$, and recall that $|\tilde \rset^x|\leq 1.5\lambda \cdot \rho$. Since, for every path $Q\in \tilde \rset^x$, the corresponding edge carries $\frac{1}{2\lambda}$ flow units, the total amount of flow entering $x$ is at most $\frac{1.5\lambda \cdot \rho}{2\lambda}\leq \rho$. We set the flow on the edge $(x,t)$ to be equal to the total amount of flow entering $x$. Overall, we obtain a valid $s$-$t$ flow in $\tilde H$, whose value is $|U'|$.

Next, we  employ the standard Ford-Fulkerson algorithm in order to compute an integral flow of value $|U'|$ in $\tilde H$. We start by setting the flow on every edge of $\tilde H$ to $0$, and then perform at most $|U'|\leq k$ iterations. In every iteration, we compute a residual flow network, along with an augmenting path, and then modify the current flow accordingly. Clearly, every iteration can be executed in time $O(|E(\tilde H)|)\leq O(k\log N)$, while the number of iterations is bounded by $k$. Altogether, the running time of the algorithm that computes the flow network $\tilde H$, and computes a maximum integral $s$-$t$ flow in it is bounded by $O(k^2\log N)\leq O(k\cdot W\log N)$.

We are now ready to construct a new \LCDS. The sets $U_0,\ldots, U_{i-1}$ of vertices, and the sets $\qset_1,\ldots,\qset_{i-1}$ of paths remain unchanged.
For all $i<i'<\lambda$, we set $U_{i'}=\emptyset$.
 We also set $U_i=U'$ and $U_{\lambda}=V(H)\setminus\left(\bigcup_{i'=0}^iU_{i'}\right )$. Since $|U'|\geq \left(1-\frac{1}{4\lambda}\right )k$, we get that $|U_{\lambda}|\leq \frac{k}{4\lambda}$.
For every vertex $v\in U'$, we consider the unique edge $e$ of $\tilde H$ that leaves the vertex $v$ and carries one flow unit, and we let $Q\in \tilde \rset$ be the path corresponding to $e$. We then set $K(v)=Q$, and we let $\qset_i=\set{K(v)\mid v\in U_i}$. Since, for every vertex $x\in \beta'$, the capacity of the edge $(x,t)$ in $\tilde H$ is $\rho$, this guarantees that the number of descendant-vertices of $x$ in $U_i$ is bounded by $\rho$, as required. 
For every vertex $u\in V'$, if $x\in \beta'$ is its ancestor, then the capacity of the edge $(u,x)$ is $1$, so $u$ may serve as an endpoint in at most $1$ path in $\qset_i$.
Additionally, our construction guarantees that the paths in $\qset_i$ have length at most $4d'\log^9N$ each, and they cause edge-congestion at most $4d'\log^9N$ in $H$, as required. Therefore, we obtain a valid \LCDS.

\subsubsection{Analysis of the Running Time}
Recall that the time required to construct the adjacency-list representation of graph $H'$ from the adjacency-list of $H$ is bounded by $O( W\log^3N)$. The algorithm consists of $4\lambda=O(\log N)$ iterations. In every iteration, we apply Algorithm $\aset$ for the $r$-restricted \routeandcut problem to instance $(H',A_j,B_j,\Delta', \eta',N')$ at most $O(\log^3N)$ times. 
Recall that $|V(H')|\leq O(W\cdot \log^3N)$, and $|V(H')\setminus B_j|\leq 4\lambda k\leq O(k\log N)$. Therefore,
the running time of each execution of Algorithm $\aset$ is bounded by:

\[\begin{split} 
&c_1\cdot |V(H')|\cdot (|V(H')|-|B_j|)\cdot 2^{c_2\sqrt{\log N'}}\cdot (\log N')^{16c_2(r-1)+8c_2}\\
&\quad\quad\quad\quad\quad\quad\quad\quad \leq O\left(c_1\cdot W\cdot k\cdot  2^{c_2\sqrt{\log N'}}\cdot (\log N')^{16c_2(r-1)+8c_2+4}\right )\\
&\quad\quad\quad\quad\quad\quad\quad\quad \leq O\left(c_1\cdot W\cdot k\cdot 2^{c_2\sqrt{\log N}}\cdot (\log N)^{16c_2(r-1)+8c_2+10}\cdot \left(1+\frac{1}{64c_2r}\right )^{16c_2(r-1)+8c_2+4}\right )\\
&\quad\quad\quad\quad\quad\quad\quad\quad \leq O\left(c_1\cdot W\cdot k\cdot 2^{c_2\sqrt{\log N}}\cdot (\log N)^{16c_2(r-1)+8c_2+10}\right )
\end{split}
\]

(We have used the fact that, from Inequality \ref{eq: bound on N' new},
$2^{\sqrt{\log N'}}\leq 2^{\sqrt{\log N}}\cdot \log^6 N$, and from Inequality \ref{eq: bound on log n' new}, $\log N'\leq \left(1+\frac{1}{64c_2r}\right )\cdot \log N$. 

The time spent on executing Algorithm $\aset$ in all $4\lambda=O(\log N)$ iterations is then bounded by:

\[  O\left(c_1\cdot W\cdot k\cdot 2^{c_2\sqrt{\log N}}\cdot (\log N)^{16c_2(r-1)+8c_2+14}\right ).\]

If the algorithm terminates with a cut, then we may need to invest additional $O(k\cdot  W\cdot \log N)$ time in order to convert the cut returned by Algorithm $\aset$ into a desired well structured cut in $H$ or in $G$. Lastly, if the algorithm terminates with a new \LCDS, then the additional time that is required in order to compute the flow network $\tilde H$, the maximum integral $s$-$t$ flow in it, and then compute the corresponding set $\qset_i$ of paths, is bounded by $O(k\cdot W\cdot \log N)$. Overall, we get that the running time of the algorithm is bounded by:

\[  O\left(c_1\cdot W\cdot k\cdot 2^{c_2\sqrt{\log N}}\cdot (\log N)^{16c_2(r-1)+8c_2+14}\right ).\]

\subsubsection{Special Case: $k<2^{20}\cdot d'\cdot \log^6N$.}

It now remains to consider a special case where $k<2^{20}\cdot d'\cdot \log^6N$.
In this case, we employ the following simple algorithm. We start with $U'=\emptyset$, $\rset=\emptyset$, and $T=V(H)\setminus U_{\lambda}$. 
In every iteration, we will add a single path to $\rset$, connecting some vertex in $U_{\lambda}$ to a vertex of $T$.
For every vertex $x\in \beta$, we maintain a counter $n_x$, that counts the number of paths in $\rset$ that terminate at the vertices of $\bigcup_{i'<i}\dset_{i'}(x)$. Once this number reaches $\rho$, we remove vertex $x$, and all vertices in $\bigcup_{i'<i}\dset_{i'}(x)$ from $T$. We also delete from $T$ any vertex that serves as an endpoint of a path in $\rset$, except for the vertices that lie in $\beta$. We initialize $n_x=0$ for all $x\in \beta$. We will ensure that $|\rset|\leq k<2^{20}\cdot d'\cdot \log^6N$. This ensures that the total number of vertices $x\in \beta$ for which $n_x=\rho$ holds at the end of the algorithm is bounded by $\frac{k}{\rho}$. From Requirement \ref{prop: descendants} of the \LCDS, for each such vertex $x\in \beta$, $|\dset(x)|\leq \lambda\cdot \rho$, and so the total number of vertices that are ever deleted from $T$ over the course of the algorithm is bounded by $\frac{k}{\rho}\cdot \lambda\cdot \rho+k\leq 2\lambda\cdot k\leq 8k\log N$. Therefore, throughout the algorithm, $|V(H)\setminus T|\leq k+8k\log N\leq 9k\log N\leq 2^{24}\cdot d'\cdot \log^7N$ holds.

We consider every vertex $v\in U_{\lambda}$ one by one. When vertex $v$ is considered, we either compute a path $K(v)$ connecting $v$ to a vertex of $T$, such that no inner vertices on $K(v)$ lie in $T$;  or we compute a set $S_v\subseteq V(H)\setminus T$ of vertices with no outgoing edges, that contains $v$. In the former case, the length of $K(v)$ is at guaranteed to be at most $|V(H)\setminus T|\leq 2^{14}\cdot d'\cdot \log^7N\leq 4d'\log^{10}N$. We add vertex $v$ to set $U'$, add path $K(v)$ to set $\rset$, and, if $u\in T$ is the other endpoint of path $K(v)$, and $x\in \beta$ is the vertex for which $u\in \dset(x)$, we increase the counter $n_x$, and, if $u\not\in \beta$, delete $u$ from $T$. If $n_x$ reaches $\rho$, then we delete from $T$ all vertices of $\bigcup_{i'<i}\dset_{i'}(x)$. We then continue to the next iteration. 
In the latter case, if we obtain a set $S_v\subseteq V(H)\setminus T$ of vertices with no outgoing edges with $v\in S_v$, we just continue to the next iteration. 

If, at the end of this procedure, $|U'|\geq \left(1-\frac{1}{4\lambda}\right )k$, then we obtain a new \LCDS by keeping the layers $U_0,\ldots,U_{i-1}$ unchanged, together with the corresponding collections $\qset_1,\ldots,\qset_{i-1}$ of paths, setting $U_i=U'$ and $\qset_i=\rset=\set{K(v)\mid v\in U'}$. 
Note that the paths in $\qset_i$ cause congestion at most $k\leq 2^{20}\cdot d'\cdot \log^6N\leq 4d'\log^9N$.
We also let $U_{\lambda}$ contain all vertices that originally lied in $U_{\lambda}$ and were not added to $U'$, and, for all $i<i'<\lambda$, we set $U_{i'}=\emptyset$.  
From our construction, for every vertex $x\in \beta$, $|\dset_i(x)|\leq \rho$ holds, as required, and every vertex $v\in U^{<i}$ serves as endpoint of at most one path in $\qset_i$.
It is immediate to verify that we obtain a valid \LCDS, with $|U_{\lambda}|\leq \frac{k}{4\lambda}$.

Assume now that, at the end of the algorithm, $|U'|<\left(1-\frac{1}{4\lambda}\right )k$ holds. 
Consider the cut $(\tilde Y,\tilde Y')$ in $H$, where $\tilde Y=\bigcup_{v\in U_{\lambda}\setminus U'}S_v$ and $\tilde Y'=V(H)\setminus \tilde Y$. Clearly, $E_H(\tilde Y,\tilde Y')=\emptyset$. Moreover, from our construction, $|\tilde Y|\geq |U_{\lambda}\setminus U'|\geq \frac{k}{4\lambda}\geq\frac{k}{512\log N}$. 
At the same time, $\tilde Y\subseteq V(H)\setminus T$, so $|\tilde Y|\leq |V(H)\setminus T|\leq  2^{24}\cdot d'\cdot \log^7N$.
Since the total number of vertices $x\in \beta$ with $n_x=\rho$ at the end of the algorithm is bounded by $\frac{k}{\rho}\leq \frac{W}{\rho}$, while, from the problem definition, $\frac{W\cdot \log^3N}{|\beta|}\leq \frac{\rho}{16}$, we get that 
$W\leq\frac{|\beta|\cdot \rho}{16\log^3N}$, and $|\tilde Y'\cap \beta|\geq |\beta|-\frac{W}{\rho}\geq |\beta|\left(1-\frac{2}{\log^3N}\right )$.

Since $E_H(\tilde Y,\tilde Y')=\emptyset$, cut $(\tilde Y,\tilde Y')$ is weakly well-structured in $H$, and it is $\frac{1}{d'\log^4N}$-sparse. If $|\tilde Y'|<\frac{k}{512}$, 
then we return cut $(\tilde Y,\tilde Y')$ in $H$, that is guaranteed to have all required properties. Otherwise, we transform cut $(\tilde Y,\tilde Y')$ into a a strongly well-structured $\frac{1}{4d'}$-sparse cut $(A,B)$ in $G$ with $w(A),w(B)\geq \frac{k}{2^{14}\log N}$ and $w(A)\geq 1.8|\beta\cap A|$, exactly as before, and output this cut as the outcome of the algorithm.

We now provide an algorithm to efficiently implement the above procedure. Throughout the algorithm, we denote $Z=V(H)\setminus T$. Each vertex $v\in Z$ is classified as either \emph{marked} or \emph{unmarked}; a vertex $v\in Z$ is marked only if we have correctly established that no path connects $v$ to a vertex of $T$. Notice that vertices may leave $T$, but they cannot join $T$. Therefore, if, at some time $\tau$ during the algorithm, no path conencts $v$ to vertices of $T$, then this remains so for the remainder of the algortihm. For every vertex $v\in Z$ that is unmarked, we maintain a heap $h(v)$ that contains all unmarked vertices $u\in Z$ with edge $(v,u)\in E(H)$, and also all vertices $u'\in T$ with edge $(v,u')\in E(H)$.

At the beginning of the algorithm, $Z=U_{\lambda}$, and all vertices in $Z$ are unmarked. Notice that the time that is required in order to construct the initial heaps $h(v)$ for vertices $v\in Z$ is bounded by $O(|U_{\lambda}|\cdot |V(H)|)\leq O(k\cdot W)$. Next, we perform iterations. In every iteration, we select an arbitrary vertex $v\in Z$, and perform a DFS, starting from $v$, in the graph that is obtained from $H$ by deleting the marked vertices, until a vertex of $T$ is reached. In other words, when the DFS search arrives at some vertex $z$, it only explores edges $(z,u)$ with $u\in h(z)$. Whenever the DFS search backtracks from a vertex $a$, we know that no path connects $a$ to a vertex of $T$, so we mark $a$. We then examine all edges $(b,a)\in E(H)$ with $b\in Z$, and delete $a$ from the corresponding heap $h(b)$. The time that the DFS spent on exploring vertex $a$, on marking it, and on deleting it from the heaps of vertices $b\in Z$ with $(b,a)\in E(H)$ is \emph{charged} to vertex $a$ itself. Excluding the time charged to the newly marked vertices, the DFS search from a single vertex $v$ takes time $O(|Z|)=O(k\log N)$. If, as the result of the search, we arrive at a vertex of $T$, then we obtain the desired path $K(v)$. Otherwise, vertex $v$ becomes marked as well, and we process it like all other marked vertices. Notice that the number of such DFS searches that our algorithm performs is bounded by $|U_{\lambda}|\leq k$, and so the total time spent on the DFS searches, excluding the time charged to the marked vertices, is bounded by $O(k^2\log N)\leq O(k\cdot W\cdot \log N)$.

Recall that, throughout the algorithm, $|Z|=|V(H)\setminus T|\leq O(k\log N)$ holds. Every vertex in $Z$ may be marked at most once. When a vertex $a\in Z$ is marked, we need to consider all vertices $b\in Z$ with $(b,a)\in E(H)$, and delete $a$ from the heap $h(b)$. Since $|Z|\leq O(k\log N)$, we spend $O(k\log N)$ to process each newly marked vertex, and $O(k^2\log^2N)\leq O(k\cdot W\log^2N)$ time on all marked vertices throughout the algorithm.

%Lastly, whenever a vertex $v\in T$ is deleted from $T$, we need to add it to the heaps $h(u)$ of all vertices $u\in Z$ with $(u,v)\in E(H)$. Since $|Z|\leq O(k\log^2N)$, and since the total number of vertices deleted from $T$ over the course of the algorithm is at most $O(k\log^2N)$, the algorithm spends $O(k^2\log^4N)\leq O(k\cdot W\cdot \log^4N)$ time to process all vertices that are ever deleted from $T$. 

Once every vertex of $U_{\lambda}$ is either added to $U'$ or is marked, the algorithm terminates. The set $\tilde Y$ of vertices can then be constructed by simply including all marked vertices of $Z$ in it. From the above discussion, the total running time of the algorithm for this special case is $O(k\cdot W\cdot \log^2N)$.
This completes the proof of \Cref{claim: proc reconnect}.

\subsection{\connecttocenters Problem}

Suppose we are given a perfect well-structured graph $G$ with a proper assignment $\ell(e)$ of lengths to its edges, together with a parameter $d'\geq 1$, and a subset $\beta\subseteq V(G)$ of vertices of $G$.
Assume further that $G$ undergoes an online sequence of updates of the following two kinds: (i) double the length of a special edge $e\in E(G)$; and (ii) delete a vertex $v\in \beta$ from the set $\beta$.
 Intuitively, the purpose of the \connecttocenters problem is to maintain the \LCDS in $G$  with $U_{\lambda}=\emptyset$, except that we slightly relax Requirement \ref{prop: routing paths}, and allow the lengths of the paths in $\bigcup_{i=1}^{\lambda-1}\qset_i$ to be bounded by $16d'\log^{10}N$, instead of $4d'\log^{10}N$. 
 The parameter $\rho$ corresponding to the data structure is $\rho=\frac{2^{12}\cdot W^0\cdot \log^3N}{|\beta|}$, where $W^0$ is the initial total weight of all vertices of $G$.
 The algorithm is allowed, at any time, to return a strongly well-structured $\frac{1}{4d'}$-sparse cut $(A,B)$ in $G$ with $w(A)\geq 1.8|\beta\cap A|$. Following that, if $w(A)\leq w(B)$, then the vertices of $A$, and the endpoints of the edges in $E_G(A,B)$ are deleted from $G$, while otherwise the vertices of $B$, and the endpoints of the edges in $E_G(A,B)$ are deleted from $G$. In either case, the algorithm must continue with the resulting graph. The algorithm is also allowed, at any time, to return a weakly well-structured cut $(A',B')$ in graph $H$ -- the subdivided graph corresponding to $G$ -- with $|A'|\geq \frac{511|V(H)|}{512}$,  $|A'\cap \beta|<\frac{2|\beta|}{\log^3N}$, and $|E_H(A',B')|\leq  \frac{|V(H)|}{1024 d'\cdot \log^6N}$, after which the algorithm terminates. It is also allowed, at any time, to terminate with a ``FAIL'', but we require that the probability that this ever happens is bounded by $1/N^7$. We model this algorithm as consisting of a number of phases, where every phase consists of a number of iterations. In every iteration, the algorithm must either (i) produce a valid \LCDS for the current graph $G$ with $U_{\lambda}=\emptyset$ (with the relaxed Requirement \ref{prop: routing paths}), after which the phase terminates; or (ii) produce  a strongly well-structured $\frac{1}{4d'}$-sparse cut $(A,B)$ in $G$, after which $G$ is updated as described above, and some vertices may be deleted from $\beta$; or (iii) produce  a weakly well-structured cut $(A',B')$ in graph $H$, with $|A'|\geq \frac{511|V(H)|}{512}$,  $|A'\cap \beta|<\frac{2|\beta|}{\log^3N}$, and $|E_H(A',B')|\leq  \frac{|V(H)|}{1024d'\cdot \log^6N}$, after which the algorithm terminates. At the end of each phase, the algorithm is given a collection of special edges of $G$ whose lengths are doubled, and a collection of vertices that are deleted from $\beta$, and then a new phase begins. The algorithm only needs to continue as long as $|\beta|$ remains sufficiently large (it most contain at least $1/4$ of the vertices from the original set $\beta$), and as long as the total increase in the lengths of the edges is not too large. We now define the total increase of edge lengths formally, followed by a formal defintion of the \connecttocenters problem.

 \begin{definition}[Total increase in edge lengths]
 	Let $G$ be a perfect well-structured graph with a proper assignment $\ell(e)$ of lengths to its edges, that undergoes an online sequence of updates, consisting of the following operations: (i) double the length of a special edge $e\in E(G)$; and (ii) delete a vertex from $V(G)$, together with its incident edges. Consider any time $\tau$ during the update sequence, and denote by $E^0$ the set of special edges in the initial graph $G$. For each edge $e\in E^0$, let $\ell^0(e)$ be the length of $e$ at the beginning of the algorithm, and let $\ell^{\tau}(e)$ be the length of $e$ at time $\tau$ if $e\in E(G)$ at time $\tau$; or  the length of $e$ just before it was deleted from $\tau$ otherwise. Then the total increase in edge lengths at time $\tau$ is: $\sum_{e\in E^0}\left(\ell^{\tau}(e)-\ell^0(e)\right )$.
 \end{definition}

We are now ready to define the \connecttocenters problem.

\begin{definition}[\connecttocenters problem]
The input to the \connecttocenters problem consists of a perfect well-structured graph $G=(L,R,E)$ with $|V(G)|=n$, and a proper assignment $\ell(e)$ of lengths to the edges $e\in E$, that are given in the adjacency list representation, together with a subdivided graph $H$ corresponding to $G$ in the adjacency-list representation. Let $W^0$ denote the weight of all vertices in this initial graph $G$. Additionally, we are given  a parameter $N\geq \max\set{n,W^0/8}$ that is greater than a large enough constant, and parameters $d'\geq 1$,  and $1\leq r\leq \ceil{\sqrt{\log N}}$, such that $d'\leq\min\set{\frac{2^{r\sqrt{\log N}}}{\log^8N},\frac{n}{8}}$. Lastly, we are given a set $\beta\subseteq V(H)$ of vertices, with $|\beta|\geq \frac{W^0}{256d'}$. We denote by $\beta^0$ the initial set $\beta$.

The algorithm consists of a number of phases, and every phase is partitioned into iterations. In every iteration, the algorithm is required to return one of the following:

\begin{properties}{O}
	\item either a valid \LCDS in the current graph $G$ with the parameter  $\rho=\frac{2^{12}\cdot W^0\cdot \log^3N}{|\beta^0|}$, in which Requirement \ref{prop: routing paths} is relaxed to allow the lengths of the paths in $\bigcup_{i=1}^{\lambda-1}\qset_i$ to be bounded by $16d'\log^{10}N$ and $U_{\lambda}=\emptyset$; \label{outcome: LCDS}
	
	\item or a strongly well-structured $\frac{1}{d'}$-sparse cut $(A,B)$ in $G$ with $w(A)\geq 1.8|\beta\cap A|$;\label{outcome: sparse cut}
	
	\item or  a weakly well-structured cut $(A',B')$ in graph $H$, with $|A'|\geq \frac{511|V(H)|}{512}$,  $|A'\cap \beta|<\frac{2|\beta|}{\log^3N}$, and $|E_H(A',B')|\leq  \frac{|V(H)|}{1024d'\cdot \log^6N}$. \label{outcome: finalizing cut}
\end{properties}

In the case of Outcome \ref{outcome: finalizing cut}, the algorithm terminates. In the case of Outcome \ref{outcome: sparse cut}, the endpoints of the edges in $E_G(A,B)$ are deleted from $G$. Additionally, if  $w(A)\leq w(B)$, then the vertices of $A$ are deleted from $G$, and otherwise the vertices of $B$ are deleted from $G$. All the vertices of $\beta$ that are deleted from $G$ are also deleted from $\beta$, along with some additional vertices that may be deleted from $\beta$ (given by the adversary). In case of Outcome \ref{outcome: LCDS}, the current phase terminates, and the algorithm is given a collection of special edges of $G$, whose lengths are doubled, together with a list of vertices that are deleted from $\beta$.

Once the total increase in the edge lengths becomes greater than $8n$, or at least $3/4$ of vertices in the initial set $\beta$ are deleted from $\beta$, the algorithm terminates. The algorithm may also return ``FAIL'' and terminate at any time, but we require that the probability that this ever happens is bounded by $\frac{1}{N^7}$.
\end{definition}

Next, we provide an algorithm for the \connecttocenters problem, that relies on the algorithm for the $r$-restricted \routeandcut problem, whose existence is assumed in \Cref{thm: from routeandcut to maintaincluster}. We then show that the algorithm for the $r$-restricted \maintainspecialcluster problem with the properties required in \Cref{thm: from routeandcut to maintaincluster} follows immediately from the algorithm for the \connecttocenters problem. This algorithm will also be used as one of the main ingredients in our algorithm for the \maintaincluster problem, that will complete the proof of \Cref{thm: from routeandcut to maintaincluster}.

\subsection{An Algorithm for the \connecttocenters Problem}
\label{subsec: connecttocenters-alg}

In this subsection, we provide an algorithm for the \connecttocenters problem, that is summarized in the following theorem.

\begin{theorem}\label{thm: connecttocenters}
	Suppose that, for some parameter $r\geq 1$, there exists a randomized algorithm for the $r$-restricted \routeandcut problem, that, given as an input an instance $(G,A,B,\Delta,\eta,N)$ of the problem with $|V(G)|=n$,
	has running time at most $c_1\cdot n\cdot (n-|B|)\cdot 2^{c_2\sqrt{\log N}}\cdot (\log N)^{16c_2(r-1)+8c_2}$.
	Then there is a randomized algorithm for the \connecttocenters problem, for the same value of parameter $r$, whose running time is at most:

\[	O\left(c_1\cdot W^0\cdot ( W^0-|\beta^0|+|\beta'|
	\cdot \rho +\Lambda)\cdot  2^{c_2\sqrt{\log N}}\cdot (\log N)^{16c_2(r-1)+8c_2+26}\right ),\]
	
	 where $\Lambda$ is the total increase in the edge lengths over the course of the algorithm, $\beta'$ is the set of all vertices deleted from $\beta$ over the course of the algorithm (including the vertices that are deleted from $H$ following updates to graph $G$), and $W^0$ is the total initial weight of the vertices in the input graph $G$. 
\end{theorem}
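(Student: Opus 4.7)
The plan is to implement an algorithm for the \connecttocenters problem that maintains the LCDS by invoking Procedure \reconnect (Claim \ref{claim: proc reconnect}) at different layers, in a classic multi-level amortized fashion. At initialization, we set $U_0 = \beta^0$, $U_\lambda = V(H) \setminus \beta^0$, and $U_{i'} = \emptyset$ for $1 \leq i' < \lambda$. We then repeatedly invoke Procedure \reconnect at a layer $i$ chosen so that the current $|U_\lambda| \leq 2^{\lambda-i}$: each call either returns a valid cut (that we pass up as Outcome \ref{outcome: sparse cut} or Outcome \ref{outcome: finalizing cut} and then continue or terminate accordingly), or shrinks $|U_\lambda|$ by a factor of $4\lambda$, allowing us to call \reconnect at a higher layer. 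After $O(\lambda/\log(4\lambda))$ such calls the initial LCDS with $U_\lambda = \emptyset$ is produced, or a valid cut is emitted.

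Between phases, updates (edge-length doublings and deletions from $\beta$) may invalidate paths in the existing $\qset_1,\ldots,\qset_{\lambda-1}$. We process these by identifying affected paths and cascading disconnections: whenever a vertex $u\in U_i$ is disconnected (either because $K(u)$ exceeds the relaxed length threshold $16 d'\log^{10}N$, or because its terminal endpoint is gone), $u$ is moved to $U_\lambda$; then any $v\in U^{>i}$ with $K(v)$ ending at $u$ is in turn moved to $U_\lambda$, and so on. For each vertex in $U_i$ we maintain the current length of $K(u)$ and pointers to its parent and children in the layering (as supported by the \LCDS). To re-establish the data structure we find the smallest index $i$ such that $|U_\lambda|$, together with all vertices currently residing in layers $i,i+1,\ldots,\lambda-1$ (which must be merged into $U_\lambda$ before invoking Procedure \reconnect), satisfies the input bound $2^{\lambda - i}$, and then call \reconnect at this layer. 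If a cut is returned we handle it as above (updating $G$, $\beta$, and our data structures when vertices are deleted); otherwise we continue with the new, smaller $|U_\lambda|$ and move to the next iteration / phase.

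The running time analysis is by amortized charging. Let $T$ denote the total number of ``vertex-to-$U_\lambda$ events'' over the entire algorithm (counted with multiplicity each time a vertex is moved in). We show three bounds. (i) The initial placement contributes at most $|V(H)\setminus\beta^0|\leq W^0-|\beta^0|$. (ii) Each deletion of a vertex $x$ from $\beta$ causes $x$ and, by Property \ref{prop: descendants}, at most $\rho$ descendants per layer to be marked disconnected; since $\lambda=O(\log N)$, this contributes at most $O(\lambda\rho)$ per $\beta$-deletion, so $O(|\beta'|\cdot\rho\cdot\log N)$ in total. (iii) For length increases, Property \ref{prop: routing paths} guarantees that paths have initial length $\leq 4d'\log^{10}N$ and are only broken when they reach $16 d'\log^{10}N$, i.e.\ after absorbing $\Omega(d'\log^{10}N)$ of extra length; since the congestion of paths is at most $4d'\log^9N$, the total extra length experienced by the entire collection of current paths per unit increase of $\Lambda$ is $O(d'\log^9N)$, and hence at most $O(\Lambda/\log N)$ path-breakages per layer, giving $O(\Lambda\log N)$ events in total, plus $O(\log N)$-factor cascading by Property \ref{prop: descendants one to one}. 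In particular $T = O\bigl((W^0-|\beta^0|+|\beta'|\rho+\Lambda)\cdot\poly\log N\bigr)$. Because a call to \reconnect at layer $i$ needs $|U_\lambda|\geq 2^{\lambda-i}/(8\lambda)$ fresh events to be triggered, the number of such calls is at most $O(T\cdot\lambda/2^{\lambda-i})$, and each has cost $O(c_1 W^0\cdot 2^{\lambda-i}\cdot 2^{c_2\sqrt{\log N}}(\log N)^{16c_2(r-1)+8c_2+14})$. Summing over $i$ yields total work $O(T\cdot W^0\cdot 2^{c_2\sqrt{\log N}}(\log N)^{16c_2(r-1)+8c_2+16})$, which, after substituting the bound on $T$, gives the stated running time. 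The failure probability is controlled by a union bound over the $O(\poly\log N\cdot W^0)$ calls to \reconnect, each failing with probability at most $1/N^{10}$, yielding the required $1/N^7$ bound.

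The main technical obstacle will be the careful tracking of cascading disconnections and the corresponding potential function argument: we must show that when a broken path at layer $i$ triggers further breakages at higher layers, the cascade is bounded (using Property \ref{prop: descendants one to one} for non-$\beta$ vertices and Property \ref{prop: descendants} for $\beta$ vertices), and that the induced updates to the \LCDS bookkeeping fit within the claimed polylogarithmic overhead. A second subtlety is the proper handling of sparse cuts returned by \reconnect: when \reconnect returns a strongly well-structured cut, we must convert any vertices deleted from $G$ into additional ``events'' accounted for by the edge-length or $\beta$-deletion budgets; when it returns a weakly well-structured cut in $H$, we return it as Outcome \ref{outcome: finalizing cut} and terminate. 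Ensuring that these cut-handling cases integrate cleanly with the amortized accounting is the part that requires the most care in the full write-up.
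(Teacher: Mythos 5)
Your high-level plan coincides with the paper's algorithm (initialize the \LCDS with $U_0=\beta^0$ and $U_\lambda=V(H)\setminus\beta^0$, repeatedly merge layers $i,\ldots,\lambda-1$ into $U_\lambda$ and invoke Procedure \reconnect, relay its cuts as Outcomes \ref{outcome: sparse cut} or \ref{outcome: finalizing cut}, and re-disconnect vertices after edge-length doublings and $\beta$-deletions). The gap is in the amortized accounting. Your charging argument counts only \emph{externally sourced} insertions into $U_\lambda$ (initial placement, $\beta$-deletions, length increases) and then claims that each layer-$i$ call is paid for by $\Omega(2^{\lambda-i}/\lambda)$ such fresh events. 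This does not follow: a layer-$i$ call can be triggered by mass that was pushed into layers $>i$ (or left stranded in $U_\lambda$) by \emph{other} \reconnect calls at layers $i'<i$ --- each such call merges the vertices of layers $i',\ldots,i$ into $U_\lambda$ and may leave up to $2^{\lambda-i'}/(4\lambda)$ of them unconnected --- so the triggering events for layer $i$ are themselves produced by lower-layer iterations, and the counting becomes recursive. The paper resolves exactly this with a layer-by-layer inductive potential argument (the potential $\Phi_i=\sum_{j>i}|U_j|$ and the inductive bound on the number of type-$i$ iterations in the set $M_6$), which is the heart of the analysis and is absent from your write-up; as stated, your bound $O(T\lambda/2^{\lambda-i})$ on the number of layer-$i$ calls is unjustified.

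A second, related inaccuracy: you propose to fold the disconnections caused by the sparse cuts returned by \reconnect into ``the edge-length or $\beta$-deletion budgets.'' They do not belong there. When a strongly well-structured cut $(A,B)$ is output, vertices whose paths $K(v)$ traverse deleted vertices of $H$ become disconnected even though neither their edge lengths changed nor their ancestors left $\beta$; the paper bounds these (Procedure \processcut) by $O(w(Z)\log^{10}N)$ per cut, using the $\frac1{d'}$-sparsity of the cut together with the congestion bound on $\qset$ and the maintained lists $S(e)$, and charges them to the total weight deleted from $G$, i.e.\ to $W^0-|\beta^0|+|\beta'|+\Lambda$ (the sets $M_4,M_5$ in the paper). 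Similarly, the vertices added to $U_\lambda$ just before an iteration whose \reconnect call returns a cut must be charged against the $\Omega(k/\log N)$ weight that is subsequently deleted. Finally, note that Procedure \reconnect requires all edge lengths to be at most $d'$ (and $|\beta|\geq W/(d'\log^2 N)$, $W\log^3N/|\beta|\leq\rho/16$); your algorithm needs the preliminary stage that excises edges of length exceeding $d'$ via singleton sparse cuts, and a verification that these preconditions persist, before each invocation of \Cref{claim: proc reconnect} is legitimate.
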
 

The remainder of this subsection is dedicated to the proof of \Cref{thm: connecttocenters}. The algorithm is quite straightforward: in every iteration we simply apply the algorithm for the \reconnect problem from \Cref{claim: proc reconnect} to the current graph $G$, and the current \LCDS. The only subtlety is that the algorithm requires that the length of every edge in $G$ is at least $d'$. If this is not the case, and $e=(u,v)$ is a special edge of length greater than $d'$, then we can define a strongly well-structured cut $(A,B)$ in $G$ with $A=\set{u}$ and $B=V(G)\setminus \set{u}$, whose sparsity is at least $1/d'$ (since $w(u),w(v)\geq d'$), and return this cut. We also need to verify that, whenever we apply the algorithm from \Cref{claim: proc reconnect} to the current graph $G$, we indeed obtain a valid instance of the \reconnect problem. The main difficulty is in the analysis of the running time of this algorithm, that we discuss below.

\subsubsection{The Data Structures}

Recall that we are given as input a perfect well-structured graph $G=(L,R,E)$ with $|V(G)|=n$, and a proper assignment $\ell(e)$ of lengths to the edges $e\in E$, that are given in the adjacency list representation, together with a subdivided graph $H=(L',R',E')$ corresponding to $G$ in the adjacency-list representation. 
We denote by $W^0$ the weight of all vertices in this initial graph $G$. Recall that we are also given  a parameter $N\geq \max\set{n,W^0/8}$ that is greater than a large enough constant, and parameters $d'\geq 1$,  and $1\leq r\leq \ceil{\sqrt{\log N}}$, such that $d'\leq\min\set{\frac{2^{r\sqrt{\log N}}}{\log^8N},\frac{n}{8}}$.
We also denote by $\beta^0$ the initial set $\beta\subseteq V(H)$ of vertices that is given as part of the problem input; recall that $|\beta^0|\geq \frac{W^0}{256d'}$.
Throughout, we use the parameter $\rho=\frac{2^{12}\cdot W^0\cdot \log^3N}{|\beta^0|}$.

Recall that graph $H$ is obtained from $G$ by replacing every special edge $e=(u,v)$ with a new path, that we denote by $P^*(e)$, and call \emph{the path representing $e$ in $H$}. The number of special edges on $P^*(e)$ is exactly $\ell(e)$. Therefore, $|V(H)|=\sum_{v\in V(G)}w(v)$, and initially, $|V(H)|=W^0$. %For every special edge $e'$ on path $P^*(e)$, we say that $e$ is the \emph{parent-edge} of $e'$, and that $e'$ is the \emph{child-edge} of $e$.

Assume that at some time during the algorithm's execution, the length $\ell(e)$ of a special edge $e\in E(G)$ is doubled. Then for every special edge $e'=(x,y)$ on path $P^*(e)$, we replace the edge by a path $(x,x',y',y)$, where $x',y'$ are new vertices, edges $(x,x')$ and $(y',y)$ are special, and edge $(x,y')$ is regular; we call this process the \emph{splitting} of edge $e'$. Notice that, since $(x,y)$ is a special edge of $H$, $x\in R'$ and $y\in L'$ must hold. We add $x'$ to $L'$ and $y'$ to $R'$. It is easy to verify that $H$ remains the subdivided graph of $G$ after this update. Since the total increase in the lengths of the edges of $G$ over the course of the algorithm is bounded by $\Lambda$, the total time required to maintain graph $H$ in the adjacency-list representation, once the representation of the initial graph $H$ is given, is bounded by $O(\Lambda)$.

Our algorithm maintains a valid \LCDS for the current graph $G$, with the parameter $\rho$ defined above, and with Requirement \ref{prop: routing paths} relaxed to allow the lengths of the paths in $\bigcup_{i=1}^{\lambda-1}\qset_i$ to be bounded by $16d'\log^{10}N$. However, whenever a new path is added to set $\bigcup_{i=1}^{\lambda-1}\qset_i$, we require that its initial length is bounded by $4d'\log^{10}N$.

At the beginning of the algorithm, we set $U_0=\beta$, $U_1=U_2=\cdots=U_{\lambda-1}=\emptyset$, and $U_{\lambda}=V(H)\setminus \beta$. For all $1\leq i< \lambda$, we also set $\qset_i=\emptyset$. It is easy to verity that this is a valid \LCDS. Throughout, we denote $\qset=\bigcup_{i=1}^{\lambda-1}\qset_i$.
We also maintain, for every vertex $x\in \beta$ and layer $1\leq i<\lambda$, the set $\dset_i(x)$ of the descendants of $x$ at layer $i$, and, for every vertex $v\in \bigcup_{i=1}^{\lambda-1}\dset_i(x)$, a pointer from $v$ to $x$. At the beginning of the algorithm, for all 
$x\in \beta$ and $1\leq i<\lambda$, we set $\dset_i(x)=\emptyset$. Additionally, for every layer $1\leq i<\lambda$, for every vertex $v\in U^{<\lambda}$, if $v$ serves as an endpoint of a path $K(u)\in \qset_i$, then we maintain a pointer from $v$ to $u$.

For every edge $e\in E(H)$, we will maintain a list $S(e)$ of all vertices 
$v\in \bigcup_{i=1}^{\lambda-1}U_i$ with $e\in K(v)$. The task of maintaining these lists is quite straightforward. We will ignore it when we describe the remainder of the algorithm below, but we will show how to extend the algorithm in order to maintain these lists efficiently later.

For every vertex $v\in \bigcup_{i=1}^{\lambda-1}U_i$, we will maintain the length $\hat \ell(v)$ of the path $K(v)\in \qset$. 
When the path $K(v)$ is first added to $\qset$, we initialize its length $\hat \ell(v)$. Then whenever a special edge on path $K(v)$ undergoes splitting, we increase $\hat \ell(v)$ by $1$. 
The time that is required in order to maintain the values $\hat \ell(v)$ for vertices $v\in \bigcup_{i=1}^{\lambda-1}U_i$ is clearly subsumed by the time required to compute and to maintain the collection $\qset$ of paths. In our description of the remainder of the algorithm below, we will ignore the task of maintaining these values.

Before we provide the algorithm, we need one more definition, and a simple bound. We denote by $G^0$ the initial graph $G$. For every vertex $v\in V(G^0)$, we denote by $w^0(v)$ the weight of $v$ at the beginning of the algorithm. If $v$ lies in $G$ at the end of the algorithm, then we let $w^*(v)$ be the weight of $v$ at the end of the algorithm, and otherwise we let $w^*(v)$ be the weight of $v$ just before it is deleted from $G$. We also denote $W^*=\sum_{v\in V(G^0)}w^*(v)$. Since the total increase in the lengths of edges over the course of the algorithm is at most $8n$, we get that $W^*\leq W^0+16n$. Since $N\geq \max\set{n,W^0/8}$, we get that $W^*\leq 32N$.

\subsubsection{The Algorithm}

We now describe the algorithm for executing a single phase.
We denote by $M$ the multiset of all vertices that ever lied in $U_{\lambda}$ over the course of the algorithm. If a vertex is added to and removed from $U_{\lambda}$  multiple times during the algorithm, we add multiple copies of that vertex to $M$. We will bound the running time of the algorithm in terms of $|M|$, and then later we will bound $|M|$. Whenever our algorithm computes a strongly well-structured $\frac{1}{d'}$-sparse cut $(A,B)$ in $G$ with $w(A)\geq 1.8|\beta\cap A|$, we call Procedure \processcut that
we describe below. Following this procedure, the vertices that serve as endpoints of the edges in $E_G(A,B)$, along with the vertices of $A$ (if $w(A)\leq w(B)$) or the vertices of $B$ (otherwise) are deleted from $G$.
We partition the execution of a single phase into three stages.

\paragraph{First stage.}
The first stage continues as long as the current graph $G$ contains an edge of length greater than $d'$. Let $e=(u,v)$ be any such edge, and recall that it must be a special edge.  Consider the cut $(A,B)$ in $G$, where $A=\set{u}$, and $B=V(G)\setminus\set{u}$. It is immediate to verify that this cut is strongly well-structured. Moreover, since $w(u)=w(v)=\ell(e)$, while $|E_G(A,B)|=1$, the sparsity of the cut $(A,B)$ is at most $\frac{1}{d'}$, and $w(A)\geq 1.8|\beta\cap A|$ (as $|A|=1$). We then output the cut $(A,B)$, and call Procedure \processcut that is described below.
This completes the description of the algorithm for the first stage of a phase.  At the end of the first stage, graph $G$ does not contain any edge whose length is greater than $d'$.

\paragraph{Second stage.}
The second stage is executed as long as $U_{\lambda}\neq\emptyset$.
We let $1\leq i<\lambda$ be the smallest index for which $\sum_{j=i+1}^{\lambda}|U_j|\geq 2^{\lambda-i-1}$ holds.  Notice that, since $|U_{\lambda}|\ge 1$, if we choose $i=\lambda-1$, then $\sum_{j=i+1}^{\lambda}|U_j|\geq 2^{\lambda-i-1}=1$ holds, so such an index $i$ must exist. 
We need the following claim.

\begin{claim}\label{claim: small i}
 $\sum_{j=i}^{\lambda}|U_j|\leq  2^{\lambda-i}$.
\end{claim}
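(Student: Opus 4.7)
The plan is to split into the cases $i \ge 2$ and $i=1$, and use a different argument for each.

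For $i \ge 2$, the bound will follow immediately from the minimality in the definition of $i$. Since $i$ is chosen as the \emph{smallest} index in $\{1,\ldots,\lambda-1\}$ with $\sum_{j=i+1}^{\lambda}|U_j|\ge 2^{\lambda-i-1}$, the value $i'=i-1\ge 1$ fails this inequality, i.e.\ $\sum_{j=i}^{\lambda}|U_j| < 2^{\lambda-i}$. This is precisely the claim (in its strict form) for this case.

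For $i=1$, there is no smaller index to exploit, so I would bound $\sum_{j=1}^{\lambda}|U_j|\le |V(H)|$ directly and then argue that $|V(H)|\le 2^{\lambda-1}$. Since $G$ is a perfect well-structured graph (so every vertex is incident to exactly one special edge) and $H$ is its subdivision, each special edge $e$ of $G$ contributes $2\ell(e)$ vertices to $H$, giving $|V(H)| = \sum_{v\in V(G)}w(v) = W^*$, where $W^*$ is the current total vertex weight. Using the previously noted bound $W^*\le W^0+16n\le 8N+16N=24N$ (which comes from the problem constraints $N\ge \max\{n,W^0/8\}$ and the overall cap of $8n$ on the total edge-length increase established at the start of Section~\ref{subsec: connecttocenters-alg}), together with $\lambda=\lceil\log(64N)\rceil+1$ so that $2^{\lambda-1}\ge 64N$, I get $|V(H)|\le 24N < 64N \le 2^{\lambda-1}$, as needed.

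The main technical point to be careful about is the case $i=1$: here I cannot use the minimality of $i$, so the argument must rely on a global bound on the size of $V(H)$. This requires tracing through the invariants on $W^0,W^*,n,N,\lambda$ established earlier in the section, rather than any property of the layered structure itself. Combining the two cases yields $\sum_{j=i}^{\lambda}|U_j|\le 2^{\lambda-i}$, completing the proof.
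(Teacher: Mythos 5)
Your proof is correct and follows essentially the same route as the paper: for $i\ge 2$ the minimality of $i$ gives $\sum_{j=i}^{\lambda}|U_j|<2^{\lambda-i}$, and for $i=1$ one falls back on the global bound $\sum_{j=1}^{\lambda}|U_j|\le |V(H)|\le W^*<64N\le 2^{\lambda-1}$ (the paper phrases this as a single contradiction argument, but the content is identical).
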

\begin{proof} 
Assume otherwise, that is,  $\sum_{j=i}^{\lambda}|U_j|>  2^{\lambda-i}$. Since we did not choose index $i-1$ instead of $i$, it must be the case that $i=1$. Recall however that $\lambda=\ceil{\log(64N)}+1$, and so $2^{\lambda-1}\geq 64N$. Therefore, $\sum_{j=1}^{\lambda}|U_j|\geq 64N$.
 However, $|V(H)|\leq W^*< 64N$ must hold, a contradiction. 
\end{proof}

 For every layer $i\leq j\leq \lambda-1$ and vertex $v\in U_j$, we delete $v$ from $U_j$ and add it to $U_{\lambda}$. We also delete path $K(v)$ from set $\qset_j$, and delete $v$ from the set $\dset_j(x)$, where $x$ is its ancestor-vertex in $\beta$. 
 We denote by $k=|U_{\lambda}|$, so $2^{\lambda-i-1}\leq k \leq  2^{\lambda-i}$.
 Notice that the running time that we have spent so far in this iteration is $O(2^{\lambda-i})+O(\lambda\cdot \log n)\leq O(k)+O(\log^2N)$. 
 
 Next, we apply the algorithm for Procedure \reconnect from \Cref{claim: proc reconnect} to the current graphs $G$, $H$, and the index $i$. For brevity of notation, we denote this procedure by $\Pi$. We verify that we obtain a valid input to Procedure \reconnect. 
 
 First, the current weight of all vertices in $G$ is bounded by $W^*\leq 32N$, so $N\geq \max\set{n,W/32}$ holds. The inequalities $1\leq r\leq \ceil{\sqrt{\log N}}$ and $1\leq d'\leq\min\set{\frac{2^{r\sqrt{\log N}}}{\log^8N},\frac{n}{8}}$ hold from the definition of the \connecttocenters problem. The length of every edge in $G$ is at most $d'$, and, for all $i\leq j<\lambda$, $U_j=\emptyset$ holds, while $|U_{\lambda}|\leq 2^{\lambda-i}$ from \Cref{claim: small i}. 
 Recall that we have set $\rho=\frac{2^{12}\cdot W^0\cdot \log^3N}{|\beta^0|}$. 
 Since $|\beta^0|\leq W^0$, we get that $\rho\geq \log^2N$. Let $W=\sum_{v\in V(G)}w(v)$.
 From the problem definition, $W\leq W^*\leq 32W^0$, and $|\beta|\geq \frac{|\beta^0|}4$. Therefore:
 
 \[\frac{W\cdot \log^3N}{|\beta|}\leq  \frac{128W^0\cdot \log^3N}{|\beta^0|}  \leq \frac{\rho}{16},\]
 
 as required.

 It remains to verify that $|\beta|\geq \frac{W}{d'\log^2N}$ holds. From the definition of the \connecttocenters problem, $|\beta^0|\geq \frac{W^0}{256d'}$, and, throughout the algorithm, $|\beta|\geq\frac{|\beta^0|}{4}\geq \frac{W^0}{1024d'}$ holds. Since the total weight $W$ of the vertices of $G$ is bounded by $W^*\le 32W^0$, and $N$ is sufficiently large, we get that $|\beta|\geq \frac{W^0}{1024d'}\geq \frac{W}{2^{15}d'}\geq \frac{W}{d'\log^2N}$ holds. Therefore, we obtain a valid input for Procedure \reconnect.
 Recall that the running time of procedure $\Pi$ is bounded by:

\[\begin{split} 
&O\left(c_1\cdot W\cdot k\cdot 2^{c_2\sqrt{\log N}}\cdot (\log N)^{16c_2(r-1)+8c_2+14}\right )\\
&\quad\quad\quad\quad\quad\quad\quad\quad\leq O\left(c_1\cdot W^0\cdot k\cdot 2^{c_2\sqrt{\log N}}\cdot (\log N)^{16c_2(r-1)+8c_2+14}\right ),
\end{split}\]

 and the probability that it returns ``FAIL'' is at most $\frac{1}{N^{10}}$. If the procedure returns ``FAIL'', then we terminate our algorithm, and return ``FAIL'' as well. Otherwise, we consider three cases.
 
 In the first case, the procedure returns a weakly well-structured cut $(A',B')$ in $H$ with $|A'|\geq \frac{511|V(H)|}{512}$,  $|A'\cap \beta|<\frac{2|\beta|}{\log^3N}$, and $|E_H(A',B')|\leq  \frac{|V(H)|}{1024d'\cdot \log^6N}$. In this case, we return the cut $(A',B')$, and the algorithm terminates. We say that Procedure $\Pi$ produced a type-1 outcome in this case.
 
 In the second case, the procedure returns a valid \LCDS, in which, for all $i<i'<\lambda$, $U_{i'}=\emptyset$ and $|U_{\lambda}|\leq \frac{k}{4\lambda}$ holds; additionally, layers $U_1,\ldots,U_{i-1}$, together with the corresponding collections of paths $\qset_1,\ldots,\qset_{i-1}$ remain the same as in the input \LCDS. In this case we say that Procedure $\Pi$ produced a type-2 outcome. Notice that, at the beginning of the procedure, $|U_{\lambda}|=k$ held, while at the end of the procedure, $|U_{\lambda}|\leq \frac{k}{4\lambda}$. Therefore, at least $k\left(1-\frac{1}{4\lambda}\right )$ vertices were removed from $U_{\lambda}$ by this procedure. For each such vertex $v$, we \emph{charge} the vertex $O\left(c_1\cdot W^0\cdot 2^{c_2\sqrt{\log N}}\cdot (\log N)^{16c_2(r-1)+8c_2+14}\right )$ units for the execution of the procedure, so that the total charge to all vertices that were removed from $U_{\lambda}$ by the procedure is at least the total running time of the procedure.
 
 Lastly, we consider the third case, where Procedure $\Pi$ returned a strongly well-structured $\frac{1}{4d'}$-sparse cut $(A,B)$ in $G$ with $w(A),w(B)\geq \frac{k}{2^{14}\log N}$ and $w(A)\geq 1.8|\beta\cap A|$. In this case, we say that the outcome of Procedure $\Pi$ is of type $3$.
 We then output the cut $(A,B)$, and call Procedure \processcut that is described below.
 Recall that, following this cut, all vertices that serve as endpoints of the edges in $E_G(A,B)$, and also either all vertices of $A$ or all vertices of $B$, are deleted from $G$.  Let $Z$ denote the set of all vertices that are deleted from $G$. 
 Notice $w(Z)\geq \min\set{w(A),w(B)}\geq \frac{k}{2^{14}\log N}$.  For every vertex $v\in Z$, we charge the vertex $O\left(c_1\cdot W^0\cdot w(v)\cdot  2^{c_2\sqrt{\log N}}\cdot (\log N)^{16c_2(r-1)+8c_2+15}\right )$ units for the execution of the iteration, so that the total charge to all vertices of $Z$ is at least as high as the running time of the iteration.

 \paragraph{Third Stage.} The second stage terminates once we obtain a valid \LCDS (with relaxed Requirement \ref{prop: routing paths}), with $U_{\lambda}=\emptyset$. Recall that, following this, the algorithm is given a set $E'\subseteq E(G)$ of special edges whose lengths need to be doubled, and a set $\hat \beta\subseteq \beta$ of vertices, that need to be deleted from $\beta$. 
For every special edge $e\in E'$, we consider each of the special edges $e'\in P^*(e)$ one by one. We start by performing a splitting update of the edge $e'$, that introduces two new special edges $e_1,e_2$ in graph $H$, that replace edge $e'$, and two new vertices, that we add to set $U_{\lambda}$. 
We initialize $S(e_1')=S(e_2')=\emptyset$. For every vertex $v\in S(e)$, we update the corresponding path $K(v)$ by splitting the edge $e'$, and we update the length $\hat \ell(v)$ of the path by increasing it by (additive) $2$. If $\hat \ell(v)$ becomes greater than $16d'\log^{10}N$, then let $U_j$ be the layer to which $v$ belongs. We delete $v$ from $U_j$ and add it to $U_{\lambda}$. We also delete the path $K(v)$ from $\qset_j$, and we say that path $K(v)$ is \emph{destroyed}. We also delete $v$ from set $\dset_j(x)$, where $x\in \beta$ is the ancestor of $v$. Let $\Lambda'$ denote the total initial length of the edges in $E'$, so the lengths of all edges increased by additive $\Lambda'$ in this iteration, and the number of edge-splitting operations performed in graph $H$ is bounded by $\Lambda'$. Since, from Requirement  \ref{prop: routing paths}, for all $1\leq i<\lambda$, the paths in $\qset_i$ cause edge-congestion at most $4d'\log^9N$, and since $\lambda\leq O(\log N)$, we get that, for every edge $e'\in E(H)$, $|S(e)|\leq O(d'\log^{10}N)$. Therefore, the total running time of the algorithm for Stage 3 so far is $O(\Lambda'\cdot d'\cdot \log^{10}N)$. %Additionally, the total number of vertices added to $U_{\lambda}$ so far during Stage 3 is bounded by $2\Lambda'$.

Next, we process the vertices of $\hat \beta$, that are deleted from set $\beta$. 
We consider the vertices $x\in \hat \beta$ one by one. For each such vertex $x$, for every layer $1\leq j<\lambda$, and for every descendant $v\in \dset_i(x)$ of $x$ at layer $j$, we delete $v$ from $U_j$ and add it to $U_{\lambda}$. We also delete the path $K(v)$ from $\qset_j$ (we may sometimes say that the path is \emph{destroyed}). 
This completes the description of the algorithm for the third stage of a phase. %We denote by $n_3$ the total number of vertices that were added to $U_{\lambda}$ during this stage, and by $\hat n_3$ the total number of vertices that were deleted from $\beta$ during this stage. 
Recall that, from Property \ref{prop: descendants} of the \LCDS, for every vertex $x\in \beta$, for every layer $1\leq j\leq \lambda-1$, $|\dset_j(x)|\leq  \rho$.  Therefore, 
the total number of vertices that are added to $U_{\lambda}$  due to the deletion of their ancestors from $\beta$ is bounded by $O(|\hat \beta|\cdot \rho\cdot\lambda)\leq O( |\hat \beta|\cdot \rho\cdot \log N)$.
%$n_3\leq \hat n_3\cdot \lambda\cdot \rho+2\Lambda'\leq 8\hat n_3\cdot \rho\cdot \log N+2\Lambda'$.
The running time of Stage 3 is bounded by $O(\Lambda'\cdot d'\cdot \log^{10}N)+O(|\hat \beta|\cdot\rho\cdot \log N)\leq O(\Lambda'\cdot W^0\cdot \log^{10}N)+O(|\hat \beta|\cdot\rho\cdot \log N)$, since  $d'\leq n\leq W^0$ from the problem definition.
This completes the description of the algorithm for a single phase.

Recall that $\Lambda$ bounds the total increase in the lengths of edges of the course of the algorithm.
Let $\hat W$ denote the total weight of all vertices that were ever deleted from $G$ over the course of the algorithm, and let $\beta'$ be the set of all vertices that were deleted from $\beta$ over the course of the algorithm. Observe that the vertices of $\beta^0\setminus\beta'$ remain in $G$ throughout the algorithm, and so:

\begin{equation}\label{eq: deleted weight}
\hat W\leq W^*-|\beta^0\setminus \beta'|\leq W^0+2\Lambda-|\beta^0|+|\beta'|.
\end{equation}

From the above discussion, the total running time of algorithms for stages 1, 2 and 3 (excluding the running times of the executions of Procedure $\Pi$ and Procedure \processcut), over the course of the entire algorithm, is bounded by:

\[O(|\beta'|\cdot\rho\cdot \log N)+O(\Lambda\cdot d'\cdot \log^{10}N). \]  

Since, from the problem definition, $d'\leq n\leq W^0$, 
this is bounded by:

\[O(| \beta'|\cdot\rho\cdot \log N)+O(\Lambda\cdot W^0\cdot \log^{10}N). \]

Next, we bound the total running time of all executions of Procedure \reconnect over the course of the algorithm. Observe first that type-1 outcome of the procedure may only occur once over the course of the algorithm. The running time of Procedure \reconnect in case of type-1 outcome is bounded by:

\[O\left(c_1\cdot W^0\cdot k\cdot 2^{c_2\sqrt{\log N}}\cdot (\log N)^{16c_2(r-1)+8c_2+14}\right ),\]

where $k\leq |V(H)\setminus \beta|$.
Recall that $\beta'$ is the set of all vertices that were deleted from $\beta$ over the course of the algorithm, so $|\beta|\geq |\beta^0|-|\beta'|$. Recall also that $|V(H)|\leq W^*\leq W^0+2\Lambda$. Therefore, $k\leq |V(H)\setminus \beta|
\leq W^*-|\beta^0|+|\beta'|\leq W^0+2\Lambda-|\beta^0|+|\beta'|$. We then get that the total running time of Procedure \reconnect when type-1 outcome occurs is bounded by:

\[O\left(c_1\cdot W^0\cdot (W^0-|\beta^0|+\Lambda+|\beta'|)\cdot 2^{c_2\sqrt{\log N}}\cdot (\log N)^{16c_2(r-1)+8c_2+14}\right ).\]

Recall that we have charged the running times of the executions of Procedure $\Pi$ in which type-2 or type-3 outcomes occured to vertices that were removed from $U_{\lambda}$ by the procedure (in case of type-2 outcome), or to the weights of the vertices that were deleted from $G$ following the procedure. Recall also that $M$ is the multiset of all vertices that ever lied in $U_{\lambda}$. 
The total running time of the remaining executions of Procedure $\Pi$ is bounded by:

\[
\begin{split}
&O\left(c_1\cdot W^0\cdot (\hat W+|M|)\cdot  2^{c_2\sqrt{\log N}}\cdot (\log N)^{16c_2(r-1)+8c_2+15}\right )\\
&\quad\quad\quad\quad\quad\quad\quad\quad\quad \leq O\left(c_1\cdot W^0\cdot ( W^0+|\beta'|-|\beta^0|+\Lambda +|M|)\cdot  2^{c_2\sqrt{\log N}}\cdot (\log N)^{16c_2(r-1)+8c_2+15}\right ),
\end{split}\]

from Inequality \ref{eq: deleted weight}.
The total running time of the algorithm, excluding Procedure \processcut, is therefore bounded by:

\[ O\left(c_1\cdot W^0\cdot ( W^0-|\beta^0|+|\beta'|\cdot \rho +\Lambda+|M|)\cdot  2^{c_2\sqrt{\log N}}\cdot (\log N)^{16c_2(r-1)+8c_2+15}\right ). \]

Next, we describe Procedure \processcut and analyze its running time, and then bound $|M|$.

\subsubsection{Procedure \processcut}

Procedure \processcut is called whenever our algorithm produces a strongly well-structured $\frac{1}{d'}$-sparse cut $(A,B)$ in $G$ with $w(A)\geq 1.8|\beta\cap A|$. Let $Z$ be the set of vertices of $G$ that contains all endpoints of the edges in $E_G(A,B)$. Additionally, if $w(A)\leq w(B)$, we add the vertices of $A$ to $Z$, and we add the vertices of $B$ to $Z$ otherwise. Recall that, following this cut, the vertices of $Z$ are deleted from $G$.  Additionally, some vertices may be deleted from $\beta$. We denote the latter set of vertices by $\hat \beta$, and we assume that $Z\cap \beta\subseteq \hat \beta$. 

We start by updating the graph $H$, as follows: we delete the vertices of $Z$ from $H$, and, for every special edge $e\in E(G)$ that is incident to a vertex of $Z$, we delete all vertices and edges on the corresponding path $P^*(e)$ from $H$. 
For every vertex $v$ that is deleted from $H$, if $U_j$ is the layer of the \LCDS to which $v$ belongs, we delete $v$ from $U_j$ and from set $\dset_j(x)$, where $x\in \beta$ is the ancestor of $v$, and we also delete path $K(v)$ from $\qset_j$.
Note that the update of graph $H$ takes time $O(w(Z))$. We denote by $Z'$ the set of vertices that were deleted from $H$, so $|Z'|\leq O(w(Z))$.

Next, we consider the vertices $x\in \hat \beta$ one by one. For each such vertex $x$, for every layer $1\leq j<\lambda$, and for every descendant $v\in \dset_j(x)$ of $x$ at layer $j$, we delete $v$ from $U_j$ and add it to $U_{\lambda}$. We also delete the path $K(v)$ from $\qset_j$ (we may sometimes say that the path is \emph{destroyed}).  Additionally, if $x$ remains in $H$, we add it to $U_{\lambda}$ as well.
Let $n'$ be the number of vertices that were added to $U_{\lambda}$ so far. Recall that, from Property \ref{prop: descendants} of the \LCDS, for every vertex $x\in \hat \beta$, for every layer $1\leq j\leq \lambda-1$, $|\dset_j(x)|\leq  \rho$.  Therefore, $n'\leq |\hat \beta|\cdot \lambda\cdot \rho\leq 8|\hat \beta|\cdot \rho\cdot \log N$. The running time of the algorithm so far is bounded by $O(w(Z)+|\hat \beta|\cdot \rho\cdot \log N)$.

Lastly, it is possible that, for some vertices $v\in \bigcup_{j=1}^{\lambda-1}U_j$, some vertex of $K(v)$ was deleted from $H$ by the procedure, while vertex $v$ itself was not deleted. Note that, if the ancestor $x\in \beta$ of $v$ was
deleted from $\beta$, then we have already removed $v$ from its corresponding layer.
Let $V'\subseteq \bigcup_{j=1}^{\lambda-1}U_j$ denote the set of all vertices $v\in \bigcup_{j=1}^{\lambda-1}U_j$, such that $v\not\in Z'$, the ancestor $x\in \beta$ of $v$ is not in $\hat \beta$, and some vertex on path $K(v)$ is in $Z'$.
We further partition set $V'$ into two subsets: set $V'_1$ contains all vertices $v\in V'$, such that the last vertex $u$ on path $K(v)$ lies in $Z'$; while set $V'_2$ contains all remaining vertices.

From Requirement \ref{prop: descendants one to one} of \LCDS, $|V'_1|\leq \lambda\cdot Z'\leq O(w(Z)\cdot \log N)$. Since the \LCDS data structure maintains, for every vertex $v\in V'$ a pointer from the last endpoint $u$ of $K(v)$ to vertex $v$ itself, set $V'_1$ can be computed in time $O(|V'_1|)\leq O(w(Z)\cdot \log N)$.

In order to compute the set $V'_2$ of vertices, consider the collection $E_G(A,B)$ of edges. Recall that, for every edge $e\in E_G(A,B)$, we have defined a path $P^*(e)$ that represents edge $e$ in graph $H'$. Let $E'\subseteq E(H)$ be the set of edges that contains, for each edge $e\in E_G(A,B)$, the first and the last edge on path $P^*(e)$. We claim that, for every vertex $v\in V'_2$, 
path $K(v)$ must contain an edge of $E'$.

\begin{observation}\label{obs: problematic vertex contains edges}
	For every vertex $v\in V'_2$, 
	path $K(v)$ must contain an edge of $E'$.
\end{observation}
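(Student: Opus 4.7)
My plan is to show, for any $v' \in V'_2$, that the path $K(v')$ must traverse a subdivision path $P^*(e)$ of some cut edge $e \in E_G(A,B)$, and in particular contains either the first or the last edge of $P^*(e)$, which is precisely an edge of $E'$. I will focus on an exit of $K(v')$ from $Z'$ (an index $j$ with $w_j \in Z'$ and $w_{j+1} \notin Z'$, which must exist since $v'$ and the last vertex $u$ lie outside $Z'$ while some inner vertex lies in $Z'$), and I will sketch the case $w(A) \le w(B)$, where $Z = A \cup J_B$ with $J_B$ the set of $B$-side endpoints of $E_G(A,B)$; the complementary case $w(A) > w(B)$ is symmetric, using an entry into $Z'$ instead.

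The first step is a structural lemma: every special edge $e \in E(G)$ incident to $Z$ has both endpoints in $Z$, so $P^*(e)$ lies entirely within $Z'$. Because $G$ is perfect well-structured, every vertex has a unique special edge. For any $b \in J_B$, this unique special edge is the cut edge in $E_G(A,B)$ itself, whose other endpoint lies in $A \subseteq Z$. For any $a \in A$, the unique special edge either stays in $A$ or is a cut edge in $E_G(A,B)$, and in both cases the other endpoint lies in $Z$. No other special edge can touch $Z$, since such an edge would supply a second special edge to some $Z$-vertex, contradicting uniqueness.

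Next, I claim the exit edge $(w_j, w_{j+1})$ must be an original regular edge of $G$. Every edge of $H$ is either an original regular edge of $G$ or an edge lying on some path $P^*(e)$. If $(w_j, w_{j+1})$ lies on $P^*(e)$, then either both endpoints are inner vertices of $P^*(e)$ (in which case they are simultaneously in $Z'$ or outside $Z'$, according to whether $e$ is incident to $Z$), or one endpoint is an endpoint of $e$ in $V(G)$ while the other is an inner vertex; in the latter case, the structural lemma forces the $V(G)$-endpoint to be in $Z$ precisely when $e$ is incident to $Z$, so both endpoints end up either both in $Z'$ or both outside. In no case can such an edge straddle the boundary of $Z'$. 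Hence $(w_j, w_{j+1})$ is a regular edge of $G$, and since regular edges go from $L$ to $R$, we conclude $w_j \in L \cap Z$ and $w_{j+1} \in R \setminus Z$.

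Finally, I pin down $w_j$. Since cut edges are special, they point from $A \cap R$ to $B \cap L$, so $J_B \subseteq L$, giving $L \cap Z = (L \cap A) \cup J_B$. If $w_j \in L \cap A$, then the outgoing regular edge to $w_{j+1} \in R \setminus Z = R \cap B$ would be a regular edge from $A$ to $B$; but strongly well-structured cuts forbid any regular edge from $A$ to $B$, because $E_G(A,B)$ is entirely special. So $w_j \in J_B$, and $w_j$ is the $L$-endpoint $y_q$ of some cut edge $e = (x_1, y_q) \in E_G(A,B)$. Since $w_j \in L$ has no incoming regular edge in $H$ (regular edges only point into $R$), the only edge of $H$ that can precede $w_j$ on $K(v')$ is its unique incoming special edge, namely the last edge $(x_q, y_q)$ of $P^*(e)$. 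Thus this edge lies on $K(v')$ and belongs to $E'$, as claimed. The main obstacle is establishing the structural lemma, which is where perfect well-structuredness (one special edge per vertex) is essential: without it, a stray special edge from $J_B$ to $B \setminus Z$ could create a boundary edge of $Z'$ that is not associated with any $P^*(e)$ of a cut edge, breaking the entire argument.
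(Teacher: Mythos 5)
Your proof is correct and follows essentially the same route as the paper's: both locate the edge of $K(v)$ that crosses the boundary of $Z'$, use strong well-structuredness (no regular edge from $A$ to $B$) together with perfectness (each vertex has a unique special edge) to pin the crossing vertex down to an endpoint of a cut edge of $E_G(A,B)$, and then observe that this vertex's unique incoming (resp.\ outgoing) edge in $H$ is the last (resp.\ first) edge of the corresponding path $P^*(e)$ and hence lies in $E'$. Your explicit structural lemma (special edges incident to $Z$ have both endpoints in $Z$) is left implicit in the paper's shorter argument, but the underlying reasoning is identical.
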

\begin{proof}
	Assume first that $w(A)\leq w(B)$.
Consider the graph $H\setminus E'$, and let $(a,b)$ be any edge in this graph with $a\in Z'$ and $b\not\in Z'$. From the definition of the set $Z'$ of vertices and the set $E'$ of edges, it must be the case that $a$ is the last endpoint of some path $P^*(e)$ for an edge $e\in E_G(A,B)$; equivalently, $a\in V(G)$ is the tail of one of the edges in $E_G(A,B)$. But then $a\in L'$, and it only has a single incoming edge in $H$, which must belong to $E'$. 
Consider now some vertex $v\in V_2'$, and its corresponding path $K(v)$. Assume for contradiction that $K(v)$ does not contain an edge of $E'$. Since $K(v)$  connects two vertices of $V(H)\setminus Z'$, and contains at least one vertex of $Z'$ as an inner vertex, it must contain an edge $(a,b)$ with $a\in Z'$ and $b\not\in Z'$. But the only edge entering $a$ lies in $E'$, a contradiction.

%Therefore, if $v\in V'_2$, then path $K(v)$ may not contain $a$. We conclude that for every vertex $v\in V'_2$, path $K(v)$ may not contain any edge $(a,b)$ in $H\setminus E'$ with $a\in Z'$ and $b\not\in Z'$. But since each such path connects two vertices of $V(H)\setminus Z'$, and contains at least one vertex of $Z'$ as an inner vertex, we conclude that it must contain an edge of $E'$.

If $w(A)>w(B)$, then the proof is the same, except that we consider an edge $(a,b)$ with $a\not\in Z'$ and $b\in Z'$. In this case, $b$ must be a head of an edge of $E_G(A,B)$, and so it has no outgoing edges in $H\setminus E'$.
\end{proof}

In order to compute the set $V'_2$ of vertices, we consider the edges $e\in E_G(A,B)$ one by one. When edge $e$ is considered, we let $e'$ and $e''$ be the first and the last edges on the corresponding path $P^*(e)$. We then add to $V'_2$ all vertices that lie in set $S(e')\cup S(e'')$. Notice that, from Requirement \ref{prop: routing paths} of the \LCDS, the congestion caused by the set $\qset$ of paths is bounded by $O(d'\lambda\log^9N)\leq O(d'\log^{10}N)$, and so for every edge $e^*\in E(H)$, $|S(e^*)|\leq O(d'\log^{10}N)$. Altogether, we get that:
$|V_2'|\leq O\left(|E_G(A,B)|\cdot d'\cdot \log^{10}N\right )$. 
Since cut $(A,B)$ is $\frac 1 {d'}$-sparse, we get that $|E_G(A,B)|\leq \frac{\min\set{w(A),w(B)}}{d'}\leq \frac{w(Z)}{d'}$. Overall, we get that $|V_2'|\leq O(w(Z)\cdot \log^{10}N)$, and set $V'_2$ can be computed in time $O(w(Z)\cdot \log^{10}N)$. From our discussion, $|V'|\leq O(w(Z)\cdot \log^{10}N)$, and it can be computed in time $O(w(Z)\cdot \log^{10}N)$.

For every vertex $v\in V'$, let $1\leq j\le \lambda$ be the index of the layer to which $v$ belongs. If $j<\lambda$, then we delete $v$ from $U_j$ and add it to $U_{\lambda}$, and we also delete the path $K(v)$ from $\qset_j$. Additionally, if $x\in \beta$ is the ancestor of $v$, then we delete $v$ from set $\dset_j(x)$.

From our discussion, the total number of vertices that were added to $U_{\lambda}$ over the course of the procedure is bounded by $ O(w(Z)\cdot \log^{10}N)+O(|\hat \beta|\cdot \rho\cdot \log N)$, and the running time of the procedure is bounded by $ O(w(Z)\cdot \log^{10}N)+O(|\hat \beta|\cdot \rho\cdot \log N)$.

Recall that we have denoted by $\hat W$ the total weight of all vertices that were ever deleted from $G$ over the course of the algorithm, and by $\beta'$ the set of all vertices that remain in $\beta$ at the end of the algorithm. Recall also that, from Inequality \ref{eq: deleted weight}, 
$\hat W\leq  W^0+2\Lambda-|\beta^0|+|\beta'|$.
Therefore, the total number of vertices that were ever added to set $U_{\lambda}$ by Procedure \processcut is bounded by:

\[O(\hat W\cdot \log^{10}N)+O(|\beta'|\cdot \rho\cdot \log N)\leq 
O\left((W^0-|\beta^0|+|\beta'|\cdot \rho+\Lambda)\cdot \log^{10}N\right ),
 \]
 
 and the total running time of all calls to Procedure \processcut over the course of the algorithm is bounded by 

\[O(\hat W\cdot \log^{10}N)+O(|\beta'|\cdot \rho\cdot \log N)\leq 
O\left((W^0-|\beta^0|+|\beta'|\cdot \rho+\Lambda)\cdot \log^{10}N\right ).
\]

The total running time of the algorithm for the \connecttocenters problem is then bounded by:

\begin{equation}\label{eq: run time so far}
O\left(c_1\cdot W^0\cdot ( W^0-|\beta^0|+|\beta'|\cdot\rho +\Lambda+|M|)\cdot  2^{c_2\sqrt{\log N}}\cdot (\log N)^{16c_2(r-1)+8c_2+15}\right ). 
\end{equation}

\subsubsection{Bounding $|M|$ and the Running Time of the Algorithm}

Recall that $M$ is the multiset containing all vertices that were ever added to 
$U_{\lambda}$ over the course of the phase. We partition the set $M$ into six subsets, $M_1,\ldots,M_6$, depending on the process through which the vertices were added to $U_{\lambda}$, and we then bound the cardinality of each set separately.

\paragraph{Set $M_1$.}
Recall that at the beginning of the algorithm, every vertex of $V(H)\setminus \beta^0$ is added to $U_{\lambda}$. Set $M_1$ contains all these vertices, and, additionally, whenever a new vertex is inserted into graph $H$ due to the splitting of a special edge, this new vertex, which is then added to $U_{\lambda}$, is also added to $M_1$. 
Lastly, if Procedure \reconnect ever terminates with a type-1 outcome, then all vertices that were added to $U_{\lambda}$ in the same iteration, just prior to that execution of Procedure \reconnect, are also added to set $M_1$.
Since the vertices of $\beta^0\setminus \beta'$ always lie in $U_0$ and are never added to $U_{\lambda}$, it is easy to see that $|M_1|$ is asymptotically bounded by the total number of vertices ever present in $H$, excluding the vertices of $|\beta^0\setminus \beta'|$. Therefore, $|M_1|\leq O(W^*-|\beta^0\setminus \beta'|)\leq O( W^0+2\Lambda-|\beta^0|+|\beta'|)$.

\paragraph{Set $M_2$.}
The second set $M_2$ contains all vertices $v$ that were added to $U_{\lambda}$ when the length of the current path $K(v)\in \qset$ exceeded $16d'\log^{10}N$. 
Recall that, when path $K(v)$ was added to $\qset$, its length was bounded by $4d'\log^{10}N$. Therefore, between the time path $K(v)$ was added to $\qset$, and the time it was destroyed, its edges underwent at least $4d'\log^{10}N$ splitting updates.
Recall that the total increase in the lengths of all edges in $G$ is denoted by $\Lambda$, so the total number of edge splitting updates in graph $H$ is bounded by $\Lambda$. Since every edge of $H$ may belong to at most $4d'\lambda\log^9N\leq 32d'\log^{10}N$ paths in $\qset$, we get that the total number of edge splitting updates that the paths in $\qset$ undergo over the course of the algorithm is bounded by $O(\Lambda\cdot d'\log^{10}N)$. Since each destroyed path must undergo at least $4d'\log^{10}N$ edge-splitting updates, we get that $|M_2|\leq O(\Lambda)$.

\paragraph{Set $M_3$.}
The third set, $M_3$, contains all vertices $v$ that were added to $U_{\lambda}$ because the ancestor-vertex of $v$ was deleted from $\beta$ during Phase 3 of the algorithm. Note that, from Property \ref{prop: descendants} of the \newline \LCDS, every vertex $x\in \beta$ has at most $\lambda\cdot\rho$ descendants at any time. Therefore, $|M_3|\leq O(|\beta'|\cdot \rho\cdot \lambda)\leq O(|\beta'|\cdot \rho\cdot \log N)$.

\paragraph{Set $M_4$.}
The fourth set, $M_4$, contains all vertices that were added to $U_{\lambda}$ by Procedure $\processcut$. As we showed already, $|M_4|\leq 
O\left((W^0-|\beta^0|+|\beta'|\cdot \rho+\Lambda)\cdot \log^{10}N\right )$.

\paragraph{Set $M_5$.}
The fifth set, $M_5$, contains all vertices that were added to $U_{\lambda}$  during the iterations of Stage 2, just prior to the call to Procedure \reconnect, that resulted in a type-3 outocme of the procedure.
Recall that in each such iteration, the number of vertices that were added to $U_{\lambda}$ was bounded by $k$, while the total weight of the vertices that are subsequently deleted from $G$ is at least $\Omega\left(\frac{k}{\log N}\right )$. Therefore, $|M_5|\leq O(\hat W\cdot \log N)\leq O\left ((W^0+2\Lambda-|\beta^0|+|\beta'|)\cdot \log N\right )$ from Inequality \ref{eq: deleted weight},

So far we have defined five subsets $M_1,\ldots,M_5$ of the set $M$, and we have shown that:

\begin{equation}\label{eq: bound on Ms}
\sum_{j=1}^5|M_j|\leq O\left((W^0-|\beta^0|+|\beta'|\cdot \rho+\Lambda)\cdot \log^{10}N\right ).
\end{equation}

\paragraph{Set $M_6$.}
We let the sixth set, $M_6$, contain all remaining vertices of $M$. Each such vertex $v$ must have been added to $U_{\lambda}$  during an iteration of the second stage, just prior to a call to Procedure \reconnect, that resulted in a type-2 outcome.
Consider any such iteration of Phase 2, and let $1\leq i<\lambda$ be the layer that was chosen by the iteration. We then say that this iteration of Phase 2 is a \emph{type-$i$ iteration}. For brevity, in the remainder of the proof, we use ``type-$i$ iteration'' only to refer to type-$i$ iterations of Phase 2 in which Procedure \reconnect resulted in a type-2 outcome.
Recall that, at the beginning of such an iteration, $\sum_{j=i+1}^{\lambda}|U_j|\geq 2^{\lambda-i-1}$ holds, and the vertices of $\bigcup_{j=i}^{\lambda-1}U_j$ are then moved from their current layers to $U_{\lambda}$. Let $k=\sum_{j=i}^{\lambda}|U_j|$, so  $2^{\lambda-i-1}\leq k\leq 2^{\lambda-i}$. Then at the beginning of the iteration, at most $k$ vertices were added to $U_{\lambda}$, while at the end of the iteration, $|U_{\lambda}|\leq \frac{k}{4\lambda}$ held. Moreover, layers $U_1,\ldots,U_{i-1}$ remain unchanged over the course of the iteration.

From the description of our algorithm, for any pair $1\leq i',i''\leq \lambda$ of indices, a vertex may only move directly from layer $U_{i'}$ to layer $U_{i''}$ if either $i'=\lambda$ or $i''=\lambda$.
At the beginning of a type-$i$ iteration, $|U^{>i}|\geq 2^{\lambda-i-1}$ holds, while at the end of the iteraiton, $|U^{>i}|\leq \frac{k}{4\lambda}\leq \frac{2^{\lambda-i}}{4\lambda}$.
Therefore, between every pair of type-$i$ iterations, at least $2^{\lambda-i-1}-\frac{2^{\lambda-i}}{4\lambda}\geq 2^{\lambda-i-2}$ vertices must have moved from $\bigcup_{i'=1}^{i}U_{i'}$ to $U_{\lambda}$. 
For all $1\leq i<\lambda$, we denote by $M^i_6\subseteq M_6$ the set of vertices that were added to $M_6$ during type-$i$ iterations.
We start by bounding the number of type-$i$ iterations, for $1\leq i<\lambda$.

\begin{claim}
	There is a large enough constant $c^*$, such that for all $1\leq i<\lambda$, the total number of type-$i$ iterations is bounded by $\frac{c^*\cdot (W^0-|\beta^0|+|\beta'|\cdot \rho+\Lambda)\cdot \log^{10}N}{2^{\lambda-i}}$.
\end{claim}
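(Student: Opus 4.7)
The plan is to analyze the total net change in $|U^{>i}|$ over the entire execution, and to use this to derive a recursion relating $A_i$ to the $A_{i'}$ with $i'<i$, which is then solved carefully to avoid exponential blow-up. Observe first that $|U^{>i}|$ starts at $W^0-|\beta^0|$ (initially $U_\lambda = V(H)\setminus\beta^0$ while $U_1,\dots,U_{\lambda-1}$ are empty) and ends at a nonnegative value, so the total net change $\Delta_i$ satisfies $\Delta_i \geq -(W^0-|\beta^0|)\geq -C$, where $C := (W^0-|\beta^0|+|\beta'|\rho+\Lambda)\log^{10}N$.

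Next I would account for each event's contribution to $\Delta_i$. Every external insertion of a vertex into $U_\lambda$ (initial placement, edge splitting, Phase 3 path destruction, Phase 3 ancestor deletion, or \processcut) raises $|U^{>i}|$ by at most $1$, and summing over the bounds already established for $|M_1|,\dots,|M_5|$ gives total external positive contribution at most $O(C)$. A type-$i$ Phase 2 iteration drives $|U^{>i}|$ from $\geq 2^{\lambda-i-1}$ down to $\leq 2^{\lambda-i}/(4\lambda)$, since immediately after the type-2 \reconnect call the layers $U_{i+1},\dots,U_{\lambda-1}$ are empty and $|U_\lambda|\leq k/(4\lambda)\leq 2^{\lambda-i}/(4\lambda)$; hence it contributes at most $-2^{\lambda-i-2}$. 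A type-$i'$ iteration with $i'>i$ contributes zero, since every vertex movement it triggers lies entirely inside $U^{>i}$. Finally, a type-$i'$ iteration with $i'<i$ can only raise $|U^{>i}|$ by $|U^{>i}|_{\text{after}} = |U_\lambda|_{\text{after}}\leq k/(4\lambda)\leq 2^{\lambda-i'}/(4\lambda)$, precisely because \reconnect moves the bulk of the swept vertices back into $U_{i'}\subseteq U^{\leq i}$.

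Summing these contributions and using $\Delta_i\geq -C$ yields
\[
A_i\cdot 2^{\lambda-i-2}\;\leq\;O(C)\;+\;\sum_{i'<i} A_{i'}\cdot\frac{2^{\lambda-i'}}{4\lambda}.
\]
Setting $B_i := A_i\cdot 2^{\lambda-i}$ this becomes $B_i \leq O(C) + (1/\lambda)\sum_{i'<i} B_{i'}$. To solve it, let $S_i := \sum_{i'=1}^{i} B_{i'}$; then $S_i \leq (1+1/\lambda)S_{i-1} + O(C)$, which iterates to $S_i \leq O(C)\cdot\lambda\cdot(1+1/\lambda)^i \leq O(\lambda C)$ for $i\leq \lambda-1$ (using $(1+1/\lambda)^\lambda\leq e$). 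Plugging back, $B_i\leq O(C) + S_{i-1}/\lambda \leq O(C)$, giving $A_i \leq c^* C/2^{\lambda-i}$, as claimed.

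The main obstacle is the per-iteration net-change bound for type-$i'$ iterations with $i'<i$: one must track carefully that the only vertices which such an iteration \emph{leaves} in $U^{>i}$ are the residual vertices of $U_\lambda$ that \reconnect failed to absorb back into $U_{i'}$, and that this residual is bounded by $k/(4\lambda)$ by the defining property of a type-2 outcome. This is exactly what produces the crucial $1/\lambda$ factor in the recursion; without it, a naive bound (e.g., charging the full $2^{\lambda-i'}$ per iteration) would yield a recursion whose solution grows like $5^\lambda$, far too large. Hence isolating and justifying this $1/\lambda$ gain, together with verifying the per-event classification above (including the zero contribution from iterations with $i'>i$), is the delicate part of the proof; the remainder is a standard telescoping calculation.
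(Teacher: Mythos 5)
Your proof is correct and follows essentially the same argument as the paper: you use the same potential $\Phi_i=|U^{>i}|$, the same decrease of roughly $2^{\lambda-i}$ per counted type-$i$ iteration, the same bound of $2^{\lambda-i'}/(4\lambda)$ on the net increase caused by each type-$i'$ iteration with $i'<i$, and the same $O\bigl((W^0-|\beta^0|+|\beta'|\rho+\Lambda)\log^{10}N\bigr)$ bound via $|M_1|,\dots,|M_5|$ for all other insertions. The only (cosmetic) difference is that you resolve the resulting recurrence explicitly via partial sums $S_i\leq(1+1/\lambda)S_{i-1}+O(C)$, whereas the paper proves the stated bound directly by induction on $i$.
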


\begin{proof}
	The proof is by induction. The base case is when $i=1$. 
	%Recall that at the end of each type-$i$ iteration, $U_2=\cdots=U_{\lambda-1}=\emptyset$ hold, and $|U_{\lambda}|\leq \frac{2^{\lambda-1}}{4\lambda}$. 
	From the above discussion, between every pair of type-1 iterations, at least $2^{\lambda-3}$ new vertices must have been moved from $U_1$ to $U_{\lambda}$. All such vertices must belong to $M_1\cup\cdots\cup M_5$. Since $\sum_{j=1}^5|M_j|\leq O\left((W^0-|\beta^0|+|\beta'|\cdot \rho+\Lambda)\cdot \log^{10}N\right )$, and $c^*$ is a sufficiently large constant, we get that the total number of  type-$1$ iterations is bounded by: 

\[O\left (\frac{(W^0-|\beta^0|+|\beta'|\cdot \rho+\Lambda)\cdot \log^{10}N}{2^{\lambda}}\right )\leq \frac{c^*\cdot (W^0-|\beta^0|+|\beta'|\cdot \rho+\Lambda)\cdot \log^{10}N}{2^{\lambda-1}}.\]
	
	Consider an index $1< i<\lambda$, and assume that the claim holds for all indices $1\leq i'<i$. We define a potential function $\Phi_i=\sum_{j=i+1}^{\lambda}|U_{j}|$. Recall that, at the beginning of a type-$i$ iteration,  $\Phi_i=\sum_{j=i+1}^{\lambda}|U_j|\geq 2^{\lambda-i-1}$ holds, while at the end of the iteration, $|U_{\lambda}|\leq  \frac{2^{\lambda-i}}{4\lambda}$, and for all $i<i'<\lambda$, $U_{i'}=\emptyset$, so $\Phi_i\leq \frac{2^{\lambda-i}}{4\lambda}$. Therefore, in every type-$i$ iteration, $\Phi_i$ decreases by at least $2^{\lambda-i-1}\cdot \left(1-\frac{1}{2\lambda}\right )\geq 0.48\cdot 2^{\lambda-i}$. The potential $\Phi_i$ is increased whenever vertices are added to sets $M_1,\ldots,M_5$, and the total increase due to the addition of such vertices to $M_1,\ldots,M_5$ is bounded by $\sum_{j=1}^5|M_j|$. Additionally, for all $1\leq i'<i$, each type-$i'$ iteration may increase the potential $\Phi_i$ by at most $\frac{2^{\lambda-i'}}{4\lambda}$. Since, from the induction hypothesis, the number of   type-$i'$ iterations is bounded by  $\frac{c^*\cdot (W^0-|\beta^0|+|\beta'|\cdot \rho+\Lambda)\cdot \log^{10}N}{2^{\lambda-i'}}$, 
	the total increase in the potential $\Phi_i$ due to type-$i'$ iterations is bounded by: 
	
	\[ \frac{c^*\cdot (W^0-|\beta^0|+|\beta'|\cdot \rho+\Lambda)\cdot \log^{10}N}{2^{\lambda-i'}}\cdot \frac{2^{\lambda-i'}}{4\lambda}\leq   \frac{c^*\cdot (W^0-|\beta^0|+|\beta'|\cdot \rho+\Lambda)\cdot \log^{10}}{4\lambda}.\]

 Therefore, the total increase in the potential $\Phi_i$ over the course of the entire algorithm is bounded by:
	
	\[
	\begin{split}
	\sum_{j=1}^5|M_j|+&(i-1)\cdot \frac{c^*\cdot (W^0-|\beta^0|+|\beta'|\cdot \rho+\Lambda)\cdot \log^{10}N}{4\lambda}\\
	&\leq 0.26 c^*\cdot (W^0-|\beta^0|+|\beta'|\cdot \rho+\Lambda)\cdot \log^{10}N.
	\end{split}
	\]

	Since every successful type-$i$ iteration decreases the potential $\Phi_i$ by at least $0.48\cdot 2^{\lambda-i}$, we get that the total number of type-$i$ iterations is bounded by:
	
	\[\frac{0.26 c^*\cdot (W^0-|\beta^0|+|\beta'|\cdot \rho+\Lambda)\cdot \log^{10}N}{0.48\cdot 2^{\lambda-i}}\leq \frac{c^*\cdot (W^0-|\beta^0|+|\beta'|\cdot \rho+\Lambda)\cdot \log^{10}N}{2^{\lambda-i}}. \]
\end{proof}

Since, for all $1\leq i<\lambda$, every type-$i$ iteration adds at most $O(2^{\lambda-i})$ vertices to $U_{\lambda}$, we get that $|M_6^i|\leq O\left((W^0-|\beta^0|+|\beta'|\cdot \rho+\Lambda)\cdot \log^{10}N\right )$,  and altogether: 

\[|M_6|=\sum_{i=1}^{\lambda-1}|M_6^i|\leq O\left((W^0-|\beta^0|+|\beta'|\cdot \rho+\Lambda)\cdot \lambda\cdot \log^{10}N \right ).\]
 
Combining this with Inequality \ref{eq: bound on Ms}, and substituting $\lambda\leq O(\log N)$, we get that:

\[|M|\leq O\left((W^0-|\beta^0|+|\beta'|\cdot \rho+\Lambda)\cdot \log^{11}N\right ).\]

Lastly, substituting this bound on $|M|$ into the bound from Equation \ref{eq: run time so far}, we get that the total running time of the algorithm so far is bounded by:

\[
\begin{split}
&O\left(c_1\cdot W^0\cdot ( W^0-|\beta^0|+|\beta'|\cdot \rho +\Lambda+|M|)\cdot  2^{c_2\sqrt{\log N}}\cdot (\log N)^{16c_2(r-1)+8c_2+15}\right )\\
&\quad \leq O\left(c_1\cdot W^0\cdot ( W^0-|\beta^0|+|\beta'|
\cdot \rho +\Lambda)\cdot  2^{c_2\sqrt{\log N}}\cdot (\log N)^{16c_2(r-1)+8c_2+26}\right )
\end{split}
\]

\subsubsection{Maintaining the Additional Data Structures}

It remains to show an algorithm that maintains, for every edge $e\in E(H)$, the set $S(e)$ of vertices, whose corresponding path $K(v)$ contains $e$.  Additionally, for every vertex $x\in \beta$, we need to maintain, for all $1\leq i<\lambda$, the set $\dset_i(x)$ of the descendants of $x$ at layer $i$, and, for every vertex $v\in \bigcup_{i=1}^{\lambda-1}U_i$, a pointer to its unique ancestor in $\beta$. Lastly, for every  layer $1\leq i\leq \lambda-1$ and  vertex $v\in \bigcup_{j=1}^{i-1}U_j$, if $v$ serves as an endpoint of a path $K(u)\in \qset_i$, then we need to maintain a pointer from $v$ to $u$.
We now show how to extend our algorithm, in a straightforward manner, to maintain these additional data structures.

At the beginning of the algorithm, for every edge $e\in E(H)$, we set $S(e)=\emptyset$, and for every vertex $x\in \beta$ and index $1\leq i<\lambda$, we set $\dset_i(x)=\emptyset$. Whenever a new vertex $v$ is added to any layer $U_i$, for $1\leq i<\lambda$, we inspect its corresponding path $K(v)$, that is added to $\qset_i$. For every edge $e\in E(K(v))$, we add vertex $v$ to the list $S(e)$. Let $y$ denote the last endpoint of path $K(v)$, and let $x\in \beta$ be the unique ancestor of $y$. We then add $v$ to $\dset_i(x)$, and we add a pointer from $v$ to $x$, and another pointer from $y$ to $v$.

Whenever any vertex $u$ is deleted from a set $U_i$, for any $1\leq i<\lambda$, we again inspect the corresponding path $K(v)$, and, for every edge $e\in E(K(v))$, we delete $v$ from set $S(v)$. If $y$ is the other endpoint of path $K(v)$, then we delete the pointer from $y$ to $v$. Additionally, if $x\in\beta$ is the unique ancestor of $v$ at level $0$, then we delete $v$ from $\dset_i(x)$. Note that the time required to execute all these updates is asymptotically bounded by the time required to compute all paths $K(v)$ that are ever added to $\qset$. 

Therefore, the total update time of the algorithm remains bounded by:

\[ O\left(c_1\cdot W^0\cdot ( W^0-|\beta^0|+|\beta'|
\cdot \rho +\Lambda)\cdot  2^{c_2\sqrt{\log N}}\cdot (\log N)^{16c_2(r-1)+8c_2+26}\right ). \]

Recall that the algorithm from Procedure \reconnect may terminate with a ``FAIL'', and if this happens, our algorithm terminates with a ``FAIL'' as well. However, the number of calls to Procedure \reconnect is bounded by $|M|\leq \tilde O((W^0)^2)\leq N^3$, and the probability that a single call terminates with a ``FAIL'' is bounded by $\frac{1}{N^{10}}$. Therefore, overall, the probability that our algorithm terminates with a ``FAIL'' is bounded by $\frac{1}{N^7}$.

\subsection{An Algorithm for the \maintainspecialcluster Problem}
\label{subsec: alg for maintainspecial}

In this subsection we prove the second part of  \Cref{thm: from routeandcut to maintaincluster}, by showing that an algorithm for the $r$-restricted \routeandcut problem implies an algorithm for the $r$-restricted \maintainspeccluster problem with the required running time. We summarize this result in the following theorem.

\begin{theorem}\label{thm: from routeandcut to maintainspeccluster}
	Suppose that, for some parameter $r\geq 1$, there exists a randomized algorithm for the $r$-restricted \routeandcut problem, that, given as an input an instance $(G,A,B,\Delta,\eta,N)$ of the problem with $|V(G)|=n$,
	has running time at most $c_1\cdot n\cdot (n-|B|)\cdot 2^{c_2\sqrt{\log N}}\cdot (\log N)^{16c_2(r-1)+8c_2}$. Then  there exists a randomized algorithm for the $r$-restricted \maintainspeccluster problem, that, on input $(G,\beta,\Delta,\eta,d,N)$ has running time at most: $c_1\cdot W^0(G)\cdot (W^0(G)-|\beta|+\Delta)\cdot 2^{c_2\sqrt{\log N}}\cdot (\log N)^{16c_2r}$. 
\end{theorem}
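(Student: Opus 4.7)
The plan is to reduce the \maintainspeccluster problem to the \connecttocenters problem and invoke \Cref{thm: connecttocenters}. Given an instance $(G,\beta,\Delta,\eta,d,N)$ of the $r$-restricted \maintainspeccluster problem, I set up a \connecttocenters instance on $G$ (together with its subdivided graph $H$, computed in the adjacency-list representation in time $O(W^0(G))$), using the same $N$, $r$, and $\beta$, and choosing the depth parameter $d' = \lfloor d/(C\log^{11}N)\rfloor$ for a large enough absolute constant $C$. This guarantees that a concatenation of up to $\lambda = O(\log N)$ layered paths from the resulting \LCDS, each of length at most $16d'\log^{10}N$ by the relaxed Requirement \ref{prop: routing paths}, has total length at most $d$. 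The \connecttocenters parameter constraints are verified as follows: the $r$-restricted assumption $d \le 2^{r\sqrt{\log N}}$ gives $d' \le 2^{r\sqrt{\log N}}/\log^8 N$; the \maintainspeccluster bound $\Delta d/\eta \le n - |\beta| + \Delta(\log N)^{64}$ together with $\Delta \le |\beta|/(\log N)^{128} \le n/(\log N)^{128}$ yields $d \le 2n$ and hence $d' \le n/8$; and since every special edge initially has length $1$, $W^0(G) = n$, and combined with $|\beta| \ge 99n/100$ this gives $|\beta| \ge W^0(G)/(256d')$ with wide margin.

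To answer a \shortpath query at a vertex $x$, I traverse the \LCDS pointers: starting from the layer $U_i$ containing $x$, concatenate $K(x), K(v_1), K(v_2), \ldots$, where each $v_j$ is the endpoint of the previous path in a strictly lower layer, terminating at some $y \in U_0 = \beta$. The resulting $H$-path is compressed to a $G$-path by contracting each complete subdivision segment $P^*(e)$ back to the original $G$-edge $e$, then shortcut to be simple, and finally truncated at the first $\beta$ vertex encountered so that no inner vertex lies in $\beta$. The resulting $G$-path has length at most $d$. After the query, the caller doubles the lengths of some special edges on the returned path and removes the endpoint from $\beta$; these updates are handed to \connecttocenters as its phase-end updates, and Property \ref{maintaincluster-prop: edge congestion} is automatically inherited from the \maintainspeccluster semantics. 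Strongly well-structured sparse cuts output by \connecttocenters are forwarded directly to the caller: their sparsity $1/d' \le O(\log^{11}N)/d \le (\log N)^{64}/d$, and the condition $w(A) \ge 1.8|\beta\cap A|$, match the \maintainspeccluster requirements exactly.

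A key observation is that the third possible output of \connecttocenters, namely a weakly well-structured cut $(A',B')$ in $H$ with $|A'| \ge 511|V(H)|/512$ and $|A'\cap \beta| < 2|\beta|/\log^3 N$, cannot actually occur in our parameter regime. These conditions would force $|\beta|(1 - 2/\log^3 N) \le |B' \cap \beta| \le |B'| \le |V(H)|/512$, hence $|V(H)| \ge 510 |\beta|$. However, the total edge-length increase satisfies $\Lambda \le \Delta d/\eta \le n - |\beta| + \Delta(\log N)^{64} \le 2n$ (again using $\Delta \le n/(\log N)^{128}$), so $|V(H)| \le n + \Lambda \le 3n$, while $|\beta| \ge |\beta^0|/2 \ge 49n/100$ for as long as \maintainspeccluster has not terminated, yielding a contradiction. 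Hence we safely treat this outcome, should it ever occur, as a FAIL, with total failure probability bounded by the $1/N^7$ FAIL probability of \connecttocenters (well within the allowed $1/4$).

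Finally, the running time analysis uses the bound $O(c_1 W^0 \cdot (W^0 - |\beta^0| + |\beta'|\rho + \Lambda) \cdot 2^{c_2\sqrt{\log N}}(\log N)^{16c_2(r-1) + 8c_2 + 26})$ from \Cref{thm: connecttocenters}, where $\rho = O(\log^3 N)$ since $|\beta^0| \ge 99n/100$. A geometric-summation argument shows $\Lambda \le \Delta d/\eta$: at the $\ell$-th doubling of an edge at least $\eta$ uses of length $2^{\ell-1}$ are required, and the total length-weighted uses across all queries is at most $\Delta d$; this is in turn at most $n - |\beta| + \Delta(\log N)^{64}$ by the problem definition. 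For $|\beta'|$, the condition $w(X) \ge 1.8|\beta \cap X|$ on each sparse cut lets me charge every $\beta$-deletion from cuts to an $\Omega(1)$ amount of non-$\beta$ weight removed, yielding $|\beta'| \le \Delta + O(n - |\beta| + \Lambda)$. Combining these estimates, the bracket becomes $O((n - |\beta| + \Delta)(\log N)^{67})$, and this polylog overhead is absorbed into $(\log N)^{16c_2 r}$ for $c_2$ sufficiently large (any $c_2 \ge 12$ suffices), yielding the target bound $c_1 W^0(G) \cdot (W^0(G) - |\beta| + \Delta) \cdot 2^{c_2\sqrt{\log N}}(\log N)^{16c_2 r}$. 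The main obstacle will be the careful bookkeeping of polylog exponents so that every loss fits inside the budgeted $(\log N)^{16c_2 r}$ slot, and verifying the chain of parameter consistency checks tying $d$, $d'$, $\eta$, $\Delta$, and $n$ together.
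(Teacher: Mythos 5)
Your proposal follows essentially the same route as the paper: reduce to \connecttocenters on $G$ with a scaled-down depth parameter (the paper uses $d'=d/(\log N)^{64}$, you use $d'=\Theta(d/\log^{11}N)$ — immaterial), answer queries by concatenating at most $\lambda$ layered paths of the \LCDS, rule out Outcome \ref{outcome: finalizing cut} by a counting argument showing the non-$\beta$ weight is too small for such a cut to exist, forward the strongly well-structured $\frac{1}{d'}$-sparse cuts to the caller, and plug the bounds on $\Lambda$ and $|\beta'|$ (with $\rho=O(\log^3N)$) into the running time of \Cref{thm: connecttocenters}, absorbing the polylog losses into $(\log N)^{16c_2r}$. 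This is the paper's proof in all essentials.

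One step you gloss over is needed for the simulation to be sound: when \connecttocenters outputs a sparse cut $(A,B)$, its contract is that the \emph{lighter-weight} side is deleted, whereas the \maintainspeccluster semantics always deletes the side $X$ of the returned cut (together with the endpoints of the cut edges). Your sentence ``forwarded directly to the caller'' silently assumes these coincide; the paper closes this by proving that $w(A)<w(B)$ must always hold (\Cref{obs: smaller side}, using \Cref{cor: small weight}, i.e.\ the fact that $w(A)\ge 1.8|\beta\cap A|$ forces $A$ to carry little $\beta$-weight while $B$ retains most of $\beta$). You already have the ingredients for this in your argument ruling out Outcome \ref{outcome: finalizing cut}, so it is a short fix, but as written the reduction is incomplete without it. A second, smaller accounting point: your geometric-summation bound on $\Lambda$ only controls doublings \emph{after the first one} of each edge (Property \ref{maintaincluster-prop: edge congestion} constrains consecutive doublings, not the first); the paper's \Cref{claim: bound lambda and deleted from beta} separately bounds the number of edges that are ever doubled by $O(n-|\beta^0|+\Delta)$, which is what makes the final estimate $\Lambda\le O(n-|\beta^0|+\Delta(\log N)^{64})$ legitimate.
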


The remainder of this subsection is dedicated to the proof of \Cref{thm: from routeandcut to maintainspeccluster}. 
Recall that we are given as input a perfect well-structured graph $G=(L,R,E)$ with a proper assignment of lengths $\ell(e)$ to its edges $e\in E$, that are given in the adjacency-list representation, such that the length of every special edge is $1$.
We are also given a subset $\beta\subseteq V(G)$ of  vertices with $|\beta|\geq \frac{99|V(G)|}{100}$, and parameters $N\geq W^0(G)$, $1\leq \Delta\leq \frac{|\beta|}{(\log N)^{128}}$, $1\leq \eta \leq \Delta$, and $d\geq (\log N)^{64}$, such that $\frac{\Delta\cdot d}{\eta}\leq |V(G)|-|\beta|+\Delta\cdot (\log N)^{64}$. Lastly, we are given an integer $1\leq r\leq \ceil{\sqrt{\log N}}$, and we are guaranteed that $d\leq 2^{r\cdot\sqrt{\log N}}$.

Over the course of the algorithm, vertices may be deleted from set $\beta$, and the lengths of some edges may be increased, but the length of every special edge incident to a vertex that currently lies in $\beta$ is always $1$. We denote by $\beta^0$ the initial set $\beta$ of vertices. Once $|\beta|$ falls below $\frac{|\beta^0|}{2}$, the algorithm terminates.

Recall that the algorithm consists of at most $\Delta$ iterations.
At the beginning of every iteration $i$, the algorithm is given a vertex $x\in V(G)$, and it must return a simple path $P_i$ in $G$ connecting $x$ to any vertex  $y\in \beta$, such that the length of the path is at most $d$, and no inner vertices on the path lie in $\beta$ (if $x\in \beta$, that we require that path $P_i$ only consists of the vertex $x$). After that, the length of some special edges on path $P_i$ may be doubled, and vertex $y$ is deleted from $\beta$. In this case, we say that vertex $y$ was deleted from $\beta$ \emph{directly}. If $\tau'>\tau$ are two times at which the length of some edge $e\in E(G)$ is doubled, then we are guaranteed that the algorithm returned at least $\eta$ paths in response to queries during the time interval $(\tau,\tau']$ that contained $e$.

	The algorithm may, at any time, produce a strongly well-structured cut $(X,Y)$ in $G$ of sparsity $\Phi_G(X,Y)\leq \frac{(\log N)^{64}}{d}$, such that $w(X)\geq 1.8|\beta\cap X|$ holds. Let $J'\subseteq Y$ denote the set of vertices that serve as endpoints of the edges of $E_G(X,Y)$. The vertices of $X\cup J'$ are then deleted from $G$, and the algorithm continues with the resulting graph $G=G[Y\setminus J']$. Let $Z$ be the set of vertices that are deleted from $G$ following this cut. Notice that the vertices of $Z\cap \beta$ are deleted from graph $G$, and hence from set $\beta$ as well. We say that the vertices of $Z\cap \beta$ are deleted from $\beta$ \emph{indirectly}.
	
Let $\Lambda$ denote the total increase in the lengths of edges over the course of the algorithm, and denote $W^0=W^0(G)=n$. We also denote the initial graph $G$ by $G^0$, and the set of all vertices that were deleted from $\beta$ over the course of the algorithm, either directly or indirectly, by $\beta'$. We need the following claim and its corollary, whose proofs are deferred to Section \ref{subsec: proof of bound on lambda} and \ref{subsec: proof of small weight} of Appendix, respectively.

\begin{claim}\label{claim: bound lambda and deleted from beta}
	$\Lambda\leq 4n-4|\beta^0|+5\Delta\cdot (\log N)^{64}\leq 4n$ and $|\beta'|\leq 45n-45|\beta^0|+56\Delta  \cdot (\log N)^{64}$.
\end{claim}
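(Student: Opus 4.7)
The plan is to prove the two bounds in sequence, first $\Lambda$ and then $|\beta'|$, since the second bound will depend on the first.

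\textbf{Bounding $\Lambda$.} The key observation is the standard \MWU-style charging argument: whenever a special edge $e$ is doubled from length $\ell$ to $2\ell$, Property \ref{maintaincluster-prop: edge congestion} guarantees that the $\eta$ most recent returned paths all contained $e$, so $e$ contributed at least $\eta\cdot\ell$ to the total length of all returned paths. Telescoping over the doublings of each edge (starting from initial length $1$), the total length contribution of an edge $e$ to the returned paths is at least $\eta$ times its total length increase. Summing over all edges, $\eta\cdot\Lambda \leq \sum_{i}|P_i| \leq \Delta\cdot d$, and the defining inequality $\Delta d/\eta \leq n-|\beta^0|+\Delta(\log N)^{64}$ from the \maintainspeccluster problem then yields $\Lambda\leq n-|\beta^0|+\Delta(\log N)^{64}$, which is comfortably within the bound $4n-4|\beta^0|+5\Delta(\log N)^{64}$. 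The second inequality $4n-4|\beta^0|+5\Delta(\log N)^{64}\leq 4n$ follows since $\Delta\leq |\beta^0|/(\log N)^{128}$, so $5\Delta(\log N)^{64}\leq 5|\beta^0|/(\log N)^{64}\ll |\beta^0|$ for large $N$.

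\textbf{Bounding $|\beta'|$ by tracking non-$\beta$ weight.} Split $\beta'$ into \emph{direct} deletions (one per iteration, so at most $\Delta$) and \emph{indirect} deletions (via cuts, consisting of $X\cap\beta$ and $J'\cap\beta$). The central potential is $W_{\bar\beta}(\tau):=w\bigl(V(G)\setminus\beta\bigr)$ at time $\tau$. I will show: (i) $W_{\bar\beta}(0)=n-|\beta^0|$; (ii) each direct deletion increases $W_{\bar\beta}$ by exactly $1$ (a $\beta$-vertex with weight $1$ becomes a non-$\beta$ vertex); (iii) each edge-length doubling of an edge $e$ by amount $\ell(e)$ increases $W_{\bar\beta}$ by $2\ell(e)$, since both endpoints of $e$ must lie outside $\beta$ at the time (the problem guarantees that special edges incident to current $\beta$-vertices always have length $1$); (iv) each cut $(X,Y)$ decreases $W_{\bar\beta}$ by $w(X\setminus\beta)+w(J'\setminus\beta)$. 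Since $W_{\bar\beta}\geq 0$ throughout, total decrease is at most total increase, giving
\[\sum_{\text{cuts}} w(X\setminus\beta) \;\leq\; (n-|\beta^0|)+\Delta+2\Lambda.\]

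\textbf{Converting non-$\beta$ weight bounds into $\beta$-deletion bounds.} For each cut, $w(X)=|X\cap\beta|+w(X\setminus\beta)$ and the condition $w(X)\geq 1.8|X\cap\beta|$ forces $w(X\setminus\beta)\geq 0.8|X\cap\beta|$, i.e.\ $|X\cap\beta|\leq 1.25\,w(X\setminus\beta)$. Summing, $\sum|X\cap\beta|\leq 1.25\bigl((n-|\beta^0|)+\Delta+2\Lambda\bigr)$. For the $J'\cap\beta$ contributions I will use $|J'|\leq |E_G(X,Y)|\leq \frac{(\log N)^{64}}{d}\min\{w(X),w(Y)\}\leq \frac{(\log N)^{64}}{d}w(X)$, together with $d\geq(\log N)^{64}$ to get $|J'|\leq w(X)$ per cut. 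Then $\sum|J'\cap\beta|\leq\sum w(X)=\sum|X\cap\beta|+\sum w(X\setminus\beta)\leq 2.25\bigl((n-|\beta^0|)+\Delta+2\Lambda\bigr)$. Adding the direct contribution and substituting the bound $\Lambda\leq n-|\beta^0|+\Delta(\log N)^{64}$ derived above yields a bound on $|\beta'|$ of the form $C_1(n-|\beta^0|)+C_2\Delta(\log N)^{64}$ for absolute constants $C_1,C_2$; a loose arithmetic check shows $C_1\leq 11$ and $C_2\leq 12$, well within the claimed $45$ and $56$.

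The main technical point to be careful about is step (iv) of the potential argument and the bookkeeping with $J'$: one must verify that $J'\subseteq Y$ vertices deleted from $G$ indeed decrease $W_{\bar\beta}$ only by their non-$\beta$ portion (since $J'\cap\beta$ vertices leave $\beta$ and $G$ simultaneously, contributing $0$ to the change in $W_{\bar\beta}$, not negative). This, and handling the edge-doubling increment carefully (accounting for \emph{both} endpoints), are where the argument has to be written out with care; everything else is routine algebra.
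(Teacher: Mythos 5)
Your bound on $\Lambda$ has a genuine gap: the telescoping step overstates what Property \ref{maintaincluster-prop: edge congestion} gives you. That property only guarantees $\eta$ path-appearances of $e$ \emph{between two consecutive doublings}; the \emph{first} doubling of an edge carries no such guarantee (the edge need only lie on the single path just returned). Telescoping correctly, each edge's contribution to $\sum_i C_i$ is at least $\eta\cdot(\text{its total length increase}-1)$, so you only get $\eta\,(\Lambda-|E'|)\leq \Delta\cdot d$, where $E'$ is the set of edges whose length is ever increased. Your proposal silently treats $|E'|$ as free, but it is not: a priori there could be up to $n/2$ special edges each doubled once, contributing up to $n/2$ to $\Lambda$, whereas the target bound $4n-4|\beta^0|+5\Delta(\log N)^{64}$ is roughly $0.16n$ when $|\beta^0|\geq \tfrac{99n}{100}$. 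The missing ingredient is exactly the paper's first step: when an edge's length is first increased, neither endpoint can lie in $\beta$ (lengths of edges touching $\beta$ stay $1$), so each edge of $E'$ is incident either to a vertex of $V(G^0)\setminus\beta^0$ or to a directly deleted vertex (indirect deletions remove the edge from $G$ altogether); since every vertex meets at most one special edge, $|E'|\leq (n-|\beta^0|)+\Delta$. You in fact invoke this very ``both endpoints outside $\beta$'' observation in step (iii) of your second part, so the repair is near at hand and leaves your final constants well within $45$ and $56$ — but as written, ``$\eta\Lambda\leq \Delta d$, hence $\Lambda\leq n-|\beta^0|+\Delta(\log N)^{64}$'' is not justified.

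Your second part is sound and is essentially the paper's argument in different packaging: the paper bounds the end-of-life weights of vertices outside $\beta^0\setminus\beta'_1$ by their initial weight plus $2\Lambda$ and charges each cut's $\beta$-deletions to $w(X\setminus\hat\beta)\geq 0.44\,w(X)$, while you run the same accounting as a nonnegative potential $w(V(G)\setminus\beta)$ with increments $+1$ per direct deletion and $+2\ell(e)$ per doubling; the treatment of $|X\cap\beta|$ via $w(X)\geq 1.8|X\cap\beta|$ and of $J'$ via $|J'|\leq |E_G(X,Y)|\leq w(X)$ matches the paper. Once the $\Lambda$ bound is patched as above, your arithmetic still lands comfortably inside the claimed constants.
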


\begin{corollary}\label{cor: small weight}
	Throughout the algorithm, $\sum_{v\in V(G)\setminus \beta}w(v)\leq 0.2|\beta|$.
\end{corollary}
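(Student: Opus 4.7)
The plan is to reduce the claim to an arithmetic inequality via a clean decomposition that exploits the structural constraint on $\beta$. First, I will use the fact that every special edge incident to a vertex of $\beta$ has length $1$ throughout the algorithm. Thus for $v \in V(G) \setminus \beta$, if the unique special edge $e_v$ incident to $v$ has its other endpoint in $\beta$, then $w(v) = \ell(e_v) = 1$; only vertices $v \in V(G) \setminus \beta$ whose partner under the special-edge matching also lies in $V(G) \setminus \beta$ can contribute $w(v) > 1$. Letting $E^* \subseteq E(G)$ denote the set of special edges with both endpoints in $V(G) \setminus \beta$, this yields
\[ \sum_{v \in V(G) \setminus \beta} w(v) \;=\; |V(G) \setminus \beta| \;+\; 2\sum_{e \in E^*}(\ell(e) - 1) \;\leq\; |V(G) \setminus \beta| + 2\Lambda, \]
since every special edge has initial length $1$ and its current length exceeds $1$ by at most its own accumulated length increase, which is bounded overall by $\Lambda$.

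Next, I will bound $|V(G) \setminus \beta|$. A vertex in this set either (i) was never in $\beta^0$, or (ii) was removed from $\beta$ directly, as the final endpoint of a query-returned path. The third possibility, of being removed from $\beta$ indirectly via a sparse-cut deletion, simultaneously removes the vertex from $V(G)$ and so does not contribute here. Since the algorithm performs at most $\Delta$ iterations and removes at most one vertex from $\beta$ per query, we get $|V(G) \setminus \beta| \leq (n - |\beta^0|) + \Delta$, where $n = W^0(G) = |V(G^0)|$ (using that initial weights are all $1$).

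Combining these bounds with Claim \ref{claim: bound lambda and deleted from beta}, which gives $\Lambda \leq 4(n - |\beta^0|) + 5 \Delta \cdot (\log N)^{64}$, I obtain
\[ \sum_{v \in V(G) \setminus \beta} w(v) \;\leq\; 9(n - |\beta^0|) + \Delta + 10 \Delta \cdot (\log N)^{64}. \]
The input assumptions $|\beta^0| \geq \tfrac{99}{100} n$ and $\Delta \leq |\beta^0|/(\log N)^{128}$, together with the fact that the algorithm terminates once $|\beta|$ falls below $|\beta^0|/2$, yield $n - |\beta^0| \leq n/100$, $\Delta \cdot (\log N)^{64} \leq n/(\log N)^{64}$, and $|\beta| \geq |\beta^0|/2 \geq 99n/200$. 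Substituting gives a right-hand side bounded by $9n/100 + 11 n/(\log N)^{64}$, while $0.2 |\beta| \geq 99n/1000$. Since $N$ is larger than a sufficiently large constant, the additive $11n/(\log N)^{64}$ term is absorbed into the slack between $9n/100 = 90n/1000$ and $99n/1000$, and the desired inequality follows.

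The only non-arithmetic step is the observation that indirect removals cannot enlarge $V(G) \setminus \beta$; the rest is essentially bookkeeping. I therefore expect the proof itself to be quite short, with no real technical obstacle once the two-line decomposition into $|V(G)\setminus\beta| + 2\Lambda$ is in place.
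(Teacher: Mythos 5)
Your proof is correct and follows essentially the same route as the paper's: bound the sum by $|V(G)\setminus\beta|+2\Lambda$, observe that $|V(G)\setminus\beta|\leq (n-|\beta^0|)+\Delta$ since indirectly removed vertices leave $G$ entirely, invoke \Cref{claim: bound lambda and deleted from beta}, and finish with the arithmetic from $|\beta^0|\geq \frac{99n}{100}$, $\Delta\leq \frac{|\beta^0|}{(\log N)^{128}}$, and $|\beta|\geq \frac{|\beta^0|}{2}$. Your extra decomposition by whether a vertex's special-edge partner lies in $\beta$ is a harmless refinement of the same bound; the paper simply charges each edge-length increase to at most two vertices.
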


Note that the subdivided graph $H$ of $G$ is identical to graph $G$, so we can use the adjacency list representation of $G$ also as the adjacency list representation of $H$. We define a parameter $d'=\frac{d}{(\log N)^{64}}$. Recall that, from the definition of the \maintainspecialcluster problem, we are guaranteed that $d\leq 2^{r\cdot\sqrt{\log N}}$, and also that $d \leq \frac{\eta}{\Delta}\left(n-|\beta^0|+\Delta\cdot (\log N)^{64}\right )\leq n$, since $\eta\leq \Delta$, and $ \Delta\leq \frac{|\beta^0|}{(\log N)^{128}}$ from the problem definition.
Therefore, $d'\leq\min\set{\frac{2^{r\sqrt{\log N}}}{\log^8N},\frac{n}{8}}$. Since $|\beta^0|\geq \frac{99n}{100}$ from the problem definition, and $d'\geq 1$ (as $d\geq (\log N)^{64}$), we get that $|\beta^0|\geq \frac{W^0}{256d'}$.

We apply the algorithm for the \connecttocenters problem from \Cref{thm: connecttocenters} to graph $G$, set $\beta$ of vertices, and parameters $N,d'$, and $r$. From our discussion, this is a valid input to the problem. Recall that the algorithm uses a parameter $\rho=\frac{2^{12}\cdot W^0\cdot \log^3N}{|\beta^0|}$. Since, from the definition of the \maintainspecialcluster problem, $|\beta^0|\geq \frac{99n}{100}$, while $W^0=n$, we get that $\rho=O(\log^3N)$. Our algorithm consists of at most $\Delta$ phases, that correspond to the phases of the \connecttocenters problem. Recall that the purpose of each such phase is to construct a valid \LCDS with $U_{\lambda}=\emptyset$. Our algorithm will only respond to queries at the end of each phase, once such a data structure is obtained. 

We now consider a single phase of the algorithm for the \connecttocenters problem, and one of its iterations. Assume first that the algorithm produces 
 a strongly well-structured $\frac{1}{d'}$-sparse cut $(A,B)$ in $G$ with $w(A)\geq 1.8|\beta\cap A|$. Since $d'=\frac{d}{(\log N)^{64}}$, the sparsity of the cut is at most $\frac{(\log N)^{64}}{d}$. We claim that $w(A)< w(B)$ must hold in the following simple obsevation, whose proof is deferred to Section \ref{subsec: proof of smaller side} of appendix.

\begin{observation}\label{obs: smaller side}
	 $w(A)< w(B)$ must hold.
\end{observation}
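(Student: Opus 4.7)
The plan is to combine the hypothesis $w(A) \ge 1.8|\beta \cap A|$ with Corollary~\ref{cor: small weight} (which says $\sum_{v \in V(G)\setminus \beta}w(v) \le 0.2|\beta|$) to show that the cut side $A$ can contain at most a small fraction of $\beta$, and is therefore strictly lighter than $B$.

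The key observation I would use is that, since every vertex $v \in \beta$ has its incident special edge of length $1$ at all times, each $v \in \beta$ has weight $w(v) = 1$. Consequently the weight $w(A)$ decomposes as
\begin{equation*}
w(A) \;=\; |\beta \cap A| \;+\; w(A \setminus \beta).
\end{equation*}
The assumption $w(A) \ge 1.8|\beta \cap A|$ then rearranges to
\begin{equation*}
w(A \setminus \beta) \;\ge\; 0.8\,|\beta \cap A|.
\end{equation*}
Applying Corollary~\ref{cor: small weight}, which bounds $w(V(G)\setminus \beta) \le 0.2|\beta|$, we get $w(A \setminus \beta) \le 0.2|\beta|$, and hence $|\beta \cap A| \le |\beta|/4$.

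From here, the bound on $w(A)$ follows easily: $w(A) = |\beta \cap A| + w(A \setminus \beta) \le 0.25|\beta| + 0.2|\beta| = 0.45|\beta|$. On the other hand, since $|\beta \cap B| = |\beta| - |\beta \cap A| \ge 0.75|\beta|$ and each vertex in $\beta$ has unit weight, $w(B) \ge |\beta \cap B| \ge 0.75|\beta| > 0.45|\beta| \ge w(A)$. The main (and only) obstacle is really just the bookkeeping to confirm that Corollary~\ref{cor: small weight} applies at the current time in the algorithm, which is immediate since that corollary is stated to hold throughout the algorithm's execution. No other ingredients are required.
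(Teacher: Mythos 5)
Your proof is correct and follows essentially the same route as the paper's: both combine the cut condition $w(A)\geq 1.8|\beta\cap A|$ with \Cref{cor: small weight} and the fact that vertices of $\beta$ have unit weight to show $w(A)$ is a small fraction of $|\beta|$, and then lower-bound $w(B)$ by $|\beta\cap B|$. The only difference is a minor rearrangement of the arithmetic (you bound $|\beta\cap A|$ first, the paper bounds $w(A)$ via $w(A)\leq 2.27\,w(A\setminus\beta)$), yielding slightly different but equally valid constants.
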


We delete from $G$ the vertices of $A$, and the endpoints of the edges of $E_G(A,B)$ that lie in $B$, and continue with the resulting graph.

Next, we show that the algorithm to the \connecttocenters
may never produce a weakly well-structured cut $(A',B')$ in graph $H$, with $|A'|\geq \frac{511|V(H)|}{512}$,  $|A'\cap \beta|<\frac{2|\beta|}{\log^3N}$, and $|E_H(A',B')|\leq  \frac{|V(H)|}{d'\cdot \log^6N}$.
Indeed, assume for contradiction that the algorithm produces such a cut. Since every special  edge incident to a vertex of $\beta$ has length $1$, we get that $|V(H)\setminus \beta|=\sum_{v\in V(G)\setminus \beta}w(v)|\leq 0.2|\beta|$ from \Cref{cor: small weight}.
But then:

\[|A'|\leq |V(H)\setminus \beta|+\frac{2|\beta|}{\log^3N}\leq 0.21|\beta|. \]

Since $|V(H)|\geq |\beta|$, it is then impossible that $|A'|\geq \frac{511|V(H)|}{512}$, a contradiction.

Therefore, if the algorithm for the  \connecttocenters problem does not produce a a strongly well-structured $\frac{1}{d'}$-sparse cut $(A,B)$ in $G$ with $w(A)\geq 1.8|\beta\cap A|$, it must produce a valid \newline \LCDS, with $U_{\lambda}=\emptyset$. In this case, our algorithm is ready to respond to a query. Let $x$ be a query vertex that it receives. If $x\in \beta$, then we return a path consisting of the vertex $x$ itself. Otherwise, we construct a path $P$ connecting $x$ to some vertex $y\in \beta$, as follows. Since $U_{\lambda}=\emptyset$, there must be an integer $1\leq i<\lambda$ with $x\in U_i$. We consider the path $K(x)\in \qset_i$, that connects $x$ to some vertex $x_1\in U_{i_1}$, for some $0\leq i_1<i$. If $x'\in \beta$, then we terminate this process, and otherwise we consider the path $K(x_1)$, which is concatenated to path $K(x)$. After at most $\lambda$ such iterations, we must obtain a path $P$, that connects $x$ to some vertex $y\in \beta$. Since the length of every path in set $\bigcup_{j=1}^{\lambda-1}\qset_j$ is at most $4d'\log^{10}N$, the final length of the resulting path is bounded by $4\lambda d'\log^{10}N\leq 32d'\log^{11}N<d$, since $d'=\frac{d}{(\log N)^{64}}$, and $N$ is sufficiently large. In time $O(d)$, we can turn this path into a simple path, and truncate it so that it still connects $x$ to some vertex of $\beta$, but the inner vertices on the path do not belong to $\beta$. The time required to respond to a single query is bounded by $O(d)$, and the total time required to respond to all queries is bounded by $O(d\cdot \Delta)\leq O\left (n\cdot (n-|\beta^0|+\Delta)\cdot  (\log N)^{64}\right )$,
since, from the problem definition, $\Delta\cdot d\leq \eta\cdot \left(|V(G)|-|\beta|+\Delta\cdot (\log N)^{64}\right )$, and $\eta\leq \Delta\leq |\beta|\leq n$.

Recall that, in our setting, $\rho=O(\log^3N)$, and so the running time of the algorithm for the \connecttocenters problem from 
\Cref{thm: connecttocenters} is bounded by:

\[	O\left(c_1\cdot W^0\cdot ( W^0-|\beta^0|+|\beta'| +\Lambda)\cdot  2^{c_2\sqrt{\log N}}\cdot (\log N)^{16c_2(r-1)+8c_2+29}\right ).\]

Since, from \Cref{claim: bound lambda and deleted from beta},
$\Lambda\leq 4n-4|\beta^0|+5\Delta\cdot (\log N)^{64}$ and $|\beta'|\leq 45n-45|\beta^0|+56\Delta  \cdot (\log N)^{64}$, while $n=W^0$, we get that the total running time of the algorithm is bounded by:

\[\begin{split}	
&O\left(c_1\cdot W^0\cdot ( W^0-|\beta^0|+\Delta)\cdot  2^{c_2\sqrt{\log N}}\cdot (\log N)^{16c_2(r-1)+8c_2+100}\right )\\
&\quad\quad\quad\quad \leq c_1\cdot W^0\cdot (W^0-|\beta|+\Delta)\cdot 2^{c_2\sqrt{\log N}}\cdot (\log N)^{16c_2r}.
\end{split}
\]

Finally, recall that the algorithm for the \connecttocenters problem may terminate with a ``FAIL'', with probability at most $\frac{1}{N^7}$. If this happens, then our algorithm terminates with a ``FAIL'' as well, but the probability for this happening is bounded by $\frac{1}{N^7}$.

\section{From the \maintaincluster Problem to \routeandcut Problem -- Comleting the Proof of \Cref{thm: from routeandcut to maintaincluster}}
\label{sec: alg for maintaincluster from routeandcut}
	
The goal of this section is to complete the proof of 	 \Cref{thm: from routeandcut to maintaincluster}, by providing an algorithm for the \maintaincluster problem, that is summarized in the following theorem.

\begin{theorem}\label{thm: from routeandcut to maintaincluster part 2}
	Suppose that, for some parameter $r\geq 1$, there exists a randomized algorithm for the $r$-restricted \routeandcut problem, that, given as an input an instance $(G,A,B,\Delta,\eta,N)$ of the problem with $|V(G)|=n$,
	has running time at most $c_1\cdot n\cdot (n-|B|)\cdot 2^{c_2\sqrt{\log N}}\cdot (\log N)^{16c_2(r-1)+8c_2}$. Then there exists a randomized algorithm for the $r$-restricted \maintaincluster problem, that, on input $(G,\Delta,\eta,d,N)$ has running time at most: $c_1\cdot  (W^0(G))^2\cdot 2^{c_2\sqrt{\log N}}\cdot (\log N)^{16c_2r}+c_1\cdot W^0(G)\cdot\Delta\cdot\log^4N$.
\end{theorem}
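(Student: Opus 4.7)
The plan is to lift the algorithm we already obtained for the \maintainspeccluster problem (\Cref{thm: from routeandcut to maintainspeccluster}) to handle queries between arbitrary pairs of vertices. The high-level strategy is the by-now standard ``expander-embed-and-route'' paradigm: we sample a random set $\beta \subseteq V(G)$ of ``center'' vertices whose initial weight is an $\Omega(1/\poly\log N)$-fraction of $W^0(G)$; we use the Cut-Matching game to construct an expander $X$ on vertex set $\beta$ and embed it into $G$ via short, low-congestion paths $\set{P(e)\mid e\in E(X)}$; and in parallel we run \emph{two} copies of the layered-connectivity machinery from \Cref{sec: layered connectivity DS}, one on the perfect well-structured graph $G$ and one on $\revG$, each maintaining, for every vertex of $G$, a short low-congestion path to (respectively, from) some center of $\beta$. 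A query $(x,y)$ is answered by composing: the $G$-side path from $x$ to some $x' \in \beta$, a short path in $X$ from $x'$ to some $y' \in \beta$ found by BFS (short because $X$ is an expander and thus has diameter $O(\log N)$), translated through the embedding into a short path in $G$, and finally the $\revG$-side path from $y' \in \beta$ to $y$. The distance parameters at each stage are tuned so the total length stays below $d$.

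The implementation of the Cut-Matching game requires a matching player which, in each round, must either route a large fraction of a bipartition of $\beta$ to itself through $G$ with low congestion, or produce a sparse cut certifying non-routability. This is precisely an instance of the \routeandcut problem (in the subdivided graph $H$ obtained from $G$), so we invoke Algorithm $\aset$ whose existence is assumed in the statement. The embedding paths and the \LCDS structures are maintained as $G$ undergoes edge-length doublings resulting from query responses; these length-increases are propagated to the layered connectivity data structures exactly as in the definition of the \connecttocenters problem. Whenever any of the subroutines (the matching player, either \LCDS, or the expander itself) fails to make progress, it produces a weakly or strongly well-structured sparse cut in $G$. We then use \Cref{obs: subdivided cut sparsity}, \Cref{obs: weakly to strongly well str}, and \Cref{obs: from many sparse to one balanced} to turn the resulting cut(s) into a single strongly well-structured $O((\log N)^{64}/d)$-sparse cut in $G$, which is returned as the output cut of the \maintaincluster algorithm. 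The vertices of the smaller side are then deleted from $G$, a fresh $\beta$ is re-sampled (and all structures rebuilt) when the weight of $\beta$ falls below a suitable threshold, and the algorithm continues on the surviving cluster; termination at $|V(G)|\le n/2$ is guaranteed by the sparsity bound on the cuts.

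For the running time, each rebuild invokes Algorithm $\aset$ for the \routeandcut problem on the subdivided graph $H$ of total weight $O(W^0(G) \cdot \log N)$, together with two instances of the \connecttocenters algorithm from \Cref{thm: connecttocenters}, each with parameter $d' = d/(\log N)^{O(1)}$. Each \connecttocenters call costs at most $\tilde{O}(c_1 \cdot W^0(G) \cdot (W^0(G) - |\beta^0| + |\beta'|\cdot\rho + \Lambda) \cdot 2^{c_2\sqrt{\log N}} \cdot (\log N)^{16c_2(r-1) + 8c_2 + O(1)})$; summing over all rebuilds and using the fact that total weight loss and total edge-length increase across the algorithm are each bounded by $O(W^0(G))$ (since the cuts are sparse and since the lengths of special edges can only grow moderately before the corresponding edges are no longer useful) gives the first claimed term. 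The per-query cost is $O(d + \log N)$ for traversing the composed paths, contributing the $O(W^0(G) \cdot \Delta \cdot \log^4 N)$ term after summing over $\Delta$ queries and absorbing polylogarithmic factors from the congestion bookkeeping on the embedding.

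The main technical obstacle will be to keep the interactions between the three evolving objects (the embedded expander $X$, the two \LCDS structures, and the graph $G$ itself) consistent under edge-length-increase updates: as in the discussion preceding Section~\ref{sec: SSSP alg}, a doubling of $\ell(e)$ in $G$ only slowly increases the length of any embedding path $P(\hat e)$ containing $e$, so the edge $\hat e$ of $X$ need only be deleted once $\ell(P(\hat e))$ crosses a doubled threshold, and similarly paths in $\bigcup_i \qset_i$ need only be destroyed once their lengths exceed $16d' \log^{10} N$. Tracking this carefully, while also ensuring that the $\beta$-centers re-sampling triggers at the right time and that every sparse cut produced by any subroutine can be correctly converted into a valid strongly well-structured output cut of the prescribed sparsity (via the transformations summarized above and the observation that $\Phi_G$ is preserved up to constants under subdivision), is where most of the technical work will lie; but once this bookkeeping is in place the reduction is a straightforward combination of the tools already developed in Sections \ref{sec: layered connectivity DS} and the Cut-Matching game.
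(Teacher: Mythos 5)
Your overall architecture matches the paper's: a phase-based algorithm that samples a set of centers, builds an expander on them via the Cut-Matching game with Algorithm $\aset$ for \routeandcut playing the matching player in the subdivided graph, maintains paths to/from the centers with two instances of the \connecttocenters machinery (on $G$ and on $\revG$), answers a query by composing an out-path, an expander path translated through the embedding, and an in-path, and converts cuts produced by the subroutines into strongly well-structured sparse cuts of $G$.

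The genuine gap is in the amortization of rebuilds, i.e., in how you guarantee that every event that destroys the center/expander structure is accompanied by the deletion of a \emph{large} amount of weight from $G$. In the \maintaincluster problem you do not get to choose which side of an output cut is deleted: the smaller-weight side goes. The dangerous cuts are exactly the unbalanced ones in which almost all centers sit on the light side (e.g.\ the finalizing-cut outcome of \connecttocenters, where $|A'|\geq \frac{511|V(H)|}{512}$ but $A'$ contains almost no centers, or a failed matching-player round whose cut isolates many terminals but little weight). Returning such a cut deletes only $\approx |\beta|\ll \hat W$ weight while wiping out $\beta$ and the embedded expander, so your trigger ``re-sample $\beta$ when its weight drops below a threshold'' can fire up to $\approx d'$ times, each rebuild costing $\Theta\bigl(c_1 (W^0(G))^2\cdot 2^{c_2\sqrt{\log N}}\cdot\poly\log N\bigr)$, which destroys the claimed bound; the bound ``total weight loss is $O(W^0(G))$'' does not save you because these rebuilds are not paid for by weight deletions. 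The tools you cite cannot close this: \Cref{obs: from many sparse to one balanced} is used in the paper only inside the expander $X$ itself, and \Cref{obs: subdivided cut sparsity}/\Cref{obs: weakly to strongly well str} preserve sparsity but not balance. What is missing is precisely the paper's probabilistic balancing machinery: the terminals are sampled at rate $1/d'$ so that, with high probability, bad events $\event_2$ and $\event_3$ do not occur, and then any cut that separates many terminals (or most terminals from most of the volume) with few edges can be replaced — via the canonical-cut decomposition $\cset'(E')$ and the history-dependent finalizing-cut argument (\Cref{claim: finalizing cut}, \Cref{claim: compute intermediate cut}, \Cref{claim: lots of vertices in X after second type of updates}, together with the $w(A)\geq 1.8|\beta\cap A|$ guarantee and the shadow-expander bookkeeping) — by a strongly well-structured $\phi$-sparse cut both of whose sides have weight $\geq \hat W/\log^{30}N$, so that a phase ends and the number of phases is polylogarithmic. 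Without introducing this (or an equivalent) balancing argument, the reduction's running-time claim does not follow.
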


We assume that there exists a randomized algorithm $\aset$  for the $r$-restricted \routeandcut problem, for some integer $r\geq 1$, that, on an input an instance $(G,A,B,\Delta,\eta,N)$ of the problem with $|V(G)|=n$,
has running time at most $c_1\cdot n\cdot (n-|B|)\cdot 2^{c_2\sqrt{\log N}}\cdot (\log N)^{16c_2(r-1)+8c_2}$.

Recall that the input to the $r$-restricted \maintaincluster problem consists of a perfect well-structured $n$-vertex graph $G=(L,R,E)$ with a proper assignment $\ell(e)$ of lengths to its edges.  Recall that we have defined vertex weights for perfect well-structured graphs, where the weight $w(v)$ of a vertex $v$ is the length of the unique special edge incident to $v$. For every vertex $v\in V(G)$, we denote by $w^0(v)$ its initial weight, and we denote $W^0(G)=\sum_{v\in V(G)}w^0(v)$. Additionally, we are given a parameter $N\geq W^0(G)$, that is greater than a large enough constant, with $r\leq \ceil{\sqrt{\log N}}$,
and parameters $1\leq \Delta\leq N$, $1\leq \eta \leq \Delta$, and $(\log N)^{64} \leq d\leq 2^{r\cdot \sqrt{\log N}} $, such that $\frac{\Delta\cdot d}{\eta}\leq n$. In particular:

\begin{equation}\label{eq: bound on d} 
d\leq \frac{n\eta}{\Delta}\leq n.
\end{equation}

Since $d\geq (\log N)^{64}$, we get that $(\log N)^{64}\leq n\leq N$, and so:

\begin{equation}\label{eq: comparing logs}
\log\log N\leq \log n\leq \log N.
\end{equation}

Since we have assumed that $N$ is greater than a large enough constant, and in particular that $N>2^{2^{c_2}}$, we can also assume that $n$ is greater than a large enough constant.
We also get  following inequality that will be useful for us later:

\begin{equation}\label{eq: more logs}
\log (16N)=4+\log N\leq \left(1+\frac{1}{32c_2r}\right )\log N.
\end{equation}

Additionally, $\sqrt{\log(16N)}\leq \sqrt{4+\log N}\leq 2+\sqrt{\log N}$, and so:

\begin{equation}\label{eq: even more logs}
2^{c_2\sqrt{\log(16N)}}\leq 2^{2c_2}\cdot 2^{c_2\sqrt{\log N}}\leq  2^{c_2\sqrt{\log N}}\cdot \log N,
\end{equation}

since $N$ is sufficiently large.

The algorithm consists of at most $\Delta$ iterations.
At the beginning of every iteration $i$, the algorithm is given two vertices $x,y\in V(G)$, and it must return a simple path $P_i$ in $G$ connecting $x$ to $y$, whose length is at most $d$. (We may sometimes say that the algorithm is given a \shortpath query between the pair $(x,y)$ of vertices, and it returns the path $P_i$ in response to the query). 
The lengths of some of the special edges on path $P_i$ are then doubled. 
If $e$ is a special edge of $G$, and $\tau<\tau'$ are consecutive times during the algorithm's execution when the length of $e$ was doubled, then we are guaranteed that the number of the paths $P_i$ that the algorithm returned in response to the queries during time interval $(\tau,\tau']$ with $e\in E(P_i)$ is at least $\eta$.

%Throughout, we use a parameter $\phi=\frac{(\log N)^{64}}{d}$.
The algorithm may, at any time, produce a strongly well-structured cut $(X,Y)$ in $G$ of sparsity $\Phi_G(X,Y)\leq \frac{(\log N)^{64}}{d}$. 
If $w(X)\leq w(Y)$, then let $Z=X$ and $Z'=Y$; otherwise, let $Z=Y$ and $Z'=X$. We also let $J'\subseteq Z'$ denote the set of vertices that serve as endpoints of the edges of $E_G(X,Y)$. The vertices of $Z\cup J'$ are then deleted from $G$, and the algorithm continues with the resulting graph $G=G[Z'\setminus J']$. This modification ensures that $G$ remains a perfect well-structured graph. Once $|V(G)|\leq n/2$ holds, the algorithm terminates, even if fewer than $\Delta$ iterations have passed.
The algorithm may also, at any time, return ``FAIL'', but the probability of this ever happening must be bounded by $1/N^4$.

We can assume w.l.o.g. that, in the initial graph $G$, the length of every edge is at most $d$. Indeed, if $(x,y)$ is an edge of $G$ of length greater than $d$, then cut $(A,B)$ with $A=\set{x}$ and $B=V(G)\setminus\set{x}$ is a strongly well-structured cut of sparsity at most $\frac{1}{d}$, since $w(x),w(y)=d$, and $|E_G(A,B)|=1$. As long as special edges of length greater than $d$ lie in $G$, we can therefore produce $\frac{1}{d}$-sparse cuts that will result in the removal of such edges from $G$. The time required to perform this preprocessing step is $O(|E(G)|)\leq O(W^0(G))$. We will also ensure that, throughout the algorithm, every edge of $G$  has length at most $2d$. Therefore, if we denote by $W\attime$ the total weight of all edges at some time $\tau$ during the algorithm's execution, then $W\attime\leq 2n\cdot d\leq 2n^2$ must hold (we have used Inequality \ref{eq: bound on d}). Since  the algorithm terminates once the number of vertices falls below $n/2$,  the total weight $W$ of all vertices of $G$ satisfies $\frac{n}{2}\leq W\leq 2n^2$, at all times, and therefore, at all times the following inequality holds:

\begin{equation}\label{eq comparing other logs}
\log n-1\leq \log W\leq 4\log n.
 \end{equation}

Throughout, we denote by $G^0$ the initial graph $G$, and by $E^{\spec}$ be the set of all special edges of $G^0$.
For each such edge $e\in E^{\spec}$, let $\ell^0(e)$ be the length of $e$ at the beginning of the algorithm, and let $\ell^*(e)$ be defined as follows. If $e$ lies in graph $G$ at the end of the algorithm, then $\ell^*(e)$ is the length of edge $e$ at the end of the algorithm. Otherwise, $\ell^*(e)$ is the length of the edge $e$ just before it was deleted from $G$. We will use the following simple observation to bound $ \sum_{e\in E^{\spec}}\ell^*(e)$.
The proof is deferred to Section \ref{subsec: proof of final length of edges} of Appendix.

\begin{observation}\label{obs: final length of edges}
	\[ \sum_{e\in E^{\spec}}\ell^*(e)\leq 2\sum_{e\in E^{\spec}}\ell^0(e)+\frac{2d\Delta}{\eta} \leq  2\sum_{e\in E^{\spec}}\ell^0(e)+2n.  \]
\end{observation}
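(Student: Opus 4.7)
The plan is a direct charging argument that compares the total length of all paths returned by the algorithm against the total growth in edge lengths. Let $P_1, \ldots, P_q$ be the paths returned over the $q \le \Delta$ iterations of the algorithm, and for each iteration $i$ let $C_i = \sum_{e \in E(P_i)} \ell_{\tau_i}(e)$ be the length of $P_i$ measured with the edge lengths $\ell_{\tau_i}(\cdot)$ in force at the time of the query. Since each path has length at most $d$, we have $\sum_{i=1}^{q} C_i \leq \Delta \cdot d$. The second inequality in the statement is immediate from the problem hypothesis $\Delta d/\eta \leq n$, so the entire burden is to prove the first inequality.

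For the first inequality, I will lower bound the same sum $\sum_i C_i$ in terms of $\ell^*(e)$ and $\ell^0(e)$. Fix a special edge $e \in E^{\spec}$ and let $k_e \geq 0$ denote the number of times $\ell(e)$ is doubled during the algorithm, so that $\ell^*(e) = 2^{k_e}\ell^0(e)$. If $k_e \geq 2$, let $\tau_1 < \tau_2 < \cdots < \tau_{k_e}$ be the doubling times; during each interval $(\tau_j, \tau_{j+1}]$ for $1 \leq j \leq k_e - 1$, the length of $e$ equals $2^j \ell^0(e)$, and Property~\ref{maintaincluster-prop: edge congestion} guarantees that at least $\eta$ of the returned paths contain $e$ during this interval. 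Hence the contribution of $e$ to $\sum_i C_i$ is at least
\[
\sum_{j=1}^{k_e - 1} \eta \cdot 2^j \ell^0(e) \;=\; \eta\,\ell^0(e)\,(2^{k_e} - 2) \;=\; \eta\bigl(\ell^*(e) - 2\ell^0(e)\bigr).
\]
Note that this lower bound remains valid (indeed, is nonpositive) also when $k_e \in \{0,1\}$, since in those cases $\ell^*(e) \leq 2\ell^0(e)$.

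Summing this lower bound over all $e \in E^{\spec}$, together with $\sum_i C_i \le \Delta d$ and the fact that each path $P_i$ is counted at most once per edge it traverses (regular edges contribute nothing since they have length $0$), yields
\[
\eta \sum_{e \in E^{\spec}} \bigl(\ell^*(e) - 2\ell^0(e)\bigr) \;\leq\; \Delta d,
\]
which rearranges to $\sum_{e} \ell^*(e) \leq 2\sum_e \ell^0(e) + \Delta d/\eta$. Doubling the last term gives the claimed bound with the factor $2d\Delta/\eta$. There is no real obstacle here; the one subtlety worth noting is that some edges of $E^{\spec}$ may be deleted from $G$ partway through the algorithm, but this is already absorbed by the definition of $\ell^*(e)$ (the length at deletion time), since doublings of $e$ can only take place while $e$ is still present in $G$, and the path-contribution bound above uses only such doublings.
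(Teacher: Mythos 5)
Your proof is correct and uses essentially the same charging argument as the paper: bound the total length $\sum_i C_i \le \Delta d$ of the returned paths, and charge each doubling of a special edge (after the first) to the at-least-$\eta$ paths containing it between consecutive doublings, with the factor $2\ell^0(e)$ absorbing the uncontrolled first doubling and the final step using $\Delta d/\eta \le n$. The only difference is cosmetic: the paper charges only the last doubling interval of each edge $e$ with $\ell^*(e)>2\ell^0(e)$, whereas you sum the geometric series over all intervals, which in fact yields the slightly stronger bound $\sum_e \ell^*(e)\le 2\sum_e\ell^0(e)+d\Delta/\eta$ before relaxing to the stated form.
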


Throughout the algorithm, instead of working with graph $G$, it will be convenient for us to work with its corresponding subdivided graph $G^+$ (see \Cref{def: subdivided graph}). We denote the two sides of the bipartition of the vertices of $G^+$ by $L^+$ and $R^+$, respectively. Recall that graph $G^+$ is obtained from $G$ by replacing every special edge $e=(u,v)$ with a new path, that we denote by $P^*(e)$, and call \emph{the path represeinting $e$ in $G^+$}. The number of special edges on path $P^*(e)$ is exactly $\ell(e)$. For every special edge $e'$ on path $P^*(e)$, we say that $e$ is the \emph{parent-edge} of $e'$, and that $e'$ is the \emph{child-edge} of $e$.
	
Assume that at some time during the algorithm's execution, the length $\ell(e)$ of an edge $e\in E(G)$ is doubled. Then for every special edge $e'=(x,y)$ on path $P^*(e)$, we replace the edge by a path $(x,x',y',y)$, where $x'$ and $y'$ are new vertices, edges $(x,x')$ and $(y',y)$ are special, and edge $(x',y')$ is regular; we call this process the \emph{splitting} of edge $e'$. Notice that, since $(x,y)$ is a special edge of $G^+$, $x\in R^+$ and $y\in L^+$ must hold. We add $x'$ to $L^+$ and $y'$ to $R^+$. It is easy to verify that $G^+$ remains the subdivided graph of $G$ after this update.
	
In order to simplify the notation, we will denote $G^+$ by $H$ from now on, and we will also denote the bipartition $(L^+,R^+)$ by $(L',R')$. Recall that $H$ is a uniform perfect well-structured graph. The initial number of vertices in $H$ is $W^0$. As the algorithm progresses and edges of $H$ are subdivided, we will denote by $W$ the current number of vertices in $H$, which is equal to $\sum_{v\in V(G)}w(v)$. 
	
From now on we will mostly work with graph $H$. In every iteration, we are given a pair $(x,y)$ of vertices of $H$, and we need to return a simple path $P$ connecting $x$ to $y$ in $H$, of length at most $d$. Such a path can be easily trasformed into a path in graph $G$ of length at most $d$ connecting $x$ to $y$. After that, some special edges on path $P$ are split. 
From the definition of the \maintaincluster problem, we are guaranteed that the following property holds throughout the algorithm.

\begin{properties}{Q'}\label{prop: splitting of edges}
	\item  Consider any special edge $e$  of $H$ that did not lie in the original graph $H=(G^0)^+$. Assume that $e$ was added to $H$ at some time $\tau$ during the algorithm's execution, and was later split at some time $\tau'>\tau$ (so after time $\tau'$, edge $e$ no longer lies in $H$). Then the number of paths $P$ that the algorithm returned in response to queries during the time interval $(\tau,\tau']$ with $e\in E(P)$ is at least $\eta$.
\end{properties}

Notice that $W^0=2\sum_{e\in E^{\spec}}\ell^0(e)$, and
the total number of vertices that ever belonged to $H$ over the course of the algorithm is bounded by $\sum_{e\in E^{\spec}}2\ell^*(e)\le 4\sum_{e\in E^{\spec}}\ell^0(e)+4n\leq 2W^0+4n\leq 6W^0$, 
from \Cref{obs: final length of edges}. In particular, $|V(H)|\leq 6W^0$ holds at all times.

Throughout, we use a parameter $d'$, that is the smallest integral power of $2$ that is greater than $\frac{d}{(\log N)^{28}}$, and we set $\phi=\frac{4}{d'}$. Intuitively, if, at any time during the algorithm's execution, the length of any special edge $e=(x,y)$ in $G$ becomes at least $d'/2$, then we can immediately produce a cut $(A,B)$ with $A=\set{x}$ and $B=V(G)\setminus \set{x}$. It is easy to verify that $(A,B)$ is a strongly well-structured cut in $G$. Moreover, since $w(x),w(y)=\ell(e)$, the sparsity of the cut is at most $ \phi$, and so our algorithm could output this cut. In reality our algorithm is somewhat more complex, but it uses a similar method to ensure that, at all times, the lengths of all edges in $G$ are bounded by $d'$. Additionally, as long as the algorithm does not terminate, $\sum_{e\in E(G)}\ell(e)\geq \frac{|V(G)|}{2}\geq \frac{n}{4}>d'$ must hold.

Let $W$ be the total weight of all vertices of $G$ at any time during the algorithm's execution.
Then $W\geq |V(G)|\geq \frac{n}{2}\geq \frac{d}{2}$ (from Inequality \ref{eq: bound on d}). Therefore, throughout the algorithm:

\begin{equation}\label{eq: bound on d'}
d'\leq \frac{2d}{\log^{28} N}\leq \frac{4 W}{\log^{28}N}
\end{equation}

holds.
As long as the lengths of all edges in $G$ are bounded by $d'/2$,  our algorithm may produce, at any time, a weakly well-structured cut $(A,B)$ in the current graph $H$, of sparsity at most $\frac{\phi}{4}$. Let $A'=A\cap V(G)$ and $B'=B\cap V(G)$. From \Cref{obs: subdivided cut sparsity}, $(A,B)$ is a weakly well-structured cut in $G$ of sparsity at most $2\phi$. Furthermore, using the algorithm from \Cref{obs: weakly to strongly well str}, we can transform cut $(A,B)$ into a strongly well-structured cut $(A^*,B^*)$ in $G$, of sparsity at most $4\phi$. 
 We then return cut $(A^*,B^*)$, following which graph $G$ is updated with the deletion of the vertices on the side of the cut whose current weight is smaller, and the endpoints of the edges in $E_G(A^*,B^*)$. We update graph $H$ by deleting from it every vertex that was deleted from $G$, and, additionally, for every edge $e$ that was deleted from $G$, its corresponding path $P^*(e)$ is deleted from $H$. As before, the algorithm is allowed to halt and return ``FAIL'' at any time, but we are required that this happens with probability at most $1/N^4$.

\subsection{Partition into Phases and Probability of Failure}
\label{subsec: phases}

Our algorithm is partitioned into phases. A phase terminates if one of the following happened: (i) the algorithm returned ``FAIL'', in which case we say that the phase is of \emph{type 1}; or (ii) at least $\frac{\hat W}{\log^{30}N}$ vertices have been deleted from graph $H$ during the current phase, where $\hat W$ is the number of vertices in $H$ at the beginning of the phase, in which case we say that the phase is of \emph{type 2}; or (iii) at least $\Delta'=\frac{\Delta}{\log^{30}N}$ queries have been responded to; in which case we say that the phase is of \emph{type 3}.
We will ensure that the probability that the algorithm returns ``FAIL'' in a single phase is bounded by $\frac{1}{2}$.
 Since the total number of vertices that ever belong to graph $H$ is bounded by $6W^0\leq O(N)$, we get that the total number of type-2 and type-3 phases is bounded by $O(\log^{30}N\cdot \log N)\leq \log^{32}N$. 
If $\ceil{16\log N}$ consecutive type-1 phases occur, then we terminate the algorithm and return ``FAIL''. Notice that, for each type-2 phase and type-3 phase, the probability that it is followed by $\ceil{16\log N}$ consecutive type-1 phases is bounded by $\left(\frac 1 2 \right )^{\ceil{16\log N}}\leq \frac{1}{N^{16}}$. Therefore, overall, the probability that the algorithm ever returns ``FAIL'' is bounded by $\frac{(\log N)^{32}}{N^{16}}<\frac{1}{N^4}$, as required. The total number of phases in the algorithm is then bounded by $O(\log^{32}N\cdot \log N)\leq O(\log^{33}N)$. 
\iffalse
We will ensure that the total running time of a single phase is bounded by:

\[O\left(\hat W^2\cdot 2^{c_2\sqrt{\log N}}\cdot (\log N)^{16c_2r-c_2}\right ),\]

where $\hat W$ is the number of vertices in $H$ at the beginning of the phase. Since $\hat W\leq 12W^0$, and since $c_2$  is a sufficiently large constant, 
this will ensure that the total running time of the algorithm is bounded by $c_1\cdot  (W^0(G))^2\cdot 2^{c_2\sqrt{\log N}}\cdot (\log N)^{16c_2r}$, as required.
\fi

At a very high level, our algorithm follows a somewhat standard framework, though the specific details are quite different. As a first step, we select a random subset $T$ of vertices of $H$, that we call \emph{terminals}, where every vertex of $H$ is selected to be added to $T$ independently with probability $\frac 1 {d'}$. Next, we will attempt to embed an expander $X$, whose vertex set is $T$, into graph $H$, so that every embedding path has length $\otilde(d')$, and the paths cause edge-congestion $O(\poly\log n)$. We show that, with high probability, if we are unable to compute such an embedding, then we can compute a cut $(A,B)$ in $H$, that can be transformed into a $\frac{(\log N)^{64}}{d}$-sparse strongly well-structured cut $(A',B')$ in $G$ with $w(A'),w(B')\geq \frac{\hat W}{\log^{30}N}$. We then output this cut, and update graphs $G$ and $H$ with the corresponding deletions of vertices and edges, which must lead to the deletion of at least $\frac{\hat W}{\log^{30}N}$ vertices from $H$. Therefore, with high probability, if we cannot successfully compute an expander $X$, and embed it into $H$, the current phase will terminate. Assume now that we have successfully computed an expander $X$ over the set $T$ of roughly $\frac {\hat W}{d'}$ vertices, and embedded it into $H$. 
Then we will employ the \LCDS from \Cref{sec: layered connectivity DS}, with the set $V(X)$ of vertices serving as the set $\beta$ of centers, in order to maintain a collection $\qset^{\out}$ of short paths, connecting every vertex of $V(H)$ to some vertex of $V(X)$. We will also employ the same data structure on graph $\revG$, that is obtained from $G$ by reversing the direction of each of its edges, to maintain a collection $\qset^{\inn}$ of short paths, that, for every vertex $v\in V(H)$, contains a path that connects some vertex of $V(X)$ to $v$.

We now formally describe the algorithm for executing a single phase. 
We assume w.l.o.g. that, at the beginning of the phase, graph $G$ does not contain any edge of length greater than $d'$. Otherwise, as long as such an edge $e=(x,y)$ lies in $G$, then cut $(A,B)$ with $A=\set{x}$ and $B=V(G)\setminus\set{x}$ is a strongly well-structured cut of sparsity $\frac{1}{d'}\leq \phi$, since $w(x)=d'$, and $|E_G(A,B)|=1$. As long as special edges of length greater than $d'$ lie in $G$, we can therefore produce $\phi$-sparse cuts that will result in the removal of such edges from $G$. The time required to perform this preprocessing step, including updating the graph $H$, is $O(|V(H)|)\leq O(W^0)$.
Therefore, we assume that, at the beginning of the phase, $G$ does not contain edges of length greater than $d'$. We denote by $\hat W$ the number of vertices in $H$ at the beginning of the phase.

We start with a preprocessing step, in which we sample the vertices of $H$ to be added to the set $T$. Then we describe our two main data structures. The first data structure maintains a large expander that is defined over the set $T$ of vertices, while the second data structure is \LCDS in graphs $G$ and $\revG$.

\subsection{Preprocessing Step: Subsampling}
\label{subsec: subsample}

Recall that the number of  vertices in $H$ at the beginning of the phase, denoted by $\hat W$, is bounded by $12W^0\leq 12N$.

In this step, we construct a set $T$ of vertices of $H$, that we call \emph{terminals}, as follows: every vertex $v\in V(H)$ is added to $T$ with probability $\frac 1{d'}$ independently. This is the only randomized part in our algorithm for the \MBM problem. Next, we will define four ``bad'' events, and we will show that each of them may only happen with low probability. Eventually, if our algorithm returns ``FAIL'' and halts, we will guarantee that at least one of these bad events must have happened. This will ensure that the probability that the phase terminates with the ``FAIL'' is at most $\half$, as required.

\subsubsection{Bad Event $\event_1$}

Observe first that the expected cardinality of $T$ is $\frac{\hat W}{d'}$. We say that bad event $\event_1$ happens if $|T|>\frac{8\hat W}{d'}$ or $|T|<\frac{\hat W}{2d'}$. From \Cref{lem: Chernoff}, $\prob{|T|>\frac{8\hat W}{d'}}\leq 2^{-8\hat W/d'}\leq \frac{1}{512}$, and $\prob{|T|<\frac{\hat W}{2d'}}\leq e^{-\hat W/(8d')}\leq \frac{1}{512}$, since $d'\leq  \frac{4 \hat W}{\log^{28}N}$ from Inequality \ref{eq: bound on d'}. 
Overall, we get that $\prob{\event_1}\leq \frac{1}{256}$, and, if $\event_1$ does not happen, then $\frac{\hat W}{2d'}\leq |T|\leq \frac{8\hat W}{d'}$ holds.

Before we define the remaining two bad events, we need to inspect the well-structured cuts of $H$ more closely.

\subsubsection{Well-Structured Cuts and Their Structure}

Recall that a cut $(A,B)$ of $H$ is weakly well-structured iff all edges of $E_H(A,B)$ are special. Let $E'$ be any subset of special edges of $H$. We denote by $\cset(E')$ the collection of all weakly well-structured cuts $(A,B)$ in $H$ with $E_H(A,B)=E'$. Next, we investigate the structure of the cuts in $\cset(E')$.

Let $J(E')$ denote the set of all vertices $v\in V(H)$, such that some edge of $E'$ leaves $v$, and let $J'(E')$ denote the set of all vertices $u\in V(H)$, such that some edge of $E'$ enters $u$. Next, we denote by $S^*(E')$ the set of all vertices $v\in H$, such that there is a path in $H\setminus E'$ connecting a vertex of $J(E')$ to $v$. Notice that, for every cut $(A,B)\in \cset(E')$, it must be the case that $S^*(E')\subseteq A$. Indeed, assume otherwise, and let $v\in S^*(E')$ be any vertex that lies in $B$. Since $J(E')\subseteq A$ must hold, there is a path $P$ connecting a vertex of $A$ to $v$, such that $P$ is disjoint from the edges of $E'$. But then some edge of $P$ must lie in $E_H(A,B)$, contradicting the fact that $E_H(A,B)=E'$.

Similarly, we denote by $S^{**}(E')$ the set of all vertices $v\in H$, such that there is a path in $H\setminus E'$ connecting $v$ to a vertex of $J'(E')$. From the same arguments as above, for every cut $(A,B)\in \cset(E')$,  $S^{**}(E')\subseteq B$ must hold.

Let $\xset'$ be the set of all strongly connected components of $H\setminus \left (S^*(E')\cup J(E')\cup S^{**}(E')\cup J'(E')\right )$, where for each $X\in \xset'$, we view $X$ as a set of vertices corresponding to the connected component. Denote $\xset(E')=\xset'\cup\set{S^*(E')\cup J(E'),S^{**}(E')\cup J'(E')}$. Clearly, $\xset(E')$ is a partition of $V(H)$. Consider again any cut $(A,B)\in \cset(E')$. As observed already, $S^*(E')\cup J(E')\subseteq A$ and $S^{**}(E')\cup J'(E')\subseteq B$ must hold. Moreover, it is easy to see that, since $E_H(A,B)=E'$, for all $X\in \xset'$, either $X\subseteq A$, or $X\subseteq B$ must hold. From the definition of the sets $S^*(E'),S^{**}(E')$ of vertices, graph $H\setminus E'$ contains no path connecting a vertex of $S^*(E')\cup J(E')$ to a vertex lying outside this set. Similarly, no path in $H\setminus E'$ may connect a vertex of $V(H)\setminus \left(S^{**}(E') \cup J'(E')\right )$ to a vertex of $S^{**}(E')\cup J'(E')$. It is now easy to verify that, for every pair $X,X'\in \xset(E')$ of sets, it cannot be the case that $H\setminus E'$ contains both a path $P$ connecting a vertex of $X$ to a vertex of $X'$, and a path $P'$ connecting a vertex of $X'$ to a vertex of $X$. Indeed, if such paths did exist, then neither of these paths could contain vertices of $S^*(E')\cup J(E')\cup S^{**}(E')\cup J'(E')$, and so the vertices of $X\cup X'$ should have belonged to a single strongly connected component of $H\setminus \left(S^*(E')\cup J(E')\cup S^{**}(E')\cup J'(E')\right)$.

We can now define an arbitrary topological ordering $(X_1,X_2,\ldots,X_q)$ of the sets in $\xset(E')$, such that $X_1=S^{**}(E')\cup J(E')$, $X_q=S^{*}(E')\cup J'(E')$, and for all $1\leq i<j\leq q$, there is no edge of $E(H)\setminus E'$ that connects a vertex of $X_j$ to a vertex of $X_i$. Notice that, for all $1\leq i\leq q$, we can define a cut $(A_i,B_i)$ with $A_i=X_{i+1}\cup\cdots\cup X_q$ and $B_i=X_1\cup\cdots\cup X_i$. From our discussion so far, the only edges that connect vertices of $A_i$ to vertices of $B_i$ are the edges of $E'$, and so for each such $1\leq i< r$, cut $(A_i,B_i)$ lies in $\cset(E')$. We call all cuts in $\set{(A_i,B_i)\mid 1\leq i\leq q}$ \emph{canonical cuts} corresponding to $E'$, and denote the set of all such cuts by $\cset'(E')$. Notice that $\cset'(E')\subseteq \cset(E')$, and it is possible that $\cset(E')\setminus\cset'(E')\neq \emptyset$.
Note that the definition of the set $\cset'(E')$ of cuts depends on the topological ordering of the sets in $\xset(E')$, and there may be several such topological orderings. We assume that we are given some tie-breaking rule that selects a single topological ordering of the sets in $\xset(E')$, and we define $\cset'(E')$ with respect to that topological ordering. We also assume that the above unique topological ordering can be computed in time $O(n)$, given the sets in $\xset(E')$.
We are now ready to define the second bad event.

\subsubsection{Bad Event $\event_2$}

 We start with intuition. Recall that our goal is to embed an expander over the set $T$ of vertices into $H$, with polylogarithmic congestion. Intuitively, if we are not able to do so, we will construct a cut $(A,B)$ in $H$, such that $|T\cap A|,|T\cap B|\geq \Omega(|T|/\poly\log N)$, and $|E_H(A,B)|$ is sufficiently low. If this happens, we would like to be able to compute a sparse and a roughly balanced cut in graph $H$. If we denote by $E'=E_H(A,B)$, then ideally we would like to ensure that some canonical cut corresponding to $E'$ has this property. The second bad event happens if this is not the case for some such cut $(A,B)$. Next, we formally define the second bad event, and show that it can only happen with small probability.

We say that bad event $\event_2$ happens if there is a weakly well-structured cut $(A,B)$ in $H$, with $|A\cap T|,|B\cap T|\geq \frac{|T|}{\log^4N}$, and $|E_H(A,B)|\leq \frac{|T|}{\log^6N}$, such that, if we denote by $E'=E_H(A,B)$, then for every canonical cut $(A',B')\in \cset'(E')$, either $|A'|\leq \frac{\hat W}{\log^{5}N}$, or $|B'|\leq \frac{\hat W}{\log^{5}N}$ holds.

\begin{claim}\label{claim: second bad event}
	\[\prob{\event_2}\leq \frac{1}{64}.\]
\end{claim}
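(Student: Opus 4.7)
The plan is to bound $\Pr[\event_2]$ by a union bound over the possible edge-sets $E'=E_H(A,B)$, and for each candidate $E'$ use the structural analysis of $\cset(E')$ together with Chernoff to rule out a witness cut. Concretely, I would write $\Pr[\event_2]\le \Pr[\event_1]+\Pr[\event_2\cap\neg\event_1]$ and work conditionally on $\neg\event_1$, under which $\hat W/(2d')\le |T|\le 8\hat W/d'$, so any candidate set $E'$ has $|E'|\le |T|/\log^6 N\le 8\hat W/(d'\log^6 N)$.

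Next I would analyze a fixed $E'$ for which every canonical cut $(A_i,B_i)\in\cset'(E')$ is unbalanced, i.e.\ $\min(|A_i|,|B_i|)\le \hat W/\log^5 N$. Using that $|A_i|$ is non-increasing and $|B_i|$ is non-decreasing in $i$, with $|B_0|=0$, I pick $i^*$ to be the largest index in $\{1,\dots,q\}$ with $|B_{i^*-1}|\le \hat W/\log^5 N$. If $i^*\le q-1$ the unbalance hypothesis at $(A_{i^*},B_{i^*})$ forces $|A_{i^*}|\le \hat W/\log^5 N$ (since $|B_{i^*}|>\hat W/\log^5 N$ by maximality); the case $i^*=q$ is immediate. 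Combining $|B_{i^*-1}|+|A_{i^*}|\le 2\hat W/\log^5 N$ and observing that $B_{i^*-1}\cup A_{i^*}=V(H)\setminus X_{i^*}$ shows that $X_{i^*}$ is a giant cluster with $|X_{i^*}|\ge \hat W(1-2/\log^5 N)$.

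The key structural consequence is that every cut $(A,B)\in\cset(E')$ has $X_{i^*}$ entirely on one side, so one side (say $B$) is contained in $V(H)\setminus X_{i^*}$, a set of size at most $2\hat W/\log^5 N$. For $(A,B)$ to be a witness I need $|B\cap T|\ge |T|/\log^4 N\ge \hat W/(2d'\log^4 N)$, but the random variable $|T\cap (V(H)\setminus X_{i^*})|$ has mean at most $2\hat W/(d'\log^5 N)$, so it must exceed its mean by a factor $\ge \log N/4$. Since $d'\le 4\hat W/\log^{28}N$, the required deviation satisfies $t\ge \log^{24}N/8$, and the first inequality of Lemma~\ref{lem: Chernoff} yields probability at most $2\cdot 2^{-t}\le 2\cdot 2^{-\log^{24}N/8}$ (the factor $2$ covers the symmetric case $X_{i^*}\subseteq B$).

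Finally, I would union-bound over candidate $E'$. Since $|E(H)|\le O((W^0)^2)$ gives $\log|E(H)|=O(\log N)$, the number of edge-sets of size at most $8\hat W/(d'\log^6 N)$ is at most $2^{O(\hat W/(d'\log^5 N))}$. The dominant exponent in the product is $-\Omega(\hat W/(d'\log^4 N))\le -\Omega(\log^{24}N)$, so $\Pr[\event_2\cap\neg\event_1]$ is far smaller than $1/128$; together with $\Pr[\event_1]\le 1/256$ this yields $\Pr[\event_2]\le 1/64$. The main obstacle is the structural decomposition step: one must recognize that the ``all canonical cuts unbalanced'' hypothesis concentrates nearly all of $V(H)$ into a single $X_{i^*}$, thereby reducing what would otherwise be an intractable union bound over exponentially many small sets $A$ or $B$ to a single Chernoff estimate on the fixed set $V(H)\setminus X_{i^*}$ per candidate $E'$.
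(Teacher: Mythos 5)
Your proposal is correct and follows essentially the same route as the paper's proof: a union bound over candidate edge-sets $E'$, a structural argument showing that if every canonical cut in $\cset'(E')$ is unbalanced then a single cluster of $\xset(E')$ contains all but $2\hat W/\log^5 N$ vertices of $H$, and a Chernoff bound on the number of terminals falling into the small complement, combined with the $2^{O(\hat W/(d'\log^5 N))}$ count of candidate sets and the separate $1/256$ charge for $\event_1$. The only differences are cosmetic: you obtain the giant-cluster fact via a monotone prefix-threshold argument where the paper does a case analysis on the largest cluster, and you import $|T|\geq \hat W/(2d')$ from $\neg\event_1$ where the paper instead adds a second Chernoff bound on $|T\cap X|$.
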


\begin{proof}
%	Throughout the proof, we assume that Event $\event_1$ did not happen, so $\frac{\hat W}{2d'}\leq |T|\leq \frac{8\hat W}{d'}$ holds. 
Recall that, if event $\event_1$ does not happen, then $\frac{\hat W}{2d'}\leq |T|\leq \frac{8\hat W}{d'}$ holds, and so, if $E'$ is a set of special edges with $|E'|\leq \frac{|T|}{\log^6N}$, then $|E'|\leq \frac{8\hat W}{d'\log^6N}$.
	
	For every subset $E'$ of at most $\frac{8\hat W}{d'\log^6N}$ special edges, we define a bad event $\event(E')$ to be the event that there is some cut $(A,B)$ in $H$ with $E_H(A,B)=E'$, such that $|A\cap T|,|B\cap T|\geq \frac{|T|}{\log^4N}$, and for every canonical cut $(A',B')\in \cset'(E')$, either $|A'|\leq \frac{\hat W}{\log^{5}N}$, or $|B'|\leq \frac{\hat W}{\log^{5}N}$ holds.
	We bound $\prob{\event(E')}$ for each such set $E'$ of edges separately, and then use the union bound in order to bound $\prob{\event_2}$.
	
	Fix any subset $E'$ of at most $\frac{8\hat W}{d'\log^6N}$ special edges. Let $X\in \xset(E')$ be the set of vertices with maximum cardinality. We claim that Event $\event(E')$ may only happen if $|X|\geq \hat W-\frac{2\hat W}{\log^{5}N}$, in the following simple observation.
	
	\begin{observation}\label{bad event large component}
		If  $\event(E')$ happened, then $|X|\geq \hat W-\frac{2\hat W}{\log^{5}N}$ must hold.
	\end{observation}
\begin{proof}
	Assume otherwise, that is, Event $\event(E')$ happened, but  $|X|< \hat W-\frac{2\hat W}{\log^{5}N}$. 
	 Denote $\xset(E')=(X_1,X_2,\ldots,X_q)$, where the sets are listed according to their topological order, and assume that $X=X_i$. 
	 
	 Assume first that $|X_i|\geq \frac{\hat W}{4}$. Then either $|\bigcup_{i'<i}X_{i'}|\geq \frac{\hat W}{\log^{5}N}$, or  $|\bigcup_{i'>i}X_{i'}|\geq \frac{\hat W}{\log^{5}N}$. In the former case, cut $(A_{i},B_{i})$ is a canonical cut with $|A_{i}|,|B_{i}|\ge \frac{\hat W}{\log^{5}N}$, while in the latter case, cut  $(A_{i-1},B_{i-1})$ is a canonical cut with $|A_{i-1}|,|B_{i-1}|\ge \frac{\hat W}{\log^{5}N}$, a contradiction (if $i=1$ then we are guaranteed that $(A_i,B_i)$ is the cut with the desired properties, and if $i=q$, then we are guaranteed that $(A_{q-1},B_{q-1})$ is such a cut).
	 
	 Assume next that $|X_i|<\frac{\hat W}{4}$. In this case, we let $1<j<q$ be the smallest index for which $\sum_{j'<j}|X_{j'}|\geq \frac{\hat W}{2}$. It is easy to verify that $\sum_{j'<j}|X_{j'}|\leq \frac{3\hat W}{4}$ must hold, and so cut $(A_{j},B_{j})\in \cset'(E')$ satisfies that both $|A_{j}|,|B_{j}|\geq \frac{\hat W}{\log^{5}N}$, a contradiction.
	\end{proof}
	 
	 We assume from now on that  $|X|\geq \hat W-\frac{2\hat W}{\log^{5}N}$ holds for the set $E'$ of edges. Let $Y=V(H)\setminus X$, so $|Y|\leq \frac{2\hat W}{\log^{5}N}$. Consider now any cut $(A,B)\in \cset(E')$, and assume that $|A\cap T|,|B\cap T|\geq \frac{|T|}{\log^4N}$. Notice that all vertices of $X$ must lie on one side of the cut $(A,B)$. If $X\subseteq A$, then $B\subseteq Y$, and, since $|T\cap B|\geq \frac{|T|}{\log^4N}$, we get that $|T\cap Y|\ge \frac{|T|}{\log^4N}$. Similarly, if $X\subseteq B$, then $A\subseteq Y$, and, since 
	$|T\cap A|\geq \frac{|T|}{\log^4N}$, we get that $|T\cap Y|\ge \frac{|T|}{\log^4N}$. In either case, we conclude that Event $\event(E')$ may only happen if $|X|\geq \hat W-\frac{2\hat W}{\log^{5}N}$ and $|T\cap Y|\geq \frac{|T|}{\log^4N}$. 
	
	We start by bounding the probability that $|T\cap X|<\frac{\hat W}{4d'}$. Since we have assumed that $|X|\geq \hat W-\frac{2\hat W}{\log^{5}N}\geq \frac{\hat W}{2}$, and since every vertex is selected into $T$ independently with probability $1/d'$, we get that $\expect{|T\cap X|}\geq \frac{\hat W}{2d'}$. From \Cref{lem: Chernoff}, $\prob{|T\cap X|<\frac{\hat W}{4d'}}\leq e^{-\hat W/(16d')}$.
	
	Assume now that $|T\cap X|\geq \frac{\hat W}{4d'}$, so in particular $|T|\geq \frac{\hat W}{4d'}$. In this case, if $|T\cap Y|\geq \frac{|T|}{\log^4N}$, then $|T\cap Y|\geq \frac{\hat W}{4d'\log^4N}$ must hold.
	Since $\expect{|T\cap Y|}= \frac{|Y|}{d'}\leq \frac{2\hat W}{d'\log^5N}$, from \Cref{lem: Chernoff}, $\prob{|T\cap Y|\geq \frac{\hat W}{4d'\log^4N}}\leq 2^{-\hat W/(4d'\log^4N)}\leq e^{-\hat W/(16d')}$.
	
Altogether, we get that:

\[\prob{\event(E')}\leq \prob{|T\cap X|<\frac{\hat W}{4d'}}+\prob{|T\cap Y|\geq \frac{|T|}{\log^4N}\mid |T\cap X|\geq \frac{\hat W}{4d'}}\leq 2\cdot e^{-\hat W/(16d')}. 
\]

	Observe that $H$ has $\hat W$ special edges, and so the number of sets $E'$ containing at most $\frac{8\hat W}{d'\log^6N}$ special edges is bounded by:
	
	\[ (2\hat W)^{8\hat W/(d'\log^6N)}\leq (24N)^{8\hat W/(d'\log^6N)}\leq 2^{32\hat W/(d'\log^5N)} .\]
	
	(We have used the fact that $\hat W\leq 12\hat W^0\leq 12N$). Using the union bound over all sets $E'$ of at most  $\frac{8\hat W}{d'\log^6N}$ special edges, we get that the probability that event $\event(E')$ happens for any such set of edges is bounded by:
	
	\[ 2^{\frac{32\hat W}{d'\log^5N}} \cdot 2\cdot 2^{-\frac{\hat W}{16d'\log^4N}}\leq 2^{-\frac{\hat W}{32d'\log^4N}}\leq \frac 1 {512}, \]
	
	since $d'\leq  \frac{8 \hat W}{\log^{28}N}$ from Inequality \ref{eq: bound on d'}.

Finally, if Event $\event_2$ happened, then either $\event_1$ happened, or, for some set $E'$ of at most  $\frac{8\hat W}{d'\log^6N}$ special edges, event $\event(E')$ has happened. Therefore, overall, $\prob{\event_2}\leq \frac{1}{256}+\frac{1}{512}\leq \frac{1}{64}$.
\end{proof}

\subsubsection{Bad Event $\event_3$}

We start again with intuition. Once we construct the expander $X$ and embed it into $H$, we will need to route the remaining vertices of $H$ to the vertices of $T$ via paths that cause congestion roughly $\otilde(d')$. If we are unable to do so, we will compute a sparse cut that separates some vertices of $H$ from most terminals in $T$. Denote this cut by $(A,B)$. The bad scenairo that we would like to avoid is when one side of the cut (say $A$) contains almost all vertices of $H$, while the other side of the cut contains few vertices of $H$, but it contains most terminals. If this happens, then the vertices of $B$ will be subsequently deleted from $H$, thereby destroying the expander that we have constructed. As we need to invest a large amount of time into constructing the expander $X$ and embedding it into $H$, we would like to avoid such a situation. Let $E'=E_G(A,B)$. If some canonical cut $(A',B')\in \cset'(E')$ is roughly balanced with respect to the vertices of $H$, then we can return cut $(A',B')$ instead of cut $(A,B)$, and ensure that a sufficiently large number of vertices is deleted from $H$, and the current phase can be terminated. However, if $\cset'(E')$ contains no such cut, then it is not clear how to bypass the problem mentioned above. The third bad event is designed to capture exactly this scenario. We now formally define it and show that it is unlikely to happen.

We say that the bad event $\event_3$ happens if there is 
a weakly well-structured cut $(A,B)$ in $H$, with $|E_H(A,B)|\leq \frac{|T|}{\log^6N}$, such that one of the sides of the cut contains at least $\frac{\hat W} 2$ vertices of $H$ and at most $\frac{|T|}{32}$ vertices of $T$, and additionally,  if we denote by $E'=E_H(A,B)$, then for every canonical cut $(A',B')\in \cset'(E')$, either $|A'|\leq \frac{\hat W}{\log^{5}N}$, or $|B'|\leq \frac{\hat W}{\log^{5}N}$ holds.

\begin{claim}\label{claim: third bad event}
	\[\prob{\event_3}\leq \frac{1}{64}.\]
\end{claim}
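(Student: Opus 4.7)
The proof should closely parallel that of \Cref{claim: second bad event}, since the structural analysis of canonical cuts is identical: the hypothesis that every canonical cut in $\cset'(E')$ has one side of size at most $\hat W/\log^5 N$ forces (as in \Cref{bad event large component}) the largest strongly-connected component $X\in \xset(E')$ to satisfy $|X|\geq \hat W - \tfrac{2\hat W}{\log^5 N}$. This reduction to ``$X$ is huge, $Y := V(H)\setminus X$ is tiny'' is the common backbone of the proofs for $\event_2$ and $\event_3$.

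The divergence from the $\event_2$ proof lies in the terminal-sampling argument. For $\event_3$, suppose without loss of generality that $A$ is the side with $|A|\geq \hat W/2$ and $|A\cap T|\leq |T|/32$. Since $|X|\geq \hat W - \tfrac{2\hat W}{\log^5 N} > \hat W/2\geq |B|$ for $N$ large enough, and $X$ must lie entirely on one side of any weakly well-structured cut obtained from $E'$, we deduce $X\subseteq A$, whence $B\subseteq Y$ and therefore $|T\cap Y|\geq |T\cap B|\geq \tfrac{31|T|}{32}$. Conditioning on $\neg \event_1$ gives $|T|\geq \tfrac{\hat W}{2d'}$, so we need $|T\cap Y|\geq \tfrac{31\hat W}{64 d'}$, whereas $\expect{|T\cap Y|} = |Y|/d' \leq \tfrac{2\hat W}{d' \log^5 N}$. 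The target exceeds $2e$ times the expectation by a factor of order $\log^5 N$, so the Chernoff bound (\Cref{lem: Chernoff}) yields $\prob{|T\cap Y|\geq \tfrac{31\hat W}{64d'}}\leq 2^{-31\hat W/(64d')}$, with a symmetric bound when $B$ is the dominant side.

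To finish, I would define, for each candidate edge set $E'$ of at most $\tfrac{8\hat W}{d' \log^6 N}$ special edges (using the bound $|T|\leq 8\hat W/d'$ from $\neg \event_1$), an event $\event'(E')$ that there exists such a cut witnessing $\event_3$ with $E_H(A,B)=E'$. The above gives $\prob{\event'(E')\mid \neg \event_1}\leq 2\cdot 2^{-31\hat W/(64d')}$. The number of such $E'$ is at most $(2\hat W)^{8\hat W/(d'\log^6 N)}\leq 2^{32\hat W/(d'\log^5 N)}$, exactly as in the $\event_2$ proof. A union bound plus $\prob{\event_1}\leq 1/256$ yields
\[
\prob{\event_3}\leq \tfrac{1}{256} + 2^{32\hat W/(d'\log^5 N)}\cdot 2\cdot 2^{-31\hat W/(64d')}\leq \tfrac{1}{64},
\]
where the second bound follows from $d'\leq 4\hat W/\log^{28} N$ (\Cref{eq: bound on d'}), which ensures the exponent $31\hat W/(64d')$ dominates $32\hat W/(d'\log^5 N)$ by a polylogarithmic factor.

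The main obstacle is bookkeeping rather than a genuine new idea: one must verify that the terminal deficit condition (``at most $|T|/32$ terminals on the large side'') yields an essentially equivalent Chernoff deviation to the one used for $\event_2$, and that the structural reduction to ``$X$ huge'' survives the asymmetric hypothesis about terminals. Once those are checked, the union-bound calculation is identical to that of \Cref{claim: second bad event}.
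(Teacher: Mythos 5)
Your proof is correct and takes essentially the same route as the paper: the same reduction (using only the canonical-cut hypothesis) to a single set $X\in \xset(E')$ with $|X|\geq \hat W-\tfrac{2\hat W}{\log^5N}$, a Chernoff deviation bound showing $Y=V(H)\setminus X$ is unlikely to capture so many terminals, and the same union bound over the at most $2^{32\hat W/(d'\log^5N)}$ candidate edge sets $E'$ together with $\prob{\event_1}\leq \tfrac{1}{256}$. The only cosmetic difference is that you invoke $\neg\event_1$ to lower-bound $|T|$ and then apply a single upper-tail bound to $|T\cap Y|$, whereas the paper avoids that by combining a lower-tail bound on $|T\cap X|$ with an upper-tail bound on $|T\cap Y|$; both give the same conclusion with comparable exponents.
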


\begin{proof}
	As before, if event $\event_1$ does not happen, and $E'$ is a set of special edges with $|E'|\leq \frac{|T|}{\log^6N}$, then $|E'|\leq \frac{8\hat W}{d'\log^6N}$ must hold.
For every subset $E'$ of at most $\frac{8\hat W}{d'\log^6N}$ special edges, we define a bad event $\event'(E')$ to be the event that there is some weakly well-structured cut $(A,B)$ in $H$ with $E_H(A,B)=E'$, such that one side of the cut cut contains at least $\hat W/2$ vertices of $H$ and at most $\frac{|T|}{32}$ vertices of $T$, and additionally,
 for every canonical cut $(A',B')\in \cset'(E')$, either $|A'|\leq \frac{\hat W}{\log^{5}N}$, or $|B'|\leq \frac{\hat W}{\log^{5}N}$ holds.
	We bound $\prob{\event'(E')}$ for each such set $E'$ of edges separately, and then use the union bound in order to bound $\prob{\event_3}$.
	
	Fix any subset $E'$ of at most $\frac{8\hat W}{d'\log^6N}$ special edges. Let $X\in \xset(E')$ be the set of vertices with maximum cardinality. Using the same arguments as in the proof of \Cref{claim: second bad event}, Event $\event'(E')$ may only happen if $|X|\geq \hat W-\frac{2\hat W}{\log^{5}N}$. We assume from now on that  $|X|\geq \hat W-\frac{2\hat W}{\log^{5}N}$ holds.
	Let $Y=V(H)\setminus X$, so that $|Y|\leq \frac{2\hat W}{\log^5N}$.

Consider now some weakly well-structured cut $(A,B)$ in $H$ with $E_H(A,B)=E'$, and assume that one of the sets $Z\in \set{A,B}$ contains at least $\hat W/2$ vertices of $H$ and at most $\frac{|T|}{32}$ vertices of $T$. Since set $X$ must be contained in one of the sides of the cut, it must be the case that $X\subseteq Z$, and $|T\cap X|\leq \frac{|T|}{32}=\frac{|T\cap X|+|T\cap Y|}{32}$. Therefore, $|T\cap X|\leq \frac{|T\cap Y|}{31}$ must hold.

Using the same reasoning as in the proof of  \Cref{claim: second bad event}, $\prob{|T\cap X|<\frac{\hat W}{4d'}}\leq e^{-\hat W/(16d')}$. 

Assume now that $|T\cap X|\geq \frac{\hat W}{4d'}$. In this case, if $|T\cap X|\leq \frac{|T|}{32}$ holds, then $|T\cap Y|\geq 31|T\cap X|\geq \frac{31\hat W}{4d'}$ must hold.
Notice that, since $|Y|\leq \frac{2\hat W}{\log^5N}$, $\expect{|T\cap Y|}\leq \frac{2\hat W}{d'\log^5N}$. From \Cref{lem: Chernoff}, $\prob{|T\cap Y|\geq \frac{31\hat W}{4d'}}\leq 2^{-31\hat W/(4d')}\leq e^{-\hat W/(16d')}$.

Altogether, we get that:

\[\prob{\event'(E')}\leq \prob{|T\cap X|<\frac{\hat W}{4d'}}+\prob{|T\cap X|\leq \frac{|T|}{32}\mid |T\cap X|\geq \frac{\hat W}{4d'}}\leq 2\cdot e^{-\hat W/(16d')}. 
\]

As in the proof of \Cref{claim: second bad event}, 	the number of sets $E'$ containing at most $\frac{8\hat W}{d'\log^6N}$ special edges is bounded by $2^{32\hat W/(d'\log^5N)}$.

Using the union bound over all sets $E'$ of at most  $\frac{8\hat W}{d'\log^6N}$ special edges, we get that the probability that event $\event'(E')$ happens for any such set of edges is bounded by:

\[ 2^{\frac{32\hat W}{d'\log^5N}} \cdot 2\cdot 2^{-\frac{\hat W}{16d'\log^4N}}\leq 2^{-\frac{\hat W}{32d'\log^4N}}\leq \frac 1 {512}, \]

since $d'\leq  \frac{8 \hat W}{\log^{28}N}$ from Inequality \ref{eq: bound on d'}.

Finally, if Event $\event_3$ happened, then either event $\event_1$ happened, or, for some set $E'$ of at most  $\frac{8\hat W}{d'\log^6N}$ special edges, event $\event'(E')$ has happened. Therefore, overall, $\prob{\event_3}\leq \frac{1}{256}+\frac{1}{512}\leq \frac{1}{64}$.
\end{proof}

\subsubsection{Bad Event $\event_4$}

Recall that for every edge $e\in E(G)$, there is a corresponding path $P^*(e)$, containing $2\ell(e)$ vertices, that represents $e$ in graph $H$. Recall that we have ensured that, at the beginning of the phase, the length of every edge $e\in E(G)$ is at most $d'$. The fourth bad event, $\event_4$, happens if there is some edge $e\in E(G)$, such that at least $16\log N$ vertices of $P^*(e)$ are in $T$.

\begin{claim}
	$\prob{\event_4}\leq \frac{1}{64}$.
\end{claim}
\begin{proof}
	For every special edge $e\in E(G)$, let $\event''(e)$ be the bad event that at least $16\log N$ vertices of $P^*(e)$ are added to $T$. Since the total number of vertices on $P^*(e)$ is $2\ell(e)\leq 2d'$, and every edge is chosen to be added to $T$ with probability $\frac{1}{d'}$, the expectation of $|P^*(e)\cap T|$ is at most $2$. From \Cref{lem: Chernoff}, the probability of $\event''(e)$ is then bounded by $2^{-16\log N}\leq N^{-16}$.
	
	By using the union bound over all special edges $e\in E(G)$, and since the number of such edges is bounded by $n\leq N$, we get that:
	
	\[\prob{\event_4}\leq N\cdot N^{-16}\leq \frac{1}{64}.\]
\end{proof}

Lastly, we let $\event$ be the bad event that either of the events $\event_1,\event_2$, $\event_3$, or $\event_4$ happened. Clearly:

\[\prob{\event}\leq \sum_{i=1}^4\prob{\event_i}\leq \frac 1 {16}.\]

Note that we can check, in time $O(N)$, whether events $\event_1$ and $\event_4$ hold. If either of these events holds, we terminate the algorithm and return ``FAIL''. Therefore, we assume from now on that neither of the events $\event_1,\event_4$ happened.

As the phase progresses, vertices may be deleted from graphs $G$ and $H$. We view the set $T$ of terminal vertices as static, that is, it does not change even as some vertices of $T$ are deleted from $H$.

\subsection{Finalizing Cuts}
\label{subsec: finalizing cuts}

In this subsection we define a special kinds of cuts in graph $H$, that we call \emph{finalizing} cuts. If, at any time during the execution of a phase, our algorithm computes such a cut in the current graph $H$, we will be able to terminate the current phase, by either correctly establishing that at least one of the events $\event_2,\event_3$ must have happened, or by providing a strongly well-structured $\phi$-sparse cut $(A,B)$ in the current graph $G$, such that $w(A),w(B)\geq \frac{\hat W}{\log^{30}N}$. In the former case, we will terminate the current phase with a ``FAIL'', while in the latter case, once cut $(A,B)$ is processed, and the corresponding vertices and edges are deleted from $H$, we are guaranteed that at least $\frac{\hat W}{\log^{30}N}$ vertices have been deleted from $H$ since the beginning of the phase, and so the current phase terminates. We are now ready to define finalizing cuts in $H$.

\begin{definition}[Finalizing Cut]\label{def: finalizing cut}
	Let $(A,B)$ be a cut in graph $H$ at any time during the execution of a phase. We say that the cut $(A,B)$ is a \emph{finalizing} cut, if the following hold:
	
	\begin{itemize}
		\item all edges in $E_H(A,B)$ are special edges; 
		\item $|E_H(A,B)|\leq \frac{|T|}{8\log^6N}$; and
		\item one of the sides of the cut $Z\in \set{A,B}$ contains at least $|V(H)|-\frac{\hat W}{32}$ vertices of $H$, and at most $|T|\cdot \left(1-\frac{7}{\log^4N}\right )$ vertices of $T$.
	\end{itemize}
\end{definition}

Next, we show that, if our algorithm ever produces a finalizing cut during a phase, then the phase can be terminated, in the following claim.

\begin{claim}\label{claim: finalizing cut}
	Suppose, at any time during the execution of a phase, there is a finalizing cut $(\tilde A,\tilde B)$ in the current graph $H$. Assume further that neither of the events $\event_2,\event_3$ has happened, and that every edge in the current graph $G$ has length at most $d'/2$. Then there is a strongly well-structured $\phi$-sparse cut $(A^*,B^*)$ in the current graph $G$, such that $w(A^*),w(B^*)\geq \frac{\hat W}{\log^{30}N}$. Moreover, there is a deterministic algorithm, that, given the graph $G$ at the beginning of the current phase (that we denote by $G_0$), the sequence of the strongly well-structured $\phi$-sparse cuts $(A_1,B_1),\ldots(A_q,B_q)$ in $G$ that the algorithm has produced during the current phase so far,  the current graphs $G$ and $H$, and the finalizing cut $(\tilde A,\tilde B)$ in the current graph $H$, either correctly estabishes that at least one of the events $\event_2,\event_3$ has happened,  or computes a strongly well-structured $\phi$-sparse cut $(A^*,B^*)$ in the current graph $G$, such that $w(A^*),w(B^*)\geq \frac{\hat W}{\log^{30}N}$. The running time of the algorithm is $O(\hat W^2)$.
\end{claim}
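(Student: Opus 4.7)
My plan is to compute the canonical cuts $\cset'(E')$ associated with $E' := E_H(\tilde A, \tilde B)$ in the current $H$ (as introduced in Subsection~7.2) and case-split on whether a balanced canonical cut exists. Concretely, I would first compute $J(E'), J'(E'), S^*(E'), S^{**}(E')$ by standard reachability searches in $H$ (and in the edge-reversed copy of $H$), then run Tarjan's SCC algorithm on $H\setminus(S^*(E') \cup J(E') \cup S^{**}(E') \cup J'(E'))$, and finally fix a topological order $(X_1,\ldots,X_q)$ of $\xset(E')$ consistent with the tie-breaking rule. This yields all canonical cuts $(A_i, B_i) = (X_{i+1}\cup\cdots\cup X_q,\, X_1\cup\cdots\cup X_i)$ in $O(\hat W^2)$ time, each satisfying $E_H(A_i, B_i) = E'$.

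If some canonical cut $(A', B')$ has $|A'|, |B'| \geq \hat W/\log^5 N$, I output it after converting to $G$. From $\neg\event_1$ we have $|T| \leq 8\hat W/d'$ and hence $|E'| \leq |T|/(8\log^6 N) \leq \hat W/(d'\log^6 N)$, so the sparsity of $(A', B')$ in $H$ is $O(1/(d'\log N)) = O(\phi/\log N)$. Since every current edge of $G$ has length at most $d'/2 \leq 1/(4\cdot \mathrm{sparsity})$, Observation~\ref{obs: subdivided cut sparsity} converts the trace $(A' \cap V(G), B' \cap V(G))$ into a weakly well-structured cut $(A, B)$ in $G$ of sparsity $\leq \phi/2$ with $w(A), w(B) \geq \hat W/(8\log^5 N)$; Observation~\ref{obs: weakly to strongly well str} then upgrades $(A, B)$ to a strongly well-structured $\phi$-sparse cut $(A^*, B^*)$ in $G$ with $w(A^*), w(B^*) \geq \hat W/(32\log^5 N) \geq \hat W/\log^{30} N$, as required.

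Otherwise every canonical cut is imbalanced, and I claim that $\event_2$ or $\event_3$ must have occurred. Let $Z \in \{\tilde A, \tilde B\}$ be the side picked out by the finalizing cut, so $|Z \cap V(H)| \geq |V(H)| - \hat W/32 \geq \hat W/2$ (since at most $\hat W/\log^{30} N$ vertices have been deleted during this phase, else the phase would have terminated) and $|T \cap Z| \leq |T|(1 - 7/\log^4 N)$, whence $|T \cap \bar Z| \geq 7|T|/\log^4 N$. If additionally $|T \cap Z| \geq |T|/\log^4 N$, then both sides of $(\tilde A, \tilde B)$ contain at least $|T|/\log^4 N$ terminals while $|E'| \leq |T|/\log^6 N$ and every canonical cut in $\cset'(E')$ is imbalanced — precisely $\event_2$. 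Otherwise $|T \cap Z| < |T|/\log^4 N \leq |T|/32$ (valid for large $N$), and $Z$ is the side of a weakly well-structured cut with $\geq \hat W/2$ vertices and at most $|T|/32$ terminals, satisfying the canonical-cut-imbalance condition — precisely $\event_3$.

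The main obstacle I foresee is that $\event_2, \event_3$ are defined with respect to the graph $H$ at the start of the phase, whereas $(\tilde A, \tilde B)$ lives in the current $H$. I would handle this by lifting $(\tilde A, \tilde B)$ to a cut in the initial $H$: the $\leq \hat W/\log^{30} N$ vertices deleted during the phase are placed into $Z$, and any newly-introduced cut edges (including any stray regular edges, handled by a local repair analogous to Observation~\ref{obs: weakly to strongly well str}) are absorbed into an enlarged $E'$. Because every cut produced so far in the phase is $\phi$-sparse and the deleted weight is tiny, the number of extra cut edges is $O(\hat W/(d'\log^{20} N))$, leaving ample slack in the bound $|E'| \leq |T|/\log^6 N$; the imbalance property, being intrinsic to the coarse topological structure, transfers since at most $\hat W/\log^{30} N \ll \hat W/\log^5 N$ vertices move sides. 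All steps run in $O(\hat W^2)$ time, dominated by the SCC computation and the conversion of the cut to $G$.
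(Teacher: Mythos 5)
Your positive branch (a balanced canonical cut exists) is fine, and your instinct that the real difficulty is the mismatch between the current graph $H$ and the phase-initial graph is exactly right --- but your handling of that mismatch is where the argument breaks. The events $\event_2,\event_3$ are defined (and their probability bounds are proved, via a union bound over edge subsets of the graph sampled against) with respect to the graph $H$ at the beginning of the phase, call it $H_0$, with the static terminal set $T$. A witness for $\event_2$ or $\event_3$ therefore consists of a weakly well-structured cut in $H_0$ whose edge set $E''$ has the property that \emph{every} canonical cut in $\cset'(E'')$ (computed in $H_0$) is unbalanced. What you verify is that every canonical cut of $E'=E_H(\tilde A,\tilde B)$ in the \emph{current} $H$ is unbalanced, and you then assert that after lifting, ``the imbalance property transfers.'' This is the gap: the lifted edge set $E''$ contains, in addition to (preimages of) $E'$, all the edges $\bigcup_i E_i$ of the cuts produced earlier in the phase, and the canonical-cut decomposition of $E''$ in $H_0$ is a different combinatorial object from that of $E'$ in the current $H$; deleting those extra edges in $H_0$ can perfectly well create balanced canonical cuts, and nothing in your argument rules this out. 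Moreover the lifting itself is not just ``place the deleted vertices into $Z$'': the current $H$ also contains subdivision vertices absent from $H_0$ (special edges are split whenever lengths double), so one needs a coherent map of the cut back to $H_0$; the paper achieves this by first building a three-part partition of $V(G_0)$ from the cut sequence (\Cref{claim: compute the sequence}) and then modifying $(\tilde A,\tilde B)$ so that for every special edge of the current $G$ the whole subdivided path lies on one side (\Cref{claim: transform the cut}), before assembling the cut $(\tilde S,\tilde S')$ in $H_0$ (\Cref{claim: compute intermediate cut}).

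The fix --- and the paper's actual route --- is to run the canonical-cut case analysis \emph{in $H_0$ on the lifted edge set} $E''$, not in the current $H$: if every cut in $\cset'(E'')$ is unbalanced, then the lifted cut $(\tilde S,\tilde S')$ itself (whose terminal/vertex balance is established during the lifting) certifies $\event_2$ or $\event_3$ (\Cref{obs: when bad events happened}); if some canonical cut of $E''$ is balanced, the algorithm does not need any transfer argument --- it simply outputs that balanced canonical cut, converted via \Cref{obs: subdivided cut sparsity} and \Cref{obs: weakly to strongly well str} into a strongly well-structured $\phi$-sparse cut of $G_0$ and then restricted to the current $G$, where the bound of fewer than $\hat W/\log^{30}N$ deleted vertices during the phase keeps both sides of weight at least $\hat W/\log^{30}N$. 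Your draft never examines $\cset'(E'')$ in $H_0$ and never explains how a balanced canonical cut found there would be turned into an output cut in the current $G$, so as written the negative branch does not certify the events and the claim is not proved.
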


The remainder of this subsection is dedicated to the proof of \Cref{claim: finalizing cut}. 
We denote by $G_0$ the graph $G$ at the beginning of the current phase, and by $H_0$ its corresponding subdivided graph $H$. The main idea of the proof is to exploit the cuts $(A_1,B_1),\ldots(A_q,B_q)$ in graph $G$, together with the finalizing cut $(\tilde A,\tilde B)$ in graph $H$, in order to compute another well-structured cut $(\tilde S,\tilde S')$ in $H_0$, such that $|E_H(\tilde S,\tilde S')|\leq \frac{|T|}{\log^6N}$, and one side of the cut contains all but at most $\frac{\hat W}{16}$ vertices of $H_0$, but only a relatively small fraction of the vertices of $T$. Let $E''=E_H(\tilde S,\tilde S')$. By inspecting all canonical cuts in $\cset'(E'')$, we will then either correctly establish that  at least one of the events $\event_2,\event_3$ happened, or we will compute a cut $(A^*,B^*)$ in $G$ with the required properties.

In the following technical claim, whose proof is deferred to Section \ref{sec: proof of intermediate finalizing cut} of Appendix, we provide an algorithm for computing the cut $(\tilde S,\tilde S')$ in $H_0$ with the desired properties.

\begin{claim}\label{claim: compute intermediate cut}
There is a deterministic algorithm, that, given the graph $G_0$, the sequence of the strongly well-structured $\phi$-sparse cuts $(A_1,B_1),\ldots(A_q,B_q)$ in $G$ that the algorithm has produced during the current phase so far, the current graphs $G$ and $H$, such that all edge lengths in $G$ are bounded by $d'/2$, and the finalizing cut $(\tilde A,\tilde B)$ in the current graph $H$, computes a weakly well-structured cut $(\tilde S,\tilde S')$ in $H_0$, such that $|E_{H_0}(\tilde S,\tilde S')|\leq \frac{|T|}{\log^6N}$, and one side $\tilde Z\in \set{\tilde S,\tilde S'}$ of the cut contains all but at most $\hat W/8$ vertices of $H_0$, and at most $\left(1-\frac{2}{\log^4N}\right )\cdot |T|$ vertices of $T$. The 
running time of the algorithm is $O(\hat W^2)$.
\end{claim}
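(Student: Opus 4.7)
The plan is to lift the finalizing cut $(\tilde A, \tilde B)$ from current $H$ back to $H_0$. Let $\hat Z \in \{\tilde A, \tilde B\}$ denote the heavy side of the finalizing cut (WLOG $\hat Z = \tilde A$), and write $V_0 := V(H_0) \cap V(H)$ for the original vertices still present in current $H$ and $V_R := V(H_0) \setminus V(H)$ for the removed original vertices. My first attempt is the simple assignment $\tilde S := (\hat Z \cap V_0) \cup V_R$ and $\tilde S' := (V(H) \setminus \hat Z) \cap V_0$, placing all removed vertices on the heavy side.

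The structural verification has two parts. First, because the splitting update subdivides only special edges, any regular edge of $H_0$ with both endpoints in $V_0$ is also an edge of current $H$. Hence if such an edge crossed from $\hat Z \cap V_0$ to $(V(H) \setminus \hat Z) \cap V_0$, it would lie in $E_H(\tilde A, \tilde B)$, contradicting weak well-structuredness of the finalizing cut. Thus the only potential regular $\tilde S \to \tilde S'$ edges in $H_0$ have source in $V_R$ and target in $(V(H) \setminus \hat Z) \cap V_0$. To remove these, I will apply a correction step: for each cut $(A_i, B_i)$ the strongly well-structured structure guarantees that the cut edges $E_G(A_i, B_i)$ are special (of number at most $\phi \cdot w(Z_i^G)$), that $E_G(B_i, A_i)$ consist only of regular edges, and that the $B_i$-endpoints $J'_i$ of special cut edges are removed together with $A_i$. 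This lets me identify the relatively small number of regular sources $v \in Z_i^G$ whose targets survived in $R_i = V(G^{\tau_i^-}) \setminus Z_i^G$, and move them from $\tilde S$ to $\tilde S'$. Using the phase-termination threshold $\sum_i w(Z_i^G) \le \hat W/\log^{30}N$ together with $|T| \geq \hat W/(2d')$ from event $\event_1$ not occurring, the total number of moved vertices is $O(\phi \cdot \hat W/\log^{30}N) = O(|T|/\log^{30}N)$.

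The quantitative bounds then follow. The cut size is $|E_{H_0}(\tilde S, \tilde S')| \le |E_H(\tilde A, \tilde B)| + O(|T|/\log^{30}N) \le |T|/(8 \log^6 N) + O(|T|/\log^{30}N) \le |T|/\log^6 N$. The vertex bound gives $|\tilde S'| \le |V(H) \setminus \hat Z| + (\text{moved vertices}) \le \hat W/32 + O(\hat W/\log^{30}N) \le \hat W/8$. The main obstacle is the terminal bound $|\tilde S \cap T| \le (1 - 2/\log^4 N)|T|$: starting from $|\hat Z \cap T| \le (1 - 7/\log^4 N)|T|$, I must ensure the terminals absorbed into $\tilde S$ from $V_R$ total at most $5|T|/\log^4 N$. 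For this I invoke event $\event_4$ (at most $16 \log N$ terminals on any subdivided path $P^*(e)$) combined with the bound $|V(G_0) \setminus V(G)| = O(\hat W/\log^{30}N)$, which bounds both the number of removed special edges (controlling inner-path terminal contributions) and the number of removed $G_0$-vertex terminals. Summing, $|V_R \cap T| = O(|T| \cdot d' \log N/\log^{30}N)$, which is comfortably below $5|T|/\log^4 N$ because $d' \le 2d/\log^{28}N$ leaves a $\log^{25}N$-factor of slack.

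For the running time, the algorithm builds the adjacency-list representation of $H_0$ (with $O(\hat W)$ vertices and $O(\hat W^{3/2})$ edges at worst) in $O(\hat W^2)$ time, then scans the sequence of $q = O(\hat W)$ cuts once in reverse chronological order to flag offending regular edges incident to $V_R$ using $H_0$'s adjacency structure, and finally materializes the cut $(\tilde S, \tilde S')$ in a single pass. Each reverse step inspects $O(|V(H_0)|)$ incidences, yielding the claimed $O(\hat W^2)$ total.
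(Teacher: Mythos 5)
Your reduction of the problem to ``lift the finalizing cut and dump all deleted vertices on the heavy side'' has two genuine gaps, both stemming from treating $d'$ as if it were polylogarithmic. First, the terminal bound fails. You bound $|V_R\cap T|$ by (number of deleted special $G_0$-edges)$\cdot 16\log N \le O(\hat W/\log^{29}N)$ using $\event_4$, and then compare against the budget $5|T|/\log^4 N \ge \Omega(\hat W/(d'\log^4 N))$; this comparison only goes through when $d'\le O(\log^{25}N)$, but $d'\approx d/(\log N)^{28}$ with $d$ as large as $2^{r\sqrt{\log N}}$, so the ``$\log^{25}N$-factor of slack'' you invoke relates $d'$ to $d$, not to a polylog. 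Worse, no bound of this kind can hold in the worst case: the cuts $(A_i,B_i)$ are produced adaptively after $T$ is sampled, so the deleted region may contain a constant fraction of $T$ even though it contains only $\hat W/\log^{30}N$ vertices of $H_0$, and then placing $V_R$ on the heavy side destroys the requirement $|\tilde Z\cap T|\le(1-2/\log^4N)|T|$. The paper's proof is organized precisely to avoid ever needing such a bound: it first builds an ordered partition $(S_1,S_2,S_3)$ of $V(G_0)$ from the cut sequence (with $S_2=V(G_q)$, left-to-right edges special and in $E'$, right-to-left edges regular), lifts it to $(\tilde S_1,\tilde S_2,\tilde S_3)$ in $H_0$, and then splits into cases: if the deleted region $\tilde S_1\cup\tilde S_3$ is terminal-rich ($|\tilde S_2\cap T|\le(1-4/\log^4N)|T|$), that partition alone already yields the desired cut without touching the finalizing cut; only in the terminal-poor case is the finalizing cut lifted (after a path-consistency modification), and there the at most $4|T|/\log^4N$ deleted terminals fit in the budget. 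Your proposal has no mechanism playing the role of this dichotomy.

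Second, the correction step's accounting is wrong. You claim only $O(\phi\cdot\hat W/\log^{30}N)$ vertices need to be moved, but that quantity bounds the \emph{special} cut edges $E_G(A_i,B_i)$; the regular edges $E_G(B_i,A_i)$ of a strongly well-structured cut are completely unrestricted in number and in the number of distinct endpoints. When the deleted side of a cut is the $B_i$-side (which the algorithm does whenever that side is lighter), every deleted vertex of $B_i$ may carry a regular edge into a surviving vertex, so the number of sources you must relocate is bounded only by the number of deleted vertices, up to $\hat W/\log^{30}N$; each relocation introduces a new special crossing, and $\hat W/\log^{30}N$ exceeds the total budget $|T|/\log^6N\le 8\hat W/(d'\log^6N)$ as soon as $d'\gg\operatorname{poly}\log N$. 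Relatedly, ``WLOG $\hat Z=\tilde A$'' is not without loss of generality: the finalizing cut is only weakly well-structured, so if the heavy side is $\tilde B$ the edges leaving it need not be special, and the symmetric variant of your construction runs into the same unbounded regular-edge problem in the opposite direction. (The parts of your argument that do work --- regular $H_0$-edges between surviving vertices survive in $H$, special $H_0$-edges never straddle $V_0$ and $V_R$ because whole subdivided paths are deleted together, and separated special-edge endpoints are charged injectively to $E_H(\tilde A,\tilde B)$ --- are correct, but they do not repair the two gaps above.)
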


We are now ready to complete the proof of \Cref{claim: finalizing cut}. Let $E''=E_{H_0}(\tilde S,\tilde S')$. We construct, in time $O(|E(H_0)|)\leq O(\hat W^2)$, the collection $\xset(E'')$ of subsets of $H_0$ associated with the set $E''$ of edges, and then examine all canonical cuts $(A,B)\in \cset'(E'')$. 
For each such cut, we check whether $|A|\leq \frac{\hat W}{\log^{5}N}$, or $|B|\leq \frac{\hat W}{\log^{5}N}$ hold.
From the definition of canonical cuts, we can check this, for every cut $(A,B)\in \cset'(E'')$, in time $O(|V(H)|)\leq O(\hat W)$. We first claim that if, for each such canonical cut $(A,B)$, either $|A|\leq \frac{\hat W}{\log^{5}N}$ or $|B|\leq \frac{\hat W}{\log^{5}N}$ holds, then at least one of the events $\event_2,\event_3$ must have happened.

\begin{observation}\label{obs: when bad events happened}
	If, for every canonical cut $(A,B)\in \cset'(E'')$, either $|A|\leq \frac{\hat W}{\log^{5}N}$ or $|B|\leq \frac{\hat W}{\log^{5}N}$ holds, then at least one of the events $\event_2,\event_3$ must have happened.
\end{observation}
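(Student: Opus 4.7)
The plan is to perform a straightforward case analysis on the number of terminals lying on the large side of the cut produced by Claim~\ref{claim: compute intermediate cut}. Essentially all of the structural work has already been done by that claim; the observation is a matter of reading off which of the two bad events is witnessed.

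First I would record what Claim~\ref{claim: compute intermediate cut} gives us: a weakly well-structured cut $(\tilde S,\tilde S')$ in $H_0$ with $|E_{H_0}(\tilde S,\tilde S')|\leq |T|/\log^6N$, and one side $\tilde Z\in\{\tilde S,\tilde S'\}$ that contains all but at most $\hat W/8$ of the vertices of $H_0$ and at most $(1-2/\log^4N)|T|$ of the terminals. Since $|V(H_0)|=\hat W$ this gives $|\tilde Z|\geq 7\hat W/8\geq \hat W/2$, and the complementary side $V(H_0)\setminus \tilde Z$ contains at least $2|T|/\log^4N$ terminals. Letting $E''=E_{H_0}(\tilde S,\tilde S')$, the hypothesis of the observation is that every canonical cut in $\cset'(E'')$ has one side of size at most $\hat W/\log^5N$, which is exactly the canonical-cut condition required by both $\event_2$ and $\event_3$. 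Thus all structural prerequisites shared by the two events are already in place; only the terminal-count condition distinguishes them.

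Then I would split into two cases on $|\tilde Z\cap T|$. In Case~1, suppose $|\tilde Z\cap T|\leq |T|/32$. Then $\tilde Z$ has at least $\hat W/2$ vertices of $H_0$ and at most $|T|/32$ terminals, so together with the edge bound and the canonical-cut hypothesis, $(\tilde S,\tilde S')$ is precisely a witness for $\event_3$. In Case~2, suppose $|\tilde Z\cap T|>|T|/32$. Then $\tilde Z$ has at least $|T|/32\geq |T|/\log^4N$ terminals (using that $N$ is larger than a fixed constant so that $\log^4N\geq 32$), while the complementary side has at least $2|T|/\log^4N\geq |T|/\log^4N$ terminals; combined with the edge bound and the canonical-cut hypothesis, $(\tilde S,\tilde S')$ is a witness for $\event_2$.

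There is no real obstacle here: the only care needed is to verify that the threshold $|T|/32$ appearing in the definition of $\event_3$ and the threshold $|T|/\log^4N$ appearing in $\event_2$ are compatible, i.e. that $|T|/32\geq |T|/\log^4N$ for $N$ above the absolute constant guaranteed at the start of the phase, which is immediate. All the genuinely technical content—namely transferring the sequence of intermediate strongly well-structured cuts $(A_1,B_1),\ldots,(A_q,B_q)$ in $G$ together with the finalizing cut $(\tilde A,\tilde B)$ in the current $H$ back to a single weakly well-structured cut in $H_0$ with the claimed terminal count and edge count—has already been carried out in Claim~\ref{claim: compute intermediate cut}.
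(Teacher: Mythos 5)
Your proof is correct and follows essentially the same route as the paper: the same case split on whether $|\tilde Z\cap T|\leq |T|/32$, witnessing $\event_3$ in the first case and $\event_2$ in the second, with the cut $(\tilde S,\tilde S')$ from Claim~\ref{claim: compute intermediate cut} serving as the witness in both cases. The only cosmetic difference is that you explicitly note $|\tilde Z'\cap T|\geq 2|T|/\log^4 N$ and check $\log^4 N\geq 32$, which the paper leaves implicit.
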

\begin{proof}
Recall that	one side $\tilde Z\in \set{\tilde S,\tilde S'}$ of the cut $(\tilde S,\tilde S')$ in $H$ contains all but at most $\hat W/8$ vertices of $H_0$, and at most $\left(1-\frac{2}{\log^4N}\right )\cdot |T|$ vertices of $T$. Denote the other side of the cut by $\tilde Z'$. We now consider two cases. The first case happens if $|\tilde Z\cap T|\leq  \frac{|T|}{32}$. Then $\tilde Z$ contains at least $\frac{\hat W} 2$ vertices of $H$ and at most $\frac{|T|}{32}$ vertices of $T$. Since, for every canonical cut $(A,B)\in \cset'(E'')$, either $|A|\leq \frac{\hat W}{\log^{5}N}$ or $|B|\leq \frac{\hat W}{\log^{5}N}$ holds, we conclude that Event $\event_3$ must have happened.

Otherwise, $|\tilde Z\cap T|\geq \frac{|T|}{32}$, and, since $|\tilde Z\cap T|\leq \left(1-\frac{2}{\log^4N}\right )\cdot |T|$, we get that $|\tilde Z\cap T|,|\tilde Z'\cap T|\geq \frac{|T|}{\log^4N}$. Since, for every canonical cut $(A,B)\in \cset'(E'')$, either $|A|\leq \frac{\hat W}{\log^{5}N}$ or $|B|\leq \frac{\hat W}{\log^{5}N}$ holds, we conclude that Event $\event_2$ must have happened.
\end{proof}

If, for every canonical cut $(A,B)\in \cset'(E'')$, either $|A|\leq \frac{\hat W}{\log^{5}N}$ or $|B|\leq \frac{\hat W}{\log^{5}N}$ hold, then we terminate the algorithm, and report that at least one of the events $\event_2,\event_3$ has happened. We assume from now on that there is a canonical cut $(A,B)\in \cset'(E'')$ with $|A|,|B|\geq \frac{\hat W}{\log^{5}N}$. Recall that $(A,B)$ is a weakly canonical cut in graph $H_0$, and $|E_{H_0}|\leq \frac{|T|}{\log^6N}\leq \frac{8\hat W}{d'\log^6N}\leq \frac{8}{d'\log N}\cdot \min\set{|A|,|B|}\leq \frac{\phi}{64}\min\set{|A|,|B|}$. Therefore, $(A,B)$ is a weakly canonical cut of sparsity at most $\frac{\phi}{64}$ in $H_0$. Recall also that every edge in $G_0$ has length at most $\frac{d'}{2}=\frac{2}{\phi}$.

For every vertex $v\in V(G_0)$, we denote the weight of $v$ in $G_0$ by $w_0(v)$, and, for every subset $S\subseteq V(G_0)$ of vertices, we denote $w_0(S)=\sum_{v\in S}w_0(v)$. Similarly, for every vertex $v\in V(G)$, we denote the weight of $v$ in $G$ by $w(v)$, and for a subset $S\subseteq V(G)$ of vertices, we denote $w(S)=\sum_{v\in S}w(v)$. Note that, for every vertex $v\in V(G)$, $w_0(v)\leq w(v)$.

Consider now the cut $(A',B')$ in $G_0$, where $A'=A\cap V(G_0)$ and $B'=B\cap V(G_0)$. From \Cref{obs: subdivided cut sparsity}, $(A',B')$ is a weakly well-structured cut in $G_0$, whose sparsity is at most $\frac{\phi}{8}$, with $w_0(A')\geq \frac{|A|}{8}\geq \frac{\hat W}{8\log^5 N}$, and $w_0(B')\geq \frac{|B|}{8}\geq \frac{\hat W}{8\log^5 N}$.

Recall that $d'\leq \frac{2d}{(\log N)^{28}}\leq \frac{n}{256}\leq \frac{\hat W}{128}\leq \frac{\sum_{e\in E(G_0)}\ell(e)}{2}$. Therefore, the length of every edge in $G_0$ is bounded by $\frac{\sum_{e\in E(G_0)}\ell(e)}{2}$. We call now apply the algorithm from \Cref{obs: weakly to strongly well str} to graph $G_0$, parameter $\frac{\phi}{8}$, and the cut $(A',B')$ in $G_0$, to compute a cut $(A'',B'')$ in $G_0$ with $w_0(A'')\geq \frac{w_0(A')}{4}\geq \frac{\hat W}{32\log^5 N}$ and $w_0(B'')\geq \frac{w_0(B')}{4}\geq \frac{\hat W}{32\log^5 N}$, such that $(A'',B'')$ is a strongly well-structured cut in $G_0$, whose sparsity is at most $\phi/4$.
The running time of the algorithm is bounded by $O(|E(G_0)|)\leq O(\hat W^2)$.

Lastly, we define the cut $(A^*,B^*)$ in the current graph $G$ as follows: $A^*=A''\cap V(G)$ and $B^*=B''\cap V(G)$. Since  $(A'',B'')$ is a strongly well-structured cut in $G_0$, and $G\subseteq G_0$, it is easy to verify that $(A^*,B^*)$ is a strongly well-structured cut in $G$. Moreover, since the current phase has not terminated yet, the total weight of all vertices deleted so far from $G$ is bounded by $\frac{\hat W}{\log^{30}N}$, so $w(A^*)\geq w_0(A^*)\geq w_0(A'')-
\frac{\hat W}{\log^{30}N}\geq \frac{w_0(A'')}{2}>\frac{\hat W}{\log^{30}N}$.
Similarly, $w(B^*)\geq w_0(B^*)\geq \frac{w_0(B'')}{2}>\frac{\hat W}{\log^{30}N}$. Since $|E_G(A^*,B^*)|\leq E_{G_0}(A'',B'')$, we conclude that cut $(A^*,B^*)$ has sparsity at most $\phi$. We output the cut $(A^*,B^*)$ as the outcome of the algorithm. From the above discussion, the running time of the algorithm is bounded by $O(\hat W^2)$.

\subsection{First Data Structure: Expander $X$ and its Embedding}
\label{subsec: maintaincluster expander}

Throughout the phase, we will maintain an expander $X$, that is defined over the terminals, and is embedded into $H$. We start by providing the algorithm to initialize this data structure, and then provide the algorithm for maintaining it throughout the phase.

\subsubsection{Initialization}
\label{subsec: expander: initialization}
In this step, we will either compute an expander $X$ over the set $T$ of vertices and embed it into $H$ via short paths that cause low congestion; or we will compute a sparse cut $(A,B)$ in $H$, following which a sufficient number of vertices will be deleted from $H$, so that the current phase will terminate.
We summarize the algorithm for initializing the expander in the following lemma, whose statement uses the constant $\alpha_0$ from \Cref{thm:explicit expander}.

\begin{lemma}\label{lem: phase 1 embed expander}
	There is a randomized algorithm, that either returns ``FAIL'', or computes one of the following:
	
	\begin{itemize}
		\item either  a finalizing cut in $H$; 
		\item or a strongly well-structured $1/d'$-sparse cut $(A,B)$ in $G$ with $w(A),w(B)\geq \frac{\hat W}{(\log N)^{30}}$;
		
		\item or a graph $X^*$ with $V(X^*)=T$, such that $X^*$ has maximum vertex degree at most $24$, and it is an $\frac{\alpha_0}{81}$-expander. In the latter case, the algorithm additionally computes a set $F\subseteq E(X^*)$ of at most $\frac{\alpha_0\cdot |T|}{500}$ edges, and an embedding $\pset^*$ of $X^*\setminus F$ into $H$ via paths that have length at most $O\left (d'\cdot  (\log N)^{16}\right )$ each, that cause congestion $O\left((\log N)^{13}\right )$. 
		\end{itemize}
	
The probability that the algorithm returns ``FAIL'' is at most $\frac{1}{N^7}$, and the running time of the algorithm is bounded by:

\[O\left( c_1\cdot \hat W^{2} \cdot 2^{c_2\sqrt{\log N}}\cdot (\log N)^{16c_2(r-1)+8c_2+3} \right ).\]
\end{lemma}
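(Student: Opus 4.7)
\medskip

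\noindent\textbf{Proof proposal for \Cref{lem: phase 1 embed expander}.}

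The plan is to build $X^*$ explicitly on the terminal set $T$ via \constructexpander, and then embed it into $H$ by iteratively invoking the $r$-restricted \routeandcut algorithm (which exists by the inductive hypothesis of \Cref{thm: from routeandcut to maintaincluster part 2}). First, apply \constructexpander (\Cref{thm:explicit expander}) to $|T|$ to obtain a directed $\alpha_0$-expander $Y$ on vertex set $T$ with maximum in- and out-degree at most $9$, and then form $X^*$ by taking two (or three) edge-disjoint copies of $Y$ (or equivalently a small constant-factor amplification); this keeps the maximum total degree at most $24$. The key robustness property is that for any $F \subseteq E(X^*)$ with $|F| \leq \alpha_0 |T| / 500$, the graph $X^* \setminus F$ is still an $\alpha_0/81$-expander. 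For any cut $(S, T \setminus S)$ with $|S| \leq |T|/2$, the underlying $\alpha_0$-expander contributes at least $\alpha_0 |S|$ edges across the cut; removing $F$ leaves at least $\alpha_0 |S| - |F|$, and combining this with the ``large $|S|$'' bound and the ``small $|S|$'' degree-based argument (using the multiple-copy structure) yields expansion $\geq \alpha_0/81$ throughout. The precise constants can be tuned, but this is a standard robust-expander construction.

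To embed $X^*$ into $H$, decompose $E(X^*)$ by edge-coloring into at most $2 \cdot 24 = 48$ matching color classes $M_1, \ldots, M_q$ (with $q = O(1)$), and handle each class in turn. For a color class $M_i$, let $A_i$ and $B_i$ be the tails and heads of edges in $M_i$; both sets have size at most $|T| \leq 8\hat W/d'$. Run the inductively given algorithm for the $r$-restricted \routeandcut problem on the instance $(H, A_i, B_i, \Delta_i, \eta_i, N)$ with $\Delta_i = |M_i|$ and $\eta_i = \Theta((\log N)^{13})$. I would verify $r$-restrictedness by the estimate
\[
\frac{(|V(H)| - |B_i|) \cdot \eta_i}{\Delta_i} \leq \frac{O(\hat W) \cdot \eta_i}{|T|} = O(d'\cdot \eta_i) \leq 2^{r \sqrt{\log N}},
\]
which holds because $d' \leq 2^{r\sqrt{\log N}} / (\log N)^{28}$. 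The resulting routing yields paths in $H$ of total congestion $O(\eta_i \log N) = O((\log N)^{14})$. By a Markov-type argument (total path length $\leq \eta_i \cdot |E(H)|$, so average length is $O(\eta_i \cdot d' \cdot \log N)$), all but an $O(1/(\log N)^2)$-fraction of the returned paths have length at most $O(d' \cdot (\log N)^{16})$; we discard the long paths (and any edges of $M_i$ that \routeandcut failed to route at all) into the set $F$. Summing the exceptions across the $O(1)$ color classes keeps $|F| \leq \alpha_0 |T|/500$. This takes $O(1)$ invocations of the \routeandcut algorithm on $H$, giving total running time $O(c_1 \hat W^2 \cdot 2^{c_2 \sqrt{\log N}} \cdot (\log N)^{16 c_2 (r-1) + 8 c_2 + 3})$ as required.

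The remaining task is to handle the case in which some invocation of \routeandcut returns a cut $(X, Y)$ instead of a routing of the requisite size. Such a cut is weakly well-structured in $H$ with $|E_H(X, Y)|$ bounded by $O(|T|/(\eta_i (\log N)^4)) \leq |T|/(8 \log^6 N)$, and its two sides separate at least $\Omega(|A_i|) = \Omega(|T|)$ unrouted tails of $A_i$ from at least as many heads of $B_i$. I would then distinguish two subcases: if one side of the cut contains both almost all of $V(H)$ and fewer than $(1 - 7/\log^4 N)|T|$ terminals, the cut is already a finalizing cut in $H$ per \Cref{def: finalizing cut}, so return it; otherwise, both sides contain a nontrivial fraction of $T$, in which case I will use \Cref{obs: subdivided cut sparsity} to translate it into a weakly well-structured cut in $G$, and then \Cref{obs: weakly to strongly well str} to promote it to a strongly well-structured $1/d'$-sparse cut $(A, B)$ in $G$ with both sides carrying weight at least $\hat W/(\log N)^{30}$ (using that both sides of the $H$-cut must have $\Omega(|T|/\log^{O(1)} N) = \Omega(\hat W/(d' \log^{O(1)} N))$ vertices). \Cref{obs: from many sparse to one balanced} can additionally be used to combine multiple such cuts produced across iterations into a single balanced sparse cut if necessary.

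The main obstacle will be the path-length control: \routeandcut guarantees congestion but not per-path length, so the Markov-style amortization must be coupled with a careful check that the discarded long paths do not cause $|F|$ to exceed $\alpha_0 |T|/500$. A secondary technical point is ensuring, in the failure case, that the side of the produced $H$-cut containing most of $V(H)$ either is small in terminals (giving a finalizing cut) or still carries enough weight on both sides in $G$ (giving a usable strongly well-structured cut); a clean way to enforce this dichotomy is to apply the transformation only when the $H$-cut is already sufficiently balanced in $T$, and otherwise declare the cut finalizing directly. The FAIL probability bound of $1/N^7$ comes from the union bound over the $O(1)$ \routeandcut invocations, each of which can be repeatedly driven below $1/N^{10}$ via $O(\log N)$ independent repetitions.
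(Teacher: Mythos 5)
There is a genuine gap at the heart of your embedding step. The \routeandcut problem does not let you prescribe which vertex of $A_i$ is connected to which vertex of $B_i$: it returns a routing in which each path connects \emph{some} vertex of $A_i$ to \emph{some} vertex of $B_i$ with distinct endpoints, and the induced pairing is chosen adversarially by the routing algorithm. So if you fix $X^*$ in advance via \constructexpander, edge-color it, and call \routeandcut on the tails and heads of a color class $M_i$, the paths you get back realize some \emph{other} matching between those endpoint sets, not the edges of $M_i$. You therefore do not obtain an embedding of $X^*\setminus F$; you obtain an embedding of the union of arbitrary re-paired matchings, and that union need not be an expander at all (the re-pairings can, for instance, keep all matched pairs inside small clusters). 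This is precisely the difficulty the paper's proof is built around: it runs the Cut-Matching game, where the expander is \emph{not} fixed in advance but is assembled from whatever matchings the \routeandcut-based matching player happens to return, with the adaptive cut player (the algorithm of Theorem~8.5 of~\cite{CK24}, analyzed via \Cref{thm: CMG bound}) certifying expansion after $O(\log |T|)$ rounds regardless of which pairings were produced, followed by three extra matching rounds and \Cref{obs: expander plus matching} to attach the terminals the cut player's expander $X'$ left out. Your failure-case handling (finalizing cut in $H$ versus conversion through \Cref{obs: subdivided cut sparsity} and \Cref{obs: weakly to strongly well str} to a balanced strongly well-structured cut in $G$) is in the right spirit and essentially matches the paper's \Cref{claim: finalize the cut}, but it cannot rescue the embedding itself.

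Two further points. Your ``key robustness property'' --- that $X^*\setminus F$ remains an $\frac{\alpha_0}{81}$-expander for \emph{every} $F$ with $|F|\leq \frac{\alpha_0|T|}{500}$ --- is false: since $X^*$ has maximum degree $O(1)$ while $|F|$ may be $\Omega(|T|)$, an adversarial $F$ can isolate many vertices, giving cuts of sparsity $0$; taking multiple copies of the same explicit expander does not help, because $F$ can contain all copies of the edges at a vertex. Fortunately the lemma only requires $X^*$ itself to be an expander and $|F|$ to be small (robustness against small edge sets is handled later, in \Cref{claim: expander remains large}, via a balanced-cut argument), so this claim is unnecessary --- but it signals that the role of $F$ has been misread. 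Finally, with $\eta_i=\Theta((\log N)^{13})$ the congestion guaranteed by \routeandcut is $4\eta_i\log N=\Theta((\log N)^{14})$, exceeding the lemma's $O((\log N)^{13})$ bound; the paper keeps the per-round congestion at $(\log N)^{10}$ with $\eta'=(\log N')^{8}$ and accumulates the $O(\log N)$ rounds plus the cut player's $O(\log^2 N)$ congestion to land at $O((\log N)^{13})$, so your parameter choice would need retuning even if the embedding semantics were fixed.
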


The remainder of this subsubsection is dedicated to proving \Cref{lem: phase 1 embed expander}.

\subsubsection*{Recap of the Cut-Matching Game}
We use the Cut-Matching Game -- a framework that was initially proposed by \cite{KRV}. The specific variant of the game is due to \cite{KKOV}. Its slight modification (that we also use) was proposed in \cite{detbalanced}, and later adapted to directed graphs in \cite{SCC}; the same modification was used in \cite{CK24}.
We also note that \cite{directed-CMG} extended the original cut-matching game of \cite{KRV} to directed graphs, though we do not use this variation here.
We now define the variant of the game that we use.

The game is played between two players: the Cut Player and the Matching Player. 
The game starts with a directed graph $X$ that contains $\hat n$ vertices  and no edges,  and proceeds in iterations. In the $i$th iteration, the Cut Player chooses a cut $(A_i,B_i)$ in $X$ with $|A_i|,|B_i|\geq \hat n/10$, and $|E_X(A_i,B_i)|\leq \hat n/100$. It then computes two arbitrary equal-cardinality sets $A_i',B_i'$ of vertices with $|A'_i|\geq \hat n/4$, such that either $A_i\subseteq A'_i$, or $B_i\subseteq B'_i$ holds. The matching player then needs to return two perfect matchings: a matching $\mset_i\subseteq A'_i\times B'_i$, and  a matching $\mset'_i\subseteq B'_i\times A'_i$, both of cardinality $|A'_i|$. The edges of $\mset_i\cup \mset'_i$ are added to $X$, and the iteration terminates.
If there is no cut $(A,B)$ in $X$ with $|A|,|B|\geq \hat n/10$, and $|E_X(A,B)|\leq \hat n/100$, then the game must terminate, though we may choose to terminate it earlier.
 The following bound on the number of iterations in this variant of the Cut-Matching Game was shown by \cite{SCC}, extending a similar result of \cite{KKOV} for undirected graphs.

\begin{theorem}[Lemma 7.4 in \cite{SCC}; see also Theorem 7.3 and the following discussion]\label{thm: CMG bound}
	The number of iterations in the above Cut-Matching Game is bounded by $O(\log \hat n)$.
\end{theorem}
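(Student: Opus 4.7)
The plan is to prove the iteration bound by a potential-function argument adapted from the KKOV framework for the undirected Cut-Matching Game and extended to the directed setting as in \cite{SCC}. I would maintain, for each vertex $v\in V(X)$, a probability vector $\mathbf{p}_v \in \mathbb{R}^{\hat n}$, initialized to the standard basis vector $e_v$. Whenever matchings $\mset_i$ and $\mset'_i$ are added in iteration $i$, the vectors attached to matched endpoints are replaced by their average (once using $\mset_i$, once using $\mset'_i$). The scalar potential would be
\[
\Phi \;=\; \sum_{v\in V(X)} \left\|\mathbf{p}_v - \tfrac{1}{\hat n}\mathbf{1}\right\|_2^2,
\]
whose initial value is $\hat n - 1 = \Theta(\hat n)$ since the $e_v$ are orthonormal and the mean is $\tfrac{1}{\hat n}\mathbf{1}$.

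The heart of the proof is to show that in every iteration, $\Phi$ decreases by at least a constant multiplicative factor. Here the Cut Player's role is crucial: I would implement it (in the information-theoretic analysis) by projecting each deviation vector $\mathbf{p}_v - \tfrac{1}{\hat n}\mathbf{1}$ onto a random direction and splitting at the median, guaranteeing that a constant fraction of the total potential is concentrated on each side of the chosen cut $(A_i,B_i)$ (provided such a cut exists with $|A_i|,|B_i|\ge \hat n/10$ and $|E_X(A_i,B_i)|\le \hat n/100$). Because the Matching Player must then return a perfect matching between the extended balanced sets $A'_i,B'_i$ in both directions, the averaging along these matchings, combined with the elementary identity $\|\tfrac{x+y}{2}-c\|^2 = \tfrac{1}{2}\|x-c\|^2+\tfrac{1}{2}\|y-c\|^2-\tfrac{1}{4}\|x-y\|^2$, yields a constant-factor drop in $\Phi$. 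This step is where the per-round progress lemma is formally established.

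To close the argument, I would show that once $\Phi$ falls below $1/\hat n^{c}$ for a sufficiently large constant $c$, all $\mathbf{p}_v$ are so close to uniform in $\ell_2$ that no cut $(A,B)$ with $|A|,|B|\ge \hat n/10$ and $|E_X(A,B)|\le \hat n/100$ can exist in $X$: such a cut would force some pair of probability vectors to differ by $\Omega(1)$ in $\ell_2$ norm, contradicting the potential bound. At this point the Cut Player cannot produce a valid cut and the game terminates. Since the initial potential is $O(\hat n)$, shrinks by a constant factor per iteration, and must reach the threshold $1/\hat n^{c}$, this takes $O(\log \hat n)$ iterations.

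The main obstacle is the directed adaptation, since the classical mixing-to-expansion link in the KKOV analysis rests on symmetry of the graph. Following \cite{SCC}, I expect to resolve this by running two parallel copies of the potential argument — one driven by the forward matchings $\mset_i$, controlling expansion of the out-edges, and one driven by $\mset'_i$, controlling the in-edges — and verifying that each copy drops by a constant factor in every round. The fact that the Matching Player is required to return both matchings simultaneously is precisely what lets the two analyses proceed in lockstep, giving the uniform $O(\log \hat n)$ bound regardless of direction and without incurring extra logarithmic factors.
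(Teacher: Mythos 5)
The paper itself does not prove this statement: it is imported wholesale as Lemma~7.4 of \cite{SCC}, which in turn adapts the $O(\log n)$-round analysis of the \cite{KKOV} variant of the game. So the relevant comparison is between your argument and the KKOV-style proof, and there the approach you propose does not go through.

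There are two concrete problems. First, your cut player is not a legal strategy in the game as defined. In this variant the Cut Player must output a cut $(A_i,B_i)$ of the \emph{current graph} $X$ satisfying $|A_i|,|B_i|\ge \hat n/10$ \emph{and} $|E_X(A_i,B_i)|\le \hat n/100$; the game terminates precisely when no such cut exists. Splitting the vertices at the median of a random projection of the vectors $\mathbf{p}_v-\tfrac1{\hat n}\mathbf{1}$ produces a balanced bisection of the vertex set, but gives no control whatsoever on $|E_X(A_i,B_i)|$, so it is not a move of this game. Second, and more fundamentally, the step you flag as ``the heart of the proof'' --- a constant multiplicative drop of $\Phi$ per round --- is exactly what the probability-vector/random-projection machinery fails to deliver. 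The averaging identity only shows $\Phi$ drops by $\tfrac14\sum_{(u,v)\in\mset_i}\|\mathbf{p}_u-\mathbf{p}_v\|_2^2$, and the random-projection argument of KRV only certifies that the matched pairs capture an $\Omega(1/\log \hat n)$ fraction of $\Phi$, giving a per-round factor of $(1-\Omega(1/\log\hat n))$ and hence $O(\log^2\hat n)$ rounds. Getting $O(\log\hat n)$ this way would require a genuinely new argument that you have not supplied.

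The actual proof (in KKOV and in \cite{SCC}) is purely combinatorial and uses a different potential: roughly, the maximum over cuts $(A,B)$ of $\min\{|A|,|B|\}-c\,|E_X(A,B)|$ (equivalently, the size of the largest ``sparse side''). Because the Matching Player is forced to place perfect matchings, in both directions, across supersets of every sparse balanced cut the Cut Player exhibits, this potential decreases by a constant factor each round, starting from $O(\hat n)$; once it falls below a constant, no cut with $|A|,|B|\ge\hat n/10$ and $|E_X(A,B)|\le\hat n/100$ can exist and the game ends after $O(\log\hat n)$ rounds. Your closing remark about running the argument separately for $\mset_i$ and $\mset'_i$ to handle directedness is the right instinct and matches how \cite{SCC} proceeds, but it is grafted onto the wrong potential. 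If you want to salvage the write-up, replace the probability-vector potential with the combinatorial one above.
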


From the above theorem, there is an absolute constant $\cCMG>0$, such that, after $\cCMG\log \hat n$ iterations of the above game, for every cut $(A,B)$ in $X$ with $|A|,|B|\geq \hat n/10$, $|E_X(A_i,B_i)|>\hat n/100$ must hold. Notice that the maximum vertex degree in the final graph $X$ is bounded by $2\cCMG\log \hat n$.

\subsubsection*{Implementing the Cut Player}

There are several known algorithms that we could use in order to implement the cut player. For example, \cite{SCC}, generalizing the results of \cite{detbalanced} to directed graphs, provided such an algorithm, whose running time is $\ohat(\hat n)$ (see Theorem 7.3 in \cite{SCC}). We use instead the algorithm of \cite{CK24}, whose analysis is very simple, even though its running time is somewhat higher. The algorithm is summarized in the following theorem, that uses the constant $\alpha_0$ from \Cref{thm:explicit expander}.

\begin{theorem}[Theorem 8.5 in \cite{CK24}]\label{thm: cut player outer}
	There is a deterministic algorithm, that, given an $\hat n$-vertex directed graph $G$ with maximum vertex degree at most $O(\log \hat n)$, returns one of the following:
	
	\begin{itemize}
		\item Either a cut $(A,B)$ in $G$ with $|E_G(A,B)|\leq \hat n/100$ and $|A|,|B|\geq \hat n/10$; or
		
		\item another directed graph $X'$ with $|V(X')|\geq \hat n/4$ and maximum vertex degree at most $18$, such that $X'$ is an $\alpha_0$-expander, together with a subset $F\subseteq E(X')$ of at most $\frac{\alpha_0\cdot \hat n}{1000}$ edges of $X'$, and an embedding $\pset$ of $X'\setminus F$ into $G$ via paths of length at most $O(\log^2 \hat n)$, that cause congestion at most $\tilde \eta\leq O(\log^2 \hat n)$.
	\end{itemize}
	
	The running time of the algorithm is $\otilde(\hat n^2)$.
\end{theorem}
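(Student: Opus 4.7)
My plan is to construct an explicit expander on a subset of $V(G)$ and then attempt to embed it into $G$ via a multi-commodity flow / MWU procedure, extracting a sparse cut from the flow's dual whenever the embedding fails.

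\emph{Step 1 (Construct the expander).} I would apply \Cref{thm:explicit expander} (\constructexpander) to build a directed graph $X'$ on $\hat n/4$ vertices that is an $\alpha_0$-expander with maximum in-degree and out-degree at most $9$, and hence total degree at most $18$. I then identify $V(X')$ with an arbitrary subset $V'\subseteq V(G)$ of size $\hat n/4$. This yields $|E(X')|\leq 9\hat n/2$ and takes $O(\hat n)$ time.

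\emph{Step 2 (Embed $X'$ into $G$ via MWU).} I would set up an approximate multi-commodity flow problem in $G$ in which, for every edge $(u,v)\in E(X')$, one unit of flow is demanded from $u$ to $v$, with target congestion $\tilde\eta=O(\log^2\hat n)$ and target dilation $O(\log^2\hat n)$. Using the classical MWU framework, I would maintain length multipliers $\ell(e)$ for $e\in E(G)$ and, in each iteration, find a shortest path (under the current $\ell$) for each demand via BFS/Dijkstra, then route a small amount of flow along it and multiplicatively increase the lengths of the used edges. Since $|E(G)|\leq O(\hat n\log \hat n)$ and $|E(X')|=O(\hat n)$, one iteration costs $\tilde O(\hat n^2)$, and $O(\log \hat n)$ iterations suffice for convergence; a standard integral rounding of the final fractional flow produces an integral embedding. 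If the total number of demands left unrouted is at most $|F|\leq \alpha_0 \hat n/1000$, I output $X'$, $F$, and the resulting embedding $\pset$ of $X'\setminus F$.

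\emph{Step 3 (Failure implies a sparse cut).} If more than $\alpha_0\hat n/1000$ edges of $X'$ cannot be routed within the given congestion/length budget, I would use the final MWU length function $\ell$ as a cut-witness in $G$. Since $X'$ is an $\alpha_0$-expander, any set $S\subseteq V(X')$ with $|S|,|V(X')\setminus S|\ge \hat n/40$ has $|E_{X'}(S,\overline S)|\geq \alpha_0\hat n/40$. The MWU failure certificate implies that most demands corresponding to edges across some such balanced bipartition of $X'$ are cut by the ``long'' edges under $\ell$, and therefore a ball-growing argument (\Cref{obs: ball growing}) around the $G$-images of $S$ produces a cut $(A,B)$ in $G$ with $|E_G(A,B)|\leq \hat n/100$ and $|A|,|B|\geq \hat n/10$. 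I would then return this cut. The main obstacle is calibrating the MWU parameters (congestion target $\tilde\eta$, number of iterations, multiplicative step size, rounding loss) so that the two thresholds match exactly: the ``success'' branch must produce $|F|\le \alpha_0\hat n/1000$ failures, while any larger failure count must translate, through ball-growing on the dual weights combined with expansion of $X'$, into a cut with both the sparsity $\hat n/100$ and the balance $\hat n/10$ simultaneously. The total running time is $\tilde O(\hat n^2)$ since each of $O(\log \hat n)$ MWU iterations runs $O(\hat n)$ shortest-path computations on a graph with $\tilde O(\hat n)$ edges.
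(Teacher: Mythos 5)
There is a genuine gap, and it sits exactly where you flag "the main obstacle": Step 3 does not work with an \emph{arbitrarily fixed} terminal set $V'$ of size $\hat n/4$. Failure to embed $X'$ into $G$ when $V(X')$ is chosen in advance does not imply that $G$ has a cut with $|E_G(A,B)|\leq \hat n/100$ \emph{and} $|A|,|B|\geq \hat n/10$. Concretely, let $G$ consist of a strong constant-expansion bounded-degree expander on $19\hat n/20$ vertices plus a poorly connected blob of $\hat n/20$ vertices. If your arbitrary $V'$ contains many blob vertices, far more than $\alpha_0\hat n/1000$ demands of $X'$ are unroutable, yet every cut of $G$ with both sides of size at least $\hat n/10$ must cut through the large expander and therefore has $\Omega(\hat n)\gg \hat n/100$ edges; your algorithm then has no legal output. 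Relatedly, even when the MWU dual does certify separation, it separates \emph{terminals}: a side containing $\hat n/40$ terminals is only guaranteed to contain $\hat n/40$ vertices of $G$, which does not meet the $\hat n/10$ balance threshold, and a single ball-growing step may return a very unbalanced sparse cut. The missing machinery is an adaptive scheme: repeatedly find sparse (possibly unbalanced) cuts, delete the small side, and re-attempt the embedding on the surviving vertices; if the total pruned volume stays small you end with an expander on at least $\hat n/4$ vertices (this is precisely why the theorem only promises $|V(X')|\geq \hat n/4$ rather than a prescribed vertex set), and if it grows large you aggregate the accumulated sparse cuts into a single balanced sparse cut, in the spirit of \Cref{obs: from many sparse to one balanced}. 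Your write-up has no pruning, no re-selection of the expander's host vertices, and no aggregation argument, so the two advertised outcomes do not cover all inputs.

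Two smaller but real issues in Step 2: the theorem requires a deterministic algorithm, and "standard integral rounding of the final fractional flow" is not standard deterministically -- picking one flow path per unit demand can blow up congestion, and randomized rounding must be replaced either by a derandomization (pessimistic estimators) or, more naturally, by routing demands integrally one at a time under exponential length weights so that no rounding is needed; and the claim that $O(\log \hat n)$ MWU phases suffice with $O(\hat n)$ shortest-path computations per phase needs the actual augmentation accounting (Fleischer/Garg--K\"onemann style) rather than an assertion, since the congestion target $\tilde\eta=O(\log^2\hat n)$, the dilation bound, the step size, and the failure threshold $|F|\leq \alpha_0\hat n/1000$ must be made consistent with the cut extraction in Step 3.
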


Next, we discuss the implementation of the matching player, which is different from that of \cite{CK24}: to make the algorithm more efficient, we rely on Algorithm $\aset$ for the $r$-restricted \routeandcut problem, whose existence we have assumed.

\subsubsection*{Implementing the Matching Player}
We use the parameter $N'=\max\set{N,\hat W}$, so that $|V(H)|\leq N'$ and $N\leq N'\leq 12N$ holds, and additional parameters
$\hat \Delta=\ceil{\frac{|T|}{4}}$ and $\eta'=(\log N')^{8}$. Throughout the algorithm, we will sometimes compute disjoint subsets $A,B$ of vertices of $H$ of cardinality $\hat \Delta$ each, and we will view $(H,A,B,\hat \Delta,\eta',N')$ as an instance of the $r$-restricted \routeandcut problem. We show that this must be indeed a valid instance of the problem, in the following simple observation  whose proof is deferred to Section \ref{subsec: bad params no event} %and \ref{subsec: valid input for routeandcut} 
of Appendix. %, respectively.

\iffalse
\begin{observation}\label{obs: bad params no event}
$ \eta'\leq \hat \Delta$ and $\frac{\hat W\cdot \eta'}{\hat \Delta}\leq 2^{r\sqrt{\log N'}}$ must hold.
\end{observation}
\fi

\begin{observation}\label{obs: valid input for routeandcut}
For any pair $(A,B)$ of disjoint subsets of $V(H)$ of cardinality $\hat \Delta$ each, $(H,A,B,\hat \Delta,\eta',N')$ is a valid input to the $r$-restricted \routeandcut problem.	
\end{observation}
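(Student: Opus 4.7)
The plan is to verify, one by one, each of the five requirements in the definition of the $r$-restricted \routeandcut problem (Definition \ref{def: rounte and cut}) for the instance $(H,A,B,\hat\Delta,\eta',N')$. Three of the requirements are essentially free from the setup: (a) $H$ is well-structured, as it is the subdivided graph of the perfect well-structured graph $G$, and its adjacency-list representation is maintained alongside that of $G$; (b) $N' \geq |V(H)|$ by the choice $N' = \max\{N,\hat W\}$ together with the bound $|V(H)|\leq N'$ stated when $N'$ was introduced; and (c) $|A|=|B|=\hat\Delta\geq 1$, which is given. The two nontrivial checks are $1\leq \eta'\leq\hat\Delta$ and the $r$-restrictedness inequality $(|V(H)|-|B|)\eta'/\hat\Delta \leq 2^{r\sqrt{\log N'}}$.

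For the first nontrivial check, since we are assuming that event $\event_1$ did not occur, the subsampling bound from Section \ref{subsec: subsample} gives $|T|\geq \hat W/(2d')$, so $\hat\Delta = \lceil |T|/4\rceil \geq \hat W/(8d')$. Combined with Inequality \ref{eq: bound on d'}, which reads $d' \leq 4\hat W/(\log N)^{28}$, we conclude
\[
\hat\Delta \;\geq\; \frac{\hat W}{8d'} \;\geq\; \frac{(\log N)^{28}}{32}.
\]
On the other hand, $\log N' \leq \log N + O(\log\log N) \leq 2\log N$ (for $N$ sufficiently large, e.g.\ as in \Cref{eq: more logs}), so $\eta' = (\log N')^8 \leq 2^8(\log N)^8$. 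Since $N$ is large, $2^8(\log N)^8 \leq (\log N)^{28}/32$, and hence $\eta'\leq \hat\Delta$, as needed.

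For the $r$-restrictedness inequality, the key observation is that $|V(H)|$ stays within a constant factor of $\hat W$ throughout the phase: $|V(H)|\leq N'$ and $N' = \max\{N,\hat W\}\leq O(\hat W)$ within the phase (using the global bound $|V(H)|\leq 6 W^0$ together with $\hat W\geq n/2$). Bounding crudely,
\[
\frac{(|V(H)|-|B|)\eta'}{\hat\Delta} \;\leq\; \frac{|V(H)|\cdot \eta'}{\hat\Delta} \;\leq\; \frac{O(\hat W)\cdot (\log N')^8}{\hat W/(8d')} \;=\; O\!\bigl(d'(\log N')^8\bigr).
\]
Now using that the instance of \maintaincluster is $r$-restricted, $d\leq 2^{r\sqrt{\log N}}$, and hence $d'\leq 2d/(\log N)^{28}\leq 2^{r\sqrt{\log N}+1}/(\log N)^{28}$. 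Substituting and using $(\log N')^8 \leq 2^8(\log N)^8$,
\[
\frac{(|V(H)|-|B|)\eta'}{\hat\Delta} \;\leq\; O\!\left(\frac{2^{r\sqrt{\log N}}(\log N)^8}{(\log N)^{28}}\right) \;\leq\; \frac{2^{r\sqrt{\log N}}}{(\log N)^{16}} \;\leq\; 2^{r\sqrt{\log N'}},
\]
where the last inequality uses $\sqrt{\log N'}\geq \sqrt{\log N}$.

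The main obstacle is the bookkeeping of the polylogarithmic slack: one must verify that the factor $(\log N)^{28}$ built into the definition of $d'$ is large enough to absorb both $\eta' = (\log N')^8$ (for $\eta'\leq \hat\Delta$) and the comparison $(\log N')^8$ versus $2^{r\sqrt{\log N'}}$ (for the $r$-restricted bound). Each of these requires only polylogarithmic room in $\log N$, which is amply provided by the 28th power in $d'$.
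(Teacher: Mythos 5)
Your proposal follows essentially the same route as the paper: the only nontrivial requirements are $1\leq \eta'\leq\hat\Delta$ and the $r$-restrictedness bound, and you verify both exactly as the paper does, via $\hat\Delta\geq |T|/4\geq \hat W/(8d')$ (Event $\event_1$ not occurring), $d'\leq 2d/(\log N)^{28}$, $d\leq 2^{r\sqrt{\log N}}$, and the observation that $(\log N')^8$ is absorbed by the polylogarithmic slack in the definition of $d'$.

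One justification in your argument is wrong as stated, though it does not sink the proof. You claim $N'=\max\{N,\hat W\}\leq O(\hat W)$, deducing it from $|V(H)|\leq 6W^0$ and $\hat W\geq n/2$; but $N$ (and hence $N'$) can exceed $\hat W$ by a large factor, since $N\geq W^0$ while $\hat W$ can be as small as roughly $n/2$ after many deletions, so this inequality does not hold in general. The fact you actually need, $|V(H)|-|B|\leq O(\hat W)$, is immediate for a simpler reason: the observation is invoked during the construction of the expander at the very start of the phase, before any edge splittings within the phase, so $|V(H)|=\hat W$ at that point — this is exactly the bound $|V(H)|-|B|\leq \hat W$ that the paper uses. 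You also omit the (trivial) check that $r\leq \ceil{\sqrt{\log N'}}$, which follows from $r\leq\ceil{\sqrt{\log N}}$ and $N\leq N'$; with these two points repaired, your argument coincides with the paper's.
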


The matching player will be implemented by applying Algorithm $\aset$ for the $r$-restricted \routeandcut problem to instances $(H,A,B,\hat \Delta,\eta',N')$, where $A,B$ are disjoint subsets of $V(H)$ of cardinality $\hat \Delta$ each. Throughout, we  will use a parameter ${z=\frac{4|T|}{\log^4N}}$.

\subsubsection*{Proof of \Cref{lem: phase 1 embed expander}: the Cut-Matching Game}

We let $\tc$ be a large enough constant, so that, 
in \Cref{thm: cut player outer} the parameter $\tilde \eta\leq \tilde c\log^2 \hat n$. 
We start with the graph $X$, whose vertex set is $T$, and edge set is $\emptyset$, and then run the Cut-Matching game on it. Recall that the number of iterations in the game is bounded by $\cCMG\cdot \log \hat n$. 
This ensures that the graph $X$ that is maintained over the course of the game has maximum vertex degree $O(\log \hn)$.

\subsubsection*{Description of a Single Iteration}

We now describe the execution of the $i$th iteration.
First, we apply the algorithm from \Cref{thm: cut player outer} to the current graph $X$. We say that the current iteration is a \emph{type-1 iteration}, if the algorithm returns 
a cut $(A_i,B_i)$ in $X$ with $|E_X(A_i,B_i)|\leq \hn/100$ and $|A_i|,|B_i|\geq \hn/10$, and otherwise we say that it is a type-2 iteration. Assume first that the current iteration is a type-1 iteration. Then we compute two arbitrary  disjoint subsets $A'_i,B'_i$ of $V(X)$ of cardinality $\ceil{\frac{\hat n}{4}}=\ceil{\frac{|T|}{4}}=\hat \Delta$ each, such that either $A_i\subseteq A'_i$ or $B_i\subseteq B'_i$ holds. We treat $(A'_i,B'_i)$ as the move of the cut player. Recall that the running time of the algorithm from  \Cref{thm: cut player outer} is $\otilde(\hat n^2)\leq \otilde(\hat W^2)$.

Next, we consider two instances of the $r$-restricted \routeandcut problem: instance 
 $(H,A_i',B_i',\hat \Delta,\eta',N')$, and instance  $(H,B_i',A_i',\hat \Delta,\eta',N')$. We apply Algorithm $\aset$ to each of these two instances.
 If the algorithm returns ``FAIL'' on any of these instances, we repeat it up to $\ceil{10\log N}$ times, or until it terminates with a valid output, whichever happens first. If, for any of the two instances of the \routeandcut problem, the algorithm 
 returned ``FAIL'' repeatedly in each of its  $\ceil{10\log N}$ applications, then we terminate the algorithm and return ``FAIL''. Since the probability that an algorithm for the \routeandcut problem returns ``FAIL'' is at most $1/2$, the probability that the current iteration termintates with a ``FAIL'' is bounded by $\frac{2}{N^{10}}$. We assume from now on that the last application of Algorithm $\aset$ to each of the two instances did not terminate with a ``FAIL''. We denote by $\qset'_i$ and $\qset''_i$ the routings that these algorithms return, where $\qset'_i$ is the routing from $A'_i$ to $B'_i$. Recall that the running time of Algorithm $\aset$ on each of these two instances is bounded by:

 \[\begin{split}
 &c_1\cdot \hat W^{2}\cdot 2^{c_2\sqrt{\log N'}}\cdot (\log N')^{16c_2(r-1)+8c_2}
 \\&\quad\quad\quad\quad\quad\quad\leq O\left(c_1\cdot \hat W^{2} \cdot 2^{c_2\sqrt{\log N}}\cdot (\log (12N))^{16c_2(r-1)+8c_2+1}\right ) \\
 &\quad\quad\quad\quad\quad\quad\leq O\left( c_1\cdot \hat W^{2} \cdot 2^{c_2\sqrt{\log N}}\cdot (\log N)^{16c_2(r-1)+8c_2+1} \cdot \left(1+\frac{1}{32c_2r}\right )^{16c_2(r-1)+8c_2+1}\right )\\
  &\quad\quad\quad\quad\quad\quad\leq O\left( c_1\cdot \hat W^{2} \cdot 2^{c_2\sqrt{\log N}}\cdot (\log N)^{16c_2(r-1)+8c_2+1} \cdot \left(1+\frac{1}{32c_2r}\right )^{32c_2r}\right )\\
 &\quad\quad\quad\quad\quad\quad\leq O\left( c_1\cdot \hat W^{2} \cdot 2^{c_2\sqrt{\log N}}\cdot (\log N)^{16c_2(r-1)+8c_2+1} \right )\\
 \end{split} \]

(we have used Inequalities \ref{eq: more logs} and \ref{eq: even more logs}). Since Algorithm $\aset$ may be applied up to $O(\log N)$ times to the resulting instances of the \routeandcut problem, the total time spent in a single iteration by the executions of Algorithm $\aset$ is bounded by:

\[  O\left( c_1\cdot \hat W^{2} \cdot 2^{c_2\sqrt{\log N}}\cdot (\log N)^{16c_2(r-1)+8c_2+2} \right ). \]

We now consider two cases. The first case happens when both $|\qset'_i|,|\qset''_i|\geq \hat \Delta-z$. In this case, we say that the current iteration is of \emph{type (1a)}. Recall that the paths in each of the sets $\qset'_i$, $\qset''_i$ cause congestion at most $4\eta'\cdot \log N'\leq 4(\log N')^{9}\leq 4(\log (12N))^{9}\leq (\log N)^{10}$, since  $\eta'=(\log N')^{8}$ and $N'\leq 12N$. 
Since regular and special edges must alternate on each such path, and since the total number of special edges is at most $\hat W/2$,  $\sum_{P\in \qset'_i}|E(P)|\leq \hat W\cdot  (\log N)^{10}$. Similarly, $\sum_{P\in \qset''_i}|E(P)|\leq \hat W\cdot  (\log N)^{10}$.
 We say that a path $Q\in \qset'_i\cup \qset''_i$ is \emph{long}, if it contains more than $8d'\cdot  (\log N)^{14}$ edges. Clearly, the number of long paths in $\qset'_i$ is bounded by: 
 
\[\frac{\hat W\cdot  (\log N)^{10}}{8d'\cdot  (\log N)^{14}}\leq \frac{\hat W}{2d' (\log N)^{4}}\leq \frac{|T|}{\log^4N}\leq z.\] 

(We have used the fact that $|T|\geq \frac{\hat W}{2d'}$, as Event $\event_1$ did not happen.)
Similarly, at most $z$ paths in $\qset''_i$ are long.
We discard all long paths from both $\qset'_i$ and $\qset''_i$. At the end of this process, every path in $\qset'_i\cup \qset''_i$ has length at most $8d'\cdot  (\log N)^{14}$, and $|\qset'_i|,|\qset''_i|\geq \hat \Delta-2z$.

The set $\qset'_i$ of paths naturally defines a matching $\mset'_i\subseteq A'_i\times B'_i$, where for every path $Q\in \qset_i$ whose endpoints are $a\in A'_i,b\in B'_i$, we add an edge $e=(a,b)$ to the matching. We also let $Q(e)=Q$ be its embedding into $G$. Similarly, the set $\qset''_i$ of paths defines a matching $\mset''_i\subseteq B'_i\times A'_i$, together with an embedding $Q(e)\in \qset''_i$ for each edge $e\in \mset''$. Recall that matching $\mset'_i$ has cardinality at least $\hat \Delta-2z$, while $|A'_i|=|B'_i|=\hat \Delta$. Therefore, by adding a set $F'_i$ of at most $2z$ additional edges (that we refer to as \emph{fake edges}), we can turn matching $\mset'_i$ into a perfect one. Similarly, we add a collection $F''_i$ of at most $2z$ fake edges to $\mset''_i$ in order to turn it into a perfect matching. The edges of $\mset'_i\cup \mset''_i$ are then added to graph $X$, and the current iteration terminates.

Assume now that one of the two sets $\qset'_i,\qset''_i$ computed by Algorithm $\aset$ contains fewer than $\hat \Delta-z$ paths; we assume w.l.o.g. that it is $\qset'_i$. In this case we say that the current iteration is of \emph{type (1b)}.

Let $A''_i\subseteq A'_i,B''_i\subseteq B'_i$ be the sets of vertices that do not serve as endpoints of the paths in $\qset'_i$, so $|A''_i|,|B''_i|\geq z$. Recall that the algorithm must also return a cut $(Z,Z')$ in $H$, such that:

\[ |E_H(Z,Z')|\leq \frac{64\hat \Delta}{\eta'\log^4\hat W}+\frac{256|\qset'_i|}{\eta'}\leq \frac{512\hat \Delta}{\eta'}\leq \frac{256|T|}{(\log N')^8}\leq \frac{\hat W}{d'(\log N)^7},  \]

since $\eta'=(\log N')^{8}$, $\hat \Delta=\ceil{\frac{|T|}{4}}$, $|T|\leq \frac{8\hat W}{d'}$ since Event $\event_1$ did not happen,  and $N'\geq N$.
We are also guaranteed that $A''_i\subseteq Z$ and $B''_i\subseteq Z'$, so $|Z\cap T|,|Z'\cap T|\geq z=\frac{4|T|}{\log^4N}$. 
We use the following simple claim, that either computes a finalizing cut in $H$, or returns a strongly well-structured $1/d'$-sparse cut $(A,B)$ in $G$ with $w(A),w(B)\geq \frac{\hat W}{(\log N)^{30}}$.
The proof is deferred to Section \ref{subsec: proof of finalize the cut} of Appendix.

\begin{claim}\label{claim: finalize the cut}
	There is a deterministic algorithm, that, given a cut $(Z,Z')$ in graph $H$ with $|Z\cap T|,|Z'\cap T|\geq \frac{4|T|}{\log^4N}$ and  $|E_H(Z,Z')|\leq \frac{\hat W}{d'(\log N)^7}$, either  computes a finalizing cut in $H$, or returns a strongly well-structured $1/d'$-sparse cut $(A,B)$ in $G$ with $w(A),w(B)\geq \frac{\hat W}{(\log N)^{30}}$. The running time of the algorithm is $O(\hat W^2)$.
\end{claim}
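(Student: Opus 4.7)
}

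My plan is to first normalize $(Z,Z')$ into a weakly well-structured cut, then dichotomize on the balance of the two sides in $V(H)$, converting one case into a sparse strongly well-structured cut in $G$ and the other into a finalizing cut.

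\emph{Step 1: Make the cut weakly well-structured.} I will scan $E_H(Z,Z')$ once and, for every non-special (i.e., regular) edge $(u,v)\in E_H(Z,Z')$, move its $L'$-endpoint $u$ from $Z$ to $Z'$. Since $H$ is a perfect well-structured bipartite graph (regular edges going $L'\to R'$, special edges $R'\to L'$), each such move deletes the offending regular edge from the cut and adds at most one special edge (the unique special edge incident to $u$). Thus the resulting cut $(\tilde Z,\tilde Z')$ is weakly well-structured, and
\[
|E_H(\tilde Z,\tilde Z')|\;\le\;2\,|E_H(Z,Z')|\;\le\;\frac{2\hat W}{d'(\log N)^7}\;\le\;\frac{|T|}{8\log^6 N},
\]
using $|T|\ge \hat W/(2d')$ (which holds since $\event_1$ did not occur). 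Moreover, at most $|E_H(Z,Z')|\le 2|T|/\log^7 N$ vertices (hence at most that many terminals) have crossed sides, so the terminal counts on both sides are preserved up to an additive $o(|T|/\log^4 N)$ term. The running time of this scan is $O(\vol_H(Z)+\vol_H(Z'))=O(\hat W^2)$.

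\emph{Step 2: Case split on the balance in $V(H)$.} I consider the two sides of $(\tilde Z,\tilde Z')$:

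\emph{Case A (both sides large): $|\tilde Z|,|\tilde Z'|\ge \hat W/32$.} Here the cut is balanced, and its sparsity in $H$ is at most $\tfrac{2\hat W/(d'\log^7 N)}{\hat W/32}=\tfrac{64}{d'\log^7 N}$. Since at the beginning of the phase every special edge of $G$ has length at most $d'/2=1/(2\phi)\le 1/(4\phi')$ for $\phi'=\tfrac{64}{d'\log^7 N}$, I apply \Cref{obs: subdivided cut sparsity} to obtain a corresponding weakly well-structured cut $(A',B')$ in $G$ with $w(A'),w(B')\ge \hat W/256$ and sparsity $O(1/(d'\log^7 N))$. Then \Cref{obs: weakly to strongly well str} converts it to a strongly well-structured cut $(A,B)$ in $G$ of sparsity at most $O(1/(d'\log^7 N))\le 1/d'$ with $w(A),w(B)\ge \hat W/1024\ge \hat W/\log^{30}N$, as required. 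This step also runs in time $O(\hat W^2)$.

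\emph{Case B (one side is small): one of $\tilde Z,\tilde Z'$, say $\tilde Z'$, has $|\tilde Z'|<\hat W/32$.} Then $|\tilde Z|>|V(H)|-\hat W/32$, and I return $(\tilde Z,\tilde Z')$ as a finalizing cut. The edge-count bound is from Step 1, the well-structured property holds by construction, and the terminal imbalance is inherited from the hypothesis $|Z\cap T|,|Z'\cap T|\ge 4|T|/\log^4 N$: after Step 1, each side still holds at least $(4-o(1))|T|/\log^4 N$ terminals, so in particular $|\tilde Z\cap T|\le |T|-|\tilde Z'\cap T|\le |T|(1-7/\log^4 N)$ up to adjustment of the constants. (If a more careful matching of the numerical constants is needed, I would start Step 1 by biasing moves to preserve terminals on the side which ends up small in $V(H)$, i.e., always moving the $L'$-endpoint rather than the $R'$-endpoint, so that terminals already in $Z'$ stay in $\tilde Z'$.)

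\emph{Main obstacle.} The technical heart of the argument is Step 1: I must guarantee that after the transformation the edge budget $|T|/(8\log^6 N)$ is met and that terminal counts on each side remain within the slack required by the finalizing-cut definition. The edge-count bookkeeping is straightforward because every move trades a regular edge for at most one special edge, but the terminal-count bookkeeping is the delicate point, since the $4/\log^4 N$ versus $7/\log^4 N$ gap leaves very little room. The safest route is the biased transformation above, together with the observation that $|E_H(Z,Z')|\le 2|T|/\log^7 N$ is much smaller than $|T|/\log^4 N$, so the terminal counts on each side change only by a $(1\pm o(1))$ factor. The overall running time is dominated by the scans of $\vol_H(Z)$ and $\vol_H(Z')$ in Step 1 and the invocations of Observations~\ref{obs: subdivided cut sparsity} and~\ref{obs: weakly to strongly well str} in Case A, each $O(\hat W^2)$.
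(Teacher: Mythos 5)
Your proposal follows essentially the same route as the paper's proof: you first convert $(Z,Z')$ into a weakly well-structured cut by shifting the $L'$-endpoints of regular cut edges from $Z$ to $Z'$ (the paper notes this in fact never increases $|E_H(Z,Z')|$, slightly better than your factor-$2$ bound), then either return the cut as finalizing when one side contains at least $|V(H)|-\hat W/32$ vertices, or, in the balanced case, project the sparse cut to $G$ via \Cref{obs: subdivided cut sparsity} and apply \Cref{obs: weakly to strongly well str} to get the strongly well-structured $1/d'$-sparse cut with $w(A),w(B)\geq \hat W/\log^{30}N$, all in time $O(\hat W^2)$. The terminal-count tension you flag (each side is only guaranteed roughly $4|T|/\log^4N$ terminals, while the finalizing-cut definition demands a margin of $7|T|/\log^4N$) is present in the paper's own argument as well, which simply observes that at most $2|T|/\log^7N$ terminals change sides and then asserts the finalizing property, so your hedge concerns a constant-level slack in the paper's bookkeeping rather than a gap specific to your approach.
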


If the algorithm from \Cref{claim: finalize the cut} returns a finalizing cut in $H$, then we output this cut as the outcome of the algorithm. Otherwise, we output the $1/d'$-sparse cut $(A,B)$ in $G$ with $w(A),w(B)\geq \frac{\hat W}{(\log N)^{30}}$ that the algorithm returns.

Finally, assume that the current iteration $i$ is of type 2. In this case, the algorithm for the Cut Player from \Cref{thm: cut player outer} computed a directed graph 
$X'$ with $|V(X')|\geq \hat n/4$ and maximum vertex degree at most $18$, such that $X'$ is an $\alpha_0$-expander, together with a subset $F'\subseteq E(X')$ of at most $\frac{\alpha_0\cdot n}{1000}$ edges of $X'$, and an embedding $\pset$ of $X'\setminus F'$ into $X$ via paths of length at most $O(\log^2 \hat n)$, that cause congestion at most $\tilde c\log^2 \hat n$.

Denote $\hat F=\bigcup_{i'=1}^{i-1}(F'_i\cup F''_i)$ -- the collection of the fake edges that we have computed so far. Since for all $1\leq i'<i$, $|F'_i\cup F''_i|\leq 4z$ and $i\leq \cCMG\log n$, we get that $|\hat F|\leq 4\cCMG z \log \hat n$.

We also denote $\hat \qset=\bigcup_{i'=1}^{i-1}(\qset'_i\cup \qset''_i)$. Note that $\hat \qset$ defines an embedding of $X\setminus \hat F$ into $H$. All paths in $\hat \qset$ have lengths at most $8d'\cdot  (\log N)^{14}$. Each set $\qset'_i\cup \qset''_i$ causes congestion at most $2(\log N)^{10}$, and so the total congestion caused by the paths in $\hat \qset$ is bounded by $\hat \eta=O\left((\log N)^{11}\right )$. 
We now terminate the execution of the Cut-Matching Game.

As our next step, we will construct a set $F\subseteq E(X')$ of fake edges, and an embedding of the remaining edges of $X'$ into $H$. The latter is done by comining the embedding $\pset$ of $X'\setminus F'$ into $X$ with the embedding $\hat \qset$ of $X\setminus \hat F$ into $H$.

\subsubsection*{Constructing the Set $F$ of Fake Edges for $X'$ and Embedding $X'\setminus F$ into $H$}

We start with $F=F'$. Next, we process every edge $e\in E(X')\setminus F'$ one by one. When an edge $e=(u,v)$ is processed, we consider the path $P'(e)\in \pset$, that serves as an embedding of $e$ into graph $X$. Recall that the length of the path is at most $O(\log^2 \hat n)$. Let $P(e)=(e_1,e_2,\ldots,e_k)$, where $k\leq O(\log^2 \hat n)\leq O(\log^2N)$. 
If path $P(e)$ contains any edges of $\hat F$, then we add $e$ to the set $F$ of fake edges. Otherwise,
for all $1\leq j\leq k$, we consider the path $Q(e_j)\in \hat \qset$, that serves as the embedding of the edge $e_j$ into graph $G$. Recall that the length of $Q$ is at most $8d'\cdot  (\log N)^{14}$. By concatenating the paths $Q(e_1),\ldots,Q(e_k)$, we obtain a path $P'(e)$ connecting $u$ to $v$ in $G$, whose length is at most $O\left (d'\cdot  (\log N)^{16}\right )$.  We turn $P'(e)$ into a simple path in time $\tilde O(d')$, obtaining the final path $\hat P(e)$, that will serve as an embedding of the edge $e$.

Once every edge of $X'\setminus F'$ is processed, we obtain a final set $F$ of fake edges, and, for every remaining edge $e\in E(X')\setminus F$, a simple path $\hat P(e)$ of length $O\left (d'\cdot  (\log N)^{16}\right )$, that serves as the embedding of $e$ in $G$. Recall that $|V(X')|\geq \hat n/4$, and graph $X'$ is an $\alpha_0$-expander with maximum vertex degree at most $18$. 
Next, we bound the congestion caused by the paths in set $\pset^*=\set{\hat P(e)\mid e\in E(X')\setminus F}$ in $H$, and the cardinality of the set $F$ of fake edges.

Consider any edge $e\in E(H)$. Recall that $e$ may participate in at most 	$\hat \eta=O\left((\log N)^{11}\right )$ paths in set $\hat \qset=\set{Q(e')\mid e'\in E(X)\setminus \hat F}$. Every edge $e'\in E(X)\setminus \hat F$, whose corresponding path $Q(e')$ contains the edge $e$, may in turn participate in at most $O(\log^2 N)$ paths of $\pset$. Therefore, overall, every edge $e$ of $G$ participates in at most 	$O\left((\log N)^{13}\right )$ paths of $\pset^*$.

In order to bound the number of the fake edges in $F$,
recall that $|\hat F|\leq 4\cCMG z \log N$. Since the paths in $\pset$ cause congestion at most $\tc\log^2 \hat n=\tc\log^2N$, we get that every edge of $\hat F$ may participate in at most $\tc\log^2 N$ paths of $\pset$. Therefore, $|F\setminus \hat F|\leq (4\cCMG z \log N)\cdot (\tc\log^2 N)\leq 4\tc\cdot \cCMG\cdot z\cdot \log^3N\leq \frac{\alpha_0\cdot |T|}{2000}$,
since $z=\frac{4|T|}{\log^4N}$, and $N$ is sufficiently large.  
Since $|F'|\leq \frac{\alpha_0\cdot |T|}{1000}$, we get that $|F|\leq \frac{3\alpha_0\cdot |T|}{2000}$.

Note that graph $X'$ has all required properties, except that $V(X')$ may not contain all terminals. In our last step, we fix this problem, by connecting the terminals that do not lie in $X'$ to the vertices of $X'$.

\subsubsection*{Extending the Expander to All Terminals}

Let $T_0\subseteq T$ be the set of terminals that lie in $X'$, so $|T_0|\geq \ceil{\frac{|T|}{4}}$. We partition the terminals of $T\setminus T_0$ into three subsets $T_1,T_2,T_3$ arbitrarily, so that every subset contains at most $\ceil{\frac{|T|}{4}}$ terminals. We now perform last three iterations, that we will denote by iterations $(i+1),(i+2)$ and $(i+3)$. In each of these three iterations, we will only invoke the algorithm of the Matching Player, in order to connect the vertices in one of the subsets $T_1,T_2,T_3$ to $T_0$.

We now describe iteration $(i+1)$. The remaining two iterations are performed similarly. We let $A'_{i+1}$ be any subset of $T_0$ containing exactly $\hat \Delta=\ceil{\frac{|T|}{4}}$ vertices. We also let $B'_{i+1}$ be any subset of $\hat \Delta$ terminals with $T_1\subseteq B'_{i+1}$, so that $A'_{i+1}$ and $B'_{i+1}$ are disjoint.
Next, we consider two instances of the $r$-restricted \routeandcut problem: instance 
$(H,A_{i+1}',B_{i+1}',\hat \Delta,\eta',N')$, and instance  $(H,B_{i+1}',A_{i+1}',\hat \Delta,\eta',N')$. We apply Algorithm $\aset$ to each of these two instances.
As before, if either of these algorithms returns ``FAIL'', then we repeat it  at most $\ceil{10\log N}$ times, or until it terminates with a valid input -- whatever happens first. If the algorithm returned ``FAIL'' in all $\ceil{10\log N}$ applications, then we terminate the current algorithm with a ``FAIL''. We assume from now on that in the last application of algorithm $\aset$ to both instances a valid output was computed, and we denote by $\qset'_{i+1}$ and $\qset''_{i+1}$ the routings that these algorithms return, where $\qset'_{i+1}$ is the routing from $A'_{i+1}$ to $B'_{i+1}$.

If either of the sets $\qset'_{i+1},\qset''_{i+1}$ of paths has cardinality less than $\hat \Delta-z$, then we employ the same algorithm as before to   either  compute a finalizing cut in $H$, or to compute a strongly well-structured $1/d'$-sparse cut $(A,B)$ in $G$ with $w(A),w(B)\geq \frac{\hat W}{(\log N)^{30}}$. We then terminate the algorithm and return the resulting cut.
We assume from now on that $|\qset'_{i+1}|,|\qset''_{i+1}|\geq \hat \Delta-z$.
We use the set $\qset'_{i+1}$ of paths as before (except that we ingore those paths that terminate at vertices of $B'_{i+1}\setminus T_1$), in order to compute a matching $M'_{i+1}\subseteq T_0\times T_1$, with $|M'_{i+1}|\geq |T_1|-2z$, together with an embedding of the edges of $M'_{i+1}$ into the graph $H$, via paths of lengths at most $8d'\cdot  (\log N)^{14}$, that cause
congestion at most $O(\log^{10}N)$. We then add a collection $F'_{i+1}$ of at most $2z$ fake edges to $M'_{i+1}$ to turn it into a matching of cardinality $|T_1|$. Similarly, we use the paths in $\qset''_{i+1}$, in order to compute a matching $M''_{i+1}\subseteq T_1\times T_0$, with $|M''_{i+1}|= |T_1|$, together with a subset $F''_{i+1}\subseteq M''_{i+1}$ of at most $2z$ fake edges, and an embedding of the edges of $M''_{i+1}\setminus F''_{i+1}$ into the graph $H$, via paths of lengths at most $8d'\cdot  (\log N)^{14}$, that cause
congestion at most $O(\log^{10}N)$. The remaining two iterations are executed similarly, except that we use sets $T_2$ and $T_3$ of terminals instead of $T_1$.

Consider now the final graph $X^*$, that is obtained from $X'$ by adding the vertices of $T_1\cup T_2\cup T_3$ to it, and the edges of $\mset^*=\bigcup_{j=1}^3(M'_{i+j}\cup M''_{i+j})$. Then $V(X^*)=T$, maximum vertex degree in $X^*$ is at most $18+6=24$. Moreover, from \Cref{obs: expander plus matching}, graph $X^*$ is an $\frac{\alpha_0}{81}$-expander.

Recall that each of the sets $\set{F'_{i+j},F''_{i+j}\mid 1\leq j\leq 3}$ of fake edges has cardinality at most $2z\leq \frac{8|T|}{\log^4N}$. We add the fake edges from all these sets to the set $F$ of fake edges. Since initially $|F|\leq \frac{3\alpha_0\cdot |T|}{2000}$, and since the number of edges that we added to $F$ in this step is bounded by $\frac{100|T|}{\log^4N}$, we get that $|F|\leq \frac{\alpha_0\cdot |T|}{500}$ holds. For every edge $e\in \mset^*\setminus F$, we have computed an embedding path $\tilde P(e)$, that is added to the set $\pset^*$ of paths. The final set $\pset^*$ of paths is an embedding of $X^*\setminus F$ into $H$, where the length of every path is $O\left (d'\cdot  (\log N)^{15}\right )$, and the paths in $\pset^*$ cause congestion $O\left((\log N)^{16}\right )$.

\subsubsection*{Running Time  and Failure Probability Analysis}

The algorithm has $O(\log \hat n)\leq O(\log N)$ iterations. 
The probability that it returns ``FAIL'' in a single iteration is at most $\frac{2}{N^{10}}$, and so the probability that the algorithm ever returns ``FAIL'' is at most $\frac{1}{N^9}$.

the total time spent in a single iteration on the executions of Algorithm $\aset$ is bounded by:

\[  O\left( c_1\cdot \hat W^{2} \cdot 2^{c_2\sqrt{\log N}}\cdot (\log N)^{16c_2(r-1)+8c_2+2} \right ). \]

Additionally, in every iteration we use the algorithm for the Cut Player from \Cref{thm: cut player outer}, whose running time is  $\otilde(\hat W^2)$.

The algorithm from \Cref{claim: finalize the cut}, whose running time is $O(\hat W^2)$, may only be incurred once over the course of the execution of our algorithm.

Additionally, when computing the embedding $\pset^*$ of $X^*\setminus F$ into $G$, we spend time at most $O\left (d'\cdot  (\log N)^{16}\right )$ per edge of $X'$, and, since $|E(X')|\le O(\hat W)$, the time required to compute the embedding is bounded by: $O\left (\hat W \cdot d'\cdot  (\log N)^{15}\right )\leq O\left (\hat W^2\right )$, since $d'\leq  \frac{2d}{(\log N)^{28}}$ and $d\leq n$ from Inequality \ref{eq: bound on d}.
Overall, the running time of the algorithm is bounded by:

\[\begin{split} 
& O\left( c_1\cdot \hat W^{2} \cdot 2^{c_2\sqrt{\log N}}\cdot (\log N)^{16c_2(r-1)+8c_2+3} \right )+\tilde O(\hat W^2)\\
&\quad\quad\quad\quad\quad\quad\quad\quad\quad\leq  O\left( c_1\cdot \hat W^{2} \cdot 2^{c_2\sqrt{\log N}}\cdot (\log N)^{16c_2(r-1)+8c_2+3} \right ). 
\end{split}\]

\subsubsection{Maintaining the Expander}
\label{subsubsec: maintainexpander}

At the beginning of the phase, we execute the algorithm from \Cref{lem: phase 1 embed expander}. If the algorithm terminated with a ``FAIL'', which may only happen with probability at most $\frac{1}{N^7}$, then we terminate the algorithm for the current phase and return ``FAIL''. If the algorithm returned  
 a finalizing cut in $H$, then we apply the algorithm from \Cref{claim: finalizing cut} to this cut. If the algorithm from \Cref{claim: finalizing cut}  estabishes that at least one of the events $\event_2,\event_3$ happened,  we terminate the phase with a ``FAIL''. Otherwise, we obtain a strongly well-structured $\phi$-sparse cut $(A^*,B^*)$ in the current graph $G$, such that $w(A^*),w(B^*)\geq \frac{\hat W}{\log^{30}N}$. In this case, we output the cut $(A^*,B^*)$. Once this cut is processed, at least $\frac{\hat W}{\log^{30}N}$ vertices are deleted from $H$, and the current phase terminates. Finally, if the algorithm from
 \Cref{lem: phase 1 embed expander} returns a strongly well-structured $1/d'$-sparse cut $(A,B)$ in $G$ with $w(A),w(B)\geq \frac{\hat W}{(\log N)^{30}}$, then as before we output this cut, and, once it is processed, at least $\frac{\hat W}{\log^{30}N}$ vertices are deleted from $H$, and the current phase terminates. Therefore, we assume from now on that the algorithm from  \Cref{lem: phase 1 embed expander} computed a graph $X^*$, that we will denote for simplicity by $X$ from now on. Recall that $V(X)=T$, maximum vertex degree in $X$ is at most $24$, and $X$ is an $\frac{\alpha_0}{81}$-expander. 
  Recall that we are also given a set $F\subseteq E(X)$ of at most  $\frac{\alpha_0\cdot |T|}{500}$ fake edges, and an embedding $\pset^*$ of $X\setminus F$ into $H$.
  We let $d''=\Theta\left (d'\cdot  (\log N)^{15}\right )$ and $\eta''=\Theta\left((\log N)^{13}\right )$ be such that every path in $\pset^*$ has length at most $d''$, and the congestion caused by the paths in $\pset^*$ is at most $\eta''$. For convenience, we will denote the set $\pset^*$ of paths by $\qset$ from now on, and, for every edge $\hat e\in E(X)\setminus F$, we denote by $Q(\hat e)\in \qset$ its corresponding embedding path.

 Recall that, since Event $\event_1$ did not happen, we are guaranteed that $\frac{\hat W}{2d'}\leq |T|\leq \frac{8\hat W}{d'}$. As the phase progresses, some vertices and edges may be deleted from $X$ and from $G$. We will view the set $T$ of terminals as a static set, so even if some vertices of $T$ are deleted from $X$ or from $G$, the set $T$ remains unchanged.

Throughout the algorithm, for every edge $e\in E(H)$, we maintain a collection $S'(e)\subseteq E(X)$ of edges $\hat e\in E(X)\setminus F$, such that $e\in Q(\hat e)$. Clearly, the time required to compute these initial sets $S'(e)$ of edges, for all $e\in E(H)$, is subsumed by the time required to compute the embedding of $X\setminus F$ into $H$. Additionally, for every edge $\hat e\in E(X)\setminus F$, we maintain the length $\hat \ell(\hat e)$ of the path $Q(\hat e)$, which is equal to the number of edges on the path, and must initially be bounded by $d''$.

As the phase progresses, graph $H$ undergoes changes, that may require updating the expander $X$ and its embedding. Specifically, we need to ensure that, even as vertices and edges are deleted from $H$, $V(X)\subseteq V(H)$, and, for every edge $\hat e\in E(X)\setminus F$, $Q(\hat e)\subseteq H$ continue to hold. Additionally, as edges in $H$ undergo splitting, some of the embedding paths in $\qset$ may become longer, and we may need to delete the corresponding edges from $X$, in order to ensure that the lengths of all embedding paths are suitably bounded. Lastly, we need to ensure that, throughout the phase, for every pair $x,y\in V(X)$ of vertices, there is a path connecting $x$ to $y$ in $X\setminus F$, whose length is at most $O(\log N)$. Whenever our algorithm discovers that this is no longer the case, it may compute a sparse cut $(A,B)$ in $X$, and delete the vertices of the smaller side of the cut from $X$. We now describe all updates that graph $X$ undergoes over the course of the phase formally. We will ensure that, despite these updates, throughout the phase, $|V(X)|\geq |T|/2$ holds.

%=======================================
%=======================================
%=======================================
%=======================================
%=======================================
%=======================================
%=======================================
%=======================================
\subsubsection*{First Type of Updates: Due to Edge Splitting}
%=======================================
%=======================================
%=======================================
%=======================================
%=======================================
%=======================================
%=======================================
%=======================================

Recall that, as the algorithm progresses, the lengths of some edges $e\in E(G)$ may be doubled. Whenever this happens, we update the corresponding path $P^*(e)$ in $H$, by splitting every special edge $e'\in E(P^*(e))$. The latter is done by replacing the special edge $e'=(x,y)$ with a path $(x,z,z',y)$, where $z,z'$ are new vertices, and $(x,z)$, $(z',y)$ are new special edges.  Notice that once edge $e'$ is split, it no longer lies in graph $H$. For every edge $\hat e\in E(X)\setminus F$ that lies in $S'(e')$, we consider the corresponding embedding path $Q(\hat e)$, that must contain the edge $e'$. We replace $e'$ with the new path $(x,z,z',y)$ on $Q(\hat e)$, and insert $\hat e$ into the sets $S'((x,z))$, $S'((z',y))$, that correspond to the new special edges $(x,z)$ and $(z',y)$. We also update the length $\hat \ell(\hat e)$ of path $Q(\hat e)$, which increases by additive $2$. If, as the result of this procedure, the length of path $Q(\hat e)$ became greater than $8d''$, then edge $\hat e$ is deleted from graph $X$.

We denote by $\hat E_1$ the set of all edges that were deleted from graph $X$ over the course of the phase due to this type of updates. We now show that the number of such edges must be small.

\begin{claim}\label{claim: bound first type edge deletions}
	$|\hat E_1|\leq \frac{\alpha_0\cdot |T|}{5000}$.
\end{claim}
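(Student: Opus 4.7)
The plan is to relate the deletion of edges in $\hat E_1$ to the total number of edge-splittings performed in $H$ during the phase, using the bounded congestion of the embedding $\qset$ of $X\setminus F$.

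For each $\hat e\in\hat E_1$, I will observe that $\hat e$ was inserted into $X$ with $\hat \ell(\hat e)\le d''$ (because the embedding produced by \Cref{lem: phase 1 embed expander} has paths of length at most $d''$), and $\hat e$ is placed into $\hat E_1$ precisely at the moment $\hat\ell(\hat e)$ first exceeds $8d''$. Hence every $\hat e\in\hat E_1$ sees its embedding path grow by more than $7d''$ edges during its lifetime in $X$. On the other hand, splitting a special edge $e'=(x,y)$ of $H$ replaces the single edge $(x,y)$ by the $3$-edge path $(x,x',y',y)$, so for every $\hat e\in E(X)\setminus F$ whose current embedding path $Q(\hat e)$ contains $e'$, the value $\hat\ell(\hat e)$ increases by exactly~$2$. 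Summing over all $\hat e\in\hat E_1$ and over all splittings, and using that $\qset$ causes edge-congestion at most $\eta''$ in $H$ (so at most $\eta''$ paths of $\qset$ can contain any given edge at any moment), I will obtain
\[
7d''\cdot|\hat E_1|\;\le\;\sum_{\hat e\in\hat E_1}\bigl(\text{growth of }\hat\ell(\hat e)\bigr)\;\le\;2\eta''\cdot S_{\mathrm{phase}},
\]
where $S_{\mathrm{phase}}$ denotes the total number of special-edge splittings performed in $H$ during the phase.

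Next I will bound $S_{\mathrm{phase}}$ crudely: each splitting introduces two new vertices into $H$, and the excerpt already records that the total number of vertices ever belonging to $H$ over the entire algorithm is at most $6W^0=6n$. In particular $S_{\mathrm{phase}}\le 3n$.

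Finally, combining these two estimates yields $|\hat E_1|\le 6\eta''\,n/(7d'')<\eta''\,n/d''$. With $d''=\Theta\bigl(d'(\log N)^{15}\bigr)$, $d'=\Theta(d/(\log N)^{28})$, and $\eta''=\Theta((\log N)^{13})$, this gives $|\hat E_1|=O\!\bigl(n(\log N)^{26}/d\bigr)$. Because event $\event_1$ did not happen, $|T|\ge \hat W/(2d')\ge \hat W(\log N)^{28}/(4d)$, and $\hat W\ge n/2$ (since the algorithm has not yet terminated, so $|V(G)|\ge n/2$, and $\hat W=|V(H)|\ge|V(G)|$). Combining, $|T|=\Omega\bigl(n(\log N)^{28}/d\bigr)$, so $|\hat E_1|/|T|=O(1/\log^2 N)\le \alpha_0/5000$, because $N$ is greater than a sufficiently large constant. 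The main obstacle is just the careful bookkeeping of the $+2$ contribution of a split to every embedding path through the split edge, combined with the $\eta''$ congestion bound turning a sum over splittings into a sum over embedding paths; the $S_{\mathrm{phase}}\le 3n$ estimate is available essentially for free, thanks to the generous polylogarithmic slack between $d'$ and $d''$ in the parameter setup.
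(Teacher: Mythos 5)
Your accounting of the path-length growth is sound and mirrors the paper's congestion-based bookkeeping: every $\hat e\in \hat E_1$ forces more than $7d''$ of growth in $\hat\ell(\hat e)$, each splitting adds $2$ to at most $\eta''$ embedding paths, so $|\hat E_1|\le 2\eta'' S_{\mathrm{phase}}/(7d'')$. The gap is in your bound $S_{\mathrm{phase}}\le 3n$, which you claim is "available essentially for free". First, the equality $6W^0=6n$ is false here: in the \maintaincluster problem the initial special-edge lengths are arbitrary powers of $2$ (only in \maintainspeccluster are they all $1$), so $W^0(G)$ can vastly exceed $n$ — after the preprocessing it can still be as large as roughly $n\cdot d$. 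Second, even the corrected global statement (total splittings over the whole algorithm is $O(W^0)$, by \Cref{obs: final length of edges}) does not suffice, because the claim must be compared against $|T|\ge \hat W/(2d')$, where $\hat W$ is the number of vertices of $H$ at the start of the \emph{current phase}; in late phases $\hat W$ can be far smaller than $W^0$ (heavy vertices get deleted while $|V(G)|$ stays above $n/2$), and then $\eta'' W^0/d''$ can exceed $\alpha_0|T|/5000$ by roughly a factor $W^0/(\hat W\log^2 N)$.

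What you actually need is a per-phase bound of the form $S_{\mathrm{phase}}=O(\hat W)$ (or, as the paper does, a bound on the splittings of edges \emph{not} present in $H$ at the start of the phase), and this is not free: the first in-phase doubling of a pre-existing edge of $G$ may be justified by paths returned in \emph{earlier} phases, since Property \ref{maintaincluster-prop: edge congestion} spans phase boundaries, so those doublings can only be bounded by the total weight at the start of the phase (at most $\hat W/2$ splittings, using that each special edge of $H_0$ can be split at most once); all further in-phase doublings must be charged, via Property \ref{maintaincluster-prop: edge congestion}, to the at most $\Delta'\cdot d$ total length of the at most $\Delta'$ paths returned during the phase, giving an additional $O(\Delta' d/\eta)$ splittings. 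This charging argument is exactly the computation in the paper's proof (its distinction between "important" and "unimportant" splittings), and without it your crude global vertex-count bound does not yield the claim.
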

\begin{proof}
	Recall that, over the course of the phase, the algorithm only needs to respond to at most $\Delta'$ queries. Given such a query between a pair $(x,y)$ of vertices, the algorithm must return a path $P$ of length at most $d$ connecting $x$ to $y$. Denote by $\tilde \ell(P)$ the length of the path $P$ at the time when it was returned in response to the query. Then the lengths of some edges on path $P$ are doubled. 
	If $\pset$ is the set of all paths that the algorithm returned in response to queries over the course of the phase, then $\sum_{P\in \pset}\tilde  \ell(P)\leq \Delta'\cdot d$ must hold.

	Let $G_0$ denote the graph $G$ at the beginning of the phase.
	Recall that, if $e$ is any edge of $G_0$, and $\tau,\tau'$ are two consecutive times during the current phase when the length of $e$ is doubled, then $e$ must have participated in at least $\eta$ paths that the algorithm returned during the time interval $(\tau,\tau']$. However, the first time that the length of $e$ is doubled during the current phase may occur even before it participates in $\eta$ such paths. 
	For every edge $e\in E(G_0)$, denote by $\tilde \ell^0(e)$ the length of $e$ at the  beginning of the current phase. If $e\in E(G)$ at the end of the phase, then we denote by $\tilde \ell^*(e)$ the length of $e$ at the end of the phase; otherwise, $\tilde \ell^*(e)$ is the length of $e$ just before it is deleted from $G$.
	
	Let $E'\subseteq E(G_0)$ be the set of all edges $e$ with $\tilde \ell^*(e)>2\tilde \ell^0(e)$, so $E'$ contains all edges of $G_0$ whose lengths were doubled at least twice. 
	Consider now any edge $e\in E'$, and assume that $\tilde \ell^*(e)=2^i$. Let $\tau$ be the time during the current phase when the length of $e$ was doubled from $2^{i-2}$ to $2^{i-1}$, and let $\tau'$ be the time when it was doubled from $2^{i-1}$ to $2^i$. Then $e$ participated in at least $\eta$ paths that were returned by the algorithm in response to queries during the time interval $(\tau,\tau']$. For each such path $P$, it contributed $2^{i-1}=\frac{\tilde \ell^*(e)}2$ to $\tilde \ell(P)$. Therefore, every edge $e\in E'$ contributes at least $\frac{\eta\cdot \tilde \ell^*(e)}{2}$ to $\sum_{P\in \pset}\tilde \ell(P)$. We conclude that $\sum_{e\in E'}\tilde \ell^*(e)\leq \frac{2}{\eta}\sum_{P\in \pset}\tilde \ell(P)\leq \frac{2\Delta'\cdot d}{\eta}$. 
	
	Notice that, for every edge $e\in E(G)$, at any time during the execution of the algorithm, the number of special edges on $P^*(e)$ is equal to the current length of $e$, while the splitting of a speical edge on $P^*(e)$ increases the number of special edges on $P^*(e)$ by $1$. Therefore, if the current length of $e$ is $\ell(e)$, then the edges of $P^*(e)$ underwent $\ell(e)-\tilde \ell^0(e)$ splittings since the beginning of the algorithm.
	
	Consider the graph $H$ at the beginning of the current phase, that we denote by $H_0$. 
	Note that, whenever a special edge $e\in E(H)$ undergoes a splitting, the edge itself is deleted from $H$, and instead two new edges are inserted. Consider now a special edge $e\in E(H)$, and suppose it underwent a splitting at some time. We say that this splitting operation is \emph{unimportant} if $e\in E(H_0)$, and we say that it is \emph{important} otherwise. Note that, if an edge $e\in E(H)$ undergoes an important splitting update, then its parent-edge must lie in $E'$. Therefore, from the above discussion, the total number of important splitting operations in graph $H$ is bounded by $\sum_{e\in E'}\tilde \ell^*(e)$.
	
	Consider now some edge $\hat e\in E(X)\setminus F$, and its embedding path $Q(\hat e)$. Let $\mu(\hat e)$ be the total number of important splitting operations that the edges of path $Q(\hat e)$ underwent over the course of the algorithm. 
	Since the embedding $\qset$ of $X\setminus F$ into $H$ causes congestion at most $\eta''$, we get that $\sum_{e\in X\setminus F}\mu(\hat e)\leq \eta''\cdot \sum_{e\in E'}\tilde \ell^*(e)\leq \frac{2\Delta'\cdot d\cdot \eta''}{\eta}$. On the other hand, if $\hat e\in \hat E_1$, then the length of $Q(\hat e)$ increased from $d''$ to $8d''$ over the course of the algorithm, and so the edges on $Q(\hat e)$ must have undergone at least $6d''$ important splitting operations so $\mu(\hat e)\geq 6d''$. Therefore:
	
	\[ |\hat E'|\leq \frac{2\Delta'\cdot d\cdot \eta''}{6\eta d''}\leq O\left(\frac{\Delta\cdot d}{\eta\cdot d'\cdot (\log N)^{33}}\right )\leq O\left(\frac{n}{d' \cdot(\log N)^{33}}\right ) \leq O\left(\frac{\hat W}{d' \cdot(\log N)^{33}}\right ) <\frac{\alpha_0\cdot |T|}{5000}, \]

	since $d''=\Theta\left (d'\cdot  (\log N)^{16}\right )$, $\Delta'=\frac{\Delta}{(\log N)^{30}}$, and $\eta''=\Theta\left((\log N)^{13}\right )$. We have also used the fact that $\frac{\Delta\cdot d}{\eta}\leq n$ from the problem definition, that $\hat W\geq \frac{n}{2}$, and that $|T|>\frac{8\hat W}{d'}$, since Event $\event_1$ did not happen.
\end{proof}

%=======================================
%=======================================
%=======================================
%=======================================
%=======================================
%=======================================
%=======================================
%=======================================
\subsubsection*{Second Type of Updates: Due to Sparse Cuts in $X$}
%=======================================
%=======================================
%=======================================
%=======================================
%=======================================
%=======================================
%=======================================
%=======================================

Throughout the execution of the phase, we may sometimes find cuts $(A,B)$ in $X\setminus F$, with $|E_{X\setminus F}(A,B)|\leq \frac{\alpha_0}{2^{20}}\cdot \min\set{|A|,|B|}$. Denote $A$ and $B$ by $Z$ and $Z'$, where $Z$ is the smaller of the two sets, breaking ties arbitrarily. We then delete the vertices of $Z$ from graph $X$.
Notice that $|E_{X\setminus F}(A,B)|\leq \frac{\alpha_0}{2^{20}}\cdot|Z|$. We \emph{charge} the vertices of $Z$ for  the edges of $E_{X\setminus F}(A,B)$, where each vertex $v\in Z$ is charged $\frac{\alpha_0}{2^{20}}$ units, so the total charge is at least $|E_{X\setminus F}(A,B)|$. 

Let $\hat E_2$ be the set of edges in the initial graph $X$, that is defined as follows. Initially, $\hat E_2=\emptyset$. Whenever our algorithm computes a cut 
$(A,B)$ in $X\setminus F$, with $|E_{X\setminus F}(A,B)|\leq \frac{\alpha_0}{2^{20}}\cdot \min\set{|A|,|B|}$, we add the edges of $E_{X\setminus F}(A,B)$ to set $\hat E_2$. From the charging scheme that we just described, it is easy to verify that, at the end of the algorithm, $|\hat E_2|\leq \frac{\alpha_0\cdot |T|}{2^{20}}$ holds.

%======================================================
%======================================================
%======================================================
%======================================================
%======================================================
%======================================================
%======================================================
%======================================================
%======================================================
%======================================================
%======================================================
\subsubsection*{Third Type of Updates: Due to Sparse Cuts in $G$}

Recall that our algorithm may, at any time, produce a strongly well-structured $\phi$-sparse cut $(A,B)$ in graph $G$. If $w(A)\leq W(B)$, then the vertices of $A$ are deleted from $G$, and otherwise the vertices of $B$ are deleted from $G$. Additionally,  the endpoints of the edges of $E_G(A,B)$ are also deleted from  $G$, and graph $H$ is also updated with corresponding edge- and vertex-deletions. This may require updating graph $X$, for example, if some of the embedding paths in $\qset$ use vertices or edges that were deleted from $H$. We emphasize that edges and vertices may only be deleted from $G$ via this procedure, and the only other updates that graph $G$ undergoes is the doubling of the lengths of its edges.

Let $(A_1,B_1),\ldots,(A_q,B_q)$ denote the sequence of all strongly well-structured $\phi$-sparse cuts in $G$ that our algorithm produced during the current phase, indexed in the order in  which the algorithm produced them. If the last sparse cut that the algorithm produced caused the total number of vertices that were deleted from $H$ over the course of the phase to reach $\frac{\hat W}{\log^{30}N}$ (after which the current phase is immediately terminated), then we do not include that cut in this sequence. For all $1\leq i\leq q$, denote by $\tau_i$ the time when cut $(A_i,B_i)$ was produced, by $E_i=E_G(A_i,B_i)$ at time $\tau_i$, and by $J_i\subseteq A_i,J'_i\subseteq B_i$ the vertices that serve as endpoints of the edges in $E_i$ (with respect to $G\attime[\tau_i]$ -- graph $G$ at time $\tau_i$). 
If $w(A_i)\geq w(B_i)$, then we denote $Z_i=A_i\setminus J_i$, and otherwise we denote $Z_i=B_i\setminus J'_i$. We also  let $\overline Z_i=V(G\attime[\tau_i])\setminus Z_i$ contain the remaining vertices of the current graph $G\attime[\tau_i]$. For convenience, if $Z_i=A_i\setminus J_i$, then we say that cut $(A_i,B_i)$ is of \emph{type 1}, and otherwise we say that it is of \emph{type 2}.

We denote by $G_0$ the graph $G$ at the beginning of the phase, and, for all $1\leq i\leq q$, we denote by $G_i$ the graph $G$ that is obtained after cut $(A_i,B_i)$ has been processed. Therefore, for all $1\leq i\leq q$, graph $G_i$ is obtained from graph $G_{i-1}$ by deleting the vertices of $\overline Z_i$ from it, or, equivalently, $G_{i}=G_{i-1}[Z_i]$.

We denote by $H_0$ the graph $H$ at the beginning of the current phase. For all  $1\leq i\leq q$, we denote by $H'_i$ the graph $H$ just before the cut $(A_i,B_i)$ is processed, and we denote by $H_i$ the graph $H$ immediately after cut $(A_i,B_i)$ is processed.
Therefore, graph $H_i$ is obtained from $H'_{i}$ by deleting from it all vertices and regular edges that were deleted from $G_{i-1}$; additionally, for every special edge $e$ that was deleted from $G_{i-1}$, all vertices and edges on the corresponding path $P^*(e)$ are deleted from $H'_{i}$. We denote $V(H_i)$ by $Z^*_i$, and we denote by $\overline{Z}^*_i=V(H'_{i})\setminus V(H_i)$ the set of all vertices that were deleted from $H'_i$ when cut $(A_i,B_i)$ of $G$ was processed.

We note that, in addition to the updates due to processing the cuts $(A_j,B_j)$, the only other kind of updates that graph $H$ undergoes is the splitting of some of its special edges, that happens whenever the length of the corresponding parent-edge is doubled. Therefore, for all $1\leq i\leq q$, graph $H'_i$ can be obtained from $H_{i-1}$ by executing a sequence of special edge splitting operations.

Consider some index $1\leq i\leq q$, and consider an edge $e=(x,y)\in E_i$. Recall that this edge is represented by a path $P^*(e)$ in graph $H$. Additionally, every regular edge $e'$ that is incident to $x$ or $y$ in $G$ is present in graph $H$. We let $\Gamma(e)\subseteq E(H'_{i})$ contain all regular edges that are incident to $x$ or $y$ in $H'_i$, and all edges of $P^*(e)$.
We then set $\Gamma_i=\bigcup_{e\in E_i}\Gamma(e)$. Notice that, if $(A_i,B_i)$ is a type-1 cut (so $Z_i=A_i\setminus J_i$), then graph $H'_{i}\setminus \Gamma_i$ contains no edge connecting a vertex of $Z^*_i$ to a vertex of $\overline Z^*_i$. Similarly, if $(A_i,B_i)$ is a type-2 cut (so $Z_i=B_i\setminus J'_i$), the graph $H'_{i}\setminus \Gamma_i$ contains no edge connecting a vertex of $\notZ^*_i$ to a vertex of $Z^*_i$. Therefore, it may be convenient to think of the update to graph $H'_{i}$ due to the cut $(A_i,B_i)$ in $G$ as follows: first, we delete the edges of $\Gamma_i$ from $H'_{i}$. Then we obtain a sparsity-$0$ cut in the resulting graph (with the cut being $(Z^*_i,\notZ^*_i)$ if $(A_i,B_i)$ is a type-1 cut, or $(\notZ^*_i,Z^*_i)$ otherwise). After that we delete the vertices of one of the sides of the cut (vertices of $\notZ^*_i$) from the resulting graph, obtaining graph $H_i$.

Before we describe the updates to the expander $X$, recall that $X$ undergoes other types of updates, in which edges and vertices may be deleted from it. But for the sake of analyzing the updates of type 3, it will be helpful for us to ignore the other types of updates that expander $X$ undergoes. Therefore, we denote the initial graph $X$ by $X_0$, and, for $1\leq i\leq q$, we denote by $X_i$ the graph that is obtained from $X_{i-1}$ once the cut $(A_i,B_i)$ is processed. Except for updating the graph $X$ while processing the cuts $(A_i,B_i)$, we currently do not allow any other updates to $X$. We will show how combine all types of updates to $X$ together later.

Fix again an index $1\leq i\leq q$, and consider an edge $e=(x,y)\in E_i$. Recall that we have defined a collection $\Gamma(e)$ of edges of $H'_{i}$, that includes all edges on the path $P^*(e)$, and all regular edges that are incident to $x$ or to $y$ in $H'_i$. We define the corresponding subset $\hat \Gamma(e)$ of edges of $X_{i-1}\setminus F$: $\hat \Gamma(e)=\bigcup_{e'\in \Gamma(e)}S'(e')$. In other words, an edge $\hat e\in E(X_{i-1})\setminus F$ belongs to the set $\hat \Gamma(e)$ if and only if its corresponding embedding path $Q(\hat e)$ contains at least one edge of $\Gamma(e)$. We need the following simple observation.

\begin{observation}\label{obs: few embedding edges affected}
	For every edge $e=(x,y)\in E_i$, $|\hat \Gamma(e)|\leq 2\eta''$. 
\end{observation}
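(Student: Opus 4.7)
The plan is to exploit the special structure of $P^*(e)$ as a path in $H$ whose internal vertices have degree exactly two, in order to argue that essentially every embedding path $Q(\hat e)$ contributing to $\hat\Gamma(e)$ must use one of two specific edges of $H$. Concretely, since $(A_i,B_i)$ is strongly well-structured, $e=(x,y)\in E_i$ is a special edge with $x\in R$ and $y\in L$. In $H$ this forces $x$ to have a unique outgoing edge $e_1=(x,y_1)$ (the first edge of $P^*(e)$) plus only incoming regular edges, and $y$ to have a unique incoming edge $e_2=(x_q,y)$ (the last edge of $P^*(e)$) plus only outgoing regular edges. Every edge of $\Gamma(e)$ therefore has at least one endpoint in $V^*:=V(P^*(e))$, and every internal vertex of $V^*$ has degree exactly two in $H$ with both incident edges lying on $P^*(e)$.

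Next I would study how a simple embedding path $Q(\hat e)$ can interact with $V^*$. The only edges crossing between $V^*$ and $V(H)\setminus V^*$ are the incoming regular edges at $x$ and the outgoing regular edges at $y$, so $Q(\hat e)$ can enter $V^*$ at most once (through $x$) and leave at most once (through $y$); once inside, the degree-two structure forces it to traverse a contiguous sub-path of $P^*(e)$. A short case split on where the two endpoints of $Q(\hat e)$ lie then shows: if both endpoints are outside $V^*$, the path traverses all of $P^*(e)$ and in particular uses both $e_1$ and $e_2$; if exactly one endpoint is outside, the path uses $e_1$ or $e_2$, unless the inside endpoint is exactly $x$ or $y$ and the path does not enter $P^*(e)$; if both endpoints lie in $V^*\cap T$, the path is a sub-path of $P^*(e)$ between two terminals of $P^*(e)$.

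The final step is to combine the congestion bound with a bound on the few exceptional paths. By the embedding congestion, the number of $\hat e$ with $e_1\in Q(\hat e)$ is at most $\eta''$, and similarly for $e_2$, contributing at most $2\eta''$ edges to $\hat\Gamma(e)$ in total. The exceptional $\hat e$ -- those whose path leaves from or ends at the terminal $x$ or $y$ via a regular edge without touching $P^*(e)$, or whose path is contained in $P^*(e)$ with both endpoints in $V^*\cap T$ -- contribute at most $\deg_X(x)+\deg_X(y)=O(1)$ for the former and at most $|V^*\cap T|^2=O(\log^2 N)$ for the latter, since maximum degree in $X$ is $24$ and since Event $\event_4$ did not happen, so $|V^*\cap T|<16\log N$. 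This additive slack is absorbed into the constant inside $\eta''=\Theta(\log^{13}N)$, giving the stated bound $|\hat\Gamma(e)|\leq 2\eta''$.

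The main obstacle is the third case of the case split: handling paths that lie entirely inside $P^*(e)$ with both endpoints being terminals on $P^*(e)$, because such paths need not use either $e_1$ or $e_2$. The key point that resolves this is precisely that Event $\event_4$ rules out too many terminals on any single $P^*(e)$, so this exceptional contribution is polylogarithmic and negligible compared to $\eta''$. The rest of the argument is a careful but standard use of congestion together with the in/out-degree bookkeeping at the vertices $x$ and $y$.
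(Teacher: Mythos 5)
Your structural analysis of how an embedding path can interact with $P^*(e)$ is sound, and it is essentially the mechanism the paper itself uses: internal vertices of $P^*(e)$ have in- and out-degree $1$, the only way into $V(P^*(e))$ from outside is through $x$ and the only way out is through $y$, and Event $\event_4$ controls the number of terminals on $P^*(e)$. The problem is the final accounting. You charge the congestion bound $\eta''$ twice, once at $e_1$ and once at $e_2$, which already consumes the entire budget of $2\eta''$, and then you add the exceptional paths (those ending at $x$, starting at $y$, or running between two interior terminals of $P^*(e)$) on top of that, for a total of $2\eta''+O(\log^2 N)$. Declaring this slack ``absorbed into the constant inside $\eta''$'' does not work as written: $\eta''$ is the already-fixed congestion bound of the embedding $\qset$, and your two congestion buckets are bounded only by that same quantity, so there is no spare room inside $\eta''$ left to absorb anything. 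As stated, your argument proves $|\hat\Gamma(e)|\le 2\eta''+O(\log^2 N)$, not the claimed $|\hat\Gamma(e)|\le 2\eta''$.

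The fix is a reorganization rather than a new idea, and it recovers exactly the paper's proof. Do not charge congestion at both $e_1$ and $e_2$. Instead split $\hat\Gamma(e)$ into (i) edges $\hat e$ having an endpoint that lies on $P^*(e)$: such endpoints are terminals on $P^*(e)$, so by Event $\event_4$ and the maximum degree $24$ in $X$ there are at most $24\cdot 16\log N\le \eta''$ of them, and all of your exceptional cases, as well as every path that uses $e_2$ but not $e_1$, fall into this family; and (ii) edges $\hat e$ with no endpoint on $P^*(e)$: here your own degree bookkeeping shows $Q(\hat e)$ must traverse all of $P^*(e)$ (a path touching only a regular edge at $x$ or $y$ without an endpoint there is forced to continue through $e_1$, respectively to have arrived through $e_2$), so it contains $e_1$, and the congestion bound gives at most $\eta''$ such edges. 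The two parts sum to $2\eta''$, which is precisely the paper's type-1/type-2/type-3 classification.
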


\begin{proof}
	Let $T(e)$ be the set of all terminals that lie on path $P^*(e)$.
	Since we have assumed that Event $\event_4$ did not happen, $|T(e)|\leq 16\log N$.
	
	We classify the edges in set $\hat \Gamma(e)$ into three different types, and bound the number of edges of each type separately. Consider any edge $\hat e\in \hat \Gamma(e)$.
	
	We say that  $\hat e$ is of type 1, if at least one of its endpoints lies on $P^*(e)$. On this case, at least one endpoint of $\hat e$ belongs to set $T(e)$. Since $|T(e)|\leq 16\log N$, and since the degree of every vertex in $X$ is at most $24$, we get that the total number of type-1 edges in $\hat \Gamma(e)$ is bounded by $O(\log N)\leq \eta''$. %We can compute the set of all type-1 edges in $\hat \Gamma(e)$ in time $O(\ell(e))\leq O(d')$, by considering the path $P^*(e)$ and computing the set $T(e)$ of terminals.
	
	We say that edge $\hat e$ is of type 2, if it is not of type 1, and, additionally, its embedding path $Q(\hat e)$ contains at least one edge of $P^*(e)$. Since every inner vertex on $P^*(e)$ has in- and out-degree $1$, and the endpoints of $\hat e$ do not lie on $P^*(e)$, it must be the case that $P^*(e)\subseteq Q(\hat e)$. Since the embedding of $X\setminus F$ into $H$ causes congestion at most $\eta''$, we get that the total number of type-2 edges in $\hat \Gamma$ is bounded by $\eta''$. 
	
	If edge $\hat e$ is not of the first two types, then we say that it is of type 3. We claim that no type-3 edges may exist. Indeed, assume for contradiction that some edge $\hat e\in \hat \Gamma(e)$ is of type $3$. Then path $Q(\hat e)$ must contain some regular edge $e'$ of $H'_i$ that is incident to $x$ or $y$; assume w.l.o.g. that $e'$ is incident to $x$. However, $x$ is not an endpoint of $\hat e$, and so it is not an endpoint of $Q(\hat e)$. Since the only edge leaving $x$ in $G$ is $e$, path $Q(\hat e)$ must contain the first edge on $P^*(e)$. But then $\hat e$ must belong to one of the first two types, a contradiction. 
\end{proof}

Fix again an index $1\leq i\leq q$. We now define the set $\hat \Gamma_i$ of edges of $X_{i-1}\setminus F$, as follows: $\hat \Gamma_i=\bigcup_{e\in E_i}\hat \Gamma(e)$. 
Note that, equivalently, $\hat \Gamma_i$ contains all edges $\hat e$ of $X_{i-1}\setminus F$, whose embedding path $Q(\hat e)$ contains at least one edge of $\Gamma_i$. Notice that, from \Cref{obs: few embedding edges affected}, $|\hat \Gamma_i|\leq 2|E_i|\cdot \eta''$.
We denote $\hat E_3=\bigcup_{i=1}^q\hat \Gamma_i$.
We need the following simple observation.

\begin{observation}\label{obs: few deleted edges}
$\sum_{i=1}^q|E_i|\leq \frac{\phi\cdot \hat W}{\log^{30}N}$, and $|\hat E_3|\leq \frac{\alpha_0\cdot |T|}{2^{20}}$.
%\frac{2\phi\cdot \eta''\cdot \hat W}{\log^{30}N}\leq 
\end{observation}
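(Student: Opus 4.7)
\begin{proofof}{\Cref{obs: few deleted edges} (Proposal)}
The plan is to first prove the bound on $\sum_{i=1}^q |E_i|$ by exploiting the phase-termination condition, and then to obtain the bound on $|\hat E_3|$ as an immediate consequence via \Cref{obs: few embedding edges affected}.

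For the first bound, let $W_i=\min\{w_{G_{i-1}}(A_i),w_{G_{i-1}}(B_i)\}$, where the weights are measured in $G_{i-1}$ (the graph at time $\tau_i$, just before the cut is processed). Since $(A_i,B_i)$ is a $\phi$-sparse cut, we have $|E_i|\le \phi\cdot W_i$, so it suffices to show that $\sum_{i=1}^q W_i \le \hat W/\log^{30}N$. The key claim is that when cut $(A_i,B_i)$ is processed, the number of vertices deleted from $H$ equals $w_{G_{i-1}}(\overline Z_i)$, and hence is at least $W_i$. To see this, recall that since $(A_i,B_i)$ is a strongly well-structured cut in a perfect well-structured graph, all edges of $E_i$ are special and each vertex $v\in \overline Z_i$ has its unique incident special edge $e_v$ with both endpoints in $\overline Z_i$ (the construction of $J_i$/$J_i'$ was designed precisely to maintain this invariant). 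Consequently, the set of special edges deleted from $G_{i-1}$ is exactly the set $E^{\spec}(\overline Z_i)$ of special edges with both endpoints in $\overline Z_i$. For each such edge $e$, the entire path $P^*(e)$ of $2\ell(e)$ vertices is deleted from $H'_i$, so the total count of deleted $H$-vertices is $\sum_{e\in E^{\spec}(\overline Z_i)} 2\ell(e) = \sum_{v\in \overline Z_i} w(v) = w_{G_{i-1}}(\overline Z_i) \ge W_i$. Since by definition the cuts $(A_1,B_1),\dots,(A_q,B_q)$ are exactly those whose processing did \emph{not} cause the phase to terminate, the total number of $H$-vertices deleted by these $q$ cuts is strictly less than $\hat W/\log^{30}N$. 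Therefore $\sum_{i=1}^q W_i<\hat W/\log^{30}N$ and hence $\sum_{i=1}^q|E_i|\le \phi\cdot \hat W/\log^{30}N$.

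For the second bound, by \Cref{obs: few embedding edges affected} we have $|\hat\Gamma_i|\le 2|E_i|\cdot\eta''$ for every $i$, so
\[
|\hat E_3|\;=\;\Bigl|\bigcup_{i=1}^q \hat\Gamma_i\Bigr|\;\le\;\sum_{i=1}^q|\hat\Gamma_i|\;\le\;2\eta''\sum_{i=1}^q|E_i|\;\le\;\frac{2\eta''\phi\cdot \hat W}{\log^{30}N}\;=\;\frac{8\eta''\cdot \hat W}{d'\cdot \log^{30}N}.
\]
Substituting $\eta''=\Theta(\log^{13}N)$ yields $|\hat E_3|\le O\bigl(\hat W/(d'\log^{17}N)\bigr)$. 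Since Event $\event_1$ did not happen we have $|T|\ge \hat W/(2d')$, so $\alpha_0|T|/2^{20}\ge \alpha_0\hat W/(2^{21}d')$, and the desired bound $|\hat E_3|\le \alpha_0|T|/2^{20}$ follows whenever $N$ is larger than a sufficiently large constant.

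The only delicate step is the identity $|\overline Z^*_i|=w_{G_{i-1}}(\overline Z_i)$; the rest is routine. This identity rests on the structural fact that in a strongly well-structured cut both endpoints of every cut edge end up in $\overline Z_i$, which is why the definition of $J_i,J_i'$ has to include those endpoints on the ``retained'' side.
\end{proofof}
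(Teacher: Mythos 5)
Your proposal is correct and follows essentially the same route as the paper's proof: sparsity gives $|E_i|\le\phi\cdot\min\{w(A_i),w(B_i)\}$, the number of $H$-vertices deleted when processing $(A_i,B_i)$ is at least that minimum weight, the phase-termination convention caps the total deletions at $\hat W/\log^{30}N$, and the second bound then follows from \Cref{obs: few embedding edges affected} together with $\phi=4/d'$, $\eta''=\Theta(\log^{13}N)$ and $|T|\ge\hat W/(2d')$. The only difference is that you spell out in detail why the deleted $H$-vertex count equals $w(\overline Z_i)$, a step the paper treats as immediate.
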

\begin{proof}
Fix an index $1\leq i\leq q$. Since cut $(A_i,B_i)$ is $\phi$-sparse in $G_{i-1}$, $|E_i|\leq \phi\cdot \min\set{w(A_i),w(B_i)}$. Moreover, once cut $(A_i,B_i)$ is processed, the total weight of all vertices in $G$ decreases by at least $\min\set{w(A_i),w(B_i)}\geq \frac{|E_i|}{\phi}$, and so at least $\frac{|E_i|}{\phi}$ vertices are deleted from $H'_i$ to obtain $H_i$. 
%In other words, $|V(H_i)|\leq |V(H_{i-1})|-\frac{|E_i|}{\phi}$.
Since we have assumed that, by the time that cut $(A_q,B_q)$ has been processed, fewer than $\frac{\hat W}{\log^{30}N}$ vertices were deleted from $H$, we get that 
${\sum_{i=1}^q|E_i|\leq \frac{\phi\cdot \hat W}{\log^{30}N}}$.
	
From \Cref{obs: few embedding edges affected}, for all $1\leq i\leq q$, $|\hat \Gamma_i|\leq 2|E_i|\cdot \eta''$. Therefore:

\[|\hat E_3|=\sum_{i=1}^q|\hat \Gamma_i|\leq 2\eta''\cdot \sum_{i=1}^q|E_i|\leq \frac{2\phi\cdot \eta''\cdot \hat W}{\log^{30}N}\leq O\left (\frac{\hat W}{d'\cdot \log^{17}N}\right )\leq \frac{\alpha_0\cdot \hat W}{2^{21}\cdot d'}\leq \frac{\alpha_0\cdot |T|}{2^{20}},\]

since $\eta''=\Theta(\log^{13}N)$, $\phi=\frac{4}{d'}$, and $N$ is large enough.
\end{proof}

%Recall that, following the deletion of the vertices of $Z'_i$ from $G$, we also update graph $H$, so that all vertices and regular edges that were deleted from $G$ are also deleted from $H$, and, for every special edge $e$ that is deleted from $G$, the vertices and the edges of the corresponding path $P^*(e)$ are deleted from $H$. Equivalently, graph $H$ can be obtained from $G[Z_i]$, buy replacing every special edge $e\in G[Z_i]$ with the corresponding path $P^*(e)$ that contans $\ell(e)$ special edges. For convenience, we denote by $Z_i^*$ the set of vertice of this updated graph $H$, and by $\overline Z_i^*$ the set of vertices that were deleted from $H$. Let $H_{i-1}$ denote graph $H$ before the current update, and let $H_i$ denote the same graph after the current update. Notice that, if $|A_i|\geq |B_i|$ (so $Z_i=A_i\setminus J_i$), then graph $H_{i-1}\setminus \Gamma_i$ contains  no edge connecting a vertex of $Z^*_i$ to a vertex of $\overline Z_i^*$. Similarly, if $|B_i|>|A_i|$ (so $Z_i=B_i\setminus J'_i$), then $H_{i-1}\setminus \Gamma_i$ contains no edge connecting a vertex of $\overline Z_i^*$ to  a vertex of $Z^*_i$.

We are now ready to describe the type-3 updates to the expander $X$. Fix an index $1\leq i\leq q$. In order to update expander $X_{i-1}$ following the cut $(A_i,B_i)$ in graph $G_{i-1}$ that our algorithm produced, we start by deleting the edges of $\hat \Gamma_i$ from $X_{i-1}$. 
For each such edge $\hat e\in \hat \Gamma_i$, we consider every edge $e\in Q(\hat e)$, and update the set $S'(e)$ of edges by deleting $\hat e$ from it. 
Next, we denote $T_i=V(X_{i-1})\cap Z^*_i$, and $\overline T_i=V(X_{i-1})\cap \notZ_i^*$. If $\notT_i\neq\emptyset$, then we delete the vertices of $\notT_i$ from $X_{i-1}$, obtaining the final graph $X_i$. We now discuss this latter update in more detail, presenting a slightly different view of this update, that will be helpful for us later.

Assume first that $(A_i,B_i)$ in $G_{i-1}$ is a type-1 cut, so $Z_i=A_i\setminus J_i$. As observed already, in this case, graph $H'_{i}\setminus \Gamma_i$ contains no edge connecting a vertex of $Z^*_i$ to a vertex of $\notZ^*_i$. It is then easy to verify that graph $X_{i-1}\setminus \hat \Gamma_i$ contains no edge connecting a vertex of $T_i$ to a vertex of $\overline T_i$. Indeed, if such an edge $\hat e$ existed, then its corresponding path $Q(\hat e)$ would connect a vertex of $Z^*_i$ to a vertex of $\notZ^*_i$ in $H_i'$, and so it must contain an edge of $\Gamma_i$. But then $\hat e\in \hat \Gamma_i$ holds, a contradiction. Therefore, if $(A_i,B_i)$ is a type-1 cut, and $T_i,\overline T_i\neq \emptyset$, then $(T_i,\overline T_i)$ is a cut of sparsity $0$ in $X_{i-1}\setminus \hat \Gamma_i$.

Next, assume that $(A_i,B_i)$ is a type-2 cut. As observed already, in this case, graph $H'_i\setminus \Gamma_i$ contains no edge connecting a vertex of $\notZ^*_i$ to a vertex of $Z^*_i$. From the same arguments as above, graph $X_{i-1}\setminus \hat \Gamma_i$ contains no edge connecting a vertex of $\notT_i$ to a vertex of $\overline T_i$. Therefore, if $(A_i,B_i)$ is a type-2 cut, and $T_i,\overline T_i\neq \emptyset$, then $(\notT_i,T_i)$ is a cut of sparsity $0$ in $X_{i-1}\setminus \hat \Gamma_i$.

To conclude, we can view the update to graph $X_{i-1}$ as follows: first, we delete all edges of $\hat \Gamma_i$ from $X_{i-1}$. Then we either delete all vertices from $X_{i-1}$ (if $T_i=\emptyset$), or delete no vertices from $X_{i-1}$ (if $\notT_i=\emptyset$), or we obtain a sparsity-$0$ cut in the resulting graph $X_{i-1}\setminus\hat \Gamma$, and we delete all vertices on one side of this cut from $X_{i-1}$.
This view of the update procedure will be helpful for us later.
It is easy to verify that, at the end of this update, the collection $\set{Q(\hat e)\in \qset\mid \hat e\in E(X_i)\setminus F}$ defines a valid embedding of graph $X_i\setminus F$ into graph $H_i$. Indeed, we have ensured that $T_i\subseteq V(H_i)$. Additionally, for every edge $\hat e\in E(X_i)\setminus F$, both endpoints of $\hat e$ lie in set $Z^*_i$, and, since path $Q(\hat e)$ is disjoint from the edges of $\Gamma_i$, it may not use vertices of $\notZ^*_i$, and must therefore be contained in $H_i$.

Lastly, we would like to claim that $|V(X_q)|$ remains sufficiently large compared to $|T|=|V(X_0)|$, but unfortunately this may not be the case. Instead we will show that, if $1\leq i\leq q$ is the smallest index, for which $|X_i|\leq |T|\cdot \left(1-\frac{8}{\log^4N}\right )$ holds, then we can compute a finalizing cut in graph $G_{i-1}$. 
%either correctly establish that at least one of the events $\event_2,\event_3$ has happened; or we can compute a $\phi$-sparse cut $(A^*,B^*)$ in $G_i$, such that, once cut $(A^*,B^*)$ is processed in $G_i$, and graph $H_i$ is updated with the resulting deletions of edges and vertices, the total number of vertices deleted from $H$ since the beginning of the phase becomes at least $\frac{\hat W}{(\log N)^{30}}$. In the former case, we will terminate the algorithm with ``FAIL'', while in the latter case, we will process cut $(A^*,B^*)$ in $G_i$, following which a sufficient number of vertices will be deleted from $H$ so that the first phase will terminate. 
In our algorithm, we will monitor the number of vertices in the current graph $X_i$, and, if it ever falls below $|T|\cdot \left(1-\frac{8}{\log^4N}\right )$, we will compute a finalizing cut, and apply the algorithm from \Cref{claim: finalizing cut} to it, after which the current phase will be terminated, so in particular no further cuts $(A_j,B_j)$ will be produced by the algorithm during the current phase. Therefore, essentially we only need to consider the case where $|V(X_{q-1})|\geq |T|\cdot \left(1-\frac{8}{\log^4N}\right )$ but $|V(X_q)|< |T|\cdot \left(1-\frac{8}{\log^4N}\right )$. The following claim summarizes the algorithm that will allow us to complete the current phase in such a case. For a vertex $v\in V(G_0)$, we denote by $\hat w(v)$ its weight at the beginning of current phase, and for a subset $S$ of vertices of $G_0$, we denote by $\hat w(S)=\sum_{v\in S}\hat w(v)$.

\begin{claim}\label{claim: lots of vertices in X after second type of updates}
	Suppose that $|V(X_{q-1})|\geq |T|\cdot \left(1-\frac{8}{\log^4N}\right )$ but $|V(X_q)|< |T|\cdot \left(1-\frac{8}{\log^4N}\right )$. Then there is a deterministic algorithm, that, given graphs $G_0,X_0$, and the sequence $(A_1,B_1),\ldots,(A_q,B_q)$ of cuts in $G$, and graph $H_{q-1}$, computes a finalizing cut $(\tA,\tB)$ in $H_{q-1}$. The running time of the algorithm is $O(\hat W^2)$.
\end{claim}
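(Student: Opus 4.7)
Plan:

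My plan is to construct $(\tilde A,\tilde B)$ in $H_{q-1}$ by starting from the natural ``deletion-based'' partition $(\tilde A_0,\tilde B_0)$, where $\tilde A_0 = V(H_{q-1})\cap V(H_q)$ and $\tilde B_0 = V(H_{q-1})\cap \notZ^*_q$; that is, $\tilde B_0$ is exactly the set of $H_{q-1}$-vertices that disappear when $(A_q,B_q)$ is processed, which can be identified from $H_{q-1}$ and $(A_q,B_q)$ alone. The two counting conditions of \Cref{def: finalizing cut} are then immediate: the current phase deletes fewer than $\hat W/\log^{30}N$ vertices from $H$ in total, so $|\tilde B_0|\le \hat W/\log^{30}N\le \hat W/32$; and since $\overline T_q = V(X_{q-1})\cap \notZ^*_q \subseteq \tilde B_0$, the hypothesis $|V(X_q)|<|T|(1-8/\log^4N)$ yields $|\tilde B_0\cap T|\ge |\overline T_q|=|V(X_{q-1})|-|V(X_q)|>|V(X_{q-1})|-|T|(1-7/\log^4N)$, so the terminal condition holds with slack at least $|T|/\log^4 N$.

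Next, I will analyze $E_{H_{q-1}}(\tilde A_0,\tilde B_0)$ and show it contains no special edge. A short case analysis of how the special edges of $H_{q-1}$ sit on subdivided paths $P^*(e)$ (the start edge, the end edge, each interior edge $(x_i,y_i)$, plus the degenerate case $\ell(e)=1$), combined with the fact that the retained-vs-deleted status of an interior $P^*(e)$-vertex depends only on whether $e$ is incident to $\notZ_q$ (so both endpoints of every interior edge share that status), plus the strongly well-structured property of $(A_q,B_q)$ (no regular edge from $A_q$ to $B_q$ and no special edge from $B_q$ to $A_q$), rules out every candidate special cut edge; the interior regular edges $(y_i,x_{i+1})$ are ruled out by the same status argument. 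Hence every cut edge is an original $G$-regular edge $(u,v)$ with $u\in L_G\cap Z_q$ and $v\in R_G\cap \notZ_q$, which forces $v\in J_q$ in the type-1 case ($w(A_q)\ge w(B_q)$) and $v\in A_q\cap R_G$ in the type-2 case.

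I then eliminate every such regular cut edge by moving its $R_G$-endpoint $v$ from $\tilde B$ to $\tilde A$. Because $G$ is perfect well-structured, each $v\in R_H$ has exactly one outgoing special edge, so one move removes \emph{every} regular cut edge into $v$ while introducing at most one new special cut edge; since $v$ leaves $\tilde B$, the small-side size bound is preserved. In the type-1 case the number of moves is at most $|J_q|\le |E_q|\le \phi\cdot w(B_q) \le O(\hat W/(d'\log^{30}N))$ by $\phi$-sparsity of $(A_q,B_q)$ and $w(B_q)\le |\notZ^*_q|$, and together with $|T|\ge \hat W/(8d')$ from the failure of $\event_1$ this is comfortably below both the required bound $|T|/(8\log^6 N)$ and the terminal slack $|T|/\log^4 N$. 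The hard part is the type-2 case, where the naive bound $|A_q\cap R_G|\le |A_q|\le \hat W/\log^{30}N$ does not immediately translate into $|T|/(8\log^6 N)$ when $d'$ is large; I plan to sharpen the count by arguing that only those $v\in A_q\cap R_G$ which actually receive a $G$-regular edge from $\tilde A_0\cap L_G$ need to be moved, and bounding the number of such ``active'' vertices via the sparsity of $(A_q,B_q)$ together with the structural bound $w(A_q)\le |\notZ^*_q|$. The entire construction will run via a single scan over $E(H_{q-1})$, yielding the required $O(\hat W^2)$ time.
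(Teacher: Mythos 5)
Your type-1 analysis is sound and in fact reproduces the paper's cut (the paper places $J_q$ with $A_q$ from the start, so your ``move the heads in $J_q$'' step just recovers that construction, at a cost of at most $|J_q|\le |E_q|\le \frac{8|T|}{\log^{30}N}$ new special cut edges, which is well within budget). The counting conditions are also fine, though your inequality chain for the terminal condition is garbled: the clean statement is that $T\cap V(H_q)=V(X_q)$ (terminals are never created by splitting and are removed from the shadow expander exactly when they are removed from $H$), so the retained side initially contains fewer than $|T|\bigl(1-\frac{8}{\log^4N}\bigr)$ terminals, leaving slack $\frac{|T|}{\log^4N}$ for any vertices you later move onto it.

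The genuine gap is the type-2 case, and it is not repairable by ``sharpening the count.'' There the deleted side is $A_q\cup J'_q$, and the forward cut edges of your deletion-based cut are the regular edges from $L\cap(B_q\setminus J'_q)$ to $R\cap A_q$. The $\phi$-sparsity of a \emph{strongly well-structured} cut bounds only $|E_G(A_q,B_q)|$, i.e.\ the special edges from $A_q$ to $B_q$; it says nothing about regular edges from $B_q$ to $A_q$, nor about the number of their distinct heads in $R\cap A_q$. Nothing in the setup prevents essentially every vertex of $R\cap A_q$ from being ``active,'' so the number of moves you need can be as large as $\Theta(w(A_q))$, which is only bounded by $\Theta\bigl(\frac{\hat W}{\log^{30}N}\bigr)$; since $|T|$ can be as small as $\Theta\bigl(\frac{\hat W}{d'}\bigr)$ and $d'$ can be far larger than $\operatorname{poly}\log N$ (up to roughly $2d/\log^{28}N$ with $d\le 2^{r\sqrt{\log N}}$), this exceeds both the allowed bound $\frac{|T|}{8\log^6N}$ on special cut edges and the terminal slack $\frac{|T|}{\log^4N}$ (the moved heads may themselves be terminals). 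Neither the sparsity of $(A_q,B_q)$ nor $w(A_q)\le|\notZ^*_q|$ gives the bound your plan needs. The correct fix is structural rather than numerical: in the type-2 case, orient the cut with the deleted ($A_q$) side as the first coordinate --- equivalently, assign the inner vertices of the subdivided paths of the edges of $E_q$ to the $A_q$ side --- so that the uncontrolled regular $B_q\to A_q$ edges cross the cut \emph{backwards}, where the definition of a finalizing cut imposes no constraint, and the forward cut edges are exactly one special edge per edge of $E_q$. This type-dependent placement of the crossing paths is precisely how the paper's proof handles the two cases.
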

\begin{proof}
	For convenience, in this proof we denote graph $G_{q-1}$ by $G'$ and graph $H_{q-1}$ by $H'$.
	We compute the finalizing cut $(\tA,\tB)$ in $H'$ as follows. First, for every vertex $v\in V(G')$, if $v\in A_q$, then we add $v$ to $\tA$, and otherwise we add $v$ to $\tB$. Next, we consider every special edge $e=(x,y)$ in $G'$. If $x,y\in A_q$, then we add all inner vertices on path $P^*(e)$ to $\tA$. If $x,y\in B_q$, then we add all inner vertices on path $P^*(e)$ to $\tB$.
	
	Next, we consider two cases. The first case happens if $(A_q,B_q)$ is a type-1 cut, so $V(G_q)=A_q\setminus J_q$. For each remaining special edge (that must have first endpoint in $A_q$ and last endpoint in $B_q$), we add all inner vertices of $P^*(e)$ to $\tB$. 
	
	Note that, for every special edge $e$ of $G'$, the first and the last edges on path $P^*(e)$ are special edges, so the resulting cut $(\tA,\tB)$ in $H'$ is weakly well-structured. Clearly:

	\[|E_{H'}(\tA,\tB)|=|E_{G'}(A_q,B_q)|\leq \phi\cdot \min\set{w(A_q),w(B_q)}\leq \frac{4}{d'}\cdot \frac{\hat W}{\log^{30}N}\leq \frac{8|T|}{\log^{30}N},\]
	
	since $\phi=\frac{4}{d'}$. 
	We have also used the fact that, since the current phase has not terminated yet, $w(A_q),w(B_q)\leq \frac{\hat W}{\log^{30}N}$, and that, since Event $\event_1$ did not happen, $|T|\geq\frac{\hat W}{2d'}$.
	
	Notice that, from our construction, $\tA=V(H_q)\cup J_q$. All vertices of $\tB$ are deleted from $H'$ when the cut $(A_q,B_q)$ is processed, and, since the current phase did not terminate yet, $|\tB|\leq \frac{\hat W}{\log^{30}N}$. Since $\tA=V(H_q)\cup J_q$, we get that $|\tA\cap T|\leq |T_q|+|J_q|$. Recall that we have assumed that $|T_q|=|V(X_q)|< |T|\cdot \left(1-\frac{8}{\log^4N}\right )$. Additionally, $|J_q|\leq |E_{G'}(A_q,B_q)|\leq \frac{8|T|}{\log^{30}N}$. Altogether, we conclude that $|\tA\cap T|\leq |T|\cdot \left(1-\frac{7}{\log^4N}\right )$, and $|\tA|\geq |V(H')|-\frac{\hat W}{32}$. Therefore, cut $(\tA,\tB)$ is a finalizing cut in graph $H'$.
	
	The case where $(A_q,B_q)$ is a type-2 cut is treated similarly, except that now, for every special edge $e$ of $G'$ whose endpoints belong to different sets in $\set{A_q,B_q}$, we place all inner vertices of $P^*(e)$ in $\tA$.
	Using the same reasoning as before, all edges of $E_{H'}(\tA,\tB)$ are special edges, and $|E_{H'}(\tA,\tB)|\leq \frac{8|T|}{\log^{30}N}$ holds. 
	 Our construction ensures that $\tB=V(H_q)\cup J'_q$. Using the same argument as above, we conclude that $|\tB\cap T|\leq |T_q|+|J'_q|\leq |T_q|+|E_{H'}(\tA,\tB)\leq |T|\cdot \left(1-\frac{7}{\log^4N}\right )$. Similarly, $|\tB|\geq |V(H')|-\frac{\hat W}{32}$, and so cut $(\tA,\tB)$ is finalizing.
\end{proof}

Whenever our algorithm computes a strongly well-structured $\phi$-sparse cut $(A_i,B_i)$ in the current graph $G$, we will first compute the corresponding sets $E_i$, $\Gamma_i$, and $\hat \Gamma_i$ of edges, together with the set $T_i$ of terminals, before applying any changes to graphs $G$ and $H$. If $|T_i|\geq |T|\cdot \left(1-\frac{8}{\log^4N}\right )$, then we proceed with updating the graphs $G,H$ and $X$, as described above. But otherwise, we use the algorithm from 
\Cref{claim: lots of vertices in X after second type of updates} in order to compute a finalizing cut $(\tA,\tB)$ in the current graph $H$, and then employ the algorithm from \Cref{claim: finalizing cut}, that either correctly establishes that at least one of the events $\event_2,\event_3$ has happened (in which case we terminate the phase with a ``FAIL''), or  computes a strongly well-structured $\phi$-sparse cut $(A^*,B^*)$ in the current graph $G$, such that $w(A^*),w(B^*)\geq \frac{\hat W}{\log^{30}N}$. In the latter case, we do not process the cut $(A_i,B_i)$, but instead output the cut $(A^*,B^*)$, and update both graphs $G$ and $H$ accordingly, following which at least $\frac{\hat W}{\log^{30}N}$ vertices must be deleted from $H$, and the current phase terminates.

\subsubsection*{Combining All Three Types of Updates}

We now complete our description of the algorithm for maintaining the expander $X$ and the embedding of $X\setminus F$ into $H$ throughout the phase. We will also maintain another graph $X'$. Initially, $X'=X$. Whenever a type-3 update $(A_i,B_i)$ occurs, we update the graph $X'$ as described above, but we will ignore type-1 and type-2 updates when maintaining $X'$. We will sometimes refer to $X'$ as the ``shadow expander'', while we refer to $X$ as the ``expander'', even though, as the algorithm progresses, it is possible that both graphs stop being expanders.

In order to maintain the graph $X$, whenever a type-1 or a type-2 update arrives, we process it exactly as describe above. Assume now that the $i$th type-3 update $(A_i,B_i)$ arrives.
In this case, we use graph $X'$ in order to compute the set $\hat \Gamma_i$ of edges, and the partition $(T_i,\notT_i)$ of the vertices of $X'$, as described above, and we update graph $X'$ accordingly. If, as the result, $|V(X')|$ falls below 
$|T|\cdot \left(1-\frac{8}{\log^4N}\right )$, then we invoke the algorithm from \Cref{claim: lots of vertices in X after second type of updates}, that computes a finalizing cut $(\tilde A,\tilde B)$ in the current graph $H$. We then apply the algorithm from  \Cref{claim: finalizing cut} to the resulting cut. If the algorithm
correctly estabishes that at least one of the events $\event_2,\event_3$ has happened,  then we terminate our algorithm and return ``FAIL''. Otherwise,
we obtain a strongly well-structured $\phi$-sparse cut $(A^*,B^*)$ in the current graph $G$, such that $w(A^*),w(B^*)\geq \frac{\hat W}{\log^{30}N}$.
We output this cut, and update the graphs $G$ and $H$ accordingly, following which at least $\frac{\hat W}{\log^{30}N}$ vertices are deleted from $H$ and the current phase terminates.
%The total running time of the algorithms from  \Cref{claim: lots of vertices in X after second type of updates} and \Cref{claim: finalizing cut} is $O(\hat W^2)$.

%If this algorithm establishes that at least one of the events $\event_2$ or $\event_3$ has happened, then we terminate our algorithm and return ``FAIL''. Otherwise, we obtain  a well-structured $\frac{\phi}2$-sparse cut $(A,B)$ in graph $G_0$ with $w(A),w(B)\geq \frac{2\hat W}{(\log N)^{30}}$. Notice that the current graph $G$ is identical to graph $G_i$, that is obtained from the initial graph $G_0$ at the beginning of the phase, by only processing type-3 updates. The only difference is that the weights of some vertices in $G$ may be greater than those in $G_i$, since, in addition to the changes incurred while processing type-3 updates, the lengths of some edges of $G$ may have grown. Therefore, we can exploit the cut $(A,B)$ in $G_0$, as described above, in order to compute a $\phi$-sparse well-structured graph $(A_{q+1},B_{q+1})$ in the current graph $G$, with $w(A),W(B)\geq \frac{\hat W}{\log^{30}N}$. Once graphs $G$ and $H$ are updated with this cut, at least $\frac{\hat W}{\log^{30}N}$ vertices are deleted from $H$ and the current phase terminates.

Assume now that $|V(X')|\geq |T|\cdot \left(1-\frac{8}{\log^4N}\right )$ continues to hold after the current update. We then update graph $X$ as follows: we delete the edges of $\hat \Gamma_i$, and the vertices of $\notT_i$ from it. 
This completes the description of the algorithm for maintaining graph $X$. It is easy to verify that, at all times, $X\subseteq X'$ holds. Therefore, when the $i$th type-3 update is processed, immediately after the edges of $\hat \Gamma_i$ are deleted from $X$, one of the cuts $(T_i\cap V(X),\notT_i\cap V(X))$, $(\notT_i\cap V(X),T_i\cap V(X))$ must be a cut of sparsity $0$ in the resulting graph $X$.

Let $E^*=F\cup \hat E_1\cup \hat E_2\cup \hat E_3$. Recall that $|F|\leq \frac{\alpha_0\cdot |T|}{500}$, and we have shown that 
$|\hat E_1|\leq \frac{\alpha_0\cdot |T|}{5000}$ (in \Cref{claim: bound first type edge deletions}), $|\hat E_2|\leq \frac{\alpha_0\cdot |T|}{2^{20}}$, and $|\hat E_3|\leq \frac{\alpha_0\cdot |T|}{2^{20}}$ (in \Cref{obs: few deleted edges}). 
Altogether, we get that $|E^*|\leq \frac{\alpha_0\cdot |T|}{400}$ holds throughout the phase.
We now prove that, throughout the phase, $|V(X)|\geq |T|/2$ holds.

\begin{claim}\label{claim: expander remains large}
	Throughout the phase,  $|V(X)|\geq |T|/2$.
\end{claim}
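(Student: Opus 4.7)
The plan is to carry out an expander-pruning argument leveraging two ingredients established earlier in this section: the fact that the initial graph $X_0$ is an $\alpha_0/81$-expander, and the cumulative bound $|E^*|\le\alpha_0|T|/400$ on the ``loss budget'' $E^*=F\cup\hat E_1\cup\hat E_2\cup\hat E_3$. Writing $V_X=V(X)$ and $V_D=T\setminus V_X$ for the alive and deleted vertex sets of $X$, and splitting $V_D=V_{D,2}\cup V_{D,3}$ according to whether a vertex was removed via a type-2 or a type-3 update, the central step is to establish, at every moment of the phase, the dichotomy
\[
\min\{|V_X|,|V_D|\}\ \le\ |T|/3.
\]
I would apply the expansion of $X_0$ to the cut $(V_X,V_D)$: on one hand $|E_{X_0}(V_X,V_D)|\ge(\alpha_0/81)\min\{|V_X|,|V_D|\}$, and on the other hand I plan to show $|E_{X_0}(V_X,V_D)|\le|E^*|+24\,|V_{D,3}|$. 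Combined with the hard invariant $|V_{D,3}|\le 8|T|/\log^4N$ (which the algorithm enforces by triggering \Cref{claim: lots of vertices in X after second type of updates} otherwise) and the bound $|E^*|\le\alpha_0|T|/400$, this yields the dichotomy for $N$ sufficiently large.

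The main obstacle is classifying edges of $E_{X_0}(V_X,V_D)$ as either tracked in $E^*$ or incident to $V_{D,3}$. For any $(u,v)\in E_{X_0}(V_X,V_D)$, if $(u,v)\in F$ we are done; otherwise $(u,v)$ was explicitly removed from $X$ at some moment. Removal by a type-1 update places $(u,v)\in\hat E_1$. If $v\in V_{D,2}$, then $v$ was deleted by some type-2 update $j$ with $v\in B_j$; since $u$ is never deleted, $u\in A_j$ at that moment, so if $(u,v)$ was still in $X$ just before update $j$ then $(u,v)\in E_{X\setminus F}(A_j,B_j)\subseteq\hat E_2$, and if it had already been removed, the same classification applies to the earlier event. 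Note that the ``implicit via vertex deletion'' case cannot apply here because $u$ is never deleted and $v\notin V_{D,3}$. The remaining case is $v\in V_{D,3}$: here $v\in\notT_i$ for some type-3 update $i$; when $(A_i,B_i)$ is a type-1 cut, the structural guarantee that $X_{i-1}\setminus\hat\Gamma_i$ has no $T_i\to\notT_i$ edges forces $(u,v)\in\hat\Gamma_i\subseteq\hat E_3$, but when $(A_i,B_i)$ is a type-2 cut the analogous guarantee goes the other way, and $(u,v)$ can slip through and be removed implicitly when $v$ is deleted. Such residual edges are all incident to $V_{D,3}$, so by the maximum in-degree bound of $24$ on $X_0$ their total count is at most $24\,|V_{D,3}|$, which is exactly the extra term in the upper bound above.

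With the dichotomy in hand, I would conclude by induction on time. Initially $|V_X|=|T|$, which lies in the upper branch $|V_X|\ge 2|T|/3$ of the dichotomy, so it suffices to check that no single update can shift $|V_X|$ across the forbidden interval $(|T|/3,2|T|/3)$ of width $|T|/3$. A type-3 update deletes at most $8|T|/\log^4N$ vertices in total, hence per update far less than $|T|/3$. For a type-2 update choosing a cut $(A_j,B_j)$ in $X\setminus F$ with $|B_j|\le|A_j|$, the sets $A_j,B_j$ both sit inside $V_X$, so every edge of $E_{X_0}(A_j,B_j)$ absent from $E_{X\setminus F}(A_j,B_j)$ must belong to $E^*$ (no implicit losses arise here); combining the expansion of $X_0$ with the sparsity guarantee $|E_{X\setminus F}(A_j,B_j)|\le(\alpha_0/2^{20})|B_j|$ gives
\[
\Big(\tfrac{\alpha_0}{81}-\tfrac{\alpha_0}{2^{20}}\Big)|B_j|\ \le\ |E^*|\ \le\ \alpha_0|T|/400,
\]
whence $|B_j|\le 81|T|/400<|T|/4<|T|/3$, again well below the gap. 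Consequently $|V_X|$ never leaves the upper branch, so $|V_X|\ge 2|T|/3>|T|/2$ throughout the phase.
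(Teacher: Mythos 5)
There is a genuine gap, and it lies in the edge-classification that drives your dichotomy. You claim that every edge of $E_{X_0}(V_X,V_D)$ is either in $E^*=F\cup\hat E_1\cup\hat E_2\cup\hat E_3$ or incident to $V_{D,3}$, and in the type-2 case you write ``$v$ was deleted by some type-2 update $j$ with $v\in B_j$''. But the algorithm deletes the side of smaller \emph{cardinality}, which can be either $A_j$ or $B_j$, while the sparsity guarantee and the set $\hat E_2$ only cover the edges of $E_{X\setminus F}(A_j,B_j)$, i.e.\ the $A_j\to B_j$ direction. If the deleted side is $Z=A_j$, then for an edge $(u,v)\in E(X_0)$ with $u$ surviving and $v\in Z$ the edge goes from $B_j$ to $A_j$; it is not in $\hat E_2$, not forced into $F\cup\hat E_1\cup\hat E_3$, and it disappears only implicitly when $v$ is deleted. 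Such edges are incident to $V_{D,2}$, not $V_{D,3}$, and $|V_{D,2}|$ is exactly the quantity you are trying to control, so the bound $|E_{X_0}(V_X,V_D)|\le |E^*|+24|V_{D,3}|$ is unjustified (and switching to $E_{X_0}(V_D,V_X)$, or to the sum of both directions, hits the symmetric problem when $Z=B_j$). The same one-directionality also undermines your per-update bound in the last paragraph: there you apply the expansion of $X_0$ to the cut $(A_j,B_j)$, but this is a partition of the \emph{current} $V(X)$, not of $V(X_0)=T$; extending it to a cut of $X_0$ by attaching $V_D$ to one side reintroduces exactly the alive-to-dead edges whose count the flawed classification was supposed to bound, so the argument is circular.

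This is precisely the difficulty the paper's proof is built to avoid: rather than counting crossing edges directly, it records, for each type-2 and type-3 update, the deleted vertex set together with the fact that it is separated from the rest by a sparsity-$0$ cut of $\tilde X=X\setminus E^*$ \emph{in at least one direction} (which direction may differ from update to update), and then invokes \Cref{obs: from many sparse to one balanced} to convert this sequence of one-directionally sparse, unbalanced cuts into a single balanced cut $(A,B)$ of $X_0$ with $E_{\tilde X}(A,B)=\emptyset$, so that $|E_{X_0}(A,B)|\le|E^*|\le \alpha_0|T|/400$ contradicts expansion. To repair your argument you would need either that ordering/prefix-cut step (or an equivalent mechanism) or a guarantee that type-2 cuts are sparse in both directions, neither of which your proposal supplies; the type-3 bookkeeping ($24|V_{D,3}|$ with $|V_{D,3}|\le 8|T|/\log^4N$) is fine but is not where the difficulty lies.
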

\begin{proof}
	Assume otherwise. Consider the first time $\tau$ at which $|V(X)|<|T|/2$ holds. For convenience, we denote by $X_0$ the original graph $X$, which is an $\alpha_0$-expander, with $|T|$ vertices and we denote by $\tilde X$ the graph $X\setminus E^*$. 
	
	We construct the set $\yset$ of subsets of vertices of $X_0$, as follows. Initially, $\yset=\emptyset$. Whenever a type-2 update arrives, with a cut $(A,B)$ of the current graph $X$, if $|A|\leq |B|$, then we add $A$ to $\yset$ (recall that in this case, the vertices of $A$ are deleted from $X$); otherwise, we add $B$ to $\yset$. Clearly the cardinality of the set that we just added to $\yset$ is bounded by $|T|/2$. Moreover, since the edges of $E_X(A,B)$ belong to $E^*$, cut $(A,B)$ has sparsity $0$ in $\tilde X$.
	
	Assume now that a type-3 update arrives, and let $T'$ denote the set of vertices deleted from $X$ in the current iteration. We then add $T'$ to $\yset$. Since we did not terminate the current phase after the current iteration, $|V(X')|\geq |T|\cdot \left(1-\frac{8}{\log^4N}\right )$ must currently hold, and so $|T'|\leq \frac{8|T|}{\log^4N}\leq \frac{|T|}{2}$ must hold. Moreover, as we have shown, one of the cuts $(T',V(X)\setminus T')$ or $(V(X)\setminus T',T')$ is a sparsity-$0$ cut in graph $\tilde X$.
	
	Let $\yset=(Y_1,\ldots,Y_q)$ be the final collection of vertex subsets obtained after all updates that occured until time $\tau$ have been processed, where the sets are indexed in the order in wich they were added to $\yset$. We also add the set $Y_{q+1}$ that contains all vertices that lie in $X$ at time $\tau$ to $\yset$. From our assumption, $|X|\leq |T|/2$, and so for all $1\leq q'\leq q+1$, $|Y_{q'}|\leq |T|/2$.
	
	For all $1\leq j\leq q$, let $Y'_j=Y_{j+1}\cup\cdots\cup Y_q$. At the time when $Y_j$ was added to $\yset$, it defined a cut of sparsity $0$ in graph $\tilde X$. In other words, either $E_{\tilde X}(Y_j,Y'_j)=\emptyset$, or
	$E_{\tilde X}(Y_j,Y'_j)=\emptyset$. Let $\alpha$ be such that $\max_{Y_j\in \yset}\set{|Y_j|}=\alpha\cdot |T|$. From our discussion so far, $\alpha\leq \half$. From \Cref{obs: from many sparse to one balanced}, there is a cut $(A,B)$ in $\tilde X$ with $|A|,|B|\geq |T|/4$, and $E_{\tilde X}(A,B)=\emptyset$.

	Observe that $|E_{X_0}(A,B)|\leq |E^*|\leq \frac{\alpha_0\cdot |T|}{400}$. Since $|A|,|B|\geq \frac{|T|}{4}$, we obtain a cut $(A,B)$ in $X_0$, whose sparsity is below $\alpha_0$, contradicting the fact that $X_0$ is an $\alpha_0$-expander.
\end{proof}

\subsection{Second Data Structure: Connecting to the Vertices of $X$}
\label{subsec: connect to terminals}

At the beginning of the phase, we initialize the algorithm for the \connecttocenters problem from \Cref{thm: connecttocenters} on the current graph $G$, with $\beta=V(X)$, parameters $d'$ and $r$ remaining unchanged, and parameter $N$ replaced by $N'=8N$.
We slightly modify the lengths of the special edges: for each such edge $e\in E^{\spec}$, we define the new length $\ell'(e)=\max\set{\ell(e),2\ell^0(e)}$. We define, for every vertex $v\in V(G)$, a weight $w'(v)$ to be the length $\ell'(e)$ of the unique special edge $e$ incident to $v$, and we let $W'=\sum_{v\in V(G)}w'(v)$. Clearly, $W'\leq 2\hat W$. As the lengths of special edges in the original graph $G$ grow, we always set, for every edge $e\in E(G)$, the length $\ell'(e)$ that is used by the algorithm for the \connecttocenters problem to $\ell'(e)=\max\set{\ell(e),2\ell^0(e)}$, and we also define the weights $w'(v)$ of vertices $v\in V(G)$ with respect to these edge lengths.  Clearly, for all $e\in E^{\spec}$, $\ell(e)\leq \ell'(e)\leq 2\ell(e)$ and for all $v\in V(G)$, $w(v)\leq w'(v)\leq 2w(v)$ always hold. We use $W'$ to denote the sum of the initial weights $w'(v)$ for $v\in V(G)$, so it does not change over the course of the algorithm.
Finally, the graphs $G$ and the corresponding subdivided graph $H'$ that serve as input to the \connecttocenters problem must be given in the adjacency-list representation. We can compute both adjacency-list representations in time $O(|V(H')|^2)\leq O((W')^2)\leq O((W^0)^2)$.
We prove that we obtain a valid input to the \connecttocenters problem in the following simple claim, whose proof deferred to Section \ref{subsec: valid inst of connecttocenters} of Appendix.

\begin{claim}\label{claim: valid input to conencttocenters}
	$(G,\set{\ell'(e)}_{e\in E(G)},H', N',d',r,\beta)$ is a valid input to the \connecttocenters problem.
\end{claim}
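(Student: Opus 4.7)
The plan is to verify, one by one, each of the conditions required by the definition of the \connecttocenters problem, relying on parameters that were already set up earlier in Section~\ref{sec: alg for maintaincluster from routeandcut} and on the fact that bad event $\event_1$ did not occur. The computations are all straightforward, so the main content of the proof is bookkeeping rather than any new idea.

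First I would check the structural conditions. Graph $G$ is a perfect well-structured graph throughout the \maintaincluster algorithm (this is maintained by Invariant~\ref{inv: perfect well structured} as used at the top level), and the assignment $\ell'(e)$ is proper: for every regular edge $\ell'(e)=\ell(e)=0$, and for every special edge $\ell'(e)=\max\{\ell(e),2\ell^0(e)\}$ is an integral power of $2$ at least $1$, since both $\ell(e)$ and $\ell^0(e)$ are. By construction $H'$ is the subdivided graph of $G$ with respect to these lengths $\ell'$, and both $G$ (with edge lengths $\ell'$) and $H'$ are provided in adjacency-list representation, built in time $O((W^0)^2)$ as stated just before the claim.

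Next I would verify the numerical conditions on $N'=8N$. Since $N$ is greater than a large enough constant, so is $N'$. Because the set of vertex weights in the current graph $G$ at the start of the phase satisfies $w(v)\le w'(v)\le 2w(v)$, the initial total weight $W'$ for the \connecttocenters instance obeys $W'\le 2\hat W\le 24W^0\le 24N$, so $W'/8\le 3N\le N'$; and $n\le N\le N'$ as well, giving $N'\ge \max\{n,W'/8\}$. For the restriction parameter, $r\le \ceil{\sqrt{\log N}}\le \ceil{\sqrt{\log N'}}$, and $d'\ge 1$ by its definition as an integral power of $2$ at least $d/(\log N)^{28}\ge 1$. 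The bound $d'\le n/8$ follows because $d'\le 2d/(\log N)^{28}\le 2n/(\log N)^{28}\le n/8$ once $N$ is large enough (using $d\le n$ from Inequality~\eqref{eq: bound on d}). The bound $d'\le 2^{r\sqrt{\log N'}}/(\log N')^8$ follows from $d\le 2^{r\sqrt{\log N}}$, $2^{r\sqrt{\log N'}}\ge 2^{r\sqrt{\log N}}$, and $(\log N')^8\le (2\log N)^8\le (\log N)^{28}/2$ for $N$ large enough.

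Finally, I would verify the key bound $|\beta|\ge W'/(256 d')$. Here $\beta=V(X)=T$, and since bad event $\event_1$ did not happen we have $|T|\ge \hat W/(2d')$. Combined with $W'\le 2\hat W$, this gives
\[
|\beta|=|T|\ge \frac{\hat W}{2d'}\ge \frac{W'}{4d'}\ge \frac{W'}{256 d'},
\]
as needed. With all conditions checked, the tuple $(G,\{\ell'(e)\}_{e\in E(G)},H',N',d',r,\beta)$ is a valid input to the \connecttocenters problem. I do not expect any serious obstacle here; the only mildly delicate step is matching the restriction bound $d'\le 2^{r\sqrt{\log N'}}/(\log N')^8$, which is why $d'$ was chosen with the extra logarithmic slack of $(\log N)^{28}$ in its definition.
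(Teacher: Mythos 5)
Your proposal is correct and follows essentially the same route as the paper's proof: a direct verification of each condition of the \connecttocenters problem, using $W'\le 2\hat W\le 24W^0\le 24N$, the non-occurrence of $\event_1$ for $|\beta|=|T|\ge \hat W/(2d')$, and the chain $d'\le 2d/(\log N)^{28}$ with $d\le n$, $d\le 2^{r\sqrt{\log N}}$, and $\log N'\le 2\log N$. The only cosmetic imprecision is that the bound should be stated against the current $|V(G)|$ (the ``$n$'' of the \connecttocenters instance) rather than the original $n$; since $|V(G)|\ge n/2$ throughout the phase, your chain still gives $d'\le |V(G)|/8$ with room to spare, exactly as in the paper.
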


As the algorithm progresses, we will ensure that $\beta=V(X)$ always holds. Since, from \Cref{claim: expander remains large},  $|V(X)|\geq |T|/2$ holds throghout the phase, this will ensure that at most half the vertices may be deleted from the initial set $\beta$ over the course of the phase. From \Cref{obs: final length of edges},
$\sum_{e\in E^{\spec}}\ell^*(e) \leq  2\sum_{e\in E^{\spec}}\ell^0(e)+2n$. Therefore, 
 the total increase in the lengths $\ell'(e)$ of edges $e\in E(G)$ over the course of the entire algorithm is at most $2n\leq 8|V(G)|$, as required.
 
Consider the graph $H'$, that is the subdivided graph corresponding to $G$ and the lengths $\ell'(e)$ of its edges, and recall that $H$ is the subdivided graph corresponding to $G$ and the lengths $\ell(e)$ of its edges. Recall that, for every edge $e\in E(G)$, $\ell(e)\leq \ell'(e)\leq 2\ell(e)$. Since the lengths of all special edges are integral powers of $2$,  for every special edge $e\in E(G)$, either $\ell'(e)=\ell(e)$ or $\ell'(e)=2\ell(e)$ holds. Therefore, graph $H'$ can be obtained from graph $H$ as follows: for every edge $e\in E^{\spec}$ with $\ell'(e)=2\ell(e)$, we subdivide every special edge $\hat e\in E(H)$ that lies on the path $P^*(e)$ representing $e$ once, obtaining a path $\hat P(\hat e)$ with $2$ special edges, representing edge $\hat e$ in $H'$. For all edges $\hat e\in E(H')\cap E(H)$, we let $\hat P(\hat e)=(\hat e)$.

Recall that the algorithm for the \connecttocenters problem may, at any time, produce Outcome \ref{outcome: finalizing cut}, namely, a weakly well-structured cut $(A',B')$ in graph $H'$, with $|A'|\geq \frac{511|V(H')|}{512}$,  $|A'\cap V(X)|<\frac{2|V(X)|}{\log^3N'}$, and $|E_{H'}(A',B')|\leq  \frac{|V(H')|}{1024d'\cdot \log^6N'}$. 
Consider the cut $(A,B)$ in $H$ with $A=A'\cap V(H)$ and $B=B'\cap V(H)$. We claim that $(A,B)$ is a finalizing cut in $H$, in the following simple observation.

\begin{observation}\label{obs: finalizing cut from cut in H'}
	Cut $(A,B)$ is a finalizing cut in $H$.
\end{observation}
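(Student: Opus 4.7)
The plan is to verify the three defining properties of a finalizing cut (\Cref{def: finalizing cut}) for $(A,B)$, using the structural relation between $H$ and $H'$: namely that $H'$ is obtained from $H$ by subdividing at most once each special edge whose $G$-parent had its length doubled, which implies $|V(H')|\leq 2|V(H)|$ and (crucially) that regular edges of $H$ appear unchanged in $H'$, while each special edge $\hat e$ of $H$ corresponds to a short path $\hat P(\hat e)$ in $H'$ consisting of only special and regular edges in $H'$.

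First I would show that every edge $e=(x,y)\in E_H(A,B)$ is special. Since $x\in A\subseteq A'$ and $y\in B\subseteq B'$, the path $\hat P(e)$ in $H'$ goes from $A'$ to $B'$, so it contains at least one edge of $E_{H'}(A',B')$; because $(A',B')$ is weakly well-structured in $H'$, that edge is special in $H'$. If $e$ were regular in $H$, then $\hat P(e)$ is just the single regular edge $e$ (regular edges are not subdivided), which would have to be in $E_{H'}(A',B')$ and special in $H'$, a contradiction. So $e$ is special, establishing the first bullet of \Cref{def: finalizing cut}. As a bonus, the above argument gives an injection from $E_H(A,B)$ into $E_{H'}(A',B')$ (using edge-disjointness of the paths $\hat P(e)$), so
\[
|E_H(A,B)|\;\leq\;|E_{H'}(A',B')|\;\leq\;\frac{|V(H')|}{1024\,d'\log^6 N'}\;\leq\;\frac{2|V(H)|}{1024\,d'\log^6 N}\;\leq\;\frac{2\hat W}{1024\,d'\log^6 N}.
\]
Since $\event_1$ did not happen, $|T|\geq \hat W/(2d')$, which upgrades this to $|E_H(A,B)|\leq |T|/(8\log^6 N)$, giving the second bullet.

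For the third bullet I would take $Z=A$. The size in $H$ is easy: $|B|\leq |B'|\leq |V(H')|/512\leq 2|V(H)|/512\leq \hat W/256$, hence $|A|\geq |V(H)|-\hat W/32$. For the terminal count in $A$, I use $V(X)\subseteq V(H)$, so $B\cap V(X)=B'\cap V(X)$; the hypothesis $|A'\cap V(X)|<2|V(X)|/\log^3 N'$ gives $|B\cap V(X)|>|V(X)|(1-2/\log^3 N')$. Combining with \Cref{claim: expander remains large} ($|V(X)|\geq |T|/2$) yields $|B\cap T|\geq |B\cap V(X)|>(|T|/2)(1-2/\log^3 N')$, and therefore
\[
|A\cap T|\;=\;|T|-|B\cap T|\;<\;|T|\!\left(\tfrac{1}{2}+\tfrac{1}{\log^3 N'}\right)\;\leq\;|T|\!\left(1-\tfrac{7}{\log^4 N}\right),
\]
for $N$ sufficiently large (which we may assume).

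There is no genuine obstacle here; the only thing one has to be slightly careful about is that the two different subdivisions ($H$ and $H'$) live over the same graph $G$ and share all regular edges, together with tracking that $\beta=V(X)$ is maintained throughout the phase so that the guarantee on $|A'\cap\beta|$ translates directly into a bound on $|A'\cap V(X)|$. The remaining calculations are arithmetic with the parameters $N'=8N$, $|T|\geq \hat W/(2d')$, and $|V(X)|\geq |T|/2$, and they go through with room to spare.
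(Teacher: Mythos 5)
Your proof follows essentially the same route as the paper's: verify the three conditions of \Cref{def: finalizing cut} using the fact that $H'$ is obtained from the current $H$ by subdividing each special edge at most once while leaving regular edges intact, so that every edge of $E_H(A,B)$ is special and injects into $E_{H'}(A',B')$, and then transfer the size and terminal guarantees from $(A',B')$ to $(A,B)$. Your handling of the terminal count (via $|B\cap T|\geq |B\cap V(X)|$ together with $|V(X)|\geq |T|/2$) is fine, and is in fact spelled out a bit more carefully than the paper's one-line version; the only cosmetic point is that $|A\cap T|=|T|-|B\cap T|$ should be an inequality $\leq$, since $T$ is a static set and some terminals may already have been deleted from $H$ — the direction you need is the one that holds.

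The one genuine flaw is the chain $|V(H')|\leq 2|V(H)|\leq 2\hat W$, which you use both for the edge-count bound and for $|B|\leq \hat W/256$. The first inequality is correct, but $|V(H)|\leq \hat W$ is unjustified and false in general: $\hat W$ is the number of vertices of $H$ at the \emph{beginning} of the phase, and $H$ gains vertices during the phase, because every time the length of an edge of $G$ is doubled after a query, all special edges on its subdivided path $P^*(e)$ are split. The paper sidesteps this by bounding $|V(H')|$ directly: at initialization $|V(H')|=W'\leq 2\hat W$, and the total increase in the lengths $\ell'(\cdot)$ over the course of the phase is at most $4n\leq 8\hat W$, so $|V(H')|\leq 10\hat W$ throughout. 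Once you substitute this bound, all your inequalities still close with room to spare:
\[ |B|\leq \frac{|V(H')|}{512}\leq \frac{10\hat W}{512}<\frac{\hat W}{32}, \qquad |E_H(A,B)|\leq \frac{10\hat W}{1024 d'\log^6 N}\leq \frac{20|T|}{1024\log^6 N}<\frac{|T|}{8\log^6 N}, \]
using $|T|\geq \hat W/(2d')$ (Event $\event_1$ did not happen). So the argument is essentially the paper's, but as written it rests on a false intermediate inequality that must be repaired by the bounded-total-length-increase observation.
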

\begin{proof}
 Since, at the beginning of the phase, $|V(H')|=W'\leq 2\hat W$, and over the course of the phase, the total increase in the lengths $\ell'(e)$ of edges $e\in E(G)$, and hence in $|V(H')|$, is at most $4n\leq 8\hat W$, we get that $|V(H')|\leq 10\hat W$ holds	throughout the phase.

Consider any edge $e=(x,y)\in E_H(A,B)$. First, we claim that $e$ is a special edge. Indeed, otherwise, $e\in E_{H'}(A',B')$ must hold, contradicting the fact that $(A',B')$ is a weakly well-structured cut. If $e\in E(H')$, then $e\in E_{H'}(A',B')$ must hold. Otherwise, at least one of the two special edges that lie on the path $\hat P(e)$ representing $e$ in $H'$ belong to $E_{H}(A,B)$. Therefore:

\[|E_H(A,B)|\leq |E_{H'}(A',B')|\leq  \frac{|V(H')|}{1024d'\cdot \log^6N'}\leq \frac{10\hat W}{1024d'\cdot\log^6N}\leq \frac{|T|}{8\log^6N},\]

since $|T|\geq \frac{\hat W}{2d'}$, as Event $\event_1$ did not happen.
Recall that
 $|A'|\geq \frac{511|V(H')|}{512}$, and so: 
 
 \[|B|\leq |B'|\leq \frac{|V(H')|}{512}\leq \frac{10\hat W}{512}\leq \frac{\hat W}{32}.\]
 
 Therefore, $|A|\geq |V(H)|-\frac{\hat W}{32}$ must hold. Lastly, 
 $|A\cap V(X)|\leq |A'\cap V(X)|<\frac{2|V(X)|}{\log^3N'}<\frac{|T|}{2}$.
We conclude that $(A,B)$ is a finalizing cut in $H$.
\end{proof}

For brevity, whenever  the algorithm for the \connecttocenters on $G$  produces Outcome \ref{outcome: finalizing cut}, we say that it produces a finalizing cut $(A,B)$ in $H$.

Recall that the algorithm for the \connecttocenters problem uses a parameter $\rho=\frac{2^{12}\cdot W'\cdot \log^3N'}{|T|}\leq O\left(\frac{\hat W\log^3N}{|T|}\right )$, and that the total increase in the lengths of all edges is denoted by $\Lambda$; in our case, $\Lambda\leq O(n)$ holds. Recall also that $\beta'$ denotes the set $V(X)$ of vertices at the end of the algorithm. The running time of the algorithm from \Cref{thm: connecttocenters} for this instance of the \connecttocenters problem is then bounded by:

\[\begin{split}
&	O\left(c_1\cdot \hat W\cdot (\hat W+|T|
\cdot \rho +\Lambda)\cdot  2^{c_2\sqrt{\log N'}}\cdot (\log N')^{16c_2(r-1)+8c_2+26}\right )\\
&\quad\quad\quad\quad\quad\quad \leq O\left(c_1\cdot (W^0)^2\cdot  2^{c_2\sqrt{\log N'}}\cdot (\log N')^{16c_2(r-1)+8c_2+29}\right )\\
&\quad\quad\quad\quad\quad\quad \leq O\left(c_1\cdot (W^0)^2\cdot  2^{c_2\sqrt{\log N}}\cdot (\log N)^{16c_2(r-1)+8c_2+30} \cdot \left(1+\frac{1}{32c_2r}\right )^{16c_2(r-1)+8c_2+30}\right )\\
&\quad\quad\quad\quad\quad\quad \leq O\left(c_1\cdot (W^0)^2\cdot  2^{c_2\sqrt{\log N}}\cdot (\log N)^{16c_2(r-1)+8c_2+30}\right ).
\end{split}
\]

(we have used Inequalities \ref{eq: more logs} and \ref{eq: even more logs}.)

We denote the data structure for the \connecttocenters problem that we described above by $\DSout$, and we denote the set $\qset=\bigcup_{i=1}^{\lambda-1}\qset_i$ of paths that it maintains by $\qset^{\out}$.
If a vertex $v$ lies in layer $U_{\lambda}$, then we say that $v$ is \emph{disconnected} in $\DSout$. For each vertex $v\in \bigcup_{i=1}^{\lambda-1}U_i$, the unique path of $\qset_i$ that originates at $v$ is denoted by $K(v)$.

Next, we consider the graph $\revG$, obtained from $G$ by reversing the direction of its edges, and its corresponding subidivided graph $\revH$. It is easy to verify that $\revG$ is a perfect well-structured graph. We define an instance of the \connecttocenters problem on graph $\revG$, with the edge lengths $\set{\ell'(e)}_{e\in E(\revG)}$, set $\beta$ of vertices, and parameters $N',d'$ and $r$ defined exactly as before, and then apply the algorithm for the \connecttocenters problem from \Cref{thm: connecttocenters} to this instance. We denote the data structure maintained by this algorithm by $\DSin$. If a vertex $v$ lies in layer $U_{\lambda}$, then we say that $v$ is \emph{disconnected} in $\DSin$.
In order to avoid confusion, we denote the layers of this data structure by $U'_0,\ldots,U'_{\lambda}$. 
Recall that, for all $1\leq i<\lambda$, the algorithm maintains a collection $\qset_i=\set{K(v)\mid v\in U'_i}$ of paths in graph $\revH$, where for all $v\in U_i$, path $K(v)$ connects $v$ to some vertex of $\bigcup_{i'<i}U'_{i'}$. By reversing the direction of edges in this path, we obtain a path $K'(v)$ in $H$, that connects some vertex of $\bigcup_{i'<i}U'_{i'}$ to $v$. We denote by $\qset'_i=\set{K'(v)\mid v\in U'_i}$, and by $\qset^{\inn}=\bigcup_{i=1}^{\lambda-1}\qset'_i$. To avoid confusion, for a vertex $v\not \in U_{\lambda}$, we use $K(v)$ only to denote the path maintained by $\DSout$, that is, $K(v)$ is the unique path originating  from $v$ in $\qset^{\out}$, and for $v\not\in U'_{\lambda}$, we use $K'(v)$ to denote the unique path terminating at $v$ in $\qset^{\inn}$.

If the algorithm for the \connecttocenters problem in graph $\revG$ produces Outcome \ref{outcome: finalizing cut}, then, using the same reasoning as in the proof of \Cref{obs: finalizing cut from cut in H'}, we obtain a finalizing cut $(A,B)$ in graph $\revH$. It is then immediate to verify that $(B,A)$ is a finalizing cut in graph $H$. For brevity, whenever the algorithm for the \connecttocenters problem in graph $\revG$ produces Outcome \ref{outcome: finalizing cut}, we say that it produces a finalizing cut in $H$.

Assume now that the algorithm for the \connecttocenters problem in graph $\revG$ produces Outcome \ref{outcome: sparse cut}; that is, it computes
a strongly well-structured $\frac{1}{d'}$-sparse cut $(A,B)$ in $\revG$. It is then easy to see that $(B,A)$ is a strongly well-structured $\frac{1}{d'}$-sparse cut in $G$. Therefore, if 
the algorithm for the \connecttocenters problem in graph $\revG$ produces Outcome \ref{outcome: sparse cut}, we will say that it produces a strongly well-structured $\frac{1}{d'}$-sparse cut in $G$.

From our discussion, the total running time of the algorithms that maintain data structures $\DSin$ and $\DSout$, including the time that is needed in order to construct the corresponding two  instances of the \connecttocenters problem, and including the time that is needed in order to compute a finalizing cut in $H$, if any of the two algorithms produce  Outcome \ref{outcome: finalizing cut}, is bounded by:

\[O\left(c_1\cdot (W^0)^2\cdot  2^{c_2\sqrt{\log N}}\cdot (\log N)^{16c_2(r-1)+8c_2+30}\right ).\]

\subsection{Additional Data Structures}

Recall that our algorithm maintains a graph $X$ (that is initially an expander), a subset $F$ of edges of $X$ (the fake edges), and an embedding $\qset$ of $X\setminus F$ into $H$. Additionally, it maintains a \emph{shadow expander} $X'$, which is initially identical to $X$, but, as the time progresses, we may ignore some of the updates to graph $X$ and only apply a subset of such updates to $X'$. We also maintain a subset $F'$ of edges of $X'$, which is initially identical to $F$, and an embedding $\qset'$ of $X'\setminus F'$ into $H$, which is initially identical to $\qset$. 
Throughout the algorithm, $X'\subseteq X$  and $F'\subseteq F$ hold. Additionally, for every edge $e\in E(X')\setminus F$, its embedding path $Q'(e)\in \qset'$ is identical to the embedding path $Q(e)\in \qset$.
As the time progresses, vertices and edges may be deleted from $X$ and from $X'$. Whenever an edge $e\in F$ is deleted from $X$, we delete it from $F$ as well, and similarly, whenever an edge $e'\in F'$ is deleted from $X'$, we delete it from $F'$ as well. For every edge $e\in E(X)$, its embedding path $Q(e)\in \qset$ remains unchanged throughout the algorithm, except that, whenever some special edge on path $Q(e)$ undergoes splitting, we modify the path accordingly. Embedding $\qset'$ of $X'\setminus F'$ into $H$ is maintained similarly.

For every edge $e\in E(X)\setminus F$, we maintain the length $\hat \ell(e)$ of the embedding path $Q(e)$. We initialize the value $\hat \ell(e)$ at the beginning of the phase, once the graph $X$ and the embedding $\qset$ of $X\setminus F$ into $H$ are computed. After that, whenever a special edge of $Q(e)$ undergoes splitting, we increase $\hat\ell(e)$ by (additive) 2. The time that is required in order to maintain the values $\hat \ell(e)$ for edges $e\in E(X)\setminus F$ is clearly subsumed by the time required to compute and to maintain the collection $\qset$ of paths. In our description of the remainder of the algorithm below, we will ignore the task of maintaining these values.

For every edge $e\in E(H)$, we also maintain a list $S(e)$ of all edges $\hat e\in E(X)\setminus F$, such that $e\in Q(\hat e)$. The task of maintaining these lists is quite straightforward; we will ignore it when we describe the remainder of the algorithm below, but we will show how to extend the algorithm in order to maintain these lists efficiently later.

Throughout, we use a distance parameter $\hat d=\frac{2^{30}\log \hat W}{\alpha_0}$.
Throughout the phase, we maintain an arbitrary vertex  $v\in V(X)$,
and two shortest-paths trees $\hat \tau'$ and $\hat \tau''$ rooted at $v$ in graph $X\setminus F$, where the edges in $\hat \tau'$ are directed away from $v$, and the edges in $\hat \tau''$ are directed towards $v$. In other words, for every vertex $a\in V(X)$, the unique $s$-$a$ path in $\hat \tau'$ is a shortest $s$-$a$ path in $X\setminus F$, and the unique $a$-$s$ path in $\hat \tau''$ is a shortest $a$-$s$ path in $X\setminus F$. Note that both trees can be computed in time $O(|T|)$, since $|V(X)|\leq |T|$, and maximum vertex degree in $X$ is $O(1)$. Whenever an edge or a vertex are deleted from $X$, we recompute both trees from scratch, selecting a new root vertex $v\in V(X)$ if needed. Since $|E(X)|\le O(|T|)$, the trees need to be recomputed at most $O(|T|)$ times, and so the total time spent over the course of the phase on computing the two trees is bounded by $O(|T|^2)\leq O((W^0)^2)$.

\subsection{The Remainder of the Algorithm}
\label{subsection: cleanup}

In this subsection we describe the remainder of the algorithm for a single phase of the algorithm for the \maintaincluster problem.
At the beginning of the phase, we construct the subdivided graph $H$ corresponding to the initial graph $G$, and then construct the set $T$ of terminals, as described in Section \ref{subsec: subsample}. We check whether either of the bad events $\event_1$ or $\event_4$ happened, and if so, we terminate the phase and return ``FAIL''. We assume from now on that neither of these events happened. The running time of the algorithm so far is $O(|E(H)|)\leq O((W^0)^2)$.

We execute the algorithm from \Cref{lem: phase 1 embed expander} in order to construct the expander $X$. If the algorithm terminates with a ``FAIL'', which may only happen with probability at most $\frac{1}{N^7}$, then we terminate the algorithm for the current phase and return ``FAIL''. If the algorithm returns 
a finalizing cut in $H$, then we apply the algorithm from \Cref{claim: finalizing cut} to this cut. If the algorithm from \Cref{claim: finalizing cut}  estabishes that at least one of the events $\event_2,\event_3$ happened,  we terminate the phase with a ``FAIL''. Otherwise, we obtain a strongly well-structured $\phi$-sparse cut $(A^*,B^*)$ in the current graph $G$, such that $w(A^*),w(B^*)\geq \frac{\hat W}{\log^{30}N}$. In this case, we output the cut $(A^*,B^*)$. Once this cut is processed, at least $\frac{\hat W}{\log^{30}N}$ vertices are deleted from $H$, and the current phase terminates. Finally, if the algorithm from
\Cref{lem: phase 1 embed expander} returns a strongly well-structured $1/d'$-sparse cut $(A,B)$ in $G$ with $w(A),w(B)\geq \frac{\hat W}{(\log N)^{30}}$, then as before we output this cut, and, once it is processed, at least $\frac{\hat W}{\log^{30}N}$ vertices are deleted from $H$, and the current phase terminates. Therefore, we assume from now on that the algorithm from  \Cref{lem: phase 1 embed expander} computed a graph $X$ with $V(X)=T$, whose maximum vertex degree is at most $24$, such that $X$ is an $\frac{\alpha_0}{81}$-expander. 
Recall that we are also given a set $F\subseteq E(X)$ of at most  $\frac{\alpha_0\cdot |T|}{500}$ fake edges, and an embedding $\qset$ of $X\setminus F$ into $H$. Every path in $\pset^*$ has length at most $d''$, and the congestion caused by the paths in $\pset^*$ is at most $\eta''$.
The running time of the algorithm from \Cref{lem: phase 1 embed expander} is bounded by:

\[O\left( c_1\cdot \hat W^{2} \cdot 2^{c_2\sqrt{\log N}}\cdot (\log N)^{16c_2(r-1)+8c_2+3} \right )\leq O\left( c_1\cdot (W^0)^{2} \cdot 2^{c_2\sqrt{\log N}}\cdot (\log N)^{16c_2(r-1)+8c_2+3} \right ),\]

and the running time of the algorithm from \Cref{claim: finalizing cut} is $O(\hat W^2)\leq O((W^0)^2)$.

We also initialize data structures $\DSin$ and $\DSout$. Recall that the total time for maintaining both data structures is bounded by:

\[O\left(c_1\cdot (W^0)^2\cdot  2^{c_2\sqrt{\log N}}\cdot (\log N)^{16c_2(r-1)+8c_2+30}\right ).\]

Recall that the algorithm for the \connecttocenters problem, that is used to maintain data structures $\DSin$ and $\DSout$ may return ``FAIL'' with probability at most $\frac{1}{N^7}$. If any of these two instantiations of the algorithm terminate with a ``FAIL'', we terminate the algorithm for the current phase, and return ``FAIL''.

The running time of the algorithm so far, including the initialization of all data structures, and maintaining data structures $\DSin$ and $\DSout$ throughout the phase, is bounded by:

\[O\left(c_1\cdot (W^0)^2\cdot  2^{c_2\sqrt{\log N}}\cdot (\log N)^{16c_2(r-1)+8c_2+30}\right ).\]

Our algorithm employs a cleanup procedure, that we describe below, once at the beginning of the phase, and once after each query is processed. The purpose of the procedure is to ensure that the following conditions hold:

\begin{properties}{C}
%	\item For every edge $e\in E(G)$, $\ell(e)\le \frac{d'}{2}$; \label{cond: short edges}
%	\item The length of every path in $\qset^{\inn}\cup \qset^{\out}$ is bounded by  $\frac{d}{8\lambda}$;\label{cond: short in and out paths}
		\item The length of every path in $\qset$ is bounded by  $8d''$; \label{cond: short paths in embedding}
	\item The depth of each of the trees $\hat \tau',\hat\tau''$ is at most $\hat d$; and \label{cond: expander small diam}
	\item $U_{\lambda}=\emptyset$ and $U'_{\lambda}=\emptyset$. \label{cond: everyone connected}
\end{properties}

(here, $U_{\lambda}$ and $U'_{\lambda}$ are the sets of disconnected vertices from data structures $\DSout$ and $\DSin$, respectively).
Before we describe the cleanup procedure that ensures all these conditions, we describe our algorithm for responding to queries.

\subsubsection{Responding to Queries}

We assume that, at the time when a query $(x,y)$ arrives, Conditions \ref{cond: short paths in embedding}--\ref{cond: everyone connected} hold. We start by computing a path $P_1$ in graph $H$, connecting $x$ to some vertex $x'\in V(X)$, as follows. Since $U_{\lambda}=\emptyset$ currently holds, there is some layer $0\leq i<\lambda$ in data structure $\DSout$, with $x\in U_i$. If $i=0$, then we let the path $P_1$ contain a single vertex -- vertex $x$ itself. Otherwise, we use the unique path $K(x)\in \qset_i$ to connect $x$ to some vertex $x_1\in U^{<i}$. Let $U_{i_1}$ be the layer in which $x_1$ lies, so $i_1<i$. We say that $K(x)$ is the first \emph{subpath} of $P_1$. 
If $x_1\in V(X)$, then we have completed the construction of the path $P_1$. Otherwise, using a similar process, we connect $x_1$ to some vertex $x_2\in U^{<i_2}$, via the path $K(x_1)$, that becomes the second subpath of $P_1$ and so on. After at most $\lambda-1$ iterations, we are then guaranteed to reach a vertex $x'\in V(X)$, thereby obtaining a path $P_1$ connecting $x$ to $x'$. Since the length of every path in $\qset^{\out}$ is at most $16d'\log^{10}N'$ (from the relaxed condition \ref{prop: routing paths}), while $d'\leq \frac{2d}{(\log N)^{28}}$ and $\lambda\leq O(\log N)$, we get that the length of $P_1$ is bounded by:

\[16d'\lambda\log^{10}N'\leq O\left(\frac{d}{(\log N)^{17}}\right ).  \]

Using a similar algorithm, together with data structure $\DSin$, we compute a path $P_2$ in graph $H$, of length at most $O\left(\frac{d}{(\log N)^{17}}\right )$, connecting some vertex $y'\in V(X)$ to $y$. The time required to compute the paths $P_1$ and $P_2$ is bounded by $O(|E(P_1)|+|E(P_2)|)$.

Next, we use the tree $\hat \tau''$ to compute a path $P'_3$ in $X\setminus F$, connecting $x'$ to the root $v$ of the tree $\hat \tau''$, and we use the tree $\hat \tau'$ to compute a path $P'_4$ in $X\setminus F$ connecting $v$ to $y'$. By concatenating both paths, we obtain a path $P'$ in $X\setminus F$ that connects $x'$ to $y'$, whose length is at most $2\hat d$.

Denote $P'=(e_1,\ldots,e_z)$. We  compute a path $P_3$ connecting $x'$ to $y'$ in graph $H$ by concatenating the corresponding embedding paths $Q(e_1),\ldots,Q(e_z)$ in $\qset$. Recall that, from Condition \ref{cond: short paths in embedding}, the length of each such path is bounded by $8 d''$.
Since $\hat d=\frac{2^{30}\log \hat W}{\alpha_0}\leq O(\log N)$ and
$d''=\Theta\left (d'\cdot  (\log N)^{16}\right )\leq O\left (\frac{d}{(\log N)^{12}}\right )$, as $d'\leq \frac{2d}{(\log N)^{28}}$, we get that  the length of the resulting path $P_3$ is bounded by:

\[2\hat d\cdot 8d''\leq O\left (\frac{d}{(\log N)^{11}}\right ).\]
 
 The time required to compute the path $P_3$ is bounded by $|E(P_3)|$.
 By concatenating the paths $P_1,P_3$ and $P_2$, we obtain a path $P'$ connecting $x$ to $y$ in $H$, whose length is bounded by:
 
\[ O\left(\frac{d}{(\log N)^{17}}\right )+O\left (\frac{d}{(\log N)^{11}}\right )<d, \]
 
since $N$ is sufficiently large. 
The total time spent on computing path $P'$ is $O(|E(P')|)\leq O(d)$.
We then turn $P'$ into a simple $x$-$y$ path of length at most $d$ in graph $H$, in time $O(d)$. Finally, we convert $P'$ into a simple $x$-$y$ path in graph $G$ of length at most $d$ in a straightforward way, in time $O(d)$, and return the path $P$.

The time to respond to a single query is bounded by $O(d)\leq O(n)\leq O(W^0)$, and the time to respond to all queries in a single phase is bounded by $O(W^0\cdot \Delta')\leq O\left(\frac{W^0\cdot \Delta}{(\log N)^{30}}\right )$ (we have used the fact that $\Delta'=\frac{\Delta}{\log^{30}N}$, and Inequality \ref{eq: bound on d}).

%Let $P$ be the path in $G$ that corresponds to the path $P'$
%We return path $P$ in response to the query. Notice that the time spent on computing the path $P$ is bounded by $O(\ell(P))$.
%We then return the path $P$. The total time the algorithm has spent on computing the path $P$ is bounded by $O(|E(P)|)+O\left(\frac{W^0\cdot (\log N)^{28}}{d}\right )$.

\iffalse
Let $P_1,\ldots,P_q$ be all paths that the algorithm returned in response to queries in the current phase. Recall that $q\leq \Delta'\leq \frac{\Delta}{(\log N)^{30}}$.  Then the total time that the algorithm spends on processing all queries is bounded by:

\[\begin{split}
O\left(\sum_{i=1}^q|E(P_i)|\right )+ O\left(\frac{W^0\cdot \Delta'\cdot (\log N)^{28}}{d}\right )&\leq O\left(\sum_{i=1}^q|E(P_i)|\right )+ O\left(\frac{W^0\cdot \Delta}{d\cdot \log^2 N}\right )\\
\end{split}
\] 
 
\mynote{if the trees in the expander are maintained for us we don't need to worry}
$d\leq \frac{n\cdot \eta}{\Delta}$

 $\frac{\Delta\cdot d}{\eta}\leq 2n$
 
% We transform $P'$ into a simple path $P$ in graph $G$ in time $O(d)$, and return path $P$ in response to the query. 
%So far, the time that is required in order to process a single query is bounded by $O(d)+O(N)$.

\fi

\subsubsection{Post-Query Updates}
Recall that, after responding to a query, the algorithm receives a subset $E'\subseteq E(P)$ of special edges on the path $P$ that was returned in response to the query, whose lengths need to be doubled. We can then compute a subset $E''\subseteq E'$ of edges, whose lengths $\ell'(e)$ that are used by the algorithms that maintain data structures $\DSin$ and $\DSout$ need to be doubled, and supply the set $E''$ of edges to these two algorithms, in time $O(|E'|)$.
Next, we consider the edges in $E'$ one by one. For each edge $e\in E'$, we consider every special edge $e'$ on the path $P^*(e)$ representing $e$ in $H$. We split each such edge $e'=(x,y)$, by replacing it with the path $(x,x',y',y)$ in $H$. If any of the endpoints $x,y$ of $e'$ are terminal vertices that lie in $T$, this operation does not affect them. 
Notice that this operation inserts two new special edges into graph $H$: edge $e'_1=(x,x')$ and edge $e'_2=(y,y')$. We initialize the lists $S(e'_1)=S(e'_2)=\emptyset$.

Next, we consider every edge $\hat e\in S(e')$, one by one. For each such edge $\hat e$, we update the embedding path $Q(\hat e)$ by splitting the edge $e'$, and also update the length $\hat \ell(e)$ of the path by increasing it by (additive) $2$. We also insert $\hat e$ into $S(e_1')$ and $S(e_2')$. If the length $\hat \ell(e)$ of $e$ became greater than $8d''$, then we flag this edge, to be taken care of by the Cleanup procedure.

\iffalse
This concludes the algorithm for processing queries. We now bound the total time that is required in order to process all queries during a single phase. Since the number of queries that we are required to process in a single phase is bounded by $\Delta'=\frac{\Delta}{(\log N)^{30}}$, from the above discussion, the time required in order to compute a path in response to each query in the current phase is bounded by:

\[O\left (\Delta'\cdot \left(\frac{W^0\cdot (\log N)^{28}}{d}+d\right )\right )\leq O\left(\frac{\Delta\cdot W^0}{d\log^2N}+\frac{d\cdot \Delta}{(\log N)^{30}}\right )\leq O\left((W^0)^2\right ),\]

since, from the problem definition, $\frac{\Delta\cdot d}{\eta}\leq 2n\leq O(W^0)$
\fi

In order to bound the time required for post-query updates, let $E^*$ be the set of all special edges that ever belonged to graph $H$ over the course of the phase. Since, from \Cref{obs: final length of edges}, $\sum_{e\in E^{\spec}}\ell^*(e)\leq 2\sum_{e\in E^{\spec}}\ell^0(e)+4n\leq O(W^0)$, and since, for each edge $e\in E(G)$, the number of its child edges that ever belonged to $H$ over the course of the phase is asymptotically bounded $\ell^*(e)$, we get that $|E^*|\leq O(W^0)$.

Recall that the congestion caused by the embedding $\qset$ of $X\setminus F$ into $H$ is bounded by $O((\log N)^{13})$. Therefore, for every edge in $E^*$ we spend $O((\log N)^{13})$ time to process it when it is first inserted into $H$. The total time required for all post-query updates is then bounded by:

\[O( W^0\cdot (\log N)^{13}).  \]

The total running time of the algorithm so far, including responding to queries, but excluding the Cleanup Procedure, is bounded by:

\[O\left(c_1\cdot (W^0)^2\cdot  2^{c_2\sqrt{\log N}}\cdot (\log N)^{16c_2(r-1)+8c_2+30}\right )+O\left(\frac{W^0\cdot \Delta}{(\log N)^{30}}\right ).\]

\subsubsection{The Cleanup Procedure}

 The cleanup procedure is executed once at the beginning of the phase, and then after every query, once all edge-splittings that follow the query are processed. The purpose of the procedure is to ensure that Properties \ref{cond: short paths in embedding}--\ref{cond: everyone connected} are satisfied. As part of the procedure, we will sometimes compute $\phi$-sparse strongly well-structured cuts in graph $G$. Whenever this happens, we will call to another procedure, called \processcutt, that will process the resulting cut. We view Procedure \processcutt as part of the cleanup procedure, and we describe it in the following subsection. We note that, if the cleanup procedure ever identifies a $\phi$-sparse strongly well-structured cut $(A.B)$ in $G$, our algorithm does not immediately output the cut, but rather calls to Procedure \processcutt. The procedure may choose to either output the cut $(A,B)$, after which the vertices of the side of the cut with smaller weight and the endpoints of the edges in $E_G(A,B)$ are deleted from $G$, and graphs $H$, $X$ and $X'$ are updated accordingly; or it may instead compute another finalizing cut $(A',B')$ in $H$, and then call to \Cref{claim: finalizing cut} to process it. In the latter case the algorithm does not output the initial cut $(A,B)$, but after cut $(A',B')$ is processed the phase will terminate.
We now turn to describe the cleanup procedure.

% We then perform a number of iterations. At the beginning of every iteration, we assume that we are given some vertex $v\in V(X)$, and two shortest-paths trees $\hat \tau'$ and $\hat \tau''$ in $X\setminus F$ that are both rooted at $v$, with the edges of $\hat \tau'$ directed away from $v$, and the edges of $\hat \tau'$ directed towards $v$. 

The procedure consists of iterations, which are executed as long as at least one of the conditions  \ref{cond: short paths in embedding}--\ref{cond: everyone connected}  is violated.

Assume first that Condition \ref{cond: everyone connected} is violated. In this case we say that the current iteration is of type 1.
Assume w.l.o.g. that $U_{\lambda}\neq \emptyset$; the other case is symmetric. We execute a single iteration of the algorithm for the \connecttocenters problem that maintains data structure $\DSout$. We now consider three cases. If the algorithm produced  Outcome 
\ref{outcome: LCDS}, that is, it obtained a valid \LCDS with $U_{\lambda}=\emptyset$, then the current iteration terminates.
If the algorithm produced a finalizing cut $(A',B')$ in the current graph $H$, then the cut processed like decribed in Section \ref{subsec: connect to terminals}, after which the current phase terminates. Recall that the running time for processing the cut $(A',B')$ is included in our analysis of the running time of the algorithm that maintains data structure $\DSout$. Lastly, if the algorithm for the \connecttocenters problem produces a  a strongly well-structured $\frac{1}{d'}$-sparse cut $(A,B)$ in $G$, then we call to Procedure \processcutt. The procedure will either produce a finalizing cut in $H$, that will subsequently terminate the current phase; or it will output the cut $(A,B)$, following which, if $w(A)\leq w(B)$, then the vertices of $A$, and the endpoints of the edges in $E_G(A,B)$ are deleted from $G$, while otherwise, the vertices of $B$, and the endpoints of the edges in $E_G(A,B)$ are deleted from $G$, and the current iteration terminates.
The total running time of all type-1 iterations of the Cleanup procedure executed over the course of the phase, excluding the running time of Procedure \processcutt, is already included in the running times of the algorithms that maintain data structures $\DSin$ and $\DSout$.

Assume now that Condition \ref{cond: short paths in embedding} is violated. In this case we say that the current iteration is of type 2. Let $\hat e\in E(X)$ be the edge with $\hat \ell(\hat e)>8d''$; recall that all such edges were flagged by the algorithm that performed post-query updates. We then simply delete the edge $\hat e$ from graph $X$. 
We also recompute the shortest path threes $\hat \tau'$ and $\hat \tau''$ in $X\setminus F$ that are rooted at the same vertex $v$, in time $O(|E(X)|)\leq O(|T|)\leq O(W^0)$.
The total running time of type-2 iterations over the course of the whole phase, excluding the time needed for recomputing the trees $\hat \tau'$ and $\hat \tau''$, is asymptotically bounded by the running time of the algorithm that performs post-query updates after every query.
Since at least one edge is deleted from $X$ in each type-2 iteration, and $|E(X)|\leq O(|T|)\leq O(W^0)$, the total time spent on recomputing the trees $\hat \tau'$ and $\hat \tau''$ in all type-2 iterations of the phase is bounded by $O((W^0)^2)$.

Lastly, assume that Condition \ref{cond: expander small diam} is violated. In this case we say that the current iteration is of type 3. Then we can compute, in time $O(1)$, two vertices 
a pair $a,b\in V(X)$ with $\dist_{X\setminus F}(a,b)>\hat d$.
Recall that  $\hat d=\frac{2^{30}\log \hat W}{\alpha_0}$, and the maximum vertex degree in $X$ is bounded by $D=24$. We can now use the algorithm from \Cref{obs: ball growing} to compute a cut $(A,B)$ in $X\setminus F$ of sparsity at most $\phi'=\frac{32D\log \hat W}{\hat d}\leq \frac{\alpha_0}{2^{20}}$. Let $Z\in \set{A,B}$ be the set of vertices of smaller cardinality, breaking ties arbitrarily.
For every vertex $x\in Z$, we delete $x$ from $V(X)$. 
We supply the list of all vertices that are deleted from $X$ to the algorithms for the \connecttocenters problem that are used to maintain data structures $\DSin$ and $\DSout$.
Recall that the running time of the algorithm from \Cref{obs: ball growing} is bounded by $O(D\cdot |V(X)|)\leq O(\hat W)\leq O(W^0)$. 
We also select an arbitrary vertex $v\in V(X)$, and compute  shortest-path trees $\hat \tau'$ and $\hat \tau''$ in $X\setminus F$ rooted at $v$. Note that both trees can be computed in time $O(|E(X)|)\leq O(\hat W)\leq O(W^0)$.
Every time a type-3 iteration occurs, at least one vertex is deleted from $X$. Therefore, the total time that the algorithm spends on type-3 iterations over the course of the entire phase is bounded by $ O((W^0)^2)$.

This completes the algorithm for the Cleanup procedure. From our discussion, the total running time of the algorithm for the \maintaincluster problem so far, excluding Procedure \processcutt, is bounded by:

\[O\left(c_1\cdot (W^0)^2\cdot  2^{c_2\sqrt{\log N}}\cdot (\log N)^{16c_2(r-1)+8c_2+30}\right )+O\left(\frac{W^0\cdot \Delta}{(\log N)^{30}}\right ).\]
 
Next, we describe the algorithm for Procedure \processcutt.

\subsubsection{Procedure \processcutt}

The input to Procedure \processcutt is a strongly well-structured $\phi$-sparse cut $(A,B)$ in graph $G$. Recall that, if our algorithm chooses to output the cut $(A,B)$, then the vertices of the side of the cut that has smaller weight will be deleted from $G$, together with the endpoints of the edges in $E_G(A,B)$. After that, graph $H$ will be updated with a corresponding sequence of vertex- and edge-deletions, and expander graph $X$, together with the shadow expander $X'$ will also be updated, as described in Section \ref{subsubsec: maintainexpander}. If we apply all these updates directly, we may discover that too many vertices have been deleted from graph $X'$. In such a case, we could use the algorithm from \Cref{claim: lots of vertices in X after second type of updates}, in order to compute a finalizing cut $(\tA,\tB)$ in graph $H$. However, $(\tA,\tB)$ is a valid finalizing cut in the original graph $H$, to which the updates following the cut $(A,B)$ have not been applied yet. Therefore, our algorithm will, instead, simulate the updates to graphs $G$, $H$ and $X'$ in order to find out whether the number of vertices that are deleted from $X'$ following these updates is too high. If so, it will apply the algorithm from \Cref{claim: lots of vertices in X after second type of updates} to compute a finalizing cut $(\tA,\tB)$ in graph $H$, followed by the algorithm from \Cref{claim: finalizing cut}, that either correctly establishes  that at least one of the events $\event_2,\event_3$ has happened (in which the current phase terminates with ``FAIL''),  or computes a strongly well-structured $\phi$-sparse cut $(A^*,B^*)$ in the current graph $G$, such that $w(A^*),w(B^*)\geq \frac{\hat W}{\log^{30}N}$, which our algorithm then outputs; in this case, our algorithm does not output the cut $(A,B)$. The vertices of the side of the cut $(A^*,B^*)$ that has smaller weight are then deleted from $G$, together with the endpoints of the edges of $E_G(A^*,B^*)$, and the current phase terminates. If our algorithm establishes that $|X'|$ remains sufficiently high even after we update the graphs $G$, $H$ and $X'$ with the cut $(A,B)$, then it will simply output the cut $(A,B)$, and update the graphs $G,H,X'$ and $X$ accordingly.

We now provide a formal description of Procedure \processcutt. We assume that our algorithm maintains the graph $G_0$, which is the graph $G$ at the beginning of the current phase, and the sequence of strongly well-structured $\phi$-sparse cuts $(A_1,B_1),\ldots,(A_q,B_q)$ that it has output so far over the course of the phase. We also assume that it maintains the invariant that $|V(X')|\geq |T|\cdot \left(1-\frac{8}{\log^4N}\right )$. This invariant clearly holds at the beginning of the phase, when $V(X')=T$.

We now assume that Procedure \processcutt is called with a strongly well-structured $\phi$-sparse cut $(A,B)$ in the current graph $G$. We assume that so far the algorithm  returned $q\geq 0$ well-structured $\phi$-sparse cuts, that we denote by $(A_1,B_1),\ldots,(A_q,B_q)$, in the order in which the algorithm returned them.
We also denote the cut $(A,B)$ by $(A_{q+1},B_{q+1})$, and let $E_{q+1}=E_G(A_{q+1},B_{q+1})$.
Let
$J_{q+1}\subseteq A_{q+1},J'_{q+1}\subseteq B_{q+1}$ the vertices that serve as endpoints of the edges in $E_{q+1}$. As before, if $w(A_{q+1})\geq w(B_{q+1})$, then we denote  $Z_{q+1}=A_{q+1}\setminus J_{q+1}$, and otherwise we denote $Z_{q+1}=B_{q+1}\setminus J'_{q+1}$. We also let $\overline Z_{q+1}=V(G)\setminus Z_{q+1}$ contain the remaining vertices of the current graph $G$. As before, if $Z_{q+1}=A_{q+1}\setminus J_{q+1}$, then we say that cut $(A_{q+1},B_{q+1})$ is of \emph{type 1}, and otherwise we say that it is of \emph{type 2}.
We also denote by $G_q$ and $H_q$ the graphs $G$ and $H$, respectively, at the beginning of Procedure \processcutt, and we denote $z_{q+1}=\min\set{w(A),w(B)}$.

For every edge $e=(x,y)\in E_{q+1}$, we compute the set $\Gamma(e)$ of edges in $H_q$. Recall that the set $\Gamma(e)$ consists of all vertices on the path $P^*(e)$ that represents $e$ in graph $H_q$, as well as all regular edges insident to vertices $x$ and $y$ in $H_q$. It is easy to verify that $|\Gamma(e)|\leq |V(H_q)|\leq O(\hat W)$. Next, we compute the set $\Gamma_{q+1}=\bigcup_{e\in E_{q+1}}\Gamma(e)$ of edges in $\hat H$. 
The sets $\Gamma(e)$ of edges for all $e\in E_{q+1}$, together with the set $\Gamma_{q+1}$ of edges can be computed in time $O(|E_{q+1}|\cdot \hat W)\leq O(\phi z_{q+1}\cdot \hat W)$.

We also compute the partition of the vertices of $H_q$ into sets $Z^*_{q+1}$ and $\notZ^*_{q+1}$, as follows. Set $\notZ^*_{q+1}$ contains all vertices that lie in set $\notZ_{q+1}$. Additionally, for every special edge $e$ of $G$ with both endpoints in $\notZ_{q+1}$, set $\notZ^*_{q+1}$ contains all vertices on path $P^*(e)$. Similarly, set $Z^*_{q+1}$ contains all vertices that lie in set $Z_{q+1}$, and additionally, for every special edge $e$ of $G$ with both endpoints in $Z_{q+1}$, set $Z^*_{q+1}$ contains all vertices on path $P^*(e)$. It is easy to verify that $(Z^*_{q+1},\notZ^*_{q+1})$ is a partition of $V(H_{q+1})$, since cut $(A_{q+1},B_{q+1})$ is strongly well-structured in $G$.  The time required to compute the partition  $(Z^*_{q+1},\notZ^*_{q+1})$  of $V(H_{q+1})$ is bounded by $O(|V(H_{q+1})|)\leq O(\hat W)$.

Next, we compute the set $\hat \Gamma_{q+1}=\bigcup_{e\in E_{q+1}}\hat \Gamma(e)$ of edges in graph $X'\setminus F$, where for an edge $e\in E_{q+1}$, $\hat \Gamma(e)=\bigcup_{e'\in \Gamma(e)}S(e')$. In other words, an edge $\hat e\in E(X')\setminus F$ belongs to the set $\hat \Gamma(e)$ if and only if its corresponding embedding path $Q(\hat e)$ contains at least one edge of $\Gamma(e)$, or, equivalently, edge $\hat e$ belongs to set $S(e')$ of edges for some edge $e'\in \Gamma(e)$. From \Cref{obs: few embedding edges affected}, for every edge $e\in E_{q+1}$,  $|\hat \Gamma(e)|\leq 2\eta''$.
Therefore, $|\hat \Gamma_{q+1}|\leq 2|E_{q+1}|\cdot \eta''\leq 2\phi \cdot z_{q+1}\cdot \eta''\leq O(\phi\cdot z_{q+1}\cdot (\log N)^{13})$. 
%For every edge $e\in E_{q+1}$, the corresponding set $\hat \Gamma(e)$ of edges can be computed in time $O(d'+\eta'')$, as follows. 
Moreover, given an edge $e\in E_{q+1}$, set $\hat \Gamma(e)$ can be computed in time $O(\hat W\cdot \eta'')$. Indeed, recall that, from the proof of \Cref{obs: few embedding edges affected}, an edge $\hat e\in E(X)\setminus F$ may lie in $\hat \Gamma(e)$ in one of the following two cases: (i) an endpoint of $\hat e$ lies in $T\cap P^*(e)$ -- all such edges $\hat e$ can be identified in time $O(|E(P^*(e))|)\leq O(\hat W)$; or (ii) if $e'$ is the first edge on $P^*(e)$ and $\hat e\in S(e')$ -- since $|S(e')|\leq \eta''$, all such edges can be identified in time $O(\eta'')$. Therefore, the time required to compute the set $\hat \Gamma_{q+1}$ of vertices is bounded by $O(|E_{q+1}|\cdot \hat W\cdot \eta'')\leq O(\phi\cdot z_{q+1}\cdot \hat W\cdot \eta'')\leq  O(\phi\cdot z_{q+1}\cdot \hat W\cdot (\log N)^{13})$.

We also compute a partition $(T_{q+1},\notT_{q+1})$ of the vertices of $X'$, where $T_{q+1}=V(X')\cap Z^*_{q+1}$ and $\notT_{q+1}=V(X')\cap \notZ^*_{q+1}$.
Notice that the running time of the procedure so far is bounded by:

\[O\left(\phi z_{q+1}\cdot (\log N)^{13}\cdot \hat W\right )+O(\hat W)\leq 
O\left(\frac{\hat W \cdot z_{q+1}\cdot (\log N)^{13}}{d'}\right )+\leq O(\hat W \cdot z_{q+1} ),\]

since $d'\geq \frac{d}{(\log N)^{28}}$, and $d\geq (\log N)^{64}$ from the definition of the problem.

Next, we consider two cases. The first case happens if $|T_{q+1}|\geq |T|\cdot \left(1-\frac{8}{\log^4N}\right )$. 
In this case we output the cut $(A_{q+1},B_{q+1})$, and then update our data structures accordingly: we delete the vertices of $\notZ_{q+1}$ from $G$, and we delete the vertices of $\notZ^*_{q+1}$ from $H$. We also delete the edges of $\hat \Gamma_{q+1}$ and the vertices of $\notT_{q+1}$ from $X$ and from $X'$.
Finally, we update the algorithms that maintain data structures $\DSin$ and $\DSout$ with the vertices that were just deleted from $X$.
 In this case, the running time of the procedure remains at most $O(\hat W \cdot z_{q+1} )$.
Notice that, following this procedure, at least $z_{q+1}$ vertices are deleted from $H$. We \emph{charge} the running time of the procedure to the vertices that are deleted from $H$, where the charge to each vertex is $O(\hat W)$.

We now consider the case where  $|T_{q+1}|< |T|\cdot \left(1-\frac{8}{\log^4N}\right )$. In this case, we apply the algorithm from \Cref{claim: lots of vertices in X after second type of updates} to graph $G_0$ and the sequence $(A_1,B_1),\ldots,(A_{q+1},B_{q+1})$ of cuts in $G$. Recall that the running time of the algorithm is $O(\hat W^2)$, and it returns a  finalizing cut $(\tA,\tB)$ in the current graph $H$.
Next, we apply the algorithm from \Cref{claim: finalizing cut}
to graph $G_0$, the sequence $(A_1,B_1),\ldots,(A_{q},B_{q})$ of cuts in $G$, and the finalizing cut  $(\tA,\tB)$ in the current graph $H$. Recall that the running time of the algorithm from  \Cref{claim: finalizing cut} is $O(\hat W^2)$. If the algorithm from  \Cref{claim: finalizing cut} estabishes that at least one of the events $\event_2,\event_3$ has happened, then we terminate the current phase and return ``FAIL''. Otherwise, we obtain a strongly well-structured $\phi$-sparse cut $(A^*,B^*)$ in the current graph $G$, such that $w(A^*),w(B^*)\geq \frac{\hat W}{\log^{30}N}$. 
Let $Z^*\in\set{A^*,B^*}$ be the side of smaller weight; if $w(A^*)=w(B^*)$, then we let $Z^*=B^*$. We delete the vertices of $Z^*$ from $G$, together with the endpoints of the edges in $E_G(A^*,B^*)$. Every vertex $v$ that was deleted from $G$ is also deleted from $H$, and for every special edge $e$ that was deleted from $G$, all vertices and edges on path $P^*(e)$ are deleted from $H$. As the result, at least $\frac{\hat W}{\log^{30}N}$
vertices are deleted from $H$ and the current phase terminates.

Overall, the last execution of Procedure \processcutt in a phase takes time $O(\hat W^2)$. If $q^*$ is the total number of times that the procedure is executed during the phase, then all but the last execution of the procedure take time at most:

\[\sum_{q'=1}^{q^*-1}O(z_{q'}\cdot \hat W))\leq O((W^0)^2).\]

(Here, we used the fact that, from \Cref{obs: final length of edges}, the total number of vertices that ever belonged to $H$ is bounded by $O(W^0)$.)
The total running time of all executions of Procedure \processcutt over the course of a phase is then bounded by $O(\hat W^2)$, and the total running time of a single phase of the algorithm for the \maintaincluster problem remains at most:

\[O\left(c_1\cdot (W^0)^2\cdot  2^{c_2\sqrt{\log N}}\cdot (\log N)^{16c_2(r-1)+8c_2+30}\right )+O\left(\frac{W^0\cdot \Delta}{(\log N)^{30}}\right ).\]

\subsubsection{Maintaining the Additional Data Structures and the Final Accounting}

\paragraph{Additional Data Structures.}
It remains to show an algorithm that maintains, for every edge $e\in E(H)$, the set $S(e)\subseteq E(X)\setminus F$ of edges. It is straightforward to augment our algorithm to maintain this data structure without increasing its asymptotic running time.

Once the algorithm from \Cref{lem: phase 1 embed expander} computes 
the expander $X$ and the embedding $\qset$ of $X\setminus F$ into $H$, we initialize the sets $S(e)$ for all edges $e\in E(H)$ in a straightforward way: we start by setting $S(e)=\emptyset$ for all $e\in E(H)$. We then consider every edge $\hat e\in E(X)\setminus F$ one by one. When edge $\hat e$ is considered, we process every edge $e\in E(Q(\hat e))$, and add $\hat e$ to the corresponding set $S(e)$ of edges. Whenever some edge $\hat e\in E(X)\setminus F$ is deleted from $X$, we similarly process every edge $e\in E(Q(\hat e))$, and we delete $\hat e$ from set $S(e)$. Lastly, when an edge $e\in E(H)$ undergoes splitting, and new special edges $e'$ and $e''$ are inserted into graph $H$ as the result of this split, for every edge $\hat e\in S(e)$, we update the path $Q(\hat e)$ with the corresponding edge split, and we insert $\hat e$ into the lists $S(e')$ and $S(e'')$. The updates to the lists $\set{S(e)}_{e\in E(H)}$ due to edge splitting were already taken into account in the running time analysis of the algorithm that processes post-query updates. The additional time that is required in order to maintain the lists  $\set{S(e)}_{e\in E(H)}$ is asymptotically bounded by the time required to initialize these lists at the beginning of the phase, which, in turn, is asymptotically bounded by the running time of the algorithm from \Cref{lem: phase 1 embed expander}.

\paragraph{Failure probability and running time analysis.}
Recall that our algorithm may return ``FAIL'' in the following cases: (i) Event $\event$ has happened -- the probability for that is bounded by $1/16$; or (ii) the algorithm from  \Cref{lem: phase 1 embed expander} for constructing the expander $X$ returns ``FAIL'': the probability for that is bounded by $\frac{1}{N^7}$; or (iii) the algorithm for the \connecttocenters problem, that is used in order to maintain data structures $\DSin$ and $\DSout$ returns ``FAIL'': this can happen with probability at most  $\frac{1}{N^7}$ for each of the two instantiations of the algorithm. Therefore, overall, the probability that the algorithm for a single phase terminates with a ``FAIL'' is bounded by $\frac{1}{2}$. From our analysis in \Cref{subsec: phases}, the number of phases in the algorithm is  bounded by $O\left((\log N)^{33}\right )$. Since the running time of every phase is bounded by:

\[O\left(c_1\cdot (W^0)^2\cdot  2^{c_2\sqrt{\log N}}\cdot (\log N)^{16c_2(r-1)+8c_2+30}\right )+O\left(\frac{W^0\cdot \Delta}{(\log N)^{30}}\right ),\]

using the fact that $c_2$ is a large enough constant and $N$ is sufficiently large, we get that the total running time of the algorithm is bounded by:

\[\begin{split}
&O\left(c_1\cdot (W^0)^2\cdot  2^{c_2\sqrt{\log N}}\cdot (\log N)^{16c_2(r-1)+8c_2+63}\right )+O\left(W^0\cdot \Delta\cdot \log^3 N\right )\\
&\quad\quad\quad\quad\quad\quad\quad\quad \leq c_1\cdot  (W^0)^2\cdot 2^{c_2\sqrt{\log N}}\cdot (\log N)^{16c_2r}+c_1\cdot W^0\cdot \Delta\cdot \log^4 N.
\end{split}\]

This completes the proof of \Cref{thm: from routeandcut to maintaincluster}.
\newpage
\appendix
\section{Proofs Omitted from \Cref{sec: expander tools}}
\label{sec: appx expander tools}

\subsection{Proof of \Cref{obs: subdivided cut sparsity}}
\label{subsec: proof of subdivided cut sparsity}

	Consider first any edge $e=(u,v)\in E(G)$ with $u\in A$ and $v\in B$. We claim that $e$ must be a special edge. Indeed, assume otherwise, that is, $e$ is a regular edge. Then $u\in A'$ and $v\in B'$ must hold. Moreover, edge $e=(u,v)$ lies in graph $G^+$, so $e\in E_{G^+}(A',B')$. Since cut $(A',B')$ in $G^+$ is weakly well-structured, this is impossible, a contradiction. We conclude that every edge $e\in E_G(A,B)$ must be a special edge, and so cut $(A,B)$ is a weakly well-structured cut in $G$.
	
Next, we prove that $w(A)\geq\frac{|A'|}{8}$; the proof that $w(B)\geq  \frac{|B'|}{8}$ is symmetric.
We partition the vertices of $A'$ into three subsets. The first subset is $A'_1=A$. If vertex $v\in A'$ does not lie in $A$, then there  is some special edge $e=(x,y)\in E(G)$, such that $v$ is an inner vertex on path $P(e)$. If any of the endpoints $x,y$ of $e$ lie in $A$, then we add vertex $v$ to the second set $A'_2$, and otherwise it is added to the third set, $A'_3$. 
	
	Clearly, $w(A)\geq |A|\geq |A_1'|$, and it is easy to verify that $w(A)\geq \frac{|A_2'|} 2$ holds as well, since the weight of every vertex $x$ in $G$ is equal to the length of the unique special edge $e$ that is incident to $x$, which, in turn, is equal to half the number of vertices on the corresponding path $P(e)$.
	Therefore,  $w(A)\geq \frac{|A'_1|+|A'_2|}{4}$ must hold. Next, we show that $|A'_3|\leq \frac{|A'|}2$.
	
	Indeed, let $\tilde E$ be the collection of all special edges $e=(x,y)$ in $G$ with $x,y\in B$, such that at least one inner vertex on the corresponding path $P(e)$ belongs to $A'$. Clearly, for each such edge $e$, at least one edge $e'\in P(e)$ must belong to $E_{G^+}(A',B')$. Therefore, $|\tilde E|\leq E_{G^+}(A',B')\leq \phi\cdot |A'|$. Since we have assumed that graph $G$ does not contain edges of length greater than $\frac{1}{4\phi}$, we get that, for every edge $e\in \tilde E$, the number of vertices lying on the corresponding path $P(e)$ is at most $\frac{1}{2\phi}$. Therefore, $|A'_3|\leq \frac{1}{2\phi}\cdot |\tilde E|\leq \frac{|A'|}2$.  We conclude that $|A_3'|\leq \frac{|A'|}{2}$, and so $w(A)\geq \frac{|A'_1|+|A'_2|}{4}\geq \frac{|A'|}{8}$. Using a similar reasoning, we get that $w(B)\geq \frac{|B'|}8$.
	
Recall that $\Phi_{G^+}(A',B')=\frac{|E_{G^+}(A',B')|}{\min\set{w(A'),w(B')}}=\frac{|E_{G^+}(A',B')|}{\min\set{|A'|,|B'|}}$, since the weight of every vertex in $G^+$ is $1$. Consider now any edge $e=(u,v)\in E_G(A,B)$. Since $(A,B)$ is a well-structured cut in $G$, $e$ must be a special edge, and moreover, $u\in A'$, $v\in B'$ must hold. Therefore, at least one edge on the path $P(e)$ must lie in $E_{G^+}(A',B')$. We conclude that $|E_G(A,B)|\leq |E_{G^+}(A',B')|$.
 Altogether, we get that:

	\[ \Phi_G(A,B)=\frac{|E_G(A,B)|}{\min\set{w(A),w(B)}}\leq \frac{8|E_{G^+}(A',B')|}{\min\set{|A'|,|B'|}}=8\Phi_{G^+}(A',B'). \]

\subsection{Proof of \Cref{obs: weakly to strongly well str}}
\label{subsec: proof of weakly to strongly well str}

	We start with $(A',B')=(A,B)$, and then consider every vertex $x\in A$	one by one. If there is a special edge $e=(y,x)\in E_G(B',A')$, then if $w(A')\leq w(B')$ currently holds, we move $y$ from $B'$ to $A'$, and otherwise we move $x$ from $A'$ to $B'$. Notice that, since graph $G$ is well-structured, if a regular edge $(a,b)$ lies in $G$, then $a\in L$ and $b\in R$ must hold. Additionally, since $(y,x)$ is a special edge, $y\in R$ and $x\in L$ must hold. Therefore, the only edge entering $x$ is the special edge $(y,x)$, and all edges leaving $x$ are regular edges. By moving $x$ from $A'$ to $B'$ we therefore cannot introduce any new edges to $E_G(A',B')$, or any new special edges to $E_G(B',A')$. Similarly, the only edge leaving $y$ is the special edge $(y,x)$, and all edges entering $y$ are regular edges. Therefore, by moving $y$ from $B'$ to $A'$, we cannot introduce any new edges to $E_G(A',B')$, or any new special edges to $E_G(B',A')$.
	
	Once every vertex of $A$ is processed, we obtain the final cut $(A',B')$, that is strongly well-structured. Note that $|E_G(A',B')|\leq |E_G(A,B)|\leq \frac{\phi}2\cdot\min\set{w(A),w(B)}$. Let $z=\min\set{w(A),w(B)}$, and let recall that $w(V(G))=2\sum_{e\in E}\ell(e)=2W$. Notice that $z\leq W$ must hold.
	
	If no vertices were ever removed from $A'$, then $w(A')=w(A)$ holds. Otherwise, consider the last time when a vertex was removed from $A'$. We denote the removed vertex by $v$. Then, just before $v$ was moved from $A'$ to $B'$, $w(A')\geq w(B')$ held. Therefore, at the end of the algorithm, $w(A')\geq W-w(v)\geq \frac{W}{2}$, since the weight $w(v)$ of vertex $v$ is the length of the unique special edge incident to $v$, which must be bounded by $\frac{W}{2}$. We conclude that, at the end of the algorithm, $w(A')\geq \min\set{w(A),\frac{W}{2}}\geq \frac{w(A)}{4}$ holds. Using similar reasoning, $w(B')\geq \min\set{w(B),\frac{W}{2}}\geq \frac{w(B)}{4}$. 
	Moreover, $\min\set{w(A'),w(B')}\geq \min\set{w(A),w(B),\frac{W}{2}}\geq \half \min\set{w(A),w(B)}$.
	Altogether, $|E_G(A',B')|\leq \frac{\phi}2\cdot\min\set{w(A),w(B)}\leq  \phi\cdot\min\set{w(A'),w(B')}$.
	
The running time of the above algorithm is $O(\vol(A))$.	If $\vol(B)< \vol(A)$, then we can execute the same procedure by examining the vertices of $B$ instead of $A$, in time $O(\vol(B))$.

\section{Proofs Omitted from \Cref{sec: SSSP alg}}
\label{sec: appx: proofs for routeandcut}

\subsection{Proof of \Cref{obs: valid instance}}
\label{subsec: appx: proof valid instance}
%\begin{proof}
From Invariant \ref{inv: perfect well structured}, graph $H[X^0]$ is a perfect well-structured graph. It is immediate to verify that the lengths $\ell(e)$ of edges $e\in E(H)$ define a proper assignment of lengths to the edges of $H[X^0]$. Recall that we have denoted by $W^0(X)$ the total weight of all vertices in $H[X^0]$ when $X$ is added to $\xset$, and, from \Cref{obs: bound edge lengths}, we are guaranteed that $W^0(X)\leq N'$, where $N'=\max\set{N,8n\cdot \log^5n}$ is the parameter that we have defined. 
From the definition of the \stSP problem, we also get that $1\leq \eta\leq \Delta$ and $1\leq \Delta \leq n-|B^0|\leq N\leq N'$. 
From our assumption that $N$ is greater than a sufficiently large constant, with $N>2^{2^{c_2}}$, the same holds for $N'$.

Since $X$ is a non-leaf cluster,$|X^0| > \frac{\Delta\cdot (\log N)^{128}}{\eta}$ must hold. Recall that $d_X=\min\set{\frac{|X^0|\cdot \eta}{\Delta},d}$. From Inequality \ref{eq: two bounds on d}, $d\geq 2^{2\sqrt{\log N}}$. Therefore, if $d_X=d$, then $d_X\geq 2^{2\sqrt{\log N}}\geq 
2^{\sqrt{\log N'}}\geq (\log N')^{64}$ (from Inequality \ref{eq: bound on N'}, and since $N'\geq N$ is large enough). Otherwise:

\[d_X= \frac{|X^0|\cdot \eta}{\Delta}\geq  (\log N)^{128}\geq (\log N')^{64},\] 

since $\log N'\leq 2\log N\leq \log^2N$
from Inequality \ref{eq: bound on log n'}. In either case, $d_X\geq (\log N')^{64}$ holds.

Lastly, since $d_X=\min\set{\frac{|X^0|\cdot \eta}{\Delta},d}$, we get that $d_X\leq \frac{|X^0|\cdot \eta}{\Delta}$, and so  $\frac{\Delta\cdot d_X}{\eta}\leq |X^0|$ holds. This establishes that $(H[X^0],\Delta,\eta,d_X,N')$ is a valid input instance for the \maintaincluster problem. 

In order to show that this instance is $r$-restricted, we need to show that $r\leq \ceil {\sqrt{\log N'}}$ and that $d_X\leq 2^{r\cdot\sqrt{\log N'}}$ holds. 
The former follows immediately since, from the definition of the $(r+1)$-restricted \stSP problem,  $r+1\leq \ceil{\sqrt{\log N}}\leq \ceil{\sqrt{\log N'}}$. The latter also immediately follows since $d_X\leq d$, and since $d\leq 2^{r\cdot\sqrt{\log N}}\leq 2^{r\cdot\sqrt{\log N'}}$ from Inequality \ref{eq: two bounds on d}. 
%\end{proof}
%==============================
%================================

\subsection{Proof of \Cref{obs: both graphs are perfect well-structured}}
\label{subsec: appx: proof of both graphs are perfect well-str}
%\begin{proof}
From Invariant \ref{inv: perfect well structured}, graph $H[X]$ is a perfect well-structured graph, so every vertex of $X$ is incident to exactly one special edge in $H[X]$. Clearly, both $H[Y']$ and $H[Z']$ are well-structured graphs. It remains to show that each of them is a strongly well-structured graph.

We start by proving that $H[Y']$ is a strongly well-structured graph. It is enough to show that, for every vertex $v\in Y'$, if $e$ is the unique special edge that is incident to $v$ in $G$, then the other endpoint $u$ of $e$ also lies in $Y'$. Indeed, assume otherwise. Since $H[X]$ is a perfect well-structured graph, $u\in X$ must hold. It is impossible that $u\in J$, since in this case $u\in R$, and the unique special edge incident to $u$ lies in the cut $E_H(Y,Z)$. Therefore, $u\in Z$ must hold. Since $v\not\in J$, edge $e$ must lie in $E_H(Z,Y)$. But from the definition of a strongly well-structured cut, $e$ must be a regular edge, a contradiction. 

In order to prove that $H[Z']$ is a strongly well-structured graph, it is enough to show that, for every vertex $v\in Z'$, if $e$ is the unique special edge that is incident to $v$ in $G$, then the other endpoint $u$ of $e$ also lies in $Z'$. Assume otherwise. As before, since $H[X]$ is a perfect well-structured graph, $u\in X$ must hold. It is impossible that $u\in J'$, since in this case $u\in L$, and the unique special edge incident to $u$ lies in the cut $E_H(Y,Z)$. Therefore, $u\in Y$ must hold. Since $v\not\in J'$, edge $e$ must lie in $E_H(Z,Y)$. But from the definition of a strongly well-structured cut, $e$ must be a regular edge, a contradiction.
%\end{proof}
%=========================

\subsection{Proof of \Cref{obs: new right to left}}
\label{subsec: appx: proof of obs new right to left}
%\begin{proof} 
Let $e=(u,v)$ be a new right-to-left edge that was added to the \ATO as the result of the current transformation. Assume first that $v\in J'$. Since $J'\subseteq L$, edge $e$ must be the unique special edge entering $v$, and, from the definition of the set $J'$ of vertices, $e\in E_{H[X]}(Y,Z)=E_H(Y,Z)$ must hold. Therefore, $u\in J$ must hold. %We will show that the edges of $E_{G[X]}(Y,Z)$, that connect vertices of $J$ to vertices of $J'$, are the only right-to-left edges that were added during the current update.
Assume next that $v\in Z'$. Since edge $e$ is a right-to-left edge, it must be the case that $u\in Y'\cup J=Y$. However, there can be no edges connecting $Y$ to $Z'$ in $G[X]$, from the definition of the sets $J'$ and $Z'$ of vertices. Assume now that $v\in Y'$. Since edge $e$ is a right-to-left edge, $u\in J$ must hold. However, as observed already, $J\subseteq R$ must hold, and for every vertex $u\in J$, the only edge that leaves $u$ is the unique special edge incident to $u$, whose other endpoint is in $J'$. Therefore, it is impossible that $u\in J$ and $v\in Y'$ holds for an edge $(u,v)$. Lastly, we need to consider the case where $v\in J$. Since $(u,v)$ is a right-to-left edge, it must be the case that $u\in J$. But since $J\subseteq R$, it is impossible that an edge of $G$ has both endpoints in $J$. 
%\end{proof}
%===========================
%===========================
%============================

\subsection{Proof of \Cref{obs: valid spec instance}}
\label{subsec: appx: proof valid spec instance}
%\begin{proof}
As observed already, graph $H[U^0]$ is a perfect well-structured graph. It is immediate to verify that the initial lengths $\ell(e)$ of edges $e\in E(H)$ define a proper assignment of lengths to the edges of $H[U^0]$. Moreover, initially, the length of every special edge of $G$ is set to $1$. 
Recall that $|U^0|>2(n-|B^0|)\cdot (\log N)^{128}$, and that $|U^0\setminus \beta|\leq n-|B^0|$ must hold. Therefore, $|\beta|\geq \frac{99|U^0|}{100}$ holds, as required.
From our assumption that $N$ is greater than a sufficiently large constant, so is $N'$.

Recall that we have denoted by $W^0(U)$ the total initial weight of all vertices in $H[U^0]$. It is immediate to verify that the initial weight of every vertex in $H[U^0]$ is $1$, and so $W^0(U)\leq n\leq N\leq N'$. 
From the definition of the \stSP problem, we also get that $1\leq \eta\leq \Delta$ and $1\leq \Delta \leq n-|B^0|\leq \frac{|\beta|}{(\log N)^{128}}$ (the latter inequality follows since we have established that $|\beta|\geq  \frac{99|U^0|}{100}$ and $|U^0|\geq 2(n-|B^0|)\cdot (\log N)^{128}$).

Recall that $d_U= \max\set{(\log N')^{64},\left (|U^0\setminus B^0|+\Delta\right )\frac{\eta}{4\Delta\cdot 2^{\sqrt{\log N}}}}$, so $d_U\geq (\log N')^{64}$ holds.
%
%
%from Inequalities \ref{eq: two bounds on d} and \ref{eq: bound on N'}, $d\geq %2^{2\sqrt{\log N}}\geq 
%2^{\sqrt{\log N'}}$.
Lastly, we need to establish that $\frac{\Delta\cdot d_U}{\eta}\leq |U^0\setminus B^0|+\Delta\cdot (\log N')^{64}$.
Indeed, since \newline $d_U= \max\set{(\log N')^{64},\left (|U^0\setminus B^0|+\Delta\right )\frac{\eta}{2\Delta\cdot 2^{\sqrt{\log N}}}}$, we get that:

\[
%\begin{split}
\frac{\Delta\cdot d_U}{\eta}=\max\set{\frac{\Delta\cdot (\log N')^{64}}{\eta},\frac{|U^0\setminus B^0|}{4\cdot 2^{\sqrt{\log N}}}+\frac{\Delta}{4\cdot 2^{\sqrt{\log N}}}} \leq |U^0\setminus B^0|+\Delta\cdot(\log N')^{64}.
%\frac{|U^0|}{4\cdot 2^{5\sqrt{\log N}}}\leq |U^0\setminus \beta|=|U^0|-|\beta|.
%\end{split}
\]

In order to show that this instance is $r$-restricted, we need to show that $r\leq \ceil {\sqrt{\log N'}}$ and that $d_U\leq 2^{r\cdot\sqrt{\log N'}}$ holds. 
The former follows immediately since, from the definition of the $(r+1)$-restricted \stSP problem,  $r+1\leq \ceil{\sqrt{\log N}}\leq \ceil{\sqrt{\log N'}}$. The latter also immediately follows since:

\[
\begin{split}
d_U&=\max\set{(\log N')^{64},\left (|U^0\setminus B^0|+\Delta\right )\frac{\eta}{4\Delta\cdot 2^{\sqrt{\log N}}}}\\
&\leq \max\set{(\log N')^{64},\frac{(n-|B^0|)\cdot\eta}{2\Delta\cdot 2^{\sqrt{\log N}}}}\\
&\leq \max\set{(\log N')^{64},2d}\\
&\le 2^{r\sqrt{\log N'}},
\end{split}
\]

since $d= \frac{(n-|B^0|)\cdot\eta}{4\Delta\cdot 2^{\sqrt{\log N}}}$ and
 $d\leq 2^{r\cdot\sqrt{\log N}-1}\leq 2^{r\sqrt{\log N'}-1}$ from Inequality \ref{eq: two bounds on d}. 
%\end{proof}
%==============================
%================================

\subsection{Proof of \Cref{obs: bound final weights in U}}
\label{subsec: appx: proof of bound on final weights in U}

Recall that, at the beginning of the algorithm, for every edge $e\in E^{\spec}\setminus E^*$, $\ell(e)=1$, and for every vertex $v\in \tilde U$, $w(v)=1$ holds.  
For every edge $e\in (E^{\spec}\setminus E^*)\cap H[U^0]$, if both endpoints of $e$ lie in $U$ at the end of the algorithm, then we set $\tilde \ell(e)$ to be the length of $e$ at the end of the algorithm, and otherwise we let $\tilde \ell(e)$ be the length of $e$ just before its endpoints are deleted from $U$. 
Let $E'\subseteq (E^{\spec}\setminus E^*)\cap H[U^0]$ be the set of all edges $e\in (E^{\spec}\setminus E^*)\cap H[U^0]$ with $\tilde \ell(e)>1$. 
Then $\sum_{v\in \tilde U}\tilde w(v)\leq |\tilde U|+2\sum_{e\in E'}\tilde \ell(e)$.

Let $q$ be the total number of queries to algorithm $\aset'(U)$, so that $q\leq \Delta$, and let $\qset'=\set{P_1,\ldots,P_q}$ be the paths returned by Algorithm $\aset'(U)$ in response to the queries, where the paths are indexed in the order in which they were returned. For all $1\leq i\le q$, let $C_i$ denote the length of path $P_i$ when it was added to $\qset$. Clearly, $C_i\leq d_U$, and $\sum_{i=1}^qC_i\leq q\cdot d_U\leq \Delta\cdot d_U$.

Consider now some edge $e\in E'$, and recall that $\ell(e)$ was doubled at least once over the course of Algorithm $\aset'(U)$. Let $\tau$ be the last time when the length of $e$ was doubled. Then, prior to time $\tau$, there were at least $\eta$ iterations $i$, during which the path $P_i$ that was added to $\qset'$ contained $e$, and the length of $e$ during the corresponding iteration was $\frac{\tilde \ell(e)}{2}$. In other words, edge $e$ contributes at least $\frac{\tilde \ell(e)\eta}{2}$ to $\sum_{i=1}^qC_i$. Altogether, we get that $\sum_{i=1}^qC_i\geq \sum_{e\in E'}\frac{\tilde \ell(e)\eta}{2}$. Therefore:

\[\sum_{e\in E'}\tilde \ell(e)\leq \frac{2\sum_{i=1}^qC_i}{\eta}\leq \frac{2\Delta \cdot d_U}{\eta}\leq |U^0\setminus B^0|+2\Delta\cdot (\log N')^{64},\]

since
$d_U= \max\set{(\log N')^{64},\left (|U^0\setminus B^0|+\Delta\right )\frac{\eta}{4\Delta\cdot 2^{\sqrt{\log N}}}}$.

Altogether, we get that:

\[\sum_{v\in \tilde U}\tilde w(v)\leq |\tilde U|+2\sum_{e\in E'}\tilde \ell(e)\leq 4|U^0\setminus B^0|+8\Delta\cdot (\log N')^{64},\]

since the set $\tilde U$ contains all vertices of $U^0\setminus B^0$, and additionally all vertices of $B^0\cap U^0$ that were deleted from $B^0$ over the course of the algorithm, whose number is bounded by $\Delta$.

%===================================================
%===================================================
%===================================================

%===================================================
%===================================================
%===================================================
\subsection{Proof of \Cref{claim: close descendant}}
\label{subsec: proof of close descendant}

%===================================================
%===================================================
%===================================================
Consider any cluster $X\in \tilde X'_2$. Observe first that at most one child-cluster of $X$ may be its close descendant. Indeed, assume otherwise, and let $X_1,X_2$ be two child-clusters of $X$ that are its close descendants, and assume that $X_1$ was added to $\xset$ before $X_2$. Then $|X^0_1|,|X^0_2|>|X^0|/2$ must hold, and so $X^0_1\cap X^0_2\neq \emptyset$. But this is impossible since, after cluster $X_1$ was added to $\xset$, all vertices of $X_1^0$ were deleted from $X$. Recall that, if $X_1$ is a child-cluster of $X$, and it is a close descendant of $X$, then, when $X_1$ was created, $\sum_{v\in X_1}w(v)\leq \sum_{v\in X}w(v)/2$ held, following which the vertices of $X_1^0$ were deleted from $X$.
We start with the following claim that bounds the number of close descendants of a cluster $X$, whose initial weights are not much lower than that of $X$.

\begin{claim}\label{claim: short subsequence}
	Consider a sequence $(X_1,X_2,\ldots,X_z)$ of distinct clusters in $\tilde \xset_2'$, such that for all $1\leq i<z$, $X_{i+1}$ is a child cluster of $X_i$, clusters $X_2,\ldots,X_z$ are close descendants of $X_1$, and for all $2\leq i\leq z$, $W^0(X_i)\geq W^0(X_1)/2$. Then $z\leq 16$ must hold.
\end{claim}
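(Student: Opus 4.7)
The plan is to build, for each $i = 2, \ldots, z$, a ``residual'' witness set $Y_i \subseteq X_1^0$ that certifies a significant amount of weight sitting inside some cluster in the chain. I will then use disjointness of these witness sets, together with an upper bound on the total weight that can ever accumulate on the vertices of $X_1^0$, to force $z$ to be a small constant.

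First, I would establish that every step of the chain uses the ``weight halving'' alternative from the child-cluster characterization, never the ``size halving'' one. Indeed, for $i \geq 2$ the close-descendant hypothesis gives $|X_i^0| > |X_1^0|/2$, while $X_i^0 \subseteq X_{i-1}^0 \subseteq X_1^0$ gives $|X_{i-1}^0| \leq |X_1^0|$, so $|X_i^0| > |X_{i-1}^0|/2$ and case (a) ($|X_i^0| < |X_{i-1}^0|/2$) is impossible. Hence, in the notation of the paper, case (b) applies and $W^0(X_i) \leq W^{\tau_i}(X_{i-1})/2$, where $\tau_i$ is the creation time of $X_i$.

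Next, I would define $Y_i = X_{i-1}^{\tau_i} \setminus X_i^0$ for $i = 2, \ldots, z$, i.e., the set of vertices sitting in $X_{i-1}$ at time $\tau_i$ but not carved into $X_i$. Writing $W^{\tau_i}(X_{i-1}) = W^0(X_i) + w^{\tau_i}(Y_i)$ and combining with case (b), I obtain $w^{\tau_i}(Y_i) \geq W^0(X_i) \geq W^0(X_1)/2$. The sets $\{Y_i\}_{i=2}^z$ are pairwise disjoint subsets of $X_1^0$: for any $j > i$, transitivity of the parent-child relation gives $Y_j \subseteq X_{j-1}^{\tau_j} \subseteq X_{j-1}^0 \subseteq \cdots \subseteq X_i^0$, whereas $Y_i \cap X_i^0 = \emptyset$ by construction. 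Since edge lengths (and hence vertex weights) only grow, we get $w^{\tau_z}(Y_i) \geq w^{\tau_i}(Y_i) \geq W^0(X_1)/2$ for every $i$, and by disjointness
\[
\tfrac{z-1}{2}\,W^0(X_1) \;\leq\; \sum_{i=2}^{z} w^{\tau_z}(Y_i) \;\leq\; w^{\tau_z}(X_1^0).
\]

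The final and main step is to upper-bound $w^{\tau_z}(X_1^0)$ by a constant multiple of $W^0(X_1)$. The key quantitative input is that for every non-leaf regular cluster $X \in \tilde\xset_2'$ the total weight growth of vertices in $X^0$ during the run of $\aset(X)$ is at most $2|X^0|$, because the total length increase of the special edges in $H[X]$ is bounded by $2\Delta\cdot d_X/\eta \leq 2|X^0|$ (using $d_X \leq |X^0|\cdot\eta/\Delta$). Thus $W^*(X) \leq W^0(X) + 2|X^0| \leq 3W^0(X)$. Applied to $X_1$ and, recursively, to the nested descendant clusters containing a given vertex $v \in X_1^0$ along the way to time $\tau_z$, this yields $w^{\tau_z}(X_1^0) \leq c\cdot W^0(X_1)$ for a small absolute constant $c$; combining with the lower bound above gives $z-1 \leq 2c$ and hence $z \leq 16$.

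The main obstacle will be carrying out the recursive weight-growth bookkeeping in Step 5 cleanly: a vertex $v \in X_1^0$ may leave $X_1$, continue to accumulate weight inside some child cluster, then leave again, etc., and one must control the weight contributions from all the clusters $v$ ever visits without introducing a geometric blow-up. The close-descendant and large-weight hypotheses constrain this nesting sharply enough that the contributions telescope to a constant factor; the numeric constant $16$ in the claim comes precisely from plugging the growth constant into the inequality $(z-1)/2 \leq c$, with slack absorbed into the stated bound.
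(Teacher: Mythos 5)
Your setup — ruling out the size-halving case via the close-descendant hypothesis, defining the residual sets $Y_i=X_{i-1}^{\tau_i}\setminus X_i^0$, showing they are pairwise disjoint subsets of $X_1^0$, and lower-bounding each by $w^{\tau_i}(Y_i)\geq W^0(X_i)\geq W^0(X_1)/2$ — is exactly the paper's argument (the paper's sets $X'_{i-1}$, with the same disjointness and the same lower bound). The gap is in the final step. By passing to $w^{\tau_z}(Y_i)$ via monotonicity you commit yourself to proving $w^{\tau_z}(X_1^0)\leq c\cdot W^0(X_1)$ for an absolute constant $c$, and this is not justified and is not true in general: once a vertex of $X_1^0$ leaves the chain (it sits in the continuing parent cluster or in some off-chain descendant), its incident special edge can keep being doubled by queries answered by \emph{those} clusters, whose query budgets are in no way charged against $W^0(X_1)$; over the whole run the only global bound on accumulated weight is of order $\frac{\Delta\cdot d}{\eta}$ (roughly $n\,\poly\log n$), which can vastly exceed $W^0(X_1)$ when $X_1$ is small. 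Your proposed repair — recursively applying a per-cluster bound of the form $W^*(X)\leq O(W^0(X))$ along the chain — does not telescope: a multiplicative constant is lost at every level, giving a bound of the form $c^{\,z}\,W^0(X_1)$, which is circular since $z$ is precisely what you are trying to bound. (There is also a minor arithmetic slip in that per-cluster bound: vertex weight is twice the incident edge length and once-doubled edges can add another $W^0(X)$, so the constant $3$ is too small; but this is not the real issue.)

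The paper avoids the problem by never looking at time $\tau_z$: it sums the weights $W_i=w^{\tau_{i+1}}(X'_i)$ \emph{frozen at the moment each residual set leaves the chain} (and includes $X_z$ itself as a $z$-th set, giving $\sum_i W_i\geq z\,W^0(X_1)/2$), and then bounds $\sum_i W_i$ by a single \emph{additive} accounting over the whole window $[\tau_1,\tau_z)$: at any time only one chain cluster is active and it answers at most one \shortpath query per iteration of the outer algorithm, so at most $\Delta$ such queries occur in total, each returning a path of length at most $d_{X_i}\leq d_{X_1}\leq |X_1^0|\eta/\Delta$; since (by the global doubling rule) every doubling of an edge after its first one in the window requires $\eta$ of these paths to contain it, all doublings beyond the first add total length at most $\frac{2}{\eta}\cdot\Delta\cdot d_{X_1}\leq 2|X_1^0|$, while the at-most-one free doubling per edge at most doubles the initial weight. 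This yields $\sum_{i=1}^z W_i\leq 8W^0(X_1)$ with no recursion and no dependence on $z$, hence $z\leq 16$. If you replace your step 5 by this frozen-weight, single-window accounting (rather than per-cluster growth bounds chained through the descendants), your argument becomes the paper's proof.
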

\begin{proof}
	For all $1\leq i\leq z$, let $\tau_i$ be the time when cluster $X_i$ was added to $\xset$, and for all $1<i\leq r$, let $X'_{i-1}$ be the set of all vertices that lied in $X_{i-1}$ just before time $\tau_i$, excluding the vertices of $X_i$. 
	For convenience, we also denote by $X'_z=X_z$. For all $1\leq i<z$, we denote by $W_i$ be the total weight of all vertices in set $X'_i$ at time $\tau_{i+1}$, and we denote by $W_z=W^0(X_z)$. Note that the sets $X'_1,\ldots,X'_z$ of vertices are mutually disjoint.
	
	Consider now an index $1<i\leq z$. Since $|X_i^0|\geq |X_1^0|/2$ must hold, we get that $|X_i^0|\geq |X_{i-1}^0|/2$, and so, from our discussion, at time $\tau_i$, the total weight of the vertices of $X_{i}$ was less than or equal to the total weight of the vertices of $X'_{i-1}$. In other words, for all $1< i\leq z$, $W_{i-1}\geq W^0(X_{i})\geq W^0(X_1)/2$. 
	From our definitions, $W_z=W^0(X_z)\geq W^0(X_1)/2$ as well. Therefore:
	
	\[\sum_{i=1}^zW_i\geq \frac{W^0(X_1)\cdot z}{2}.\]
	%Since $W^0(X_{i+1}
	%	 )\ge W^0(X_1)$, we get that $W_i\geq W^0(X_1)/2$. 

	In the following observation,  we show that $\sum_{i=1}^{z}W_i\leq 8W^0(X_1)$ must hold, and so $z\leq 16$, as required. The following observation then completes the proof of \Cref{claim: short subsequence}.
	
	\begin{observation}\label{obs: small partial total weight}
		$\sum_{i=1}^{z}W_i\leq 8W^0(X_1)$.
	\end{observation}
	\begin{proof}
		Recall that, from our definition, the sets $\set{X'_1,\ldots,X'_{z}}$ of vertices are mutually disjoint. For convenience, denote $\tau_{z+1}=\tau_z$, and let $Y=\bigcup_{i=1}^zX'_i$, so $Y\subseteq X^0_1$. Recall that, for all $1\leq i< z$, $W_i$ denotes the weight of the vertices in $X_i'$ at time $\tau_{i+1}$. In order to prove the observation, we will observe how the weights of the vertices of $Y$ evolve during the time interval $(\tau_1,\tau_z]$, except that, for all $1\leq i\leq z$, we  will ``freeze'' the weights of the vertices in $X'_i$ at time $\tau_{i+1}$, so they are not allowed to grow anymore.
		
		Specifically, for every vertex $v\in Y$, we denote by $\tw(v)$ the weight of $v$ at time $\tau_1$. For all $1\leq i\leq z$, for every vertex $v\in X_i$, we denote by $\tw'(v)$ the weight of $v$ at time $\tau_{i+1}$. Clearly, 
		$\sum_{v\in Y}\tw(v)\leq W^0(X_1)$, and $\sum_{i=1}^{z}W_i=\sum_{v\in Y}\tw'(v)$.
		It now remains to bound $\sum_{v\in Y}(\tw'(v)-\tw(v))$. 
		Let $\tilde E'$ be the set of all special edges that lie in $E(H[X^0_1])$.
		For every special edge $e\in \tilde E'$, we denote by $\tilde \ell(e)$ the length of $e$ at time $\tau_1$, and recall that $W^0(X_1)=2\sum_{e\in \tilde E'}\tilde \ell(e)$.

		Recall that for each cluster $X\in \tilde \xset_2'$, we have defined a parameter $d_{X}=\min\set{\frac{|X^0|\cdot \eta}{\Delta},d}$. Since, for all $1<i\leq z$, cluster $X_i$ is a descendant of $X_1$, we get that $|X^0_i|\leq |X^0|$, and so $d_{X_i}\leq d_{X_1}$.

		Consider now the following process. Initially, every vertex $v\in Y$ is assigned weight $\tw(v)$, and every special edge $e\in \tilde E'$ is assigned length $\tilde \ell(e)$. Then we obeserve the iterations of the algorithm that occured during the time interval $[\tau_1,\tau_z)$ in the order in which they occur. Let $j$ be any such iteration. Let $1\leq i\leq z$ be the integer for which iteration $j$ occurs during the time interval $[\tau_{i},\tau_{i+1})$. We then consider the cluster $X_i$, and denote by $P_j$ the path that Algorithm $\aset(X_i)$ returned in response to a \shortpath query during iteration $j$. Denote by $C_j$ the length of $P_j$ at the time when it was returned, and recall that $C_j\leq d_{X_i}\leq d_{X_1}$ must hold. We also denote by $\tilde E_j\subseteq E(P_j)$ the set of special edges of $H[X_i]$, whose lengths were doubled as the result of this iteration. For every vertex $v\in X_i\cap Y$, that serves as an endpoint of an edge of $\tilde E_j$, we double the weight $\tw(v)$. For every edge $e\in \tilde E_j$, we also double its length $\tilde \ell(e)$.
		
		It is easy to verify that, at the end of this process, the final weight of every vertex $v\in Y$ is precisely $\tw'(v)$. 
		For every special edge $e\in \tilde E'$, let $\tilde \ell'(e)$ be the length of $e$ at the end of this process. Then $\sum_{v\in Y}\tw'(v)\leq 2\sum_{e\in \tilde E'}\tilde \ell'(e)$.
		Let $r$ be the total number of iterations of the algorithm that occur during the time interval $[\tau_1,\tau_z)$. Consider now any special edge $e\in \tilde E'$. From the definition of our algorithm, if $\tau_1\leq \tau<\tau'\leq \tau_z$ are any two
		times at which the length of $e$ was doubled, then edge $e$ participated in at least $\eta$ of the paths of $\set{P_1,\ldots,P_r}$ that were returned by the above process during the time interval $[\tau,\tau')$.

		Recall that we have established that $\sum_{e\in \tilde E'}\tilde \ell(e)=\frac{W^0(X_1)}{2}$. Let $\tilde E''\subseteq \tilde E'$ be the set of all edges $e$ for which $\tilde \ell'(e)>2\tilde \ell(e)$. Consider now any such edge $e\in \tilde E''$. Then the length of $e$ must have been doubled at least twice during the process that we described above. Let $\tau<\tau'$ be the last two times  during the process when the length of $e$ was doubled. Then,  at least $\eta$ of the paths 
		of $\set{P_1,\ldots,P_r}$ that were returned during the time interval  $[\tau,\tau')$ contained $e$, and for each such path $P_j$, edge $e$ contributed at least $\tilde \ell'(e)/2$ to the length of the path.
		Therefore, every edge $e\in \tilde E''$ contributes at least $\frac{\tilde \ell'(e)\cdot \eta}{2}$ to $\sum_{j=1}^rC_j$, and so $\sum_{j=1}^rC_j\geq \sum_{e\in \tilde E''}\frac{\tilde \ell'(e)\cdot \eta}{2}$. On the other hand:

		\[\sum_{j=1}^rC_j\leq r\cdot d_{X_1}\leq \Delta\cdot d_{X_1}\leq \Delta\cdot \min\set{\frac{|X_1^0|\cdot \eta}{\Delta},d} \leq |X_1^0|\cdot \eta\leq W^0(X_1)\cdot \eta.
		\]
		
		Altogether, we get that 	$\sum_{e\in \tilde E''}\tilde \ell'(e)\leq \frac{2}{\eta}\cdot \sum_{j=1}^rC_j\leq 2W^0(X_1)$, and so:
		
		\[\sum_{i=1}^zW_i=\sum_{v\in Y}\tw'(v)\leq 2\sum_{e\in \tilde E'}\tilde \ell'(e)\leq 4\sum_{e\in \tilde E'}\tilde \ell(e)+2\sum_{e\in \tilde E''}\tilde \ell'(e)\leq 8W^0(X_1).
		\]
	\end{proof} 
\end{proof}

Recall that, for every cluster $X\in \tilde \xset'_2$, $\frac{\Delta\cdot (\log N)^{128}}{\eta}\leq W^0(X)\leq 8n\log^5n$ holds (see the definition of leaf clusters  (\Cref{def: leaf})  and Inequality \ref{eq: final weight of a cluster}). We now prove that, for all $\floor{\log\left(\frac{\Delta\cdot(\log N)^{128}}{\eta}\right )}<j\leq 4\ceil{\log n}$, if $X\in \tilde \xset'_2$ is a cluster with $2^j\leq W^0(X)<2^{j+1}$, then $X$ has at most $16j$ close descendants.

The proof is by induction on $j$. The base case is when $j=\floor{\log\left(\frac{\Delta\cdot (\log N)^{128}}{\eta}\right )}$. Recall that, if $W^0(X')<\frac{\Delta\cdot (\log N)^{128}}{\eta}$, then $X'$ is a leaf cluster. Therefore,  if $X\in \tilde \xset'_2$ is a cluster with $2^j\leq W^0(X)<2^{j+1}$, and $X'\in \tilde \xset_2$ is its close descendant, then $W^0(X')\geq \frac{\Delta\cdot (\log N)^{128}}{\eta}\geq \frac{W^0(X)} 2$ must hold. 
For a cluster $X\in \tilde \xset_2'$ with $2^{j}\leq W^0(X)<2^{j+1}$, 
let $\dset(X)\subseteq \tilde X'_2$ be the collection of the close descendants of $X$, and denote $|\dset(X)|=z$. Then we can define an ordering $(X=X_1,X_2,\ldots,X_z)$ of the clusters in $\dset(X)$, such that, for all $1\leq i<z$, $X_{i+1}$ is a child-cluster of $X_i$. From \Cref{claim: short subsequence}, we are then guaranteed that $z\leq 16$.

For the induction step, assume that the claim holds for some integer $j\geq \floor{\log\left(\frac{\Delta\cdot (\log N)^{128}}{\eta}\right )}$, and consider a cluster $X\in \tilde \xset'_2$ with $2^{j+1}\leq W^0(X)<2^{j+2}$. Let $\dset(X)\subseteq \tilde X'_2$ be the set of all close descendants of $X$, and denote $|\dset(X)|=z$. As before, we can define an ordering $(X=X_1,X_2,\ldots,X_z)$ of the clusters in $\dset(X)$, such that, for all $1\leq i<r$, $X_{i+1}$ is a child-cluster of $X_i$. Let $i'> 1$ be the smallest index with $W^0(X_{i'})<W^0(X)/2$. Then for all $1\leq i< i'$, $W^0(X_i)\geq W^0(X)/2$, and so, from \Cref{claim: short subsequence}, $i'\leq 17$. Moreover, clusters $X_{i'+1},\ldots,X_z$ are close descendants of cluster $X_{i'}$, and so, from the induction hypothesis, $z-i'+1\leq 16j$. Altogether, we get that $z= (z-i'+1)+i'-1\leq 16(j+1)$. Since, for every cluster $X\in \tilde \xset_2'$, $W^0(X)\leq 8n\log^5n<2^{4\ceil{\log n}}$, we get that each such cluster has at most $64\ceil{\log n}$ close descendants.
%\end{proof}
\section{Proofs Omitted from \Cref{sec: layered connectivity DS}}

\subsection{Proof of \Cref{claim: routeandcut instance}}
\label{subsec: proof of valid routeandcut}

Observe first that, from our definition, graph $H'$ is a well-structured graph.

Recall that $N\geq \frac{W}{32}$, while $|V(H')|\leq |V(H)|+\beta\cdot M\leq 2W\log^3N$.  Since $N'=64N\log^3N$, $N'\geq |V(H')|$ holds. %The second parameter that we use is $\Delta'=k$, that replaces $\Delta$, and the third parameter is $\highlightf{ \eta'=2^{10}\cdot d'\cdot \log^6N}$, that replaces $\eta$.  
Since we have assumed that  $k\geq 2^{20}\cdot d'\cdot \log^6N$, we get that $1\le \eta'\leq \Delta'$, as required.  Since  $U_{\lambda}\subseteq A'$, while $B'$ contains all vertices of $U_0$ and their copies, it is easy to verify that $|A'|,|B'|\geq k=\Delta'$.  Therefore, for any pair $(A',B')$ of disjoint subsets of $V(H')$ with $U_{\lambda}\subseteq A$, and with all vertices of $U_0$ and their copies lying in $B'$,  $(H',A',B',\Delta', \eta',N')$ defines a valid instance of the \routeandcut problem.
We now verify that this instance is $r$-restricted.

In order to do so, it is enough to show that $1\leq r\leq \ceil{\sqrt{\log N'}}$ and $\frac{(|V(H')|-|B'|)\cdot \eta'}{\Delta'}\leq 2^{r\cdot \sqrt{\log N'}}$ holds. Since $1\leq r\leq \sqrt{\log N}$, we get that  $1\leq r\leq \ceil{\sqrt{\log N'}}$ also holds.

Recall that $|V(H')\setminus B'|\leq 4\lambda k$. Therefore:

\[\frac{(|V(H')|-|B'|)\cdot \eta'}{\Delta'}\leq \frac{4\lambda k\cdot  2^{20}\cdot d'\cdot \log^6N}{k}\leq  d'\cdot \log^8N\leq 2^{r\sqrt{\log N}}\leq 2^{r\cdot \sqrt{\log N'}},
\]

since $\lambda=\ceil{\log(64N)}+1$ and $d'\leq\frac{2^{r\sqrt{\log N}}}{\log^8N}$ from the definition of Procedure \reconnect.

\subsection{Proof of \Cref{obs: lots of terminals on one side}}
\label{subsec: proof of lots of terminals on one side}

We first prove that $|\tilde Y|\leq O(k\log N)$. Recall that set $\tilde Y$ may contain all vertices of $H'$, except for the vertices of $B'_j$. It is easy to verify that $|B'_j|\geq |B_j|-k$, since $|\rset'_j|<k$. Recall that we have shown that $|V(H')\setminus B_j|\leq O(\lambda\cdot k)\leq O(k\log N)$. Therefore, $|\tilde Y|\leq |V(H')\setminus B_j|+|B_j\setminus B'_j|\leq O(k\log N)$.

In order to prove the remaining two assertions, we partition the set $\beta$ of vertices into three subsets. The first set, $\beta_1$, contains all vertices $x\in \beta$ that lie in $\tilde Y'$. The second set, $\beta_2$, contains all vertices $x\in \beta$ that lie in $\tilde Y$, such that every copy of $x$ lies in $\tilde Y$ as well. The third set, $\beta_3$, contains all remaining vertices of $\beta$.

Notice that, for every vertex $x\in \beta_3$, some edge that is incident to one of the copies of $x$ must lie in $E_{H'}(\tilde Y,\tilde Y')$, and so $|\beta_3|\leq  |E_{H'}(\tilde Y,\tilde Y')|\leq \frac{k}{d'\log^6N}\leq \frac{|\beta|}{\log^4N}$,
since $|\beta|\geq \frac{W}{d'\cdot \log^2N}\geq\frac{k}{d'\cdot\log^2N}$.

Notice also that, for every vertex $x\in \beta_2$, all $M$ copies of $x$ (including $x$ itself) must serve as endpoints of the paths in $\rset'_j$, so $|\beta_2|\leq \frac{|\rset'_j|}{M}\leq \frac{k\cdot |\beta|}{W\log^3N}\leq \min\set{\frac{k}{\log^3N},\frac{|\beta|}{\log^3N}}$, since $M=\frac{W\cdot \log^3N}{|\beta|}$ and $k,|\beta|\leq |V(H)|\leq W$. 

Overall, we get that: 

\[|\tilde Y'\cap \beta|=|\beta_1|\geq |\beta|-|\beta_2|-|\beta_3|\geq |\beta|\cdot\left(1-\frac{1}{\log^4N}-\frac{1}{\log^3N}\right )\geq |\beta|\cdot\left(1-\frac{2}{\log^3N}\right ).\]

Additionally,

\[ |\tilde Y'\cap \beta|\geq |\beta|-|\beta_2|-|\beta_3|\geq |\beta|-\frac{k}{\log^3N}-\frac{k}{d'\log^6N}\geq |\beta|-\frac{2k}{\log^3N}. \]

Lastly, since $|A'_j|\geq \frac{k}{64\lambda}\geq \frac{k}{2^9\log N}$, as $\lambda=\ceil{\log(64N)}+1\leq 8\log N$, while $|\beta_2|\leq  \frac{k}{\log^3N}$ 
and $|\beta_3|\leq \frac{k}{d'\log^6N}$, we get that $|A'_j|\geq (|\beta_2|+|\beta_3|)\cdot \log N= |\tilde Y\cap \beta|\cdot \log N$.

\subsection{Proof of \Cref{claim: bound lambda and deleted from beta}}
\label{subsec: proof of bound on lambda}
%\begin{proof}
Let $\beta'_1$ be the set of vertices that were deleted from $\beta$ directly, and let $\beta'_2=\beta'\setminus\beta'_1$. Since at most one vertex may be deleted from $\beta$ directly in every iteration, $|\beta'_1|\leq \Delta$ must hold.

Let $E'$ denote the set of all special edges whose lengths have increased over the course of the algorithm. 
Consider any such edge $e=(u,v)\in E'$, and let $\tau$ be the first time when the length of $e$ was increased. Then at time $\tau$, neither of the vertices $u,v$ may lie in $\beta$. Therefore, either both of these vertices lied in $V(G^0)\setminus\beta^0$; or at least one of these vertices lies in $\beta'_1$ (since, once a vertex is added to $\beta_2'$, it is deleted from $G$). We conclude that $|E'|\leq |V(G^0)|-|\beta^0\setminus \beta'_1|\leq n-|\beta^0|+\Delta$.
For every edge $e\in E'$, let $\ell^0(e)=1$ be its initial length.
Clearly, $\sum_{e\in E'}\ell^0(e)\leq n-|\beta^0|+|\beta'_1|\leq n-|\beta^0|+\Delta$.

Denote by $\set{P_1,\ldots,P_q}$ the set of all paths that the algorithm returned in response to the queries. For all $1\leq i\leq q$, let $C_i$ be the length of path $P_i$ at the time it was returned, so $C_i\leq d$, and $\sum_{i=1}^qC_i\leq d\cdot q\leq d\cdot \Delta$. 

Let $\hat E\subseteq E'$ be the set of edges whose length increased more than once. For every edge $e\in \hat E$, let $\ell^*(e)$ be the final length of the edge $e$, after its last increase, and let $\tau'(e)<\tau''(e)$ be the times of the last two increases in the length of $e$. Then during the time interval $(\tau'(e),\tau''(e)]$, at least $\eta$ paths that were returned in response to the queries contained $e$, and edge $e$ contributed at least $\frac{\ell^*(e)}{2}$ to the length $C_i$ of each such path $P_i$. In other words, every edge $e\in \hat E$ contributes at least $\frac{\eta\cdot \ell^*(e)}{2}$ to $\sum_{i=1}^qC_i$. Therefore:

\[ \sum_{i=1}^qC_i\geq \frac{\eta}{2}\sum_{e\in \hat E}\ell^*(e),\]

and:

\[ \sum_{e\in \hat E}\ell^*(e)\leq \frac{2}{\eta}\cdot \sum_{i=1}^qC_i\leq \frac{2d\Delta}{\eta}.\]

Altogether, we get that:

\[
\begin{split}
\Lambda&\leq \sum_{e\in E'}2\ell^0(e)+\sum_{e\in \hat E}\ell^*(e)\\
&\leq 2n-2|\beta^0|+2\Delta+\frac{2d\Delta}{\eta}\\
%&\leq  2n-2|\beta^0|+2\Delta+n-|\beta^0|+2\Delta\cdot (\log N)^{64}\\
&\leq 4n-4|\beta^0|+3\Delta\cdot (\log N)^{64}.
\end{split} \]

(we have used the fact that, from the problem definition, $\frac{\Delta\cdot d}{\eta}\leq |V(G)|-|\beta^0|+\Delta\cdot (\log N)^{64}$).	Since, from the problem definition, $\Delta\leq \frac{|\beta^0|}{(\log N)^{128}}$, we get that $\Lambda\leq 4n-4|\beta^0|+3\Delta\cdot (\log N)^{64}\leq 4n$.

Next, we bound $|\beta'_2|$. Denote $\hat \beta=\beta^0\setminus \beta'_1$, and recall that $|\hat \beta|\geq |\beta^0|-\Delta$. For every vertex $v\in V(G^0)$, let $w^0(v)=1$ denote the initial weight of $v$, and let $w^*(v)$ denote the weight of $v$ at the end of the algorithm (if $v\in V(G)$ holds), or just before $v$ is deleted from $G$ (otherwise). We also denote by $\tilde W=\sum_{v\in V(G^0)\setminus \hat \beta}w^*(v)$.

Notice that, at the beginning of the algorithm:

\[\sum_{v\in V(G^0)\setminus \hat \beta}w^0(v)=n-|\hat \beta|\leq n-|\beta^0|+\Delta.\]

Since the total increase in all edge lengths is $\Lambda\leq 4n-4|\beta^0|+5\Delta\cdot (\log N)^{64}$, we get that:

\[
\begin{split}
\tilde W\leq \sum_{v\in V(G^0)\setminus \hat \beta}w^0(v)+2\Lambda\leq 9n-9|\beta^0|+11\Delta \cdot (\log N)^{64}.
\end{split}
\]

Consider now any strongly well-structured cut $(X,Y)$ in $G$ of sparsity $\Phi_G(X,Y)\leq \frac{(\log N)^{64}}{d}$, such that $w(X)\geq 1.8|\beta\cap X|$ holds, that the algorithm produces. Let $J'\subseteq Y$ denote the set of vertices that serve as endpoints of the edges of $E_G(X,Y)$, and recall that $|J'|\leq |E_G(X,Y)|\leq \frac{w(X)\cdot (\log N)^{64}}{d}\leq w(X)$, since $d\geq (\log N)^{64}$ from the problem definition. Note that $\hat \beta\cap X\subseteq \beta\cap X$, and so $w(X)\geq  1.8|\beta\cap X|\ge 1.8|\hat \beta\cap X|$ holds. Since the weights of all vertices in $\hat \beta$ are $1$, we get that:

\[w(X\setminus \hat \beta)=w(X)-|X\cap \hat \beta|\geq w(X)\left(1-\frac{1}{1.8}\right )\geq 0.44 w(X).\]

The number of vertices that were indirectly deleted due to the cut is bounded by:

\[|X\cap \hat \beta|+|J'|\leq 2w(X)\leq 5w(X\setminus\hat \beta)\leq 5\sum_{v\in X\setminus\hat \beta}w^*(v). \]

We assign to every vertex $v\in X\setminus \hat \beta$ a \emph{charge} of $5w^*(v)$ units, so that the total charge to all vertices of $X\setminus\hat \beta$ is greater than the number of vertices that are deleted from $\beta$ in the current iteration. Notice that the vertices of $X\setminus \hat \beta$ are deleted from $G$, and so they will never be charged again. We conclude that:

\[|\beta'_2|\leq 5\hat W\leq 45n-45|\beta^0|+55\Delta  \cdot (\log N)^{64}. \]

Altogether, we get that $|\beta'|\leq |\beta'_1|+|\beta'_2|\leq  45n-45|\beta^0|+56\Delta  \cdot (\log N)^{64}$.

%\end{proof}

\subsection{Proof of \Cref{cor: small weight}}
\label{subsec: proof of small weight}

First, observe that: 

\[\sum_{v\in V(G)\setminus \beta}w(v)\leq |V(G)|-|\beta|+2\Lambda\leq |V(G)|-|\beta|+8(n-|\beta^0|)+10\Delta(\log N)^{64},\]

from \Cref{claim: bound lambda and deleted from beta}.
Notice that, if  $v\in V(G)\setminus \beta$, then either $v\in V(G^0)\setminus \beta^0$; or $v$ belonged to $\beta$, but was directly removed from $\beta$. Since the number of vertices that are directly removed from $\beta$ is bounded by $\Delta$, and since, from the problem definition, $\Delta\leq \frac{|\beta^0|}{(\log N)^{128}}$, we get that:

\[\sum_{v\in V(G)\setminus \beta}w(v)\leq 9(n-|\beta^0|)+11\Delta(\log N)^{64}\leq 9(n-|\beta^0|)+\frac{|\beta^0|}{1000}. \]

 Lastly, since $|\beta^0|\geq \frac{99n}{100}$ from the problem definition, we get that $(n-|\beta^0|)\leq \frac{n}{100}\leq \frac{|\beta^0|}{99}$.
Using the fact that $|\beta|\geq \frac{|\beta^0|}{2}$ holds throughout the algorithm, we now get that:

\[\sum_{v\in V(G)\setminus \beta}w(v)\leq \frac{9|\beta^0|}{99}+\frac{|\beta^0|}{1000}\leq 0.1|\beta^0|\leq 0.2|\beta|.\]

\subsection{Proof of \Cref{obs: smaller side}}
\label{subsec: proof of smaller side}

First, from \Cref{cor: small weight}, $w(A\setminus \beta)\leq \sum_{v\in V(G)\setminus \beta}w(v)\leq 0.2|\beta|$.

On the other hand, $w(A\setminus \beta)= w(A)-w(A\cap \beta)=w(A)-|A\cap \beta|\geq w(A)\left(1-\frac{1}{1.8}\right )\geq 0.44 w(A)$. Therefore, $w(A)\leq 2.27w(A\setminus \beta)\leq 0.46|\beta|$. In particular, $|\beta\cap A|\leq w(A)\leq 0.46|\beta|$. but then:

\[w(B)\geq |B\cap \beta|\geq |\beta|-|A\cap \beta|\geq 0.54|\beta|>w(A).\]

%=================================
%=================================
%=================================
%=================================
%=================================

%\input{cut-player-alg}
%\input{appx-ball-growing2}
\section{Proofs Omitted from \Cref{sec: alg for maintaincluster from routeandcut}}

\subsection{Proof of \Cref{obs: final length of edges}}
\label{subsec: proof of final length of edges}
%\begin{proof}
Let $\pset$ be the set of all paths that the algorithm returned in response to queries. 
For every path $P\in \pset$, we denote by $\ell'(P)$ the length of the path $P$ at the time when it was returned, so $\ell'(P)\leq d$. Since $|\pset|\leq \Delta$, we get that $\sum_{P\in \pset}\ell'(P)\leq d\cdot \Delta$.

Let $E'\subseteq E^{\spec}$ be the set of all edges $e$ with $\ell^*(e)>2\ell^0(e)$. Consider any such edge $e\in E'$, and denote $\ell^*(e)=2^i$. Let $\tau$ be the first time when the length of $e$ was set to $2^{i-1}$, and let $\tau'$ be the time when the length of $e$ was set to $2^i$. Then, in time interval $(\tau,\tau']$, at least $\eta$ paths were added to $\pset$ that contained $e$, and, for each such path $P$, edge $e$ contributed at least $2^{i-1}=\frac{\ell^*(e)} 2$ to $\ell'(P)$.
Therefore:

\[\sum_{P\in \pset}\ell'(P)\geq \sum_{e\in E'}\frac{\ell^*(e)\cdot \eta}{2}. \]

Since $\sum_{P\in \pset}\ell'(P)\leq d\cdot \Delta$, we get that $\sum_{e\in E'}\ell^*(e)\leq \frac{2d\Delta}{\eta}$.
Altogether, we get that:

\[ \sum_{e\in E^{\spec}}\ell^*(e)\leq 2\sum_{e\in E^{\spec}}\ell^0(e)+\sum_{e\in E'}\ell^*(e)\leq 2\sum_{e\in E^{\spec}}\ell^0(e)+\frac{2d\Delta}{\eta} \leq 2\sum_{e\in E^{\spec}}\ell^0(e)+2n, \]

since $\frac{\Delta d}{\eta}\leq n$ from the problem definition. 
%\end{proof}
%=======================

\subsection{Proof of \Cref{claim: compute intermediate cut}}
\label{sec: proof of intermediate finalizing cut}

Recall that $(A_1,B_1),\ldots,(A_q,B_q)$ is the sequence of all stronlgy well-structured $\phi$-sparse cuts that our algorithm produced during the current phase so far, indexed in the order in  which the algorithm produced them. For all $1\leq i\leq q$, denote by $\tau_i$ the time when cut $(A_i,B_i)$ was computed, by $E_i=E_G(A_i,B_i)$, and by $J_i\subseteq A_i,J'_i\subseteq B_i$ the vertices that serve as endpoints of the edges in $E_i$ (with respect to $G\attime[\tau_i]$ -- graph $G$ at time $\tau_i$). If, at time $\tau_i$, $w(A_i)> w(B_i)$, then we denote $Z_i=A_i\setminus J_i$ and $\overline Z_i=B_i\setminus J'_i$; otherwise, we denote $Z_i=B_i\setminus J_i'$ and $\overline Z_i=A_i\setminus J_i$. % We also denote by $Z_i$ the larger of the sets $A_i\setminus J_i,B_i\setminus J'_i$, and we let $\overline Z_i=V(G\attime[\tau_i])\setminus Z_i$. For convenience, 
In the former case, we say that cut $(A_i,B_i)$ is of \emph{type 1}, and in the latter case it is of \emph{type 2}.

We denote by $G_0$ the graph $G$ at the beginning of the phase, and, for all $1\leq i\leq q$, we denote by $G_i$ the graph $G$ that is obtained after cut $(A_i,B_i)$ has been processed. Therefore, for all $1\leq i\leq q$, graph $G_i$ is obtained from graph $G_{i-1}$ by deleting the vertices of $\overline Z_i\cup J_i\cup J'_i$ from it, or, equivalently, $G_{i}=G_{i-1}[Z_i]$. The current graph $G$ is then denoted by $G_q$, and we denote the current graph $H$ by $H_q$. We also let $E'=\bigcup_{j=1}^qE_j$.
%
%For all $1\leq i\leq q$, we let $E_j=E_G(A_j,B_j)$, where we consider the graph $G$ at the time when cut $(A_j,B_j)$ was returned by the algorithm. We denote the current graph $G$ by $G_q$, and 

Since the total number of vertices deleted so far from $H$ during the current phase is less than $\frac{\hat W}{\log^{30}N}$, and since, for all $1\leq j\leq q$, $|E_j|\leq \phi\cdot \min\set{w(A_j),w(B_j)}$, we conclude that: 

\begin{equation}\label{eq: bounding E'}
|E'|\leq\sum_{j=1}^q|E_j|\leq \frac{\phi\cdot \hat W}{\log^{30}N}=\frac{4\hat W}{d'\cdot \log^{30}N}\leq \frac{8|T|}{\log^{30}N},
\end{equation} 

since $|T|\geq \frac{\hat W}{2d'}$, as Event $\event_1$ did not happen.

Our first step is to compute a partition of the vertices of $G_0$ into three subsets, $S_1,S_2,S_3$, such that $S_2=V(G_q)$, and the partition has several other useful properties, summarized in the following claim.

\begin{claim}\label{claim: compute the sequence}
	There is a deterministic algorithm, that computes a partition $(S_1,S_2,S_3)$ of the vertices of $G_0$, such that the following hold:
	
	\begin{itemize}
		\item $S_2=V(G_q)$;
		\item if $e=(x,y)$ is an edge of $G_0$ connecting a vertex $x\in S_j$ to a vertex $y\in S_i$ with $1\leq j<i\leq 3$, then $e$ is a special edge, and moreover, $e\in E'$ holds; and
		
		\item if $e=(x,y)$ is an edge of $G_0$ connecting a vertex $x\in S_j$ to a vertex $y\in S_i$ with $1\leq i<j\leq 3$, then $e$ is a regular edge.
	\end{itemize}
	The running time of the algorithm is $O((\hat W)^2)$.
\end{claim}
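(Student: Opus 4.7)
The plan is to construct the partition $(S_1,S_2,S_3)$ incrementally, processing the cuts $(A_1,B_1),\ldots,(A_q,B_q)$ in the order in which they were produced. I will maintain as invariants that, after processing cut $i$, the current partition $(S_1^{(i)},S_2^{(i)},S_3^{(i)})$ satisfies $S_2^{(i)}=V(G_i)$, every $G_0$-edge going from a lower-indexed bucket to a higher-indexed bucket is special and lies in $E_1\cup\cdots\cup E_i$, and every $G_0$-edge going from a higher-indexed bucket to a lower-indexed bucket is regular. The base case takes $S_1^{(0)}=S_3^{(0)}=\emptyset$ and $S_2^{(0)}=V(G_0)$, for which the invariant holds vacuously.

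For the inductive step, the update rule at cut $i$ is dictated by the direction of special edges. Strong well-structuredness of $(A_i,B_i)$ forces $E_G(A_i,B_i)$ to consist entirely of special edges and $E_G(B_i,A_i)$ entirely of regular ones, so the natural placement is to put all of $A_i$ to the ``left'' of all of $B_i$. For a type-1 cut (where $w(A_i)>w(B_i)$ and $Z_i=A_i\setminus J_i$ survives), I will set
\[S_1^{(i)}=S_1^{(i-1)}\cup J_i,\qquad S_2^{(i)}=A_i\setminus J_i,\qquad S_3^{(i)}=S_3^{(i-1)}\cup B_i,\]
and for a type-2 cut symmetrically
\[S_1^{(i)}=S_1^{(i-1)}\cup A_i,\qquad S_2^{(i)}=B_i\setminus J'_i,\qquad S_3^{(i)}=S_3^{(i-1)}\cup J'_i.\]
Only vertices currently sitting in $S_2^{(i-1)}=V(G_{i-1})$ are relocated; vertices previously placed into $S_1^{(i-1)}$ or $S_3^{(i-1)}$ stay put.

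The main technical obstacle is verifying that the invariants survive the update. There are two kinds of edges to examine. Edges between relocated vertices and already-placed buckets are simply ``widened'' in the partition (for instance, an $S_2^{(i-1)}\to S_3^{(i-1)}$ edge may become $S_1^{(i)}\to S_3^{(i)}$, still forward and thus still requiring special-in-$E'$, which is inherited from the induction hypothesis), and a straightforward case analysis shows that no constrained transition arises that was previously unconstrained. The genuinely new cross-edges are those within $V(G_{i-1})$ that now span between the new pieces $J_i$, $A_i\setminus J_i$, $B_i$ (or the analogous type-2 pieces). These I will dispatch using two structural facts: strong well-structuredness of $(A_i,B_i)$, which handles all edges crossing the cut; and the perfect well-structured property of $G_0$, which says every vertex has a unique incident special edge. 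The latter implies that for $v\in J_i\subseteq R$ the only outgoing edge is the one special edge, which belongs to $E_i$ and terminates in $J'_i\subseteq B_i$, ruling out spurious forward edges $S_1^{(i)}\to S_2^{(i)}$ from $J_i$ to $A_i\setminus J_i$. Combining these facts case by case shows the invariants are preserved.

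For the running time, the update at cut $i$ only needs to touch the vertices being moved out of $S_2^{(i-1)}$ together with the edges of $E_i$, which can be done in $O(|V(G_{i-1})|)$ time with straightforward bookkeeping. Summed over the at most $q\leq\Delta$ cuts, this yields total running time $O(\hat W\cdot q)\leq O(\hat W^2)$, matching the bound in the statement.
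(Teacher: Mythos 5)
Your construction is correct, and it reaches the same final partition as the paper by the same underlying mechanism, but it is organized more leanly. The paper's proof maintains, after each cut, a full ordered sequence of $i+1$ vertex sets together with a distinguished cluster and an accumulated set of ``boundary'' vertices (endpoints of earlier cut edges), and only at the very end collapses this sequence into $(S_1,S_2,S_3)$, dispatching the surviving boundary vertices in a final pass ($R$-side to $S_1$, $L$-side to $S_3$); the invariant that boundary vertices are incident only to regular edges inside the distinguished cluster is what makes that deferred placement work. You instead keep only the coarse tri-partition and place $J_i$ (resp.\ $J'_i$) into $S_1$ (resp.\ $S_3$) immediately at step $i$, arguing directly from the unique-special-edge property that $J_i\subseteq R$ has no outgoing edge into $A_i\setminus J_i$ (and symmetrically for $J'_i\subseteq L$), while strong well-structuredness handles the cut-crossing edges and your ``widening'' observation handles edges touching the already-frozen buckets; since $G_{i-1}=G_0[V(G_{i-1})]$ (vertices are only ever deleted), these facts apply verbatim to $G_0$-edges, so the case analysis you sketch does close. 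The net effect is the same placement rule as the paper's (boundary $R$-vertices always end up in $S_1$, boundary $L$-vertices in $S_3$), with less bookkeeping. One small slip: the number of cuts $q$ is not bounded by $\Delta$ (cuts are not tied to query iterations), but this is immaterial for your running-time bound, since each cut deletes at least one vertex of $G$, so $q\leq O(\hat W)$ and the total time is still $O(\hat W^2)$, matching the paper.
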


\begin{proof}
	We start by constructing a sequence $\sset=(Y_1,\ldots,Y_{q+1})$ of subsets of vertices of $G_0$, over the course of $q$ iterations. 
	Specifically, for all $1\leq i\leq q$, in iteration $i$ we will construct a sequence $\sset_i$
	of $i+1$ subsets of vertices of $G_0$. We will also identify a single set $S_i^*\in \sset_i$ as the \emph{distingiushed} vertex set of $\sset_i$, and we will define a subset $J_i^*\subseteq S_i^*$ of vertices of $S_i^*$ called \emph{boundary} vertices. We require that the following properties hold.
	
	\begin{properties}{R}
		\item The sets of vertices in $\sset_i$ define a partition of $V(G_0)$; \label{prop 1: partition}
		\item The set $J^*_i$ of boundary vertices contains all vertices of $S^*_i$ that serve as endpoints of the edges of $\bigcup_{i'=1}^{i}E_{i'}$ and only them;\label{prop: boundary vertices incident to cut edges}
		\item $V(G_i)=S_i^*\setminus J_i^*$;\label{prop: distinguished is the graph}
		\item For every boundary vertex $v\in J_i^*$, every edge incident to $v$ in $G_0[S^*_i]$ is a regular edge;  \label{prop: boundary incident to regular}
			\item If $S,S'\in \sset_i$, with $S$ appearing before $S'$ in the sequence, and $e$ is an edge of $G_0$ connecting a vertex of $S'$ to a vertex  of $S$, then $e$ is a regular edge of $G_0$; and \label{prop: forward edges}
			
		\item If $S,S'\in \sset_i$, with $S$ appearing before $S'$ in the sequence, and $e$ is an edge of $G_0$ connecting a vertex of $S$ to a vertex  of $S'$, then $e\in \bigcup_{i'=1}^iE_{i'}$. \label{prop: backward edges}

	\end{properties}

	In the first iteration, we consider the cut $(A_1,B_1)$, and we let the initial sequence be $\sset_1=(A_1,B_1)$. Notice that every edge connecting a vertex of $A_1$ to a vertex of $B_1$ is an edge of $E_1$, and, since cut $(A_1,B_1)$ is strongly well-structured, every edge connecting a vertex of $B_1$ to a vertex of $A_1$ is a regular edge.
	This immediately establishes Properties \ref{prop 1: partition}, \ref{prop: forward edges} and \ref{prop: backward edges} for $\sset_1$.
	
	If $(A_1,B_1)$ is a type-1 cut, that is, $w(A_1)>w(B_1)$, then we designate $A_1$ as the distinguished set $S^*_1$ of vertices for $\sset_1$, and we let $J_1$ be the corresponding set $J^*_1$ of boundary vertices. 
	Alternatively, if $(A_1,B_1)$ is a type-2 cut, then we designate $B_1$ as the distinguished set $S^*_1$ of vertices for $\sset_1$, and we let $J'_1$ be the corresponding set $J^*_1$ of boundary vertices. 
	It is immediate to verify that, with this definition, Properties \ref{prop: boundary vertices incident to cut edges}--\ref{prop: boundary incident to regular} hold.

	Assume now that for some $1\leq i<q$, we have computed the sequence $\sset_i=(S_1,S_2,\ldots,S_{i+1})$, for which Properties \ref{prop 1: partition} -- \ref{prop: backward edges} hold. Consider the distinguished set $S^*_i$ of $\sset_i$ that we denote for simplicity by $S^*$, and its corresponding set $J^*$ of boundary vertices that we denote by $J^*$. Recall that, from Requirement \ref{prop: distinguished is the graph}, $V(G_i)=S^*\setminus J^*$. Consider now the cut $(A_{i+1},B_{i+1})$ in $G_i$. We extend this cut to a partition $(A',B')$ of $S^*$ as follows. Initially, we set $(A',B')=(A_{i+1},B_{i+1})$. Next, we consider the boundary vertices $v\in J^*$ one by one. Recall that, from Property \ref{prop: boundary incident to regular}, only regular edges may be incident to $v$ in $G_0[S^*]$. Therefore, if $v\in L$, then it has no incoming edges in $G_0[S^*]$; in this case we add it to $B'$. Similarly, if $v\in R$, then it has no outgoing edges in $G_0[S^*]$; in this case we add it to $A'$. Once all vertices of $J^*$ are processed, we obtain the partition $(A',B')$ of $S^*$. Notice that $E_{G_{0}}(A',B')=E_{i+1}$, and all edges of $E_{G_0}(B',A')$ are regular edges. In order to obtain sequence $\sset_{i+1}$, we replace the set $S^*$ in $\sset_i$ with the sets $A',B'$, in this order. It is immediate to verify that Properties \ref{prop 1: partition}, \ref{prop: forward edges} and \ref{prop: backward edges}  continue to hold for $\sset_{i+1}$.
	
	If cut $(A_{i+1},B_{i+1})$ is a type-1 cut, that is, $w(A_{i+1})>w(B_{i+1})$, then we designate $A'$ as the distinguished set $S^*_{i+1}$ of vertices for $\sset_{i+1}$, and we let $J_{i+1}\cup (A'\cap J^*)$ be the corresponding set $J^*_{i+1}$ of boundary vertices. 
	Alternatively, if $(A_{i+1},B_{i+1})$ is a type-2 cut, then we designate $B'$ as the distinguished set $S^*_{i+1}$ of vertices for $\sset_{i+1}$, and we let $J'_{i+1}\cup (B'\cap J^*)$ be the corresponding set $J^*_{i+1}$ of boundary vertices. 
	It is immediate to verify that, with this definition, Properties \ref{prop: boundary vertices incident to cut edges}--\ref{prop: boundary incident to regular} continue to hold.
	The algorithm consists of $q\leq O(\hat W)$ iterations, and each iteration can be executed in time $O(\hat W)$. Therefore, the running time of the algorithm is $O(\hat W^2)$.

	We are now ready to define the partition $(S_1,S_2,S_3)$ of the vertices of $G_0$. Consider the final sequence $\yset_q=(Y_1,\ldots,Y_{q+1})$, and assume that its distinguished set is $Y_j$. We denote by $J^*$ the set of the boundary vertices of $Y_j$. Initially, we set $S_1=\bigcup_{j'=1}^{j-1}Y_{j'}$, $S_2=Y_j$, and $S_3=\bigcup_{j'=j+1}^{q}Y_{j'}$. Clearly, $(S_1,S_2,S_3)$ is a partition of $V(G_0)$, and, from properties \ref{prop: forward edges} and \ref{prop: backward edges}, for every pair $1\leq i<i'\leq 3$ of indices, for every edge $e=(x,y)$ with $x\in S_i,y\in S_{i'}$, $e$ must be a special edge that lies in $E'$, while for every edge $e'=(x',y')$ with  $x'\in S_{i'},y'\in S_{i}$, edge $e'$ is a regular edge. Additionally, from Property \ref{prop: distinguished is the graph}, $S_2=V(G_q)\cup J^*$. In our last step we will move every vertex of $J^*$ to either $S_1$ or $S_3$, in order to ensure that $S_2=V(G_q)$ holds.
	
	For convenience, if $e=(x,y)$ is an edge of $G_0$, with $x\in S_i$ and $y\in S_{i'}$, then we say that $e$ is a \emph{left-to-right} edge if $i<i'$, and that it is a \emph{right-to-left} edge if $i'<i$. Notice that the statement of the claim requires that all left-to-right edges are special edges that lie in $E'$, and all right-to-left edges are regular edges.
	
	We consider every vertex $v\in J^*$ one by one. 
	Recall that, from Property \ref{prop: boundary incident to regular}, all edges incident to $v$ in $G_0[Y_j]$ are regular edges. Assume first that $v\in L$. Let $e=(u,v)$ be the unique special edge incident to $v$. Then $u\in S_1$ must hold, and $e$ is the only edge entering $v$. We move $v$ from $S_2$ to $S_3$. Notice that, if $e'$ is a  regular edge incident to $v$, then $e'=(v,u')$ must hold for some vertex $u'$, and so this move does not create any new left-to-right edges, while it may only introduce new right-to-left edges that are regualr edges.
	
	Assume now that $v\in R$. Let $e=(v,u)$ be the unique special edge incident to $v$. Then we are guaranteed that $u\in S_3$ must hold. Edge $e$ is the only edge that leaves $v$ in $G_0$, and all other edges incident to $v$ are incoming edges, which must also be regular edges. We move vertex $v$ from $S_2$ to $S_1$. As before, this move does not introduce any new left-to-right edges, and the only new right-to-left edges that it may introduce are regular edges that are incident to $v$.
	
	Once every vertex of $J^*$ is processed we obtain the partition $(S_1,S_2,S_3)$ with the required properties. It is easy to verify that the running time of the algorithm is $O((\hat W)^2)$.
\end{proof}

Note that our updates to the graph $G$ after each cut $(A_i,B_i)$ ensure that $G$ remains a perfect well-structured graph throughout the algorithm, so in particular, for every vertex $v\in S_2=V(G_q)$, if $e$ is the unique special edge that is incident to $v$ in $G_0$, then its other endpoint $u$ also lies in $S_2$. The partition $(S_1,S_2,S_3)$ of the vertices of $G_0$ naturally defines a partition $(\tilde S_1,\tilde S_2,\tilde S_3)$ of the vertices of $H_0$, as follows. First, for every vertex $v\in V(G_0)$, if $v\in S_i$, for $1\leq i\leq 3$, then we add $v$ to $\tilde S_i$ as well. Next, we consider the special edges of $E(G_0)$. Consider any such special edge $e=(x,y)$, and assume that $x\in S_i$ and $y\in S_{i'}$, for indices $1\leq i,i'\leq 3$. If $i=i'$, then we add all vertices lying on the path $P^*(e)$ to set $\tilde S_i$. Otherwise, from our construction $i<i'$ must hold, and moreover, from the above discussion, $i=1$ and $i'=3$ must hold. We then add all inner vertices of $P^*(e)$ to set $\tilde S_1$. Consider the resulting partition $(\tilde S_1,\tilde S_2,\tilde S_3)$ of the vertices of $V(H_0)$. As before, if $e=(x,y)$ is any edge of $H_0$ with $x\in \tilde S_i$ and $y\in \tilde S_{i'}$, then we say that $e$ is a \emph{left-to-right} edge if $i<i'$ and we say that it is a \emph{right-to-left} edge if $i'<i$. Our construction insures that, if $e'$ is a left-to-right edge of $H_0$ with respect to the partition $(\tilde S_1,\tilde S_2,\tilde S_3)$, then its parent-edge $e$ is a left-to-right edge of $G_0$ with respect to the partition $(S_1,S_2,S_3)$. Moreover, if $e$ is a left-to-right edge of $G_0$ with respect to the partition $(S_1,S_2,S_3)$, then exactly one edge of $P^*(e)$ is a left-to-right edge of $H_0$ with respect to  the partition $(\tilde S_1,\tilde S_2,\tilde S_3)$. We conclude that the total number of left-to-right edges in $H_0$ with respect to the partition $(\tilde S_1,\tilde S_2,\tilde S_3)$
is bounded by $|E'|\leq  \frac{8|T|}{\log^{30}N}$.
Notice that, since the phase has not terminated so far, the total number of vertices that were deleted from $H_0$ since the beginning of the phase is at most $\frac{\hat W}{\log^{30}N}$, and so in particular, $|\tilde S_1|+|\tilde S_3|\leq \frac{\hat W}{\log^{30}N}$.

Next, we consider two cases. The first case happens if $|\tilde S_2\cap T|\leq \left(1-\frac{4}{\log^4N}\right )\cdot |T|$. In this case, either $|\tilde S_1\cap T|\geq \frac{2|T|}{\log^4N}$, or $|\tilde S_3\cap T|\geq \frac{2|T|}{\log^4N}$.
Assume first that the former holds. In this case, we return the cut $(\tilde S,\tilde S')$, where $\tilde S=\tilde S_1$ and $\tilde S'=\tilde S_2\cup \tilde S_3$. From our discussion, $|E_{H_0}(\tilde S,\tilde S')|\leq \frac{8|T|}{\log^{30}N}\leq \frac{|T|}{\log^6N}$. Since $|\tilde S_1|\leq  \frac{\hat W}{\log^{30}N}$ and $|\tilde S_1\cap T|\geq \frac{2}{\log^4N}$, we get that $\tilde S'$ contains all but at most $\frac{\hat W}{\log^{30}N}\leq \frac{\hat W}{8}$ vertices of $H_0$, and at most $\left(1-\frac{2}{\log^4N}\right )|T|$ vertices of $T$.

Similarly, if $|\tilde S_3\cap T|\geq \frac{2|T|}{\log^4N}$, then we return the cut $(\tilde S,\tilde S')$, where $\tilde S=\tilde S_1\cup \tilde S_2$ and $\tilde S'=\tilde S_3$. Using the same reasoning as before, it is easy to see that $\tilde S$  contains all but at most $\frac{\hat W}{\log^{30}N}\leq \frac{\hat W}{8}$ vertices of $H_0$, and at most $\left(1-\frac{2}{\log^4N}\right )|T|$ vertices of $T$.

It now remains to consider the second case, where $|\tilde S_2\cap T|> \left(1-\frac{4}{\log^4N}\right )\cdot |T|$. Intuitively, we would now like to use the finalizing cut $(\tilde A,\tilde B)$ in graph $H_q$, in order to compute the cut $(\tilde S,\tilde S')$. For example, we could start with $\tilde S=\tilde S_1$ and $\tilde S'=\tilde S_3$, and then add vertices of $\tilde A$ to $\tilde S$, and vertices of $\tilde B$ to $\tilde S'$. The technical issue with this approach is that $\tilde S_2$ is a subset of vertices of graph $H_0$. Specifically, $\tilde S_2$ is the set of vertices of the subdivided graph of $G_0[S_2]$. Intuitively, we can think of graph $H_q$ as being obtained from $H_0[\tilde S_2]$ by iteratively subdividing some of the special edges. Therefore, not all vertices of $H_q$ lie in $\tilde S_2$, and so it is not immediately clear how to use the cut $(\tilde A,\tilde B)$ in order to obtain the cut $(\tilde S,\tilde S')$ in graph $H_0$ with the desired properties. In order to overcome this technical difficulty, we will slightly modify the cut $(\tilde A,\tilde B)$ in $H_q$. The goal of this modification is to ensure that, for every special edge $e\in E(G_q)$, all inner vertices of $P^*(e)$ lie on the same side of the cut, and, if both endpoints of $e$ lie on the same side of the cut, then so do all vertices of $P^*(e)$. At the same time we need to ensure that we preserve some of the useful properties of the cut $(\tilde A,\tilde B)$: namely, that $|E_{H_q}(\tilde A,\tilde B)|$ is small, and that one of the sides of the cut contains many vertices of $H_q$ but relatively few vertices of $T$.

We provide the algorithm for modifying the cut $(\tilde A,\tilde B)$ in the next observation. For simplicity, in the observation and its proof we denote the graph $H_q$ by $H$ and $G_q$ by $G$.

\begin{observation}\label{claim: transform the cut}
	There is a deterministic algorithm, that, given a finalizing cut $(\tA,\tB)$ in graph $H$, computes a cut $(\tA',\tB')$ in $H$, such that the following hold:
	
	\begin{itemize}
		\item for each special edge $e=(x,y)$ of $G$, either all inner vertices on $P^*(e)$ lie in $\tA'$, or all inner vertices of $P^*(e)$ lie in $\tB'$. Moreover, if both endpoints of $e$ lie in $\tA'$, then so do all vertices of $P^*(e)$, and if both endpoitns of $e$ lie in $\tB'$, then so do all vertices of $P^*(e)$:
			
		\item all edges in $E_H(\tA',\tB')$ are special edges;
		\item $|E_H(\tA',\tB')|\leq \frac{|T|}{8\log^6N}$; and
		
		\item one of the sides of the cut $\tilde Z\in \set{\tA',\tB'}$ contains at least $|V(H)|-\frac{\hat W}{16}$ vertices of $H$, and at most $|T|\cdot \left(1-\frac{6}{\log^4N}\right )$ vertices of $T$.
	\end{itemize}
	
	The running time of the algorithm is $O(\hat W^2)$.
\end{observation}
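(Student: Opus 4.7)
}

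The plan is to construct $(\tA',\tB')$ from $(\tA,\tB)$ by \emph{smoothing out} the cut along each subdivision-path of $G=G_q$: for every special edge $e=(x,y)$ of $G$ I will force all of $P^*(e)$ to lie on a single side. Concretely, I process every special edge $e$ of $G$ as follows. If both $x,y\in\tA$, put every inner vertex of $P^*(e)$ into $\tA'$; if both $x,y\in\tB$, put every inner vertex of $P^*(e)$ into $\tB'$; and if $x,y$ lie on different sides, put every inner vertex into $\tA'$ (any consistent choice works). The endpoints $x,y$ themselves (which are vertices of $G$) stay on whichever side of $(\tA,\tB)$ they were already on. The first bullet of the claim then follows by construction. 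Since the subdivision paths cover all of $V(H)\setminus V(G)$ and since $G$ is a perfect well-structured graph, every vertex is placed in exactly one of $\tA',\tB'$, so $(\tA',\tB')$ is a valid cut, computable in time $O(\sum_e|P^*(e)|)=O(\hat W)\leq O(\hat W^2)$.

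Next I verify that all edges of $E_H(\tA',\tB')$ are special and bound their number. Regular edges of $H$ are exactly the regular edges of $G$, and their endpoints (which are vertices of $G$) do not move; since no regular edge crossed the weakly well-structured cut $(\tA,\tB)$, none crosses $(\tA',\tB')$ either. For a special edge $e=(x,y)$ of $G$, if both endpoints lie on the same side then by construction no edge of $P^*(e)$ crosses the cut; if they lie on opposite sides, then exactly one special edge of $P^*(e)$ crosses (the first or the last, depending on the chosen side). Hence $|E_H(\tA',\tB')|\leq |E_G(A_q,B_q)|$ where $A_q=\tA\cap V(G)$ and $B_q=\tB\cap V(G)$; moreover each edge of $E_G(A_q,B_q)$ contributes at least one special edge of $H$ to $E_H(\tA,\tB)$, so $|E_G(A_q,B_q)|\leq |E_H(\tA,\tB)|\leq |T|/(8\log^6N)$.

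The main quantitative step is bounding how many vertices and terminals were moved. Call a subdivision-path $P^*(e)$ \emph{badly cut} if at least one edge of $P^*(e)$ lies in $E_H(\tA,\tB)$; only inner vertices of badly cut paths can be moved by my procedure. Each badly cut path contributes at least one edge to $E_H(\tA,\tB)$, so the number of badly cut paths is at most $|E_H(\tA,\tB)|\leq |T|/(8\log^6 N)$. Since we have ensured that every edge of $G$ has length at most $d'$, each badly cut $P^*(e)$ has at most $2d'$ inner vertices, so the total number $|M|$ of vertices that change sides satisfies
\[
|M|\leq 2d'\cdot\frac{|T|}{8\log^6 N}\leq \frac{2\hat W}{\log^6 N},
\]
using $|T|\leq 8\hat W/d'$ from the failure of event $\event_1$. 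WLOG the ``big'' side of the original cut is $\tilde Z=\tA$; since only inner vertices of badly cut paths can leave $\tA$, we obtain $|\tA'|\geq |\tA|-|M|\geq |V(H)|-\hat W/32-2\hat W/\log^6 N\geq |V(H)|-\hat W/16$ for $N$ sufficiently large. For terminals, recall that event $\event_4$ does not happen, so any path $P^*(e)$ contains at most $16\log N$ terminals; therefore the total number of terminals that move from $\tB$ into $\tA'$ is at most $16\log N\cdot |T|/(8\log^6 N)=2|T|/\log^5 N$, giving $|\tA'\cap T|\leq |\tA\cap T|+2|T|/\log^5 N\leq |T|(1-7/\log^4 N)+2|T|/\log^5 N\leq |T|(1-6/\log^4 N)$ for $N$ large enough. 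The symmetric case $\tilde Z=\tB$ is handled identically by swapping the roles of $\tA'$ and $\tB'$.

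There is no real obstacle here: the proof reduces to the one observation that each badly cut path $P^*(e)$ is witnessed by at least one edge of the (already small) cut $E_H(\tA,\tB)$, which simultaneously controls the total number of moved vertices (via the uniform length bound $\ell(e)\leq d'$ and the lower bound $|T|\geq \hat W/(2d')$ from $\neg\event_1$) and the total number of moved terminals (via the per-path bound $16\log N$ from $\neg\event_4$). The only mildly delicate point is making sure the slack between the hypotheses $(|V(H)|-\hat W/32,\,1-7/\log^4 N)$ and the conclusions $(|V(H)|-\hat W/16,\,1-6/\log^4 N)$ absorbs the movement, which the computation above does comfortably for any sufficiently large $N$.
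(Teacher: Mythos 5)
Your construction and accounting for the first three bullets are fine, but the balance argument for the fourth bullet has a genuine gap, and it is exactly the point where this observation is delicate. Your key claim, ``only inner vertices of badly cut paths can be moved by my procedure,'' is false. The cut $E_H(\tA,\tB)$ is \emph{directed}: it only contains edges oriented from $\tA$ to $\tB$, and the finalizing-cut definition says nothing about edges from $\tB$ to $\tA$. Hence a special edge $e=(x,y)$ of $G$ with $x\in\tB$ and $y\in\tA$ can have its path $P^*(e)$ straddle the cut (a prefix of its vertices in $\tB$, a suffix in $\tA$) without containing a single edge of $E_H(\tA,\tB)$ -- such a path is not ``badly cut,'' yet your rule (``endpoints on different sides $\Rightarrow$ move all inner vertices to $\tA'$'') moves all of its $\tB$-side inner vertices. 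The number of these paths is not controlled by $|E_H(\tA,\tB)|$ at all; it is only bounded by the number of $G$-vertices on the small side, and the total mass of their straddling portions can be $\Theta(\hat W)$ vertices and $\Omega(|T|/\log^4 N)$ terminals. Concretely, when $\tilde Z=\tA$, all of the at least $\frac{7|T|}{\log^4N}$ terminals of $\tB$ could sit on such straddling prefixes and would be pushed into $\tA'$, destroying the bound $|\tA'\cap T|\le |T|\bigl(1-\frac{6}{\log^4N}\bigr)$; when $\tilde Z=\tB$, almost all of $\tB$ could consist of such prefixes (with only the heads $y$ in $\tA$), and moving them to $\tA'$ destroys $|\tB'|\ge |V(H)|-\frac{\hat W}{16}$. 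Choosing the other uniform side fails symmetrically, so the procedure itself, not just the bookkeeping, is insufficient.

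The paper's proof coincides with yours on the badly-cut paths (your per-path bounds of $d'$ vertices and $16\log N$ terminals via $\neg\event_1,\neg\event_4$ are the same), but it adds a second step precisely for the non-bad edges of $E^{BA}$: it sends most of them to the big side $\tilde Z$ (so the vertex-count condition survives) and diverts only a capped subset $E^*_1$ of terminal-containing ones -- at most $\lceil 6|T|/\log^4N\rceil$ paths, hence at most roughly $\hat W/\log^3N$ vertices -- to the other side; the point is that either $E^*_1$ is small, in which case no terminal-containing straddling path ever enters $\tilde Z$, or $E^*_1$ is at the cap, in which case at least $\frac{6|T|}{\log^4N}$ terminals are forced onto the small side, which re-establishes the terminal bound for $\tilde Z$ directly. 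You need this (or an equivalent side-dependent, terminal-aware assignment) to close the fourth bullet.
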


We complete the proof of \Cref{claim: transform the cut} below, after we complete the proof of \Cref{claim: compute intermediate cut} using it.
Consider some special edge $e\in E(G_q)$. For simplicity, we denote the subdivided path $P^*(e)$ in the corresponding graph $H_q$ by $P'(e)$, while we use $P^*(e)$ to denote the subdivided path corresponding to $e$ in $H_0$. Therefore, $P'(e)$ is obtained from $P^*(e)$ by iteratively subdividing some of the special edges.

We construct the cut $(\tilde S,\tilde S')$ in graph $H_0$ as follows. We start with $\tilde S=\tilde S_1$ and $\tilde S'=\tilde S_3$. Next, we consider all vertices $v\in S_2=V(G_q)$. For each such vertex, if $v\in \tA'$, then we place $v$ in $\tilde S$, and otherwise, we place it in $\tilde S'$. Lastly, we consider every special edge $e\in E(G_q)$. If all inner vertices of $P'(e)$ lie in $\tA'$, then we place all inner vertices of $P^*(e)$ in $\tilde S$; otherwise, we place all inner vertices of $P^*(e)$ in $\tilde S'$. 

We now verify that the resulting cut $(\tilde S,\tilde S)$ has all required properties. Consider first any edge $e\in E_{H_0}(\tilde S,\tilde S')$. If this edge is incident to a vertex of $\tilde S_1$ or to a vertex of $\tilde S_2$, then we are guaranteed that $e\in E'$ holds. Otherwise, there is a parent-edge $e'=(x,y)\in G_0$ of $e$, with $x\in \tilde A'$, $y\in \tilde B'$. In this case, path $P'(e')$ contributed exactly one special edge to the set $E_{H_q}(\tA',\tB')$ in $H_q$, and path $P^*(e')$ contributes exactly one special edge to the set $E_{H_0}(\tilde S,\tilde S')$ -- the edge $e$. Therefore, every edge in $E_{H_0}(\tilde S,\tilde S')$ is a special edge, and:

\[|E_{H_0}(\tilde S,\tilde S')|\leq |E'|+|E_H(\tA',\tB')|\leq  \frac{8|T|}{\log^{30}N}+ \frac{|T|}{8\log^6N}\leq \frac{|T|}{4\log^6N}.  \]

Recall that one of the sides of the cut $\tilde Z\in \set{\tA',\tB'}$ contains all but at most  $\frac{\hat W}{16}$ vertices of $H_q$, and at most $|T|\cdot \left(1-\frac{6}{\log^4N}\right )$ vertices of $T$.

Assume first that $\tilde Z=\tA'$. Recall that we have established that $|\tilde S_3|\leq \frac{\hat W}{\log^{30}N}$. Since $|\tB'|\leq \frac{\hat W}{16}$, we get that $|\tilde S'|\leq |\tilde S_3|+|\tB'|\leq \frac{\hat W}{\log^{30}N}+\frac{\hat W}{16}\leq \frac{\hat W}{8}$. Additionally, since we have assumed that  $|\tilde S_2\cap T|> \left(1-\frac{4}{\log^4N}\right )\cdot |T|$, we get that $|\tilde S_1\cap T|\leq \frac{4|T|}{\log^4N}$, and so $|\tilde S\cap T|\leq |\tilde S_1\cap T|+|\tA'\cap T|\leq \left(1-\frac{2}{\log^4N}\right )\cdot |T|$.

Similarly, if $\tilde Z=\tB'$, then $|\tilde S|\leq |\tilde S_1|+|\tA'|\leq \frac{\hat W}{\log^{30}N}+\frac{\hat W}{16}\leq \frac{\hat W}{8}$. Since 
we have assumed that  $|\tilde S_2\cap T|> \left(1-\frac{4}{\log^4N}\right )\cdot |T|$, we get that $|\tilde S_3\cap T|\leq \frac{4|T|}{\log^4N}$, and so $|\tilde S'\cap T|\leq |\tilde S_3\cap T|+|\tB'\cap T|\leq \left(1-\frac{2}{\log^4N}\right )\cdot |T|$.

We now analyze the running time of the algorithm. The running times of the algorithms from \Cref{claim: compute the sequence} and from \Cref{claim: transform the cut} are bounded by $O(\hat W^2)$. It is easy to verify that the running time of the remainder of the algorithm is bounded by $O(\hat W^2)$ as well, so that total running time of the algorithm is bounded by $O(\hat W^2)$.

In order to complete the proof of \Cref{claim: compute intermediate cut}, it is now enough to prove \Cref{claim: transform the cut}, which we do next. In the proof of the observation we refer to graph $H_q$ as $H$, and to graph $G_q$ as $G$.

\begin{proofof}{\Cref{claim: transform the cut}}
	We partition the set of all special edges of $E(G)$ into four subsets. The first set, $E^A$, contains all special edges $(x,y)$ with $x,y\in \tA$. The second set, $E^B$, contains all special edges $(x,y)$ with $x,y\in \tB$. The third set, $E^{AB}$ contains all special edges $(x,y)$ with $x\in \tA$ and $y\in \tB$, and the fourth set, $E^{BA}$, contains all special edges $(x,y)$ with $x\in \tB$ and $y\in \tA$. We say that a a special edge $e\in E(G)$ is \emph{bad}, if some edge of $P^*(e)$ lies in $E_H(\tA,\tB)$. Clearly, the number of bad special edges is bounded by $|E_H(\tA,\tB)|\leq \frac{|T|}{8\log^6N}$. 	 
	Recall that, since cut $(\tA,\tB)$ is finalizing, one of the sides of the cut $Z\in \set{\tA,\tB}$ contains at least $|V(H)|-\frac{\hat W}{32}$ vertices of $H$, and at most $|T|\cdot \left(1-\frac{7}{\log^4N}\right )$ vertices of $T$.
	
	We start with $(\tA',\tB')=(\tA,\tB)$, and then modify the cut $(\tA',\tB')$, in two steps. In the first step, we consider every bad edge $e\in E(G)$ in turn. If $e\in E^A\cup E^{AB}\cup E^{BA}$, then we move all inner vertices of $P^*(e)$ to set $\tA'$. Otherwise, $e\in E^B$ must hold, and we move all inner vertices of $P^*(e)$ to $\tB'$. 
	
Consider the resulting cut $(\tA',\tB')$, that is obtained after every bad special edge of $G$ has been processed.
	It is immediate to verify that $|E_H(\tA',\tB')|\leq |E_H(\tA,\tB)|\leq \frac{|T|}{8\log^6N}$ continues to hold. Indeed, every bad edge $e\in E(G)$ contributes at least one edge of $P^*(e)$ to $E_H(\tA,\tB)$, and it contributes at most one edge of $P^*(e)$ to $E_H(\tA',\tB')$. 
It is immediate to verify that every edge in $E_H(\tA',\tB')$ is a special edge, since, for every special edge $e$ of $G$, the first and the last edge on path $P^*(e)$ are special edges of $H$.
	Moreover, every edge in $E_H(\tA',\tB')$ is a special edge that is a child-edge of some bad edge in $E^{AB}$.

Recall that the number of bad edges in $G$ is bounded by $|E_H(\tA,\tB)|\leq \frac{|T|}{8\log^6N}$.
Consider the set $Z\in \set{\tA,\tB}$, such that contains at least $|V(H)|-\frac{\hat W}{32}$ vertices of $H$, and at most $|T|\cdot \left(1-\frac{7}{\log^4N}\right )$ vertices of $T$. Since Event $\event_4$ did not happen, for every bad edge $e\in E(G)$, the correpsonding path $P^*(e)$ contains at most $16\log N$ vertices of $T$. 
Therefore, the total number of vertices of $T$ that our procedure could have added to $Z$ during the first step is bounded by $ \frac{|T|}{8\log^6N}\cdot 16\log N\leq \frac{2|T|}{\log^5N}$. Therefore, after the transformation, $Z$ contains at most $|T|\cdot \left(1-\frac{6}{\log^4N}\right )$ vertices of $T$. 
Additionally, since the length of every edge $e$ in $G$ is bounded by $d'/2$, for every special edge $e\in E(G)$, the number of vertices on path $P^*(e)$ is bounded by $d'$. Therefore, the total number of vertices that our algorithm may have removed from $Z$ is bounded by $d'\cdot \frac{|T|}{8\log^6N} \leq \frac{\hat W}{\log^6N}$ (since $|T|\leq \frac{8\hat W}{d'}$, as Event $\event_1$ did not happen). Therefore, after the first step, $Z$ still contains at least $|V(H)|-\frac{3\hat W}{64}$ vertices. To summarize, after the first step, $Z$ contains at least $|V(H)|-\frac{3\hat W}{64}$ vertices, and at most $|T|\cdot \left(1-\frac{6}{\log^4N}\right )$ vertices of $T$.

Consider any special edge $e=(x,y)$ of $G$. If $x,y\in \tA'$, and any vertex of $P^*(e)$ originally lied in $\tB'$, then $e$ must have been a bad edge, as some edge on $P^*(e)$ must have lied in $E_H(\tA,\tB)$. Our algorithm then moved all inner vertices on $P^*(e)$ to $\tA'$. Simiarly, if $x,y\in \tB'$, then our algorithm ensures that all inner vertices on $P^*(e)$ lie in $\tB'$. If $x\in \tA$ and $y\in \tB$ held, then $e$ must be a bad edge, since some edge on $P^*(e)$ must lie in $E_H(\tA,\tB)$. Our algorithm moved all inner vertices of $P^*(e)$ to $\tA'$. Lastly, assume that $x\in \tB'$ and $y\in \tA'$ holds. If $e$ is a bad edge, then we have moved all inner vertices of $P^*(e)$ to $\tA'$. It now only remains to take care of the subset $E^*\subseteq E^{BA}$ of special edges, that contains every edge $e\in E^{BA}$ that is not bad. We take care of these edges in the second step.

Intuitively, in the second step, we consider each edge $e\in E^*$ one by one, and for each such edge, we move all inner vertices of $P^*(e)$ to one side of the cut. However, we need to be careful so that we do not remove too many vertices from the set $Z$, and that we do not add too many vertices of $T$ to $Z$.

We define a subset $E^*_1\subseteq E^*$ of edges to contain every edge $e\in E^*$, such that the path $P^*(e)$ contains at least one vertex of $T$. If $|E^*_1|>\ceil{\frac{6|T|}{\log^4N}}$, then we discard edges arbitrarily from $E^*_1$, until $|E^*_1|=\ceil{\frac{6|T|}{\log^4N}}$ holds.

We now consider two cases. The first case happens when $Z=\tA$.
In this case, for every edge $e\in E^*_1$, we move all inner vertices on path $P^*(e)$ to $\tB'$, and for every edge $e\in E^*\setminus E^*_1$, we move all inner vertices on path $P^*(e)$ to $\tA'$. Notice that, if $|E^*_1|=\frac{6|T|}{\log^4N}$, then this process guarantees that $|\tB'\cap T|\geq  \frac{6|T|}{\log^4N}$ holds at the end of the second step, and so $|\tA'\cap T|\leq  |T|\cdot \left(1-\frac{6}{\log^4N}\right )$ holds. If $|E^*_1|<\frac{6|T|}{\log^4N}$, then we did not add any new terminals to set $\tA'$ in this step, and so $|\tA'\cap T|\leq  |T|\cdot \left(1-\frac{6}{\log^4N}\right )$  continues to hold. Since, for every edge $e\in E^*_1$, the number of inner vertices on path $P^*(e)$ is bounded by $d'$, the number of vertices that we have deleted from $\tA'$ in this step is bounded by:

\[d'\cdot |E^*_1|\leq d'\cdot \ceil{\frac{6|T|}{\log^4N}}\leq d'\cdot \frac{100\hat W}{d'\log^4N}\leq \frac{\hat W}{\log^3N}. \]

(We have used the fact that $|T|\leq \frac{8\hat W}{d'}$, since Event $\event_1$ did not happen).
Therefore, after the second step, $\tA'$ contains at least $|V(H)|-\frac{3\hat W}{64}-\frac{\hat W}{\log^3N}\geq |V(H)|-\frac{\hat W}{16}$ vertices.	
	
If $Z=\tB$, then 	for every edge $e\in E^*_1$, we move all inner vertices on path $P^*(e)$ to $\tA'$, and for every edge $e\in E^*\setminus E^*_1$, we move all inner vertices on path $P^*(e)$ to $\tB'$. Using the same arguments as before, we conclude that, after the second step, $|\tB'\cap T|\leq  |T|\cdot \left(1-\frac{6}{\log^4N}\right )$  and $|\tB'|\geq  |V(H)|-\frac{\hat W}{8}$ hold.
	
It is easy to verify that the time required to compute the modified cut $(\tA',\tB')$ is bounded by $O(|E(H)|)\leq O(\hat W^2)$. 
\end{proofof}

\subsection{Proof of \Cref{obs: valid input for routeandcut}}
\label{subsec: bad params no event}

%==================================
%==================================
%==================================
%==================================
%==================================

It is enough to prove that inequalities  $r\leq \ceil{\sqrt{\log N'}}$, $1\leq \eta'\leq \hat \Delta$, and $\frac{(|V(H)|-|B|)\cdot \eta'}{\hat \Delta}\leq 2^{r\sqrt{\log N'}}$ all hold.

%First, it is immediate to verify that $1\leq \hat \Delta\leq N'$, 
%since $\hat \Delta\leq |T|\leq |V(H)|=\hat W$, while $N'\geq \hat W$. Additionally, 
First, since $r\leq \ceil{\sqrt{\log N}}$, and $N\leq N'$, it is immediate to verify that $r\leq \ceil{\sqrt{\log N'}}$.

Second, since Event $\event_1$ did not happen, $\hat \Delta\geq \frac{|T|}{4}\geq \frac{\hat W}{8d'}$.
Since $\eta'=(\log N')^{8}$, we then get that:

\[
\begin{split}
\hat \Delta\geq\frac{\hat W}{8d'}\geq \frac{n}{16d'}\geq \frac{n\cdot (\log N)^{28}}{32d}\geq \frac{(\log N)^{28}}{64}\geq \eta'.
\end{split}
\] 

(we have used the fact that $\hat W\geq|V(G)|\geq \frac{n}{2}$; $d'\leq \frac{2d}{(\log N)^{28}}$;  $d\leq n$ from Inequaltiy \ref{eq: bound on d}), and $\eta'=(\log N')^8$. Therefore, $1\leq \eta'\leq \hat \Delta$ holds.
%	\[|T|\geq \frac{\hat W}{2d'}\ge \frac{n}{4d'}\geq 4(\log n)^{6\hat c}\geq  4(\log N')^{\hat c+7}=4\eta'\] 

% (we have used the fact that $d\leq n$, $d'=\frac{d}{\poly\log n}\leq \frac{n}{\poly\log n}$, and, from Inequality \ref{eq: comparing logs}, $\log n\geq \sqrt{\log N}$, while $N'\leq 8N$).
%	Therefore, if Event $\event$ does not happen, $\hat \Delta\geq \frac{|T|}{4}\geq \eta'$ holds, and so  $1\le \eta'\leq \hat \Delta$, as required.
Lastly:

\[\frac{(|V(H)|-|B|)\cdot \eta'}{\hat \Delta}\leq \frac{\hat W\cdot \eta'}{\hat \Delta}\leq 8d' (\log N')^{8}\leq  d\leq  2^{r\cdot \sqrt{\log N}}\leq  2^{r\cdot \sqrt{\log N'}}. \]

%==================================
%==================================
%==================================
%==================================
%==================================
%\subsection{Proof of \Cref{cor: valid input for routeandcut}}
%\label{subsec: valid input for routeandcut}

%==================================
%==================================
%==================================
%==================================
%==================================

\subsection{Proof of \Cref{claim: finalize the cut}}
\label{subsec: proof of finalize the cut}

We start by turning the cut $(Z,Z')$ into a weakly well-structured one, using the following straightforward algorithm. Consider every vertex $v\in Z$ one by one. If there is a regular edge $(v,u)\in E_H(Z,Z')$, then we move $v$ to $Z'$. Notice that, if a regular edge $(v,u)$ exists, then $v\in L$ must hold, so moving $v$ to $Z'$ may introduce at most one new edge into the set $E_H(Z,Z')$ -- the special edge that is incident to $v$. At the same time, at least one regular edge is deleted from $E_H(Z,Z')$, so $|E_H(Z,Z')|$ does not increase. At the end of this procedure, cut $(Z,Z')$ becomes weakly well-structured, and $|E_H(Z,Z')|\leq \frac{\hat W}{d'(\log N)^7}$ continues to hold. Since  $|T|\geq \frac{\hat W}{2d'}$, as Event $\event_1$ did not happen, we get that $|E_H(Z,Z')|\leq \frac{2|T|}{\log^7N}$. The number of vertices that were deleted from $Z$ is bounded by the original number of edges in $E_H(Z,Z')$, which is bounded by $\frac{\hat W}{d'(\log N)^7}\leq \frac{2|T|}{\log^7N}\leq \frac{|T|}{8\log^6N}$. Therefore, at the end of this procedure, $|Z\cap T|,|Z'\cap T|\geq \frac{7|T|}{8\log^4N}$ holds. Notice that the running time of the algorithm so far is bounded by $O(|E(H)|)\leq O(\hat W^2)$.
If either of the sets $Z$ or $Z'$ contains at least $|V(H)|-\frac{\hat W}{32}=\frac{31\hat W}{32}$ vertices, then cut $(Z,Z')$ is a finalizing cut (see \Cref{def: finalizing cut}). We then terminate the algorithm and return this cut.

We assume from now on that $|Z|,|Z'|\geq \frac{\hat W}{32}$. Since $|E_H(Z,Z')|\leq \frac{\hat W}{d'(\log N)^7}\leq \frac{32}{d'\cdot (\log N)^7}\cdot \min\set{|Z|,|Z'|}$, we get that the sparsity of the cut $(Z,Z')$ is at most $\frac{32}{d'\cdot (\log N)^7}$.

Next, we turn the cut $(Z,Z')$ into a 
cut $(A',B')$ in $G$, in a natural way: we let $A'=Z\cap V(G)$, and $B'=Z'\cap V(G)$.
Recall that the length of every edge in $G$ is at most $\frac{d'}{2}$. From \Cref{obs: subdivided cut sparsity}, $(A',B')$ is a weakly well-structured cut in $G$ of sparsity at most $\frac{256}{d'\log^7N}$, with $w(A')\geq \frac{Z}{8}\geq \frac{\hat W}{256}$ and $w(B')\geq \frac{|Z'|}8\geq \frac{\hat W}{256}$. The time required to compute the cut $(A',B')$ in $G$ given the cut $(Z,Z')$ in $H$ is bounded by $O(|V(H)|)\leq O( \hat W)$.

Recall that for every special edge $e\in E(G)$, $\ell(e)\leq \frac{d'}2\leq \frac{n}{16}$, while $\sum_{e\in E(G)}\ell(e)\geq  \frac{n}{4}$. 
We can now apply the algorithm from \Cref{obs: weakly to strongly well str} to cut $(A',B')$, to compute a strongly well-structured cut $(A,B)$ in $G$ with $w(A)\geq \frac{w(A')}{4}\geq \frac{\hat W}{1024}\geq \frac{\hat W}{\log^{30}N}$ and $w(B)\geq \frac{w(B')}{4}\geq \frac{\hat W}{\log^{30}N}$, whose sparsity is at most $\frac{1024}{d'\log^7N}\leq \frac{1}{d'}$. 
The running time of the algorithm from  \Cref{obs: weakly to strongly well str} is $O(|E(H)|)\leq O(\hat W^2)$. We then return the cut $(A,B)$.

From the above discussion, the total running time of the algorithm is $O(\hat W)^2$.

\subsection{Proof of \Cref{claim: valid input to conencttocenters}}
\label{subsec: valid inst of connecttocenters}
Clearly, $G$ is a perfect well-structured graph, and the  assignment $\ell'(e)$ of lengths to its edges $e\in E(G)$ is a proper assignment. The current weight of all vertices in $G$ with respect to the edge lengths $\ell'(\cdot)$ is $W'\leq 2\hat W\leq 24W^0$, while $N\ge W^0$ holds from the definition of the \maintaincluster problem. Therefore, $N'=8N\geq 8W^0\geq 8n\geq 8|V(G)|$, and $N'=8N\geq 8W^0\geq  \frac{W'}{8}$ holds, so $N'\geq \max\set{|V(G)|,\frac{ W'}{8}}$, as required.
Since $N$ is large enough from the definition of the \maintaincluster problem, so is $N'$.
Since Event $\event_1$ did not happen, $|\beta|=|V(X)|=|T|\geq \frac{\hat W}{2d'}\geq \frac{W'}{256d'}$, as required. 
Since $1\leq r\leq \ceil{\sqrt{\log N}}$ holds from the definition of $r$-restricted \maintaincluster problem, and $N'\geq N$, we get that $1\leq r\leq \ceil{\sqrt{\log N'}}$.
Lastly, we need to verify that $d'\leq\min\set{\frac{2^{r\sqrt{\log N'}}}{\log^8N'},\frac{|V(G)|}{8}}$ holds.
Indeed, recall that $d'\leq \frac{2d}{(\log N)^{28}}$. Moreover, from Inequality 
\ref{eq: bound on d},  $d\leq n$, and from the  definition of the $r$-restricted \maintaincluster problem, 
$d\leq 2^{r\cdot \sqrt{\log N}}$. Therefore: 

\[\begin{split}
d'&\leq \frac{2d}{(\log N)^{28}}\\
&\leq \min\set{\frac{2n}{(\log N)^{28}},\frac{2\cdot  2^{r\cdot \sqrt{\log N}}}{(\log N)^{28}}}\\
&\leq \min\set{\frac{2^{30}\cdot n}{(\log N')^{28}},\frac{2^{29}\cdot  2^{r\cdot \sqrt{\log N'}}}{(\log N')^{28}}}\\
&\leq \min\set{\frac{|V(G)|}{8},\frac{2^{r\sqrt{\log N'}}}{\log^8N'}}.
\end{split}\]

(We have used the fact that, from Inequality \ref{eq: more logs}, $\log N'\leq 2\log N$).

\bibliographystyle{alpha}

\bibliography{faster-classical-matching}

\end{document}